\documentclass[reqno,11pt]{amsart}
\usepackage{geometry}
\usepackage[numbers,sort&compress]{natbib}
\usepackage{mathtools,amssymb,amsthm,mathrsfs,color,lineno,paralist,graphicx,float}
\usepackage[colorlinks,
linkcolor=red,
anchorcolor=green,
citecolor=blue,
]{hyperref}

\usepackage[T1]{fontenc}
\usepackage[utf8]{inputenc}

\usepackage{tikz}
\usetikzlibrary{positioning}
\usepackage{calc}
\definecolor{bleu1}{RGB}{0,57,128}
\def\bleu1{\color{bleu1}}

\usepackage{etoolbox}
\patchcmd{\section}{\normalfont}{\normalfont \bleu1}{}{}
\patchcmd{\subsection}{\normalfont}{\normalfont \bleu1}{}{}
\patchcmd{\subsubsection}{\normalfont}{\normalfont \bleu1}{}{}
\renewcommand{\proofname}{\it \bleu1 Proof}

\def\a{\alpha}

\def\e{\varepsilon}

\let\newpf\proof \let\proof\relax 
\newenvironment{pf}{\newpf[\proofname]}{\qed\endtrivlist}

\newcommand{\ba}{\overline{A}}

\def\be{\begin{equation}}
\def\ee{\end{equation}}

\def\ba{{\begin{align}}}
\def\ea{{\end{align}}}

\def\bm{\begin{matrix}}
\def\em{\end{matrix}}

\def\a{{\alpha}}

\def\l{{\bf l}}

\def\0{{\mathbf 0}}

\newtheorem{Theorem}{Theorem}[section]
\newtheorem{Lemma}{Lemma}[section]
\newtheorem{Proposition}{Proposition}[section]
\newtheorem{Corollary}{Corollary}[section]
\newtheorem{Remark}{Remark}[section]
\newtheorem{Example}{Example}[section]
\newtheorem{Definition}{Definition}[section]

\numberwithin{equation}{section}

\theoremstyle{definition}

\def\tr{{\text{tr}}}

\newcommand{\Ker}{\operatorname{Ker}}

\newcommand{\C}{{\mathbb C}}

\newcommand{\N}{{\mathbb N}}
\newcommand{\Q}{{\mathbb Q}}
\newcommand{\R}{{\mathbb R}}
\newcommand{\T}{{\mathbb T}}

\newcommand{\Z}{{\mathbb Z}}

\newcommand{\h}{{\bf h}}

\def\B0{{\bold{0}}}

\catcode`\@=12

\def\Empty{}
\newcommand\oplabel[1]{
  \def\OpArg{#1} \ifx \OpArg\Empty {} \else
    \label{#1}
  \fi}

\newcommand{\comm}[1]{}
\newcommand{\comment}[1]{}

\begin{document}
\title[]{The Robust Ten Martini Problem}

\author {Lingrui Ge}
\address{Beijing International Center for Mathematical Research, Peking University, Beijing, China
	} \email{gelingrui@bicmr.pku.edu.cn}
\author{Svetlana Jitomirskaya}
\address{
	Department of Mathematics, University of California, Berkeley CA, 94703} \email{sjitomi@berkeley.edu}
\author{Jiangong You}
\address{
	Chern Institute of Mathematics and LPMC, Nankai University, Tianjin 300071, China} \email{jyou@nankai.edu.cn}

      \begin{abstract}
  We solve the robust ten martini problem: Cantor spectrum with no arithmetic restriction on the irrational frequency for Type I analytic one-frequency quasiperiodic Schrödinger operators. We introduce Type I through a new acceleration-type quantity with stability under analytic perturbations. This yields a global open class of analytic potentials whose associated operators have Cantor spectrum for every irrational frequency.  

\end{abstract}
\footnotetext{This paper supercedes our earlier preprint entitled “Kotani theory, Puig's argument, and stability of The Ten Martini Problem
”, where we proved the result for even trigonometric polynomials. The
latter preprint is not intended for publication.} 
\maketitle
\tableofcontents
\section{Introduction}

The Hofstadter butterfly \cite{hof}, a plot of the  band spectra of almost
Mathieu operators
\begin{align}\label{amo}
(H_{\lambda,\alpha,x}u)_n=u_{n+1}+u_{n-1}+2\lambda\cos2\pi(x+n\alpha)u_n,
\end{align}
at rational frequencies $\alpha$, has become a pictorial symbol
of the field of quasiperiodic operators. It is visually clear
from this plot that for {\it all irrational frequencies} the spectrum must be
a Cantor set, a statement that has been dubbed the ten martini problem
by Barry Simon \cite{barry} after an 1981 offer of Mark Kac \cite{kac}. The problem itself is
considered iconic in the field of quasiperiodic operators, its final
solution \cite{aj} requiring a combination of many ideas and techniques and
significant ingenuity. The proof in \cite{aj} used the specific nature of the
almost Mathieu operator in several key ways and was based on different
approaches at the Diophantine and Liouville sides that miraculously
met at the middle \cite{solving}, enough so that in a field driven by
bold conjectures (e.g. \cite{sim15,simXXI}) a conjecture that the same
statement could hold for other operators of the form \begin{align}\label{sch}
(H_{v,\alpha,x}u)_n=u_{n+1}+u_{n-1}+ v(x+n\alpha)u_n,\ \ n\in\Z,
\end{align} with continuous $v,$ has never even
been explicitly made.

Indeed, while there were several Cantor spectrum results for
operators \eqref{sch} with analytic $v$, other than for the almost
Mathieu family, all required various (often implicit) conditions on frequencies,
among other unnatural restrictions. At the same time, the physics
nature and relevance of the almost Mathieu family strongly
suggest that the ten martini problem has to be robust and hold at least in the
entire analytic neighborhood of \eqref{amo}.  Here we develop a method that does not rely neither on the arithmetics,
nor on the almost Mathieu specifics to prove for the first time the {\it robust ten
  martini problem}:
Cantor spectrum for all $\alpha\in \R\backslash\Q$
  holds for operators \eqref{sch} for a global open set of analytic potentials.
  
The set is explicitly described in our main Theorem \ref{main11}, and is
conjectured to be not only open but also dense in every $C^\omega_h$ \footnote{Let $F$ be a bounded analytic (possibly matrix valued) function defined on $ \{ x |  | \Im x |< h \}$,
$\| F\| _h= \sup_{ | \Im x |< h } \| F(x)\|<\infty $. $C^\omega_{h}(\T,*)$ denotes the
set of all these $*$-valued functions. $C^{\omega}(\T,*)$ is the union $\cup_{h>0}C_h^{\omega}(\T,*)$.},
by e.g. a very compelling analogy with the case of one-dimensional cocycles
\cite{gj}. Note that by a recent counterexample by Argentieri-Avila  \cite{AA},
there do exist analytic potentials without all-frequency Cantor spectrum.

One-frequency analytic quasiperiodic Schr\"odinger operators on
$\ell^2(\Z)$ are given by \eqref{sch} where $v\in C^\omega(\T,\R)$ is
a 1-periodic non-constant real analytic function;
 $\alpha\in\R\backslash\Q$ and $x\in\R$ are parameters (called
the {\it frequency} and the {\it phase} respectively). Their theory has been developed extensively (see \cite{bbook,DF1, DF2,youcongr,jitcongr} for more recent surveys).  
The almost Mathieu operator \eqref{amo}  (AMO) is the
central/prototypical model, lying both at the physics origin and the
center of current physics interest of the
field \cite{Peierls, harper,R,aos,oaag,h,ntw,nichen}, as well as driving many
of the mathematical developments. The latter is, at least historically
due to Barry Simon's problems \cite{sim15,simXXI} prominently featuring several
almost Mathieu questions. This has remarkably led to all of them being solved,
and then many new ones appearing.

The study of general operators \eqref{sch} with analytic
$v$ has long been developed in the perturbative regime, with the key
highlights in \cite{ds,fsw,sin} and especially Eliasson \cite{Eli92,Eli97}. 
The nonperturbative analysis has taken off after the work of Bourgain and collaborators
(see \cite{bbook}), who significantly developed theory of operators
\eqref{sch}, especially in
the regime of positive Lyapunov exponents, an important catalyst to
these developments being again the almost Mathieu result \cite{j}. The
development of nonperturbative/Liouvillean KAM \cite{hy,afk} and quantitative
reducibility (see \cite{youcongr} and references therein) has led to
many strong results in the (almost) reducibility  regime.

Avila's global theory \cite{avila0} of operators \eqref{sch},  based on the analysis of
complexified Lyapunov exponents, has brought new vision and
understanding, in particular, introducing a simple yet fundamental concept of acceleration,
as an important feature that allows to divide the spectrum into more
manageable subsets. With a rough division of the spectrum  into subcritical, critical,
and supercritical energies, Avila showed that critical ones are very
rare in a strong sense \cite{avila0}, while almost reducibility  becomes a
corollary of subcriticality \cite{arc1,arc2}, see also \cite{ge} for a
different proof for the Diophantine Schr\"odinger case.

Despite all these remarkable advances, many major results that do not
require unnecessary and/or non-explicit parameter exclusion
e.g. \cite{jliu1,jliu2,liuresonant,aj,ak, alsz, last, dryAYZ, gk}, still exist only for the almost Mathieu operators and have heavily used
several different almost Mathieu specifics. Of those, the ten martini
problem particularly stands out 
Indeed, while the statement is, by
design, about  {\it all} irrational $\alpha,$ historically, the proofs 
developed very different arguments depending on the arithmetic
properties of $\alpha,$
all using the specific features of \eqref{amo}, with, particularly,
the Liouville side of the argument \cite{cey} based on the almost
Mathieu specific algebraic miracle allowing the rational gap estimate.  


Here we present a method of proof of Cantor spectrum that does not depend on
the almost Mathieu symmetry, self-duality, low degree of the
potential, or any sort of perturbation, treats all
irrational frequencies by the same mechanism. Its natural domain is described by a {\it stable} acceleration-type quantity, which we now introduce.

  Lyapunov exponent of complexified  Schr\"odinger cocycles are defined as
\begin{align}\label{multiergodicsch}
L_\e(E)=\lim\limits_{n\rightarrow\infty}\frac{1}{n}\int_\T\ln \|S_E^v(x+i\e+(n-1)\alpha)\cdots S_E^v (x+i\e)\|dx
\end{align}
where
\begin{equation}\label{S}
S_E^v (x)=\begin{pmatrix}E-v(x)&-1\\ 1&0\end{pmatrix}.
\end{equation}
See Section \ref{pre} for details. Avila showed \cite{avila0} that for fixed $E\in \R$, $L_\e(E)$ (as a function of $\e$) is an even convex piecewise
affine function with integer slopes, in particular  
 integer-valued  {\it acceleration}
$$
\omega(E)=\lim\limits_{\e\rightarrow
  0^+}\frac{L_\e(E)-L_0(E)}{2\pi\e}
$$
that allows to divide the spectrum into analytic strata. Acceleration allows to isolate
a key feature of the {\it supercritical} almost Mathieu. 
It does not, however, distinguish the {\it subcritical} almost Mathieu operator from a general subcritical operator, and, moreover,  acceleration  is not stable under analytic perturbations
\footnote{See e.g. Figure \ref{fig}, illustrating the three almost Mathieu regimes, with acceleration on the spectrum changing from $0$ to $1$ at
$\lambda=1.$ However, these
pictures also illustrate that the almost Mathieu acceleration is {\it always} bounded by $1,$ and it is this
feature that turns out to be both very robust and important for our proof of Cantor spectrum. What is stable is the slope immediately after the first turning point of the complexified Lyapunov exponent.}. 
Here  we introduce a new (and stable) concept that achieves that, allowing to
divide {\it both sub-} and (super-)critical parts of the spectrum into more
manageable sets. 


\begin{figure}[htbp]\label{fig}
\centering
\begin{minipage}[t]{0.3\linewidth}
\centering
\includegraphics[width=1.2\linewidth]{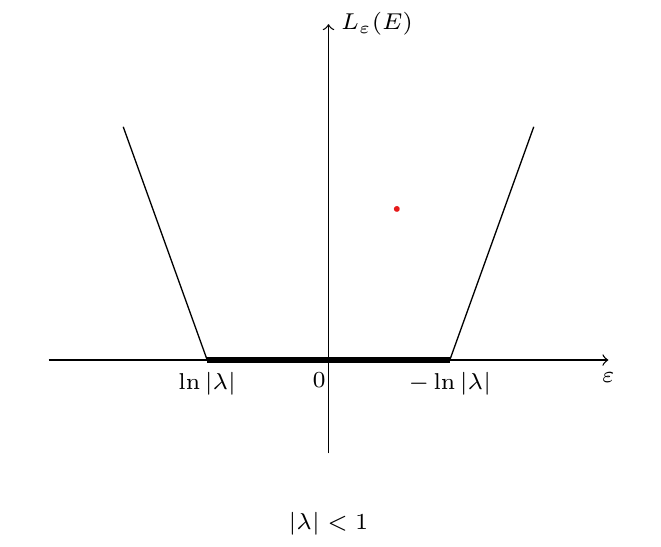}
\end{minipage}
\begin{minipage}[t]{0.3\linewidth}
\centering
\includegraphics[width=1.2\linewidth]{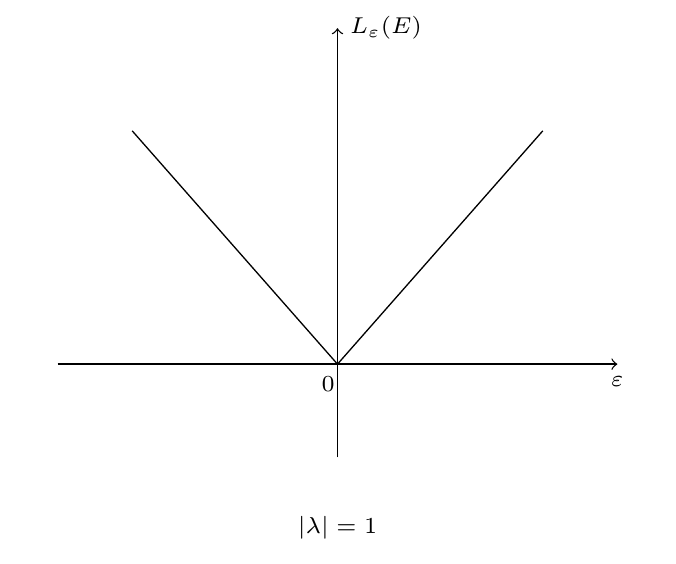}
\end{minipage}
\begin{minipage}[t]{0.3\linewidth}
\centering
\includegraphics[width=1.2\linewidth]{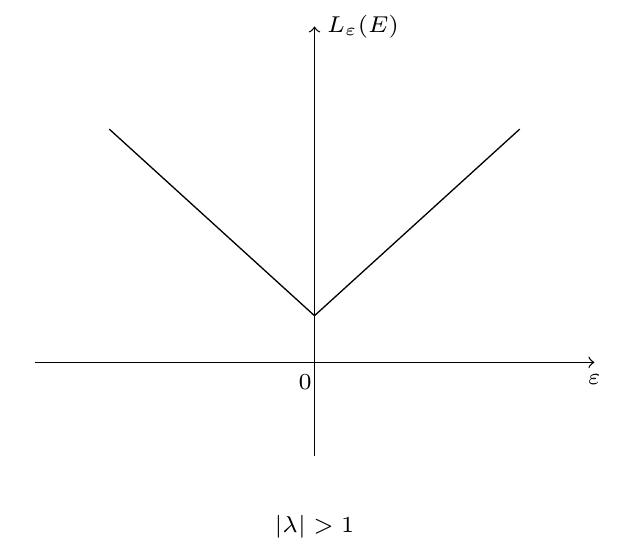}
\end{minipage}
\end{figure}


\begin{Definition}[T-acceleration]\label{gege1}
{\rm The {\it T-acceleration} is defined by
$$
\bar{\omega}(E)=\lim\limits_{\e\rightarrow \e_1^+}\frac{L_\e(E)-L_{\e_1}(E)}{2\pi(\e-\e_1)}
$$
where $0\le\e_1<\infty$ is the first turning point \footnote{It is an
  easy corollary of  the results of \cite{gjyz} that for
non-constant trigonometric polynomial $v$ we always have $\e_1<\infty$.} of the piecewise affine
function $L_\e(E)$.  If there is no
turning point, we set $\bar{\omega}(E)=0$.}
\end{Definition}
\begin{Remark}
{\rm
\begin{enumerate}
    \item $\bar{\omega}(\alpha,A)$ can be defined for any analytic
      one-frequency $SL(2,\C)$-cocycle $(\alpha,A)$, see Section 4.1.
      \item
Obviously, $\omega(E)\leq \bar{\omega}(E)$ for any $E\in\R$ and
  the equality holds if and only if $\omega(E)>0$. In particular, for
  the almost Mathieu operator, $\bar{\omega}(E)=1$ for all $E$ in the
  spectrum, in all three regimes. 
  \item  Lemma \ref{con} shows that the condition \(\overline\omega=1\) is locally stable under simultaneous perturbations of the frequency and the analytic cocycle in every analytic
topology whose strip extends strictly beyond the first turning point.
  \end{enumerate}
  }
\end{Remark}
This allows us to introduce an important class of operators \eqref{sch}
\begin{Definition}[Type I]
  {\rm We say $E$ is a {\it type I energy} for operator
    $H_{v,\alpha,x}$ if
  $\bar{\omega}(E)=1$. We say $H_{v,\alpha,x}$ is a {\it type I operator,}
  if every $E$ in the spectrum of
  $H_{v,\alpha,x}$ is type I. We say that $v\in C^\omega(\T,\R)$ is of uniform type I if $H_{v,\alpha,x}$ is type I for every $\alpha\in \R\backslash\Q.$ 
  }
\end{Definition}

While Type I operators generally have none of the other nice almost
Mathieu features such as symmetry, self-duality, or low-degree, we prove
\begin{Theorem}\label{main11}
For any $\alpha\in \R\backslash\Q$ and any Type I operator \eqref{sch}  its
spectrum is a Cantor set.
\end{Theorem}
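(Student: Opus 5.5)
The plan is to show that every gap allowed by the gap labelling theorem is open, i.e.\ that for every $k\in\Z$ with $k\alpha \bmod \Z\in(0,1)$ the energy $E$ with $\rho(E)=k\alpha \bmod \Z$ lies in an open gap. Since the integrated density of states is continuous and the spectrum has empty interior exactly when the set of such gap labels is dense in the range of $\rho$, this gives Cantor spectrum. We split the spectrum according to the position of the first turning point $\e_1(E)$ of $L_\e(E)$.

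\textbf{Case 1: $\e_1=0$.} Then $L_0(E)>0$ or $E$ is critical. By Type I we have $\bar\omega(E)=1$, and hence $\omega(E)=1$ by the Remark. Supercritical energies with acceleration $1$ behave like the supercritical almost Mathieu operator. Via Avila's global theory and a duality/Puig-type argument, positive Lyapunov exponent with acceleration $1$ forces the dual cocycle to be reducible at energies with rational rotation. For $E$ with $\rho(E)\in\alpha\Z+\Z$ this produces a gap-opening mechanism: a non-degenerate eigenfunction, whose multiplicity is controlled by the acceleration being one. Critical energies with acceleration $1$ should be handled by perturbing $\e$ slightly. Critical energies form a set of zero measure and IDS-measure, so it suffices to show that they cannot carry a closed gap label. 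For this I would approximate via Lemma \ref{con}: stability of $\bar\omega=1$ under perturbation of $(\alpha,A)$.

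\textbf{Case 2: $\e_1>0$.} Then $L_\e(E)=L_0(E)$ for $|\e|\le \e_1$, and the cocycle is subcritical, so by the almost reducibility theorem \cite{arc1,arc2} it is almost reducible in the strip $|\Im x|<\e_1$. The key idea is to move to the complexified cocycle at height $\e_1$. There the slope jumps to $\bar\omega=1$, so the cocycle $(\alpha,S_E^v(\cdot+i\e))$ for $\e$ slightly above $\e_1$ is "supercritical with acceleration one". By Avila's results it is then uniformly hyperbolic with a dominated splitting whose invariant sections have winding controlled by the slope $1$. Using Kotani theory, in the form of the $m$-function / holomorphic extension in $\e$, I would transport this information down to $\e=0$. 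At a rational-rotation energy the almost reducible cocycle at $\e=0$ is conjugated (Puig's argument) to a parabolic constant $\begin{pmatrix}1&c\\0&1\end{pmatrix}$. The gap is closed iff $c=0$, i.e.\ iff the cocycle is analytically conjugate to the identity in the strip. If $c=0$, the cocycle would be analytically reducible to the identity up to height $\e_1$. Then $L_\e$ would be affine with a slope change forced by the conjugacy's degree, and the conjugacy provides two independent invariant sections. Pushed past $\e_1$, these would give two bounded solutions, i.e.\ slope at least $2$, or a contradiction with the single unstable direction carrying slope $1$. So $\bar\omega\ge2$, contradicting Type I. Hence $c\neq0$ and the gap is open.

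The main obstacle is Case 2 for Liouvillean $\alpha$. There almost reducibility does not give genuine reducibility at rational rotation, so the conjugacy to a parabolic constant is unavailable. I would try to avoid reducibility altogether. The plan is to argue directly that a collapsed gap yields, through the degeneracy of the Floquet-type solutions, a second analytic invariant section extending to $|\Im x|<\e_1+\delta$. Such a section would force the slope immediately after $\e_1$ to be at least $2$. Making this extension argument uniform in $\alpha$, using only the quantization and convexity of $L_\e$ rather than arithmetic estimates, is where I expect the real difficulty.
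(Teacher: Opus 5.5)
There is a genuine gap. You are aiming at the dry statement (every labelled gap open). That is strictly stronger than Theorem \ref{main11}, and the paper does not prove it in this generality. It commits you to establishing reducibility at \emph{prescribed} rational-rotation energies, which is unavailable for Liouvillean $\alpha$, as you note yourself.

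Case 1 contains no actual mechanism. For general analytic $v$ the dual operator is infinite-range and has no transfer cocycle. Even for trigonometric $v$, nothing gives reducibility of the $Sp(2d,\C)$ dual, or of its two-dimensional center, at a given energy for arbitrary $\alpha$. At critical energies, a measure-zero remark plus the stability Lemma \ref{con} opens no gap.

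In Case 2 the step excluding $c=0$ fails.
\begin{itemize}
\item If $B^{-1}(x+\alpha)S_E^v(x)B(x)=\mathrm{Id}$ with $B$ holomorphic on $|\Im x|<h$, then $S_E^v(x+i\varepsilon)=B(x+\alpha+i\varepsilon)B(x+i\varepsilon)^{-1}$. This gives $L_\varepsilon(E)=0$ for $|\varepsilon|<h$, hence $h\le\varepsilon_1$.
\item So the conjugacy and its invariant sections can never be ``pushed past $\varepsilon_1$''. Below $\varepsilon_1$ they only reflect $L_\varepsilon=0$.
\item Nor is there any principle that turns two invariant sections at one energy into slope $\ge2$.
\end{itemize}
The conclusion $\bar\omega\neq1$ is in fact true (Corollary \ref{fco1}, Theorem \ref{subkey}), but the route to it is duality. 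Reducibility to $\mathrm{Id}$ yields two independent $\ell^2$ eigenfunctions of the dual at one energy. Since $\bar\omega=1$ means the smallest dual exponent is simple, the dual is $PH_2$ (for trigonometric $v$; for general analytic $v$ through the truncations and the uniform estimates of Theorem \ref{theorem-main1general}). Then every $\ell^2$ solution lies in the two-dimensional center, and symplectic orthogonality of eigenfunctions (Lemma \ref{symplectic}) forces simple point spectrum (Theorem \ref{tsimp}). This is the key idea your proposal lacks.

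The paper avoids working gap by gap. It assumes an interval in $\overline{\Sigma^1_{v,\alpha}}$ and proceeds as follows.
\begin{itemize}
\item Kotani theory gives, for every irrational $\alpha$, analytic reducibility to a \emph{variable} rotation $R_{\psi_E(x)}$ on a subinterval. In the subcritical regime this is classical Kotani theory (Theorem \ref{Kotani-st}). When $\omega=1$, covering supercritical and critical energies alike, it is the new Kotani theory for the limiting two-dimensional dual center (Theorem \ref{Kotani-longrange}).
\item $E\mapsto\int_\T\psi_E$ is nonconstant (via the IDS, resp.\ Proposition \ref{prop:mean-rotation-nonconstant}). This produces an energy with $\int_\T\psi_E\in\alpha\Z+\Z$, and conjugating by $R_{k_1x}$ gives rotation reducibility with zero mean.
\item Zero-mean rotation reducibility is excluded for all $\alpha$ by the all-frequency Puig argument (Theorems \ref{contra2-general} and \ref{subkey}). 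Solving the cohomological equation only up to frequencies $\sim q_{n_k+1}/6$ yields two almost localized approximate eigenfunctions on the other side of the duality.
\item A quantitative simplicity argument then gives the contradiction. On the Schr\"odinger side it is the Wronskian; on the dual side it is symplectic orthogonality in the $PH_2$ center, with stable/unstable estimates uniform in the truncation.
\end{itemize}
Your Case 2 is close in spirit to how a subcritical \emph{dry} result could be obtained. The paper notes that combining its Puig argument with the quantitative almost reducibility of \cite{dryAYZ} does this, but only with the $PH_2$ simplicity mechanism in place of your slope count.
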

\begin{Remark}
{\rm\begin{enumerate}\item
T-acceleration and the Type I class were originally introduced in our earlier work \cite{gjy}, that the current paper supercedes. There, building on the duality approach to the global  theory of \cite{gjyz}, we  proved Cantor spectrum for even trigonometric-polynomial Type I potentials at every irrational frequency. The subsequent work \cite{gj}, developed the intrinsic structure of the two-dimensional dual center, including its hidden subcriticality and rotation-number structure.

Building, in part, on this Type I structural and quantitative framework, Li--Xu--Zhou \cite[v1]{lxz} proved the stronger dry Ten Martini statement in the positive-Lyapunov Type I regime for trigonometric-polynomial potentials. In the upcoming new version of \cite{lxz} the dry statement will be extended to analytic potentials. Thus, on the corresponding supercritical subclasses, these works obtain the stronger conclusion that every labelled gap is open.

The subcritical Type I regime is, however, in several respects more delicate. The present theorem treats the full general-analytic Type I class, including these zero-Lyapunov regimes. Its proof is self-contained once the duality framework of \cite{gjyz} is taken as input; in particular, it does not use the Almost Reducibility Conjecture \cite{arc1,arc2,ge} whereas the dry gap-opening arguments above additionally use almost-reducibility/reducibility input.
  \item  By upper-semicontinuity, for fixed $\alpha\in\R\backslash\Q$, the Type I condition is
open in $C_h^\omega(\T,\R)$ whenever
$v\in C_h^\omega(\T,\R)$ and
$$
\sup_{E\in\Sigma_{v,\alpha}}
\e_1(\alpha,S_E^v)<h;
$$
see Corollary~\ref{con}.   
  \item Unlike the almost Mathieu operator, $L_\e(E)$ of general  type I
    operators may have many turning points, see Fig \ref{fig2}.
    \begin{figure}[htbp]\label{fig2}
\centering
\begin{minipage}[t]{0.3\linewidth}
\centering
\includegraphics[width=0.944\linewidth]{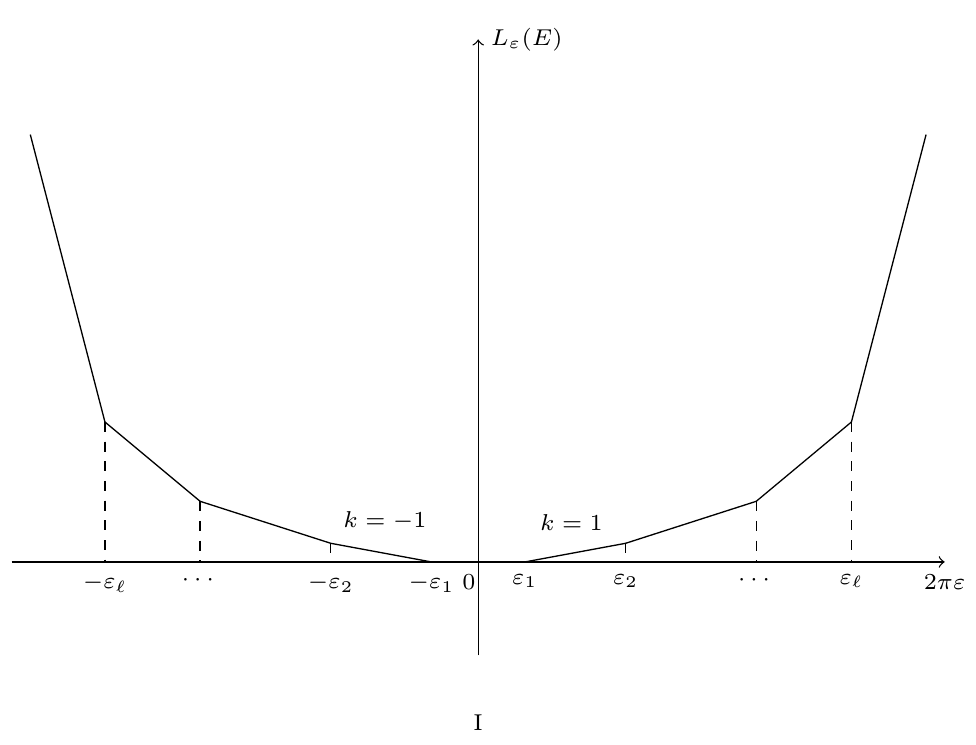}
\end{minipage}
\begin{minipage}[t]{0.3\linewidth}
\centering
\includegraphics[width=0.944\linewidth]{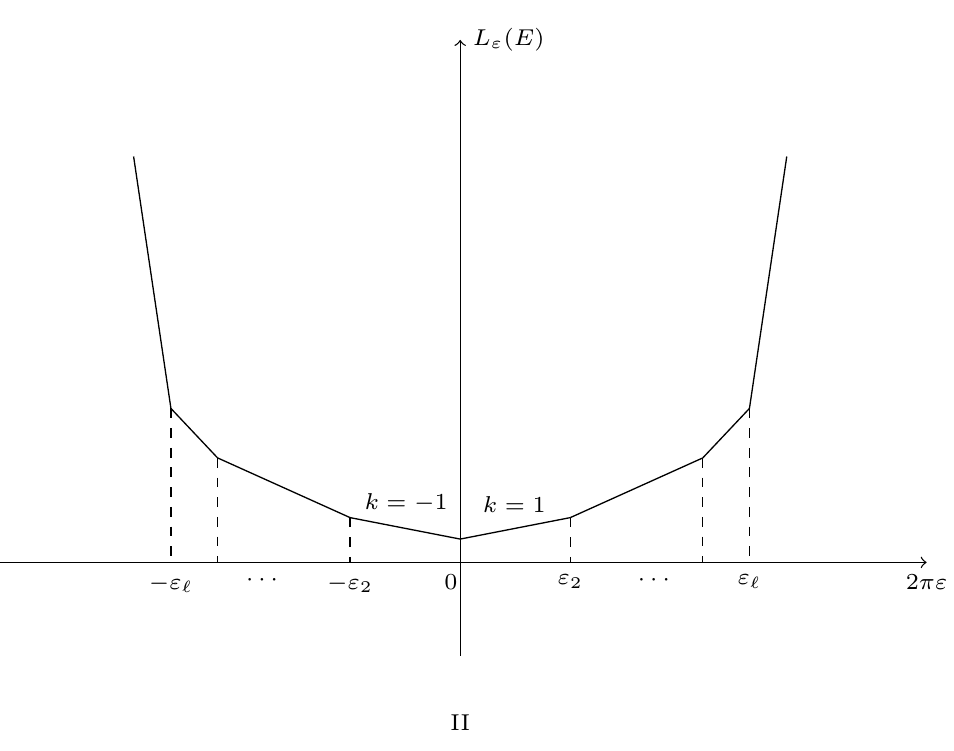}
\end{minipage}
\begin{minipage}[t]{0.3\linewidth}
\centering
\includegraphics[width=0.944\linewidth]{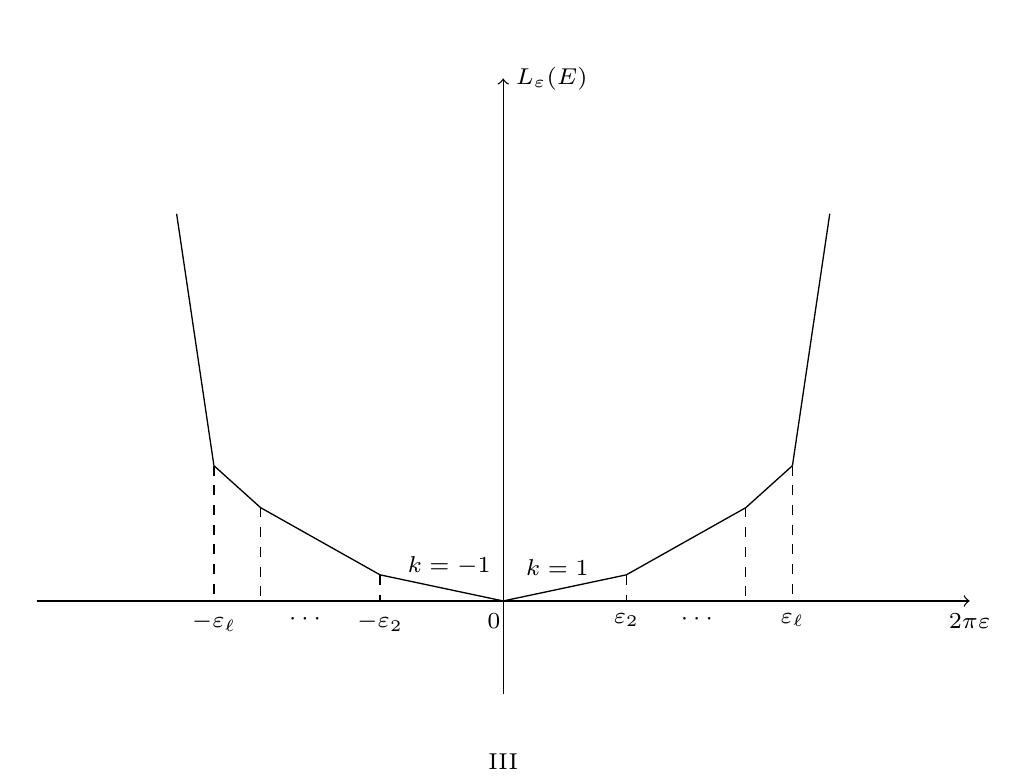}
\end{minipage}
\end{figure}
    \end{enumerate}}
 \end{Remark}
 

Moreover, the result can actually be localized to the set of energies with
$\bar{\omega}(E)=1.$ Let $\Sigma^1_{v,\alpha}=\{E\in
\Sigma_{v,\alpha}: \bar{\omega}(E)=1\}$. Theorem \ref{main11} is a
direct corollary of the local version
\begin{Theorem}\label{main12}
For any $\alpha\in\R\backslash\Q$, if $v$ is real analytic, $\overline{\Sigma_{v,\alpha}^{1}}$ is either empty or a Cantor set.
\end{Theorem}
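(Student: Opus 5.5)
We argue by contradiction, using the gap-labelling/rotation-number characterization of Cantor spectrum. The set $\overline{\Sigma^1_{v,\alpha}}$ is closed. It has no isolated points, because $\Sigma_{v,\alpha}$ has none and, by the local stability of $\bar\omega=1$ (Lemma \ref{con}), $\Sigma^1_{v,\alpha}$ is relatively open in $\Sigma_{v,\alpha}$ up to its boundary. So it suffices to show that $\overline{\Sigma^1_{v,\alpha}}$ has empty interior. Suppose an open interval $I\subset \overline{\Sigma^1_{v,\alpha}}\subset\Sigma_{v,\alpha}$ exists. Shrinking $I$ and using Lemma \ref{con} again, we may assume every $E\in I$ is type I, with $\e_1(E)$ locally bounded.

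On $I$ the integrated density of states, and hence the fibered rotation number $\rho(E)$, is continuous and strictly increasing. Therefore there is a dense set of $E\in I$ with $2\rho(E)\in \alpha\Z+\Z$. At each such $E$ we want to show that the gap with that label is open. This contradicts $I\subset \Sigma$, since an open labelled gap at an interior energy would remove an interval from $I$.

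\emph{Step 1: zero Lyapunov exponent.} Fix $E\in I$ with $L_0(E)=0$ and $2\rho(E)=k\alpha \bmod \Z$. Then $\e_1(E)>0$, and $L_\e(E)=0$ for $|\e|\le \e_1$. We take the duality framework of \cite{gjyz} as input. It identifies $\e_1$ with the dual Lyapunov exponent and identifies the slope after $\e_1$ (here $\bar\omega=1$) with the dimension of the corresponding central part of the dual finite-range cocycle. Type I should then give a one-dimensional dual center with exponential decay rate $\e_1$, i.e. a dual "eigenvector" that is unique up to scaling, as in Puig's argument. From it we build, via the Aubry-duality construction, an analytic conjugacy on $|\Im x|<\e_1$ of $(\alpha,S_E^v)$ to a constant parabolic matrix $\begin{pmatrix}1&c\\0&1\end{pmatrix}$, after the usual rotation by $k$. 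If the $k$-th gap were collapsed, then $c=0$. The cocycle would then be analytically reducible to the identity, giving two independent Bloch solutions analytic in the strip. Dualizing these produces a two-dimensional space of dual solutions decaying at rate $\e_1$, which contradicts $\bar\omega(E)=1$. Hence $c\neq0$ and the gap is open. For energies where $\rho$ is not exactly rational, Kotani theory (a.e. $L^2$-reducibility on $\{L=0\}$) handles the approximation, so no arithmetic condition on $\alpha$ enters.

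\emph{Step 2: positive Lyapunov exponent.} Fix $E\in I$ with $L_0(E)>0$. Then $\e_1=0$ and $\omega=\bar\omega=1$. I would first try the same scheme applied to the cocycle complexified at $\e$ slightly beyond the first turning point, where Avila's global theory with $\omega=1$ gives the corresponding dominated splitting. Duality then converts the acceleration-one structure into a one-dimensional dual center again, and the parabolic-versus-identity dichotomy is run as in Step 1 to open every labelled gap at energies with $2\rho\in\alpha\Z+\Z$.

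\emph{Main obstacle.} I expect the hard point to be Step 1's claim that $\bar\omega=1$ forces the dual center to be exactly one-dimensional, uniformly enough to rule out the identity reduction at \emph{every} labelled energy in $I$, without Diophantine control on $\alpha$. My first approach is to prove that the number of independent dual solutions decaying at rate $\e_1$ equals the jump in slope of $L_\e$ at $\e_1$, via a Kotani-type argument applied to the complexified cocycle on $\e\in(0,\e_1)$.
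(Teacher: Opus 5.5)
Your outer skeleton (assume an interval $I\subset\overline{\Sigma^1_{v,\alpha}}$, use the monotone rotation number to find an energy in $I$ with resonant rotation number, and derive an impossibility there) matches the paper's. The mechanism you propose at that energy, however, is Puig's Diophantine argument, and it does not extend to all $\alpha$.

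\textbf{Step 1.} Building, from a decaying dual solution, an analytic conjugacy of $(\alpha,S_E^v)$ to a \emph{constant} parabolic matrix needs two things. First, a localized dual eigenvector at the completely resonant phase, which the paper notes can fail in the intermediate arithmetic regime. Second, an exact analytic solution of a cohomological equation $\Phi(x+\alpha)-\Phi(x)=c(x)-\hat c(0)$, which is unavailable in general once $\beta(\alpha)$ exceeds the width of the strip. Invoking Kotani theory does not remove this. On an interval where $L=0$, Kotani theory gives analytic reducibility to a \emph{nonconstant} rotation $R_{\psi_E(x)}$ (Theorem~\ref{Kotani-st}), and making $\psi_E$ constant is the same cohomological problem.

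The idea you are missing is to obstruct the rotation reducibility itself. The paper shows (Theorems~\ref{contra2} and \ref{subkey}) that a reduction to $R_{\psi}$ with $\int_\T\psi=0$ is impossible when the dual is $PH_2$:
\begin{itemize}
\item solve the cohomological equation only up to Fourier modes of order $q_{n_k+1}$, along a subsequence with $q_{n_k+1}\gg q_{n_k}$;
\item dualize the resulting almost-reduction into two almost localized approximate dual solutions;
\item show their stable and unstable components are exponentially small, while their center components stay quantitatively independent;
\item contradict the symplectic nondegeneracy of the two-dimensional center.
\end{itemize}
Nonconstancy of $E\mapsto\int_\T\psi_E$ then supplies an energy with $\int_\T\psi_E\in\alpha\Z+\Z$. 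Note also that $L_0(E)=0$ does not give $\e_1(E)>0$. Critical energies ($\omega=1$) have $\e_1=0$ and no decaying dual solution, so Step~1 does not cover them.

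\textbf{Step 2.} Step~2 has no working mechanism. For $L(E)>0$ the Schr\"odinger cocycle is not conjugate to any compact cocycle, and complexifying it does not change this. Moreover, the dual center is two-dimensional with \emph{zero} exponent, so there is neither a decaying dual vector nor a parabolic-versus-identity dichotomy. Here, and at critical energies, the paper needs a new Kotani theory for the two-dimensional center of $PH_2$ cocycles: partial reflectionlessness when only the center exponent vanishes (Theorems~\ref{keyth} and \ref{Kotani-longrange}). This gives:
\begin{itemize}
\item analytic reducibility of the center dynamics to $e^{2\pi i\phi}R_\psi$;
\item nonconstancy of the mean center rotation (Proposition~\ref{prop:mean-rotation-nonconstant});
\item an all-frequency Puig argument with the two sides exchanged (Theorem~\ref{contra2-general}), where the simplicity used is the Wronskian one for $H_{v,\alpha,x}$.
\end{itemize}

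\textbf{General analytic $v$.} For general analytic $v$ the dual operator is infinite-range and has no transfer cocycle. So the ``central part of the dual finite-range cocycle'' you appeal to does not exist. The paper instead works with the truncations $v_n$, constructs a limiting center, and proves stable/unstable estimates uniform in the truncation degree (Theorem~\ref{theorem-main1general}).

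The obstacle you single out, that $\bar\omega=1$ makes the decaying dual direction one-dimensional, is actually supplied as input by \cite{gjyz}. The real difficulties are the three points above.
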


\begin{Remark}
{\rm\begin{enumerate}\item
 In the subcritical Type I regime, the methods of this paper yield a stronger conclusion. Indeed, combining the all-frequency Puig argument developed here with quantitative almost reducibility developed in \cite{dryAYZ} gives the dry Ten Martini statement for subcritical Type I operators. The argument follows the route of \cite{gjy}, where the corresponding subcritical dry statement was proved for Type I operators with trigonometric polynomial potentials and Diophantine frequencies. To avoid lengthening the present paper, we defer the details to a separate note.
 \item Cantor spectrum results for the almost Mathieu operators have had
  a long history \cite{bs,sin,hs,cey,last,ak,Puig} prior to the celebrated proof of the ten martini
  problem  in \cite{aj} that handled the measure zero set of remaining arithmetically intermediate $\alpha$ left after the Diophantine \cite{Puig} and Liouville \cite{cey} proofs.  There were many remarkable Cantor spectrum results for non-almost Mathieu operators \eqref{sch}
  with analytic $v,$
but they were all either for (unspecified) typical  $v$ and Diophantine $\alpha$
  in the zero Lyapunov exponents regime \cite{avilajams} \footnote{Or
  \cite{aj1} that provides the dry
  version but, again, only for typical $v$ and with a further
  Diophantine/smallness restriction.} or (unspecified) typical $\alpha$ in
  the regime of positive $L(E)$, proved in a combination of very
  technically complicated 
  \cite{gs1,gs2}, or in the perturbative regime, and also Diophantine $\alpha$ \footnote{So, again,
    not for all irrational $\alpha$ for any given $v$.}, \cite{sin,wz,gwx, hhsy, hsy, hz}  and 
  special geometry of $v.$
  \end{enumerate}}
\end{Remark}
The Type I condition is initially a property of the pair $(\alpha,v)$. For a fixed irrational $\alpha$, Corollary~\ref{sgao} gives local
stability in every admissible analytic topology: the
$C_h^\omega(\T,\R)$ topology is admissible at $(\alpha,v)$ when
$v\in C_h^\omega(\T,\R)$ and
$$
\sup_{E\in\Sigma_{v,\alpha}}
\e_1(\alpha,S_E^v)<h.
$$
Thus the topology must control a strip extending uniformly beyond
the first turning points of all spectral energies; it cannot in
general be replaced by an arbitrary narrower $C_h^\omega$ topology.

For $h>0$, let
 $\mathcal U_h\subset C_h^\omega$ be the set of uniform Type I potentials $v$
for which the Type I property persists on a common
$C_h^\omega$-neighborhood, uniformly over all irrational frequencies. $\mathcal U_h$ is open by design and the uniform perturbation results in
Section~4.2 show that $\mathcal U_h$ is nonempty for $h>0$ and, for large $h,$ contains the
neighborhoods of potentials for the almost Mathieu and several other popular models.
Theorem~\ref{main11} therefore implies that
for
$v\in\mathcal U_h$, 
$
\Sigma_{v,\alpha}\ \text{is a Cantor set for every }
\alpha\in\R\backslash\Q,
$ or, in other words the ten martini conjecture holds.

It is natural to expect that Type I behavior is generic in the strong
sense of being open and dense. Here and below, ``generic'' means open
and dense in the indicated topology; for subsets of a spectrum,
openness and density are understood in the relative topology. The
following conjectures express increasing degrees of uniformity and
ambition:

\vskip .1in
\noindent
{\bf Conjecture 1.}\label{conj}
\begin{enumerate}
\item For every $\alpha\in\R\backslash\Q$ and every $h>0$, for an open
and dense set of potentials $v\in C_h^\omega(\T,\R)$, the set
$\Sigma^1_{v,\alpha}$ of Type I energies is open and dense in
$\Sigma_{v,\alpha}$; see also Conjecture 3.2 in \cite{yicmp}.

\item For every $\alpha\in\R\backslash\Q$ and every $h>0$, the set of
potentials $v\in C_h^\omega(\T,\R)$ for which $H_{v,\alpha,x}$ is
Type I and this property persists throughout a
$C_h^\omega(\T,\R)$-neighborhood of $v$ is dense in
$C_h^\omega(\T,\R)$. Since this set is open by definition, it is
generic in the above sense.

\item For some $h>0$, the open set $\mathcal U_h$ is dense in
$C_h^\omega(\T,\R)$, and is therefore generic in the above sense.
\end{enumerate}

\vskip .1in

Indeed, for trigonometric polynomial potentials, quantitative duality identifies the Type I condition with simplicity of the lowest dual Lyapunov exponent, thus 
Conjecture~\ref{conj} may be viewed as a 
generic-simplicity conjecture for the lowest dual Lyapunov exponent, with genericity understood in the open-and-dense sense above. See also \cite{gj} for a compelling 1D analogy.

The introduction of the Type I condition is one of the conceptual points of this work. It captures a feature shared by all spectral energies of the almost Mathieu operator but, unlike ordinary acceleration, is stable under analytic perturbations.  This condition can be recast by \cite{gjyz} as partial hyperbolicity of the dual cocycle with a {\it two-dimensional center}. This is what replaces the special self-dual \(SL(2,\mathbb R)\) dynamics of the almost Mathieu operator in our argument.

Our method builds on two recent developments. The first is the
quantitative global theory developed in \cite{gjyz}, an
Aubry-duality-based approach to Avila's global theory \cite{avila0}
that relates the dynamics of dual cocycles to complexified Lyapunov
exponents and spectral properties of $H_{v,\alpha,x}$. A second conceptual foundation is the intrinsic symplectic structure of
the limiting center dynamics developed in \cite{gj}. This structure survives the passage from
finite-range dual cocycles to the analytic limit and makes it possible
to extend finite-range arguments from trigonometric polynomial
potentials to general analytic potentials. The general
theory there applies to arbitrary center dimension and includes the
associated rotation theory. For completeness, the appendix contains
self-contained proofs of the specialized two-dimensional facts needed
in the present paper.

On this foundation, we establish three general results that are not specific to the ten martini application: a generalized Kotani theory for partially hyperbolic analytic cocycles with two-dimensional center,  simplicity of point spectrum for minimal, uniquely ergodic finite-range  operators with such cocycles, and an all-frequency version of Puig’s argument.

A very special property of AMO, not shared by {\it any} other operators
\eqref{sch}, is that its
  Aubry dual 
   is also of the form \eqref{sch}, allowing to deal only with
$SL(2,\R)$ cocycles on both sides. For  trigonometric
polynomial $v$  of degree $d$, the dual is of the form

\begin{align}\label{fi}
(L_{v,\alpha,\theta}^wu)_n=\sum\limits_{k=-d}^{d} \hat{v}_k u_{n+k}+2w(-(\theta+n\alpha))u_n,\ \ n\in\Z,
\end{align}
where $w=\cos 2\pi x$ and $\hat{v}_k$ is the $k$-th Fourier
coefficient of $v$. 
This requires dealing with $Sp(2d, \C)$ transfer-matrix cocycles, in
place of $SL(2,\R)$ for discrete Schr\"odinger operators \eqref{sch}. 

We will define the key new  concept that allows to define the ``good'' cocycles, in
 a more general setting.  Let $\Omega$ be a compact metric space, and $T:\Omega\rightarrow\Omega$
 be a minimal homeomorphism \footnote{That is
   $\overline{\{T^n\omega\}_{n\in\Z}}=\Omega$ for any
   $\omega\in\Omega$.}.  For continuous $f:\Omega\rightarrow\R,$
 {minimal} finite-range operators on $\ell^2(\Z)$ are defined by
\begin{equation}\label{finite operator11}
(L_{f,\omega}u)_n=\sum\limits_{k=-d}^d a_ku_{n+k}+f(T^n\omega)u_n,\ \ n\in\Z,
\end{equation}
where $a_k=\overline{a_{-k}}$ is a complex sequence. As usual, the eigenvalue
equation $L_{f,\omega}u=Eu$ defines a  complex symplectic cocycle
$(T,L_E^f)$ \footnote{with respect to $S$ defined in \eqref{sdef1}.} (see \eqref{1111}) and we denote its non-negative
Lyapunov exponents by
$\{L_i^f(E)\}_{i=1}^d$ (see Section \ref{3.2} for the definitions). Let
$\Sigma_f$ be the spectrum of $L_{f,\omega}$.

\begin{Definition}\label{defph2}
{\rm We  say that a complex symplectic {\it cocycle} is {\it $PH2$} if it is
{\it partially hyperbolic with two-dimensional center}. We say that an {\it energy $E$} is {\it $PH2$} for operator $L_{f,\omega}$ if the cocycle
$(T,L_E^f)$ is $PH2$. We
say that {\it operator $L_{f,\omega}$}  is {\it $PH2$} if every $E\in
\Sigma_f$ is $PH2$.}
\end{Definition}

In other words, $L_{f,\omega}$ is $PH2$ if

\begin{enumerate}
\item{$L_1^f(E)\geq \cdots\geq L_{d-1}^f(E)>L^f_d(E)$ for all $E\in\Sigma_f$;}
\item $(T,L_{E}^f)$ is $(d-1)$ and $(d+1)$-dominated for all
  $E\in\Sigma_f$ \footnote{See Section \ref{3.3} for the definition of domination.}.
\end{enumerate}

For trigonometric polynomial potentials, the dual operator is
finite-range, and, as shown in Section~4, the Type I condition implies
that its transfer cocycle is $PH_2$. For a general analytic potential,
the dual operator is infinite-range and there is no associated cocycle, so the term $PH_2$ does not apply
literally. Instead, the $PH_2$ splittings of the finite-range
truncations possess an intrinsic limit: their two-dimensional
symplectic center dynamics converge, while the stable and unstable
directions satisfy estimates uniform in the truncation. This limiting
center structure is the infinite-range substitute for the $PH_2$ property
used in Sections~10 and~11.

A key technique that has been of profound importance for the study of
ergodic Schr\"odinger operators is  celebrated Kotani theory
\cite{kot,sim83} that proves, in particular,  that
operators with Lyapunov exponents vanishing on a
set of positive measure are deterministic, allowing, 
the holomorphic extension of the
$m$-function through the interval with zero Lyapunov exponents. The
latter has been crucially used in the ten martini proof of
\cite{aj}.

Kotani theory has been extended by Kotani-Simon \cite{ks} to {\it real}
Jacobi matrices on the strip, for {\it real} matrix-valued Schr\"odinger operators, with corresponding  $Sp(2d,\R)$
transfer-matrix cocycles. However, the key result on the
reflectionness of $M$-matrix  \footnote{See Section \ref{6.2} for the
  definition.}, enabling, in particular, the holomorphic extension
described above, was proved in \cite{ks} only under the condition that {\it all
 } Lyapunov exponents are zero. At the same time, it was conjectured
 in \cite{ks} that under certain additional assumptions at least a
 partial version of the result
 should only require vanishing of {\it some
 } Lyapunov exponent \footnote{Additional assumptions/partial form are needed as
   demonstrated by an example of a model with decoupled potentials in
   \cite{ks}.}. Over the years, there have been some extensions of
 \cite{ks}, e.g.\cite{damanik1,oliviera}, and, most notably, \cite{xu},
 but the above problem stubbornly defied progress.

We first resolve this problem for analytic one-frequency \(PH_2\) operators. More precisely, we prove partial reflectionlessness when only the center Lyapunov exponent vanishes, while the remaining Lyapunov exponents are positive, and we allow complex hopping coefficients. This yields a complete Kotani theory for the \(PH_2\) class. We further establish a Kotani theory for the infinite-range duals of Type I operators—the first, to our knowledge, for infinite-range  operators. In the infinite-range setting where no   cocycle exists, the theory instead uses the intrinsic two-dimensional symplectic center dynamics developed in \cite{gj}, together with the uniform stable/unstable estimates for the finite-range truncations established here.
  This Kotani theory extension forms the most technically
 involved part of the paper.
 
More precisely, the results of \cite{ks} imply that the $M$ matrix (defined in Section
\ref{6.2}) is reflectionless on $\{E: L^f_1(E)=\cdots=L^f_1(E)=0\}$ provided $a_k=a_{-k}$ is a real sequence. A crucial part of our proof of the Cantor spectrum is the following
general result on
{\it partial reflectionless} that allows  for 
 positive Lyapunov exponents and complex sequence $\{a_k\}_{k\in\Z}$.


Let
\begin{equation}\label{lanair2}
Sp_{2d\times 2}(\C)=\{F\in M_{2d\times 2}(\C): F^*S F=J\}, \ \ J=\begin{pmatrix}0&1\\ -1&0\end{pmatrix}.
\end{equation}
To state the reducibility result, we also need the higher
accelerations. For the analytic one-frequency cocycle $(T,L_{E}^f)$, let
\[
L^k(T,L_{E}^f)=\sum_{j=1}^k L_j(T,L_{E}^f)
\]
and define
\[
\omega^k(T,L_{E}^f)
=
\lim_{\varepsilon\to0^+}
\frac{
L^k\bigl(T,L_{E}^f(\cdot+i\varepsilon)\bigr) -L^k(T,L_{E}^f)
}{
2\pi\varepsilon
}.
\]
Thus $\omega^k$ is the acceleration of the sum of the $k$ largest
Lyapunov exponents; see Section~3.4 for further details. We will use
only $\omega^{d-1}$ below.
\begin{Theorem}\label{L2 reducibility}
For analytic one-frequency $PH2$ cocycles $(T,L_{E}^f)$, for almost every $E$ in $\{E: L^f_d(E)=\omega^{d-1}(T,L_E^f)=0\}$, there exist $H_E\in L^2(\Omega,Sp_{2d\times2}(\C))$, $\phi_E\in C^0(\Omega,\R)$ and $R_E:\Omega\rightarrow SO(2,\R)$ such that
$$
L_E^f(\omega)H_E(\omega)=H_E(T\omega)e^{2\pi i\phi_E(\omega)}R_E(\omega).
$$
\end{Theorem}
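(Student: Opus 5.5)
We restrict the cocycle to its two-dimensional center, view that as an $SL(2)$-type cocycle up to a phase, and run the classical Kotani argument on it. The conclusion is an $L^2$ conjugacy to rotations together with a scalar phase $e^{2\pi i\phi_E}$.

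\emph{Step 1: the center bundle.} By the $PH2$ assumption, $(T,L_E^f)$ is $(d-1)$- and $(d+1)$-dominated. This gives a continuous invariant splitting $\C^{2d}=E^u\oplus E^c\oplus E^s$ with $\dim E^c=2$. Since the cocycle is symplectic with respect to $S$, $E^c$ is the $S$-orthogonal complement of $E^u\oplus E^s$, and the restriction of the form to $E^c$ is nondegenerate. We choose a continuous frame $F(\omega)\in Sp_{2d\times 2}(\C)$ of $E^c(\omega)$, normalized so that $F^*SF=J$. This may require passing to a double cover or using an $L^2$ measurable frame, which suffices for the conclusion. We then get
\[
L_E^f(\omega)F(\omega)=F(T\omega)C_E(\omega),
\]
where $C_E$ preserves $J$ up to the sesquilinear structure, so $C_E\in U(1)\cdot SL(2,\R)$ after conjugation. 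We write $C_E=e^{2\pi i\psi(\omega)}A_E(\omega)$ with $A_E$ in $SL(2,\R)$ (more precisely, in a conjugate of $SU(1,1)$) and a continuous real $\psi$. The analyticity of the splitting along the complexification $\omega\mapsto\omega+i\e$ for small $\e$ (stability of domination) lets us define $A_E(\cdot+i\e)$. Its Lyapunov exponent equals $L^d$-type quantities; specifically, the center exponent satisfies $L(A_E)=L_d^f(E)=0$.

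\emph{Step 2: Kotani theory on the center.} To feed the center cocycle into Kotani's argument we need the hypothesis $\omega^{d-1}=0$. It implies that the sum of the top $d-1$ exponents is affine with zero slope near $\e=0$. Combined with $L^d$ being convex and even, the center exponent of the complexified cocycle then behaves like a Kotani/Herglotz function in $E$. We extend $E$ into the upper half plane $\C_+$, where the operator is resolvent-defined. There the center bundle carries the Weyl solutions, which are the decaying directions inside $E^c$ at $\pm\infty$, giving $m_\pm(\omega,E)$ as holomorphic Herglotz-type functions. Note that $a_k$ is complex, so we use the $S$-hermitian structure rather than reality. On the set where $L_d^f(E)=0$, Kotani's argument applies: the Thouless-type formula for the center exponent and the Herglotz property of $\partial_E L$ give $m_+(\omega,E+i0)=-\overline{m_-(\omega,E+i0)}$ with $\Im m_\pm>0$ for a.e.\ $E$. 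This partial reflectionlessness is the heart of the matter. The main obstacle is establishing the Kotani inequality
\[
\mathbb E\Bigl[\frac{1}{\Im m_+ +\Im m_-}\Bigr]\lesssim \frac{\Im E}{L(E+i0)}\text{-type bounds}
\]
for the center alone. My first attempt would be to compute $\partial_\e$ of the complexified $L^d$ via the $\omega^{d-1}=0$ condition, so that the top $d-1$ exponents contribute nothing to first order in $\Im E$.

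\emph{Step 3: constructing $H_E$.} For a.e.\ such $E$, we set $u(\omega)=F(\omega)\binom{m_+}{1}$-type vectors with $\Im m_+>0$. The matrix $B(\omega)=(\Im m_+)^{-1/2}\begin{pmatrix}\Re m_+ & \ast\\ 1 & 0\end{pmatrix}$-style normalization (the standard Kotani conjugacy) conjugates $A_E$ into $SO(2,\R)$. This is because $A_E$ maps $m_+(\omega)$ to $m_+(T\omega)$ in the upper half plane and preserves the hyperbolic metric, so it acts as a rotation in the normalized frame. The $L^2$ integrability of $B$ follows from the Kotani bound $\mathbb E[(\Im m_+)^{-1}+|m_+|^2/\Im m_+]<\infty$, which comes from Step 2. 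Setting $H_E=FB$ gives $H_E\in L^2(\Omega,Sp_{2d\times2}(\C))$. It satisfies $L_E^fH_E=H_E(T\cdot)e^{2\pi i\phi_E}R_E$ with $\phi_E=\psi$ continuous and $R_E$ measurable in $SO(2,\R)$.
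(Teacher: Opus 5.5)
Your Step 3 matches the paper's final step: the Kotani matrix $D_E$ is built from $m_+$, and the $L^2$ bound comes from $\int_\T\|m_+u_E+v_E\|^2/\Im m_+\,dx<\infty$. But neither of the two inputs that step needs is actually supplied.

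First, you use $\omega^{d-1}=0$ in the wrong place.
\begin{itemize}
\item The acceleration concerns complexifying the phase, $x\mapsto x+i\varepsilon$, while Kotani theory complexifies the energy, $E\mapsto E+i\delta$. The condition $\omega^{d-1}=0$ says nothing about how the top $d-1$ exponents vary in $\Im E$, so your proposed route to the Kotani inequality in Step 2 cannot work.
\item The inequality you display is also inverted. What is needed is a bound by $-2dL_{d+1}(E+i\delta)/\delta$, which has a finite limit for a.e.\ $E$ with $L_d(E)=0$.
\item Where $\omega^{d-1}=0$ is genuinely needed is the factorization you take for granted in Step 1. Pointwise, $C_E^*JC_E=J$ only gives $C_E\in U(1)\cdot SL(2,\R)$. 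A global factorization $C_E=e^{2\pi i\psi}A_E$ with continuous real $\psi$ requires a continuous square root of the unimodular function $\det C_E$, i.e.\ $\deg\det C_E=0$.
\item By the symmetry $L_j(\varepsilon)=-L_{2d+1-j}(-\varepsilon)$, one has $\int_\T\log|\det C_E(x+i\varepsilon)|\,dx=L^{d-1}(-\varepsilon)-L^{d-1}(\varepsilon)$. This vanishes for small $\varepsilon$ exactly because $\omega^{d-1}=0$; see Lemmas~\ref{lem:center-det-zero} and~\ref{lem:projectively-real-factorization} and Corollary~\ref{c24}.
\item Your fallbacks do not help. A double cover does not make a nonzero degree vanish, and a measurable frame only yields a measurable phase, whereas $\phi_E$ must be continuous.
\end{itemize}

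Second, and more seriously, the partial reflectionlessness $m_+(E+i0)=m_-(E-i0)$ is asserted rather than proved. This is the heart of the theorem, since the other $2d-2$ exponents are nonzero and the $a_k$ are complex, and you yourself flag it as the main obstacle. Even your $m_\pm$ are not yet Herglotz: for nonreal $z$ the transfer matrix only satisfies $L_{\bar z}^*SL_z=S$. A center frame that is $S$-symplectic at real $E$ therefore has an $O(\delta)$ defect off the axis, which need not give $\Im m_\pm$ a sign. The paper supplies the missing ingredients as follows.
\begin{enumerate}
\item An analytic-in-$E$ center frame with $(u_{\bar z},v_{\bar z})^*S(u_z,v_z)=J$ and $(u_z,v_z)^*S(u_z,v_z)=J+O(|E-E_0||\delta|+\delta^2)$ (Proposition~\ref{newwe'}). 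Combined with the Wronskian identity $\Im\,\vec u_z^+(0)^*C^*\vec u_z^+(-1)=\Im z\sum_{n\ge0}\|\vec u_z^+(n)\|^2$, this gives the Herglotz property (Lemma~\ref{nonvanishing}).
\item A Johnson--Moser formula for the single exponent $L_{d+1}(z)$. It uses the invariant line $E_+(z,x)$, available because nonreal $z$ are also $d$-dominated, together with $dL_{d+1}(z)=-\int_\T\log|\tau(z,x)|\,dx$ and a Green's-matrix computation of $\partial_z\log\tau$ (Theorem~\ref{mg}, Lemma~\ref{final1}). This isolates the center exponent with no control on the top $d-1$ exponents.
\item The integral bounds of Lemmas~\ref{le1}--\ref{le2}.
\item The inequality $f+\sqrt\delta\,g\ge0$ (Lemma~\ref{gjyle}), Fatou's lemma, and the equality case of the Cauchy--Schwarz chain, which together yield Theorem~\ref{keyth}.
\end{enumerate}
Without these, your Step 3 has nothing to act on.
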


\begin{Remark}
{\rm \begin{enumerate}
\item The theorem as stated is fully sufficient for our current
  purposes, however the simplicity of $L^f_d(E)$ (or, equivalently, the
  two-dimensionality of the center), while used substantially in the proof, is not essential, and this
  condition will be removed in the   upcoming work of the first author with D.Xu
  \cite{gx}.  What we currently see as crucially important is that the
  cocycle $(T,L_E^f)$ is partially hyperbolic. It is an interesting
  question whether this assumption is necessary for the result of Theorem \ref{L2 reducibility}.
\item $\phi_E\equiv0$ if  $L_{f,\omega}$ is real-valued. 
  Theorem \ref{L2 reducibility} is also the first result on
  $L^2$-reducibility for {\it complex}  finite-range operators 
\end{enumerate}
}
\end{Remark}
We will actually use Theorem \ref{L2 reducibility} through the
following corollary, also of independent interest
\begin{Corollary}\label{C0reducibility}
If $(T,L_{E}^f)$ is $PH2$ and $L_d^f(E)=\omega^{d-1}(T,L_E^f)=0$
 on an interval $I\subset \R$, there exist  $H_E\in C^\omega(\Omega,Sp_{2d\times2}(\C))$, $\phi_E\in C^\omega(\Omega,\R)$ and $R_E\in C^\omega(\Omega,SO(2,\R))$, depending analytically on $E\in I'\subset I,$ such that
$$
L_E^f(\omega)H_E(\omega)=H_E(T\omega)e^{2\pi i\phi_E(\omega)}R_E(\omega).
$$
\end{Corollary}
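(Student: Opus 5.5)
The plan is to upgrade the measurable conjugacy of Theorem \ref{L2 reducibility} to an analytic one by combining the dominated splitting with Kotani-type analyticity in $E$. Here $\Omega=\T$ and $T$ is rotation by $\alpha$.

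\emph{Step 1: an analytic center bundle.} Since $(T,L_E^f)$ is $PH2$ for every $E\in I$, there is a $(d-1)$/$(d+1)$-dominated splitting $E^u\oplus E^c\oplus E^s$ with $\dim E^c=2$. Dominated splittings are continuous in the base and vary continuously with parameters. Because $L_{d-1}^f>L_d^f=0$ and $\omega^{d-1}=0$ on $I$, the sum $L^{d-1}$ is affine in $\e$ near $0$ for each $E\in I$. I will use this, as in Avila's regularity argument extended to $Sp(2d,\C)$, to show that the unstable and stable bundles $E^u_E$, $E^s_E$ extend holomorphically to a strip $|\Im x|<\delta$ and depend analytically on $E$ in a complex neighborhood of a subinterval. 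Then $E^c=(E^u\oplus E^s)^{\perp_S}$ is analytic as well. Because $E^c$ is symplectically orthogonal to the hyperbolic part, $S$ restricted to $E^c$ is nondegenerate. I can therefore choose an analytic frame $F_E(x)\in Sp_{2d\times2}(\C)$ spanning $E^c$; after passing to a double cover, or to a subinterval if needed, this frame exists globally. The restricted cocycle $A_E=F_E(x+\alpha)^{-1}L_E^fF_E(x)$ is then analytic in both $x$ and $E$, with values in $\{B: B^*JB=J\}$, that is, $\mathbb{U}(1)\cdot SL(2,\R)$ up to a fixed conjugation.

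\emph{Step 2: reduction to an $SL(2,\R)$ Kotani problem.} The determinant $\det A_E$ is an analytic unimodular function. I solve the cohomological equation for its phase: its degree is $0$, which follows from the vanishing acceleration on the center. This gives $\phi_E$, and it lets me write $A_E=e^{2\pi i\phi_E(x)}\tilde A_E$ with $\tilde A_E$ an analytic $SL(2,\R)$ cocycle whose Lyapunov exponent is $L_d^f(E)=0$ on $I$.

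\emph{Step 3: analytic rotation reducibility.} Theorem \ref{L2 reducibility} gives $L^2$ conjugacy of $\tilde A_E$ to rotations for a.e.\ $E\in I$. I will follow the classical Kotani route rather than plain $L^2$ theory. The $M$-function restricted to the center, $m_E(x)$, is the Weyl disc point of $\tilde A_E$ extended into $\Im E>0$. On $I$, partial reflectionlessness says that the boundary values $m_+$ and $-\overline{m_-}$ agree a.e. By Schwarz reflection, $m_E(x)$ then continues analytically across $I$, and the continuation lies in the upper half-plane. Setting $B_E(x)$ to be the standard $SL(2,\R)$ matrix sending $\binom{1}{0}$ to $\binom{m}{1}$, normalized by $(\Im m)^{-1/2}$, gives an analytic conjugacy from $\tilde A_E$ to an $SO(2,\R)$-valued cocycle $R_E$. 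Finally I set $H_E=F_EB_E$.

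The main obstacle is Step 3: establishing that the $M$-matrix, compressed to the center, is analytic in $E$ across the whole of $I$, including its analyticity in $x$ on a complex strip. This is where the partial reflectionlessness from the Kotani part of the paper is needed. If this step is too delicate, I would first try to prove it only for $E$ in a subinterval $I'\subset I$, where uniform bounds on $\Im m$ are available.
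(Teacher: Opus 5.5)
Your architecture matches the paper's proof (Theorem \ref{C0reducibility1}): an analytic symplectic center frame, the factorization $A_E=e^{2\pi i\phi_E}C_E$ from zero winding of the center determinant, the center $m$-function with partial reflectionlessness from Theorem \ref{keyth}, and conjugation by a real matrix $D$ with $D\cdot i=m$. But Step~3 is where the proof actually lives, and what you propose there does not close it.

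First, Schwarz reflection is the wrong mechanism. It needs real, or at least continuous, boundary values on the interval. Theorem \ref{keyth}(3) only gives, for a.e.\ $x$, the equality $m_+(E+i0,x)=m_-(E-i0,x)$ for a.e.\ $E$, between nontangential limits of two different, non-real-valued functions. What works is the Kotani gluing. Fix such an $x$. Then $m_+(\cdot,x)$ above the axis and $m_-(\cdot,x)$ below it are $\mathbb H$-valued (Lemma \ref{nonvanishing} and its counterpart for $\bar z$), hence bounded after a Cayley transform. Write each side as a Cauchy integral of its boundary values; the contributions along the interval cancel. So the glued function is holomorphic across the interval and, by the maximum principle, still $\mathbb H$-valued.

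Note also that this $\mathbb H$-valuedness is not a Weyl-disc property of $\tilde A_z$. For nonreal $z$ the center cocycle has no monotone structure of its own; the property is inherited from the ambient self-adjoint operator. Transferring it to center coordinates requires a frame with $F_{\bar z}^*SF_z=J$ and $F_z^*SF_z=J+O(|E-E_0||\Im z|+(\Im z)^2)$, as in Proposition \ref{newwe'}.

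More seriously, the gluing gives holomorphy in $E$ only for a.e.\ $x$, and says nothing about analyticity in $x$ at real energies, which you need for $H_E\in C^\omega$. Uniform lower bounds on $\Im m$ do not help here. For nonreal $z$, $m_z(\cdot)$ is analytic on a strip coming from the domination $E_+\oplus E_-$ inside the center, and that strip degenerates as $\Im z\to0$. Two ingredients are missing:
\begin{enumerate}
\item normality of the $\mathbb H$-valued family $\{m_+(\cdot,x)\}_{x\in\T}$, which upgrades the a.e.-$x$ extension to one for every $x$, continuous in $(z,x)$;
\item the Avila--Jitomirskaya lemma (Lemma \ref{aj}): a continuous function that is holomorphic in $z$ for every $x$, and analytic in $x$ for a nonpolar set of $z$ (here all nonreal $z$), is jointly analytic.
\end{enumerate}
Only after these are $D_E$, $H_E=F_ED_E$ and $R_E$ analytic in $(E,x)$.

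Two smaller points. Step~2 involves no cohomological equation; solving one would require arithmetic conditions on $\alpha$. You simply take $\phi_E=\frac{1}{4\pi i}\log\det A_E$. The logarithm exists because $L_d(\e)+L_{d+1}(\e)=L^{d-1}(-\e)-L^{d-1}(\e)=0$ near $\e=0$ (Lemma \ref{lem:center-det-zero}). In Step~1, no double cover is needed, and it would not produce a frame over $\Omega$ anyway. Holomorphic bundles over $R_\delta(E)\times\Omega_h$ are trivial (Proposition \ref{trivial}), and analyticity of the splitting in $(E,x)$ follows from domination alone.
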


\begin{Remark}
{\rm Corollary \ref{C0reducibility} implies in a standard way that if
  $L_d^f(E)=0$ on an open interval $I$, then $L_{f,\omega}$ has
   purely absolutely continuous spectrum on $I$ for any $\omega\in\Omega$.}
\end{Remark}

Another general  ingredient is a criterion of simplicity of point
spectrum for all minimal,  uniquely ergodic finite-range $PH2$ operators. It is well known, by an easy
Wronskian argument, that point spectra of second-difference operators
are always simple. Certainly, the argument breaks down for
higher-difference operators, only implying absence of point spectra of
correspondingly high multiplicity. It turns out, however, that for
 $PH2$
operators
$L_{f,\omega},$ given by \eqref{finite operator11} with minimal, uniquely ergodic $T,$
point spectrum is always simple, {\it for any $d$}, thus $PH2$ property implies certain {\it essential
  second-differenceness} of these higher-difference operators.
 \begin{Theorem}\label{tsimp}
 
 $PH2$ operators $L_{f,\omega}$ with minimal, uniquely ergodic $T$ and continuous $f:\Omega\rightarrow \R$ have simple point spectrum for any $\omega\in\Omega$.
 \end{Theorem}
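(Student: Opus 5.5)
Suppose, for contradiction, that for some $\omega\in\Omega$ and $E\in\R$ the eigenvalue equation $L_{f,\omega}u=Eu$ has two linearly independent solutions $u^{(1)},u^{(2)}\in\ell^2(\Z)$. The plan is to pass to the transfer cocycle $(T,L_E^f)$ and use the $PH2$ splitting to show that $\ell^2$ solutions must lie in a one-dimensional subspace of each fiber. First note that $E$ is an eigenvalue, so $E\in\Sigma_f$, and by assumption $(T,L_E^f)$ is $PH2$. Write $\C^{2d}=E^u(\omega)\oplus E^c(\omega)\oplus E^s(\omega)$ for the continuous invariant splitting given by $(d-1)$- and $(d+1)$-domination. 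Here $\dim E^u=\dim E^s=d-1$ and $\dim E^c=2$, and the splitting is defined for every $\omega$ because of minimality and uniform domination. Let $\Phi_j(n)\in\C^{2d}$ be the vectors of $2d$ consecutive entries of $u^{(j)}$, so that $\Phi_j(n)=L_{E,n}^f(\omega)\Phi_j(0)$.

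The first step is to show that $\Phi_j(0)\in E^s(\omega)\oplus E^c(\omega)$, and, running the cocycle backwards, that $\Phi_j(0)\in E^u(\omega)\oplus E^c(\omega)$. Together these give $\Phi_j(0)\in E^c(\omega)$. Since $\Phi_j(n)\to0$ as $n\to+\infty$, it cannot have a nonzero component along the dominated unstable bundle $E^u$. To see this, use uniform domination: vectors with nonzero $E^u$ component are eventually uniformly expanded relative to $E^c\oplus E^s$. We also need a lower bound for growth along $E^u$. Uniform exponential expansion on $E^u$ holds because $L_{d-1}^f(E)>L_d^f(E)\ge0$ and unique ergodicity make the growth uniform in $\omega$. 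In fact, domination plus symplecticity already gives $E^u$ uniformly expanding whenever the center has zero exponents. I would prove the needed uniform growth from unique ergodicity via a semicontinuity argument applied to $\frac1n\ln\|L_n|_{E^u}\|$ and $\frac1n\ln m(L_n|_{E^u})$.

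So both $\Phi_1(0)$ and $\Phi_2(0)$ lie in the two-dimensional center $E^c(\omega)$, and they are linearly independent there. Hence they span $E^c(\omega)$.

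The second step is a Wronskian-type argument on the center. The symplectic form $S$ restricted to $E^c$ is nondegenerate; this should come from the symplectic orthogonality relations $E^u\perp_S (E^u\oplus E^c)$ and $E^s\perp_S(E^s\oplus E^c)$, which are standard for dominated symplectic splittings. The quantity $\langle \Phi_1(n),S\Phi_2(n)\rangle$ is invariant in $n$ by the symplectic property (up to complex-conjugation conventions, taking $\bar u$ as needed since $f$ is real). Because $\Phi_j\in\ell^2$, this quantity tends to $0$, so it vanishes identically. Applying the same argument to the pairs $(\Phi_j,\Phi_j)$ and $(\Phi_1,\Phi_2)$ shows that $S$ vanishes on the span, that is, on all of $E^c(\omega)$. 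This contradicts nondegeneracy of $S|_{E^c}$, so the point spectrum is simple.

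The main obstacle is the first step, namely ruling out $E^u$ components uniformly for every $\omega$, not just almost every $\omega$. That is where minimality and unique ergodicity enter. I would first try to prove uniform domination estimates of the form $\|L_n v^u\|\ge c\lambda^n\|L_n v^{cs}\|\,\|v^u\|/\|v^{cs}\|$, and then combine them with the boundedness of $\Phi_j(n)$.
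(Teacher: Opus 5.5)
Your proposal is correct and is essentially the paper's proof. The uniform growth estimates on $E^s$ and $E^u$ (from domination plus unique ergodicity) force both $\ell^2$ eigenvectors into the two-dimensional center. Invariance of $\vec u^{\,*}S\vec v$ together with decay (Lemma \ref{symplectic}) then makes $S$ vanish on $E^c$, contradicting nondegeneracy. Your explicit check that $S|_{E^c}$ is nondegenerate, via $E^c\perp_S (E^s\oplus E^u)$, fills in a step the paper leaves implicit when it simply invokes ``non-degeneracy of the symplectic form.''
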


This allows an extension of Puig's argument originally designed for
the almost Mathieu operators. Puig showed \cite{puig} that Schr\"odinger cocycles
associated with almost Mathieu operators cannot be reduced to the
identity. This was a key element in his proof of the Cantor spectrum
for Diophantine $\alpha$. The argument itself is almost Mathieu
specific precisely because it is based on simplicity of point spectrum of
the dual operator, which, for the almost Mathieu, is again in the
almost Mathieu family.  \footnote{This issue is relevant, for example, to the dry Ten Martini claim in
\cite{hs1}. There Puig's argument is invoked for trigonometric-polynomial
potentials, whose Aubry duals are higher-range operators rather than
second-difference operators. Simplicity of the dual point spectrum is
therefore no longer automatic, and no argument establishing the
required simplicity is provided in \cite{hs1}. Thus, as written, the
Puig's argument step in \cite{hs1} contains a genuine gap and does not establish
the claimed dry conclusion without an additional argument such as the
simplicity theorem proved here.} For general operators \eqref{sch}, a dual
operator is  
defined by 
\begin{align}\label{fi}
(L_{v,\alpha,\theta}u)_n=\sum\limits_{k=-\infty}^{\infty} \hat{v}_k u_{n+k}+2\cos2\pi(\theta+n\alpha)u_n,\ \ n\in\Z,
\end{align}
In
particular, if $v$ is a
trigonometric polynomial potential of degree $d,$ operator
$L_{v,\alpha,\theta}$ is finite-range  of the form \eqref{finite
  operator11}, so for every $E$ we can define a 
complex-symplectic transfer matrix cocycle. Slightly abusing the language, we will
call such cocycle the {\it dual cocycle} of $(\alpha,S_E^v)$ and/or of
the operator \eqref{sch}.

Theorem \ref{tsimp} immediately leads to
the following generalized Puig's argument

\begin{Theorem}\label{gjy-puig}

A Schr\"odinger cocycle $(\alpha,S_E^v)$ whose dual is $PH2$,
cannot be (analytically) reduced to the identity, i.e. there does not exist $B\in C^{\omega}(\T,SL(2,\R))$ such that
$$
B^{-1}(x+\alpha)S_E^v(x)B(x)=Id.
$$

\end{Theorem}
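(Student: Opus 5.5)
We argue by contradiction and follow Puig's original scheme, with Theorem \ref{tsimp} replacing the almost Mathieu self-duality. Suppose $B\in C^{\omega}(\T,SL(2,\R))$ satisfies $B^{-1}(x+\alpha)S_E^v(x)B(x)=Id$. Then both columns $b_1(x),b_2(x)$ of $B$ are analytic invariant sections: $S_E^v(x)b_j(x)=b_j(x+\alpha)$. Writing $b_j(x)=(u_j(x),u_j(x-\alpha))^T$ in the standard way, each first coordinate $u_j$ is an analytic function on $\T$ with
$$u_j(x+\alpha)+u_j(x-\alpha)+v(x)u_j(x)=Eu_j(x).$$
By the usual Bloch-wave computation, the Fourier coefficients $\hat u_j(n)$ then satisfy the dual eigenvalue equation $L_{v,\alpha,\theta}\hat u_j=E\hat u_j$ at $\theta=0$ (up to the normalization fixed in \eqref{fi}).

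Analyticity of $u_j$ makes $\hat u_j$ decay exponentially, so each $\hat u_j\in\ell^2(\Z)$ is a genuine eigenvector. More generally, the phase-shifted functions $e^{2\pi i\theta\cdot}$-twisted versions give eigenvectors of $L_{v,\alpha,\theta}$. We apply this at $\theta=0$, where the two vectors are $\hat u_1$ and $\hat u_2$.

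Next we check that $\hat u_1$ and $\hat u_2$ are linearly independent. If $\hat u_2=c\,\hat u_1$, then $u_2=cu_1$, and hence $b_2=cb_1$. This contradicts $\det B=1$.

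So $E$ is an eigenvalue of $L_{v,\alpha,0}$ of multiplicity at least $2$. When $v$ is a trigonometric polynomial of degree $d$, $L_{v,\alpha,\theta}$ has the form \eqref{finite operator11}. Here $\Omega=\T$, $T$ is the irrational rotation (minimal and uniquely ergodic), and $f=2\cos 2\pi(\cdot)$ is continuous. By hypothesis the dual is $PH2$, meaning every energy in $\Sigma_f$ is $PH2$, and $E$ lies in $\Sigma_f$ because it is an eigenvalue. Theorem \ref{tsimp} then says the point spectrum of $L_{v,\alpha,0}$ is simple, which is a contradiction.

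The main obstacle is the meaning of ``dual is $PH2$'' for general analytic $v$, where the dual is infinite-range and has no cocycle. I would handle this case by the same contradiction, replacing Theorem \ref{tsimp} with its infinite-range analogue based on the limiting two-dimensional center dynamics described above. If that analogue is not yet available at this point in the paper, I would state the proof for the finite-range (trigonometric polynomial) case, where the argument above is complete. A secondary check is the precise Aubry-duality normalization ($\theta$ versus $x$ phases, and the factor $2$ in front of $\cos$), so that $\hat u_j$ satisfies exactly $L_{v,\alpha,\theta}u=Eu$ rather than a rescaled equation.
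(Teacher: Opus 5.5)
Your proposal is correct and follows the paper's argument. The paper deduces the theorem from Theorem \ref{contra1} (with $w=2\cos$, $l=1$, $\phi_0=0$, $H=B$), where the complex-conjugated Fourier coefficients of the first-row entries of the two columns of $B$ give two linearly independent $\ell^2$ eigenvectors of $L_{v,\alpha,0}$, contradicting the simplicity of point spectrum from Theorem \ref{tsimp}; the complex conjugation is exactly what settles your normalization check, and your restriction to trigonometric polynomial $v$ matches the setting in which the paper proves this result.
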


The reducibility-based arguments, however, require Diophantine
conditions, thus cannot work within an {\it all $\alpha$} proof. Here we
develop a uniform in $\alpha$ scheme, by replacing localized
eigenfunctions and reducibility to the identity in
a Puig-type argument with {\it almost localized eigenfunctions} and
{\it rotations reducibility}. We prove

\begin{Theorem}\label{rot1}
A Schr\"odinger cocycle $(\alpha,S_E^v)$ whose dual is $PH2,$
cannot be (analytically) reduced to  a rotation with zero
rotation number, i.e. there does not exist $B\in C^{\omega}(\T,SL(2,\R))$ and $\psi\in C^\omega(\T,\R)$ with $\int_\T \psi(x)dx=0$ such that
$$
B^{-1}(x+\alpha)S_E^v(x)B(x)=R_{\psi(x)}.
$$
\end{Theorem}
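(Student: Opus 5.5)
The plan is to run Puig's argument with $\psi$ removed only approximately, so that it works for every irrational $\alpha$, and to get the contradiction from Theorem~\ref{tsimp} (simplicity of the dual point spectrum), applied at the phase $\theta=0$ of the dual family.

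\textbf{Step 1: complex invariant sections.} Suppose $B^{-1}(x+\alpha)S_E^v(x)B(x)=R_{\psi(x)}$ with $\int_\T\psi=0$. The vectors $(1,\mp i)^T$ are eigenvectors of every rotation, with eigenvalues $e^{\pm 2\pi i\psi}$. Set $U(x)=B(x)(1,-i)^T$. Then
$$S_E^v(x)U(x)=e^{2\pi i\psi(x)}U(x+\alpha),\qquad S_E^v(x)\overline{U(x)}=e^{-2\pi i\psi(x)}\overline{U(x+\alpha)}.$$
Both $U$ and $\overline U$ are analytic in a strip $|\Im x|<h$. Since $\det B=1$, they are pointwise linearly independent over $\C$.

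\textbf{Step 2: approximate removal of $\psi$.} If $\psi$ were a coboundary, $\psi=\varphi(\cdot+\alpha)-\varphi$, then $W=e^{-2\pi i\varphi}U$ and $W'=e^{2\pi i\varphi}\overline U$ would be exactly invariant. Passing to the first coordinates $w,w'$ and to their Fourier coefficients, Aubry duality would turn them into two exponentially decaying solutions of $L_{v,\alpha,0}\hat w=E\hat w$.

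Their independence would be checked with an $x$-dependent Wronskian-type determinant $\det(W(x),W'(x))=\det(U(x),\overline U(x))$. This equals the constant $2i$ because $\det B=1$. Hence $\hat w$ and $\hat w'$ cannot be proportional, so $E$ would be an eigenvalue of $L_{v,\alpha,0}$ of multiplicity at least two. This contradicts Theorem~\ref{tsimp}. Note that $L_{v,\alpha,\theta}$ is a minimal, uniquely ergodic finite-range $PH2$ operator once we pass to the dual and use the $PH2$ hypothesis on the spectrum.

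For general $\alpha$ the cohomological equation may not be solvable. So I will take the denominators $q_n$ of $\alpha$ and set $\varphi_n=\sum_{0<|k|<q_n}\hat\psi_k(e^{2\pi ik\alpha}-1)^{-1}e^{2\pi ikx}$. Then $\psi-(\varphi_n(\cdot+\alpha)-\varphi_n)$ is the Fourier tail of order $\ge q_n$, so it is $O(e^{-hq_n})$ on a slightly smaller strip. This gives almost-invariant sections $W_n,W_n'$ with error $O(e^{-h'q_n})$. Their Fourier transforms give approximate eigenvectors $\hat w_n,\hat w_n'$ of $L_{v,\alpha,0}$ at energy $E$, still with exact Wronskian $2i$.

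\textbf{Step 3: compactness and passage to the limit; the main obstacle.} I expect the main difficulty here. I would normalize $\hat w_n,\hat w_n'$ and show they have exponential decay that is uniform in $n$, so that a subsequence converges in $\ell^2$ to genuine eigenvectors. The danger is that $\|\varphi_n\|_{h'}$ can grow for Liouvillean $\alpha$, through small divisors up to scale $q_n$, which destroys uniform strip control.

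The first thing I would try is to avoid needing uniform bounds on $\varphi_n$ and instead use the $PH2$ structure of the dual cocycle. An approximate solution whose error is exponentially small in $q_n$ must, by domination, have its exponentially growing and decaying parts controlled. Its essential part then lies in the two-dimensional center bundle, on which the dual dynamics is a bounded symplectic (essentially $SL(2)$) cocycle. Normalizing there, the rotation-number zero condition keeps the center dynamics from twisting.

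Only a subexponential-in-$q_n$ bound is needed, and this should hold since $\log|\varphi_n|\lesssim \sum_{k<n}\log q_{k+1}/q_k\cdot q_k$ is beaten by $h' q_n$ only along a suitable subsequence. If that fails, I would instead pick the scales $n$ where $q_{n+1}$ is not too large relative to $q_n$ and treat the remaining (very Liouvillean) scales by using $\psi$ directly as an almost-coboundary at scale $q_{n+1}$.

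Once limits $\hat w,\hat w'$ exist and their Wronskian stays $2i$, the contradiction with Theorem~\ref{tsimp} follows as in Step 2.
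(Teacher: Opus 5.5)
Your Steps~1--2 are correct. They coincide with the paper's treatment of the case where $\psi$ is an analytic coboundary ($\beta(\alpha)<h$), which the paper reduces via Theorem~\ref{contra1} to Theorem~\ref{tsimp}. The genuine gap is Step~3, where all the difficulty lies, and the limiting strategy there cannot work as designed.

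Via Aubry duality and the conjugacy $B$, an $\ell^2$ eigenvector of $L_{v,\alpha,0}$ at energy $E$ is the same thing as a nonzero $a\in L^2(\T)$ with $a(x+\alpha)=e^{2\pi i\psi(x)}a(x)$. So producing genuine eigenvectors is equivalent to showing that $e^{2\pi i\psi}$ is a measurable multiplicative coboundary. That does not follow from analyticity and $\int_\T\psi=0$ when $\alpha$ is Liouvillean.

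Accordingly there is no compactness. $\|\hat w_n\|_{\ell^2}$ equals the $L^2$ norm of the first coordinate of $U$, so it is bounded above and below. But the Fourier decay of $w_n$ is controlled only through $|e^{2\pi i\varphi_n}|_{h'}$, which can be as large as $e^{Cq_n|\psi|_h}$. The mass therefore sits in windows of size $\sim q_n$ with no uniform localization. A weak limit is an eigenvector or $0$, nothing in your argument excludes $0$, and the quadratic identity $\det(W_n,W_n')=2i$ does not survive weak limits.

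All the work would therefore have to be done by $PH_2$, and your sketch of that rests on a false premise. The center cocycle is not bounded in general: its exponents are $\pm\gamma_1$, and $\gamma_1>0$ at subcritical Type~I energies.

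The missing idea is to never produce eigenvectors. Instead, run a quantitative version of the symplectic-orthogonality proof of Theorem~\ref{tsimp} at one large scale. Truncating at $q_n$, as in your Step~2, is too weak for this: the localization window and the error are then governed by the same scale, and transfer-matrix growth across the window can swamp the error.

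When $\beta(\alpha)\ge h$, the paper works along the subsequence $q_{n_k+1}>e^{(\beta-\epsilon)q_{n_k}}$ and truncates the cohomological equation at $q_{n_k+1}/6$ (Lemma~\ref{small}). The conjugacy then grows only like $e^{C(q_{n_k}+e^{-hq_{n_k}/2}q_{n_k+1})}$, while the error is $e^{-hq_{n_k+1}/20}$. This is the precise form of your ``almost-coboundary at scale $q_{n+1}$'' fallback. It yields two approximate dual solutions, almost localized on $|n|\le e^{-hq_{n_k}/8}q_{n_k+1}$, with an exact symplectic normalization inherited from $\det B=1$.

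The contradiction then runs as follows:
\begin{enumerate}
\item Domination (Lemma~\ref{lee2}) makes the stable and unstable parts of the associated true solutions at a site $n_0$ exponentially small.
\item The normalization forces their center parts to be quantitatively independent (Proposition~\ref{ff5}).
\item Conservation of the $S$-pairing along the transfer matrices, read off at a distant site where the almost localized vectors are tiny, forces the center parts to be almost $S$-isotropic.
\item This contradicts the uniform nondegeneracy of $S$ on the two-dimensional center.
\end{enumerate}
No limit and no genuine eigenvector is ever needed.
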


 Quantitative global theory \cite{gjyz} implies that, at every Type I energy, the dual
finite-range cocycle is $PH_2$.
From this, using the limiting
two-dimensional symplectic center dynamics of the finite-range
truncations, we prove
the corresponding infinite-range Puig-type obstruction 

\begin{Theorem}\label{rot2analytic}
For $\alpha\in\mathbb R\backslash\mathbb Q$,  real
analytic $v$ and Type I $E\in\Sigma_{v,\alpha}$, 
there does not exist $B\in C^{\omega}(\T,SL(2,\R))$ and $\psi\in
C^\omega(\T,\R)$ with $\int_\T \psi(x)dx=0$ such that
$$
B^{-1}(x+\alpha)S_E^v(x)B(x)=R_{\psi(x)}.
$$
\end{Theorem}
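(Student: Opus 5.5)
We argue by contradiction, transplanting the Puig-type argument of Theorem \ref{rot1} to the infinite-range dual $L_{v,\alpha,\theta}$. Suppose $B\in C^\omega(\T,SL(2,\R))$ and $\psi$ with $\int_\T\psi=0$ conjugate $(\alpha,S_E^v)$ to $R_{\psi(x)}$. Since $\alpha$ is irrational and $\psi$ is analytic with zero mean, we cannot solve the cohomological equation $\psi=\phi(x+\alpha)-\phi(x)$ analytically for general $\alpha$. So we do not reduce to the identity. Instead we use the columns of $B$ to produce \emph{almost localized} generalized eigenfunctions of the dual.

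Write $B=(b_1,b_2)$. The conjugacy gives the complex vector $b(x)=b_1(x)+ib_2(x)$, which satisfies
\[
S_E^v(x)b(x)=e^{2\pi i\psi(x)}b(x+\alpha).
\]
Consider the finite products $e^{2\pi i\sum_{j<n}\psi(x+j\alpha)}$. Ergodic averages of a zero-mean analytic function along the rotation stay small along the denominators $q_n$ of $\alpha$, since Denjoy--Koksma applies to functions of bounded variation. Hence the Fourier coefficients of $u(x)$, the first coordinate of $b$, form an analytic (exponentially decaying) sequence $\hat u_k$. This sequence solves the dual eigen-equation $L_{v,\alpha,\theta}\hat u=E\hat u$ only up to an error controlled by the cocycle $e^{2\pi i\psi}$. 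Taking the phase $\theta$ equal to the rotation number of $(\alpha,S_E^v)$, which is $0$ here (mod $\Z\alpha$), we get two linearly independent almost solutions, one from $b$ and one from $\bar b$. They are exponentially localized up to a uniformly controlled multiplicative defect along resonant scales $q_n$.

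The second step passes to the finite-range truncations $v_N$ (degree $N$ trigonometric polynomials) of $v$. By the quantitative global theory \cite{gjyz} and upper semicontinuity of $\e_1$ and $\bar\omega$ (Lemma \ref{con}), the Type I condition persists for $v_N$ near $E$. The dual cocycles are therefore $PH2$, with stable and unstable directions satisfying estimates uniform in $N$. Projecting the two almost-localized solutions onto the center bundle, we get two independent vectors in the limiting two-dimensional symplectic center dynamics of \cite{gj}. Their exponential decay forces them to lie in the center rather than the unstable directions. We then pair them with the symplectic form restricted to the center, which is the infinite-range replacement for the Wronskian in Theorem \ref{tsimp}. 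This pairing is nearly invariant under the center dynamics. It must be nonzero because $\det B=1$ makes the two solutions symplectically nondegenerate. It must also tend to $0$ along $n\to\pm\infty$ because both solutions decay. This is the contradiction, exactly as in the proof of simplicity of point spectrum.

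The main obstacle is the second step. We must make the center-projection and the "Wronskian decays yet is conserved" argument uniform in the truncation $N$ and robust to the almost (rather than exact) eigenfunction defect for arbitrary, possibly Liouville, $\alpha$. Our first attempt is to work along the scales $q_n$. There Denjoy--Koksma bounds the defect by $e^{o(q_n)}$, which the $PH2$ hyperbolicity gap between the center and the stable/unstable exponents dominates. We then take $N\to\infty$ using the convergence of the center dynamics from \cite{gj}.
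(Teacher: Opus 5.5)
There is a genuine gap. Your overall architecture is the right one: almost-localized dual solutions from the two columns of $B$, truncation of $v$, the $PH2$ splitting, and ``independent but symplectically orthogonal in a two-dimensional center.'' However, two load-bearing steps fail as written.

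\textbf{The vectors you build are not almost solutions.} With $b=(u,w)^T$, your relation gives $e^{2\pi i\psi(x)}u(x+\alpha)+e^{-2\pi i\psi(x-\alpha)}u(x-\alpha)+(v(x)-E)u(x)=0$. So $\hat u$ misses the dual equation by an error of order $|\psi|_0|u|_0$, with no reason to be small. Denjoy--Koksma bounds on Birkhoff sums of $\psi$ do not remove it; the decay of $\hat u$ comes from analyticity of $B$ alone.

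The missing step is an approximate solution of the cohomological equation at resonant scales, as in Lemma~\ref{small}.
\begin{itemize}
\item Take denominators with $q_{s_k+1}>e^{hq_{s_k}}$, where $h$ is the strip of $B$; the case of only finitely many such denominators is handled separately.
\item Truncate the Fourier solution at $|j|<q_{s_k+1}/6$ to get $\phi_k$, and replace $B$ by $B^k=BR_{\phi_k}$.
\item Then $B^k(x+\alpha)^{-1}S_E^vB^k(x)=R_{\e_k(x)}$ with $|\e_k|_{h/2}\le e^{-q_{s_k+1}h/20}$, $|B^k|_0\le C$, and $|B^k|_{h/2}\le e^{O(r_k)}$ where $r_k=o(q_{s_k+1})$.
\end{itemize}
Only the first-row entries of $B^k$ are almost solutions. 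Their defect is \emph{additive and exponentially small}, and they are localized on a window of width $O(r_k)$.

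\textbf{The truncation cannot be sent to infinity independently.} The statement is only nontrivial at subcritical energies. Analytic rotations reducibility forces $L_\e(E)\equiv0$ for small $|\e|$, which excludes $\omega(E)=1$. At subcritical energies $\e_1(E)>0$, and the truncated centers have exponents $\pm2\pi\e_1(E)$. The paper's limiting center, Theorem~\ref{theorem-main1general}(1), is constructed only when $\omega(E)=1$. So you cannot fix the almost solutions and let $N\to\infty$: the defects get amplified by $e^{2\pi\e_1(E)N}$. The truncation degree must be tied to the scale, $n_k\asymp q_{s_k+1}$. One also needs $h_1>\e_1(E)$ so that both the truncation error $e^{-2\pi n_kh_1}$ and the defect $e^{-q_{s_k+1}h/40}$ beat this growth.

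\textbf{The $N$-uniform stable/unstable bounds are asserted, not derived.} They do not follow from $PH2$ of each truncation, whose constants depend on $N$. The paper obtains them by Aubry duality from resolvents of $H_{v(\cdot\pm i\e),\alpha,x}$ with $\e$ just past $\e_1(E)$; this is Theorem~\ref{theorem-main1general}(3).

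\textbf{The contradiction step misattributes what $\det B=1$ gives.} The dual symplectic pairing of the two almost solutions is nearly zero precisely because they decay (Lemma~\ref{sub8}). $\det B=1$ is the Wronskian on the $x$-side and does not make that pairing nonzero; it yields only linear independence. The real contradiction is ``independent yet isotropic in a two-dimensional symplectic space.'' That only bites if both halves are quantitative and uniform in the truncation.
\begin{itemize}
\item \emph{Independence} must be proved in the form $\det(\text{Gram of the center projections})\ge e^{-Cr_k}$ (Proposition~\ref{sub9}). The argument: nearly parallel initial data, after pairing with the stable, unstable and center frames at the shifted sites $m_1+pl(n_k)$, would make $b^k_{12}-c_1b^k_{11}$ uniformly small, contradicting $\det B^k\equiv1$.
\item \emph{Nondegeneracy} of the center form is not uniform in the dimension, so it cannot be used qualitatively. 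The paper bounds it through the basis-independent kernel $L_n=W(W^*S_nW)^{-1}W^*$, whose entries satisfy $|l_{ij}|\le Ce^{2\pi(\e_1+\delta)|i-j|}$ (Theorem~\ref{theorem-main1general}(2)). This is combined with $\det(NM)=\tr\Lambda^2(MN)$, restricted to a window of width $R_k\sim r_k$.
\end{itemize}
This yields $e^{\delta n_k/4}\le|\Omega|\le CR_k^2e^{CR_k}$, which is a contradiction because $r_k\ll n_k$. Over all $2l(n_k)$ coordinates the upper bound would only be $e^{Cn_k}$, and there would be no contradiction when $\e_1(E)>0$. The windowing and the scale separation $r_k=o(n_k)$ are therefore essential, and both are absent from your plan.
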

This also allows us to establish  a Kotani theory for quasiperiodic {\it infinite-range} operators, in particular,  the rotations-reducibility for the two dimensional center of the infinite-range dual of Type I. 
\begin{Theorem}\label{long-range-kotanimain}
Let $\alpha\in\mathbb R\backslash\mathbb Q$ and $v$ be real
analytic. If $\omega(E)=1$ on an interval $I$. There exist  $O^j_E\in C^\omega(\T,\C^{\Z})$ for $j=1,2$, $\psi_E,\phi_E\in C^\omega(\T,\R)$, depending analytically on $E\in I'\subset I,$ such that
$$
L_{v,\alpha,\theta}O_E^j(E)=EO_E^j(\theta),
$$
$$
T(O^1_E(\theta),O^2_E(\theta))=(O^1_E(\theta+\alpha),O^2_E(\theta+\alpha))e^{2\pi i\phi_E(\theta)}R_{\psi_E(\theta)}.
$$
\end{Theorem}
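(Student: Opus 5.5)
Since $\omega(E)=1$ on $I$, the first turning point of $L_\e(E)$ is at $\e_1=0$ and $\bar\omega(E)=1$, so every $E\in I$ is Type I. Moreover $\omega(E)=1>0$ forces $L_0(E)>0$ by Avila's global theory: $L_\e$ is even and convex, so a corner at $0$ with slope $1$ means a genuine turning point there, and such a corner lies in $\Sigma_{v,\alpha}$ with positive Lyapunov exponent. By the quantitative global theory of \cite{gjyz}, the dual picture then has center Lyapunov exponent $0$ and center acceleration (the acceleration of $L^{d-1}$) equal to $0$.

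I would first treat a trigonometric-polynomial truncation $v_N$ of degree $N$. For $\e$ slightly beyond $0$ the slope of $L_\e$ is $1$, and this is stable (Lemma \ref{con}), so for $N$ large the Type I property and $\omega=1$ persist on a slightly shrunk interval $I'\Subset I$. Thus the dual cocycle $(\alpha,L^{v_N}_E)$ is $PH2$ on $I'$, with $L_N^{v_N}(E)=\omega^{N-1}=0$. Corollary \ref{C0reducibility} then gives $H^N_E\in C^\omega(\T,Sp_{2N\times2}(\C))$, $\phi^N_E$ and $R^N_E$, analytic in $E$, with
\[
L_E^{v_N}(\theta)H^N_E(\theta)=H^N_E(\theta+\alpha)e^{2\pi i\phi^N_E(\theta)}R^N_E(\theta).
\]
Because $R^N_E(\theta)$ is valued in $SO(2,\R)$ and analytic, we may write $R^N_E=R_{\psi^N_E}$. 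The two columns of $H^N_E$ are the first $2N$ coordinates of solutions of $L_{v_N,\alpha,\theta}u=Eu$. Iterating the cocycle relation extends them to sequences $O^{j,N}_E(\theta)\in\C^{\Z}$ with $O^{j,N}_{E,n}(\theta)=O^{j,N}_{E,0}(\theta+n\alpha)$ up to the twisting by $e^{2\pi i\phi}R_\psi$. This gives the covariance relation for the shift $T$ on the pair.

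The second step, which I expect to be the main obstacle, is passing $N\to\infty$. Here I would invoke the limiting two-dimensional symplectic center dynamics from \cite{gj} together with the uniform stable/unstable estimates for the truncations. The $PH2$ splittings have uniform domination constants and a uniform gap $L_{N-1}\ge\delta>0$, both coming from the stable slope gap of $L_\e$. So the center bundles $E^c_N(\theta)$ are analytic on a fixed strip with uniform bounds, and they converge to a limiting two-dimensional center $E^c_\infty(\theta)\subset\C^{\Z}$ consisting of solutions of $L_{v,\alpha,\theta}u=Eu$, with exponential control on the tails.

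The conjugacies $H^N_E$ are not canonical, since they are defined only up to $SO(2)$ gauge and phase. I would therefore normalize them by the intrinsic symplectic form $J$ restricted to the center, which is possible because $(H^N)^*SH^N=J$. Uniform boundedness of $H^N_E$ would follow from the rotation-number and subcriticality structure of the center in \cite{gj}. The key point is that the reduced two-dimensional center cocycle is conjugate to an $SL(2,\R)$-type cocycle with zero Lyapunov exponent and zero acceleration, hence subcritical. Its $C^\omega$ rotations-reducibility constants can then be made uniform in $N$ by analytic dependence on the cocycle.

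With uniform bounds in hand, Montel's theorem in $(\theta,E)$ gives a subsequence converging to $O^j_E$, $\phi_E$ and $\psi_E$. These are analytic in $E$ on $I'$, satisfy $L_{v,\alpha,\theta}O^j_E=EO^j_E$ by the convergence $v_N\to v$ and the exponential tail decay, and inherit the covariance relation. Finally, the symplectic normalization $J$ passes to the limit and prevents degeneration, so $O^1_E$ and $O^2_E$ remain independent.
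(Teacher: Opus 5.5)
Your first step cannot be carried out. Corollary~\ref{C0reducibility} requires the center exponent of the truncated dual to vanish on a whole interval. That is, you need $\gamma_1^N(E)=2\pi\e_1^{v_N}(E)=0$, equivalently $\omega^{v_N}(E)=1$, for all $E\in I'$.

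Lemma~\ref{con} transfers only $\bar\omega=1$ to $v_N$. Acceleration one is not stable, and the remark after Lemma~\ref{con} says exactly this. Off $\Sigma_{v_N,\alpha}$ the Schr\"odinger cocycle is uniformly hyperbolic, so $\omega^{v_N}(E)=0$ and $\gamma_1^N(E)>0$ there. You would therefore need $I'\subset\Sigma_{v_N,\alpha}$, and closeness of $v_N$ to $v$ does not give this.

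In fact it is false for every large $N$. Theorem~\ref{th71}, established for trigonometric polynomials in Section~\ref{C}, says precisely that the dual of $v_N$ cannot be $PH_2$ with $\gamma_1^N\equiv0$ and $\omega^{l(N)-1}=0$ on an interval. So no truncation ever satisfies the hypotheses of Corollary~\ref{C0reducibility}. Combined with Theorem~\ref{th71}, the claim ``$\omega=1$ persists from $v$ to $v_N$ on an interval'' is equivalent to the Cantor statement this theorem is meant to help prove.

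Separately, $\omega(E)=1$ does not force $L_0(E)>0$. Critical energies with $L=0$, $\omega=1$ are allowed, and the paper handles them as a separate case. This slip is harmless, since $\gamma_1(E)=0$ follows from $\e_1(E)=0$ alone.

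The limit $N\to\infty$ is also unsupported as written. Finite-range Kotani theory yields an $N$-dependent subinterval, strip and lower bound for $\Im m_+$. The principle ``the center is subcritical, hence rotations-reducible with constants uniform in $N$'' is not valid. Subcriticality of a fixed cocycle gives almost reducibility. Analytic rotations reducibility instead comes from continuing $m$ across an interval, which has no perturbative control.

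The paper reverses the order.
\begin{itemize}
\item It first constructs the limiting center in Theorem~\ref{theorem-main1general}, via Aubry duality and complexified resolvents. This gives symplectically normalized frames $O_n$ with exponentially convergent rows, and $M_n\to e^{2\pi i\phi}\mathbf{C}$ with $\mathbf{C}\in SL(2,\R)$.
\item It then runs Kotani theory on the limit (Theorem~\ref{Kotani-longrange}). Truncations are used only at $z\notin\R$, where everything is uniformly hyperbolic and $m^n_\pm(z,\theta)$ converge exponentially on compacts.
\item The Kotani bound (Lemma~\ref{le2con}) is expressed through $\gamma_1^n(\bar z)\to2\pi\e_1(z)$. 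So only $\e_1(E)=0$ for $v$ itself enters, through the a.e.\ finiteness of $\lim_{\delta\to0}\e_1(E+i\delta)/\delta$.
\item The derivative identity (Lemma~\ref{final1conbimsa}) and Fatou's lemma give $m_+(E+i0,\theta)=m_-(E-i0,\theta)$. Lemma~\ref{aj} then gives analyticity of $m$ in $(E,\theta)$, which yields $B$ conjugating $\mathbf{C}$ to rotations.
\end{itemize}
The statement is then Corollary~\ref{long-range-kotani1}, obtained by multiplying the limiting center solutions by $B$.
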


\subsection{Structure of the rest of the paper.} 
In Section \ref{start} we discuss the ideas and strategy of the proof in more
detail. Section \ref{pre} contains the preliminaries, and in
Section \ref{tio}, we present basic properties and some typical
examples of type I operators.

In Section \ref{puig1} we involve the symplectic orthogonality
property of different eigenfunctions to  prove the simplicity of point
spectrum. In particular, the proof of Theorem \ref{tsimp} is given in Section 5.2, the proof of Theorem \ref{gjy-puig} is given at the beginning of Section \ref{puig1} (see Theorem \ref{contra1}). Section \ref{puig2}  is devoted to our quantitative
and all-frequency version of Puig's argument, proving Theorem \ref{rot1} (see Theorem \ref{contra2}).

In Section \ref{kotanis1}, we establish Kotani theory and thus
$L^2$-reducibility for one-frequency analytic operators whose cocycles are
partially hyperbolic with two-dimensional center, proving Theorem \ref{L2 reducibility} (see Theorem \ref{L2 reducibility1}). As a consequence, we
prove that absence of Cantor spectrum implies improved
$C^\omega$-rotations reducibility for the associated quasiperiodic
finite-range cocycle in an interval, see Theorem \ref{C0reducibility1}, proving Corollary \ref{C0reducibility}.

Sections \ref{puig2} and  \ref{kotanis1}, especially, Section \ref{kotanis1}
  represent the main hard analysis 
and contain the key contributions of this paper beyond the main
result. Both are of
independent interest.

Section \ref{limit} develops the parameter-dependent limiting center construction
and the uniform stable/unstable estimates needed in the analytic
setting. All required specialized symplectic, factorization, signature, and
rotation facts used in Sections~7--10 are proved in
Appendix~\ref{app:center-facts}; adapting their general formulation 
in \cite{gj}.

This allows to bootstrap Kotani theory for the infinite-range dual of Type I operators, and all-frequency Puig's argument  to the entire Type I class, thus extending all previous results from trigonometric polynomials to general real analytic functions.

Sections \ref{slongkotani} and \ref{slongpuig}  are devoted to the proofs of Theorem
\ref{main11} and Theorem \ref{main12}, based on a combination of
generalized Kotani theory (developed in Section \ref{kotanis1}), and
all-frequency Puig's argument (developed in Sections \ref{puig1} and \ref{puig2}) and its limiting version (developed in Section \ref{limit}, \ref{slongkotani} and \ref{slongpuig}).

Finally, the proof of Theorem \ref{rot2analytic} can be found in Section 11.2 (see Theorem \ref{subkey}) and the proof of Theorem \ref{long-range-kotanimain} can be found in Section 10.1 (see Corollary \ref{long-range-kotani1}).

\section{The strategy}\label{start}
The self-duality of the almost Mathieu family plays a key role in
the ten martini proof of \cite{aj} in several aspects, the most trivial  of which is that the ($x$-independent)
spectra of $H_{\lambda,\alpha,x}$ and $H_{\frac 1{\lambda},\alpha,x},$
given by \eqref{amo}, coincide up to scaling by $\lambda.$
Therefore, it is sufficient to work in the (sub)critical regime
$|\lambda|\leq 1.$

For general type I operators \eqref{sch}, self-duality is, of course, lost, so
we have to develop different arguments for sub and super critical
regions. However, Aubry duality (see Section \ref{aubry}) remains
a crucial tool. 

We first explain the case in which $v$ is a trigonometric polynomial
of degree $d$. Then the dual operator \eqref{fi} is finite-range and its
eigenvalue equation defines a $2d$-dimensional complex symplectic
cocycle $(\alpha,L_{E,v})$. We denote its nonnegative Lyapunov
exponents by
 $\gamma_d(E)\geq \cdots\geq \gamma_1(E)\geq 0$.

The quantitative global theory developed in \cite{gjyz} identifies
the $T$-acceleration with the multiplicity of the smallest dual
Lyapunov exponent. In particular,
\[
\overline\omega(E)=1
\quad\Longleftrightarrow\quad
\gamma_1(E)\text{ is simple}.
\]
This will be repeated as Proposition \ref{simple le}
which will be given a more detailed but still a one-line proof.
Moreover, Theorem 4.1 gives
$\omega^{d-1}(\alpha,L_{E,v})=0$
at every Type I energy, and shows that $(\alpha,L_{E,v})$ is
$(d-1)$- and $(d+1)$-dominated. Thus its dual dynamics are $PH_2$,
with a two-dimensional center. The vanishing of
$\omega^{d-1}$ implies that the determinant of the induced center
cocycle has zero winding and hence that the center dynamics are
projectively real; see Appendix~\ref{app:center-facts}.
Those  are  precisely the features that allow to extend many of the
techniques developed for Schr\"odinger operators.

 More specifically, as will be shown in Section \ref{secdual} the results of
\cite{gjyz} immediately imply that for operators \eqref{sch} of type I
we have
\begin{center}{\large
\begin{tabular}{c|c|c}
\hline\hline
Regime& {$H_{v,\alpha,x}$}& $L_{v,\alpha,\theta}$\\
\hline
subcritical&$L(E)=\omega(E)=0$& $L(E)=0$ and $\gamma_1(E)>0$ is simple  \\
\hline
critical &$L(E)=0,\omega(E)=1$&$L(E)=0$ and $\gamma_1(E)=0$ is simple \\
\hline
supercritical &$L(E)>0,\omega(E)=1$ & $L(E)>0$ and $\gamma_1(E)=0$ is simple\\
\hline
\end{tabular}}
\end{center}

Thus the Type I condition has two consequences that will be used
throughout the proof: simplicity of the lowest dual Lyapunov exponent,
which gives the $PH_2$ splitting, and vanishing of the exterior
acceleration $\omega^{d-1}$, which is needed in the Kotani reduction
of the two-dimensional center.

The original proof of the almost Mathieu ten martini \cite{aj} combines different mechanisms on
the Diophantine and Liouville sides. As mentioned in
\cite{aj}, the two earlier breakthroughs, \cite{cey} and
\cite{puig}, already led to the proof of Cantor spectrum for
\eqref{amo} for an explicit set of a.e. $\alpha$, covering
correspondingly the Liouville and Diophantine regimes. The Diophantine
approach of
Puig was based on localization for completely resonant phases
\cite{j,jks}, duality, and Moser-P\"oshel argument. It was conjectured however in \cite{aj,solving} (and very recently proved \cite{Liuresonant1}) that  localization
for completely resonant phases does not hold, and thus this approach
cannot work at all, for the most difficult arithmetic
mid-range of parameters\footnote{see also \cite{liuresonant} for another recent
development}, so an a lot more elaborate approach was developed in \cite{aj}, still requiring dual localization and  a modification of the Puig's argument. However, both
the required localization result and Puig's impossibility of reducibilty to the identity
were based on the almost Mathieu specifics. On the Liouville side, the approach of \cite{aj} was based on a technical bootstrap of the special rational-gap
estimate of Choi-Eliott-Yui \cite{cey}, and the latter has not yet been extended even in a weak way to {\it any} other model.



The fact that the Diophantine and Liouville approaches did meet in the middle has been
viewed as a miracle by the authors of \cite{aj}, with no
rational explanation \footnote{Pun accidental}. At the time, it has been unclear whether the
arithmetic dependence of the proof of \cite{aj} is something
intrinsically required. 

An important development has been the non-critical dry ten martini proof for the
almost Mathieu operator in
\cite{dryAYZ}. While the proof of
\cite{dryAYZ}, requiring delicate estimates, focuses only on the arithmetic
range not previously covered by \cite{aj1} and therefore does not fully
bypass the algebraic argument of \cite{cey},  the key
  idea works for all frequencies, as the authors found a way to run a Moser-P\"oschel
type argument based on quantitative {\it almost} reducibility to identity, rather
than reducibility to the identity for which there are Diophantine
obstructions. 

While inspired by this idea of \cite{dryAYZ}, we instead replace an argument through
 reducibility to the identity by the one through {\it reducibility to rotations with zero rotation number}. The latter holds for all irrational
$\alpha$ (\cite{afk,hy}).  We do it not with Moser-P\"oschel but
with Puig's duality approach itself (Theorem \ref{rot1}), replacing
the localized eigenfunctions in his argument by the  {\it almost
  localized} ones. Most importantly, our argument works for
  higher-dimensional symplectic cocycles, thus allows to use duality while going beyond
  the almost Mathieu family. 
Our method  works for all irrational
frequencies, and for both sub and (super)critical regimes.

Overall, our proof does not attempt to extend the localization or rational-gap
parts of the almost Mathieu argument. Instead, it develops robust high-dimensional versions of  two
ingredients: Kotani
theory and Puig's duality argument. We now discuss the overall strategy once those are obtained.


{\bf Trigonometric polynomial potentials: zero Lyapunov
exponent.}

Suppose that an interval is contained in the Type I part of the
spectrum and that the  Lyapunov exponent vanishes there.
Classical Kotani theory gives analytic reducibility of the
Schr\"odinger cocycle to rotations on a smaller interval. On the
other hand, the dual finite-range cocycle is $PH_2$. The simplicity theorem of
Section~5 replaces the usual Wronskian argument: every square-summable
solution of a $PH_2$ operator lies in its two-dimensional center, and
symplectic orthogonality then forces the point spectrum to be simple.
In Section~6 we combine this  with almost localized dual
states to obtain a version of Puig's argument valid for every
irrational frequency, thus ruling out the above reduction  and
yielding a contradiction. This treats the subcritical and critical
regimes.

{\bf Trigonometric polynomial potentials: positive Lyapunov
exponent.}

In the supercritical regime the original Schr\"odinger cocycle cannot
be reducible to rotations. We therefore use Aubry duality and work
with the finite-range dual operator. Quantitative global theory gives
a simple zero center exponent, while
Theorem 4.1 gives both the $PH_2$ splitting and $\omega^{d-1}(\alpha,L_{E,v})=0.$ Our generalized Kotani theory of Section~7 then gives analytic
reducibility of the two-dimensional center to a scalar phase times
rotation on any hypothetical spectral interval. The higher-dimensional
all-frequency Puig argument rules this out and again gives a
contradiction.

{\bf General analytic potentials.}

For a general analytic potential, the dual operator is infinite-range and
has no finite-dimensional transfer cocycle. We approximate
$v$ by its trigonometric polynomial truncations. Their dual cocycles are
$PH_2$, and their two-dimensional symplectic centers converge to a
limiting center dynamics. The general construction of  the limiting intrinsic symplectic structure 
 is developed in \cite{gj}; its holomorphic parameter-dependent
version needed for the Kotani extension is presented in 
Section~9. The  specialized auxiliary facts from \cite{gj} needed here are proved in Appendix~\ref{app:center-facts}.

Convergence of the center structures alone is not sufficient for the infinite-range
Puig argument. The exponentially decaying solutions obtained through
Aubry duality induce approximate solutions for the finite-range
truncations. We prove a key fact that their stable and unstable components are
exponentially small, uniformly in the truncation dimension, while their
center projections remain quantitatively nondegenerate and form a basis
of the two-dimensional center. This provides the infinite-range substitute
for the finite-range fact that square-summable solutions lie in the
center, and allows the symplectic simplicity argument to survive the
truncation limit.

Other than the facts that are considered classics, the principal external input is the multiplicative Jensen formula and
quantitative global theory developed in \cite{gjyz}. The crucial for us
fact that dual cocycles of type I operators are $PH2$ is essentially
 contained in \cite{gjyz}, but we also give a proof of this statement here, for completeness. Apart from this
input, the proof is self-contained. The broader intrinsic symplectic
and rotation theory motivating the limiting center construction is
developed in \cite{gj}; however all specialized two-dimensional facts from
that work needed here are proved in Appendix~\ref{app:center-facts}, for the reader's convenience.
 

\section{Preliminaries}\label{pre}

Let  $F$ be a bounded analytic (possibly matrix valued) function defined on $ \{ \theta |  | \Im \theta |< h \}$,
$|F| _h= \sup_{ | \Im \theta |< h } \| F(\theta)\| $. $C^\omega_{h}(\T,*)$ denotes the
set of all these $*$-valued functions ($*$ will usually denote $\R$, $SL(2,\R)$, $\C$, $gl(2,\C)$, $Sp(2.\C)$). Denote by $C^{\omega}(\T,*)$  the union $\cup_{h>0}C_h^{\omega}(\T,*)$. 

Similarly,   let $G$ be a bounded analytic (possibly matrix valued) function defined on $ \{z:\Re z\in I, |\Im z|\leq r\}\times \{ \theta :  | \Im \theta |< h \}$ where $I$ is an interval on the real line. We define
$$
|G| _{r,h}= \sup_{\Re z\in I,|\Im z|\leq r, | \Im \theta |< h } \| G(z,\theta)\|.
$$ 
$C^\omega_{r,h}(I\times \T,*)$ denotes the
set of all these $*$-valued functions ($*$ will usually denote $\R$, $SL(2,\R)$, $\C$, $gl(2,\C)$, $Sp(2.\C)$). Denote by $C^{\omega}(I\times \T,*)$  the union $\cup_{r>0,h>0}C_h^{\omega}(I\times \T,*)$.

\subsection{Continued fraction expansion} Let $\alpha\in (0,1)\backslash\Q$, $a_0:=0$ and $\alpha_0:=\alpha$. Inductively, for $k\geq 1$, we define
$$
a_k:=[\alpha_{k-1}^{-1}], \  \ \alpha_k=\alpha_{k-1}^{-1}-a_k.
$$
Let $p_0:=0$, $p_1:=1$, $q_0:=1$, $q_1:=a_1$. Again inductively, set
$p_k:=a_kp_{k-1}+p_{k-2}$, $q_k:=a_kq_{k-1}+q_{k-2}$. Then $q_n$ are
the  denominators of the best rational approximamts of $\alpha,$ since
we have $\|k\alpha\|_{\R/\Z}\geq \|q_{n-1}\alpha\|_{\R/\Z}$ for all
$k$ satisfying $\forall 1\leq k< q_n$. We also have 
$$
\frac{1}{2q_{n+1}}\leq \|q_n\alpha\|_{\R/\Z}\leq \frac{1}{q_{n+1}}.
$$

\subsection{Cocycles and Lyapunov exponents}\label{3.2}
Let ${\rm M}(m,\C)$ be the set of all $m\times m$ matrices, $T:\Omega\rightarrow \Omega$ be a minimal homeomorphisim and $(\Omega,T,\mu)$ be ergodic. Given $A \in C^0(\Omega,{\rm M}(m,\C))$, we define the complex minimal cocycle $(T,A)$ by:
$$
(T,A)\colon \left\{
\begin{array}{rcl}
\Omega \times \C^{m} &\to& \Omega \times \C^{m}\\[1mm]
(\omega,v) &\mapsto& (T\omega,A(\omega)\cdot v)
\end{array}
\right.  .
$$
The iterates of $(T,A)$ are of the form $(T,A)^n=(T^n,A_n)$, where
$$
A_n(\omega):=
\left\{\begin{array}{l l}
A(T^{n-1}\omega) \cdots A(T\omega) A(\omega),  & n\geq 0\\[1mm]
A^{-1}(T^n\omega) A^{-1}(T^{n+1}\omega) \cdots A^{-1}(T^{-1}\omega), & n <0
\end{array}\right.    .
$$
We denote by $L_1(T, A)\geq L_2(T,A)\geq...\geq L_m(T,A)$ the Lyapunov exponents of $(T,A)$ repeatedly according to their multiplicities, i.e.,
$$
L_k(T,A)=\lim\limits_{n\rightarrow\infty}\frac{1}{n}\int_{\Omega}\ln\sigma_k(A_n(\omega))d\mu,
$$
where  $\sigma_1(A_n)\geq...\geq \sigma_m(A_n)$ denote the singular values (eigenvalues of $\sqrt{A_n^*A_n}$). Since the k-th exterior product $\Lambda^kB$ of any $B\in M(m,\C)$ satisfies $\sigma_1(\Lambda^kB)=\|\Lambda^kB\|$, $L^k(T, A)=\sum\limits_{j=1}^kL_j(T,A)$ satisfies

$$
L^k(T,A)=\lim\limits_{n\rightarrow \infty}\frac{1}{n}\int_{\Omega}\ln\|\Lambda^kA_n(\omega)\|d\mu.
$$
\begin{Remark}
{\rm For $A\in C^0(\Omega,Sp(2d,\C)),$ where $Sp(2d,\C)$ is the set of $2d\times 2d$ complex symplectic matrices,  the Lyapunov exponents of $(T,A)$ come in pairs $\{\pm L_i(T,A)\}_{i=1}^d$.}
\end{Remark}

An important for us example is the minimal finite-range cocycle $(T,L_E^f)$ with
\begin{align}\small\label{1111}
L_{E}^{f}(\omega)=\frac{1}{a_d}
\begin{pmatrix}
-a_{d-1}&\cdots&-a_1&E-f(\omega)-a_0&-a_{-1}&\cdots&-a_{-d+1}&-a_{-d}\\
a_d& \\
& &  \\
& & & \\
\\
\\
& & &\ddots&\\
\\
\\
& & & & \\
& & & & & \\
& & & & & &a_{d}&
\end{pmatrix}.
\end{align}
Let
\begin{equation}\label{sdef1}
C=\begin{pmatrix}
a_d&\cdots&a_1\\
0&\ddots&\vdots\\
0&0&a_d
\end{pmatrix},\ \ S=\begin{pmatrix}0&-C^*\\
C&0\end{pmatrix}
\end{equation}
One can check that $L^f_E(\omega)$ is complex symplectic with respect
to $S,$ that is 
$$
(L^f_E(\omega))^*S L^f_E(\omega)=S,
$$ 
when $E\in\R$. We will denote its non-negative Lyapunov exponents by $L^f_i(E)=L_i(T,L_{E}^{f})$ for $1\leq i\leq d$ for short.

\subsection{Uniform hyperbolicity and dominated splitting}\label{3.3}
For  $A\in C^0(\Omega,Sp(2d,\C))$,  we say the  cocycle $(T, A)$ is {\it uniformly hyperbolic} if for every $\omega \in \Omega$, there exists a continuous splitting $\C^{2d}=E^s(\omega)\oplus E^u(\omega)$ such that for some constants $C>0,c>0$, and for every $n\geqslant 0$,
$$
\begin{aligned}
\lvert A_n(\omega)v\rvert \leqslant Ce^{-cn}\lvert v\rvert, \quad & v\in E^s(\omega),\\
\lvert A_n(\omega)^{-1}v\rvert \leqslant Ce^{-cn}\lvert v\rvert,  \quad & v\in E^u(T^n\omega).
\end{aligned}
$$
This splitting is invariant by the dynamics, which means that for every $\omega \in \Omega$,
$$
A(\omega)E^{\ast}(\omega)=E^{\ast}(T\omega),
$$
for $\ast=s,u$.   The set of uniformly hyperbolic cocycles is open in the $C^0$-topology.

For complex minimal cocycle $(T,A)\in C^0(\Omega,{\rm M}(m,\C))$, a
  related property is called {\it dominated splitting}. Recall that  Oseledets theorem provides us with  strictly decreasing
sequence of Lyapunov exponents $L_j(T,A) \in [-\infty,\infty)$ of multiplicity $m_j\in\N$, $1\leq j \leq \ell$ with $\sum_{j}m_j=m$, and for $\mu$ a.e. $\omega$, there exists
a measurable invariant decomposition $$\C^m=E_\omega^1\oplus E_\omega^2\oplus\cdots\oplus E_\omega^\ell$$ with $\dim E_\omega^j=m_j$ for $1\leq j\leq \ell$ such that $$
\lim\limits_{n\rightarrow\infty}\frac{1}{n}\ln\|A_n(\omega)v\|=L_j(T,A),\ \  \forall v\in E_\omega^j\backslash\{0\}.
$$
An invariant decomposition $\C^m=E_\omega^1\oplus
E_\omega^2\oplus\cdots\oplus E_\omega^\ell$  is  {\it dominated} if there exists $n$ such that  for any unit vector $v_j\in E_\omega^j\backslash \{0\}$, we have $$\|A_n(\omega)v_j\|>\|A_n(\omega)v_{j+1}\|.$$
Recall that Oseledets decomposition is a priori only measurable, however if an invariant decomposition $\C^m=E_\omega^1\oplus E_\omega^2\oplus\cdots\oplus E_\omega^\ell$  is  {\it dominated}, then $E_\omega^j$ depends  continuously on $\omega$ \cite{bdv}.

We also recall that  $(T,A)$     is called $k$-dominated (for some $1\leq k\leq m-1$) if there exists a dominated decomposition $\C^m=E^+_\omega \oplus E_\omega^- $    with $\dim E^+_\omega = k.$  It follows from the definitions that the Oseledets splitting is dominated if and only if $(T,A)$ is $k$-dominated for each $k$ such that  $L_k(T,A)> L_{k+1}(T,A)$.

\subsection{Global theory of one-frequency quasiperiodic cocycles} \label{regge}For
$\Omega=\T:=\R/\Z,$  $T:x\rightarrow x+\alpha,$ where
$\alpha\in\R\backslash\Q$, and $A:\T\to {\rm M}(m,\C),$ we call $(\alpha,A)$  a one-frequency quasiperiodic cocycle.
Global  theory of analytic one-frequency quasiperiodic cocycles was first developed for $SL(2,\C)$-cocycles \cite{avila0}, and
later  generalized to any ${\rm M}(m,\C)$-cocycles \cite{ajs}.  The key  concept for the global theory is the acceleration.  If $A\in C^{\omega}(\T,{\rm M}(m,\C))$ admits a holomorphic extension to $|\Im z|<\delta$, then for $|\e|<\delta$ we can define $A_\e\in C^{\omega}(\T,M(m,\C))$ by $A_\e(x)=A(x+i\e)$.
The accelerations of $(\alpha,A)$  are defined as
$$
\omega^k(\alpha,A)=\lim\limits_{\e\rightarrow 0^+}\frac{1}{2\pi\e}(L^k(\alpha,A_\e)-L^k(\alpha,A)), \qquad  \omega_k(\alpha,A)= \omega^k(\alpha,A)-\omega^{k-1}(\alpha,A).
$$
The key ingredient of the global theory is that the acceleration  is quantized.

\begin{Theorem}[\cite{avila0,ajs}]\label{ace}
There exists $1\leq l\leq m$, $l\in\N$, such that $l\omega^k$ and $l \omega_k$ are integers. In particular, if $A\in C^\omega(\T, SL(2,\C))$, then $\omega^1(\alpha,A)$ is an integer.
\end{Theorem}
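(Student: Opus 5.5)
The plan follows Avila's original route for $SL(2,\C)$ and then passes to $M(m,\C)$ through exterior powers, as in \cite{ajs}. We take as given the continuity of the Lyapunov exponent in $(\alpha,A)$ at irrational $\alpha$ for analytic cocycles (Bourgain--Jitomirskaya, Jitomirskaya--Koslover--Schulteis), and Avila's characterization of uniform hyperbolicity as regularity with $L>0$.

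\textbf{Step 1: convexity.} For $A\in C^\omega(\T,SL(2,\C))$, the function $z\mapsto \frac1n\ln\|A_n(z)\|$ is subharmonic and $1$-periodic in $\Re z$. Averaging over $x$ makes $\e\mapsto \frac1n\int\ln\|A_n(x+i\e)\|dx$ convex. Since the Lyapunov exponent is an infimum (by subadditivity), the limit $\e\mapsto L(\alpha,A_\e)$ is convex and continuous. Hence the one-sided derivatives exist and $\omega^1$ is well defined.

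\textbf{Step 2: rational frequencies.} For $\alpha=p/q$ we have $L(p/q,A_\e)=\frac1q\int\ln\rho(A_q(x+i\e))\,dx$, where $\rho$ is the spectral radius. The eigenvalues of $A_q(z)$ are roots of $\lambda^2-\mathrm{tr}A_q(z)\lambda+1$ with $\mathrm{tr}A_q$ analytic and periodic, so $\lambda(z)$ is analytic on at most a double cover. A Jensen-type argument then shows that $\e\mapsto\int\ln\rho$ is piecewise affine with slopes in $\pi\Z$ (up to the usual $2\pi$ normalization). Consequently the rational slopes lie in $\frac{1}{2q}\Z$, and they are bounded uniformly in $q$ by the growth of $\ln\|A\|$ on the strip.

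\textbf{Step 3: irrational frequencies, with regular and non-regular $\e$ separated.}
\begin{itemize}
\item If $L(\alpha,A_\e)$ is affine near $\e_0$ and $L>0$ there, Avila's theorem gives that $(\alpha,A_{\e_0})$ is uniformly hyperbolic. The unstable direction $u(z)$ is then analytic in the strip, and $L(\alpha,A_\e)=\int\ln\bigl|\langle A(z)u(z),u(z+\alpha)\rangle\bigr|$-type averages. The slope in $\e$ is therefore $2\pi$ times a winding number, hence an integer.
\item If $L(\alpha,A_\e)=0$ near $\e_0$, then the slope there is $0$.
\end{itemize}
What remains is to show that the convex function has locally finitely many breakpoints, so that every right derivative coincides with a slope on some regular interval $(\e,\e+\delta)$.

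This is the step I expect to be the main obstacle. The first attempt would be to approximate by the convergents $p_n/q_n$: the limit of convex functions is convex, and a bounded slope set combined with upper semicontinuity of the one-sided derivatives, together with the quantization from the regular case, should force each one-sided derivative to be an integer. Once every slope is an integer, boundedness forces finitely many breakpoints on compact sets.

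\textbf{Step 4: $M(m,\C)$.} Apply the same argument to $\Lambda^kA$ and to $L^k$, after reducing to the invertible case by multiplying by a scalar analytic function; this adds an integer-slope term by Jensen. Where $L_k>L_{k+1}$ is regular, $\Lambda^kA$ has a dominated splitting and the top exponent of $\Lambda^k$ is simple, so the Step 3 argument, now with a winding of an analytic section of the Grassmannian, gives $\omega^k\in\Z$. When $L_k=L_{k+1}$, group the block of equal exponents $L_{j+1}=\dots=L_{j+l}$. The quantities $\omega^j$ and $\omega^{j+l}$ are integers by the dominated case, and the block's total acceleration is distributed equally among its $l$ members, so $l\,\omega_k\in\Z$ with $l\le m$. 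For $SL(2,\C)$ this reduces to $\omega^1\in\Z$.
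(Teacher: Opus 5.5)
The proposal does not contain the mechanism that forces the slopes to be integers, and Step 2, which is meant to supply it, is false.

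\textbf{Step 2 fails.} For rational $p/q$, the map $\e\mapsto\int_\T\ln\rho(A_q(x+i\e))\,dx$ is in general \emph{not} piecewise affine. Locally $\ln\rho(A_q)=|\Re w|$ with $2\cosh w=\mathrm{tr}A_q$. Its Riesz mass lies on the curves $\{\mathrm{tr}A_q\in[-2,2]\}$, and these curves can cross a whole interval of horizontal lines transversally. Already for $q=1$, take $A(z)=\begin{pmatrix} ic\sin 2\pi z&-1\\ 1&0\end{pmatrix}$ with small $c>0$. Then $\mathrm{tr}A(i\e)=-c\sinh 2\pi\e\in(-2,2)$, and likewise on $\Re z=\frac12$, for all $|\e|$ below a positive bound. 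There $\ln\rho$ has a transversal kink, so the average is strictly convex in $\e$.

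Even your stated conclusion, normalized slopes in $\frac{1}{2q}\Z$, would be useless in the limit, because these lattices become dense as $q\to\infty$.

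\textbf{The missing idea.} Since $A_q(z+\frac pq)=A(z)A_q(z)A(z)^{-1}$, the trace $t_q=\mathrm{tr}A_q$ is $\frac1q$-periodic, so $t_q(z)=\sum_k c_k e^{2\pi iqkz}$. Use the uniform upper bound $\sup_x\frac1q\ln\|A_q(x+i\e')\|\le L(\alpha,A_{\e'})+\delta$, valid for $p/q$ near $\alpha$ and uniformly in $\e'$. Together with Cauchy's estimate it gives $|c_k|\le 2e^{q(L(\alpha,A_{\e'})+\delta)+2\pi qk\e'}$ for every $\e'$. Combine this with $L(p/q,A_\e)\le\frac1q\ln\bigl(1+\sup_x|t_q(x+i\e)|\bigr)$ and $L(p/q,A_\e)\to L(\alpha,A_\e)$. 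This yields
\[
L(\alpha,A_\e)\le\max_{k\in\Z}\,\inf_{\e'}\Bigl(L(\alpha,A_{\e'})+2\pi k(\e'-\e)\Bigr).
\]
The reverse inequality is trivial (take $\e'=\e$). Hence $L(\alpha,A_\cdot)$ is locally the maximum of finitely many affine functions with slopes in $2\pi\Z$, at every $\e$ at once. This is, in essence, how rational approximation is exploited in \cite{avila0}, and it needs no split into regular and non-regular $\e$.

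\textbf{Step 3 leaves the theorem open.} The obstacle you flag, ruling out strictly convex pieces and accumulating corners, is exactly what has to be proved. Your sketch cannot do it: quantization on regular intervals plus semicontinuity of one-sided derivatives says nothing on an interval where $L(\alpha,A_\cdot)$ might be strictly convex or have accumulating corners, since there are then no regular intervals nearby to compare with. Moreover, the characterization you import (regularity with $L>0$ implies uniform hyperbolicity) is proved in \cite{avila0} with the help of quantization: one isolates the Fourier mode of $t_q$ indexed by the acceleration. Using it as an input is therefore circular.

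\textbf{Step 4 rests on a false claim.} A block of equal exponents need not share its acceleration equally. For $A(z)=\mathrm{diag}(e^{-2\pi iz},1)$ one has $L_1=L_2=0$ at $\e=0$, while $L_1(\e)=2\pi\e$ and $L_2(\e)=0$ for $\e>0$. So $\omega_1=1$ and $\omega_2=0$, not $\frac12,\frac12$, and Step 4 does not establish $l\omega_k\in\Z$ as written.
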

\begin{Remark}\label{rem1}
{\rm If $L_j(\alpha,A)>L_{j+1}(\alpha,A)$, then $\omega^j(\alpha,A)$
  is an integer, as follows from the proof of Theorem 1.4 in \cite{ajs}, see also footnote 17 in \cite{ajs}.}
\end{Remark}

By subharmonicity, we know $L^k(\alpha,A(\cdot+i\e))$ is a convex function of $\e $  in a neighborhood of $0$,  unless it is identically equal to $-\infty$.
We say that $(\alpha,A)$ is {\it $k$-regular} if $\e\rightarrow L^k(\alpha,A(\cdot+i\e))$ is an affine function of $\e$ in a neighborhood of $0$.  In general, one can relate regularity and dominated splitting as follows.

\begin{Theorem}[\cite{avila0,ajs}]\label{t2.1}
Let $\alpha\in\R\backslash\Q$ and $A\in C^\omega(\T,M(m,\C))$. If $1\leq j\leq m-1$ is such that $L_j(\alpha,A)>L_{j+1}(\alpha,A)$, then $(\alpha,A)$ is $j$-regular if and only if $(\alpha,A)$ is $j$-dominated. In particular, if  $A\in C^\omega(\T, SL(2,\C))$ with $L(\alpha,A)>0$, then $(\alpha,A)$ is $1$-regular (or regular) if and only if $(\alpha,A)$ is uniformly hyperbolic.
\end{Theorem}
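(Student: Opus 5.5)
We first reduce to $j=1$ by passing to exterior powers. For $B\in{\rm M}(m,\C)$ the singular values of $\Lambda^jB$ are the products $\sigma_{i_1}(B)\cdots\sigma_{i_j}(B)$, so $L_1(\alpha,\Lambda^jA)=L^j(\alpha,A)$ and $L_2(\alpha,\Lambda^jA)=L^{j-1}(\alpha,A)+L_{j+1}(\alpha,A)$. The hypothesis $L_j>L_{j+1}$ is therefore exactly a gap between the first two exponents of $\Lambda^jA$. Moreover, $j$-regularity of $(\alpha,A)$ is by definition $1$-regularity of $(\alpha,\Lambda^jA)$. A standard Grassmannian argument gives that $(\alpha,A)$ is $j$-dominated iff $(\alpha,\Lambda^jA)$ is $1$-dominated: the dominating $j$-plane corresponds to the dominating line in $\Lambda^j\C^m$, and conversely a dominating line that is a decomposable vector gives back the $j$-plane. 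Hence it suffices to prove, for $A\in C^\omega(\T,{\rm M}(m,\C))$ with $L_1>L_2$, that $1$-regularity is equivalent to $1$-domination. The $SL(2,\C)$ statement is the special case $m=2$, where $1$-domination with $L>0$ is uniform hyperbolicity.

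\emph{Domination $\Rightarrow$ regularity.} Domination is an open condition in $C^0$, so $(\alpha,A_\e)$ stays $1$-dominated for $|\e|$ small. Its top line $E^u(x+i\e)$ depends continuously on $\e$. It also depends holomorphically on $z=x+i\e$, because it is the uniform limit of the holomorphic graph-transform iterates $A_n(z-n\alpha)\cdot\ell$ of a fixed line $\ell$ in the invariant cone field. Over the thin annulus this line bundle is holomorphically trivial, so we can choose a nonvanishing holomorphic section $s(z)$; replacing $s$ by $e^{2\pi ikz}s$ takes care of the winding. Then $A(z)s(z)=\lambda(z)s(z+\alpha)$ with $\lambda$ holomorphic and nonvanishing. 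Vectors in the dominating line grow at the top rate, so
$$L_1(\alpha,A_\e)=\int_\T\ln|\lambda(x+i\e)|\,dx .$$
This is the mean over $x$ of a harmonic function that is $1$-periodic in $x$, hence affine in $\e$. This gives $1$-regularity.

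\emph{Regularity $\Rightarrow$ domination.} This is the main obstacle. The plan follows Avila's approximation scheme. Take $p_n/q_n\to\alpha$. For the periodic cocycles $(p_n/q_n,A)$ the top exponent at height $\e$ is $\frac1{q_n}\int_\T\ln\rho(A_{q_n}(x+i\e))\,dx$, where $\rho$ is the spectral radius. Using continuity of the Lyapunov exponents in $(\alpha,A)$ in the analytic category, together with convexity in $\e$, affinity of $\e\mapsto L_1(\alpha,A_\e)$ near $0$ should pass to the approximants. Specifically, the Riesz mass of the subharmonic function $\frac1{q_n}\ln\rho(A_{q_n}(z))$, and likewise for the gap function $\ln\rho-\ln\rho_2$ built from exterior powers, in a fixed strip $|\Im z|<\delta$ should tend to $0$. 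Combined with $L_1>L_2$, this should force $A_{q_n}(x)$, for all real $x$, to have a dominant eigenvalue with a uniform exponential gap $e^{cq_n}$ and uniformly transverse eigendirections.

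From these eigendirections we would build an invariant cone field for the rational cocycles with constants uniform in $n$. Passing to the limit $n\to\infty$ gives a continuous invariant cone field for $(\alpha,A)$, which yields $1$-domination.

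The delicate step is turning ``small Riesz mass'' into a pointwise spectral gap for every $x$, rather than only on a large-measure set. For this we would argue by contradiction. Suppose the gap fails at some $x_n$. Quantization of the acceleration (Theorem~\ref{ace} and Remark~\ref{rem1}) forces the slope of $\e\mapsto\frac1{q_n}\int\ln\rho$ to jump by at least $1/(mq_n)$ for each such failure. Summing these jumps contradicts the approximate affinity uniformly near $\e=0$.
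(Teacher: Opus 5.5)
The paper does not prove this statement; it imports it from \cite{avila0} and \cite{ajs}. Your reduction to $j=1$ through $\Lambda^jA$ is the standard argument and is fine up to routine details. So is your proof that domination implies regularity: the dominating line is holomorphic, $L_1(\alpha,A_\e)=\int_\T\ln|\lambda(x+i\e)|\,dx$, and this is affine in $\e$.

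The converse, which is the real content of the theorem, is not proved. The step meant to give a pointwise spectral gap for the period maps $A_{q_n}(x)$ rests on a mechanism that does not exist.
\begin{itemize}
\item Theorem~\ref{ace} and Remark~\ref{rem1} concern the acceleration at an \emph{irrational} frequency. At a rational frequency $p/q$, the convex function $\e\mapsto\frac1q\int_\T\ln\rho(A_q(x+i\e))\,dx$ need not have quantized slopes. For example, take frequency $0$, $v=2\lambda\cos2\pi x$ with $0<\lambda<1$, and $E=0$. Then the right slope at $\e=0$ of $\e\mapsto\int_\T\ln\rho(S_E^v(x+i\e))\,dx$ is $4\arcsin\lambda\notin2\pi\Z$.
\item The Riesz measure of $\frac1{q_n}\ln\rho(A_{q_n})$ lives on the collision set $\{|\lambda_1(z)|=|\lambda_2(z)|\}$ and is in general spread along it. Nothing forces a jump of $1/(mq_n)$ ``per failure.''
\item If the gap at $x_n$ is merely non-uniform, say $|\lambda_1/\lambda_2|=e^{\sqrt{q_n}}$, then $\ln\rho=\ln|\lambda_1|$ is harmonic near $x_n$. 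That failure contributes no Riesz mass at all.
\end{itemize}
So the proposed contradiction has no content. What is missing is a quantitative argument showing that, for large $n$, the collision set misses a fixed strip. It would have to use the $1/q_n$-periodicity of $z\mapsto\rho(A_{q_n}(z))$, subharmonicity, and the convergence of the line averages. Only then is $\frac1{q_n}\ln|\lambda_1/\lambda_2|$ a positive harmonic function on that strip, so that a Harnack-type estimate can turn the average gap into a pointwise one.

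Two further claims are also unsupported.
\begin{itemize}
\item For $m\ge3$, the line averages of the gap function $\frac1{q_n}\ln(\rho/\rho_2)$ converge to $L_1(\alpha,A_\e)-L_2(\alpha,A_\e)=2L_1(\alpha,A_\e)-L^2(\alpha,A_\e)$. Under $1$-regularity alone this need not be affine, since $L^2$ may have a corner at $0$. So its Riesz mass does not tend to $0$; only its positive part is small. Your ``likewise for the gap function'' is therefore false.
\item An eigenvalue gap plus transversality of the eigendirections of $A_{q_n}(x)$ controls growth only over multiples of the period $q_n\to\infty$. But $1$-domination of $(\alpha,A)$ needs a fixed time $N$, independent of $n$, with $\|A_N(x)u\|>\|A_N(x)s\|$. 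Passing to the limit also needs equicontinuity in $x$ of the invariant bundles. You would need bounds on the invariant line and hyperplane fields of the approximants that are uniform in $n$ on a fixed complex strip, and the sketch does not provide them.
\end{itemize}
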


\subsection{One-frequency quasiperiodic $SL(2,\R)$-cocycles: rotation
  number and IDS}\label{secrot}
For  one-frequency quasiperiodic $SL(2,\R)$-cocycle $(\alpha,A)$ with $A \in C^0(\T, {\rm SL}(2, \R))$,
assume that $A$ is homotopic to the identity. Then $(\alpha, A)$ induces the projective skew-product $F_A\colon \T \times \mathbb{S}^1 \to \T \times \mathbb{S}^1$
$$
F_A(x,w):=\left(x+\a,\, \frac{A(x) \cdot w}{|A(x) \cdot w|}\right),
$$
which is also homotopic to the identity. Lift $F_A$ to a map $\widetilde{F}_A\colon \T \times \R \to \T \times \R$ of the form $\widetilde{F}_A(x,y)=(x+\alpha,y+\psi_x(y))$, where for every $x \in \T$, $\psi_x$ is $\Z$-periodic.
Map $\psi\colon\T \times \R  \to \R$ is called a {\it lift} of $A$. Let $\mu$ be any probability measure on $\T \times \mathbb{S}^1$ which is invariant by $F_A$, and whose projection on the first coordinate is given by Lebesgue measure.
The number
$$
\rho(\alpha,A):=\int_{\T \times \mathbb{S}^1} \psi_x(y)\ d\mu(x,y) \ {\rm mod} \ \Z
$$
 depends  neither on the lift $\psi$ nor on the measure $\mu$, and is called the \textit{fibered rotation number} of $(\alpha,A)$ (see \cite{H,johonson and moser} for more details).

Given $\theta\in\T$, let $
R_\theta:=
\begin{pmatrix}
\cos2 \pi\theta & -\sin2\pi\theta\\
\sin2\pi\theta & \cos2\pi\theta
\end{pmatrix}$.
If $A\colon \T\to{\rm PSL}(2,\R)$ is homotopic to $x \mapsto R_{nx/2}$ for some $n\in\Z$,
then we call $n$ the {\it degree} of $A$ and denote it by $\deg A$.
The fibered rotation number is invariant under real conjugacies which are homotopic to the identity. More generally, if $(\alpha,A_1)$ is conjugated to $(\alpha, A_2)$, i.e., $B(x+\alpha)^{-1}A_1(x)B(x)=A_2(x)$, for some $B \colon \T\to{\rm PSL}(2,\R)$ with $\deg{B}=n$, then
\begin{equation}\label{rotation number}
\rho(\alpha, A_1)= \rho(\alpha, A_2)+ \frac{n\alpha}{2}.
\end{equation}

In particular, for quasiperiodic Schr\"odinger cocycle
$(\alpha,S_E^v)$ where $S_E^v$ is given by \eqref{S},
we denote the rotation number $\rho(E):=\rho(\alpha,S_E^v)$.

The  {\it integrated density of states} (IDS)
$N_{v,\alpha}:\R\rightarrow [0,1]$ of $H_{v,\alpha,x}$ is defined by
$$
N_{v,\alpha}(E):=\int_{\T}\mu_{v,\alpha,x}(-\infty,E]dx,
$$
where $\mu_{v,\alpha,x}$ is the spectral measure of $H_{v,\alpha,x}$
and vector $\delta_0.$

It is well known that $\rho(E)\in[0,\frac{1}{2}]$ is related to the integrated density of states $N=N_{v,\alpha}$ as follows:
\begin{equation}\label{relation}
N(E)=1-2\rho(E).
\end{equation}

\subsection{Aubry duality}\label{aubry}

Consider the fiber direct integral,
$$
\mathcal{H}:=\int_{\T}^{\bigoplus}\ell^2(\Z)dx,
$$
which, as usual, is defined as the space of $\ell^2(\Z)$-valued, $L^2$-functions over the measure space $(\T,dx)$.  The extensions of the
Sch\"odinger operators  and their long-range duals to  $\mathcal{H}$ are given in terms of their direct integrals, which we now define.
Let $\alpha\in\T$ be fixed. Interpreting $H_{v,\alpha,x}$ as fibers of the decomposable operator,
$$
H_{v,\alpha}:=\int_{\T}^{\bigoplus}H_{v,\alpha,x}dx,
$$
the family $\{H_{v,\alpha,x}\}_{x\in\T}$ naturally induces an operator on the space $\mathcal{H}$, i.e.,
$$
(H_{v,\alpha} \Psi)(x,n)= \Psi(x,n+1)+ \Psi(x,n-1) +  v(x+n\alpha) \Psi(x,n).
$$

Similarly,  the direct integral of long-range operators
$L_{v,\alpha,\theta}$ given by \eqref{fi},
denoted as $L_{v,\alpha}$, is given by
$$
(L_{v,\alpha}  \Psi)(\theta,n)=  \sum\limits_{k\in\Z} \hat{v}_k \Psi(\theta,n+k)+  2\cos2\pi (\theta+n\alpha) \Psi(\theta,n),
$$
where $\hat{v}_k$ is the $k$-th Fourier coefficient of $v(x)$.

Let  $U$ be the following operator on $\mathcal{H}:$
\begin{equation}\label{dual map}
(U\phi)(\eta,m)=\sum_{n\in\Z}\int_{\T}e^{2\pi imx}e^{2\pi in(m\alpha+\eta)}\phi(x,n)dx.
\end{equation}
Then direct computations show that $U$ is unitary and satisfies
$$U H_{v,\alpha} U^{-1}=L_{v,\alpha}.$$ $U$ represents the so-called
{\it Aubry duality} transformation.
The quasiperiodic  long-range operator  $L_{v,\alpha,\theta}$ is called the dual operator of $H_{v,\alpha,x}$ \cite{gjls}.
More generally, let
$$
v(x)=\sum_{k\in\Z}\hat{v}_ke^{2\pi ikx},\ \ w(x)=\sum_{k\in\Z} \hat{w}_k e^{2\pi i kx}
$$
be two 1-periodic real valued functions.  We define a quasiperiodic long-range operator on $\ell^2(\Z)$ by
\begin{equation}\label{lvw}
(L^w_{v,\alpha,x}u)_n=\sum\limits_{k\in\Z} \hat{v}_ku_{n+k}+w(x+n\alpha)u_n, \ \ n\in\Z,
\end{equation}
so that we have $L_{v,\alpha,x}=L^{2\cos}_{v,\alpha,x}.$

As above, we can interpret $L^w_{v,\alpha,x}$ as fibers of the decomposable operator,
$$
L^w_{v,\alpha}:=\int_{\T}^{\bigoplus}L^w_{v,\alpha,x}dx,
$$
thus $L^w_{v,\alpha}$ acts on the space $\mathcal{H}$, by
$$
(L^w_{v,\alpha} \Psi)(x,n)= \sum\limits_{k\in\Z} \hat{v}_k\Psi(x,n+k) +  w(x+n\alpha) \Psi(x,n).
$$

With the Fourier convention in \eqref{dual map}, direct computation gives \cite{haro}

\[
U L^w_{v,\alpha} U^{-1}=L^{v(-\cdot)}_{w,\alpha}.
\]

\subsection{Quantitative global theory}

\begin{Theorem}[Multiplicative Jensen formula \cite{gjyz}]
\label{thm:multiplicative-Jensen}
Let $\alpha\in\mathbb R\backslash\mathbb Q$ and
$v\in C_h^\omega(\mathbb T,\mathbb R)$. There exist nonnegative
dual Lyapunov exponents
\[
0\leq\gamma_1(E)\leq\cdots\leq\gamma_m(E)
\]
obtained as limits of the Lyapunov exponents of the dual finite-range
truncations, such that
\[
L_\varepsilon(E)
=
L_0(E)
-
\sum_{\gamma_j(E)<2\pi|\varepsilon|}\gamma_j(E)
+
2\pi
\#\{j:\gamma_j(E)<2\pi|\varepsilon|\}
|\varepsilon|
\]
for $|\varepsilon|<h$.
\end{Theorem}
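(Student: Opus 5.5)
The plan is to reduce to trigonometric polynomial potentials and to read the formula as the statement
\[
L_\varepsilon(E)=L_0(E)+\sum_j \bigl(2\pi|\varepsilon|-\gamma_j(E)\bigr)_+ .
\]
In this form it says that $\varepsilon\mapsto L_\varepsilon(E)$ is even, convex and piecewise affine, with a break of slope $2\pi$ at each $\varepsilon=\gamma_j/2\pi$. Evenness, convexity and integer slopes of $L_\varepsilon$ are already known from Avila's global theory \cite{avila0}. It therefore suffices to identify the turning points and their multiplicities with the dual exponents, and then integrate the slopes from $\varepsilon=0$.

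\textbf{Step 1 (trigonometric polynomial $v$ of degree $d$).} I would start from the Dirichlet determinants $P_n(x)=\det(E-H_{v,\alpha,x}|_{[0,n)})$. Standard arguments give
\[
\tfrac1n\int_\T\ln|P_n(x+i\varepsilon)|\,dx\to L_\varepsilon(E).
\]
In the variable $z=e^{2\pi i x}$, $P_n$ is a Laurent polynomial of degree at most $nd$. Jensen's formula then expresses the $\varepsilon$-derivative of the averaged log as $2\pi$ times a signed count: the number of zeros of $P_n$ in the annulus $\{e^{-2\pi\varepsilon}<|z|<e^{2\pi\varepsilon}\}$, up to a normalization from the leading coefficients. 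Next I would use Aubry duality (Section~\ref{aubry}) at the level of Bloch waves. A zero $z$ with $|z|=e^{2\pi t}$ corresponds to a formal solution of the finite-range dual equation $L_{v,\alpha,\theta}u=Eu$ that grows or decays at rate $2\pi t$ in $n$. Hence, after normalizing by $n$, the zero counts converge to $\#\{j:\gamma_j(E)<2\pi|\varepsilon|\}$, where $\pm\gamma_j$ are the Lyapunov exponents of the $2d$-dimensional dual symplectic cocycle $(\alpha,L_{E,v})$. This gives the right-derivative
\[
\partial_\varepsilon^+L_\varepsilon=2\pi\#\{j:\gamma_j<2\pi\varepsilon\}
\]
for all $\varepsilon$ outside the finite set $\{\gamma_j/2\pi\}$. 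Integrating from $0$ and using continuity of $L_\varepsilon$ in $\varepsilon$ yields the formula.

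A cleaner route for the identification step, which I would try first, avoids zero counting. Apply the duality $UL^w_{v,\alpha}U^{-1}=L^{v(-\cdot)}_{w,\alpha}$ to the complexified phase. Shifting $x\mapsto x+i\varepsilon$ in the Schrödinger cocycle conjugates the dual transfer matrix by $\operatorname{diag}(e^{2\pi k\varepsilon})_k$, which shifts all dual exponents by $2\pi\varepsilon$. The Schrödinger exponent is then obtained, via the identity $L^d(\alpha,L_{E,v})=\sum_j\gamma_j = $ (Herman--Jensen constant) $+L_0$, by recomputing the top exterior power after the shift. Only exponents below $2\pi\varepsilon$ change sign, which produces exactly the terms $(2\pi\varepsilon-\gamma_j)_+$.

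\textbf{Step 2 (general analytic $v\in C^\omega_h$).} Let $v_m$ be the Fourier truncations of $v$. I would define $\gamma_j(E)$ as limits, along a subsequence, of the dual exponents of $v_m$, using that they are bounded by $2\pi h$ up to $o(1)$ and form a monotone family. Then I would pass to the limit in the Step 1 identity for $v_m$, using the joint continuity of $L_\varepsilon$ in the analytic potential for fixed irrational $\alpha$ (Bourgain--Jitomirskaya) for each $|\varepsilon|<h$. The exponents with $\gamma_j\ge 2\pi h$ never enter, so only finitely many $j$ matter in the strip. Since the right-hand side is uniformly Lipschitz in the $\gamma_j$, the limit identity holds.

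\textbf{Main obstacle.} The hard step is the exact identification in Step 1 that the $\varepsilon$-slopes of $L_\varepsilon$ equal $2\pi$ times the number of dual exponents below $2\pi\varepsilon$, with correct multiplicities. This must hold uniformly in $n$, with no loss at exponents equal to $2\pi\varepsilon$, and with the constant term matching $L_0$. Concretely, this means controlling the leading coefficients in Jensen's formula, or equivalently tracking $L^d$ of the dual cocycle under the diagonal conjugation. It is the quantitative heart of \cite{gjyz}, and I would take that computation as the main work.
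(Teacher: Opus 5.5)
The paper does not prove this statement itself. It imports it from \cite{gjyz} (restated as Theorem~\ref{1general}, with $\gamma_i=\lim_n\gamma_i^n$), so your Step~2 reduction to trigonometric polynomials matches how the result is organized there.

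\textbf{The gap is Step~1, which is the entire content of the theorem, and the ``cleaner route'' you would try first is wrong.} For $\varepsilon>0$ the dual of $H_{v(\cdot+i\varepsilon),\alpha,x}$ has hopping coefficients $\hat v_k e^{-2\pi k\varepsilon}$, leading coefficient $\hat v_d e^{-2\pi d\varepsilon}$, and exponents $\pm\gamma_j+2\pi\varepsilon$. Every $\gamma_j+2\pi\varepsilon$ dominates every $-\gamma_i+2\pi\varepsilon$, so the top exterior power gives $L^d=\sum_j\gamma_j+2\pi d\varepsilon$ for every $\varepsilon$. Transplanting the $\varepsilon=0$ identity $L_0=\ln|\hat v_d|+L^d$ to the complexified pair then returns $L_\varepsilon=\ln|\hat v_d|-2\pi d\varepsilon+\sum_j\gamma_j+2\pi d\varepsilon=L_0$ for all $\varepsilon$. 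This already fails for the subcritical almost Mathieu operator, where $L_0(E)=0$, $\gamma_1=-\ln\lambda$, and $L_\varepsilon(E)=(2\pi|\varepsilon|+\ln\lambda)_+$.

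The sign changes you mention only matter for a different identity,
\[
L_\varepsilon(E)=\ln\bigl|\hat v_d e^{-2\pi d\varepsilon}\bigr|+\sum_{j=1}^{d}\Bigl[(\gamma_j+2\pi\varepsilon)_++(2\pi\varepsilon-\gamma_j)_+\Bigr]\qquad(\varepsilon>0).
\]
This says $L_\varepsilon$ equals the leading-coefficient term plus the sum of \emph{all} positive exponents of the complexified dual, which is just the theorem restated. It does not follow from the case $\varepsilon=0$. There the identity comes from the Thouless formula together with equality of the densities of states under duality. For the non-self-adjoint complexified pair, Aubry duality does not intertwine finite-volume truncations: the dual truncations are diagonally similar to the self-adjoint ones, but the complexified Schr\"odinger truncations have a different eigenvalue distribution.

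\textbf{Your route (a) has the same defect in another form.} A zero of $P_n$ at a complex phase is an eigenvalue of a finite non-self-adjoint Dirichlet matrix. Its eigenvector is not a Bloch wave $e^{2\pi ik\theta}g(x+k\alpha)$ on all of $\Z$, so Aubry duality attaches to it no formal solution of $L_{v,\alpha,\theta}u=Eu$ with a prescribed growth rate. Moreover, the functions $\varepsilon\mapsto\frac1n\int_{\T}\ln|P_n(x+i\varepsilon)|\,dx$ are convex and converge to $L_\varepsilon(E)$. Hence convergence of the normalized zero counts in the annuli to the dual-exponent count is equivalent to the formula itself, and this route is circular as written.

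What is missing is a genuinely non-self-adjoint mechanism tying the slopes of the complexified Schr\"odinger cocycle to the position of the dual exponents relative to $2\pi\varepsilon$. This is precisely the input the paper takes from \cite{gjyz}. Its flavor is visible in the proof of Theorem~\ref{theorem-main1general}: uniform hyperbolicity at non-turning heights, invertibility of $H_{v(\cdot\pm i\varepsilon),\alpha,x}-E$, and Aubry duality applied to the resulting Green's functions.

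\textbf{Step~2 is essentially sound once Step~1 is available, but it is misstated.} The truncation exponents are neither bounded by $2\pi h+o(1)$ nor monotone in $m$. Their sum $L^{v_m}_0-\ln|\hat v_{l(m)}|$ can grow faster than any multiple of $l(m)$, for example when $v$ is entire. What you actually need is a uniform bound on the number of exponents below $2\pi h'$ for each $h'<h$. This follows from Step~1 for $v_m$, convexity, and $L^{v_m}_\varepsilon\to L_\varepsilon$; for the latter use continuity for analytic $SL(2,\mathbb C)$ cocycles \cite{ajs}, rather than \cite{bj}. The limiting identity then identifies the exponents below $2\pi h$ as the break points of $L_\varepsilon$, so the whole sequence converges, not just a subsequence.
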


Consequently, the $T$-acceleration is the multiplicity of the
smallest dual Lyapunov exponent. In particular,
$
\overline\omega(E)=1
$ implies  $\gamma_1(E)$  is simple.

\section{Type I operators}\label{tio}
 In this section we introduce Type I cocycles and operators and prove
their local stability under analytic perturbations. In particular, for
each fixed irrational frequency, Type I operators form an open set in
admissible analytic topology, in the sense
made precise in Corollary~\ref{sgao}. We also discuss several natural
examples and establish a {\it frequency-independent} neighborhood result for
two popular families, including the almost Mathieu. Finally, for trigonometric polynomial
potentials, quantitative global theory \cite{gjyz} identifies the Type
I condition with simplicity of the smallest nonnegative dual Lyapunov
exponent. We further show that the dual cocycle is $PH_2$ and that
$
\omega^{d-1}(\alpha,L_{E,v})=0,
$
two facts that play a central role in the subsequent arguments.
\subsection{Type I cocycles}
Given $A \in C^\omega(\T,{\rm SL}(2,\C))$, one can extend it to the
band $\left\{z:|\Im z|<h\right\}.$ For $\alpha\in\R\backslash\Q$ and
$|\e|<h,$ we can define the Lyapunov exponent of complexified cocycle
$(\alpha,A_\e)$ just like we did for Schr\"odinger cocycles:
$$
L_\e(\alpha, A):=L(\alpha,A_\e)=\lim\limits_{n\rightarrow\infty}\frac{1}{n}\int_{\T}\ln\|A(x+i\e+(n-1)\alpha)\cdots A(x+i\e+\alpha)A(x+i\e)\|dx.
$$

T-acceleration can also be defined for general analytic cocycles in the
same way as for Schr\"odinger cocycles, that is
$$
\bar{\omega}(\alpha,A):=\lim\limits_{\e\rightarrow \e_1^+}\frac{L_\e(\alpha,A)-L_{\e_1}(\alpha,A)}{2\pi(\e-\e_1)}
$$
Here $\e_1$ is the first turning point encountered when one follows
the graph of $L_\e(\alpha,A)$ from its minimizing set toward increasing
$\e$; thus $\bar{\omega}(\alpha,A)$ is the normalized slope immediately
after that corner. If the infimum is approached only at the boundary
of the natural band of analyticity, or if no such corner occurs before
that boundary---for example, if the profile remains affine or constant
throughout the band---we set
$\bar{\omega}(\alpha,A)$ to be the normalized slope of that branch.

The natural band of analyticity is meant here. A smaller strip may fail to reveal the turning point; in that
case  that
strip is too narrow for the stability statement below. For {\it real
cocycles}, evenness implies that $0$ belongs to the minimizing set, so $\e_1$ is the first turning point at or after $0.$

When $\alpha$ is fixed through the argument, we will often write, for convenience,
$
L_\e(A):=L_\e(\alpha, A)$ and $\bar{\omega}(A):=\bar{\omega}(\alpha,A).$

An important fact is that  cocycles with T-acceleration one are stable under suitable  analytic perturbations.
\begin{Lemma}\label{con}
For $(\alpha,A)\in \R\backslash\Q\times C^\omega_h(\T,SL(2,\R))$ with $\bar{\omega}(\alpha,A)=1$. Then, for every
$h>\e_1(\alpha,A)$ such that $A\in C_h^\omega(\T,SL(2,\R))$,
the condition $\bar{\omega}=1$ is stable under sufficiently small
perturbations in
$(\R\backslash\Q)\times C_h^\omega(\T,SL(2,\R))$.
\end{Lemma}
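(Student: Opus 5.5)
The plan is to use three inputs: convexity and quantization of $\e\mapsto L_\e(\alpha,A)$, continuity of the Lyapunov exponent of analytic cocycles in $(\alpha,A)$ at irrational $\alpha$, and upper semicontinuity of the acceleration. The continuity is Bourgain--Jitomirskaya for $SL(2,\C)$, holding jointly in the frequency and in the cocycle on every fixed strip $|\Im x|=\e<h$.

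\textbf{Setting up.} For a real cocycle, $L_\e$ is even, convex, and piecewise affine with slopes in $2\pi\Z$ (Theorem~\ref{ace}). Let $\e_1$ be the first turning point. Then $L_\e=L_0$ on $[0,\e_1]$ and $L_\e=L_0+2\pi(\e-\e_1)$ on $[\e_1,\e_2]$ for some $\e_2>\e_1$, and we may take $\e_2<h$. Fix $\delta>0$ small. The key claim is that for $(\alpha',A')$ close to $(\alpha,A)$ in $(\R\backslash\Q)\times C^\omega_h(\T,SL(2,\R))$:
\begin{enumerate}
\item[(a)] the slope of $\e\mapsto L_\e(\alpha',A')$ on $(\e_1+\delta,\e_2-\delta)$ is still $2\pi$;
\item[(b)] the first turning point $\e_1'$ lies in $(\e_1-\delta,\e_1+\delta)$, and the slope is $0$ before it.
\end{enumerate}
Together these give $\bar\omega(\alpha',A')=1$.

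\textbf{Step (a).} Pick $\e\in(\e_1+\delta,\e_2-\delta)$. The acceleration of $(\alpha,A_\e)$ equals $1$, and $(\alpha,A_\e)$ is regular. I plan to use upper semicontinuity of the acceleration, which follows from convexity together with continuity of $L$ on the strips $\e\pm\eta$. For nearby $(\alpha',A')$, the one-sided slope at $\e^+$ is then at most $1$. In the other direction, continuity of $L_{\e}$ and $L_{\e-\eta}$ and the lower bound on their difference quotient force the slope at $\e^-$ to be at least $1-o(1)$. Integer quantization then gives exactly $1$. This step is the same argument Avila uses for openness of regularity, now applied at a complexified level.

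\textbf{Step (b).} Since the slopes are integers and $L_\e$ is convex, the slope at $0^+$ is either $0$ or at least $1$. If it were at least $1$, convexity together with the slope $1$ from (a) would force $L_\e(\alpha',A')$ to be affine with slope $1$ on all of $[0,\e_1+\delta]$. Then
\[
L_{\e_1+\delta}(\alpha',A')-L_0(\alpha',A')=2\pi(\e_1+\delta),
\]
whereas for $(\alpha,A)$ this difference is $2\pi\delta$. This contradicts continuity, provided $\e_1>0$ is bounded away from $2\delta$.

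If $\e_1=0$, the first corner sits at $0$. In that case the slope of $L_\e(\alpha',A')$ on $(\delta,\e_2-\delta)$ is $1$ by (a), and the first corner of the perturbed cocycle occurs in $[0,\delta]$, after flat slope $0$ by evenness. Either the slope jumps straight from $0$ to $1$, and then $\bar\omega=1$, or it passes through an intermediate integer slope, which is impossible: by convexity the slope is nondecreasing and integer-valued, so it can only go from $0$ to $1$ in that range. The remaining possibility, slope $1$ starting at $0^+$, is the case $\omega=\bar\omega=1$, which is also fine.

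The same integer-slope argument handles $\e_1>0$. On $[0,\e_1-\delta]$, continuity shows that $L_\e(\alpha',A')$ is nearly constant, hence exactly constant, since the alternative is a slope of at least $2\pi$ producing a drop of order $\e_1$. On $[\e_1-\delta,\e_1+\delta]$ the slope is nondecreasing from $0$ to $1$ through integers, so it has exactly one corner, where it jumps from $0$ to $1$.

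\textbf{Main obstacle.} The delicate point is joint continuity of $L_\e(\alpha',A')$ in $(\alpha',A')$ at irrational $\alpha$, uniformly for $\e$ in compact subsets of $(-h,h)$. This requires the complexified strip to lie strictly inside the analyticity band, which is exactly why the hypothesis $h>\e_1$ is needed: we need room up to $\e_2-\delta<h$. I will invoke the Bourgain--Jitomirskaya/Jitomirskaya--Koslover--Schulteis joint continuity for analytic $SL(2,\C)$ cocycles, applied to the shifted cocycles $A'(\cdot+i\e)$. Uniformity in $\e$ then follows from convexity and continuity at finitely many $\e$.
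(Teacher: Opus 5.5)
Your argument is correct and follows the paper's proof. Continuity of $L_\e$ at finitely many heights in the slope-one region beyond $\e_1$, combined with convexity and quantization, gives slope one there for the perturbed cocycle; evenness, convexity and integer slopes then force $\bar{\omega}(\alpha',A')=1$. Your Step (b) does more than the lemma needs: the paper skips both the case split on $\e_1$ and the localization of the new turning point near $\e_1$, because your own $\e_1=0$ observation (the slopes before the slope-one region lie in $\{0,1\}$ and jump only once) already works verbatim for every $\e_1$.
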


\begin{Remark}
{\rm This result is not true if one replaces  T-acceleration one by acceleration one.}
\end{Remark}
\begin{pf} Choose $\e_1<a_1<a_2<a_3<a_4<h$ in an interval where the
normalized slope of $L_\e(\alpha,A)$ is one.
By continuity \cite{bj,ajs}, 
convexity and quantization \cite{avila0,ajs} we obtain
normalized slope one on $[a_2,a_3]$.
Since $A'$ is real, its Lyapunov profile is even and convex,
so $\bar{\omega}=1.$ \end{pf}



\subsection{Type I operators}
Recall that analytic one-frequency quasiperiodic  Schr\"odinger
operators are given by \eqref{sch}, and
the corresponding Schr\"odinger cocycles are ($\alpha,S_E^v$)  where
$S_E^v(x)$ is given by \eqref{S}.
The ($x$-independent \cite{as}) spectrum of $H_{v,\alpha,x}$ is
denoted by $\Sigma_{v,\alpha}$. Recall that operators \eqref{sch} are
called Type I if $\bar{\omega}(E)=\bar{\omega}(S_E^v)=1$ for all $E\in \Sigma_{v,\alpha}$.

The property of being type I is stable in suitable analytic topology,
more precisely
\begin{Corollary}\label{sgao}
Let $\alpha\in\R\backslash\Q$ and
$v\in C_h^\omega(\T,\R)$, and suppose that
$\{H_{v,\alpha,x}\}_{x\in\T}$ is Type I. Assume moreover that
$
\sup_{E\in\Sigma_{v,\alpha}}
\e_1(\alpha,S_E^v)<h.
$
Then there is
$\delta_0=\delta_0(\alpha,v,h)>0$ such that, whenever
$v'\in C_h^\omega(\T,\R)$ and
$\alpha'\in\R\backslash\Q$ satisfy
$
\max\left\{
|v'-v|_h,\,
|\alpha'-\alpha|
\right\}<\delta_0,
$
the family
$\{H_{v',\alpha',x}\}_{x\in\T}$ is also Type I.
\end{Corollary}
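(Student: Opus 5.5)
The plan is to reduce the corollary to Lemma~\ref{con}, applied energy by energy, and then use compactness of the spectrum together with continuity of the spectrum in $(\alpha,v)$ to make the perturbation size uniform. Set $h_0:=\sup_{E\in\Sigma_{v,\alpha}}\e_1(\alpha,S_E^v)<h$ and fix $h_0<h'<h$.

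\textbf{Step 1 (a uniform affine strip).} For each $E\in\Sigma_{v,\alpha}$, convexity and quantization \cite{avila0,ajs} say that the normalized slope of $\e\mapsto L_\e(\alpha,S_E^v)$ equals $1$ just after $\e_1(E)\le h_0$. Since there are no further corners before the next turning point, one can choose a nondegenerate interval $[b_E,c_E]\subset(\e_1(E),h')$ on which $L_\e$ is affine with slope $2\pi$.

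\textbf{Step 2 (local perturbation at one energy).} For $E$ fixed, the argument of Lemma~\ref{con} applies to the Schr\"odinger cocycle and gives more than stated. By joint continuity of $(\alpha',A')\mapsto L_\e(\alpha',A')$ at irrational $\alpha$ \cite{bj,ajs}, applied to $A'=S_{E'}^{v'}$, there is a neighborhood $\mathcal N_E$ of $(E,\alpha,v)$ with the following property: for $(E',\alpha',v')\in\mathcal N_E$ with $\alpha'$ irrational, the profile $\e\mapsto L_\e(\alpha',S_{E'}^{v'})$ has normalized slope $1$ on a subinterval of $[b_E,c_E]$. Concretely, take a subinterval $[a_2,a_3]$ with a margin on both sides. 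Continuity at the four points $a_1<a_2<a_3<a_4$ forces the difference quotients to be close to $1$, and quantization then forces them to equal $1$.

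Evenness and convexity of the perturbed profile, together with integer slopes, then show that the first corner of the perturbed profile occurs at some $\e\le a_2$ and that the slope right after it is $1$. The slope before that corner is $0$, since the profile is even with minimum at $0$. The slope immediately after the first corner is an integer $\ge1$ and, by convexity, at most $1$, hence equal to $1$. So $\bar\omega(\alpha',S_{E'}^{v'})=1$.

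\textbf{Step 3 (compactness and spectral continuity).} Cover the compact set $\Sigma_{v,\alpha}$ by finitely many projections of the sets $\mathcal N_E$. This gives an open set $W\supset\Sigma_{v,\alpha}$ and a $\delta_1>0$ such that $\bar\omega(\alpha',S_{E'}^{v'})=1$ whenever $E'\in W$, $|v'-v|_h<\delta_1$ and $|\alpha'-\alpha|<\delta_1$. Finally, the spectrum is upper semicontinuous in Hausdorff sense under $(\alpha',v')\to(\alpha,v)$; this is standard via the continuity of the spectra of the direct-integral operators. So for $\delta_0\le\delta_1$ small enough, $\Sigma_{v',\alpha'}\subset W$, and every spectral energy of the perturbed operator is Type~I.

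The main obstacle is Step~2. Continuity of Lyapunov exponents holds only with irrational frequency at the base point, so Step~2 must be checked jointly in $(\alpha',E',v')$ and uniformly over $\e$ in a compact subinterval of the analyticity strip. This is exactly where the hypothesis $h>\sup\e_1$ enters: the intervals $[b_E,c_E]$ must lie strictly inside the strip $|\Im x|<h$ where $|v'-v|_h$ controls the cocycles.
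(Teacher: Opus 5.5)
Your proposal is correct and follows the paper's route: apply Lemma~\ref{con} at each $E\in\Sigma_{v,\alpha}$, with the perturbation in $E'$ absorbed into the cocycle $S_{E'}^{v'}$, then use compactness of $\Sigma_{v,\alpha}$. You also make explicit the inclusion $\Sigma_{v',\alpha'}\subset W$ for nearby $(\alpha',v')$, which the paper leaves inside its ``compactness argument''. This step is best justified by the openness of uniform hyperbolicity of $(\alpha,S_E^v)$ over the compact set $[-R,R]\setminus W$, rather than by continuity of the direct-integral operators, since those are not norm-continuous in $\alpha$.
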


\begin{pf}
For every $E\in\Sigma_{v,\alpha}$, Lemma~\ref{con}, applied to
$(\alpha,S_E^v)$, gives an interval $I_E$ containing $E$ and a
neighborhood of $(\alpha,v)$ such that
$$
\bar{\omega}(\alpha',S_{E'}^{v'})=1
$$
for every $E'\in I_E$ and every $(\alpha',v')$ in that neighborhood.
The result follows by compactness argument.
\end{pf}


We note, that, by the proof of Lemma \ref{con} and compactness, every Type I pair $(\alpha,v)$ admits at least one
strip satisfying the hypothesis of Corollary~\ref{sgao} below the
natural boundary of analyticity of $v$. 

The class of type I operators also contains several prominent models.
\begin{Example}\label{exa1}{\rm The almost Mathieu operator and its
    analytic perturbations, i.e., $v=2\lambda(\cos2\pi(x)+\delta
    f(x))$ where $|\delta|<\delta(\alpha,\lambda, \|f\|_h)$ and
    \begin{enumerate} 
    \item $h>0$ if $|\lambda|\geq 1$;
    \item $h>-\frac{\ln|\lambda|}{2\pi}$ if $|\lambda|<1$.
    \end{enumerate} }
\end{Example}
\begin{Remark}\begin{itemize}\item
{\rm One should be careful that when $|\lambda|<1$ and $0<h<-\frac{\ln|\lambda|}{2\pi}$, $H_{\alpha,v,x}$ is not necessarily type I even if $\delta$ is sufficiently small.
\item As we will see, for large $h,$ $\delta=\delta(\lambda, \|f\|_h),$ that is does not depend on $\alpha.$}
\end{itemize}
\end{Remark}
\begin{Example}\label{exa2}{\rm The supercritical GPS model in \cite{gps} and its
    analytic perturbations, i.e.,
    $v(x)=\frac{2a \cos2\pi(\theta)}{1-b\cos2\pi(\theta)}+\delta f(x)$
    with $b\in(-1,1)$, with $|\delta|<\delta(\alpha,a,b, \|f\|_h)$, restricted to $E\in \{E:b E>2(1-|a|)\}$.
      }
  \begin{Remark}
  {\rm It is well known that the GPS model has a mobility edge. One can check that it may also have subcritical and critical type I energies. To preserve all type I energies, the perturbations should be chosen to depend locally on $E$.}
  \end{Remark}
\end{Example}
\begin{Example}\label{exa3}{\rm The supercritical generalized Harper's
    model of \cite{hk,se} and its analytic perturbations, i.e.,
    $v(x)=2a\cos2\pi(x)+2b\cos4\pi(x)+\delta f(x)$ with $b\in(-1,1)$
    restricted to the positive Lyapunov exponent regime where
    $|\delta|<\delta(\alpha,a,b, \|f\|_h)$
    is sufficiently small.
  }
\end{Example}

 \begin{Example}\label{exa4}
 {\rm The analytic cosine type quasiperiodic operator, where $v=\lambda f$ and f is a 1-periodic real analytic function satisfying the cosine type condition introduced in \cite{sin} and $\lambda >\lambda_0$ for some $\lambda_0(v)>0$.}
 \end{Example}

The proofs for Examples \ref{exa1}, \ref{exa2}, \ref{exa3} follow from the fact  that the almost Mathieu operator,
supercritical GPS model and supercritical generalized Harper's model are of  type I
\cite{avila0,wwxyz,gjyz}. Moreover, as we will prove, Examples \ref{exa1} and \ref{exa4} are uniform Type I operators.

The analytic perturbations of almost Mathieu operators, as in Example 1, were called PAMO in \cite{gjz}. 
We have the following
\begin{Lemma}\label{lemma-AMO}
Let $\lambda\neq0$, and suppose that for some $0<\eta<h$,
\[
2|\lambda|\sinh(2\pi\eta)>4+2|\lambda|.
\]
Then there exists $r=r(\lambda,\eta)>0$ such that, if
$v\in C_h^\omega(\mathbb T,\mathbb R)$ and
\[
|v-2\lambda\cos 2\pi x|_h<r,
\]
then $H_{v,\alpha,x}$ is Type I for every
$\alpha\in\mathbb R\backslash\mathbb Q$.
\end{Lemma}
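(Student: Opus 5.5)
We will compute the complexified Lyapunov exponent $L_\e(E)$ directly at large imaginary part, and use convexity and quantization (Theorem~\ref{ace}) to locate the first turning point. Nothing in the argument will depend on $\alpha$. Throughout, $E\in\Sigma_{v,\alpha}$, so $|E|\le 2+\|v\|_{C^0}\le 2+2|\lambda|+r$.

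\textbf{Step 1: uniform hyperbolicity at height $\eta$.} For $|\Im x|=\e$ we have $|2\lambda\cos2\pi x|\ge 2|\lambda|\sinh(2\pi\e)$. By the hypothesis and continuity in $\e$, we can fix $\eta'<\eta<\eta''<h$ and $\delta>0$ such that
\[
2|\lambda|\sinh(2\pi\e)>4+2|\lambda|+2\delta \quad\text{for all } \e\in[\eta',\eta''].
\]
Choose $r<\delta/2$. Then for $\e\in[\eta',\eta'']$ and all $x\in\T$,
\[
|E-v(x+i\e)|\ \ge\ 2|\lambda|\sinh(2\pi\e)-r-|E|\ >\ 2+\delta .
\]
A standard cone argument applies to $S_E^v(\cdot+i\e)$. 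Take the cone $\{(u_1,u_2):|u_1|\ge |u_2|\}$. Since $|E-v|-1>1+\delta$, the cone is mapped strictly inside itself with expansion at least $1+\delta$, and the same holds for the inverse with the complementary cone. Hence the cocycle $(\alpha,S_E^v(\cdot+i\e))$ is uniformly hyperbolic, with constants depending only on $\delta$ and not on $\alpha$.

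By Theorem~\ref{t2.1}, applied to the shifted cocycle, $\e\mapsto L_\e(E)$ is affine on $(\eta',\eta'')$. The cone construction gives the unstable direction as $(1,m_\e(x))^T$, where $m_\e$ is analytic in $x$ with $|m_\e|\le 1$. Consequently
\[
L_\e(E)=\int_\T\ln\bigl|E-v(x+i\e)-m_\e(x)\bigr|\,dx .
\]

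\textbf{Step 2: the slope equals one.} The function $g(z)=E-v(z)-m(z)$ is holomorphic and nonvanishing on the annulus. Its $\e$-derivative of $\int\ln|g|$, divided by $2\pi$, equals minus the winding number of $x\mapsto g(x+i\e)$. On $\Im z=\e$ write
\[
g=-\lambda e^{2\pi\e}e^{-2\pi i x}+\text{error}, \qquad |\text{error}|\le |\lambda|e^{-2\pi\e}+r+|E|+1 .
\]
The main term has modulus $|\lambda|e^{2\pi\e}$, and
\[
|\lambda|e^{2\pi\e}-|\lambda|e^{-2\pi\e}=2|\lambda|\sinh(2\pi\e)>|E|+r+1 ,
\]
so the main term dominates the error. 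By Rouché the winding number is $-1$, hence the normalized slope of $L_\e(E)$ on $(\eta',\eta'')$ is exactly $1$. The main care is needed here: one must check that the sign and normalization convention for the winding matches the definition of $\bar\omega$, and that the bound $|m_\e|\le1$ is uniform in $\alpha$. Both follow from the $\alpha$-free cone estimate.

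\textbf{Step 3: identifying $\bar\omega$.} The cocycle is real, so $L_\e$ is even and convex in $\e$. By quantization its right slopes on $[0,\infty)$ are nonnegative integers and nondecreasing. Since the slope on $(\eta',\eta'')$ is $1$, every slope on $[0,\eta']$ is $0$ or $1$. Hence there is exactly one corner in $[0,\eta']$: either at $\e_1=0$ (when $\omega(E)=1$) or at the first point where the slope jumps from $0$ to $1$. In both cases $\e_1\le\eta'<h$ and the slope just after $\e_1$ is $1$, i.e.\ $\bar\omega(E)=1$.

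This holds for every $E\in\Sigma_{v,\alpha}$ and every irrational $\alpha$, with $r=r(\lambda,\eta)$. The choice of $\e_1$ also gives $\sup_E\e_1<h$, which is consistent with the admissibility condition of Corollary~\ref{sgao}.
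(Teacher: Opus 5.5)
Your proof is correct and follows the paper's argument. Both proofs get cone-field uniform hyperbolicity near height $\eta$ from $|E-v(\cdot+i\eta)|>2$, use Rouch\'e to get winding $-1$ and hence normalized slope one there, and then use evenness, convexity and quantization to force the first positive slope to be one. Your formula $L_\e(E)=\int_\T\ln|E-v(x+i\e)-m_\e(x)|\,dx$, built from the unstable section, simply spells out the ``argument principle'' step that the paper leaves implicit.
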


\begin{pf}
Choose $r>0$ sufficiently small so that
\[
2|\lambda|\sinh(2\pi\eta)>4+2|\lambda|+2r.
\]
Write
\[
v(x)=2\lambda\cos 2\pi x+w(x),\qquad |w|_h<r.
\]
For any $\alpha\in\mathbb R\backslash\mathbb Q$ and
$E\in\Sigma_{v,\alpha}$,
\[
|E|\le 2+|v|_0\le 2+2|\lambda|+r.
\]
Since
\[
E-v(x+i\eta)
=
-\lambda e^{2\pi\eta}e^{-2\pi ix}
+
E-\lambda e^{-2\pi\eta}e^{2\pi ix}-w(x+i\eta),
\]
we obtain
\[
\inf_{x\in\mathbb T}|E-v(x+i\eta)|
\ge
2|\lambda|\sinh(2\pi\eta)-2-2|\lambda|-2r>2.
\]
Moreover, by Rouch\'e's theorem,
\[
\operatorname{wind}\bigl(E-v(\mathbb T+i\eta),0\bigr)=-1.
\]
The conefield criterion therefore gives uniform hyperbolicity of
$(\alpha,S_E^v(\cdot+i\eta))$, and the argument principle gives
$
\omega(\alpha,S_E^v(\cdot+i\eta))=1.
$
Since $\varepsilon\mapsto L_\varepsilon(E)$ is even and convex, with
nonnegative integer slopes on $\varepsilon>0$, its first positive slope
must also be one. Hence
$
\overline\omega(\alpha,S_E^v)=1.
$
This holds for every $E\in\Sigma_{v,\alpha}$ and every irrational
$\alpha$, proving the result.
\end{pf}
This immediately implies
\begin{Corollary}\label{deigen_pamo}
For $\lambda\neq 0$ and $f\in
C_h^\omega(\T,\R)$, there is $h_0(\lambda)<\infty$ and for $h>h_0,$ $\delta_0(\lambda, \|f\|_h)>0$ such that for $f\in C^\omega_h(\T,\R), |\delta|\leq \delta_0$, operator $\left\{H^\delta_{\lambda,\alpha,x}\right\}_{x\in\T}$  is of type I.
\end{Corollary}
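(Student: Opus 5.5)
The plan is to reduce the statement directly to Lemma~\ref{lemma-AMO}. Write the perturbed potential as $v=2\lambda(\cos 2\pi x+\delta f(x))$, so that
\[
|v-2\lambda\cos2\pi x|_h = 2|\lambda||\delta|\,\|f\|_h .
\]
Everything then comes down to choosing $h_0(\lambda)$ so that the hyperbolicity inequality of Lemma~\ref{lemma-AMO} holds for some $\eta<h$. After that, $\delta_0$ is chosen so that the perturbation falls within the radius $r(\lambda,\eta)$ supplied by that lemma.

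First, define $h_0(\lambda)$ as the unique $\eta_0>0$ with
\[
2|\lambda|\sinh(2\pi\eta_0)=4+2|\lambda|.
\]
It exists because $\sinh$ is increasing from $0$ to $\infty$. Explicitly,
\[
h_0=\frac{1}{2\pi}\operatorname{arcsinh}\Big(\frac{2+|\lambda|}{|\lambda|}\Big)<\infty .
\]
For $h>h_0$, fix any $\eta\in(h_0,h)$, for instance $\eta=(h_0+h)/2$. Then
\[
2|\lambda|\sinh(2\pi\eta)>4+2|\lambda|,
\]
and $\eta<h$, so the hypotheses of Lemma~\ref{lemma-AMO} on $\eta$ are met and it provides a radius $r=r(\lambda,\eta)>0$. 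Since $\eta$ depends only on $\lambda$ and $h$, so does $r$.

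Next, set
\[
\delta_0=\frac{r}{4|\lambda|(\|f\|_h+1)} .
\]
This makes $\delta_0$ depend only on $\lambda$, $\|f\|_h$ (and the fixed $h$), not on $\alpha$. For $|\delta|\le\delta_0$ we get $|v-2\lambda\cos2\pi x|_h<r$, and $v\in C^\omega_h(\T,\R)$ because $f$ is. Lemma~\ref{lemma-AMO} then gives that $H^\delta_{\lambda,\alpha,x}$ is Type I for every irrational $\alpha$, which is even stronger than the statement, since it is uniform in $\alpha$.

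There is no real obstacle here; the only points needing care are bookkeeping. One is to ensure the strict inequality in Lemma~\ref{lemma-AMO}, which is why $\eta$ is chosen strictly between $h_0$ and $h$. The other is the normalization of the perturbation: whether the factor $2\lambda$ multiplies $\delta f$ or not only changes $\delta_0$ by the constant $2|\lambda|$.
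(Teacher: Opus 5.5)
Your proposal is correct and follows the paper's route: the paper states the corollary as an immediate consequence of Lemma~\ref{lemma-AMO}, and you apply it with $h_0=\frac{1}{2\pi}\operatorname{arcsinh}\bigl(\frac{2+|\lambda|}{|\lambda|}\bigr)$, some $\eta\in(h_0,h)$, and $\delta_0$ chosen so that $2|\lambda||\delta|\,\|f\|_h<r(\lambda,\eta)$. You also correctly note that the conclusion is uniform in $\alpha$, as the paper remarks after Example~\ref{exa1}.
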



To elaborate on The Example \ref{exa4},  the cosine type condition was introduced in \cite{sin}:
\begin{itemize}
\item $\frac{dv}{dx}=0$ at exactly two points, one is the minimum and the other is the maximum, which is denoted by $x_1$ and $x_2$.
\item These two extrema are non-degenerate, that is, $\frac{d^2v}{dx^2}(x_j)\neq0$ for $j=1,2.$
\end{itemize}
While for some results this condition requires only $f\in C^2(\T,\R),$ here we need stronger regularity
\begin{Proposition}
Let $\alpha\in\R\backslash\Q$ and $f\in C^\omega(\T,\R)$ be of cosine type. Then there is  $\lambda_0(f)$ such that if $\lambda>\lambda_0$, $H_{\lambda f,\alpha,x}$ is a Type I operator, in other words, $\lambda f$ is a uniform Type I potential.
\end{Proposition}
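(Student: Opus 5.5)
Our plan follows the proof of Lemma \ref{lemma-AMO}: exhibit a height $\eta$, independent of $\alpha$ and $E$, at which the complexified cocycle $(\alpha,S^{\lambda f}_E(\cdot+i\eta))$ is uniformly hyperbolic with acceleration exactly one, and then read off $\bar\omega(E)=1$ from evenness, convexity and quantization. Since $\Sigma_{\lambda f,\alpha}\subset[\lambda\min f-2,\lambda\max f+2]$, we write $E=\lambda e$ with $e$ in a fixed compact interval $K=[\min f-1,\max f+1]$ (for $\lambda\ge2$). Then $E-\lambda f(x+i\eta)=\lambda\,(e-f(x+i\eta))$, so it suffices to find a small fixed $\eta>0$ and $c>0$ such that
\[
\inf_{e\in K}\inf_{x\in\T}|e-f(x+i\eta)|\ge c\eta^2 ,\qquad \operatorname{wind}\bigl(e-f(\T+i\eta),0\bigr)=-1\ \ \text{for all } e\in K .
\]
Once this holds, any $\lambda>\lambda_0:=4/(c\eta^2)$ gives $\inf_x|E-\lambda f(x+i\eta)|>4>2$. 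The conefield criterion then yields uniform hyperbolicity for every irrational $\alpha$, and by the argument principle $\omega(\alpha,S_E(\cdot+i\eta))=-\operatorname{wind}=1$. Finally, $\varepsilon\mapsto L_\varepsilon(E)$ is even and convex with nonnegative integer slopes for $\varepsilon>0$, and its slope at $\eta$ equals $1$; hence the slope after the first turning point, which lies in $[0,\eta)$, is $1$. This gives $\bar\omega(E)=1$ for all $E\in\Sigma$ and all $\alpha$, that is, $\lambda f$ is of uniform Type I.

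The main step is the geometric claim about the curve $\gamma_\eta(x)=f(x+i\eta)$. Taylor expansion in $\eta$ gives
\[
\Re\gamma_\eta=f(x)+O(\eta^2),\qquad \Im\gamma_\eta=\eta f'(x)+O(\eta^3),
\]
uniformly in $x$ by analyticity. By the cosine-type condition, $f'$ vanishes only at $x_1$ (the minimum) and $x_2$ (the maximum), and $f$ is strictly monotone between them. Hence $\gamma_\eta$ is a thin closed loop: it travels along one side of the real axis from near $f(x_1)$ to near $f(x_2)$ and returns on the other side, crossing the real axis only near $x_1,x_2$. The crossings occur at real parts $f(x_j)+O(\eta^2)$; in fact they equal $f(x_j)\mp\tfrac12|f''(x_j)|\eta^2+O(\eta^3)$, which lies outside $[\min f,\max f]$ by the nondegeneracy $f''(x_j)\ne0$.

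This observation bounds the distance from every point $e\in K$. Away from neighborhoods of $x_1,x_2$ we have $|\Im\gamma_\eta|\gtrsim\eta$. Near $x_j$, a local quadratic model $f(x_j+z)\approx f(x_j)+\tfrac12f''(x_j)z^2$ gives $|e-\gamma_\eta|\gtrsim\eta^2$ for real $e$, with the worst case at $e=f(x_j)$. Points $e\in K$ outside $[\min f,\max f]$ at distance $\ge\eta^2$ (after shrinking $K$ suitably, or simply using the spectrum bound $|e-\text{range}|\le 2/\lambda$) are treated the same way. The winding number equals $\pm1$ for points enclosed by the loop. Its sign is determined by orientation: for $f=\cos$, $\cos(x+i\eta)$ traverses an ellipse clockwise, which matches the sign $-1$ in Lemma \ref{lemma-AMO}. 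Since the winding number is constant along the connected interval of enclosed $e$, a homotopy to the cosine case fixes it.

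The main obstacle is making the $\eta^2$ lower bound uniform near the critical points, including for $e$ slightly outside $[\min f,\max f]$. We take care of the spectrum margin $2/\lambda$ by choosing $\lambda_0$ so that $2/\lambda\ll\eta^2$, which ensures that all relevant $e$ are enclosed by the loop. Near the critical points we carry out the quadratic local analysis with explicit constants depending on $\min_j|f''(x_j)|$ and on $|f|_{h}$.
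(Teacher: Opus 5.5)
Your proof is correct, but it takes a different route from the paper's. The paper never complexifies at a positive height. It first cites Han--Marx's large-coupling theorem, via the fact that $f-t$ has at most two zeros in a fixed strip, to get $0\le\omega(\alpha,S_E^{\lambda f})\le1$ for all real $E$ and all irrational $\alpha$. It then uses Sorets--Spencer/Bourgain for $L(\alpha,S_E^{\lambda f})>0$. Finally, Theorem~\ref{t2.1} (regular with $L>0$ $\Leftrightarrow$ uniformly hyperbolic) rules out $\omega=0$ on the spectrum, so $\omega=1$ and $\bar\omega=\omega=1$.

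You instead run the mechanism of Lemma~\ref{lemma-AMO} at a small fixed height $\eta=\eta(f)$ with $\lambda\gtrsim\eta^{-2}$. You read off both the conefield bound and the winding number $-1$ from the geometry of the thin loop $f(\T+i\eta)$. Your route is self-contained and elementary, and gives an explicit $\lambda_0$ in terms of $\min_j|f''(x_j)|$ and a strip norm of $f$. It needs no positivity of the Lyapunov exponent, and it places the first turning point below $\eta$, which is exactly what an admissible strip in Corollary~\ref{sgao} requires. The paper's route yields the stronger conclusion $\omega(E)=1$ (all spectral energies supercritical), at the cost of two deep external inputs.

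One statement needs correcting. Your first display cannot hold on $K=[\min f-1,\max f+1]$. The loop meets the real axis at $f(x_j)\mp\tfrac12|f''(x_j)|\eta^2+O(\eta^3)\in K$, and for real $e$ beyond these crossings the winding number is $0$, not $-1$. For the same reason, your claim that points outside $[\min f,\max f]$ at distance $\ge\eta^2$ are ``treated the same way'' is false.

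What you need, and what your last paragraph actually uses, is the bound only for $e\in[\min f-2/\lambda,\max f+2/\lambda]$ with $2/\lambda\le c'\min_j|f''(x_j)|\eta^2$. The local analysis near $x_j$ then closes by splitting into two cases:
\begin{itemize}
\item for $|t|<\eta/2$, the real part of $f(x_j+t+i\eta)$ stays beyond the extremum by about $\tfrac38|f''(x_j)|\eta^2$;
\item for $|t|\ge\eta/2$, one has $|\Im f(x_j+t+i\eta)|\ge c\eta^2$.
\end{itemize}

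The sign of the winding number also needs no homotopy to the cosine. Since $\Im f(x+i\eta)=\eta f'(x)+O(\eta^3)$ and $f'>0$ on the arc from the minimum to the maximum, the loop is clockwise. It crosses the ray $[e,\infty)$ exactly once, near $x_2$, from above to below, which gives winding $-1$ directly.
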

\begin{pf}

Since $f$ is of cosine type, there exists $\rho>0$ such that,
for every $t\in\R$, the function $f-t$ has at most two
zeros in $\Im z\leq 2\rho$, counting multiplicities. Indeed, away from the two critical points this follows from the implicit function theorem, while near each nondegenerate extremum it follows from the Weierstrass preparation theorem; compactness then gives a common strip.
It then follows from \cite[Corollary~1.1]{hm} that, for
$\lambda\geq\lambda_0(f)$,
$
0\leq \omega(\alpha,S_E^{\lambda f})\leq1
$
for every $E\in\R$ and every irrational $\alpha$.

Increasing $\lambda_0(f)$ if necessary, the Sorets--Spencer
estimate \cite{ss,bourgainjanal} gives
$
L(\alpha,S_E^{\lambda f})>0
$
for every $E\in\R$. If $E\in\Sigma_{\lambda f,\alpha}$,
then $\omega(\alpha,S_E^{\lambda f})\neq0$: otherwise the cocycle
would be regular, thus
uniformly hyperbolic, contradicting
$E\in\Sigma_{\lambda f,\alpha}$. Therefore
$
\omega(\alpha,S_E^{\lambda f})=1$
for every  $ E\in\Sigma_{\lambda f,\alpha}.
$
Since the acceleration is positive,
$\overline\omega(\alpha,S_E^{\lambda f})
=\omega(\alpha,S_E^{\lambda f})=1$.
Thus $H_{\lambda f,\alpha,x}$ is Type I.

\end{pf}

\subsection{The duality characterization}\label{secdual}
Throughout this subsection, let
$$
v(x)=\sum\limits_{k=-d}^d \hat{v}_k e^{2\pi ikx}
$$
be a trigonometric polynomial of degree $d$. We will involve the Aubry
duality to study  type I operators \eqref{sch} with   trigonometric
polynomial potentials $v.$ The dual operator $L_{v,\alpha,\theta}$ is then defined by  the \eqref{fi} (see Section \ref{aubry} for details),
We denote the associated cocycle of the eigenequation $L_{v,\alpha,\theta}u=Eu$ by $(\alpha,L_{E,v})$. Its Lyapunov exponents are denoted by $\pm\gamma_{1}(E), \cdots, \pm\gamma_{d}(E)$. We assume that
$$
0\leq \gamma_{1}(E)\leq \gamma_{2}(E)\leq \cdots\leq \gamma_{d}(E).
$$

An important basis of our proof is the following duality characterization of  type I operators

\begin{Proposition}\label{simple le}
For $\alpha\in \R\backslash\Q$ and $E\in\R$, $\bar{\omega}(E)=1$ if and only if $\gamma_1(E)$ is simple.
\end{Proposition}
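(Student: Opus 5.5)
The plan is to read the statement off the multiplicative Jensen formula (Theorem~\ref{thm:multiplicative-Jensen}). For a trigonometric polynomial $v$ of degree $d$ the dual Lyapunov exponents $0\le\gamma_1(E)\le\cdots\le\gamma_d(E)$ of $(\alpha,L_{E,v})$ are exactly the $\gamma_j$ appearing there, with $m=d$. Since $v$ is entire, the formula holds for all $\e$. So for $\e\ge0$ it reads
\[
L_\e(E)=L_0(E)+\sum_{j=1}^d\bigl(2\pi\e-\gamma_j(E)\bigr)^+ .
\]
This is manifestly convex and piecewise affine. Its corners are precisely the distinct values $\e=\gamma_j(E)/(2\pi)$, and at each corner the normalized slope jumps by the multiplicity of the corresponding value.

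The steps, in order, are as follows. First, identify the first turning point: the slope on $[0,\gamma_1/2\pi)$ is $0$, so if $\gamma_1>0$ then $\e_1=\gamma_1/(2\pi)$. Second, compute the slope just after it. On $(\gamma_1/2\pi,\gamma_{k+1}/2\pi)$, where $k=\#\{j:\gamma_j=\gamma_1\}$, the normalized slope equals $k$. Hence $\bar\omega(E)=k$, and $\bar\omega(E)=1$ if and only if $k=1$, i.e. $\gamma_1$ is simple.

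Third, treat the case $\gamma_1(E)=0$. By evenness of $L_\e$, the point $\e=0$ is itself a corner when $k\ge1$ exponents vanish. Indeed, the right slope there is $k$ and the left slope is $-k$, so $\e_1=0$. The slope immediately after is $\#\{j:\gamma_j=0\}$, which is consistent with $\bar\omega=\omega$ when $\omega>0$, as in the remark after Definition~\ref{gege1}. Again $\bar\omega(E)=1$ if and only if $\gamma_1=0$ is simple.

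The degenerate case cannot arise here. Since $\hat v_d\ne0$, the dual cocycle has the entry $1/\hat v_d$, so all $d$ exponents $\gamma_j$ are finite. At least one corner therefore exists, and there is always a turning point, in agreement with the footnote citing \cite{gjyz}.

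The only point that needs care, and the likely main obstacle, is justifying that the $\gamma_j$ in Theorem~\ref{thm:multiplicative-Jensen} coincide, with multiplicity, with the Lyapunov exponents of the finite-range dual cocycle $(\alpha,L_{E,v})$. For trigonometric polynomials no truncation limit is needed, because the dual is already finite-range of width $d$. Here I would invoke the form of the formula in \cite{gjyz}: $L_\e(E)$ equals $\sum_j(2\pi|\e|-\gamma_j)^+$ plus a constant, obtained by duality from $L^d$ of the complexified dual cocycle. Once this identification is granted, the equivalence is the one-line slope count above.
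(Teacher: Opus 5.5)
Your proposal is correct and follows the paper's route. The paper's proof simply cites Theorem 1 of \cite{gjyz} (the multiplicative Jensen formula) for the fact that $\bar{\omega}(E)$ equals the multiplicity of $\gamma_1(E)$, and your slope count on $L_\e(E)=L_0(E)+\sum_j(2\pi\e-\gamma_j(E))^+$, including the $\gamma_1=0$ corner via evenness, makes that deduction explicit; the identification of the $\gamma_j$ with the exponents of $(\alpha,L_{E,v})$ that you flag is exactly what \cite{gjyz} supplies for trigonometric polynomials, where the truncations stabilize once $n\ge d$.
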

\begin{pf}
By Theorem 1 in \cite{gjyz},
$
\bar{\omega}(E)$ is equal to the multiplicity of $\gamma_1(E).$
Thus $\bar{\omega}(E)=1$ if and only if $\gamma_1(E)$ is simple.
\end{pf}
More importantly, we have
\begin{Theorem}\label{dominate1}
Let $v$ be a real trigonometric polynomial of degree $d$, $\alpha\in \R\backslash\Q$, and let
 $E$ be a Type I energy. Then
\[
\omega^{d-1}(\alpha,L_{E,v})=0,
\]
and $(\alpha,L_{E,v})$ is both $(d-1)$- and
$(d+1)$-dominated.

\end{Theorem}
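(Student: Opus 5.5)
The strategy is to read off both conclusions from the quantitative global theory of \cite{gjyz}, applied to the complexified dual cocycle, and then to convert regularity into domination via Theorem \ref{t2.1}. By Proposition \ref{simple le}, Type I means $\gamma_1(E)$ is simple, so $\gamma_1(E)<\gamma_2(E)$. In the indexing of Section \ref{3.2} the exponents of $(\alpha,L_{E,v})$ are $\gamma_d\ge\cdots\ge\gamma_2\ge\gamma_1\ge-\gamma_1\ge-\gamma_2\ge\cdots\ge-\gamma_d$. Thus $L_{d-1}=\gamma_2>\gamma_1=L_d$ and $L_{d+1}=-\gamma_1>-\gamma_2=L_{d+2}$; these are exactly the gaps needed at positions $d-1$ and $d+1$. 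By Theorem \ref{t2.1}, $(d-1)$-domination is equivalent to $(d-1)$-regularity, i.e. to $\e\mapsto L^{d-1}(\alpha,L_{E,v}(\cdot+i\e))$ being affine near $0$. By symplectic symmetry, $L^{d+1}=L^{d-1}$ identically in $\e$: the complexified cocycle stays symplectic for the holomorphically extended form, so its exponents remain paired, and hence $L^{d+1}$ equals $L^{d-1}+L_d+L_{d+1}=L^{d-1}$. It therefore suffices to show that $L^{d-1}(\e)$ is constant near $0$. This gives both dominations and $\omega^{d-1}=0$ at once.

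To compute $L^{d-1}(\e)$, I would use the dual side of \cite{gjyz}. Complexifying the dual cocycle in $\theta$ corresponds under Aubry duality to replacing $2\cos2\pi\theta$ by $2\cos2\pi(\theta+i\e)$. The quantitative global theory of \cite{gjyz} (the analogue of Theorem \ref{thm:multiplicative-Jensen} with the roles of $v$ and $\cos$ exchanged) expresses the Lyapunov exponents of $(\alpha,L_{E,v}(\cdot+i\e))$ piecewise linearly: $\gamma_j(\e)=\max(\gamma_j,2\pi|\e|)$ up to reordering, in analogy with how the Schr\"odinger side picks up slope $2\pi$ from each $\gamma_j<2\pi|\e|$. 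Since the slope of $w=2\cos$ is $1$, each exponent can move by at most $2\pi|\e|$. The only exponent with slope is one that sits below $2\pi|\e|$. For small $\e$ this can only be $\gamma_1$, and only when $\gamma_1=0$, the critical or supercritical rows of the table. The exponents $\gamma_2,\dots,\gamma_d$ lie strictly above $\gamma_1$. If some of them were near $0$, simplicity would fail; so for $|\e|<\gamma_2/(2\pi)$ they are unchanged. Hence $L^{d-1}(\e)=\gamma_d+\cdots+\gamma_2$ is constant for small $\e$.

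If the explicit dual formula from \cite{gjyz} is not available in that form, my fallback is softer. Continuity of Lyapunov exponents in $\e$ \cite{ajs} keeps the gap $\gamma_2(\e)>\gamma_1(\e)$ open for small $|\e|$, so $\omega^{d-1}$ is an integer by Remark \ref{rem1}. By convexity and evenness of $L^{d-1}(\e)$ (the real symmetry $\theta\mapsto-\theta$ for real $v$), $\omega^{d-1}\ge0$. For the upper bound, I would compare with total acceleration: $L^{d}(\e)=L^{d-1}(\e)+\gamma_1(\e)$, and the sum of all positive dual exponents has acceleration bounded by the degree of $w$, i.e. $1$, obtained by writing the determinant of $\Lambda^d$ and counting the $e^{2\pi|\e|}$ growth of $2\cos$. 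If $\gamma_1=0$, then $\gamma_1(\e)$ already uses up that slope $1$, forcing $\omega^{d-1}=0$. If $\gamma_1>0$, then $L^{d}$ is regular with $\omega^d\in\{0,1\}$; I would rule out $\omega^{d-1}=1$ by showing it would contradict $\bar\omega=1$ through the Jensen formula, since it would create a second dual exponent slope.

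The main obstacle is this upper bound $\omega^{d-1}\le0$, i.e. pinning down precisely how the complexified dual exponents move. I expect it to rest on the exact input from \cite{gjyz} rather than on soft arguments, so I would first try to quote that dual Jensen-type identity directly.
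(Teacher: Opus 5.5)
The conclusion of your argument is fine and matches the paper. Once $\e\mapsto L^{d-1}(\alpha,L_{E,v}(\cdot+i\e))$ is affine near $0$, Theorem~\ref{t2.1} gives $(d-1)$-domination and the symplectic structure gives $(d+1)$-domination. Integrality of $\omega^{d-1}$ follows from Remark~\ref{rem1}, since $\gamma_2>\gamma_1$.

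The gap is the upper bound $\omega^{d-1}<1$, and neither of your routes supplies it. Your main route rests on the formula $\gamma_j(\e)=\max(\gamma_j,2\pi|\e|)$ for the dual cocycle complexified in $\theta$. This is not what \cite{gjyz} provides: Theorem~\ref{thm:multiplicative-Jensen} concerns the Schr\"odinger cocycle complexified in $x$, and uses the dual exponents only at $\e=0$. The formula is also false. Take $v=2\lambda\cos2\pi x$, so $d=1$. The dual cocycle is then the almost Mathieu cocycle with coupling $\lambda^{-1}$ at energy $E/\lambda$. Its complexified exponent on the spectrum is $\max\{\gamma_1,-\ln|\lambda|+2\pi|\e|\}$. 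For $|\lambda|<1$ this equals $\gamma_1+2\pi|\e|$, although $\gamma_1>0$. For $|\lambda|>1$ it stays equal to $\gamma_1=0$ until $2\pi|\e|=\ln|\lambda|=L(E)$. So the behaviour you predict is inverted in both non-critical regimes.

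The same example refutes both cases of your fallback. In the supercritical regime, $\gamma_1(\e)$ is flat near $0$ and does not ``use up'' slope one. In the subcritical regime, $L^d$ has a corner at $\e=0$, so it is not regular. Finally, excluding $\omega^{d-1}=1$ ``through the Jensen formula'' has no mechanism, because that formula does not see accelerations of the dual cocycle in $\theta$.

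The missing idea is a sharper degree count. Counting degrees for the one-step matrix gives only $\omega^k\le 1$ for every $k$: the cosine sits in a single entry, so every minor is affine in it. That leaves $\omega^{d-1}\in\{0,1\}$.

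The paper instead passes to the $d$-step block $L_{d,E,v}$ of \eqref{strip2}. Its entries are Laurent polynomials of degree at most one in $e^{2\pi i\theta}$, even though it encodes $d$ steps. Hence the entries of $\Lambda^{d-1}(L_{d,E,v})_n$ have degree at most $n(d-1)$, which gives $|\omega^{d-1}(d\alpha,L_{d,E,v})|\le d-1$. Since $dL^{d-1}(\alpha,L_{E,v})=L^{d-1}(d\alpha,L_{d,E,v})$, this yields $|\omega^{d-1}(\alpha,L_{E,v})|\le (d-1)/d<1$. Integrality then forces $\omega^{d-1}=0$, with no need to know how the individual exponents move.

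A minor point: the sesquilinear form $S$ is not preserved at non-real phase, so your pairing argument needs a different justification. What survives is $L_{E,v}(\bar\theta)^*SL_{E,v}(\theta)=S$. Together with $|\det L_{E,v}|=1$, this gives $L^{d+1}(\e)=L^{d-1}(-\e)$, which suffices once $L^{d-1}$ is constant on both sides of $0$.
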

\begin{Remark}
The same result holds for complex energies $E$ without change of the proof.
\end{Remark}
\begin{pf}
We let
$$
C=\begin{pmatrix}
\hat{v}_d&\cdots&\hat{v}_1\\
0&\ddots&\vdots\\
0&0&\hat{v}_d
\end{pmatrix},\ \ B(\theta)=\begin{pmatrix}
2\cos2\pi(\theta_{d-1})+\hat{v}_0&\hat{v}_{-1}&\cdots&\hat{v}_{-d+1}\\
\hat{v}_1&\ddots&\ddots&\vdots\\
\vdots&\ddots&2\cos2\pi(\theta_1)+\hat{v}_0&\hat{v}_{-1}\\
\hat{v}_{d-1}&\cdots&\hat{v}_1&2\cos2\pi(\theta)+\hat{v}_0
\end{pmatrix}
$$
where $\theta_j=\theta+j\alpha$.  Then one can check that
\begin{equation}\label{strip2}
L_{E,v}(\theta+(d-1)\alpha)\cdots L_{E,v}(\theta)=:L_{d,E,v}(\theta)=\begin{pmatrix}
C^{-1}(EI-B(\theta))& -C^{-1}C^*\\
I_d&O_d
\end{pmatrix}
\end{equation}
where $I_d$ and $O_d$ are the $d$-dimensional identity and zero matrices, respectively.

Notice that \eqref{strip2} implies that we always have
$$
dL^{d-1}(\alpha,L_{E,v})=L^{d-1}(d\alpha,L_{d,E,v}).
$$
Thus by the definition of regularity, $(\alpha,L_{E,v})$ is
($d-1$)-regular if and only if  $(d\alpha,L_{d,E,v})$ is
($d-1$)-regular. Let $\left(\ell_{ij}\right)_{1\leq i,j\leq
  2d}:=(L_{d,E,v})_n(\theta)$. It is easy to check that each
$\ell_{ij}$ is a Laurent polynomial  with degree $\leq
n$. Similarly,  let $L_{ij}$ be the $ij$-th entry of
$\Lambda^{d-1}(L_{d,E,v})_n(\theta).$ By the definition of wedge
product, each $L_{ij}$ is a Laurent polynomial  of degree $\leq n(d-1)$. Hence
\begin{align*}
&|\omega^{d-1}(d\alpha,L_{d,E,v})|=\left|\lim\limits_{\e\rightarrow 0^+}\frac{1}{2\pi\e}\left(L^{d-1}(d\alpha,L_{d,E,v}(\cdot+i\e))-L^{d-1}(d\alpha,L_{d,E,v})\right)\right|\\
=&\frac{1}{2\pi}\left|\lim\limits_{n\rightarrow \infty}\frac{1}{n}\int_{\T}\ln\|\Lambda^{d-1}\left(L_{d,E,v}\right)_n(\theta+i\e)\|d\theta-\lim\limits_{n\rightarrow \infty}\frac{1}{n}\int_{\T}\ln\|\Lambda^{d-1}\left(L_{d,E,v}\right)_n(\theta)\|d\theta\right|\\
\leq& d-1.
\end{align*}
It follows that
$$
|\omega^{d-1}(\alpha,L_{E,v})|=\left|\frac{\omega^{d-1}(d\alpha,L_{d,E,v})}{d}\right|\leq \frac{d-1}{d}<1.
$$
By Proposition \ref{simple le}, $\gamma_{1}(E)<\gamma_{2}(E)$, thus by
Remark  \ref{rem1}, $\omega^{d-1}(\alpha,L_{E,v})$ is an
integer. Thus, since $|\omega^{d-1}(\alpha,L_{E,v})|$ is strictly smaller than $1$, we have $\omega^{d-1}(\alpha,L_{E,v})=0$.  This implies that
$$
L^{d-1}(\alpha,L_{E,v}(\cdot+i\e))=L^{d-1}(\alpha,L_{E,v})
$$
for sufficiently small $\e>0$. A similar  argument  works for  $\e<0$.
This means  $(\alpha,L_{E,v})$ is ($d-1$)-regular, hence, by Theorem
\ref{t2.1}, $(\alpha,L_{E,v})$ is ($d-1$)-dominated. Since $(\alpha,L_{E,v})$
is complex symplectic, we have $(\alpha,L_{E,v})$ is ($d+1$)-dominated.
\end{pf}
The next corollary follows directly from the definition of dominated splitting,
\begin{Corollary}\label{partial}
For $\alpha\in \R\backslash\Q$, and $E\in\R$ with $\bar{\omega}(E)=1$,
there exists a continuous invariant decomposition
$$
\C^{2d}=E_s(\theta)\oplus E_c(\theta)\oplus E_u(\theta).
$$
Moreover,  for any $\theta\in\T$, we have
\begin{equation}\label{eq10}
\limsup\limits_{n\rightarrow\infty}\frac{1}{n}\ln\|(L_{E,v})_n(\theta)v\|\leq -\gamma_2(E),\ \  \forall v\in E_s(\theta)\backslash\{0\},
\end{equation}
\begin{equation}\label{eq11}
\limsup\limits_{n\rightarrow\infty}\frac{1}{n}\ln\|(L_{E,v})_{-n}(\theta)v\|\leq  -\gamma_2(E),\ \  \forall v\in E_u(\theta)\backslash\{0\},
\end{equation}
\begin{equation}\label{eq11nea}
\limsup\limits_{n\rightarrow\infty}\frac{1}{|n|}\ln\|(L_{E,v})_n(\theta)v\|\leq  \gamma_1(E),\ \  \forall v\in E_c(\theta)\backslash\{0\},
\end{equation}
\begin{equation}\label{eq12}
{\rm dim} E^c(\theta)=2.
\end{equation}
\end{Corollary}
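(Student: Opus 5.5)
The plan is to extract all four assertions of Corollary~\ref{partial} from Theorem~\ref{dominate1}, Proposition~\ref{simple le} and the symplectic symmetry of the spectrum. The first step is to fix a Type I energy $E$. Theorem~\ref{dominate1} then gives that $(\alpha,L_{E,v})$ is both $(d-1)$- and $(d+1)$-dominated. So we have two continuous invariant dominated splittings
\[
\C^{2d}=E^+_{d-1}(\theta)\oplus E^-_{d+1}(\theta), \qquad \C^{2d}=E^+_{d+1}(\theta)\oplus E^-_{d-1}(\theta),
\]
where the superscript $+$ denotes the dominating bundle and the subscripts give dimensions. By continuity of dominated splittings \cite{bdv}, all these bundles depend continuously on $\theta$. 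We then set
\[
E_u=E^+_{d-1},\qquad E_s=E^-_{d-1},\qquad E_c=E^+_{d+1}\cap E^-_{d+1}.
\]
Dominated splittings are nested: the $(d-1)$-dimensional dominating bundle lies inside the $(d+1)$-dimensional one, and dually for the dominated bundles. Hence $E_c$ is invariant, continuous, and of dimension $(d+1)+(d+1)-2d=2$, and $\C^{2d}=E_s\oplus E_c\oplus E_u$. This proves \eqref{eq12}.

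The second step identifies the growth rates. The Lyapunov spectrum, in decreasing order, is $\gamma_d\ge\cdots\ge\gamma_2>\gamma_1\ge-\gamma_1>-\gamma_2\ge\cdots$. The strict inequality $\gamma_2>\gamma_1$ comes from Proposition~\ref{simple le}. By uniqueness of dominated splittings, the Oseledets subspaces for exponents $\le-\gamma_2$ coincide almost everywhere with $E_s$, those for exponents $\pm\gamma_1$ with $E_c$, and those for exponents $\ge\gamma_2$ with $E_u$.

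The third step, and the main obstacle, is to pass from almost-every-$\theta$ Oseledets rates to statements valid for \emph{every} $\theta$. Here I would use unique ergodicity of the irrational rotation together with continuity of the bundles. Consider the continuous functions
\[
\theta\mapsto\ln\|L_{E,v}(\theta)|_{E_s(\theta)}\|,\qquad \theta\mapsto \frac1n\ln\|\Lambda^2 (L_{E,v})_n(\theta)|_{E_c(\theta)}\|.
\]
Their integrals are governed by the exponents on each bundle, and subadditive uniform convergence for uniquely ergodic systems (the Furman-type upper bound) gives
\[
\limsup_{n\to\infty}\frac1n\sup_\theta\ln\|(L_{E,v})_n(\theta)|_{E_s}\|\le \text{top exponent on } E_s=-\gamma_2(E).
\]
This yields \eqref{eq10} for all $\theta$. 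Applying the same argument to the inverse cocycle restricted to $E_u$ gives \eqref{eq11}. For \eqref{eq11nea}, the top exponent of the restriction to $E_c$ is $\gamma_1$ for forward iterates, and likewise for backward iterates because $-\gamma_1$ is the bottom exponent. The uniform upper-semicontinuity bound then gives the stated limsup for every $\theta$ and both signs of $n$.

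The only care needed is that the restricted cocycles are continuous, since the bundles are continuous, so that the uniform subadditive theorem applies. Once that is checked, the corollary follows.
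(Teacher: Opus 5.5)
Your proposal is correct and follows the same route as the paper, which derives the corollary directly from the $(d-1)$- and $(d+1)$-domination in Theorem~\ref{dominate1} together with the definition of dominated splitting. Your nesting/intersection construction of $E_c$ and the uniform (Furman-type) subadditive bound over the uniquely ergodic rotation make explicit the dimension count and the every-$\theta$ growth rates that the paper leaves implicit; the $\Lambda^2$ quantity on $E_c$ is not needed, since the top exponent $\gamma_1$ of the restriction to $E_c$, forward and backward, already gives \eqref{eq11nea}.
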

\begin{pf}
It follows from Theorem \ref{dominate1} and the definition of dominated splitting.
\end{pf}
We therefore have
\begin{Corollary} \label{Iph2}
At every Type I energy, the finite-range dual cocycle is $PH_2$ and
satisfies
\[
\omega^{d-1}(\alpha,L_{E,v})=0.
\]
\end{Corollary}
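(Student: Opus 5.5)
This corollary is essentially a repackaging of Theorem \ref{dominate1} together with Proposition \ref{simple le} and Corollary \ref{partial}. I would unpack the definition of $PH_2$ given after Definition \ref{defph2} and verify its two conditions for the cocycle $(\alpha,L_{E,v})$ at an arbitrary Type I energy $E$.

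\textbf{Step 1 (the ordering condition).} Fix a Type I energy $E$, so $\bar\omega(E)=1$. By Proposition \ref{simple le}, $\gamma_1(E)$ is simple, i.e. $\gamma_1(E)<\gamma_2(E)$. In the convention of Definition \ref{defph2}, the nonnegative exponents are listed in decreasing order, $L_i^f(E)=\gamma_{d+1-i}(E)$. The inequality $\gamma_1(E)<\gamma_2(E)$ is therefore exactly $L^f_{d-1}(E)>L^f_d(E)$, which is the first condition of $PH_2$.

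\textbf{Step 2 (domination and acceleration).} Theorem \ref{dominate1} gives $\omega^{d-1}(\alpha,L_{E,v})=0$, which is the second assertion of the corollary. The same theorem gives $(d-1)$- and $(d+1)$-domination, which is the second condition of $PH_2$.

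\textbf{Step 3 (the center).} The resulting continuous splitting $E_s\oplus E_c\oplus E_u$ has $\dim E_c=2$ by Corollary \ref{partial}. This confirms that the center is two-dimensional. Since $E$ was an arbitrary Type I energy, the claim follows for every such energy.

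\textbf{Main obstacle.} There is none of substance, since all the work sits in Theorem \ref{dominate1}. The only points needing care are the index conventions ($\gamma$ increasing versus $L_i^f$ decreasing) and that $E$ is real, so the cocycle is symplectic with respect to $S$. This symplectic structure is what makes $(d-1)$-domination imply $(d+1)$-domination, and it is already used in the proof of Theorem \ref{dominate1}.
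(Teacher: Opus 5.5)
Your proposal is correct and follows the paper's route. The paper states the corollary as an immediate consequence of Theorem~\ref{dominate1} and Corollary~\ref{partial}. Theorem~\ref{dominate1} gives $\omega^{d-1}(\alpha,L_{E,v})=0$ together with $(d-1)$- and $(d+1)$-domination, using the simplicity of $\gamma_1(E)$ from Proposition~\ref{simple le}, and Corollary~\ref{partial} gives the two-dimensional center. Your index identification $L^f_i=\gamma_{d+1-i}$ correctly reproduces condition (1) of Definition~\ref{defph2}.
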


\section{Simplicity of point spectra of minimal $PH2$
    operators. Proof of Theorems \ref{tsimp} and \ref{gjy-puig}}\label{puig1}

In this section, we assume
$$
v(x)=\sum\limits_{k=-d}^d \hat{v}_k e^{2\pi ikx}, \ \ w(x)=\sum\limits_{k=-l}^l \hat{w}_k e^{2\pi i kx}.
$$

Let finite-range operator $L^w_{v,\alpha,x}$ be given by
\eqref{lvw}. We denote  the cocycle induced by the eigenequation
$L^w_{v,\alpha,x}u=Eu$ by $(\alpha,L_{E,v}^w),$ so
\begin{align*}
L_{E,v}^{w}(x)=\frac{1}{\hat{v}_{d}}
\begin{pmatrix}
-\hat{v}_{d-1}&\cdots&-\hat{v}_{1}&E-w(x)-\hat{v}_0&-\hat{v}_{-1}&\cdots&-\hat{v}_{-d+1}&-\hat{v}_{-d}\\
\hat{v}_{d}& \\
& &  \\
& & & \\
\\
\\
& & &\ddots&\\
\\
\\
& & & & \\
& & & & & \\
& & & & & &\hat{v}_{d}&
\end{pmatrix}
\end{align*}
with
$$
S_v=\begin{pmatrix}
0&-C_v^*\\
C_v&0
\end{pmatrix},\ \  C_v=\begin{pmatrix}
\hat{v}_d&\cdots&\hat{v}_1\\
&\ddots&\vdots\\
&&\hat{v}_d
\end{pmatrix}.
$$
We have, by the discussion in Section \ref{3.2}, that $L_{E,v}^w(x)$  is complex symplectic with respect
to $S_v$ for any $E\in\R$.


With $PH2$ property as in Definition \ref{defph2}, Theorem \ref{gjy-puig} follows directly from the following slightly
more general version

\begin{Theorem}\label{contra1}
For $\alpha\in\R\backslash\Q$, if there exist $H\in
C^\omega(\T,Sp_{2l\times 2}(\C))$ and $\phi_0\in\R$ such that
\begin{equation}\label{spe}
L^v_{E,w}(x)H(x)=e^{2\pi i\phi_0}H(x+\alpha),
\end{equation}
then the dual cocycle $(\alpha,L_{E,v}^w)$ is not $PH2.$
\end{Theorem}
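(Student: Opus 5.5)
The plan is to run Puig's duality argument in this setting, and to replace the Wronskian step by a symplectic-orthogonality argument inside the two-dimensional center. We argue by contradiction. Suppose \eqref{spe} holds and $(\alpha,L^w_{E,v})$ is $PH2$.

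\emph{Step 1: from \eqref{spe} to two eigenfunctions.} Let $h^1,h^2\in C^\omega(\T,\C^{2l})$ be the two columns of $H$. Since $L^v_{E,w}$ is the transfer matrix of the range-$l$ operator $L^v_{w,\alpha,x}$, the first components of $h^j$ generate Bloch-type solutions. Setting $u^j_n(x)=e^{-2\pi i n\phi_0}h^j_{1}(x+n\alpha)$ (up to the index shift built into the transfer matrix), we get $L^v_{w,\alpha,x}u^j=Eu^j$ for every $x$. Expanding $h^j_1(x)=\sum_k \hat h^j_k e^{2\pi i kx}$ and applying the Aubry duality of Section \ref{aubry} (the relation $UL^w_{v,\alpha}U^{-1}=L^{v(-\cdot)}_{w,\alpha}$, read in reverse), the sequences $(\hat h^j_k)_{k\in\Z}$ solve $L^w_{v,\alpha,\theta_0}\hat h^j=E\hat h^j$ with $\theta_0=\pm\phi_0$ (the sign and normalization to be fixed by the Fourier convention in \eqref{dual map}). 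Analyticity of $H$ makes $\hat h^j_k$ decay exponentially in $|k|$, so both are $\ell^2$ eigenvectors.

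We then need $\hat h^1$ and $\hat h^2$ to be linearly independent. For this we use $H^*S_vH=J$. A relation $c_1\hat h^1+c_2\hat h^2=0$ would give $c_1h^1_1+c_2h^2_1\equiv 0$. Since the remaining components of $h^j$ are shifts of the first (the $\theta$-shift built into $L^v_{E,w}$), this forces $H(c_1,c_2)^T\equiv0$, which contradicts $\operatorname{rank}H=2$, a consequence of $H^*S_vH=J$.

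\emph{Step 2: eigenvectors lie in the center.} Let $U^j_n$ be the $2d$-dimensional transfer vectors of $\hat h^j$ for the cocycle $(\alpha,L^w_{E,v})$ at the phase $\theta_0$. They decay exponentially in both directions. By domination, any vector with a nonzero $E_u$-component grows forward at a uniform exponential rate, so $U^j_0\in E_s\oplus E_c$. The backward argument gives $U^j_0\in E_c\oplus E_u$, hence $U^j_0\in E_c(\theta_0)$. This uses uniformity of the dominated splitting over all $\theta$ (continuity plus compactness of $\T$), not just Oseledets genericity. Minimality is what makes the splitting defined at the specific phase $\theta_0$.

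\emph{Step 3: symplectic orthogonality.} Since $E$ is real and $L^w_{E,v}$ is $S_v$-symplectic, the form $\langle U^1_n,S_vU^2_n\rangle$ is independent of $n$. Because both sequences are $\ell^2$, it tends to $0$, hence vanishes identically. Next we show that $S_v$ restricted to $E_c$ is nondegenerate. By invariance of the form and the growth rates, $E_s$ is $S_v$-orthogonal to $E_s\oplus E_c$ (the products decay forward) and $E_u$ is $S_v$-orthogonal to $E_u\oplus E_c$ (they decay backward). Nondegeneracy of $S_v$ on $\C^{2d}$ therefore forces nondegeneracy on $E_c$. A nondegenerate skew-Hermitian form on a two-dimensional space has no two-dimensional isotropic subspace. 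Thus $U^1_0$ and $U^2_0$ are linearly dependent, so $\hat h^1$ and $\hat h^2$ are dependent, contradicting Step 1.

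I expect the main obstacle to be Step 1: tracking the index conventions of duality so that the columns of $H$ produce genuine eigenfunctions of the dual operator at a single phase $\theta_0$, and deducing linear independence rigorously from $H^*S_vH=J$. The growth-rate arguments in Steps 2 and 3 are comparatively routine once the uniform domination from Section \ref{3.3} is in hand.
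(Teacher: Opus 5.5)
Your proposal is correct and is essentially the paper's argument. The paper likewise uses the shift rows of $L^v_{E,w}$ to reduce the conjugacy relation to a scalar equation for one component $h_{l,i}$ of each column, Fourier-transforms it to get two eigenfunctions $\overline{\hat h_1},\overline{\hat h_2}$ of $L^w_{v,\alpha,\phi_0}$, and then contradicts Theorem~\ref{tsimp}, whose proof is your Steps~2--3. The convention you left open is settled by the Bloch phase $e^{+2\pi i n\phi_0}$ together with complex conjugation (using that $v,w,E$ are real), which lands exactly at the phase $\phi_0$.

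You in fact supply two points the paper leaves implicit: linear independence via $H^*S_wH=J$ (the form here is the one built from $\hat w$, not $S_v$), and nondegeneracy of the form on $E_c$ via the $S_v$-orthogonality of $E_s$ and $E_u$ to $E_c$. In Step~3, state explicitly that the diagonal pairings $\langle U^j_n,S_vU^j_n\rangle$ also vanish by the same constancy argument, since isotropy of the span needs them too.
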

Proof of Theorem \ref{contra1} will be split into the following three
subsections. But first we list two important corollaries.
\begin{Corollary}\label{fco1}
If $H_{v,\alpha,x}$ is a type I operator with trigonometric polynomial potential $v$, there does not exist $B\in C^{\omega}(\T,SL(2,\R))$ such that
$$
B^{-1}(x+\alpha)S_E^v(x)B(x)=Id.
$$
\end{Corollary}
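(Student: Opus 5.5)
We argue by contradiction and reduce Corollary \ref{fco1} to Theorem \ref{contra1}, using Aubry duality with $w=2\cos 2\pi(\cdot)$; the $l$ in Theorem \ref{contra1} is then $1$. Suppose $E$ lies in the spectrum of the Type I operator $H_{v,\alpha,x}$ and there is $B\in C^\omega(\T,SL(2,\R))$ with
\[
B^{-1}(x+\alpha)S_E^v(x)B(x)=Id .
\]

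\emph{Step 1: rewrite the conjugacy.} The relation gives
\[
S_E^v(x)B(x)=B(x+\alpha).
\]
Note that $S_E^v$ is exactly the one-step transfer matrix of the eigen-equation for $L^{v}_{2\cos,\alpha,x}=H_{v,\alpha,x}$. Indeed, Section \ref{aubry} gives $UL^w_{v,\alpha}U^{-1}=L^{v(-\cdot)}_{w,\alpha}$, and with $w=2\cos$ the Schr\"odinger operator is the range-one member of the family \eqref{lvw}. In the notation of Theorem \ref{contra1} this transfer matrix is $L^v_{E,w}$, up to the sign or reflection $x\mapsto -x$, which we absorb by replacing $v$ with $v(-\cdot)$. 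This replacement preserves the Type I property and the dual Lyapunov spectrum.

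\emph{Step 2: build $H$.} Set $H(x)=B(x)$. Then $L^v_{E,w}(x)H(x)=e^{2\pi i\phi_0}H(x+\alpha)$ holds with $\phi_0=0$. We also need $H\in C^\omega(\T,Sp_{2\times2}(\C))$, that is, $H^*S H=J$, where the symplectic form for the range-one operator is $S=\begin{pmatrix}0&-1\\1&0\end{pmatrix}$ up to sign. Since $B$ is real with $\det B=1$, we have $B^*JB=(\det B)J=J$. So $B$ qualifies, after possibly replacing $B$ by $B\cdot\mathrm{diag}(1,-1)$-type normalizations to match the sign convention of $S$ versus $J$. One more point needs checking: $B$ must be $1$-periodic rather than $2$-periodic. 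That is part of the hypothesis $B\in C^\omega(\T,SL(2,\R))$.

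\emph{Step 3: conclude.} Theorem \ref{contra1} now says that the dual cocycle $(\alpha,L^w_{E,v})$, i.e. the transfer cocycle $(\alpha,L_{E,v})$ of $L_{v,\alpha,\theta}$, is not $PH_2$. On the other hand, $E$ is a Type I energy, so Corollary \ref{Iph2} (via Theorem \ref{dominate1} and Corollary \ref{partial}) says this dual cocycle is $PH_2$. This is a contradiction, so no such $B$ exists.

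The only delicate step is the bookkeeping of conventions in Step 1. We must match the roles of $v$ and $w$ in Theorem \ref{contra1}, where $L^v_{E,w}$ is the $2l\times 2l$ cocycle and $l=1$ here, and handle the reflection $v(-\cdot)$ coming from the duality map. We must also confirm that the symplectic normalization $H^*S_wH=J$ is satisfied by a real unimodular $B$. Everything else is a direct invocation of the earlier results.
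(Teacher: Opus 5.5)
Your proof is correct and follows the paper's route: it is Theorem~\ref{contra1} applied with $w=2\cos 2\pi(\cdot)$ (so $l=1$ and $\phi_0=0$), combined with the $PH_2$ property of the dual cocycle at Type~I energies from Theorem~\ref{dominate1}. The conventions are simpler than you feared: $L^v_{E,2\cos}=S^v_E$ exactly, so no reflection is needed, and since $S_w=-J$ here, $H=B\,\mathrm{diag}(1,-1)$ satisfies $H^*S_wH=J$ (in any case, the proof of Theorem~\ref{contra1} only uses that $H$ has rank two).
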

\begin{pf}
By Theorem \ref{dominate1}, cocycles corresponding to the  duals of  type I operators are $PH2$.
\end{pf}

\begin{Corollary}
Let $\alpha\in\R\backslash\Q$ and $v$ be a trigonometric polynomial of degree $d$. Then there does not exist $F\in C^\omega(T,Sp_{2d\times 2}(\C))$ and $\phi_o\in\R$ such that
\begin{equation}\label{spe}
L^{2\cos}_{E,v}(x)F(x)=e^{2\pi i\phi_0}F(x+\alpha).
\end{equation}
\end{Corollary}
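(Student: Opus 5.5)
The plan is to obtain the statement as a direct specialization of Theorem~\ref{contra1}, with the roles of $v$ and $w$ interchanged. The point is that the cocycle whose $PH2$ property Theorem~\ref{contra1} rules out is then a $2\times 2$ Schrödinger cocycle. For such a cocycle the $PH2$ condition holds trivially for every $E$, so no Type~I hypothesis is needed.

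\textbf{Matching the notation.} In Theorem~\ref{contra1} set $w':=v$, so $w'$ has degree $l=d$, and $v':=2\cos 2\pi x$, so $v'$ has degree $1$ with $\hat v'_{\pm1}=1$ and $\hat v'_0=0$. Then $L^{v'}_{E,w'}=L^{2\cos}_{E,v}$ is the $2d\times 2d$ transfer cocycle of the finite-range operator $L^{2\cos}_{v,\alpha,x}$, which is the dual operator \eqref{fi}. It is complex symplectic with respect to $S_v$. The target space $Sp_{2l\times2}(\C)=Sp_{2d\times 2}(\C)$ is exactly the space in which $F$ is assumed to lie. Hence a pair $(F,\phi_0)$ as in the statement is precisely a pair $(H,\phi_0)$ satisfying \eqref{spe} in Theorem~\ref{contra1}.

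\textbf{Applying Theorem~\ref{contra1}.} Its conclusion is that $(\alpha,L^{w'}_{E,v'})=(\alpha,L^{v}_{E,2\cos})$ is not $PH2$. This is the cocycle of the eigenvalue equation of the second-difference operator $u_{n+1}+u_{n-1}+v(x+n\alpha)u_n$. Up to the normalizing factor $1/\hat v'_1=1$, it is the Schrödinger cocycle $(\alpha,S^v_E)$ in $SL(2,\C)$, with $d=1$ in Definition~\ref{defph2}. In that case the requirement $L_1\ge\cdots\ge L_{d-1}>L_d$ is vacuous, and so is the requirement of $(d-1)=0$- and $(d+1)=2$-domination. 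The center is the whole $\C^2$, with $E_s=E_u=\{0\}$. So $(\alpha,S^v_E)$ is $PH2$ for every $E$. This contradicts the conclusion of Theorem~\ref{contra1}, and therefore no such $F$ exists.

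\textbf{Main obstacle.} The delicate step is the degenerate $d=1$ reading of $PH2$. This is legitimate only if the proof of Theorem~\ref{contra1} uses $PH2$ solely to force the relevant square-summable solutions into a two-dimensional center and then invokes symplectic simplicity. For a $2\times2$ cocycle that center is all of $\C^2$, and simplicity reduces to the classical Wronskian argument. If a referee objected to the vacuous case, I would instead run the argument of Theorem~\ref{contra1} directly. Through Aubry duality (Section~\ref{aubry}), the two columns of $F$, multiplied by $e^{2\pi i\phi_0}$-type phases, produce two linearly independent exponentially decaying eigenvectors of $H_{v,\alpha,x}$ with eigenvalue $E$ for suitable phases $x$. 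Their independence comes from the nondegeneracy $F^*S_vF=J$. This contradicts the constancy of the Wronskian for the second-difference equation, which forces any two $\ell^2$ solutions to be linearly dependent.
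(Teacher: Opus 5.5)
Your proposal is correct and matches the paper's argument: the paper applies Theorem~\ref{contra1} with the roles of $v$ and $w$ interchanged. The cocycle whose $PH2$ property is thereby excluded is the Schr\"odinger cocycle $(\alpha,S^v_E)$, which the paper declares ``automatically'' $PH2$, and your discussion of the degenerate $d=1$ case (center equal to all of $\C^2$, simplicity reducing to the classical Wronskian) is exactly what justifies that word.
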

\begin{pf}
The dual of $L_{v,\alpha,\theta}^{2\cos}$ is  Schr\"odinger operator
$H_{v,\alpha,x}$, and corresponding $SL(2,\R)$ Schr\"odinger cocycles are automatically $PH2.$
\end{pf}

\subsection{Symplectic orthogonality of the eigenpairs}
It is well known that for second-difference operators
$$
 (Hu)_n=u_{n-1}+u_{n+1}+V(n)u(n),\ \ n\in\Z
 $$
 point spectrum is simple. Indeed,
 If $u,v\in\ell^2(\Z)$ satisfy $Hu=Eu$ and $Hv=Ev$, then, by the
 constancy of Wronskian,
 \begin{equation}\label{lg}
 u(n+1)v(n)-u(n)v(n+1)=0, \ \ \forall n\in\Z.
 \end{equation}
For the   finite-range  operators
\begin{equation}\label{finite_difference}
(Lu)(n)=\sum\limits_{k=-d}^{d} a_k u_{n+k}+b(n)u_n, \ \ n\in\Z,
\end{equation}
where $a_{-k}=\overline{a_k}$ and $\{b(n)\}_{n\in\Z}\subset \R^{\Z}$
is a  bounded sequence of real numbers, and $d>1$ this of course no
longer works, as  Wronskians of pairs of eigenfunctions are no longer well defined.

However, one can rewrite \eqref{lg} as
\begin{equation}\label{so}
\left\langle\begin{pmatrix} u(n+1)\\ u(n)\end{pmatrix}, \begin{pmatrix} 0&-1\\ 1&0\end{pmatrix}\begin{pmatrix} v(n+1)\\ v(n)\end{pmatrix}\right\rangle=0, \ \ \forall n\in\Z
\end{equation}
and therefore view the simplicity of the point spectrum for
Schr\"odinger operator as a corollary of  {\it symplectic
  orthogonality} \eqref{so} of  eigenfunctions. It turns out
symplectic orthogonality still holds in the finite-range case.

Let $S,C$ be defined by \eqref{sdef} with $\hat{v}_k$ replaced by $a_k$, and let
$$\vec{u}(n)=\begin{pmatrix} u(nd+d-1) \\ \vdots \\ u(nd)\end{pmatrix},\ \  \vec{v}(n)=\begin{pmatrix}v(nd+d-1) \\ \vdots \\ v(nd)\end{pmatrix}.
$$

\begin{Lemma}\label{symplectic}
For any  two eigenfunctions $u,v$ of $L$, corresponding to the same eigenvalue
$E,$ vectors $\begin{pmatrix}\vec{u}(1)\\ \vec{u}(0)\end{pmatrix}$ and $\begin{pmatrix}\vec{v}(1)\\ \vec{v}(0)\end{pmatrix}$ are symplectic orthogonal with respect to $S.$
\end{Lemma}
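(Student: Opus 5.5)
The plan is to prove the identity for every $n$ and not only for $n=0$, by showing that the symplectic form of the two block vectors does not depend on $n$, and then letting $n\to\pm\infty$ using that eigenfunctions lie in $\ell^2(\Z)$. Write $X_u(n)=\begin{pmatrix}\vec u(n+1)\\ \vec u(n)\end{pmatrix}$, and define $X_v(n)$ the same way. Since $Lu=Eu$ with $E$ real, $X_u(n)$ is obtained from $X_u(0)$ by the $d$-step block transfer matrix of the form \eqref{strip2}. In this block form the potential part sits in a Hermitian matrix $B_n$: it has entries $a_{j-i}$ off the diagonal, with $a_{-k}=\overline{a_k}$, and $b(nd+j)+a_0$ on the diagonal, which is real. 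The block transfer matrix is
$$
\mathcal L_n=\begin{pmatrix} C^{-1}(E-B_n) & -C^{-1}C^*\\ I&0\end{pmatrix}.
$$
So the first step is to check, in the same way as for \eqref{1111}, that $\mathcal L_n^*S\mathcal L_n=S$ for real $E$. Only three facts enter this check: $B_n^*=B_n$, $E\in\R$, and the form of $S$ in \eqref{sdef1}.

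Granting that, the quantity
$$
W(n):=\langle X_u(n),SX_v(n)\rangle=X_v(n)^*S^*X_u(n)
$$
satisfies $W(n+1)=\langle \mathcal L_nX_u(n),S\mathcal L_nX_v(n)\rangle=\langle X_u(n),\mathcal L_n^*S\mathcal L_nX_v(n)\rangle=W(n)$. Hence $W(n)=W(0)$ for all $n\in\Z$; negative $n$ are covered because $\mathcal L_n$ is invertible ($C$ is invertible since $a_d\neq0$). This is the exact analogue of the constant Wronskian in \eqref{lg}. One caveat about the inner product: if $u,v$ are allowed to be complex, the sesquilinear form is the correct invariant, since $\mathcal L_n$ preserves $X\mapsto X^*SX$. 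If the lemma is meant with a bilinear pairing, I would apply the argument to $\bar v$ instead; this is legitimate because $L\bar v=E\bar v$ whenever $a_{-k}=\overline{a_k}$ and $b$ is real, for then $\bar L$ is the operator with coefficients $\overline{a_k}=a_{-k}$ and coincides with $L^T$.

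Finally, $u,v\in\ell^2(\Z)$ gives $\|X_u(n)\|,\|X_v(n)\|\to0$ as $n\to\infty$. Therefore $|W(n)|\le\|S\|\,\|X_u(n)\|\,\|X_v(n)\|\to0$, and so $W(0)=0$, which is precisely the asserted symplectic orthogonality of $\begin{pmatrix}\vec u(1)\\ \vec u(0)\end{pmatrix}$ and $\begin{pmatrix}\vec v(1)\\ \vec v(0)\end{pmatrix}$.

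The main obstacle is the symplecticity check for the block matrix $\mathcal L_n$, specifically how the upper triangular $C$ and its adjoint enter. In block form, and using that $B_n$ is Hermitian and $E$ is real, the condition $\mathcal L_n^*S\mathcal L_n=S$ reduces to the identities
$$
(E-B_n)^*-(E-B_n)=0,\qquad C^*C^{-*}C=C^*,\qquad C^{*}C^{-1}\cdot C=C^*,
$$
all of which are immediate. I would therefore carry out this block multiplication explicitly rather than deduce it from the one-step matrix \eqref{1111}.
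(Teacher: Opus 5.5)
Your argument is correct and is essentially the paper's proof: the paper also rewrites $Lu=Eu$ as the block second-order system with the Hermitian matrix $T(n)$. It then uses $(L_{d,E}(n))^*SL_{d,E}(n)=S$ for real $E$ to transport the pairing from $n=0$ to large $n$, and concludes from $u,v\in\ell^2(\Z)$ that the pairing vanishes. One slip in your block check: the lower-left block of $\mathcal L_n^*S\mathcal L_n$ is $CC^{-*}C^*=C$, and your identity $C^*C^{-*}C=C^*$ is false unless $C$ is Hermitian; the upper-left block vanishes because $(E-B_n)^*C^{-*}C^*=(E-B_n)^*$, $B_n^*=B_n$ and $E\in\R$, so the conclusion $\mathcal L_n^*S\mathcal L_n=S$ still holds.
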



\begin{pf}
It is easy to check that $(\vec{u}(n))_{n\in\Z}$ satisfies
\begin{equation}\label{ef}
C\vec{u}(n+1)+T(n)\vec{u}(n)+C^*\vec{u}(n-1)=E\vec{u}(n),
\end{equation}
where $T(n)$ is the Hermitian matrix
$$
T(n)=\begin{pmatrix}
b(nd+d-1)+a_0&a_{-1}&\cdots&a_{-d+1}\\
a_1&\ddots&\ddots&\vdots\\
\vdots&\ddots&b(nd+1)+a_0&a_{-1}\\
a_{d-1}&\cdots&a_1&b(nd)+a_0
\end{pmatrix}.
$$
Note that equation \eqref{ef} is an eigenequation of the following vector-valued Schr\"odinger operator
$$
(L_d\vec{u})(n)=C\vec{u}(n+1)+T(n)\vec{u}(n)+C^*\vec{u}(n-1),
$$
acting on $\ell^2(\Z,\C^d)$.

To obtain a first-order system and the corresponding cocycle we use the fact that $C$ is invertible (since $a_d\neq 0$) and write
$$
\begin{pmatrix}
\vec{u}(n+1)\\
\vec{u}(n)
\end{pmatrix}
=\begin{pmatrix}
C^{-1}(EI_d-T(n))& -C^{-1}C^*\\
I_d&O_d
\end{pmatrix}
\begin{pmatrix}
\vec{u}(n)\\
\vec{u}(n-1)
\end{pmatrix},
$$
where $I_d$ and $O_d$ are the d-dimensional identity and zero matrices, respectively.
Set
$$
L_{d,E}(n)=\begin{pmatrix}
C^{-1}(EI_d-T(n))& -C^{-1}C^*\\
I_d&O_d
\end{pmatrix}
$$
Then for real $E$, matrix $L_{d,E}(n)$ is  complex-symplectic with respect to
$S=\begin{pmatrix} 0&-C^*\\
C&0
\end{pmatrix}$,
\begin{equation}\label{s_eq}
(L_{d,E}(n))^{*}S L_{d,E}(n)=S, \  \ n\in\Z.
\end{equation}

Since $u,v\in \ell^2(\Z),$ 
we have
\begin{equation}\label{limit1}
\lim\limits_{n\rightarrow \infty}\left\langle \begin{pmatrix}\vec{u}(n+1)\\ \vec{u}(n)\end{pmatrix}, \begin{pmatrix}0&-C^*\\
C&0\end{pmatrix}\begin{pmatrix}\vec{v}(n+1)\\ \vec{v}(n)\end{pmatrix}\right\rangle=0.
\end{equation}
On the other hand,
$$
\begin{pmatrix}\vec{u}(n+1)\\ \vec{u}(n)\end{pmatrix}=L_{d,E}(n)\cdots L_{d,E}(1)\begin{pmatrix}\vec{u}(1)\\ \vec{u}(0)\end{pmatrix}.
$$
By \eqref{s_eq}, for any $n\in\Z$,
\begin{align*}
&\left\langle \begin{pmatrix}\vec{u}(n+1)\\ \vec{u}(n)\end{pmatrix}, \begin{pmatrix}0&-C^*\\
C&0\end{pmatrix}\begin{pmatrix}\vec{v}(n+1)\\ \vec{v}(n)\end{pmatrix}\right\rangle\\
=&\left\langle L_{d,E}(n)\cdots L_{d,E}(1)\begin{pmatrix}\vec{u}(1)\\ \vec{u}(0)\end{pmatrix}, \begin{pmatrix}0&-C^*\\
C&0\end{pmatrix}L_{d,E}(n)\cdots L_{d,E}(1)\begin{pmatrix}\vec{v}(1)\\ \vec{v}(0)\end{pmatrix}\right\rangle\\
=&\left\langle \begin{pmatrix}\vec{u}(1)\\ \vec{u}(0)\end{pmatrix}, (L_{d,E}(n)\cdots L_{d,E}(1))^*\begin{pmatrix}0&-C^*\\
C&0\end{pmatrix}L_{d,E}(n)\cdots L_{d,E}(1)\begin{pmatrix}\vec{v}(1)\\ \vec{v}(0)\end{pmatrix}\right\rangle\\
=&\left\langle \begin{pmatrix}\vec{u}(1)\\ \vec{u}(0)\end{pmatrix}, \begin{pmatrix}0&-C^*\\
C&0\end{pmatrix}\begin{pmatrix}\vec{v}(1)\\ \vec{v}(0)\end{pmatrix}\right\rangle.
\end{align*}
By \eqref{limit1}, we obtain
$$
\left\langle \begin{pmatrix}\vec{u}(1)\\ \vec{u}(0)\end{pmatrix}, \begin{pmatrix}0&-C^*\\
C&0\end{pmatrix}\begin{pmatrix}\vec{v}(1)\\ \vec{v}(0)\end{pmatrix}\right\rangle=0.
$$
\end{pf}
\subsection{Simplicity of point spectrum. Proof of Theorem \ref{tsimp}}
Let $\Sigma_f$ be the ($\omega$-independent spectrum of a minimal $PH2$
operator $L_{f,\omega}$ given by  \eqref{finite operator11}. Fix $E\in\Sigma_f$.
By the definitions of $PH2$, dominated splitting and unique ergodicity, there exist continuous invariant decompositions
$$
\C^{2d}=E_s(\omega)\oplus E_c(\omega)\oplus E_u(\omega).
$$
and $C(E),\delta(E)>\delta'(E)>0$, such that for any $\omega\in\Omega$ and $n\geq 1$, we have
\begin{align}\label{eq10}
\left\|(L_{E}^{f})_{-n}(\omega)v\right\|>C^{-1}e^{\delta n},\ \  \forall v\in E_s(\omega)\backslash\{0\},\ \ \|v\|=1,
\end{align}
\begin{align}\label{eq11}
\left\|(L_{E}^{f})_{n}(\omega)u\right\|>C^{-1}e^{\delta n},\ \  \forall u\in E_u(\omega)\backslash\{0\},\ \ \|u\|=1,
\end{align}
\begin{align}\label{eq13}
\left\|(L_{E}^{f})_{\pm n}(\omega)w\right\|<Ce^{\delta' n},\ \  \forall w\in E_c(\omega)\backslash\{0\},\ \ \|w\|=1,
\end{align}
\begin{align}\label{eq12}
{\rm dim} E_c(\omega)=2.
\end{align}

Clearly, by \eqref{eq10}-\eqref{eq13} and invariance,  if $u(\omega)$ is an
$\ell^2$ eigenfunction, vector $\begin{pmatrix}\vec{u}(1,\omega)\\
  \vec{u}(0,\omega)\end{pmatrix}$ cannot have nonzero components in
either $E_s(\omega)$ or $E_u(\omega),$ for otherwise there would be
exponential growth at either $-\infty$ or $\infty.$

We now prove by contradiction. Assume $L_{f,\omega}$ has two linearly
independent eigenfunctions $u(\omega),v(\omega)$ corresponding to the
same eigenvalue $E.$ We then have
\begin{equation}\label{key}
\begin{pmatrix}\vec{u}(1,\omega)\\ \vec{u}(0,\omega)\end{pmatrix},\begin{pmatrix}\vec{v}(1,\omega)\\ \vec{v}(0,\omega)\end{pmatrix}\in E_c(\omega).
\end{equation}

On the other hand, by \eqref{eq12}, for any $\begin{pmatrix}\vec{x}(1,\omega)\\ \vec{x}(0,\omega)\end{pmatrix}\in E_c(\omega)$, we have
$$
\begin{pmatrix}\vec{x}(1,\omega)\\ \vec{x}(0,\omega)\end{pmatrix}=c_1(\omega)\begin{pmatrix}\vec{u}(1,\omega)\\ \vec{u}(0,\omega)\end{pmatrix}+c_2(\omega)\begin{pmatrix}\vec{v}(1,\omega)\\ \vec{v}(0,\omega)\end{pmatrix}.
$$
By Lemma \ref{symplectic}, we have
\begin{equation}\label{c22}
\left\langle \begin{pmatrix}\vec{u}(1,\omega)\\ \vec{u}(0,\omega)\end{pmatrix}, \begin{pmatrix}0&-C^*\\
C&0\end{pmatrix}\begin{pmatrix}\vec{v}(1,\omega)\\ \vec{v}(0,\omega)\end{pmatrix}\right\rangle=\left\langle \begin{pmatrix}\vec{u}(1,\omega)\\ \vec{u}(0,\omega)\end{pmatrix}, \begin{pmatrix}0&-C^*\\
C&0\end{pmatrix}\begin{pmatrix}\vec{u}(1,\omega)\\ \vec{u}(0,\omega)\end{pmatrix}\right\rangle=0.
\end{equation}
It follows that for any $\begin{pmatrix}\vec{x}(1,\omega)\\ \vec{x}(0,\omega)\end{pmatrix}\in E_c(\omega)$,
$$
\left\langle \begin{pmatrix}\vec{u}(1,\omega)\\ \vec{u}(0,\omega)\end{pmatrix}, \begin{pmatrix}0&-C^*\\
C&0\end{pmatrix}\begin{pmatrix}\vec{x}(1,\omega)\\ \vec{x}(0,\omega)\end{pmatrix}\right\rangle=0.
$$
This contradicts the non-degeneracy of the symplectic form.
\qed
\subsection{Proof of Theorem \ref{contra1}} We proceed by contradiction. Assume $(\alpha,L_{E,v}^w)$ is $PH2$ and  there exists $H\in C^\omega(\T,Sp_{2l\times 2}(\C))$ such that
\begin{equation}\label{form11}
L^v_{E,w}(x)H(x)=e^{2\pi i\phi_0}H(x+\alpha).
\end{equation}
Let
$$
H=\begin{pmatrix}
h_{1,1}&h_{1,2}\\
h_{2,1}&h_{2,2}\\
\vdots&\vdots\\
h_{2l,1}&h_{2l,2}
\end{pmatrix}\in C^\omega(\T,Sp_{2l\times 2}(\C)).
$$
By the definition of $L^v_{E,w}(x)$ and \eqref{form11}, one has for $i=1,2$,
\begin{equation}\label{e2e1}
-\frac{1}{\hat{w}_{l}}\left(\sum\limits_{k=1}^{2l} \hat{w}_{l-k}h_{k,i}(x)+(v(x)-E)h_{l,i}(x)\right)-e^{2\pi i\phi_0}h_{1,i}(x+\alpha)=0,
\end{equation}
\begin{equation}\label{e3e1}
h_{k,i}(x)=e^{2\pi i\phi_0}h_{k+1,i}(x+\alpha), \ \ \forall 1\leq k\leq 2l-1.
\end{equation}
It follows from \eqref{e2e1} and \eqref{e3e1} that
\begin{equation}\label{e41}
\sum\limits_{k=-l}^l \hat{w}_{k}e^{2\pi ik\phi_0}h_{l,i}(x+k\alpha)+(v(x)-E)h_{l,i}(x)=0.
\end{equation}
Let $h_{l,i}(x)=\sum_k\hat{h}_i(k)e^{2\pi i kx}$ be the Fourier
expansion. Taking the Fourier transform of \eqref{e41}, we get
\begin{equation*}
\sum\limits_{k=-l}^l \hat{w}_{k}e^{2\pi ik(\phi_0+n\alpha)}\hat{h}_i(n)+\sum\limits_{k=-d}^d\hat{v}_k\hat{h}_i(n-k)-E\hat{h}_i(n)=0.
\end{equation*}
  
Thus $\left\{\overline{\hat{h}_1(n)}\right\}_{n\in\Z}$ and $\left\{\overline{\hat{h}_2(n)}\right\}_{n\in\Z}$ are two linearly independent  eigenfunctions of $L^w_{v,\alpha,\phi_0}$ corresponding to the same eigenvalue $E$.

On the other hand, since $(\alpha,L_{E,v}^w)$ is $PH2$, by Theorem
\ref{tsimp}, $L^w_{v,\alpha,\phi_0}$ has simple point spectrum, a
contradiction. \qed

\section{An all-frequency Puig's argument. Proof of Theorem \ref{rot1}}\label{puig2}
In this section, let $v$ and $w$ be as in Section \ref{puig1}. 
Theorem \ref{rot1} can be equivalently reformulated as
\begin{Theorem}\label{contra2}
For any $\alpha\in\R\backslash\Q$, if there exist $F\in C^\omega(\T,Sp_{2l\times 2}(\C))$ and $\phi,\psi\in C^\omega(\T,\R)$ with $\int_\T\psi(x)dx=0$ such that
\begin{equation}\label{spe}
L^v_{E,w}(x)F(x)=e^{2\pi i\phi(x)}F(x+\alpha)R_{\psi(x)},
\end{equation}
then the dual cocycle $(\alpha,L_{E,v}^w)$ is not $PH2$.
\end{Theorem}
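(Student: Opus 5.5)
We argue by contradiction, along the lines of the proof of Theorem \ref{contra1}, replacing the constant phase $\phi_0$ by almost localized dual states. Assume $(\alpha,L^w_{E,v})$ is $PH2$ and that $F$, $\phi$, $\psi$ as in the statement exist.

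\textbf{Step 1: normalize the phase and the rotation.} Since $\int_\T\psi\,dx=0$ and $\alpha$ is irrational, we first try to solve the cohomological equation $\psi(x)=\chi(x+\alpha)-\chi(x)$. If $\alpha$ is Diophantine this has an analytic solution $\chi$. Replacing $F$ by $F R_{-\chi}$ then gives
\[
L^v_{E,w}F=e^{2\pi i\phi}F(\cdot+\alpha).
\]
Similarly, writing $\phi=\hat\phi(0)+(\text{zero mean part})$, we absorb the zero-mean part into a scalar factor $e^{2\pi i\eta}$ and are reduced to Theorem \ref{contra1}. For general $\alpha$ (the Liouville case) this cannot be done exactly. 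Instead we choose the approximation $\chi_n$ obtained by truncating the Fourier series of $\psi$ at a denominator $q_n$. The remainder is then $\psi_n$, whose Fourier modes all have $|k|\geq q_n$, so it is exponentially small, of size roughly $e^{-hq_n}$. The correction $\chi_n$ may have size up to $\sum_{|k|<q_n}|\hat\psi_k|/\|k\alpha\|$, but it is a trigonometric polynomial of degree $<q_n$. We do the same for $\phi$. This gives an approximate version of \eqref{form11}:
\[
L^v_{E,w}(x)F_n(x)=e^{2\pi i\phi_0}F_n(x+\alpha)R_{\psi_n(x)},
\]
with $F_n=e^{-2\pi i\eta_n}FR_{-\chi_n}$ and $\psi_n$ small. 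The losses here are $e^{C\|\chi_n\|}$, which is sub-exponential in $q_{n+1}$, and so are the Fourier tails.

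\textbf{Step 2: almost localized dual eigenfunctions.} As in \eqref{e2e1}--\eqref{e41}, the two columns $h_{l,1}^{(n)}$, $h_{l,2}^{(n)}$ of $F_n$ satisfy \eqref{e41} up to an error of size $\|\psi_n\|\,e^{C\|\chi_n\|}$. Their Fourier coefficients therefore give two vectors $u^{(n)},v^{(n)}$ which solve $L^w_{v,\alpha,\phi_0}u=Eu$ approximately. They decay exponentially (at rate comparable to $h$) on the window $|k|\lesssim q_{n+1}$, up to the loss $e^{C\|\chi_n\|}$. The $Sp_{2l\times2}$ normalization $F^*S F=J$ is preserved under right multiplication by the rotation $R_{-\chi_n}$ and by unimodular scalars. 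Consequently the center data
\[
U_n=\begin{pmatrix}\vec u^{(n)}(1)\\ \vec u^{(n)}(0)\end{pmatrix},\qquad V_n=\begin{pmatrix}\vec v^{(n)}(1)\\ \vec v^{(n)}(0)\end{pmatrix}
\]
are symplectically nondegenerate, $\langle U_n,SV_n\rangle\neq0$ with a lower bound. This is because the value $F(x)^*SF(x)=J$, averaged over $x$, equals, via Parseval, a sum over $k$ of the pairings of the shifted vectors. Here we use that $F(x+k\alpha)$ encodes the shifts of $u^{(n)}$, so the sum localizes near the origin.

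\textbf{Step 3: symplectic contradiction.} We now redo Lemma \ref{symplectic} and the proof of Theorem \ref{tsimp} quantitatively on a finite window $[-N,N]$ with $N\sim q_{n+1}/C$.
\begin{enumerate}
\item The pairing $\langle \cdot, S\,\cdot\rangle$ of the transported vectors is conserved up to accumulated errors, bounded by $N$ times the defect.
\item At $\pm N$ the vectors are exponentially small, so the pairing there is negligible.
\item The components of $U_n,V_n$ in $E_s$ and $E_u$ are exponentially small, by \eqref{eq10}--\eqref{eq13} run over a length-$N$ window: a nonnegligible component would grow like $e^{\delta N}$, contradicting the almost localization.
\end{enumerate}
Hence $U_n,V_n$ lie within a tiny distance of $E_c(0)$. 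By the argument above, each is almost symplectically orthogonal to itself and to the other. Since $E_c$ is two-dimensional, they almost span it, so the form restricted to $E_c$ would be nearly degenerate. This contradicts the uniform nondegeneracy of the form on the continuous bundle $E_c$ (compactness) together with the lower bound from Step 2.

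The main obstacle is Step 1/2 in the Liouvillean regime. The loss $e^{C\|\chi_n\|}$ must be beaten by the exponential decay over the window of length $q_{n+1}$ and by the domination rate $\delta$. We would first try choosing $n$ along a subsequence where $\|\chi_n\|\ll hq_n$. Failing that, we would use the hyperbolic rate $\delta$ from the $PH2$ splitting only on windows of length $q_n$ rather than $q_{n+1}$, so that the correction $\chi_n$ is controlled by $q_n$.
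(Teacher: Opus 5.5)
Your architecture (truncated cohomological equation, approximate dual eigenfunctions from Fourier coefficients, small stable and unstable components, near-vanishing of all symplectic pairings, contradiction with nondegeneracy of the form on the two-dimensional center) matches the paper's. However, the step that actually produces the contradiction is missing, and in its place is a false claim.

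In Step 2 you assert that $F^*SF=J$ yields, via Parseval, a lower bound on $|\langle U_n,SV_n\rangle|$. This cannot hold. In the exact setting of Theorem~\ref{contra1} one has $F^*S_wF=J$, yet Lemma~\ref{symplectic} shows that the two $\ell^2$ dual eigenfunctions have \emph{zero} pairing. Your own Step 3 also shows the pairing is nearly zero, so Steps 2 and 3 contradict each other. (Parseval applied to $F^*S_wF=J$ only gives a weighted sum of the products $\overline{\hat h_i(k)}\hat h_{i'}(k)$ at equal frequencies. Moreover, $F$ lives in $\C^{2l}$ with the form $S_w$, whereas $U_n,V_n\in\C^{2d}$ are paired by $S_v$.)

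What is needed instead is:
(i) a site $n_0$ in the localization window where one approximate solution is not too small (Lemma~\ref{initial}; the origin need not work), and
(ii) a quantitative lower bound on the Gram determinant of the two \emph{center components} at $n_0$ (Proposition~\ref{ff5}).

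Without (ii), ``since $E_c$ is two-dimensional they almost span it'' is unjustified. Two nearly parallel, nearly isotropic vectors in $E_c$ contradict nothing, because the Hermitian form $iS$ restricted to $E_c$ has signature $(1,1)$ and hence has isotropic directions.

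The paper proves (ii) from the symplectic normalization in the $x$-variable. Suppose $\vec h_2(n_0)\approx c_1\vec h_1(n_0)$. Propagating $\hat h_2-c_1\hat h_1$ through the forced dual equation across the localization window makes $h_{l,2}-c_1h_{l,1}$ small in $C^0(\T)$. By \eqref{e33}, the difference $\overrightarrow{v}-c_1\overrightarrow{u}$ of the columns of $H^k$ is then small, contradicting $\overrightarrow{u}^*S_w(\overrightarrow{v}-c_1\overrightarrow{u})=\overrightarrow{u}^*S_w\overrightarrow{v}=1$. Only with (ii) in hand does the basis-independent quantity $\det(W^*W)/|\det(W^*S_vW)|$, which is uniformly bounded on the continuous bundle $E_c$, give the contradiction: the Gram lower bound is beaten by the smallness of the symplectic Gram matrix.

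Second, the scales in Step 1 do not close as written. Truncating at $q_n$ leaves a defect of size about $e^{-hq_n}$. That defect cannot be propagated by transfer matrices (growth $C^N$) over windows $N\sim q_{n+1}$. Even on windows comparable to $q_n$, it works only if the localization length, of order $|\chi_n|_{h/2}/h$, is $o(q_n)$, i.e.\ along a subsequence with $q_n\gg q_{n-1}$.

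The paper instead splits on the frequency:
\begin{itemize}
\item If $\beta(\alpha)<h$, it solves the cohomological equation exactly on a smaller strip and reduces to Theorem~\ref{contra1}. This covers more than the Diophantine case.
\item If $\beta(\alpha)\ge h$, it truncates at $[q_{n_k+1}/6]$ along a subsequence with $q_{n_k+1}>e^{(\beta-\epsilon)q_{n_k}}$ (Lemma~\ref{small}). The defect is then $e^{-hq_{n_k+1}/20}$, all losses are $e^{C(q_{n_k}+e^{-hq_{n_k}/2}q_{n_k+1})}=e^{o(q_{n_k+1})}$, and every window has length $e^{-chq_{n_k}}q_{n_k+1}$.
\end{itemize}
Your closing remark points in this direction, but the proof is incomplete without carrying it out and, above all, without (ii).
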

\begin{Remark}
{\rm Roughly speaking, while Theorem \ref{contra1} can be used to prove Cantor
  spectrum for type I operators with  {\it Diophantine}
  frequencies, Theorem \ref{contra2} will be used to prove Cantor
  spectrum for type I operators with {\it all}  irrational frequencies.}
\end{Remark}
 We first list two important corollaries. Recall that $v$ is a trigonometric polynomial.
\begin{Corollary}\label{fco2}
For any $\alpha\in\R\backslash \Q$, if operator $H_{v,\alpha,x}$ is of Type I, then there does not exist $B\in C^{\omega}(\T,SL(2,\R))$  and $\psi\in C^\omega(\T,\R)$ with $\int_\T \psi(x)dx=0$ such that
$$
B^{-1}(x+\alpha)S_E^v(x)B(x)=R_{\psi(x)}.
$$
\end{Corollary}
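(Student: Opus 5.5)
We argue by contradiction, reducing Corollary~\ref{fco2} to Theorem~\ref{contra2} through Aubry duality, in parallel with the proof of Corollary~\ref{fco1}. Assume $H_{v,\alpha,x}$ is Type I, $E\in\Sigma_{v,\alpha}$, and that there exist $B\in C^\omega(\T,SL(2,\R))$ and $\psi\in C^\omega(\T,\R)$ with $\int_\T\psi=0$ satisfying $B^{-1}(x+\alpha)S_E^v(x)B(x)=R_{\psi(x)}$. The dual cocycle $(\alpha,L_{E,v})$ of the finite-range operator $L_{v,\alpha,\theta}=L^{2\cos}_{v,\alpha,\theta}$ is $PH2$ at this energy by Theorem~\ref{dominate1} and Corollary~\ref{Iph2}. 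So it suffices to realize the Schr\"odinger conjugacy as an instance of the hypothesis of Theorem~\ref{contra2}, with the roles of $v$ and $w$ interchanged.

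The dictionary comes from the case $l=1$ of the setting of Section~\ref{puig1}. The operator $L^{v}_{2\cos,\alpha,x}$ is exactly the Schr\"odinger operator $H_{v,\alpha,x}$, since $\hat w_{\pm1}=1$ and $\hat w_0=0$ when $w=2\cos$. Its transfer matrix $L^v_{E,2\cos}(x)$ therefore coincides with $S_E^v(x)$, the $2\times2$ Schr\"odinger cocycle, and $S_{2\cos}=\begin{pmatrix}0&-1\\1&0\end{pmatrix}=J$. Set $F:=B\in C^\omega(\T,SL(2,\R))$ and $\phi\equiv0$. The conjugacy rewrites as
$$
S_E^v(x)F(x)=F(x+\alpha)R_{\psi(x)}=e^{2\pi i\phi(x)}F(x+\alpha)R_{\psi(x)}.
$$
Since $B$ is real with determinant one, $B^*JB=B^TJB=J$, so $F\in C^\omega(\T,Sp_{2\times2}(\C))$ in the sense of \eqref{lanair2}. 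We then apply Theorem~\ref{contra2} with the pair $(v,w)$ replaced by $(2\cos,v)$. Its hypothesis is satisfied, so the dual cocycle $(\alpha,L^{2\cos}_{E,v})=(\alpha,L_{E,v})$ is not $PH2$. This contradicts Corollary~\ref{Iph2}.

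The only point needing care is bookkeeping: we must confirm that the pairing $(v,w)$ in Theorem~\ref{contra2} allows us to put the cosine on the hopping side with $l=1$ and the trigonometric polynomial $v$ on the potential side. The eigenvector construction in the proof of Theorem~\ref{contra1} produces eigenfunctions of $L^w_{v,\alpha,\phi_0}$, which for $w=2\cos$ is precisely the dual operator whose cocycle is $(\alpha,L_{E,v})$. We also need the normalization $F^*S F=J$ rather than some multiple of $J$; this holds automatically here. No analytic difficulty arises, since all the substance sits in Theorem~\ref{contra2} and Theorem~\ref{dominate1}.
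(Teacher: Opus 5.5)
Your proposal is correct and follows the paper's route. You take $w=2\cos$, so $l=1$ and $L^{v}_{E,2\cos}=S_E^v$, and apply Theorem~\ref{contra2} to $F=B$, $\phi\equiv0$. This contradicts the $PH2$ property of $(\alpha,L^{2\cos}_{E,v})=(\alpha,L_{E,v})$ given by Theorem~\ref{dominate1} and Corollary~\ref{Iph2}, exactly as in the one-line proof of Corollary~\ref{fco1}.

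There are two cosmetic slips. First, the substitution is $w\mapsto 2\cos$ with $v$ kept, not $(v,w)\mapsto(2\cos,v)$. Second, with the paper's $J=\begin{pmatrix}0&1\\-1&0\end{pmatrix}$ one has $S_{2\cos}=-J$, hence $B^*S_{2\cos}B=-J$, so the normalization is not automatic. Replace $B$ by $B\,\mathrm{diag}(1,-1)$ and $\psi$ by $-\psi$ (still of zero mean), or note that the proof of Theorem~\ref{contra2} only uses $|\overrightarrow{u}^*S_w\overrightarrow{v}|=1$.
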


\begin{Corollary}
For any $\alpha\in\R\backslash\Q$, there does not exist $F\in C^\omega(\T,Sp_{2d\times 2}(\C))$  and $\phi,\psi\in C^\omega(\T,\R)$ with $\int_\T \psi(x)dx=0$ such that
\begin{equation}\label{spe}
L^{2\cos}_{E,v}(x)F(x)=e^{2\pi i\phi(x)}F(x+\alpha)R_{\psi(x)}.
\end{equation}
\end{Corollary}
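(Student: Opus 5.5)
The plan is to deduce this directly from Theorem \ref{contra2}, with the roles of the two functions interchanged. This mirrors the corollary following Theorem \ref{contra1}.

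First I match the notation. In Theorem \ref{contra2} the hypothesis concerns the cocycle $L^v_{E,w}$, acting on $\C^{2l}$, where $l$ is the degree of $w$. The conclusion concerns the cocycle $(\alpha,L^w_{E,v})$ attached to the finite-range operator $L^w_{v,\alpha,x}$, whose hopping is $\hat v_k$, $|k|\le d$, and whose potential is $w$.

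In the present statement the hypothesis cocycle is $L^{2\cos}_{E,v}$ with $F\in C^\omega(\T,Sp_{2d\times 2}(\C))$. So I apply Theorem \ref{contra2} with its pair $(v,w)$ replaced by $(2\cos 2\pi x,\, v)$. Then "$l$" becomes $d$, the degree of $v$, and "$d$" becomes $1$. The hypothesis of Theorem \ref{contra2} is then exactly the assumed relation $L^{2\cos}_{E,v}(x)F(x)=e^{2\pi i\phi(x)}F(x+\alpha)R_{\psi(x)}$ with $\int_\T\psi=0$. The theorem yields that the cocycle $(\alpha,L^v_{E,2\cos})$ is not $PH2$.

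Next I identify this cocycle. The operator $L^v_{2\cos,\alpha,x}$ has hopping coefficients $\hat w_{\pm1}=1$, $\hat w_0=0$ (the Fourier coefficients of $2\cos 2\pi x$) and potential $v(x+n\alpha)$. This is precisely the Schr\"odinger operator $H_{v,\alpha,x}$, the dual of $L^{2\cos}_{v,\alpha,\theta}$ in Section \ref{aubry}. Its transfer cocycle is the $SL(2,\R)$ Schr\"odinger cocycle $(\alpha,S^v_E)$.

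Then I check that $(\alpha,S^v_E)$ is automatically $PH2$ in the sense of Definition \ref{defph2}, with range parameter equal to $1$. Condition (1) requires $L_1\ge\cdots\ge L_{0}>L_1$; it reduces to the empty chain and is vacuous. Condition (2) asks for $0$- and $2$-domination, which is trivial in $\C^2$: the stable and unstable bundles are zero and the center is all of $\C^2$, so $\dim E_c=2$.

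This contradicts the conclusion of Theorem \ref{contra2}, so no such $F,\phi,\psi$ exist. The ingredient of Theorem \ref{contra2} responsible for the contradiction is simplicity of point spectrum for the dual operator, via Theorem \ref{tsimp}. For $H_{v,\alpha,x}$ this reduces to the classical Wronskian argument \eqref{lg}, which is consistent.

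The only step needing care is the bookkeeping of which function plays the role of hopping and which of potential in the notation $L^w_{E,v}$ versus $L^v_{E,w}$. I also want to confirm that the degenerate case of range $1$ is genuinely allowed in the proof of Theorem \ref{contra2}. Concretely, its use of the splitting $E_s\oplus E_c\oplus E_u$ should remain valid with $E_s=E_u=\{0\}$. If that proof used $d\ge2$ anywhere, for instance in the block structure of \eqref{strip2}, I would instead check directly that every step there specializes correctly to the $2\times2$ case.
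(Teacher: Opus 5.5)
Your proposal is correct and is the argument the paper intends. You apply Theorem \ref{contra2} with the pair $(v,w)$ replaced by $(2\cos 2\pi x, v)$, so the cocycle shown not to be $PH2$ is $(\alpha,L^v_{E,2\cos})=(\alpha,S^v_E)$, which is automatically $PH2$; this matches the paper's one-line proof of the analogous corollary after Theorem \ref{contra1}, and your check that the range-one case ($E_s=E_u=\{0\}$, $E_c=\C^2$) goes through the proof of Theorem \ref{contra2} is a sensible verification the paper leaves implicit.
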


Proof of Theorem \ref{contra2} will be split into the following two
subsections.
\subsection{Quantitative almost reducibility via rotations
  reducibility} In this subsection, we derive quantitative almost
reducibility from rotations reducibility for quasiperiodic
finite-range operators. Let $p_n/q_n$ be the approximants of the continued fraction expansion of $\alpha$. By the definition of $\beta(\alpha)$, i.e.,
$$
\beta(\alpha)=\limsup\limits_{n\rightarrow\infty}\frac{\ln q_{n+1}}{q_n},
$$
for any $0<\e<\frac{\beta}{100}$, there is a subsequence $q_{n_k}$ of $q_n$ such that
\begin{equation}\label{naaa}
q_{n_{k}+1}>e^{(\beta-\e)q_{n_k}}. \footnote{If $\beta(\alpha)=\infty$, we choose a subsequence such that $q_{n_{k}+1}>e^{100h q_{n_k}}$.}
\end{equation}
The key technical fact is
\begin{Theorem}\label{th1}
For all $\alpha\in\R\backslash\Q$, if there exist $F\in C_h^\omega(\T,Sp_{2l\times 2}(\C))$, and $\phi,\psi\in C_h^\omega(\T,\R)$ with $\int_\T \psi(x)dx=0$ such that
\begin{equation}\label{spe}
L^v_{E,w}(x)F(x)=e^{2\pi i\phi(x)}F(x+\alpha)R_{\psi(x)},
\end{equation}
then we have
\begin{enumerate}
\item if $\beta(\alpha)<h$, then there exists $H\in C^\omega(\T,Sp_{2l\times 2}(\C))$ such that
$$
L^v_{E,w}(x)H(x)=e^{2\pi i\phi_0}H(x+\alpha),
$$
where $\phi_0=\int_\T \phi(x)dx$;
\item if $\beta(\alpha)\geq h$, then there exist $H^k\in C_{h/2}^\omega(\T,Sp_{2l\times 2}(\C))$ and $\e_\phi^k, \e_\psi^k\in C_{h/2}^\omega(\T,\R)$ such that
\begin{equation}\label{al}
L^v_{E,w}(x)H^k(x)=e^{2\pi i\phi_0}H^k(x+\alpha)e^{2\pi i\e^k_\phi}R_{\e_\psi^k(x)},
\end{equation}
with
$$
|H^k|_{\frac{h}{2}}\leq |F|_he^{16\pi(q_{n_k}+e^{-\frac{h}{2}q_{n_k}}q_{n_{k}+1})(|\phi|_h+|\psi|_h)},
$$
$$
|\e_\phi^k|_\frac{h}{2},|\e_\psi^k|_{\frac{h}{2}}\leq e^{-\frac{1}{20}q_{n_k+1}h}(|\phi|_h+|\psi|_h).
$$
\end{enumerate}
\end{Theorem}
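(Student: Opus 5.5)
The plan is to remove the varying phases $\phi$ and $\psi$ by a scalar conjugation, solving cohomological equations exactly where the frequency allows and approximately otherwise. Write $\phi=\phi_0+\tilde\phi$ with $\int\tilde\phi=0$. Since $\psi$ has zero mean and $\tilde\phi$ has zero mean, we look for $g,f$ with
\[
f(x+\alpha)-f(x)=\tilde\phi(x),\qquad g(x+\alpha)-g(x)=\psi(x).
\]
Given such $f,g$, set $H(x)=e^{2\pi i f(x)}F(x)R_{-g(x)}$. Since $R_{\psi(x)}$ commutes with $R_{g}$ and $R_{g(x+\alpha)}R_{\psi(x)}R_{-g(x)}$ reduces to the identity after the cohomological relation, we get
\[
L^v_{E,w}(x)H(x)=e^{2\pi i\phi_0}H(x+\alpha).
\]
The new $H$ still lies in $Sp_{2l\times2}(\C)$: the modulus-one scalar factor cancels in $H^*SH$, and $R_{-g}^*JR_{-g}=J$ because $R\in SO(2,\R)$.

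In case (1), $\beta(\alpha)<h$, the equations are solved in Fourier. We have $\hat f(k)=\hat{\tilde\phi}(k)/(e^{2\pi ik\alpha}-1)$ with $|\hat{\tilde\phi}(k)|\le|\phi|_h e^{-2\pi|k|h}$. Small divisors satisfy $\|k\alpha\|\ge\|q_{n}\alpha\|\ge 1/(2q_{n+1})$ for $q_n\le|k|<q_{n+1}$, and $q_{n+1}\le e^{(\beta+\epsilon)q_n}\le e^{(\beta+\epsilon)|k|}$. The coefficients therefore decay like $e^{-(2\pi h-\beta-\epsilon)|k|}$, so $f,g$ are analytic on a smaller strip. (I would use the $2\pi$ normalization here; the margin is comfortable.) This gives part (1).

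In case (2) the equation cannot be solved exactly, so we truncate. Along the subsequence $q_{n_k}$, let $T_N$ denote the Fourier projection onto $|k|<N:=q_{n_k+1}$. Solve $f(x+\alpha)-f(x)=T_N\tilde\phi$, and similarly for $g$. For $|k|<q_{n_k+1}$ the divisors satisfy $|e^{2\pi ik\alpha}-1|\gtrsim\|q_{n_k}\alpha\|\gtrsim 1/q_{n_k+1}$ in the worst case. We split further. For $|k|<q_{n_k}$ the divisor is at least $1/(2q_{n_k})$, which gives the $q_{n_k}$ term in the exponent. For $q_{n_k}\le|k|<q_{n_k+1}$ the divisor is at least $1/(2q_{n_k+1})$, but those coefficients carry $e^{-2\pi|k|h}\le e^{-hq_{n_k}}$. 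Evaluating on the strip $h/2$ gives the term $e^{-\frac h2 q_{n_k}}q_{n_k+1}$.

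This yields $|f|_{h/2},|g|_{h/2}\lesssim(q_{n_k}+e^{-\frac h2q_{n_k}}q_{n_k+1})(|\phi|_h+|\psi|_h)$. Exponentiating $e^{2\pi i f}$ and the matrix $R_{-g}$ costs at most $e^{2\pi|\Im f|+2\pi|\Im g|}$, which produces the stated bound on $|H^k|_{h/2}$ with the constant $16\pi$.

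The errors are $\e_\phi^k=(I-T_N)\tilde\phi$ and $\e_\psi^k=(I-T_N)\psi$. Their norm on the strip $h/2$ is bounded by $\sum_{|k|\ge q_{n_k+1}}e^{-\pi|k|h}(|\phi|_h+|\psi|_h)$, which is at most $e^{-\frac1{20}q_{n_k+1}h}(\cdots)$. The same conjugation then yields \eqref{al}, since $e^{2\pi i\e_\phi}$ is scalar and the residual rotation $R_{\e_\psi}$ commutes with the other rotations.

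The main obstacle is keeping the constants honest. One needs the precise small-divisor bounds for $q_{n_k}\le|k|<q_{n_k+1}$, and the Fourier normalization ($e^{-2\pi|k|h}$ versus $e^{-|k|h}$) has to be matched to the stated exponents. The symplectic and commutation algebra itself is routine.
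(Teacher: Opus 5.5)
Your argument follows the paper's proof. You conjugate $F$ by the scalar $e^{2\pi i\Phi}$ and a rotation, where $\Phi,\Psi$ solve the cohomological equations for $\phi-\phi_0$ and $\psi$: exactly when $\beta(\alpha)<h$, and up to a Fourier tail when $\beta(\alpha)\ge h$.

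There is one sign slip. With $H=e^{2\pi i f}FR_{-g}$ the residual rotation is $R_{g(x+\alpha)+\psi(x)-g(x)}$. Your equation $g(x+\alpha)-g(x)=\psi$ therefore leaves $R_{2\psi(x)}$, not the identity. Either solve $g(x+\alpha)-g(x)=-\psi$, or keep your equation and use $R_{+g}$. The paper does the latter with $H=e^{2\pi i\Phi}FR_{\Psi}$, and then $\e^k_\psi=(I-T_N)\psi$ as you say.

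The only real difference is the small-divisor bookkeeping in case (2). The paper (Lemma \ref{small}) truncates at $[q_{n_k+1}/6]$ and uses Proposition \ref{naaa1}, which needs $q_{n_k+1}>100q_{n_k}$. That gives $\|j\alpha\|\ge 1/(4q_{n_k})$ for every non-multiple of $q_{n_k}$, so the bad divisor of order $q_{n_k+1}^{-1}$ appears only at multiples of $q_{n_k}$. You truncate at $q_{n_k+1}$ and split only by size, using $\|k\alpha\|\ge\|q_{n_k-1}\alpha\|$ for $0<|k|<q_{n_k}$ and $\|k\alpha\|\ge\|q_{n_k}\alpha\|$ for $0<|k|<q_{n_k+1}$.

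This is legitimate and a bit simpler. On the strip $h/2$, every term with $|k|\ge q_{n_k}$ already carries $e^{-\pi|k|h}$, which absorbs the worse divisor into the $e^{-\frac h2 q_{n_k}}q_{n_k+1}$ term. So the divisibility split gains nothing for the stated bound. Your version also works for every large index, not only along the subsequence. The gap $q_{n_k+1}\gg q_{n_k}$ matters only later, when the error must beat $|H^k|_{h/2}$.

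On keeping the constants honest: the factor $8$ in front of $q_{n_k}$ (hence $16\pi$) silently absorbs $h$-dependent sums such as $\sum_{k\neq 0}e^{-\pi|k|h}$. The tail bound with exponent $\frac1{20}$ needs $k$ large. The paper's Lemma \ref{small} is no more precise on this point, and only the shape of these bounds is used downstream.
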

\begin{pf}
Case (1): If $\beta(\alpha)<h$, the argument is standard. Define
$$
\Psi(x)=\sum\limits_{k\in\Z\backslash\{0\}}\frac{1}{e^{2\pi ik\alpha}-1}\hat{\psi}(k)e^{2\pi i kx},
$$
$$
\Phi(x)=\sum\limits_{k\in\Z\backslash\{0\}}\frac{1}{e^{2\pi ik\alpha}-1}\hat{\phi}(k)e^{2\pi i kx}.
$$
Since $h>\beta(\alpha)$, we have that $\Phi,\Psi\in C^\omega(\T,\R)$ and
$$
\Psi(x+\alpha)-\Psi(x)=\psi(x),\ \ \Phi(x+\alpha)-\Phi(x)=\phi(x)-\phi_0.
$$
For
$$
H(x)=e^{2\pi i\Phi(x)}F(x)R_{\Psi(x)},
$$
since $F\in C^\omega(\T,Sp_{2l\times 2}(\C))$, one can check that $H\in C^\omega(\T,Sp_{2l\times 2}(\C))$ and
$$
L^v_{E,w}(x)H(x)=e^{2\pi i\phi_0}H(x+\alpha).
$$

Case (2): If $\beta(\alpha)\geq h$, we need the following lemma.
\begin{Lemma}\label{small}
For $\alpha\in\R\backslash\Q$ with $\beta(\alpha)\geq h$ and $f\in C_h^\omega(\T,\R)$, there exist sequences of $g_k\in C_{h/2}^\omega(\T,\R)$ such that
$$
|g_k|_{\frac{h}{2}}\leq 8(q_{n_k}+e^{-\frac{h}{2}q_{n_k}}q_{n_{k}+1})|f|_h,
$$
$$
|f(x)-\int_\T f(x)dx-(g_k(x+\alpha)-g_k(x))|_{\frac{h}{2}}\leq e^{-\frac{1}{20}q_{n_k+1}h}|f|_h.
$$
\end{Lemma}
\begin{pf}
First we observe that
\begin{Proposition}\label{naaa1}
For any $0<|k|<\frac{q_{n+1}}{6}$, if $k\notin R_n=\{\ell q_n: \ell\in\Z\}$ and $q_{n+1}>100q_n$, then
$$
\|k\alpha\|_{\R/\Z}\geq \frac{1}{4q_n}.
$$
\end{Proposition}
\begin{pf}
Since $k\notin R_n=\{\ell q_n: \ell\in\Z\}$, we have that
$$
k=\ell_0 q_n+r, \ \ 0< r\leq q_n-1.
$$
On the other hand,
$$
|\ell_0|=\left|\frac{k-r}{q_n}\right|<\frac{q_{n+1}}{6q_n}+1.
$$
Thus
\begin{align*}
\|k\alpha\|_{\R/\Z}&\geq \|r\alpha\|_{\R/\Z}-|\ell_0|\|q_n\alpha\|_{\R/\Z}\\
&\geq \frac{1}{2q_n}-\frac{1}{6q_n}-\frac{1}{q_{n+1}}\geq \frac{1}{4q_n}.
\end{align*}
\end{pf}
Define $N_k=\left[\frac{q_{n_{k}+1}}{6}\right]$ and
$$
g_k(x)=\sum\limits_{j=-N_k}^{-1}\frac{\hat{f}(j)}{e^{2\pi i j\alpha}-1}e^{2\pi ijx}+\sum\limits_{j=1}^{N_k}\frac{\hat{f}(j)}{e^{2\pi i j\alpha}-1}e^{2\pi ijx}.
$$
In view of \eqref{naaa} and Proposition \ref{naaa1}, for $k$ large enough, we distinguish two cases:

Case 1: $0< |j|< \frac{q_{n_k+1}}{6}$, $j\notin R_{n_k}$, then $|e^{2\pi i j\alpha}-1|\geq \frac{1}{4q_{n_k}}$,

Case 2: $q_{n_k}\leq |j|< \frac{q_{n_k+1}}{6}$, $j\in R_{n_k}$, then $|e^{2\pi i j\alpha}-1|\geq \frac{1}{2q_{n_k+1}}$.\\
It follows that
$$
|g_k|_{\frac{h}{2}}\leq 8q_{n_k}|f|_h+8e^{-\frac{h}{2}q_{n_k}}q_{n_{k}+1}|f|_h.
$$
Moreover, we have
$$
f(x)-\int_\T f(x)dx-(g_k(x+\alpha)-g_k(x))=\sum\limits_{|j|\geq q_{n_k+1}/6}\hat{f}(j)e^{2\pi ijx},
$$
which implies that
$$
|f(\cdot)-\int_\T f(x)dx-(g_k(\cdot+\alpha)-g_k(\cdot))|_{\frac{h}{2}}\leq e^{-\frac{1}{20}q_{n_k+1}h}|f|_h.
$$

\end{pf}

Thus, by Lemma \ref{small}, there are  $\Phi^k,\Psi^k\in C_{h/2}^\omega(\T,\R)$ such that
\begin{equation}\label{sd1}
|\Psi^k|_{\frac{h}{2}}+|\Phi^k|_{\frac{h}{2}}\leq 8(q_{n_k}+e^{-\frac{h}{2}q_{n_k}}q_{n_{k}+1})(|\psi|_h+|\phi|_h),
\end{equation}
\begin{equation}\label{sd2}
|\psi-(\Psi^k(\cdot+\alpha)-\Psi^k(\cdot))|_{\frac{h}{2}}\leq e^{-\frac{1}{20}q_{n_k+1}h}|\psi|_h,
\end{equation}
\begin{equation}\label{sd2'}
|\phi-\phi_0-(\Phi^k(\cdot+\alpha)-\Phi^k(\cdot))|_{\frac{h}{2}}\leq e^{-\frac{1}{20}q_{n_k+1}h}|\phi|_h.
\end{equation}
Define
\begin{equation}\label{gyz20241}
H^k(x)=e^{2\pi i\Phi^k(x)}F(x)R_{\Psi^k(x)}.
\end{equation}
We have
\begin{equation}\label{al}
L^v_{E,w}(x)H^k(x)=e^{2\pi i\phi_0}H^k(x+\alpha)e^{2\pi i\e_\phi^k(x)}R_{\e_\psi^k(x)},
\end{equation}
where $\e_\phi^k(x)=\phi(x)-\phi_0-(\Phi^k(x+\alpha)-\Phi^k(x))$, $\e_\psi^k(x)=\psi(x)-(\Psi^k(x+\alpha)-\Psi^k(x))$ and $H^k\in C^\omega(\T,Sp_{2l\times 2}(\C))$. Moreover, by \eqref{sd1}-\eqref{gyz20241},
$$
|H^k|_{\frac{h}{2}}\leq |F|_he^{16\pi(q_{n_k}+e^{-\frac{h}{2}q_{n_k}}q_{n_{k}+1})(|\phi|_h+|\psi|_h)},
$$
$$
|\e_\phi^k|_{\frac{h}{2}}\leq |\phi(\cdot)-\phi_0-(\Phi^k(\cdot+\alpha)-\Phi^k(\cdot))|_{\frac{h}{2}}\leq e^{-\frac{1}{20}q_{n_k+1}h}|\phi|_h,
$$
$$
|\e_\psi^k|_{\frac{h}{2}}\leq |\psi(\cdot)-(\Psi^k(\cdot+\alpha)-\Psi^k(\cdot))|_{\frac{h}{2}}\leq e^{-\frac{1}{20}q_{n_k+1}h}|\psi|_h.
$$

\end{pf}

\subsection{Proof of Theorem \ref{contra2}}
In case  $\beta(\alpha)<h$, Theorem \ref{contra2} follows
immediately from (1) of Theorem \ref{th1} and Theorem \ref{contra1}. In case  $\beta(\alpha)\geq
h$, we prove Theorem \ref{contra2} via quantitative almost
reducibility and quantitative Aubry duality. Essentially, we need to
establish a {\it quantitative version} of Puig's argument for
finite-range operators. The proof proceeds by contradiction. Given $\alpha\in\R\backslash\Q$, we assume
\begin{enumerate}
\item There exist $F\in C_h^\omega(\T,Sp_{2l\times 2}(\C))$ and $\phi,\psi\in C_h^\omega(\T,\R)$ with $\int_\T \psi(x)dx=0$ such that
\begin{equation}\label{spe1}
L^v_{E,w}(x)F(x)=e^{2\pi i\phi(x)}F(x+\alpha)R_{\psi(x)},
\end{equation}
\item $(\alpha,L_{E,v}^w)$ is $PH2.$
\end{enumerate}

By Theorem \ref{th1} and assumption (1),  there exist  $H^k\in C_{h/2}^\omega(\T,Sp_{2l\times 2}(\C))$ and $\e_\phi^k,\e_\psi^k\in C_{h/2}^\omega(\T,\R)$ such that
\begin{equation}\label{al}
L^v_{E,w}(x)H^k(x)=e^{2\pi i\phi_0}H^k(x+\alpha)e^{2\pi i\e_\phi^k(x)}R_{\e_\psi^k(x)},
\end{equation}
with
\begin{align}\label{ese1}
|H^k|_{\frac{h}{2}}\leq |F|_he^{16\pi(q_{n_k}+e^{-\frac{h}{2}q_{n_k}}q_{n_{k}+1})(|\phi|_h+|\psi|_h)},
\end{align}
\begin{align}\label{ese2}
|\e_\phi^k|_{\frac{h}{2}},|\e_\psi^k|_{\frac{h}{2}}\leq e^{-\frac{1}{20}q_{n_k+1}h}(|\phi|_h+|\psi|_h).
\end{align}
Let
\begin{equation}\label{hk}
\begin{pmatrix}
h_{1,1}&h_{1,2}\\
h_{2,1}&h_{2,2}\\
\vdots&\vdots\\
h_{2l,1}&h_{2l,2}
\end{pmatrix}:= H^k\in C^\omega(\T,Sp_{2l\times 2}(\C)).
\end{equation}
Involving the form of $L^v_{E,w}(x)$ and \eqref{al}, one has for $j=1,2$,
\begin{equation}\label{e22}
-\frac{1}{\hat{w}_{l}}\left(\sum\limits_{n=1}^{2l} \hat{w}_{l-n}h_{n,j}(x)+(v(x)-E)h_{l,j}(x)\right)-e^{2\pi i\phi_0}h_{1,j}(x+\alpha)=e^{2\pi i\phi_0}g_{1,j}(x),
\end{equation}
\begin{equation}\label{e33}
h_{m,j}(x)=e^{2\pi i\phi_0}(h_{m+1,j}(x+\alpha)+g_{m+1,j}(x)), \ \ \forall 1\leq m\leq 2l-1,
\end{equation}
where
$$
g_{m,1}(x)=(e^{2\pi i\e_\phi^k(x)}\cos2\pi(\e_\psi^k(x))-1)h_{m,1}(x+\alpha)+e^{2\pi i\e_\phi^k(x)}\sin2\pi(\e_\psi^k(x))h_{m,2}(x+\alpha),
$$
$$
g_{m,2}(x)=(e^{2\pi i\e_\phi^k(x)}\cos2\pi(\e_\psi^k(x))-1)h_{m,2}(x+\alpha)-e^{2\pi i\e_\phi^k(x)}\sin2\pi(\e_\psi^k(x))h_{m,1}(x+\alpha).
$$
It follows from \eqref{e22} and \eqref{e33} that

\begin{equation}\label{e466}
\sum\limits_{n=-l}^l \hat{w}_{n}e^{2\pi in\phi_0}h_{l,j}(x+n\alpha)+(v(x)-E)h_{l,j}(x)=e_j(x),
\end{equation}   
where $e_j$ is a linear combination of $\{g_{m,j}\}_{m=1}^{2l}$ of at most $4l^2$ terms. Hence by \eqref{ese1} and \eqref{ese2}, for $k$ sufficiently large depending on $E,v,w$, we have
\begin{equation}\label{ee1}
|e_j|_{\frac{h}{2}}\leq C(E,v,w)l^2|F|_he^{16\pi(q_{n_k}+e^{-\frac{h}{2}q_{n_k}}q_{n_{k}+1})(|\phi|_h+|\psi|_h)}(|\e_\phi^k|_{\frac{h}{2}}+|\e_\psi^k|_{\frac{h}{2}})\leq e^{-\frac{1}{40}q_{n_k+1}h}.
\end{equation}

Let $h_{l,j}(x)=\sum_n\hat{h}_j(n)e^{2\pi i nx}$ be the Fourier
expansion. By \eqref{e466} and \eqref{lvw}, since $v$ and $w$ are real-valued,  $\left\{\overline{\hat{h}_1(n)}\right\}_{n\in\Z}$ and $\left\{\overline{\hat{h}_2(n)}\right\}_{n\in\Z}$ are two approximate solutions of $L^w_{v,\alpha,\phi_0}u=Eu$, i.e. they satisfy
\begin{equation}\label{new4}
\left((L^w_{v,\alpha,\phi_0}-E)\overline{\hat{h}_1}\right)(n)=\overline{\hat{e}_1(n)},
\end{equation}
\begin{equation}\label{new4-1}
\left((L^w_{v,\alpha,\phi_0}-E)\overline{\hat{h}_2}\right)(n)=\overline{\hat{e}_2(n)},
\end{equation}
where $\{\hat{e}_j(n)\}_{n\in\Z}$ are the Fourier coefficients of $e_j(x)$.

We denote
$$
I_{k}=\left[-e^{-\frac{h}{8}q_{n_k}}q_{n_k+1},e^{-\frac{h}{8}q_{n_k}}q_{n_{k}+1}\right].
$$
Then by \eqref{ese1}, for $j=1,2$, we have
\begin{align}\label{eeee1}
\sum_{n\notin I_k}|\hat{h}_{j}(n)|\leq |F|_he^{16\pi(q_{n_k}+e^{-\frac{h}{2}q_{n_k}}q_{n_{k}+1})(|\phi|_h+|\psi|_h)} e^{-\frac{h}{4}e^{-\frac{h}{8}q_{n_k}}q_{n_k+1}}
\leq e^{-\frac{h}{4}e^{-\frac{h}{4}q_{n_k}}q_{n_k+1}}.
\end{align}
On the other hand, we have
\begin{Lemma}\label{initial}
There exists $n_0\in I_k$, such that
\begin{equation}\label{z1-estimate-21}
|\hat{h}_{1}(n_0)|\geq e^{\frac{h}{16}q_{n_k}}q^{-1}_{n_k+1}
\end{equation}
provided $k\geq K_0(E,\alpha,v,w)$.
\end{Lemma}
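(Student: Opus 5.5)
The plan is to get a lower bound for $h_{l,1}$ on the real circle that is \emph{uniform in $k$}. We then convert it into a lower bound on the $\ell^1$ norm of its Fourier coefficients, remove the tail outside $I_k$ using \eqref{eeee1}, and pigeonhole over the $\approx 2e^{-\frac h8 q_{n_k}}q_{n_k+1}$ integers in $I_k$. The point to be careful about is not to use the crude bound \eqref{ese1} for $|H^k|$: the factor $e^{16\pi q_{n_k}(|\phi|_h+|\psi|_h)}$ in it would destroy the gain $e^{\frac h8 q_{n_k}}$. Instead we exploit the structure $H^k=e^{2\pi i\Phi^k}FR_{\Psi^k}$ from \eqref{gyz20241}.

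\emph{Step 1: the first column of $H^k$ is uniformly nondegenerate on $\T$.} For real $x$, $\Phi^k(x),\Psi^k(x)\in\R$, so $|e^{2\pi i\Phi^k(x)}|=1$ and $R_{\Psi^k(x)}$ is orthogonal. Hence each column of $H^k(x)$ has norm at most $\|F(x)\|\le|F|_0$, a bound independent of $k$. Denote the columns of $H^k$ by $c_1,c_2$. Since $H^k\in Sp_{2l\times2}(\C)$, we have $c_1(x)^*S\,c_2(x)=1$. Therefore
\[
\|c_1(x)\|\ \ge\ \frac{1}{\|S\|\,|F|_0}=:c_0>0\qquad\text{for all }x\in\T,\ k\ge1 .
\]

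\emph{Step 2: transfer the bound to the single entry $h_{l,1}$.} Iterating \eqref{e33} gives, for every $1\le m\le 2l$,
\[
h_{m,1}(x)=e^{2\pi i(l-m)\phi_0}h_{l,1}\bigl(x+(l-m)\alpha\bigr)+r_m(x),
\]
where $r_m$ is a sum of at most $2l$ shifted copies of the $g_{m',1}$. By their definition and by \eqref{ese1}--\eqref{ese2}, $|g_{m',1}|_0\le 2|H^k|_{h/2}\bigl(|\e^k_\phi|_{h/2}+|\e^k_\psi|_{h/2}\bigr)$. Exactly as in \eqref{ee1}, this is at most $e^{-\frac1{40}q_{n_k+1}h}$ for $k$ large. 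Here the superexponential smallness of $\e^k$ does beat the growth of $|H^k|$, because $q_{n_k}$ and $e^{-\frac h2 q_{n_k}}q_{n_k+1}$ are both $o(q_{n_k+1})$. Consequently there exist $m$ and $x$ with $|h_{m,1}(x)|\ge c_0/\sqrt{2l}$, and so
\[
\sup_{\T}|h_{l,1}|\ \ge\ \frac{c_0}{\sqrt{2l}}-2l\,e^{-\frac1{40}q_{n_k+1}h}\ \ge\ \frac{c_0}{2\sqrt{2l}} .
\]

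\emph{Step 3: pigeonhole.} We have $\sum_n|\hat h_1(n)|\ge\sup_\T|h_{l,1}|\ge c_0/(2\sqrt{2l})$. By \eqref{eeee1}, the part of this sum with $n\notin I_k$ is at most $e^{-\frac h4e^{-\frac h4q_{n_k}}q_{n_k+1}}$, which is negligible for $k$ large. Indeed, along the subsequence \eqref{naaa} with $\beta\ge h$ we have $e^{-\frac h4 q_{n_k}}q_{n_k+1}\to\infty$. Hence $\sum_{n\in I_k}|\hat h_1(n)|\ge c_0/(4\sqrt{2l})$. Since $\#I_k\le 3e^{-\frac h8q_{n_k}}q_{n_k+1}$, some $n_0\in I_k$ satisfies
\[
|\hat h_1(n_0)|\ \ge\ \frac{c_0}{12\sqrt{2l}}\,e^{\frac h8q_{n_k}}q_{n_k+1}^{-1}\ \ge\ e^{\frac h{16}q_{n_k}}q_{n_k+1}^{-1}
\]
once $e^{\frac h{16}q_{n_k}}\ge 12\sqrt{2l}/c_0$. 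This fixes $K_0(E,\alpha,v,w)$; the dependence on $F$ enters through $c_0$.

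The main obstacle is Step 1–2: obtaining a $k$-uniform lower bound that avoids the exponentially large norm bound \eqref{ese1}. The key is to take norms only on the real torus, where the conjugation by $e^{2\pi i\Phi^k}R_{\Psi^k}$ is isometric, and to use symplecticity to prevent the first column from degenerating. The other delicate point is checking that the shift errors $r_m$ stay small. This uses the same bookkeeping as \eqref{ee1}, namely that the superexponentially small $\e^k$ dominate the growth of $|H^k|_{h/2}$.
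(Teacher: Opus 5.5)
Your proof is correct and follows the paper's route. Both arguments use the symplectic pairing of the two columns of $H^k$, together with $|H^k|_{C^0}=|F|_{C^0}$ on the real circle, to get a $k$-uniform lower bound on the first column; then transfer it to $h_{l,1}$ through \eqref{e33} with negligible $g_{m,1}$ errors, discard the part outside $I_k$ using \eqref{eeee1}, and pigeonhole over $I_k$. The only difference is that the paper works with $L^2$ norms and pigeonholes $|\hat h_1(n)|^2$, which gives the stronger factor $q_{n_k+1}^{-1/2}$, whereas you work pointwise with the sup norm and pigeonhole in $\ell^1$, which gives exactly the stated $q_{n_k+1}^{-1}$.
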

\begin{pf}
Denote
\begin{equation}\label{lanair3}
\overrightarrow{u}=\begin{pmatrix}
h_{1,1}\\
h_{2,1}\\
\vdots\\
h_{2l,1}
\end{pmatrix},\ \ \overrightarrow{v}=\begin{pmatrix}
h_{1,2}\\
h_{2,2}\\
\vdots\\
h_{2l,2}
\end{pmatrix}
\end{equation}
By \eqref{hk}, \eqref{lanair2} and the fact that $H^k\in C^\omega(\T,Sp_{2l\times 2}(\C))$, we have that
$$
\overrightarrow{u}^*S_{w}\overrightarrow{v}=1. \footnote{Note that \begin{equation}\label{sdef}
C_w=\begin{pmatrix}
\hat{w}_l&\cdots&\hat{w}_1\\
0&\ddots&\vdots\\
0&0&\hat{w}_l
\end{pmatrix},\ \ S=\begin{pmatrix}0&-C_w^*\\
C_w&0\end{pmatrix}
\end{equation}
}
$$
Thus
$$
\|\overrightarrow{u}\|_{L^2}\|S_w\overrightarrow{v}\|_{L^2}\geq 1
$$
which implies that
$$\|\overrightarrow{u}\|_{L^2}\geq \frac{1}{\|S_w\overrightarrow{v}\|_{L^2}}>\frac{1}{C(w)|H^k|_{C^0}}.$$
By \eqref{ese1}, \eqref{ese2} and \eqref{e33}, one has
\begin{align}\label{b111}
2l\|\hat{h}_{1}\|_{\ell^2}\geq \|\overrightarrow{u}\|_{L^2}-C(v,w)l^2\sup_{1\leq m\leq 2l}\|g_{m,1}\|_{L^2}\geq (2C|H^k|_{C^0})^{-1} \geq (2C|F|_{C^0})^{-1}\geq c.
\end{align}
By \eqref{eeee1}, we have that
\begin{align*}
\sum\limits_{n\notin I_k}|\hat{h}_{1}(n)|^2
\leq 2e^{-\frac{h}{4}e^{-\frac{h}{4}q_{n_k}}q_{n_k+1}}.
\end{align*}
By \eqref{b111} and the fact that $|I_k|\leq 2e^{-\frac{h}{8}q_{n_k}}q_{n_k+1}+1$, it follows that there exists $n_0\in I_k$, such that
\begin{align*}
(2e^{-\frac{h}{8}q_{n_k}}q_{n_k+1}+1)|\hat{h}_{1}(n_0)|^2 \geq \sum\limits_{n\in I_k}|\hat{h}_{1}(n)|^2 &=\|\hat{h}_{1}\|_{\ell^2}^2-\sum\limits_{n\notin I_k}|\hat{h}_{1}(n)|^2\\
 &\geq  c(2l)^{-1}-2e^{-\frac{h}{4}e^{-\frac{h}{4}q_{n_k}}q_{n_k+1}} .
\end{align*}
Hence  there exists $K_0>0$, such that
$$
|\hat{h}_{1}(n_0)|\geq  e^{\frac{h}{16}q_{n_k}}q^{-1}_{n_k+1},
$$
provided $k>K_0$.
\end{pf}

For $j=1,2$, we define

$$
\vec{h}_j(n)=\begin{pmatrix}
\hat{h}_{j}(n+d-1)\\
\hat{h}_{j}(n+d-2)\\
\vdots\\
\hat{h}_{j}(n-d)
\end{pmatrix},\ \ \vec{p}_j(n)=\frac{1}{\overline{\hat{v}_d}}\begin{pmatrix}\hat{e}_j(n)\\0\\ \vdots\\ 0\end{pmatrix}.
$$
   
Notice that by \eqref{new4} and \eqref{new4-1}, we have

$$
\overline{\vec{h}_j(n+1)}=L^w_{E,v}(\phi_0+n\alpha)\overline{\vec{h}_j(n)}+\overline{\vec{p}_j(n)},
$$  
so identically we obtain,
\begin{align}\label{vech}
\overline{\vec{h}_j(n)}=(L^w_{E,v})_{n-n_0}(\phi_0+n_0\alpha)\overline{\vec{h}_j(n_0)}+\sum\limits_{k=n_0+1}^n (L_{E,v}^w)_{n-k}(\phi_0+k\alpha)\overline{\vec{p}_j(k-1)}.
\end{align}
where $(n\alpha,(L_{E,v}^w)_n):= (\alpha, L_{E,v}^w)^n$, are the iterates of the dual cocycle.
Here and below, variation-of-constants sums are oriented: if the upper
endpoint is smaller than the lower one, then
$\sum_{k=a+1}^{b}F(k)=-\sum_{k=b+1}^{a}F(k)$ for $b<a$.
This convention makes each displayed identity valid on both sides of
the initial site.

Let $\vec{u}_j(n)$ be solutions of $L^w_{v,\alpha,\phi_0}u=Eu$ with initial data $\overline{\vec{h}_j(n_0)}$,
\begin{align}\label{vecu}
\vec{u}_j(n)=(L^w_{E,v})_{n-n_0}(\phi_0+n_0\alpha)\overline{\vec{h}_j(n_0)},
\end{align}

Involving the $PH2$ property, i.e.,  that there exists a continuous invariant decomposition
$$
\C^{2d}=E_s(x)\oplus E_c(x)\oplus E_u(x).
$$
we have
\begin{Lemma}\label{key1}
For $j=1,2$, there exist
$$
\vec{u}^s_j(n_0)\in E_s(\phi_0+n_0\alpha), \ \ \vec{u}^c_j(n_0)\in E_c(\phi_0+n_0\alpha),\ \ \vec{u}^u_j(n_0)\in E_u(\phi_0+n_0\alpha)
$$
such that
\begin{align*}
\vec{u}_j(n_0)=\vec{u}^s_j(n_0)+\vec{u}^c_j(n_0)+\vec{u}^u_j(n_0).
\end{align*}
Moreover,  there are $C>0$ and $\delta>\delta'>0$, such that
$$
\left\|\vec{u}^s_j(n_0)\right\|,\ \ \left\|\vec{u}^u_j(n_0)\right\|\leq Ce^{-(\delta-\delta') e^{-\frac{h}{100}q_{n_k}}q_{n_k+1}}.
$$
\end{Lemma}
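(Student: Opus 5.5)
The splitting itself comes from the continuous invariant decomposition $\C^{2d}=E_s\oplus E_c\oplus E_u$ supplied by the $PH2$ assumption. Write $\vec{u}_j(n_0)=\overline{\vec{h}_j(n_0)}$ in it; the work is to bound the stable and unstable components. By continuity and compactness of $\T$, the angles between $E_s,E_c,E_u$ are uniformly bounded below, so the three projections are uniformly bounded.

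\textbf{Unstable component.} Fix the horizon $N=e^{-\frac{h}{100}q_{n_k}}q_{n_k+1}$, which is much larger than the half-length $e^{-\frac h8 q_{n_k}}q_{n_k+1}$ of $I_k$.
\begin{itemize}
\item Evaluate \eqref{vech} at $n=n_0+N$ and project onto $E_u(\phi_0+(n_0+N)\alpha)$ along $E_s\oplus E_c$. By invariance the projection commutes with the iterates, so
\[
P_u\overline{\vec h_j(n_0+N)}=(L^w_{E,v})_N\,\vec u^u_j(n_0)+\sum_{m}(L^w_{E,v})_{n_0+N-m}\,P_u\overline{\vec p_j(m-1)}.
\]
\item The left side is at most $e^{-\frac h4 e^{-\frac h4 q_{n_k}}q_{n_k+1}}$ by \eqref{eeee1}, since $n_0+N\notin I_k$.
\item By \eqref{ee1} and Parseval, $|\hat e_j(m)|\le e^{-\frac1{40}q_{n_k+1}h}$. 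The propagators in the inhomogeneous sum have norm at most $Ce^{\Lambda N}$, with $\Lambda$ bounded by $\log\sup\|L^w_{E,v}\|$. Because $N\ll q_{n_k+1}$, the whole sum is at most $Ne^{\Lambda N}e^{-\frac1{40}hq_{n_k+1}}$, which is negligible.
\item The uniform expansion \eqref{eq11} on $E_u$ gives $\|(L^w_{E,v})_N\vec u^u_j(n_0)\|\ge C^{-1}e^{\delta N}\|\vec u^u_j(n_0)\|$.
\item Hence $\|\vec u^u_j(n_0)\|\le C e^{-\delta N}\bigl(e^{-\frac h4 e^{-\frac h4q_{n_k}}q_{n_k+1}}+\text{negligible}\bigr)\le Ce^{-(\delta-\delta')N}$.
\end{itemize}

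\textbf{Stable component.} Run the same argument backwards: evaluate at $n=n_0-N$ using the oriented sum convention, project onto $E_s$, and use the expansion \eqref{eq10} of $E_s$ under backward iteration.

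\textbf{Obstacle.} The delicate step is matching scales. The factor $e^{\delta N}$ and the error growth $e^{\Lambda N}$ must both be controlled by the $\ell^1$ tail bound \eqref{eeee1} and by the exponentially small $\hat e_j$. This is why $N$ sits strictly between the width of $I_k$ and $q_{n_k+1}$.

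A further concern is that the stated bound allows the $\vec h_j$ themselves to have norm up to $|H^k|$, which grows like $e^{O(q_{n_k}+e^{-\frac h2 q_{n_k}}q_{n_k+1})}$. This factor is dominated by $e^{(\delta-\delta')N}$ only after slightly shrinking the exponent. The loss is absorbed by using $\delta-\delta'$ rather than $\delta$ in the final estimate, with $\delta'$ the center growth rate from \eqref{eq13}. If $\delta N$ fails to dominate $e^{-\frac h2 q_{n_k}}q_{n_k+1}$, I would lower the choice of $N$ accordingly; any exponent $e^{-cq_{n_k}}q_{n_k+1}$ with $c<\frac h8$ suffices for the later contradiction.
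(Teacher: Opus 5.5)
Your proposal is correct and is essentially the paper's argument. The paper first compares the true solution $\vec u_j$ with $\overline{\vec h_j}$ on a window of length $\ll q_{n_k+1}$, using \eqref{ee1} and the crude growth bound. It then deduces from \eqref{eeee1} that $\vec u_j$ is tiny at distance $\sim e^{-\frac{h}{100}q_{n_k}}q_{n_k+1}$ on both sides, and applies Lemma~\ref{lee2}: bounded projections together with uniform expansion of $E_u$ forward and of $E_s$ backward. You merge these two steps by projecting the variation-of-constants formula directly. Your final worry about the growth of $|H^k|$ is unnecessary, since that factor is already absorbed into the tail bound \eqref{eeee1}.
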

\begin{pf}
Note that by \eqref{vech}, \eqref{vecu} and \eqref{ee1},  for $j=1,2$ and $|n|\leq e^{-\frac{h}{1000}q_{n_k}}q_{n_k+1}$, we have
\begin{align}\label{eeee22}
|u_j(n)-\overline{\hat{h}_j(n)}|\leq q_{n_k+1}e^{Ce^{-\frac{h}{1000}q_{n_k}}q_{n_k+1}}e^{-\frac{1}{40}q_{n_k+1}h}\leq e^{-\frac{1}{80}q_{n_k+1}h}.
\end{align}
Combining \eqref{eeee1} and \eqref{eeee22}, we have for $e^{-\frac{h}{20}q_{n_k}}q_{n_k+1}\leq|n|\leq e^{-\frac{h}{100}q_{n_k}}q_{n_k+1}$,
\begin{equation}\label{esu}
|u_j(n)|\leq |\overline{\hat{h}_j(n)}|+e^{-\frac{1}{80}q_{n_k+1}h}\leq 2e^{-\frac{h}{4}e^{-\frac{h}{4}q_{n_k}}q_{n_k+1}}.
\end{equation}
Since $(\alpha,L_{E,v}^w)$ is one-frequency $PH2$, there exist continuous invariant decompositions
$$
\C^{2d}=E_s(x)\oplus E_c(x)\oplus E_u(x).
$$
Moreover, there are $C(E),\delta(E)>\delta'(E)>0$, such that for any $x\in\T$ and $n\geq 1$, we have
\begin{align}\label{eq10a1}
\left\|(L_{E,v}^{w})_{-n}(x)v\right\|>C^{-1}e^{\delta n},\ \  \forall v\in E_s(x)\backslash\{0\},\ \ \|v\|=1,
\end{align}
\begin{align}\label{eq11a1}
\left\|(L_{E,v}^{w})_{n}(x)u\right\|>C^{-1}e^{\delta n},\ \  \forall v\in E_u(x)\backslash\{0\},\ \ \|u\|=1,
\end{align}
\begin{align}\label{eq13a1}
\left\|(L_{E,v}^{w})_{\pm n}(x)w\right\|<Ce^{\delta' n},\ \  \forall v\in E_c(x)\backslash\{0\},\ \ \|w\|=1,
\end{align}
\begin{align}\label{eq12a1}
{\rm dim} E_c(x)=2.
\end{align}
Thus, for $j=1,2$, there exist
$$
\vec{u}^s_j(n_0)\in E_s(\phi_0+n_0\alpha), \ \ \vec{u}^c_j(n_0)\in E_c(\phi_0+n_0\alpha),\ \ \vec{u}^u_j(n_0)\in E_u(\phi_0+n_0\alpha)
$$
such that
\begin{align*}
\vec{u}_j(n_0)=\vec{u}^s_j(n_0)+\vec{u}^c_j(n_0)+\vec{u}^u_j(n_0).
\end{align*}
By \eqref{esu} and Lemma \ref{lee2}, we have
\begin{equation}\label{err}
\left\|\vec{u}^s_j(n_0)\right\|,\ \ \left\|\vec{u}^u_j(n_0)\right\|\leq Ce^{-(\delta-\delta') e^{-\frac{h}{100}q_{n_k}}q_{n_k+1}}.
\end{equation}
\end{pf}
Let $n_k=[(\delta-\delta')e^{-\frac{h}{50}q_{n_k}}q_{n_{k+1}}]$. 
The following proposition is crucial
\begin{Proposition}\label{ff5}
There exist $K_1(E,v,w,\alpha)$ such that if $k>K_1$, then
\begin{equation}\label{fwn}
\det{\begin{pmatrix}\vec{u}_1^c(n_0)&\vec{u}_2^c(n_0)\end{pmatrix}^*\begin{pmatrix}\vec{u}_1^c(n_0)&\vec{u}_2^c(n_0)\end{pmatrix}} \geq e^{-n_k}.
\end{equation}
\end{Proposition}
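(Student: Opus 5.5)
Throughout, $\vec{u}_1,\vec{u}_2$ are the dual solutions \eqref{vecu}, with initial data $\overline{\vec{h}_1(n_0)},\overline{\vec{h}_2(n_0)}$ and center components $\vec{u}^c_1(n_0),\vec{u}^c_2(n_0)$. The determinant in \eqref{fwn} is the Gram determinant of these center components, so the task is a quantitative lower bound on how far they are from being linearly dependent. The plan is to transport the nondegeneracy of $H^k$ (the relation $\overrightarrow{u}^*S_w\overrightarrow{v}=1$) through Aubry duality to the pair of approximate eigenvectors $\overline{\hat h_1},\overline{\hat h_2}$. From there it passes to the genuine solutions $\vec u_1,\vec u_2$, and finally to their center components.

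\textbf{Step 1: $\ell^2$-level nondegeneracy.} I would show that $\overline{\hat h_1}$ and $\overline{\hat h_2}$ are quantitatively independent. The symplectic relation $\overrightarrow{u}^*S_w\overrightarrow{v}=1$ holds pointwise in $x$. Relation \eqref{e33} expresses each $h_{m,j}$, up to the exponentially small errors $g_{m,j}$, as a phase times a shift of $h_{l,j}$. So $\overrightarrow{u}^*S_w\overrightarrow{v}$ becomes, up to $e^{-cq_{n_k+1}h}$ errors, a finite bilinear combination of shifts of $h_{l,1}$ and $h_{l,2}$ with coefficients depending on $C_w$ and $\phi_0$. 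Integrating over $\T$ and applying Parseval gives a sesquilinear pairing $\sum_n \hat h_1(n)\overline{\hat h_2(n)}\,m(n)$ with bounded multiplier $m(n)$, and this pairing is $\ge 1/2$. Combined with the upper bounds on $\|\hat h_j\|_{\ell^2}$ from \eqref{ese1}, the $2\times2$ Gram matrix of $\overline{\hat h_1},\overline{\hat h_2}$ has determinant at least $e^{-Cq_{n_k}}$, up to the factor $e^{Ce^{-hq_{n_k}/2}q_{n_k+1}}$ coming from $|H^k|$.

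\textbf{Step 2: localization to a window and passage to the true solutions.} By \eqref{eeee1}, essentially all of the mass of $\hat h_j$ sits in $I_k$. By \eqref{eeee22}, $\vec u_j(n)$ agrees with $\overline{\vec h_j(n)}$ up to $e^{-q_{n_k+1}h/80}$ on $|n|\le e^{-\frac{h}{1000}q_{n_k}}q_{n_k+1}$. The Gram nondegeneracy of Step 1 therefore transfers to the restrictions of $u_1,u_2$ to this window, at the cost of a negligible error.

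\textbf{Step 3: from the window to the center components at $n_0$.} This is the main obstacle. On the window, $u_j(n)$ is the cocycle orbit of $\vec u_j(n_0)$. If $\vec u^c_1(n_0)$ and $\vec u^c_2(n_0)$ were nearly parallel, some unit combination $a\vec u_1(n_0)+b\vec u_2(n_0)$ would have center component smaller than $e^{-n_k/2}$. By Lemma \ref{key1}, its stable and unstable components are at most $Ce^{-(\delta-\delta')e^{-\frac{h}{100}q_{n_k}}q_{n_k+1}}$. By \eqref{eq13a1}, the center part grows at most like $e^{\delta'|n|}$, and the stable/unstable parts are controlled in the same way. The combination $au_1+bu_2$ would then have norm below $e^{-n_k/4}$ on a window of length comparable to $e^{-\frac{h}{50}q_{n_k}}q_{n_k+1}$ around $n_0$. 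The choice $n_k=[(\delta-\delta')e^{-\frac{h}{50}q_{n_k}}q_{n_{k+1}}]$ is made exactly so that $\delta'$ times the window length stays below the decay budget.

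The delicate point is that the window from Step 2 is wider than $n_k$. Outside the $n_k$-neighborhood I would instead use the tail bound \eqref{esu}, which gives smallness there as well. Together these give $\|a u_1+b u_2\|_{\ell^2(\text{window})}\ll e^{-Cq_{n_k}}$. This contradicts Step 1, since $\log(1/\text{Gram det})\lesssim q_{n_k}\ll n_k$ for $k>K_1$. The main care needed is in bookkeeping the competing exponential scales $q_{n_k}$, $e^{-ch q_{n_k}}q_{n_k+1}$ and $q_{n_k+1}$ (using \eqref{naaa}), so that each error is dominated by the next. I expect this calibration of window lengths against $\delta'$, rather than any conceptual step, to be the real difficulty.
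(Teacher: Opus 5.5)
Your architecture matches the paper's. You argue by contradiction and use \eqref{err} to pass from near-dependence of the center components at $n_0$ to near-dependence of $\vec u_1(n_0),\vec u_2(n_0)$. You then propagate and compare with $\overline{\hat h_j}$ via \eqref{eeee22} on an inner window, use \eqref{eeee1} outside it, and contradict the symplectic normalization of $H^k$.

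\textbf{The gap is in Step~1, which fails as written.} A bound $\bigl|\sum_n\hat h_1(n)\overline{\hat h_2(n)}\,m(n)\bigr|\ge\tfrac12$ with $m$ bounded, plus upper bounds on $\|\hat h_j\|_{\ell^2}$, does not bound the Gram determinant from below. If $\hat h_2=c\,\hat h_1$, the pairing is a multiple of $\sum_n|\hat h_1(n)|^2m(n)$, which has no reason to vanish. What is missing is isotropy. $H^k(x)\in Sp_{2l\times2}(\C)$ means the full identity $(H^k)^*S_wH^k=J$ on $\T$, so also $\overrightarrow u^*S_w\overrightarrow u=\overrightarrow v^*S_w\overrightarrow v=0$. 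The paper uses exactly this when it writes $b(x)=\overrightarrow u^*S_w(\overrightarrow v-c_1\overrightarrow u)=\overrightarrow u^*S_w\overrightarrow v$. With it, for every unit $(a,b)\in\C^2$, the vectors $X=a\overrightarrow u+b\overrightarrow v$ and $Y=-b\overrightarrow u+a\overrightarrow v$ satisfy $X^*S_wY=|a|^2+|b|^2=1$ pointwise. Integrate over $\T$ and use Cauchy--Schwarz, unitarity of translations and \eqref{e33}, exactly as in \eqref{b111}. This gives $\|a\hat h_1+b\hat h_2\|_{\ell^2}\ge c(l,w,|F|_0)>0$. That is the nondegeneracy you want: a constant lower bound on the smallest eigenvalue of the Gram matrix, with no $e^{-Cq_{n_k}}$ loss, since only $|H^k|_0=|F|_0$ enters and not the strip norm \eqref{ese1}.

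\textbf{With this repair the rest goes through, and your packaging simplifies the paper's.} The paper works with the specific combination $\overrightarrow v-c_1\overrightarrow u$, with $c_1$ read off at $n_0$. So it needs Lemma~\ref{initial} to bound $\|\vec u_1(n_0)\|$ from below, hence $|c_1|$ from above. It then needs $C^0$ smallness of $h_{l,2}-c_1h_{l,1}$, obtained by summing Fourier coefficients, to contradict $|b(x)|=1$ pointwise. Your symmetric unit-vector $\ell^2$ version needs neither.

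\textbf{The window calibration you flag as the main difficulty does not arise.} Since \eqref{eeee1} already controls all $n\notin I_k$, you only need to propagate across $I_k$. Its length is about $e^{-\frac h8q_{n_k}}q_{n_k+1}\ll n_k$, so the crude bound \eqref{ub} on the whole vector suffices; the paper propagates with $C^{|n|}$ on $|n|\le e^{-\frac h{10}q_{n_k}}q_{n_k+1}$. The PH2 rates are not needed in this step. Your claim that stable and unstable parts are ``controlled in the same way'' as the center is false as stated, since stable vectors expand backward at rate at least $\delta$, but it is harmless here.

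Finally, $\det<e^{-n_k}$ for the center Gram matrix only gives a unit combination with center part below $e^{-n_k/4}$, via $\lambda_{\min}\le\sqrt{\det}$. That is still enough.
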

\begin{pf}
By \eqref{eq12a1}, there is a unit vector $\vec{v}_1^c(n_0)\in E_c(\phi_0+n\alpha)$ which is orthogonal to $\vec{u}_1^c(n_0)$ such that
\begin{equation}\label{gjnew2}
\vec{u}_2^c(n_0)=\frac{\langle \vec{u}_2^c(n_0), \vec{u}_1^c(n_0)\rangle}{\|\vec{u}_1^c(n_0)\|^2}\vec{u}_1^c(n_0)+\langle \vec{u}_2^c(n_0), \vec{v}_1^c(n_0)\rangle  \vec{v}_1^c(n_0).
\end{equation}
By direct calculation, we have
\begin{equation}\label{gjnew1}
\det{\begin{pmatrix}\vec{u}_1^c(n_0)&\vec{u}_2^c(n_0)\end{pmatrix}^*\begin{pmatrix}\vec{u}_1^c(n_0)&\vec{u}_2^c(n_0)\end{pmatrix}}=|\langle \vec{u}_2^c(n_0),\vec{v}_1^c(n_0)\rangle|^2 \|\vec{u}_1^c(n_0)\|^2.
\end{equation}

We now prove \eqref{fwn} by contradiction. Assume
$$
\det{\begin{pmatrix}\vec{u}_1^c(n_0)&\vec{u}_2^c(n_0)\end{pmatrix}^*\begin{pmatrix}\vec{u}_1^c(n_0)&\vec{u}_2^c(n_0)\end{pmatrix}}< e^{-n_k},
$$
By Lemma \ref{initial} and the fact that $\|H^k\|_{C^0}=\|F\|_{C^0}\leq C$, for $j=1,2$, one has
\begin{equation}\label{ff4}
e^{\frac{h}{16}q_{n_k}}q^{-1}_{n_k+1}\leq \|\vec{u}_{j}(n_0)\| \leq 2Cd.
\end{equation}
Thus by \eqref{gjnew1} and  \eqref{ff4}, we further have
\begin{equation}\label{error}
|\langle \vec{u}_2^c(n_0),\vec{v}_1^c(n_0)\rangle|\leq e^{-\frac{h}{16}q_{n_k}}q_{n_k+1}e^{-\frac{n_k}{3}}.
\end{equation}
By \eqref{err}, we have
\begin{equation}\label{gj1}
\|\vec{u}_j^c(n_0)-\vec{u}_j(n_0)\| \leq Ce^{-(\delta-\delta') e^{-\frac{h}{100}q_{n_k}}q_{n_k+1}}\ll e^{-n_k}.
\end{equation} 
By \eqref{gjnew2}, \eqref{error}, \eqref{ff4} and \eqref{gj1}, we have that
\begin{equation}\label{errornew}
\left\|u_2(n_0)-\frac{\langle u_2(n_0),u_1(n_0)\rangle}{\|u_1(n_0)\|^2}u_1(n_0)\right\|\leq e^{-\frac{1}{4}n_k}
\end{equation}
which means the orthogonal projection of $\vec{u}_2(n_0)$ on the vector  $\vec{u}_1(n_0)$ is large.  

In the following, we consider
\begin{align*}
b(x)=\overrightarrow{u}(x)^*S_w(\overrightarrow{v}(x)-c_1\overrightarrow{u}(x))=\overrightarrow{u}^*(x)S_w\overrightarrow{v}(x)\end{align*}
where $c_1=\overline{\frac{\langle u_2(n_0),u_1(n_0)\rangle}{\|u_1(n_0)\|^2}}$, $\overrightarrow{u}(x), \overrightarrow{v}(x)$ are defined by \eqref{lanair3}, and aim to estimate $b(x)$.   First, we will show as a consequence of
\eqref{errornew}, that
$
\overrightarrow{v}(x)-c_1\overrightarrow{u}(x)
$
is small. To this end, we only need to estimate its
Fourier coefficients
$$
\hat{h}_2(n)-c_1\hat{h}_1(n)=\int_\T (h_{l,2}(x)-c_1h_{l,1}(x))e^{-2\pi inx}dx.
$$  
Note that by \eqref{ff4}, we have
\begin{equation}\label{gyz20242}
|c_1|\leq C(v,w)e^{-\frac{h}{16}q_{n_k}}q_{n_k+1}.
\end{equation}

If $|n|\leq e^{-\frac{h}{10}q_{n_k}}q_{n_k+1}$, we set
$$
\overline{\tilde{p}_n}=\frac{1}{\overline{\hat{v}_d}}\begin{pmatrix}\hat{e}_2(n)-c_1\hat{e}_1(n)\\0\\ \vdots\\ 0\end{pmatrix},
$$%
$$
\overline{\tilde{y}_n}=\begin{pmatrix}\hat{h}_2(n+d-1)-c_1\hat{h}_1(n+d-1)\\ \hat{h}_2(n+d-2)-c_1\hat{h}_1(n+d-2)\\
\vdots\\ \hat{h}_2(n-d)-c_1\hat{h}_1(n-d)\end{pmatrix}.
$$
Then as a result of \eqref{new4} and \eqref{new4-1}, we have
$$\tilde{y}_{n+1}=L^w_{E,v}(\phi_0+n\alpha)\tilde{y}_{n}+\tilde{p}_n,$$
which implies that
$$\tilde{y}_n=(L_{E,v}^w)_{n-n_0}(\phi_0+n_0\alpha)\tilde{y}_{n_0}+\sum\limits_{j=n_0+1}^n (L_{E,v}^w)_{n-j}(\phi_0+j\alpha)\tilde{p}_{j-1}.
$$

To give an estimate of $\tilde{y}_n$,  first note that  by \eqref{ee1}, \eqref{vech}, \eqref{vecu}, \eqref{errornew} and \eqref{gyz20242}, we have
\begin{eqnarray*}
 \|\tilde{p}_n]| & \leq&  e^{-\frac{1}{80}q_{n_k+1}h},\\
  \|\tilde{y}_{n_0}\|&\leq &\left\|\vec{u}_2(n_0)-\overline{c_1}\vec{u}_1(n_0)\right\|\leq e^{-\frac{1}{4}n_k}.
\end{eqnarray*}
On the other hand, we have
\begin{equation}\label{ub}
|(L_{E,v}^w)_n|_{C^0}\leq C^{|n|}, \quad \forall n\in \Z.
\end{equation}%
As a result, if $k$ is sufficiently large depending on $E,v,w$, then one can estimate
\begin{eqnarray}\label{new102}
 \nonumber \|\tilde{y}_n\|&\leq& C^{e^{-\frac{h}{10}q_{n_k}}q_{n_k+1}} e^{-\frac{1}{4}n_k}+2e^{-\frac{h}{100}q_{n_k}}q_{n_k+1}C^{e^{-\frac{h}{100}q_{n_k}}q_{n_k+1}} e^{-\frac{1}{80}q_{n_k+1}h}\\
&\leq&e^{-\frac{1}{8}n_k}.
\end{eqnarray}

By \eqref{eeee1} and \eqref{new102}, there is $K_2>0$ such that if $|k|\geq K_2$,
$$
\left|h_{l,2}-c_1 h_{l,1}\right|_{C^0}\leq e^{-\frac{h}{8}e^{-\frac{h}{2}q_{n_k}}q_{n_k+1}}.
$$
As a consequence, by \eqref{ese2}, \eqref{e33} and \eqref{gyz20242},
\begin{align*}
\left|\overrightarrow{v}-c_1\overrightarrow{u}\right|_{C^0}\leq 2l\left(\left|h_{l,2}-c_1 h_{l,1}\right|_{C^0}+ 4l^2e^{-\frac{1}{80}q_{n_k+1}h}\right)\leq  e^{-\frac{h}{16}e^{-\frac{h}{2}q_{n_k}}q_{n_k+1}}.
\end{align*}
This contradicts 
\begin{align*}
|b(x)|= |(\overrightarrow{v}(x)-c_1\overrightarrow{u}(x))^*S_w\overrightarrow{u}(x)|=1.
\end{align*}
\end{pf}

Finally by \eqref{eeee1} and \eqref{eeee22}, taking $m=[100\max\{1,h^{-1}\}n_k]$, we have
$$
|\vec{u}_j(m)|\leq |\vec{h}_j(m)|+e^{-\frac{1}{80}q_{n_k+1}h}\leq e^{-50n_k}.
$$
It follows that for $i,j=1,2$,
\begin{equation}\label{gj2}
\left|\vec{u}^*_i(n_0)S_v\vec{u}_j(n_0)\right|=\left|\vec{u}^*_i(m)S_v\vec{u}_j(m)\right|\leq e^{-50n_k}.
\end{equation}
By \eqref{ff4}, \eqref{gj1} and \eqref{gj2}, we have
\begin{equation}\label{gj3}
\left|(\vec{u}^c_i(n_0))^*S_v\vec{u}^c_j(n_0)\right|\leq Ce^{-50n_k},\ \ i,j=1,2.
\end{equation}

Recall that $(\alpha,L_{E,v}^w)$ is $PH2$ and there exist continuous invariant decompositions
$$
\C^{2d}=E_s(x)\oplus E_c(x)\oplus E_u(x),\ \ {\rm dim} E_c(x)=2.
$$
Hence there are continuous linearly independent vectors $\vec{v}_1(x), \vec{v}_2(x)\in E^c(x)$. We define
$$
\Omega(x)=\begin{pmatrix}v_1(x)& v_2(x)\end{pmatrix}\left(\begin{pmatrix}v_1(x)& v_2(x)\end{pmatrix}^*S_v \begin{pmatrix}v_1(x)& v_2(x)\end{pmatrix}\right)^{-1}\begin{pmatrix}v_1(x)& v_2(x)\end{pmatrix}^*
$$
then $\Omega(x)$ does not depend on the choice of the basis of $E^c(x)$ and 
\begin{align}\label{gjnewa}
\nonumber &\left|\det {\begin{pmatrix}v_1(x)& v_2(x)\end{pmatrix}^*
\begin{pmatrix}v_1(x)& v_2(x)\end{pmatrix}\left(\begin{pmatrix}v_1(x)& v_2(x)\end{pmatrix}^*S_v \begin{pmatrix}v_1(x)& v_2(x)\end{pmatrix}\right)^{-1}}\right|_{C^0}\\
=&\tr (\Lambda^2\Omega(x))\leq C.
\end{align}
Here we use that ${\rm Rank}(\Omega(x))=2$ so $\tr (\Lambda^2\Omega(x))$ is the product of its non-zero eigenvalues, which is equal to the determinant of the $2\times 2$ matrix defined by \eqref{gjnewa}.

On the other hand, by Proposition \ref{ff5}, $u_1^c(n_0), u_2^c(n_0)$ form a basis of $E_c(\phi_0+n_0\alpha)$, so together with \eqref{fwn} and   \eqref{gj3}, we have
\begin{align*}
&\det{\begin{pmatrix}u_1^c(n_0)& u^c_2(n_0)\end{pmatrix}^*\begin{pmatrix}u_1^c(n_0)& u^c_2(n_0)\end{pmatrix}\left(\begin{pmatrix}u_1^c(n_0)& u^c_2(n_0)\end{pmatrix}^*S_v \begin{pmatrix}u_1^c(n_0)& u^c_2(n_0)\end{pmatrix}\right)^{-1}}\\
=&\frac{\det{\begin{pmatrix}u_1^c(n_0)& u^c_2(n_0)\end{pmatrix}^*\begin{pmatrix}u_1^c(n_0)& u^c_2(n_0)\end{pmatrix}}}{\det \begin{pmatrix}u_1^c(n_0)& u^c_2(n_0)\end{pmatrix}^*S_v \begin{pmatrix}u_1^c(n_0)& u^c_2(n_0)\end{pmatrix}}\geq e^{40n_k}
\end{align*}
which contradicts  \eqref{gjnewa} if $k$ is sufficiently large. Thus \eqref{spe1} is not compatible
with $(\alpha,L_{E,v}^w)$ being $PH2$. \qed

\section{Kotani theory for {one-frequency $PH2$} operators. Proof
of Theorem \ref{L2 reducibility}}\label{kotanis1}
In this section, we assume $w\in C^\omega(\T,\R)$ and
$$
v(x)=\sum\limits_{k=-d}^d \hat{v}_k e^{2\pi ikx}.
$$
Consider the one-frequency finite-range operator $L^w_{v,\alpha,x}$  given by
\eqref{lvw} with $v,w$ given above. Recall that  the cocycle induced  is defined by the eigenequation
$L^w_{v,\alpha,x}u=Eu$ by $(\alpha,L_{E,v}^w),$ and denote 
$$
S=\begin{pmatrix}
0&-C^*\\
C&0
\end{pmatrix},\ \  C=\begin{pmatrix}
\hat{v}_d&\cdots&\hat{v}_1\\
&\ddots&\vdots\\
&&\hat{v}_d
\end{pmatrix}.
$$

For any $E\in\R$, $\delta$,  $h>0$, let 
\begin{equation}\label{ha4}
R_{\delta}(E)\times\Omega_h=\{z\in\C:\max\{|\Re z-E|,|\Im z|\}<\delta\}\times \{z\in\C:e^{-h}<|z|<e^h\}.
\end{equation} 
Let $V$ be a complex
holomorphic vector bundle over $\R_\delta(E)\times \Omega_h$ and let $\pi:V\rightarrow R_\delta(E)\times \Omega_h$ be the bundle projection. A holomorphic vector bundle $V$ of rank $r$ over $R_\delta(E)\times\Omega_h$ is called trivial if it is isomorphic to the bundle $(R_\delta(E)\times\Omega_h)\times \C^r$. This is equivalent to the existence of a global frame $v_1,\cdots,v_r$ of holomorphic sections  in $V$ over $R_\delta(E)\times\Omega_h$, such that for each $(z_1,z_2)\in R_\delta(E)\times \Omega_h$,  the elements $v_1(z_1,z_2),\cdots,v_r(z_1,z_2)\in \pi^{-1}(z_1,z_2)$ are linearly independent.
\begin{Proposition}\label{trivial}
Any holomorphic vector bundle $V$ over $R_\delta(E)\times\Omega_h$ is trivial.
\end{Proposition}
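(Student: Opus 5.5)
The base $R_\delta(E)\times\Omega_h$ is a product of a square (biholomorphic to the disc) and an annulus. Both are open subsets of $\C$, so both are Stein, and the product is a Stein manifold. The natural route is therefore the Oka--Grauert principle. On a Stein manifold, holomorphic vector bundles are classified up to holomorphic isomorphism by their topological (continuous) isomorphism classes. So it suffices to show that every complex vector bundle of rank $r$ over this base is topologically trivial.

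The base deformation retracts onto a circle. The square $R_\delta(E)$ is contractible, and the annulus $\Omega_h$ retracts onto $\{|z|=1\}$. Hence continuous rank-$r$ complex bundles over the base correspond to bundles over $S^1$, that is, to homotopy classes of clutching maps $S^0\to GL(r,\C)$, equivalently to $\pi_0(GL(r,\C))$. Since $GL(r,\C)$ is connected, every such bundle is topologically trivial. Grauert's theorem then gives a holomorphic global frame $v_1,\dots,v_r$, which is exactly triviality as defined before the proposition.

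If one prefers to avoid the full Oka--Grauert principle, there is a more hands-on route. First, on the contractible Stein factor every bundle is trivial; alternatively, use that $H^1(X,\mathcal O)=0$ and $H^2(X,\Z)=0$ on the product. These facts handle line bundles via the exponential sequence. For higher rank, one would argue by induction using Cartan's Theorem B: find a nowhere-vanishing holomorphic section and split off a trivial line subbundle, with the quotient again holomorphic. Since Theorem B gives $\mathrm{Ext}^1$ vanishing, the extensions split. The main obstacle in this elementary route is producing a nowhere-vanishing section in complex dimension $2$ for rank $r\ge2$, since generic sections of rank-$2$ bundles can vanish at isolated points. This is why I would simply cite Grauert's Oka principle, which settles that step in one line together with the topological computation above.
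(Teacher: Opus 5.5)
Your proof is correct and follows the paper's own argument. The base is Stein as a product of planar domains and is homotopy equivalent to $S^1$, so every bundle over it is topologically trivial; your remark that $GL(r,\C)$ is connected supplies the justification the paper leaves implicit. Grauert's Oka principle (the paper cites Theorem 5.3.1 of Forstneri\v{c}) then gives holomorphic triviality. The hands-on alternative you sketch is unnecessary and, as you note yourself, it stalls at producing nowhere-vanishing sections of rank-$2$ bundles, so it should not be presented as part of the proof.
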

\begin{pf}
Since $R_\delta(E)$ is contractible and $\Omega_h$ is homotopy equivalent to $\mathbb{S}^1$, then $R_\delta(E)\times \Omega_h$ is homotopy equivalent to $\mathbb{S}^1$, which implies that $V$ is topologically trivial. On the other hand,  $R_\delta(E)\times \Omega_h$ is a product of Stein manifold, thus is a Stein manifold itself. It follows from Theorem 5.3.1 in \cite{stein} that $V$ is trivial.
\end{pf}
\subsection{$C^\omega$ rotations reducibility.}
We first recall an important consequence of classical Kotani theory
\cite{ak1,dcj,sim83,kot}.
Let $\alpha\in\R\backslash\Q$ and $\mathbb{H}$ be the upper half plane.
It is well known that there exists a  function
$m: \mathbb{H} \times \T
\to \mathbb{H}$ such that for the $S^w_{E}(x)$ given by \eqref{S} we have $S^w_{E}(x) \cdot m(E,x)=m(E,x+\alpha)$, thus
defining an invariant section for the Schr\"odinger cocycle
$(\alpha,S_{E}^w)$.
\begin{equation} \label {minvariantsection}
(\alpha,S_{E}^w) (x,m(E,x))=(x+\alpha,m(E,x+\alpha)).
\end{equation}
Moreover, $E \mapsto m(E,x)$ is holomorphic on $\mathbb{H}$.

Let $L(E)$ be the  Lyapunov exponent of the Schr\"odinger cocycle $(\alpha,S_E^w).$ 
\begin{Theorem}[\cite{aj}]\label{kotani}
Let $\alpha \in \R\backslash\Q$, $w\in C^\omega(\T,\R)$ and $L(E)=0$ in an open interval $J\subset\R$. Then for every $x\in\T$, the function $E\rightarrow m(E,x)$ admits a holomorphic extension to $\C\backslash(\R\backslash J)$, with values in $\mathbb{H}$. The function $m: \C\backslash(\R\backslash J)\times\T\rightarrow \mathbb{H}$ is analytic in both variables.
\end{Theorem}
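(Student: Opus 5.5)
We would prove Theorem \ref{kotani} by combining classical Kotani theory with the analyticity of $w$, following the route of \cite{aj}. We split it into three steps: identify the boundary values of $m$ on $J$, continue $m(\cdot,x)$ across $J$ by reflection, and then upgrade the joint regularity in $(E,x)$ to analyticity.

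\emph{Step 1: boundary values on $J$.} Since $L(E)=0$ on the open interval $J$, Kotani's theorem \cite{kot,sim83} says that the potential is deterministic on $J$. It also says that for a.e.\ $x$ and a.e.\ $E\in J$ the Weyl functions satisfy the reflectionless relation
\[
m_+(E+i0,x)=-\overline{m_-(E+i0,x)} .
\]
In the cocycle language this reads as follows. Let $m(E,x)\in\mathbb H$ be the invariant section and $\overline{m}$ its reflection. Then $m(E+i0,x)$ lies in $\mathbb H$ (not on $\partial\mathbb H$) for a.e.\ $E\in J$, and the $L^2$ conjugacy built from $m$ reduces the cocycle to rotations. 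The quantity $\Im m(E+i0,x)$ is bounded above and below in $L^1$, via the standard identity relating $\int\ln\Im m$ to $-L(E)=0$.

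\emph{Step 2: continuation across $J$.} Define the candidate continuation to the lower half plane by
\[
\tilde m(E,x)=\overline{m(\bar E,x)}^{\,*},
\]
the reflection dictated by reflectionlessness, namely the function whose boundary values on $J$ from below agree with $m(E+i0,x)$. Then $m$ on $\mathbb H$ and $\tilde m$ on $-\mathbb H$ are Herglotz-type functions whose boundary values coincide a.e.\ on $J$. By the Schwarz reflection / edge-of-the-wedge principle for Herglotz functions (their boundary values on an interval agree in the distributional sense), they glue to a single holomorphic function on $\C\setminus(\R\setminus J)$. This function still takes values in $\mathbb H$, since $\Im$ is positive on both sides of $J$. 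We carry this out first for a.e.\ $x$. For every $x$ we then use the invariance $S_E^w(x)\cdot m(E,x)=m(E,x+\alpha)$, minimality of the irrational rotation, and the normal-family property of $\mathbb H$-valued holomorphic functions (Montel).

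\emph{Step 3: joint analyticity in $(E,x)$.} For $E\in\mathbb H$ the map $x\mapsto m(E,x)$ is analytic. This follows from the uniform hyperbolicity of $(\alpha,S_E^w)$ there: $m$ is the unstable direction, which depends analytically on $x$ because $w$ is analytic. Near $J$ we consider the complexified phase $x+i\epsilon$. The invariant section extends holomorphically in $x$ to a strip, as the unstable section of the complexified cocycle, and is jointly holomorphic off the real $E$-axis. Applying Hartogs' theorem, or a Cauchy-integral representation in $E$ with bounds uniform in $x$ on the strip, gives analyticity in both variables on $(\C\setminus(\R\setminus J))\times\T$.

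The main obstacle is Step 2 and the passage to every $x$ in Step 3. Reflectionlessness only holds almost everywhere, and gluing requires control of $\Im m$ near $J$ so that no singular boundary behaviour occurs. We would obtain this control from the bound $\int\ln\Im m\,dx$ together with the $L^2$-reducibility, as in \cite{aj}.
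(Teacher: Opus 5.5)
The paper does not prove this statement itself. It quotes it from \cite{aj}, and the same template reappears in the proof of Theorem~\ref{C0reducibility1}. There, Kotani's a.e.\ reflectionlessness continues $m(\cdot,x)$ across $J$ for a.e.\ $x$. Normality of the $\mathbb H$-valued family (uniform Lipschitz bounds in $E$ on compacts, anchored at nonreal energies where $m$ is continuous in $x$) then gives the continuation for every $x$, together with joint continuity. Joint analyticity finally comes from Lemma~\ref{aj}.

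Your Steps~1--2 follow this route and are essentially fine, with two small corrections. First, the lower half-plane continuation must use the \emph{other} invariant section: $\tilde m(E,x)=\overline{s(\bar E,x)}$, where $s(\cdot,x)$ is the $-\mathbb H$-valued section. Second, the gluing needs no extra control of $\Im m$. After a Cayley transform both sides are bounded, so a.e.\ agreement of their boundary values on $J$ already gives holomorphy across $J$ by a Cauchy-integral (Painlev\'e-type) argument. The values stay in $\mathbb H$ by the minimum principle for $\Im m$.

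The genuine gap is Step~3 at real energies $E\in J$, which is where the content of the theorem lies. Uniform hyperbolicity gives analyticity of $x\mapsto m(E,x)$ only for $E\notin\R$. It does so on a strip whose width, and whose bounds on the unstable section, you do not control as $\Im E\to0$: the hyperbolicity rate $L(E)$ tends to $0$, and Kotani theory says nothing about complex phases.

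\begin{itemize}
\item \emph{Hartogs' separate-analyticity theorem does not apply.} You know holomorphy in $x$ only off the real $E$-axis, and holomorphy in $E$ across $J$ only for real $x$. So there is no open set of $\C^2$ meeting $J\times\T$ on which both hold.
\item \emph{The Cauchy-integral alternative fails for the same reason.} Every contour around a real $E_0$ crosses $J$. Near the crossing points you would need a common strip in $x$, and bounds on $m(\zeta,\cdot)$ there that are uniform in $\zeta$. That is essentially the conclusion you are trying to prove.
\end{itemize}

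The missing ingredient is Lemma~\ref{aj}. It says that a continuous $f$ on $W\times\T$ which is holomorphic in $z$ for every $x$, and analytic in $x$ for $z$ in a \emph{nonpolar} set, is jointly analytic. Here the nonpolar set is the nonreal energies, and no uniformity in the strip is required. Its proof is potential-theoretic:
\begin{itemize}
\item the Fourier coefficients $\hat f_k(z)$ are holomorphic and locally bounded;
\item the subharmonic functions $u_k=\frac{1}{|k|}\ln|\hat f_k|$ satisfy $\limsup_k u_k<0$ on the nonpolar set;
\item hence the upper regularization of $\limsup_k u_k$ is a nonpositive subharmonic function that is not identically $0$, so it is negative everywhere;
\item Hartogs' lemma for subharmonic functions then gives exponential decay of $\hat f_k$ uniformly on compacts.
\end{itemize}
To apply the lemma you also need the joint continuity of the extended $m$, which comes from the normal-family bounds above.
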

Theorem \ref{kotani} played an important role in solving the almost
Mathieu ten martini problem \cite{aj}, since it implies the
$C^\omega$ rotations reducibility of the Schr\"odinger cocycle $(\alpha,S_E^w)$.  The aim
of this section is to present an analogue of such $C^\omega$ rotations reducibility
result for  finite-range operators \eqref{lvw}
which allows existence of positive Lyapunov exponents. 

Abusing the notations a little bit, we let 
$$
L_1(E)\geq\cdots\geq  L_d(E)\geq L_{d+1}(E)\geq \cdots\geq L_{2d}(E)
$$ 
be the Lyapunov exponents of the complex symplectic cocycle
$(\alpha,L_{E,v}^w)$ associated with operator $L^w_{v,\omega,x}$.  The full
matrix version of Kotani theory was proved by Kotani-Simon \cite{ks}
(see also Xu \cite{xu} for the monotonic case), assuming
$L_1(E)=\cdots=L_d(E)=0$ and all the coefficients are real.  Removing this restriction on the Lyapunov exponents was stated as
a problem in \cite{ks}, in particular for the so-called reflectionless property which is crucial to the proof of Theorem \ref{kotani}, and it has seen no serious progress until this
work. 

Here we establish $C^\omega$ rotations reducibility of the $2$-dimensional center
for analytic one-frequency  $PH2$ operators, thus solving the Kotani-Simon problem under the $PH2$
condition and give a $PH2$ analogue of Theorem \ref{kotani}. 


\begin{Theorem}\label{C0reducibility1}
Let $\alpha\in \R\backslash\Q$, $w\in C^\omega(\T,\R)$ and  $L_{v,\alpha.x}^w$ be $PH2$. Assume $L_d(E)=0$ and $\omega^{d-1}(\alpha,L_E^f)=0$ in an interval $I\subset \R.$ 
There exist   $(U_E,V_E)\in C^\omega(\T, Sp_{2d\times 2}(\C))$, $\phi_E\in C^\omega(\T,\R)$ and $R_E\in C^\omega(\T,SO(2,\R))$, depending analytically on $E$  on $I'$ for some $I'\subset I$, such that
$$
L_{E,v}^w(x)(U_E(x),V_E(x))=(U_E(x+\alpha), V_E(x+\alpha))e^{2\pi i\phi_E(x)}R_E(x).
$$
\end{Theorem}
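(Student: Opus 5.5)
The argument runs in two stages. First we reduce to the two-dimensional center and produce an $L^2$ conjugacy to rotations, as in Theorem \ref{L2 reducibility}. Then we upgrade it to an analytic one, depending analytically on $E$, by a Kotani-type holomorphic extension of a center $m$-function.

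\emph{Reduction to the center.} Since the operator is $PH2$ on $I$, the splitting $\C^{2d}=E_s\oplus E_c\oplus E_u$ of Corollary \ref{partial} is continuous in $x$. Because domination is an open condition, it also persists for complex $z$ in a small rectangle $R_\delta(E_0)$ and for $x$ in a thin strip. There $E_c(z,x)$ is the two-dimensional center bundle, holomorphic in $(z,e^{2\pi ix})\in R_\delta(E_0)\times\Omega_h$. By Proposition \ref{trivial} this bundle is holomorphically trivial, so we may choose a holomorphic frame $W(z,x)=(w_1,w_2)$ of $E_c$. We then normalize it so that $W^*SW=J$ for real $z$, which is possible because the symplectic form $S$ restricted to $E_c$ is nondegenerate (Appendix). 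The induced center cocycle is
\[
C(z,x)=W(z,x+\alpha)^{-1}L^w_{E,v}(x)W(z,x).
\]
It is $J$-unitary on the real axis, so we can write $C=e^{2\pi i\theta(z,x)}A(z,x)$ with $A$ taking values in $SL(2,\R)$ up to conjugacy. The hypothesis $\omega^{d-1}=0$ is used precisely here: it forces $\det C$ to have zero winding, so a continuous, indeed analytic, logarithm $\theta$ exists and the center dynamics are projectively real. The center Lyapunov exponent equals $L_d(E)=0$ on $I$.

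\emph{Kotani step.} For $\Im z>0$ the $S$-form on $E_c$ is definite on the Weyl-type directions, in the usual monotonicity sense, because $\Im z$ enters $L^w_{E,v}$ through the $(1,l)$ entry. This yields an invariant section $m(z,x)\in\mathbb H$ for the projective action of $A$, holomorphic in $z$ on $\mathbb H\cap R_\delta$. We then rerun Kotani's argument for the $SL(2,\R)$-like cocycle $A$. The starting point is the Thouless/Herglotz identity relating $\partial_{\Im z}L_d$ to $\int \Im m^{-1}$-type quantities. Since $L_d=0$ on $I$, we get, for a.e.\ $E\in I$, that $m(E+i0,x)$ and the corresponding limit from the lower half-plane agree up to complex conjugation, which is the reflectionless property. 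This is exactly the $L^2$ reducibility of Theorem \ref{L2 reducibility}, which we may assume. From reflectionlessness on an interval, the Schwarz reflection/edge-of-the-wedge argument, exactly as in the proof of Theorem \ref{kotani} from \cite{aj}, extends $m$ holomorphically across $I'\subset I$ with values in $\mathbb H$. The result is analytic in $(z,x)$ on $R_{\delta'}(E_0)\times\{|\Im x|<h'\}$.

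\emph{Conjugation.} Given the analytic $\mathbb H$-valued invariant section $m(E,x)$, the standard construction conjugates $A$ to a rotation. We take
\[
B(x)=\frac{1}{\sqrt{\Im m}}\begin{pmatrix}\Re m&*\\ 1&0\end{pmatrix}
\]
suitably completed, which sends $i$ to $m$, so that $B(x+\alpha)^{-1}A(x)B(x)\in SO(2,\R)$, analytic in $x$ and $E$. We then set $(U_E,V_E)=W B$, which lies in $Sp_{2d\times2}$ thanks to the normalization $W^*SW=J$. We also set $\phi_E=\theta$ (restricted to real $x$, the phase is real since $C$ is $J$-unitary) and $R_E=B(\cdot+\alpha)^{-1}AB$. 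This gives the displayed identity.

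The main obstacle is the Kotani step. Genuinely new positivity has to be produced: we must show that for $\Im z>0$ the center bundle carries an $S$-definite invariant line, and that the Herglotz/Thouless structure survives restriction to $E_c$ despite the positive exponents in $E_s\oplus E_u$. The plan is to obtain this from the Kotani–Simon variational formula applied to the $(d-1)$-exterior power, using $\omega^{d-1}=0$ to decouple $L^{d-1}$ (which is harmonic in $z$ near $I$) from $L_d$. The remaining steps are comparatively routine.
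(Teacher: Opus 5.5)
Your outer architecture matches the paper's: a holomorphic center frame normalized to $J$; zero winding of $\det$, coming from $\omega^{d-1}=0$, which gives the factorization $e^{2\pi i\phi}\cdot SL(2,\R)$; a center $m$-function; reflectionlessness; holomorphic continuation and joint analyticity as in \cite{aj}; and conjugation by the matrix sending $i$ to $m$. Those outer steps are fine.

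The gap is the step you flag yourself as the main obstacle: the identity $m_+(E+i0,x)=m_-(E-i0,x)$ for a.e.\ $E\in I$ (Theorem \ref{keyth}(3)). This is the real content of Section \ref{kotanis1}, and the shortcut you offer for it does not work. Reflectionlessness is not ``exactly the $L^2$ reducibility of Theorem \ref{L2 reducibility}''. That theorem uses only the integrability of $\|m_+u_E+v_E\|^2/\Im m_+$, and it yields, for a.e.\ $E$, one measurable invariant section in $\mathbb H$. It does not identify the boundary values from above and from below, and nothing in your argument forces them to agree; invariant sections in $\mathbb H$ of a measurable rotation cocycle need not be unique. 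Without that identity the edge-of-the-wedge step has no input.

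Nor can you simply ``rerun Kotani's argument for $A$''. For nonreal $z$, $A(z,\cdot)$ is an abstract holomorphic $SL(2,\C)$ family with no Green's function, Weyl identity or Johnson--Moser formula of its own. The paper has to build all of these from the full $2d$-dimensional operator:
\begin{itemize}
\item a Johnson--Moser formula for $L_{d+1}$ (Theorem \ref{mg}), rewritten through the center solutions $u^\pm$ in Lemma \ref{final1};
\item Kotani-type integral bounds from the Weyl identities of the genuine half-line $\ell^2$ solutions (Lemmas \ref{le1}, \ref{le2});
\item their transfer to $m_\pm$ (Lemma \ref{gjyle}, which uses different center bases on $\Omega_1$ and $\Omega_1^c$);
\item finally, Fatou's lemma and the equality case of an AM--GM chain.
\end{itemize}

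Even the Herglotz property needs more than you supply. Holomorphically extending a frame normalized only at real $z$ gives $W^*SW=J+2i\delta A+O(\delta^2)$ with $A$ Hermitian. This order-$\delta$ term can overwhelm the Weyl positivity $c\,\delta(|b_0^+|^2+|b_1^+|^2)$. That is why Proposition \ref{newwe'} renormalizes so that $W(\bar z)^*SW(z)=J$ and $W(z)^*SW(z)=J+O(|E-E_0||\delta|+\delta^2)$.

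Finally, your exterior-power plan does not reach the target. $L^{d-1}$ is harmonic in $z$ near $I$ because of $(d-1)$-domination, not because of $\omega^{d-1}=0$. The latter is an acceleration in the phase variable, and its actual role is the zero-winding factorization. Writing $L_d=L^d-L^{d-1}$ with a Thouless formula for $L^d$ gives at most the a.e.\ finiteness of the normal derivative of the center exponent at the real axis. It does not give an identity tying that derivative to the center $m$-functions, which is what reflectionlessness requires.
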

In order to prove
Theorem \ref{C0reducibility1}, we start with some preparations.

\subsection{An extension of Johnson-Moser's theorem}\label{6.2}
For any $z\in\mathbb{H}$, the Green's function of Schr\"odinger
operator $H_{w,\alpha,x}$ is defined as
$$
g(z,x)=\langle\delta_0,(H_{w,\alpha,x}-z)^{-1}\delta_0\rangle.
$$
For Schr\"odinger operators, Johnson and Moser \cite{johonson and moser} (see also \cite{cs} for the strip case) proved the following
relation between the Lyapunov exponent and the Green's function
$$
L'_f(z)=\int_\T g(z,x)dx.
$$

Johnson-Moser's theorem plays an important role in the proof of the classical Kotani theory.  In this subsection, we extend Johnson-Moser's theorem to analytic one-frequency finite-range operators $L_{v,\alpha,x}^w$, satisfying the $PH2$ condition. Notice that the eigenvalue equations $L_{v,\alpha,x}^wu=Eu$ can be written as a second-order $2d$-dimensional difference equation by introducing the auxiliary variables
$$
\vec{u}_n=(u_{nd+d-1}\ \ \cdots\ \ u_{nd+1}\ \ u_{nd})^T \in \C^d
$$
for $n\in \Z$. By the proof of Lemma \ref{symplectic},  we have that $(\vec{u}_n)_n$ satisfies
\begin{equation}\label{ef100}
C\vec{u}_{n+1}+B(x+nd\alpha)\vec{u}_n+C^*\vec{u}_{n-1}=E\vec{u}_n,
\end{equation}
where $B(x)$ is the Hermitian matrix
$$
B(x)=\begin{pmatrix}
w(x+(d-1)\alpha)+\hat{v}_0&\hat{v}_{-1}&\cdots&\hat{v}_{-d+1}\\
\hat{v}_1&\ddots&\ddots&\vdots\\
\vdots&\ddots&w(x+\alpha)+\hat{v_0}&\hat{v}_{-1}\\
\hat{v}_{d-1}&\cdots&\hat{v}_1&w(x)+\hat{v}_0
\end{pmatrix}.
$$
Moreover, equation \eqref{ef100} is an eigenequation of the following vector-valued Schr\"odinger operator
\begin{equation}\label{ldfo}
(L_{d,v,\alpha,x}^w\vec{u})_n=C\vec{u}_{n+1}+B(x+dn\alpha)\vec{u}_n+C^*\vec{u}_{n-1},
\end{equation}
acting on $\ell^2(\Z,\C^d)$.

Let $\Sigma_{v,\alpha}^w$ be the spectrum of $L_{v,\alpha,x}^w$. By  Definition
\ref{defph2} of the 
$PH2$ property and continuity of dominated splitting \cite{bdv}, there
is $\delta_0(v,w)>0$ such that if operator $L_{v,\alpha,x}^w$ is $PH2$, then
every $E\in \mathbb{C}_{\delta_0}$ where $\mathbb{C}_{\delta_0}$ is a small
open neighborhood of $\Sigma_{v,\alpha}^w,$ is $PH2$ for $L_{v,\alpha,x}^w.$ 
It is known that for any $z\in\C_{\delta_0}\backslash\R$, the cocycle $(\alpha,L_{z,v}^w)$ is uniformly hyperbolic, thus $d$-dominated. Hence $(\alpha,L_{z,v}^w)$ is $(d-1)$, $d$, $(d+1)$-dominated. As a consequence of dominated splitting, for any $z\in\mathbb{C}_{\delta_0}\backslash\R$, there exist continuous invariant decompositions
$$
\C^{2d}=E_s(z,x)\oplus E_+(z,x)\oplus E_-(z,x)\oplus E_u(z,x),\ \ \forall x\in\T.
$$
Indeed, by Theorem 6.1 in \cite{ajs} and Proposition \ref{trivial}, there are linearly independent $\{u_z^i(x)\}_{i=1}^{d-1}\in E_s(z,x)$, $u_z^+(x)\in E_+(z,x)$, $u_z^-(x)\in E_-(z,x)$ and $\{v_z^i(x)\}_{i=1}^{d-1}\in E_u(z,x)$ depending analytically on $x$ and  $z$, such that
\begin{equation}\label{r1}
\begin{pmatrix}F_z^+(0,x)\\ F_z^+(-1,x)\end{pmatrix}:=\begin{pmatrix}u_z^1(x),\cdots, u_z^{d-1}(x),u^+_z(x)\end{pmatrix}
\end{equation}
\begin{equation}\label{r2}
\begin{pmatrix}F_z^-(0,x)\\ F_z^-(-1,x)\end{pmatrix}:=\begin{pmatrix}u^-_z(\omega), v_z^{d-1}(x), \cdots, v_z^1(x)\end{pmatrix}
\end{equation}
satisfy
$$
\sum\limits_{k=0}^\infty\|F_z^+(k,x)\|^2<\infty,\ \ \sum\limits_{k=-\infty}^{0}\|F_z^-(k,x)\|^2<\infty,
$$
where
$$
\begin{pmatrix}F_z^\pm(k,x)\\ F_z^\pm(k-1,x)\end{pmatrix}=(L_{z,v}^w)_{dk}(x)\begin{pmatrix}F_z^\pm(0,x)\\ F_z^\pm(-1,x)\end{pmatrix}.
$$
Moreover, for any $x\in\T$, we have
\begin{align}\label{eeeq1}
\limsup\limits_{k\rightarrow \infty}\frac{1}{2k}\ln\left(\|\vec{u}_z^+(k,x)\|^2+\|\vec{u}_z^+(k+1,x)\|^2\right)=dL_{d+1}(z),
\end{align}
\begin{align}\label{eeeq1'}
\limsup\limits_{k\rightarrow \infty}\frac{1}{2k}\ln\left(\|\vec{u}_z^-(k,x)\|^2+\|\vec{u}_z^-(k+1,x)\|^2\right)=dL_{d}(z).
\end{align}
where
\begin{equation}\label{u(n)}
\begin{pmatrix}\vec{u}_z^\pm(k,x)\\ \vec{u}_z^\pm(k-1,x)\end{pmatrix}=(L_{z,v}^w)_{dk}(x)u_z^\pm(x).
\end{equation}

Once we have $F_z^{\pm}(k,x)$, one can define M matrices by
\begin{equation}\label{lanair6}
M_+(z,x)=F_z^+(1,x)(F_z^{+}(0,x)^{-1},
\end{equation}
$$
M_-(z,x)=F_z^-(-1,x)(F_z^-(0,x))^{-1}.
$$
Note that $M_\pm$  satisfy  the following Ricatti equations. In the following, we set $Tx=x+\alpha$.
\begin{Lemma}For any $z\in \mathbb{C}_{\delta_0}\backslash\R$, we have
\begin{equation}\label{rica}
CM_+(z,x)+C^*M^{-1}_+(z,T^{-d}x)+(B(x)-z)=0.
\end{equation}
\begin{equation}\label{rica1}
C^*M_-(z,x)+CM^{-1}_-(z,T^{d}x)+(B(x)-z)=0.
\end{equation}
\end{Lemma}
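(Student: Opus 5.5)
The plan is a direct computation from the vector-valued eigenequation \eqref{ef100}, once two structural facts are in place: the matrices $F_z^\pm(0,x)$ and $F_z^\pm(-1,x)$ are invertible, and the families $F_z^\pm(k,\cdot)$ are covariant under $T^d$.

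\textbf{Step 1 (columns are solutions).} By \eqref{r1}, \eqref{u(n)} and the block form of $L_{d,E}$ in the proof of Lemma~\ref{symplectic}, each column of $F_z^+(k,x)$, viewed as a sequence in $k$, solves
\[
C\vec u_{k+1}+B(x+dk\alpha)\vec u_k+C^*\vec u_{k-1}=z\vec u_k .
\]
This holds because $(L_{z,v}^w)_d$ is exactly the one-block transfer matrix of \eqref{ef100}. Hence
\[
CF_z^+(k+1,x)+\bigl(B(x+dk\alpha)-z\bigr)F_z^+(k,x)+C^*F_z^+(k-1,x)=0 .
\]
The same identity holds for $F_z^-$.

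\textbf{Step 2 (invertibility).} Suppose $F_z^+(0,x)c=0$ for some vector $c$. Then $\vec u_k=F_z^+(k,x)c$ is an $\ell^2$ solution on $k\ge0$ with $\vec u_0=0$, so it is an eigenvector of the half-line restriction of $L^w_{d,v,\alpha,x}$ with Dirichlet condition at $0$. That restriction is self-adjoint, because $B$ is Hermitian and the off-diagonal blocks are $C,C^*$. Since $z\notin\R$, we get $\vec u\equiv0$. Because $C$ is invertible ($\hat v_d\neq0$), the transfer matrix is invertible, so the initial vector vanishes. By the linear independence of the frame in \eqref{r1}, this forces $c=0$.

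The same argument, shifted by one site, gives invertibility of $F_z^+(-1,x)$. It can also be derived directly from $F_z^+(0,x)$ through \eqref{ef100}. Both facts are needed: $F_z^+(0,x)^{-1}$ and $F_z^+(-1,x)^{-1}$ will be used, and the latter enters as $M_+(z,T^{-d}x)^{-1}$. The analogous statements for $F_z^-$ follow using $k\le 0$.

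\textbf{Step 3 (covariance).} The subspace $E_s(z,x)\oplus E_+(z,x)$ spanned by the columns in \eqref{r1} is invariant under the cocycle. Therefore $(L^w_{z,v})_d(x)$ maps it onto the corresponding subspace at $T^dx$, and there is an invertible $G(x)\in GL(d,\C)$ with
\[
F_z^+(k+1,x)=F_z^+(k,T^dx)\,G(x)\qquad\text{for all }k .
\]
Right multiplication by $G$ cancels in $F(\cdot)F(\cdot)^{-1}$. Consequently
\[
M_+(z,T^{-d}x)=F_z^+(1,T^{-d}x)F_z^+(0,T^{-d}x)^{-1}=F_z^+(0,x)F_z^+(-1,x)^{-1} .
\]

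\textbf{Step 4 (Riccati equation).} Take $k=0$ in the identity of Step 1 and multiply on the right by $F_z^+(0,x)^{-1}$. This gives
\[
CM_+(z,x)+B(x)-z+C^*F_z^+(-1,x)F_z^+(0,x)^{-1}=0 .
\]
By Step 3, the last term equals $C^*M_+(z,T^{-d}x)^{-1}$, which is \eqref{rica}.

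For $M_-$, I repeat the argument with the frame \eqref{r2}. Multiplying the $k=0$ identity by $F_z^-(0,x)^{-1}$ produces $C^*M_-(z,x)$ from the $F_z^-(-1,x)$ term. The remaining term is $CF_z^-(1,x)F_z^-(0,x)^{-1}$, and covariance identifies it with $CM_-(z,T^dx)^{-1}$. This gives \eqref{rica1}.

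The main obstacle is Step 2. The invertibility of $F_z^\pm(0,x)$ needs the self-adjointness argument together with the transversality encoded in the frame. Everything else is bookkeeping of indices.
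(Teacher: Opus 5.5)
Your proposal is correct and follows the paper's route. The paper's proof consists of your Steps 1 and 4: it takes the $k=0$ block equation $CF_z^\pm(1,x)+C^*F_z^\pm(-1,x)+(B(x)-z)F_z^\pm(0,x)=0$, right-multiplies by $F_z^\pm(0,x)^{-1}$ and reads off the result from the definition of $M_\pm$. Your Steps 2 and 3 make explicit what the paper takes for granted: the invertibility of $F_z^\pm(0,x)$ and $F_z^\pm(-1,x)$ via self-adjointness of the Dirichlet half-line operator at nonreal $z$, and the $T^d$-covariance that gives $M_+(z,T^{-d}x)^{-1}=F_z^+(-1,x)F_z^+(0,x)^{-1}$ and its analogue for $M_-$.
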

\begin{pf}
Note that
$$
CF_z^\pm(1,x)+C^*F_z^\pm(-1,x)+(B(x)-z)F_z^\pm(0,x)=0.
$$
The results follow from the definition of $M_\pm$.
\end{pf}

For any $z\in \C_{\delta_0}\backslash\R$, we define the Green's matrix by
$$G(z,x)= \langle \vec{\delta}_0,(L_{d,v,\alpha,x}^w-z)^{-1}\vec{\delta}_0\rangle,
$$
where for $j\in\Z$,  $\vec{\delta}_j:\Z\rightarrow \C^d$ satisfies
$$
\vec{\delta}_j(n)=
\begin{cases}
0& n\neq j\\
I_d &n=j
\end{cases}.
$$
The Green's matrix $G(z,x)$ can be then expressed as:

\begin{Lemma}\label{Green_Matrix}
For any $z\in\mathbb{C}_{\delta_0}\backslash\R$, we have
\begin{align*}
G(z,x)=(CM_+(z,x)+C^*M_-(z,x)+B(x)-z)^{-1}
\end{align*}
\end{Lemma}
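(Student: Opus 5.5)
The plan is the standard block-Jacobi (strip) Green's function computation, with the $PH2$ solutions $F_z^\pm$ playing the role of Weyl solutions.

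Fix $z\in\C_{\delta_0}\backslash\R$ and $x$. I would build the resolvent column $\vec{\phi}(n)=\langle\vec\delta_n,(L^w_{d,v,\alpha,x}-z)^{-1}\vec\delta_0\rangle$ explicitly from the two matrix solutions $F_z^\pm(\cdot,x)$. Recall that $F_z^+(n,x)$ is square summable at $+\infty$ and $F_z^-(n,x)$ is square summable at $-\infty$, and that both solve
\[
CF(n+1)+(B(x+dn\alpha)-z)F(n)+C^*F(n-1)=0 .
\]
The ansatz is
\[
\vec\phi(n)=F_z^+(n,x)F_z^+(0,x)^{-1}G \quad (n\ge 0),\qquad
\vec\phi(n)=F_z^-(n,x)F_z^-(0,x)^{-1}G \quad (n\le 0),
\]
with an unknown $d\times d$ matrix $G$. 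The two formulas agree at $n=0$. By construction $\vec\phi$ satisfies the homogeneous equation at every $n\neq0$, and it lies in $\ell^2$ at both ends.

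At $n=0$ the equation must produce $\vec\delta_0$, i.e. $C\vec\phi(1)+(B(x)-z)\vec\phi(0)+C^*\vec\phi(-1)=I_d$. Substituting the ansatz gives
\[
\bigl(CM_+(z,x)+C^*M_-(z,x)+B(x)-z\bigr)G=I_d .
\]
To conclude $G(z,x)=(CM_++C^*M_-+B-z)^{-1}$ I need two things. First, $\vec\phi=(L-z)^{-1}\vec\delta_0$: this follows because $L$ is self-adjoint, so $z\notin\Sigma$ and $(L-z)^{-1}$ is bounded, and any $\ell^2$ solution of $(L-z)\vec\phi=\vec\delta_0$ (columnwise) is therefore unique. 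Second, the bracketed matrix must be invertible.

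I expect invertibility to be the main obstacle. I would argue by contradiction: if $K:=CM_++C^*M_-+B-z$ kills some $c\neq0$, then the ansatz with $G$ replaced by $c$ is a nonzero $\ell^2$ solution of $(L-z)\vec\phi=0$. This is impossible since $z\notin\R$, provided $\vec\phi\neq0$, which holds because $\vec\phi(0)=c$.

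One also has to check that $F_z^\pm(0,x)$ are invertible $d\times d$ matrices, so that $M_\pm$ are defined. For this I would use that the columns of $\binom{F^\pm(0)}{F^\pm(-1)}$ span the $d$-dimensional decaying subspaces $E_s\oplus E_+$, respectively $E_u\oplus E_-$, which is valid because of the uniform hyperbolicity for nonreal $z$. If some $c\neq0$ had $F^+(0)c=0$, then the solution $F^+(n)c$ would be decaying on the right with $\vec u(0)=0$. Its symplectic pairing (Lemma~\ref{symplectic}) with the left-decaying solutions, together with transversality $E_s\oplus E_+$ versus $E_-\oplus E_u$, would give a contradiction. Alternatively, extending by zero to the left produces an $\ell^2$ eigenvector of a half-line self-adjoint problem with nonreal eigenvalue, which is also impossible. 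Any such degenerate $x$ could in any case be avoided by the normalization already implicit in the definitions of $M_\pm$ in \eqref{lanair6}.

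Consistency with the Riccati equations \eqref{rica}, \eqref{rica1} is a sanity check only and is not needed for the proof.
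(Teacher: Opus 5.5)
Your proposal is correct and is essentially the paper's proof. The paper normalizes the decaying solutions to $\widetilde F_z^\pm(0,x)=I_d$, which is exactly your $F_z^\pm(n,x)F_z^\pm(0,x)^{-1}$. It then writes the resolvent column as $\widetilde F_z^\pm(n,x)\bigl(C\widetilde F_z^+(1,x)+C^*\widetilde F_z^-(-1,x)+B(x)-z\bigr)^{-1}$ by uniqueness of the Green's matrix, and reads off $M_\pm$.

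Your extra checks supply details the paper leaves implicit: invertibility of the bracket, and invertibility of $F_z^\pm(0,x)$ via absence of nonreal eigenvalues for the whole-line and Dirichlet half-line self-adjoint problems. For $F_z^\pm(0,x)$, keep the half-line argument and drop the symplectic-pairing sketch, since that sketch relies on Lemma~\ref{symplectic}, which is stated for real energies.
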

\begin{pf}
Let  $(L^w_{d,v,\alpha,x}\widetilde{F}_z^\pm)(k,x)=z\widetilde{F}_z^\pm(k,x)$ with $\widetilde{F}_z^+(0,x)=\widetilde{F}_z^-(0,x)=I_d$ satisfying
$$
\sum\limits_{k=0}^\infty\|\widetilde{F}_z^+(k,x)\|^2<\infty,\ \ \sum\limits_{k=-\infty}^{0}\|\widetilde{F}_z^-(k,x)\|^2<\infty,
$$ 
Since Green's matrice is unique, it follows that
\begin{align*}
&\langle \vec{\delta}_n,(L_{d,v,\alpha,x}^w-z)^{-1}\vec{\delta}_0\rangle \\
=&\begin{cases}
\widetilde{F}_z^+(n,x)(C\widetilde{F}_z^+(1,x)+C^*\widetilde{F}_z^-(-1,x)+B(x)-z)^{-1}& n\geq 0\\
\widetilde{F}_z^-(n,x)(C\widetilde{F}_z^+(1,x)+C^*\widetilde{F}_z^-(-1,x)+B(x)-z)^{-1}& n< 0
\end{cases}.
\end{align*}
The result follows from the definition of $M$ matrices.
\end{pf}

The following proposition gives the relation between $M$ matrices and the Green's matrix.
\begin{Proposition}\label{prop123} For any $z\in\mathbb{C}_{\delta_0}\backslash\R$, the following relations hold:
\begin{eqnarray}
\nonumber G(z,x)&=&(-C^*M^{-1}_+(z,T^{-d}x)+C^*M_-(z,x))^{-1}, \\
\nonumber G(z,T^{-d}x)&=& (CM_+(z,T^{-d}x)-CM^{-1}_-(z,x))^{-1},\\
\label{g3} G(z,x)C^*M^{-1}_+(z,T^{-d}x) &=& M_+(z,T^{-d}x)G(z,T^{-d}x) C - I_d.
\end{eqnarray}
\end{Proposition}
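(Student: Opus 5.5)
The plan is to derive all three identities purely algebraically from Lemma~\ref{Green_Matrix} and the two Riccati equations \eqref{rica}, \eqref{rica1}.

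\textbf{First identity.} Write
\[
G(z,x)^{-1}=CM_+(z,x)+C^*M_-(z,x)+B(x)-z .
\]
Equation \eqref{rica} gives $CM_+(z,x)+B(x)-z=-C^*M_+^{-1}(z,T^{-d}x)$. Substituting this yields
\[
G(z,x)^{-1}=-C^*M_+^{-1}(z,T^{-d}x)+C^*M_-(z,x),
\]
and inverting gives the first relation.

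\textbf{Second identity.} Evaluate Lemma~\ref{Green_Matrix} at $T^{-d}x$:
\[
G(z,T^{-d}x)^{-1}=CM_+(z,T^{-d}x)+C^*M_-(z,T^{-d}x)+B(T^{-d}x)-z .
\]
Now use \eqref{rica1} at the point $T^{-d}x$, namely $C^*M_-(z,T^{-d}x)+B(T^{-d}x)-z=-CM_-^{-1}(z,x)$. Substituting gives the second relation.

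Before doing this, I will check that the shifts in \eqref{rica}, \eqref{rica1} are consistent with the convention $F^\pm_z(k,x)$ built from $(L^w_{z,v})_{dk}(x)$. Concretely, I need $M_-(z,T^dx)$ to correspond to $F^-(-1,T^dx)F^-(0,T^dx)^{-1}=F^-(0,x)F^-(1,x)^{-1}$ up to right multiplication by constant matrices. Right multiplication is harmless because $M_\pm$ is invariant under a change of frame: the cocycle relation shifts $F$ while preserving the solution space.

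\textbf{Third identity.} This combines the two. From the first,
\[
G(z,x)\bigl(C^*M_-(z,x)-C^*M_+^{-1}(z,T^{-d}x)\bigr)=I_d,
\]
so
\[
G(z,x)C^*M_+^{-1}(z,T^{-d}x)=G(z,x)C^*M_-(z,x)-I_d .
\]
It therefore suffices to show $G(z,x)C^*M_-(z,x)=M_+(z,T^{-d}x)G(z,T^{-d}x)C$. Setting $P=M_+(z,T^{-d}x)$ and $Q=M_-(z,x)$, this is equivalent to
\[
C^*Q\,G(z,T^{-d}x)^{-1}=G(z,x)^{-1}PC .
\]
By the first two identities, this becomes
\[
C^*Q(CP-CQ^{-1})=(C^*Q-C^*P^{-1})PC .
\]
Expanding, the left side is $C^*QCP-C^*C$ and the right side is $C^*QPC-C^*C$. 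These agree if and only if $QCP=QPC$.

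That is not an identity for noncommuting matrices, so the main obstacle is the third relation. The cleaner route is to use the Green's matrix representation from the proof of Lemma~\ref{Green_Matrix} directly. There $\langle\vec\delta_n,(L-z)^{-1}\vec\delta_0\rangle=\widetilde F^+_z(n,x)G(z,x)$ for $n\ge0$. The same representation for the off-diagonal entries, together with self-adjointness of $L$ and the fact that $\widetilde F^-$ at the shifted point relates to $\widetilde F^+$, gives
\[
\langle\vec\delta_{-1},(L-z)^{-1}\vec\delta_0\rangle=M_+(z,T^{-d}x)\,G(z,T^{-d}x)\quad\text{(shifted)},
\]
and also $\langle\vec\delta_{-1},(L-z)^{-1}\vec\delta_0\rangle=M_-(z,x)G(z,x)$. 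Applying the resolvent equation $(L-z)G=I$ at site $0$ and pairing with $C^*$ yields \eqref{g3}.

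So the plan is: prove the first two identities algebraically as above, and obtain \eqref{g3} by computing the $(-1,0)$ Green's matrix entry in two ways, with the resolvent identity at site $0$, rather than by brute algebra. The careful bookkeeping of the index shift $T^{-d}$ against the block index is where I expect the real work to lie.
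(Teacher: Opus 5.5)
Your first two identities are correct and derived exactly as in the paper: Lemma~\ref{Green_Matrix} combined with \eqref{rica} at $x$, respectively with \eqref{rica1} at $T^{-d}x$. The gap is in \eqref{g3}, and it comes from an algebra slip. Write $P=M_+(z,T^{-d}x)$ and $Q=M_-(z,x)$. Your reduction to $G(z,x)C^*Q=P\,G(z,T^{-d}x)C$ is correct, but your next ``equivalent'' form is not. To clear the right factor $G(z,T^{-d}x)C$ you must multiply by $C^{-1}G(z,T^{-d}x)^{-1}$, which gives
\[
C^*QC^{-1}\,G(z,T^{-d}x)^{-1}=G(z,x)^{-1}P .
\]
What you wrote, $C^*Q\,G(z,T^{-d}x)^{-1}=G(z,x)^{-1}PC$, is equivalent to the different statement $G(z,x)C^*Q=PC\,G(z,T^{-d}x)$. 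With the correct form, your first two identities turn it into $C^*Q(P-Q^{-1})=C^*(Q-P^{-1})P$, and both sides equal $C^*QP-C^*$. So the spurious condition $QCP=QPC$ never arises, and \eqref{g3} is a short algebraic consequence of the first two identities. The paper writes it as
\[
G(z,x)C^*P^{-1}=(PQ-I_d)^{-1}=Q^{-1}(P-Q^{-1})^{-1}=P(P-Q^{-1})^{-1}-I_d=P\,G(z,T^{-d}x)C-I_d .
\]

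The fallback you propose instead does not work as stated. The entry $\langle\vec\delta_{-1},(L-z)^{-1}\vec\delta_0\rangle$ is indeed $M_-(z,x)G(z,x)$. However, $M_+(z,T^{-d}x)G(z,T^{-d}x)$ is the other off-diagonal entry, $\langle\vec\delta_{0},(L-z)^{-1}\vec\delta_{-1}\rangle$. You see this by conjugating $L_{T^{-d}x}$ to $L_x$ with the unit shift and normalizing the solutions in $\ell^2(\Z^+)$ at site $-1$. These two entries are not equal in general. Self-adjointness only relates $(L-z)^{-1}$ to $(L-\bar z)^{-1}$, and even for real coefficients it would give a transpose, not an equality. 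Explicitly, your first two identities give
\[
M_-(z,x)G(z,x)=(I_d-P^{-1}Q^{-1})^{-1}(C^*)^{-1},\qquad M_+(z,T^{-d}x)G(z,T^{-d}x)=(I_d-Q^{-1}P^{-1})^{-1}C^{-1}.
\]
So ``computing the $(-1,0)$ entry in two ways'' would equate unequal matrices. The final step, applying the resolvent equation at site $0$ and pairing with $C^*$, is also never carried out. Correct the algebra above and drop the fallback.
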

\begin{pf}
By Lemma \ref{Green_Matrix} and \eqref{rica}, one has
\begin{align*}
G(z,T^{-d}x)=(CM_+(z,T^{-d}x)-CM^{-1}_-(z,x))^{-1},
\end{align*}
\begin{align*}
G(z,x)= (-C^*M_+^{-1}(z,T^{-d}x)+C^*M_-(z,x))^{-1}.
\end{align*}
Consequently, we have the following
\begin{align*}
G(z,x)C^*M^{-1}_+(z,T^{-d}x)
=&(-I_d+M_+(z,T^{-d}x)M_-(z,x))^{-1}\\ \nonumber
=&M_-^{-1}(z,x)(-M_-^{-1}(z,x)+M_+(z.T^{-d}x))^{-1}\\ \nonumber
=&M_+(z,T^{-d}x)(-M_-^{-1}(z,x)+M_+(z.T^{-d}x))^{-1}- I_d\\ \nonumber
=&M_+(z,T^{-d}x)G(z,T^{-d}x) C - I_d.
\end{align*}
\end{pf}

We can now formulate the $PH2$ extension of the Johnson-Moser's theorem.
\begin{Theorem}\label{mg}
For any $z\in \mathbb{C}_{\delta_0}\backslash\R$, we have 
\begin{equation*}
 \frac{\partial L_{d+1}}{\partial \Im z}(z)=-\frac{1}{d}\Im\int_{\T}g(z,x) dx.
 \end{equation*}
where
$$
g(z,x):=\langle \delta_d, (F_z^+(0,x))^{-1}G(z,x)F_z^+(0,x)\delta_d\rangle.
$$
\end{Theorem}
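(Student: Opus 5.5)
The approach is to mimic the classical proof of Johnson--Moser: represent $L_{d+1}(z)$ as the average of $\ln|\lambda(z,x)|$ for a holomorphic scalar ``multiplier'' $\lambda$, differentiate in $z$, and identify the derivative with the Green's matrix via the Riccati equation \eqref{rica} and the identity \eqref{g3}.

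\textbf{Step 1: the scalar multiplier.} Fix $z\in\mathbb{C}_{\delta_0}\backslash\R$. The columns of $\begin{pmatrix}F_z^+(0,x)\\ F_z^+(-1,x)\end{pmatrix}$ span $E_s(z,x)\oplus E_+(z,x)$, which is invariant. In this frame, the $d$-step iterate acts by a matrix $A(z,x)$ defined by
\[
\begin{pmatrix}F_z^+(1,x)\\ F_z^+(0,x)\end{pmatrix}
=
\begin{pmatrix}F_z^+(0,T^dx)\\ F_z^+(-1,T^dx)\end{pmatrix}A(z,x).
\]
Equivalently $F_z^+(0,T^dx)A(z,x)=M_+(z,x)F_z^+(0,x)$. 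Since $E_s$ is itself invariant and is spanned by the first $d-1$ columns, $A$ is block upper triangular with a $1\times1$ corner $\lambda(z,x)=\langle\delta_d,A(z,x)\delta_d\rangle$. This corner is holomorphic in $z$ and analytic and nonvanishing in $x$.

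The quotient cocycle on $(E_s\oplus E_+)/E_s$ is one-dimensional with multiplier $\lambda$. Its exponent is the exponent of $E_+$, namely $dL_{d+1}(z)$ for the $d$-step dynamics, consistent with \eqref{eeeq1}. Hence
\[
dL_{d+1}(z)=\int_\T\ln|\lambda(z,x)|\,dx .
\]
This uses unique ergodicity, the uniform (dominated) angle between $E_s$ and $E_+$, and the fact that $\lambda$ does not vanish.

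\textbf{Step 2: differentiate.} The function $\int\ln\lambda\,dx$ is holomorphic in $z$, because a continuous branch of $\ln\lambda$ exists with zero winding after an $x$-coboundary adjustment. By Cauchy--Riemann,
\[
\partial_{\Im z}\,\Re\!\int\ln\lambda\,dx=-\Im\!\int\frac{\partial_z\lambda}{\lambda}\,dx .
\]
Because $A$ is block triangular, the corner satisfies
\[
\frac{\partial_z\lambda}{\lambda}=\langle\delta_d,A^{-1}\partial_zA\,\delta_d\rangle .
\]
Differentiating $F^+(0,T^dx)A=M_+F^+(0,x)$ gives
\[
A^{-1}\partial_zA=F^+(0,x)^{-1}M_+^{-1}(\partial_zM_+)F^+(0,x)+K(x)-K(T^dx),
\]
where $K=F^{+}(0,\cdot)^{-1}\partial_zF^+(0,\cdot)$. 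The $(d,d)$-entry of the coboundary $K(x)-K(T^dx)$ integrates to zero. It therefore remains to show that $\int\langle\delta_d,F^{-1}M_+^{-1}\partial_zM_+F\,\delta_d\rangle dx=\int g(z,x)\,dx$.

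\textbf{Step 3: the Green's matrix identity (main obstacle).} Differentiate \eqref{rica} in $z$:
\[
C\,\partial_zM_+(x)-C^*M_+^{-1}(T^{-d}x)\,\partial_zM_+(T^{-d}x)\,M_+^{-1}(T^{-d}x)=I_d .
\]
Set $X(x)=\partial_zM_+(x)$. I would solve this equation by showing that $X(x)=M_+(x)\,Y(x)$ with $Y$ expressed through $G$: multiply on the left by $G(z,x)$ and on the right by $M_+(T^{-d}x)$, and use \eqref{g3} to convert $G(x)C^*M_+^{-1}(T^{-d}x)$ into $M_+(T^{-d}x)G(T^{-d}x)C-I_d$. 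The expected outcome is that
\[
M_+^{-1}\partial_zM_+=(\text{terms built from }G(z,x))+(\text{a coboundary of the form }W(T^{-d}x)-W(x)),
\]
where $W$ is conjugated by the $M_+$'s. After conjugating by $F^+(0,x)$, the coboundary again contributes zero to the $(d,d)$-entry average, and the remaining term is exactly $g(z,x)$.

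The delicate point is that the telescoping must survive conjugation by $F^+(0,\cdot)$ and restriction to the $(d,d)$-entry. Triangularity ensures that only the corner matters, but I must check that the off-diagonal $E_s$ contributions cancel and do not leak into the corner. A second point is the boundedness and analyticity of $W$ needed to discard the coboundary; this follows from uniform hyperbolicity off the real axis.

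Combining Steps 1--3 gives $d\,\partial_{\Im z}L_{d+1}=-\Im\int_\T g\,dx$, which is the claimed formula. The sign is to be fixed by the convention that $L_{d+1}=-L_d\le0$.
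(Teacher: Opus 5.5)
Your proposal follows essentially the paper's own route: your $\lambda$ is the inverse of the paper's $\tau$ (the $(d,d)$-corner of the $d$-step cocycle in the frame $F_z^+$), Step~2 is the paper's Lemma~\ref{wewe1} with $h=K_{dd}$, and Step~3 is exactly the paper's computation, where differentiating \eqref{rica}, multiplying on the left by $G(z,x)$ and using \eqref{g3} produces the telescoping term $f(z,x)-f(z,T^{-d}x)$ with $f=\langle\delta_d,(F_z^+(0,x))^{-1}G(z,x)\,\partial_z(CM_+(z,x))\,F_z^+(0,x)\delta_d\rangle$ (the signs then come out right, with no convention needed). The one correction is that $A$ is block \emph{diagonal}, not merely block upper triangular, because $u_z^+$ spans the invariant line $E_+(z,x)$; this is what makes conjugation by $A$ preserve the $(d,d)$-entry, so the Step~2 coboundary (correctly $K(x)-A^{-1}K(T^dx)A$, not $K(x)-K(T^dx)$) and the Step~3 coboundary have the right corners. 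With triangularity alone the corner would pick up the term $\lambda^{-1}\sum_{j<d}K_{dj}(T^dx)A_{jd}$, which is exactly the leakage you were worried about.
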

\begin{pf}
Let $(d\alpha,L_{d,z,v}^w)$ be the cocycle corresponding to the
eigenequation $L_{d,v,\alpha,x}^wu=zu$ where $L_{d,v,\alpha,x}^w$ is defined
by \eqref{ldfo}. By invariance, there is $\tau(z,x)$ depending analytically on $x$ and  $z$ such that
$$
L_{d,z,v}^w(x)\begin{pmatrix}\vec{u}^+_z(0,x)\\ \vec{u}^+_z(-1,x)
\end{pmatrix}=\begin{pmatrix}\vec{u}^+_z(0,T^dx)\\ \vec{u}^+_z(-1,T^dx)
\end{pmatrix}\frac{1}{\tau(z,x)}
$$
where $u_z^+(n,x)$ is from \eqref{u(n)}.

By \eqref{eeeq1}, we have
\begin{align*}
 d\cdot L_{d+1}(z)=-\int_\T \ln |\tau(z,x)|dx.
\end{align*}
It suffices for us to prove
\be\label{gi}
\int_\T\frac{\partial \tau(z,x)}{\partial z}\frac{1}{\tau(z,x)}dx=-\int_{\T}g(z,x)dx,
\ee
Once we have this, then the result follows from  the Cauchy-Riemann equations.

Again by invariance and the definition of $\{F_z^+(k,x)\}_{k\in\Z}$, i.e., \eqref{r1} and \eqref{r2},  we have
\begin{align}\label{tm}
\tau(z,x)&=\langle \delta_d, (F_z^+(1,x)^{-1}F_z^+(0,T^dx)\delta_d\rangle
\end{align}
where $\{\delta_{j}\}_{j=1}^d$ is the canonical basis of $\C^{d}$.
\begin{Lemma}\label{wewe1}
We have that
\begin{align*}
\frac{\partial \tau(z,x)}{\partial z}\frac{1}{\tau(z,x)}=&\langle\delta_d,(F_z^+(0,x))^{-1}\frac{\partial  M^{-1}_+(z,x)}{\partial z}  M_+(z,x)F_z^+(0,x)\delta_d\rangle\\
&-h(z,x)+h(z,T^dx).
\end{align*}
where
$h(z,x)=\langle \delta_d, (F_z^+(0,x))^{-1}\frac{\partial  F_z^+(0,x)}{\partial z}\delta_d\rangle$.
\end{Lemma}
\begin{pf}
By invariance, \eqref{r1} and \eqref{r2},  we have  for some $U(z,x)$,
\begin{align*}
(F_z^+(1,x))^{-1}F_z^+(0,T^dx)=:\widetilde{M}^{-1}_+(z,x)={\rm diag}\{U(z,x),\tau(z,x)\}.
\end{align*}
It follows,
\begin{align*}
& (F_z^+(0,x))^{-1}\frac{\partial  M^{-1}_+(z,x)}{\partial z}  M_+(z,x)F_z^+(0,x)\\
=&(F_z^+(0,x))^{-1}\frac{\partial  F_z^+(0,x) \widetilde{M}^{-1}_+(z,x)(F_z^+(0,T^dx))^{-1}}{\partial z}  M_+(z,x)F_z^+(0,x)\\
=&(F_z^+(0,x))^{-1}\frac{\partial  F_z^+(0,x)}{\partial z} \widetilde{M}^{-1}_+(z,x)(F_z^+(0,T^dx))^{-1} M_+(z,x)F_z^+(0,x)\\
&+(F_z^+(0,x))^{-1} F_z^+(0,x) \frac{\partial\widetilde{M}^{-1}_+(z,x)}{\partial z}(F_z^+(0,T^dx))^{-1}  M_+(z,x)F_z^+(0,x)\\
&+(F_z^+(0,x))^{-1} F_z^+(0,x) \widetilde{M}^{-1}_+(z,x)\frac{\partial(F_z^+(0,T^dx))^{-1}}{\partial z} M_+(z,x)F_z^+(0,x)\\
=& \frac{\partial \widetilde M_+^{-1} (z,x)}{\partial z}\widetilde M_+(z,x) + E(z,x)
\end{align*}
where  we set
\begin{align}\label{e2}
E(z,x):=&(F_z^+(0,x))^{-1}\frac{\partial  F_z^+(0,x)}{\partial z}\\ \nonumber
&-\widetilde{M}^{-1}_+(z,x)(F_z^+(0,T^dx))^{-1}\frac{\partial F_z^+(0,T^dx)}{\partial z}\widetilde{M}_+(z,x).
\end{align}
Here we use that for any invertible matrix $A$, we have
$\frac{\partial A^{-1}}{\partial z}A=-A^{-1}\frac{\partial A}{\partial
  z}.$

The result follows since $\widetilde{M}_+(z,x)$ is a block diagonal matrix.
\end{pf}
Finally, we introduce the  auxiliary function
 $$f(z,x)=\langle \delta_d, (F_z^+(0,x)^{-1}G(z,x)\frac{\partial CM_+(z,x)}{ \partial z}F_z^+(0,x)\delta_d\rangle.
$$
\begin{Lemma}
 We have that
\begin{align}\label{ft}
&\frac{\partial \tau(z,T^{-d}x)}{\partial z}\frac{1}{\tau(z,T^{-d}x)} +g(z,x)\\ \nonumber
=&f(z,x)-f(z,T^{-d}x)+h(z,x)-h(z,T^{-d}x).
\end{align}
\end{Lemma}
\begin{pf}
By \eqref{rica}, we have
\begin{align*}
&\frac{\partial CM_+(z,x)}{ \partial z}=-\frac{\partial C^*M^{-1}_+(z,T^{-d}x)}{ \partial z}+I_d\\
=&C^*M^{-1}_+(z,T^{-d}x)\frac{\partial M_+(z,T^{-d}x)}{\partial z}M^{-1}_+(z,T^{-d}x)+I_d.
\end{align*}
Then by Proposition \ref{prop123},
\begin{align*}
&(F_z^+(0,x))^{-1}G(z,x)\frac{\partial CM_+(z,x)}{ \partial z}F_z^+(0,x)\\
=&(F_z^+(0,x))^{-1}G(z,x)\left(C^*M^{-1}_+(z,T^{-d}x)\frac{\partial M_+(z,T^{-d}x)}{\partial z}M^{-1}_+(z,T^{-d}x)+I_d\right)F_z^+(0,x)\\
=&(F_z^+(0,x))^{-1}G(z,x)C^*M^{-1}_+(z,T^{-d}x)\frac{\partial M_+(z,T^{-d}x)}{\partial z}M^{-1}_+(z,T^{-d}x)F_z^+(0,x)\\
&+(F_z^+(0,x))^{-1}G(z,x)F_z^+(0,x) \\
=&(F_z^+(0,x))^{-1}\left(M_+(z,T^{-d}x)G(z,T^{-d}x) C - I_d\right)\frac{\partial M_+(z,T^{-d}x)}{\partial z}M^{-1}_+(z,T^{-d}x)F_z^+(0,x)\\
&+(F_z^+(0,x))^{-1}G(z,x)F_z^+(0,x) \\
=&\widetilde{M}_+(z,T^{-d}x)(F_z^+(0,T^{-d}x))^{-1}G(z,T^{-d}x)\frac{\partial CM_+(z,T^{-d}x)}{ \partial z}F_z^+(0,T^{-d}x)\widetilde{M}^{-1}_+(z,T^{-d}x)\\
&+\widetilde{M}_+(z,T^{-d}x)(F_z^+(0,T^{-d}x))^{-1}\frac{\partial  M^{-1}_+(z,T^{-d}x)}{\partial z}  M_+(z,T^{-d}x)F_z^+(0,T^{-d}x)\widetilde{M}^{-1}_+(z,T^{-d}x)\\
&+(F_z^+(0,x))^{-1}G(z,x)F_z^+(0,x).
\end{align*}
By Lemma \ref{wewe1},
$$
f(z,x)-f(z,T^{-d}x)=\frac{\partial \tau(z,T^{-d}x)}{\partial z}\frac{1}{\tau(z,T^{-d}x)}+h(z,T^{-d}x)-h(z,x)+g(z,x).
$$
\end{pf}
Finally, the integral of both sides of \eqref{ft} over $\T$
\begin{align*}
\int_\T\frac{\partial \tau(z,T^{-d}x)}{\partial z}\frac{1}{\tau(z,T^{-d}x)}dx =-\int_\Omega g(z,x)dx
\end{align*}
leads to \eqref{gi} and thus we get the desired result.
\end{pf}

\subsection{Kotani theoretic estimates} By \eqref{eeeq1} and \eqref{eeeq1'}, for any $z\in
\C_{\delta_0}\backslash\R$, there are non-zero solutions to
$L_{v,\alpha,x}^wu=zu$,  $(\vec{u}_z^\pm(n,x))_{n\in\Z}$ that are $\ell^2$ at $\pm \infty$.\begin{Lemma}\label{le1}
For any $z\in \mathbb{H}_{\delta_0}=\C_{\delta_0}\cap \mathbb{H}$, we have
\begin{align}\label{L1}
&\int_\T \ln\left(1-\frac{\Im z\|\vec{u}^+_z(0,x)\|^2}{\Im \left(\vec{u}^+_z(0,x)\right)^*C\vec{u}^+_z(1,x)}\right)dx=-2dL_{d+1}(z),
\end{align}
\begin{align}\label{L2}
&\int_\T \ln\left(1+\frac{\Im z\|\vec{u}^-_{\bar{z}}(0,x)\|^2}{\Im \left(\vec{u}_{\bar{z}}^-(0,x)\right)^*C^*\vec{u}_{\bar{z}}^-(-1,x)}\right)dx=2dL_{d}(\bar{z}).
\end{align}
\end{Lemma}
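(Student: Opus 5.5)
The plan is to obtain both identities from the invariance relation for the one‑dimensional invariant directions $u_z^{+}$ and $u_{\bar z}^-$, combined with a Green‑type (Wronskian) identity for the vector Schr\"odinger equation \eqref{ef100}.

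\textbf{Setting up the multiplier.} For \eqref{L1}, write $\vec u_z^+(k,x)$ as in \eqref{u(n)}. By invariance of $E_+(z,x)$ (as in the proof of Theorem~\ref{mg}) we have
\[
\vec u_z^+(k+1,x)=\tau(z,x)^{-1}\,\vec u_z^+(k,T^dx)
\]
up to the shift convention, and
\[
dL_{d+1}(z)=-\int_\T\ln|\tau(z,x)|\,dx .
\]
So it suffices to prove the pointwise identity
\[
|\tau(z,x)|^{-2}=\frac{Q(T^dx)}{Q(x)}\Bigl(1-\frac{\Im z\,\|\vec u_z^+(0,x)\|^2}{\Im (\vec u_z^+(0,x))^*C\vec u_z^+(1,x)}\Bigr)^{-1},
\qquad Q(x):=\Im\bigl(\vec u_z^+(0,x)\bigr)^*C\vec u_z^+(1,x),
\]
or an equivalent form whose logarithm differs from the integrand by a coboundary. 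Integrating, the coboundary $\ln|Q(T^dx)|-\ln|Q(x)|$ vanishes, which gives \eqref{L1}.

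\textbf{Green identity.} Take the inner product of \eqref{ef100} (with $E=z$) against $\vec u_n$ and take imaginary parts. Since $B$ is Hermitian,
\[
\Im\,\vec u_n^*C\vec u_{n+1}-\Im\,\vec u_{n-1}^*C\vec u_{n}=\Im z\,\|\vec u_n\|^2 ,
\]
using $\vec u_n^*C^*\vec u_{n-1}=\overline{\vec u_{n-1}^*C\vec u_n}$. With $n=1$ and $W_n:=\Im\vec u_{n-1}^*C\vec u_n$, this reads $W_2=W_1+\Im z\|\vec u_1\|^2$. The vector $(\vec u_1,\vec u_0)$ is $\tau^{-1}(\vec u_0,\vec u_{-1})$ evaluated at $T^dx$, so $W_2(x)=|\tau(z,x)|^{-2}W_1(T^dx)$. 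Rearranging gives
\[
|\tau|^{-2}\frac{W_1(T^dx)}{W_1(x)}=1+\frac{\Im z\|\vec u_1\|^2}{W_1(x)},
\]
and after an index relabeling (shift by one block, which is again a coboundary under integration) this becomes the displayed identity. The sign in \eqref{L1} should come out as stated once the direction of the shift is fixed consistently.

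\textbf{Main obstacle: nonvanishing of $Q$.} The key difficulty is showing that $Q(x)=\Im(\vec u_z^+(0))^*C\vec u_z^+(1)$ does not vanish and has a definite sign, so that the logarithms are well defined and the argument is positive. I would sum the Green identity from $n=1$ to $\infty$, using that $u_z^+$ decays (by \eqref{eeeq1} and $L_{d+1}(z)<0$ for $z\notin\R$, by uniform hyperbolicity). This gives
\[
-W_1=\Im z\sum_{n\ge1}\|\vec u_n\|^2<0 \quad\text{for } \Im z>0,
\]
which yields a definite sign, and hence positivity of
\[
1-\frac{\Im z\|\vec u_0\|^2}{Q}.
\]
This is the step where the choice of the \emph{decaying} center direction $u^+_z$, rather than a merely bounded one, is essential. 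It is available because for $z$ off the real axis the center splits as $E_+\oplus E_-$.

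\textbf{The second identity.} \eqref{L2} follows the same way for $u^-_{\bar z}$, which is $\ell^2$ at $-\infty$ with exponent $L_d(\bar z)$ by \eqref{eeeq1'}. Here one sums the Green identity toward $-\infty$, the roles of $C$ and $C^*$ are swapped, and the sign of $\Im\bar z$ flips, which produces the plus sign and the factor $+2dL_d(\bar z)$.
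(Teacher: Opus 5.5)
Your proposal is correct and follows essentially the paper's proof: the same Green/Wronskian identity for \eqref{ef100}, covariance of $u^{\pm}$, vanishing of coboundaries under $\int_\T$, Birkhoff for the multiplier, and nonvanishing of the denominator by summing the Green identity to $\pm\infty$ (the paper's \eqref{+}); the paper only repackages this through the normalized quantities $c_+$ and $m_+$. The sign you leave open is settled as follows. With $W_n=\Im\vec u_{n-1}^*C\vec u_n$, the target integrand is exactly $W_0/W_1$, while your $n=1$ expression is $W_2(x)/W_1(x)=W_1(T^dx)/W_0(T^dx)$, the reciprocal of the target composed with $T^d$; so your expression integrates to $2dL_{d+1}(z)$ and the target to $-2dL_{d+1}(z)$, and working at $n=0$ from the start, as the paper does in \eqref{w1}, avoids this detour.
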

\begin{pf}
We only prove \eqref{L1}, since \eqref{L2} follows similarly. Notice that $\{\vec{u}_z^+(n,x)\}_{n\in\Z}$ is the solution of the following equation
\begin{equation}\label{efsch}
C^*\vec{u}(n-1,x)+C\vec{u}(n+1,x)+B(T^{dn}x)\vec{u}(n,x)=z\vec{u}(n,x).
\end{equation}
It follows from \eqref{efsch} that
\begin{equation}\label{w1}
\left(\vec{u}_z^+(0,x)\right)^*C^*\vec{u}_z^+(-1,x)+\left(\vec{u}_z^+(0,x)\right)^*C\vec{u}_z^+(1,x)+\left(\vec{u}_z^+(0,x)\right)^*(B(x)-z)\vec{u}_z^+(0,x)=0.
\end{equation}
Let
\begin{equation}\label{w2}
c_+(x)=\left|\left(\vec{u}_z^+(0,x)\right)^*C\vec{u}_z^+(0,T^{d}x)\right|,
\end{equation}
\begin{equation}\label{ww3}
m_+(x)=-\frac{\left(\vec{u}_z^+(0,x)\right)^*C\vec{u}_z^+(1,x)}{\left|\left(\vec{u}_z^+(0,x)\right)^*C\vec{u}_z^+(0,T^{d}x)\right|},
\end{equation}
Multiplying both sides of equation \eqref{efsch} by $\vec{u}^*(n,x)$, taking the imaginary part and summing all the terms of each side, we get
\begin{equation*}
\Im \vec{u}^*(0,x)C\vec{u}(1,x)=-\Im z\sum\limits_{n=1}^{\infty}\|\vec{u}(n,x)\|^2,
\end{equation*}
Thus for $\{\vec{u}_z^{+}(k,x)\}_{k\in\Z}$, we have
\begin{equation}\label{+}
\Im \left(\vec{u}_z^{+}(0,x)\right)^*C\vec{u}_z^+(1,x)=-\Im z\sum\limits_{n=1}^{\infty}\|\vec{u}_z^+(n,x)\|^2,
\end{equation}
It follows from \eqref{ww3} and \eqref{+} that
$$
\Im m_+(x)>0.
$$

On the other hand, by the coinvariance, there is $\tau_+(x)$ such that
$$
\vec{u}_z^+(1,x)=\vec{u}_z^+(0,T^{d}x)\tau_+(x),
$$
which means that
\begin{equation}\label{w4}
m_+(x)=-\tau_+(x)\frac{\left(\vec{u}_z^+(0,x)\right)^*C\vec{u}_z^+(0,T^dx)}{\left|\left(\vec{u}_z^+(0,x)\right)^*C\vec{u}_z^+(0,T^dx)\right|},
\end{equation}
By \eqref{w1}, \eqref{w2}, \eqref{ww3} and \eqref{w4}, we have
\begin{equation*}
-c_+(T^{-d}x)m_+^{-1}(T^{-d}x)-c_+(x)m_+(x)+\left(\vec{u}_z^+(0,x)\right)^*(B(x)-z)\vec{u}_z^+(0,x)=0,
\end{equation*}
Taking the imaginary part, one has
\begin{equation}\label{equal}
-c_+(T^{-d}x)\frac{\Im m_+(T^{-d}x)}{|m_+(T^{-d}x)|^2}+c_+(x)\Im m_+(x)+\Im z\|u_z^+(0,x)\|^2=0.
\end{equation}
Thus
\begin{align*}
&\ln\left(1+\frac{\Im z \|u_z^+(0,x)\|^2}{c_+(x)\Im m_+(x)}\right)=\ln c_+(T^{-d}x)-\ln c_+(x)\\
&+\ln \Im m_+(T^{-d}x)-\ln \Im m_+(x)-2\ln |m_+(T^{-d}x)|,
\end{align*}
and it follows using \eqref{w2}. \eqref{ww3} and \eqref{w4} that
\begin{align}\label{w7}
\int_\T\ln\left(1-\frac{\Im z \|u_z^+(0,x)\|^2}{\Im \left(\vec{u}^+_z(0,x)\right)^*C\vec{u}^+_z(1,x)}\right)dx&=-2\int_\T \ln |m_+(T^{-d}x)|dx\\ \nonumber
=&-2\int_\T \ln |\tau_+(x)|dx.
\end{align}
Finally, by dominated splitting and the definition of $\{u_z^+(n,x)\}_{n\in\Z}$,  we have for almost every $x$,
$$
\lim\limits_{n\rightarrow\infty}\frac{1}{n}\ln \frac{\|u_z^+(n,x)\|}{\|u_z^+(0,x)\|}=dL_{d+1}(z),
$$
By the invariance,
$$
\ln \frac{\|u_z^+(n,x)\|}{\|u_z^+(0,x)\|}=\sum\limits_{m=0}^{n-1}\ln|\tau_+(T^{md}x)|,
$$
and hence Birkhoff's ergodic theorem implies,
\begin{equation}\label{w8}
\int_\T \ln|\tau_+(x)|dx=dL_{d+1}(z).
\end{equation}
\eqref{w7} and \eqref{w8} completes the proof.
\end{pf}
\begin{Lemma}\label{le2}
For any $z\in \mathbb{H}_{\delta_0}$, we have that
\begin{align*}
&\int_\T \frac{1}{-\Im \frac{\left(\vec{u}^+_z(0,x)\right)^*C\vec{u}^+_z(1,x)}{\|\vec{u}^+_z(0,x)\|^2}+\frac{1}{2}\Im z}dx\leq -\frac{2dL_{d+1}(z)}{\Im z},
\end{align*}
\begin{align*}
&\int_\T \frac{1}{\Im \frac{\left(\vec{u}^+_z(0,x)\right)^*C^*\vec{u}^+_z(-1,x)}{\|\vec{u}^+_z(0,x)\|^2}-\frac{1}{2}\Im z}dx\leq -\frac{2dL_{d+1}(z)}{\Im z},
\end{align*}
\begin{align*}
&\int_\T \frac{1}{\Im \frac{\left(\vec{u}^-_{\bar{z}}(0,x)\right)^*C^*\vec{u}^-_{\bar{z}}(-1,x)}{\|\vec{u}^-_{\bar{z}}(0,x)\|^2}+\frac{1}{2}\Im z}dx\leq \frac{2dL_d(\bar{z})}{\Im z},
\end{align*}
\begin{align*}
&\int_\T \frac{1}{-\Im \frac{\left(\vec{u}^-_{\bar{z}}(0,x)\right)^*C\vec{u}^-_{\bar{z}}(1,x)}{\|\vec{u}^-_{\bar{z}}(0,x)\|^2}-\frac{1}{2}\Im z}dx\leq \frac{2dL_d(\bar{z})}{\Im z}.
\end{align*}
\end{Lemma}
\begin{pf}
For $x\geq 0$, consider the function
$$
A(x)=\ln (1+x)-\frac{x}{1+\frac{x}{2}}.
$$
Clearly,
$$
A(0)=0,\ \ A'(x)=\frac{1}{1+x}-\frac{1}{1+x+\frac{x^2}{4}}\geq 0.
$$
Hence
\begin{equation}\label{ff3new}
\ln(1+x)\geq \frac{x}{1+\frac{x}{2}},\  \ \forall x\geq 0,
\end{equation}
By \eqref{ff3new} and Lemma \ref{le1}, we have
\begin{align*}
\int_\T \frac{1}{-\Im \frac{\left(\vec{u}^+_z(0,\omega)\right)^*C\vec{u}^+_z(1,\omega)}{\|\vec{u}^+_z(0,\omega)\|^2}+\frac{1}{2}\Im z}dx&=\frac{1}{\Im z}\int_\T \frac{-\frac{\Im z}{\Im \frac{\left(\vec{u}^+_z(0,\omega)\right)^*C\vec{u}^+_z(1,\omega)}{\|\vec{u}^+_z(0,\omega)\|^2}}}{1+\frac{\Im z}{-2\Im \frac{\left(\vec{u}^+_z(0,\omega)\right)^*C\vec{u}^+_z(1,\omega)}{\|\vec{u}^+_z(0,\omega)\|^2}}}dx\\
&\leq \frac{1}{\Im z}\int_\T \ln\left(1+\frac{-\Im z\|\vec{u}^+_z(0,\omega)\|^2}{\Im \left(\vec{u}^+_z(0,\omega)\right)^*C\vec{u}^+_z(1,\omega)}\right)dx\\
&=-\frac{2dL_{d+1}(z)}{\Im z}.
\end{align*}
Note that 
$$
C^*\vec{u}^\pm_z(-1,x)+C\vec{u}^\pm_z(1,x)+B(x)\vec{u}^\pm_z(0,x)=z\vec{u}^\pm_z(0,x),
$$
hence multiplying as left by $(\vec{u}^\pm_z(0,x))^*$ and taking imaginary part of both sides, we have
$$
-\Im \frac{\left(\vec{u}^+_z(0,x)\right)^*C\vec{u}^+_z(1,x)}{\|\vec{u}^+_z(0,x)\|^2}+\frac{1}{2}\Im z=\Im \frac{\left(\vec{u}^+_z(0,x)\right)^*C^*\vec{u}^+_z(-1,x)}{\|\vec{u}^+_z(0,x)\|^2}-\frac{1}{2}\Im z,
$$
$$
-\Im \frac{\left(\vec{u}^-_{\bar{z}}(0,x)\right)^*C\vec{u}^-_{\bar{z}}(1,x)}{\|\vec{u}^-_{\bar{z}}(0,x)\|^2}-\frac{1}{2}\Im z=\Im \frac{\left(\vec{u}^-_{\bar{z}}(0,x)\right)^*C\vec{u}^-_{\bar{z}}(-1,x)}{\|\vec{u}^-_{\bar{z}}(0,x)\|^2}+\frac{1}{2}\Im  z.
$$
The proof for the other inequalities  follows in exactly the same way.
\end{pf}

Recall that by the $PH2$ property, for any $z\in\C_{\delta_0}\cap \R$, there exist continuous invariant decompositions
\begin{equation}\label{eqnew1}
\C^{2d}=E_s(z,x)\oplus E_c(z,x)\oplus E_u(z,x),\ \ \forall x\in\T,
\end{equation}
where $E_c(z,x)$ is the two dimensional center and moreover again by Theorem 6.1 in \cite{ajs}, $E_s(z,x)$, $E_c(z,x)$, and $E_u(z,x)$ depend analytically on $x$ and  $z$.  

\begin{Proposition}\label{newwe'}
For any fixed $E_0\in \Sigma_{v,\alpha}^w$, there exist $0<\delta_1<\delta_0$ and two linearly independent vectors $u_E(x), v_E(x)\in E_c(x)$ for  $(E,x)\in (E_0-\delta_1,E_0+\delta_1)\times \T$, depending analytically on $E$ and  $x$, such that 
\begin{equation}\label{gjy2n}
\begin{pmatrix}
u^*_{E-i\delta}(x)\\ v^*_{E-i\delta}(x)
\end{pmatrix}S\begin{pmatrix}
u_{E+i\delta}(x)& v_{E+i\delta}(x)\end{pmatrix}=J,
\end{equation}
\begin{equation}\label{gjy3n}
\sup_{x\in\T}\left\|\begin{pmatrix}
u^*_{E+i\delta}(x)\\ v^*_{E+i\delta}(x)
\end{pmatrix}S\begin{pmatrix}
u_{E+i\delta}(x)& v_{E+i\delta}(x)\end{pmatrix}-J\right\|\leq C(|E-E_0||\delta|+\delta^2),
\end{equation}
for $|\delta|<\delta_1$, where $C$ only depends on $E_0$.
\end{Proposition}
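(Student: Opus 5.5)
The plan is to build the basis by normalizing an arbitrary analytic frame of the center through a holomorphic Gram-type procedure with respect to the sesquilinear form $S$. The idea is to exploit the identity
\[
L^w_{\bar z,v}(x)^*\,S\,L^w_{z,v}(x)=S .
\]
It holds because $z$ enters $L^w_{z,v}$ only linearly through the entry $z-w-\hat v_0$, so it is exactly the real-energy symplecticity continued holomorphically in $z$.

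\textbf{Step 1 (analytic frame and Gram matrix).} By \eqref{eqnew1} and Theorem 6.1 of \cite{ajs}, $E_c(z,x)$ is a rank-two bundle depending analytically on $(z,x)$ for $z$ in a complex box $R_{\delta_1}(E_0)$. By Proposition \ref{trivial} it has a global analytic frame $u^0_z(x),v^0_z(x)$. Set $W_z=(u^0_z\ v^0_z)$ and
\[
\Omega(z,x)=W_{\bar z}(x)^*\,S\,W_z(x).
\]
This is a $2\times2$ matrix, holomorphic in $z$, since $z\mapsto \overline{W_{\bar z}}$ is holomorphic. At real $E$ we have $\Omega^*=-\Omega$, because $S^*=-S$.

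\textbf{Step 2 (nondegeneracy of the restricted form).} Next I show that $\Omega(E,x)$ is invertible for real $E$, with signature of $i\Omega$ equal to $(1,1)$. The exponential dichotomy \eqref{eq10}--\eqref{eq13}, combined with invariance of the form $\langle\cdot,S\cdot\rangle$, gives three facts:
\begin{itemize}
\item $E_s$ and $E_u$ are $S$-isotropic;
\item $E_c$ is $S$-orthogonal to $E_s\oplus E_u$;
\item since $S$ is nondegenerate on $\C^{2d}$, its restriction to $E_c$ is nondegenerate.
\end{itemize}
For the signature I would invoke the specialized two-dimensional signature/rotation facts in Appendix~\ref{app:center-facts}. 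There, the vanishing of $\omega^{d-1}$ makes the center dynamics projectively real, i.e.\ conjugate to an $SL(2,\R)$ cocycle up to a scalar phase, so the restricted form has the same type as $J$. By continuity, invertibility persists for $|z-E_0|$ small.

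\textbf{Step 3 (exact normalization \eqref{gjy2n}).} For real $E$, the matrix $i\Omega(E,x)$ is Hermitian of signature $(1,1)$. An explicit Gram--Schmidt, using only rational operations and square roots of the analytic positive quantities arising from diagonal entries or determinants, gives $P_E(x)$, real-analytic in $(E,x)$, with
\[
P_E^*\,\Omega(E,\cdot)\,P_E=J .
\]
Choosing the formulas so that each is a convergent power series in $E$, I continue $P$ holomorphically to $P(z,x)$ and set $(u_z,v_z)=W_zP(z)$. Consider
\[
z\mapsto P(\bar z)^*\,\Omega(z)\,P(z)-J .
\]
This is holomorphic in $z$ and vanishes on the real axis, hence vanishes identically, which gives \eqref{gjy2n} exactly.

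\textbf{Step 4 (estimate \eqref{gjy3n}) --- the main obstacle.} Write
\[
K(a,b)=(u_{\bar a}\ v_{\bar a})^*\,S\,(u_b\ v_b),
\]
which is holomorphic in $(a,b)$. The quantity to bound is $K(E-i\delta,E+i\delta)-J=K(E-i\delta,E+i\delta)-K(E+i\delta,E+i\delta)$. Taylor expansion gives $2i\delta\,A(E,x)+O(\delta^2)$ with $A=-\partial_aK(E,E)$, so a generic normalization only yields $O(\delta)$. To obtain $O(|E-E_0|\delta+\delta^2)$ I must arrange $A(E_0,x)\equiv0$; then $A(E,x)=O(|E-E_0|)$, uniformly in $x$ by compactness.

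I use the remaining gauge freedom $P\mapsto PQ$ with $Q(\bar z)^*JQ(z)=J$, for instance $Q(z)=\exp\bigl((z-E_0)X(x)\bigr)$ with $X\in\mathfrak{sl}(2,\R)$-type, i.e.\ $X^*J+JX=0$ after the appropriate conjugation. This shifts $A(E_0,\cdot)$ by $X^*J-JX$ up to sign. Compatibility with \eqref{gjy2n} at real $E$ forces $iA(E_0)$ to be $J$-Hermitian, and the $J$-Hermitian part can always be absorbed by such an $X$. I would verify this algebraic compatibility first, checking that $A(E_0)$ lies exactly in the image of $X\mapsto X^*J-JX$; if it does not, I would instead multiply $P$ by a holomorphic factor $\exp\bigl((z-E_0)Y\bigr)$ with $Y$ analytic in $x$, solved pointwise in $x$. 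Once $A(E_0,\cdot)=0$, the stated bound follows with $C$ depending only on $E_0$.
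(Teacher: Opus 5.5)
The overall plan matches the paper's. You take a frame of $E_c$ normalized to $J$ at real energies, continue it holomorphically so that $W_{\bar z}^*SW_z\equiv J$, and then apply a gauge change that kills the first-order term at $E_0$. However, the signature step in Step 2 is a genuine gap.

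You deduce that $i\Omega$ has signature $(1,1)$ from $\omega^{d-1}=0$ and projective reality of the center dynamics. This fails for three reasons:
\begin{itemize}
\item The hypothesis is not available. The proposition is stated for every $E_0\in\Sigma^w_{v,\alpha}$ of a $PH_2$ operator, and $\omega^{d-1}=0$ only enters later, in Theorem~\ref{C0reducibility1} and Corollary~\ref{c24}.
\item The appeal is circular. Lemmas~\ref{lem:center-J-unitary} and \ref{lem:projectively-real-factorization} take as input a frame with $O^*SO=J$ and a center cocycle with $M^*JM=J$, which is exactly what the signature is needed to build.
\item It is not a valid implication. The inertia of $iS|_{E_c(x)}$ is a property of the subspace, not of the dynamics on it; a rotation cocycle preserves both $J$ and a definite Hermitian form.
\end{itemize}
The correct argument uses only the facts in your two bullets. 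In a basis adapted to $E_s\oplus E_c\oplus E_u$, the Gram matrix of $iS$ is
\[
\begin{pmatrix}0&0&iN\\0&i\Omega&0\\-iN^*&0&0\end{pmatrix},\qquad N\ \text{invertible}.
\]
The outer block has inertia $(d-1,d-1)$, and $iS=\begin{pmatrix}0&-iC^*\\ iC&0\end{pmatrix}$ has inertia $(d,d)$. Sylvester's law then gives inertia $(1,1)$ for $i\Omega$. This is how Proposition~\ref{newwe} and the remark after Lemma~\ref{lem:center-kernel-signature} proceed.

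Two smaller points.

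\textbf{Step 3.} A Gram--Schmidt built from diagonal entries of $i\Omega$ is not globally defined on $I\times\T$, because those entries can vanish where the frame happens to be $S$-isotropic. Diagonalize instead: the eigenvalues of $i\Omega$ have opposite signs, hence are simple and analytic. The eigenline bundles are analytically trivial by Proposition~\ref{trivial}.

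\textbf{Step 4.} This step is correct, and your hedge resolves in your favour. Differentiating $W_E^*SW_E=J$ in $E$ shows that $A(E,x)=W_E^*S\,\partial_EW_E$ is Hermitian. Hence $X=JA(E_0,x)$ satisfies $X^*J+JX=0$. Then $Q(z)=\exp\bigl((z-E_0)JA(E_0,x)\bigr)$ obeys $Q(\bar z)^*JQ(z)=J$ exactly and sends $A(E_0,\cdot)$ to $A+JX=0$.

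Note that $X$ must range over the full algebra $\{X:X^*J+JX=0\}$, which contains $i\R I$. Real traceless $X$ would absorb only the real symmetric part of $A(E_0)$, not its $i\R J$ part. That part is generally nonzero: the phase change $W\mapsto e^{if(E)}W$ shifts $A$ by $if'J$.

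\textbf{Comparison with the paper.} The paper uses the linear factor $C_z=I+JA_{E_0}(z-E_0)$, the first-order truncation of your exponential. This spoils \eqref{gjy2n} at order $(z-E_0)^2$, leaving $J+A_{E_0}JA_{E_0}(z-E_0)^2$. Exactness is then restored with $D_z$ from Lemma~\ref{dominated matrix}. Your exponential gauge preserves \eqref{gjy2n} throughout, so Lemma~\ref{dominated matrix} becomes unnecessary, with the same first-order jet and the same bound \eqref{gjy3n}.
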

\begin{pf}
Let $\tilde{u}_E(x)$ abd $\tilde{v}_E(x)$ be from Proposition \ref{newwe}. For any fixed $E_0\in \Sigma_{v,\alpha}^w$ and $|z-E_0|\leq \delta(E_0)<\delta_0$, let $z=E+i\delta$, we have
\begin{equation}\label{cw1}
\tilde{u}_{z}(x)=\tilde{u}_{E}(x)+\frac{\partial \tilde{u}_{E}(x)}{\partial E}i\delta+O(\delta^2),
\end{equation}
\begin{equation}\label{cw2}
\tilde{v}_{z}(x)=\tilde{v}_{E}(x)+\frac{\partial \tilde{v}_{E}(x)}{\partial E}i\delta+O(\delta^2).
\end{equation}
We denote 
$$
A_{E}(x)=\begin{pmatrix}\tilde{u}^*_{E}(x)\\ \tilde{v}^*_{E}(x)\end{pmatrix}S\begin{pmatrix}\frac{\partial \tilde{u}_{E}(x)}{\partial E}&\frac{\partial \tilde{v}_{E}(x)}{\partial E}\end{pmatrix}.
$$
By Proposition \ref{newwe}, we have
\begin{equation}\label{ha1}
\begin{pmatrix}\tilde{u}^*_{E}(x)\\ \tilde{v}^*_{E}(x)\end{pmatrix}S\begin{pmatrix}\tilde{u}_{E}(x)&\tilde{v}_{E}(x)\end{pmatrix}=J.
\end{equation}
Thus $A_{E}(x)=A_{E}^*(x)$. It follows from \eqref{cw1} and \eqref{cw2} that
\begin{align}\label{CW3}
\begin{pmatrix}
\tilde{u}^*_{z}(x)\\ \tilde{v}^*_{z}(x)
\end{pmatrix}S\begin{pmatrix}
\tilde{u}_{z}(x)& \tilde{v}_{z}(x)\end{pmatrix}=J+2A_{E}(x)i\delta+O(\delta^2).
\end{align}

We let
\begin{align}\label{CE1}
C_{z}(x):=I+JA_{E_0}(x)(z-E_0)
\end{align} 
\begin{equation}\label{CE2}
\begin{pmatrix}
u'_{z}(x)& v'_{z}(x)\end{pmatrix}=\begin{pmatrix}
\tilde{u}_{z}(x)& \tilde{v}_{z}(x)\end{pmatrix}C_{z}(x).
\end{equation}

By \eqref{ha1}, \eqref{CW3}, \eqref{CE1} and \eqref{CE2}, we have
\begin{equation}\label{ha2}
\begin{pmatrix}
u'^*_{\bar{z}}(x)\\ v'^*_{\bar{z}}(x)
\end{pmatrix}S\begin{pmatrix}
u'_{z}(x)& v'_{z}(x)\end{pmatrix}=J+A_{E_0}(x)JA_{E_0}(x)(z-E_0)^2,
\end{equation}
\begin{equation}\label{ha3}
\begin{pmatrix}
u'^*_{z}(x)\\ v'^*_{z}(x)
\end{pmatrix}S\begin{pmatrix}
u'_{z}(x)& v'_{z}(x)\end{pmatrix}=J+A_{E_0}(x)JA_{E_0}(x)(z-E_0)^2+O(|E-E_0||\delta|+\delta^2).
\end{equation}

By Lemma \ref{dominated matrix}, if  $|z-E_0|<\delta(E_0)$ is sufficiently small, there exists $D_z(x)\in GL(2,\C)$, depending analytically in $z$ and $x$, such that
\begin{equation}\label{CW33}
D_{\bar{z}}(x)^*(J+A_{E_0}(x)JA_{E_0}(x)(z-E_0)^2)D_{z}(x)=J
\end{equation}  
with 
$$
\sup_{x\in\T}\left\|\frac{\partial D_z(x)}{\partial z}\right\|\leq C_1(E_0)|z-E_0|.
$$ 
It follows that
\begin{align}\label{CE11}
\nonumber &\|D_{z}(x)^*(J+A_{E_0}(x)JA_{E_0}(x)(z-E_0)^2)D_{z}(x)-J\|\leq C_2(E_0)\sup_{x\in\T}\|D_z(x)-D_{\bar{z}}(x)\|\\
&\leq C_3(E_0)|z-E_0||\delta|\leq C_3(E_0)(|E-E_0||\delta|+\delta^2).
\end{align}

Finally, let
\begin{equation}\label{CE22}
\begin{pmatrix}
u_{z}(x)& v_{z}(x)\end{pmatrix}=\begin{pmatrix}
u'_{z}(x)& v'_{z}(x)\end{pmatrix}D_{z}(x).
\end{equation}
We have that $u_E(x), v_E(x)$ are  linearly independent, depending analytically on $E$ and $x$ on $(E_0-\delta_1,E_0+\delta_1)\times \T$ for any $0<\delta_1<\delta(E_0)$.

By \eqref{ha2}-\eqref{CE22}, we have
$$
\begin{pmatrix}
u^*_{\bar{z}}(x)\\ v^*_{\bar{z}}(x)
\end{pmatrix}S\begin{pmatrix}
u_{z}(x)& v_{z}(x)\end{pmatrix}=J,
$$
$$
\begin{pmatrix}
u^*_{z}(x)\\ v^*_{z}(x)
\end{pmatrix}S\begin{pmatrix}
u_{z}(x)& v_{z}(x)\end{pmatrix}=J+O(|E-E_0|\delta+\delta^2)
$$
for $z\in R_{\delta_1}(E_0)\times \T$ where $R_{\delta_2}(E_0)$ is defined in \eqref{ha4}.
\end{pf}

We are now ready to define the finite-range analogue of the
$m$-function.   Note that for any $z\in R_{\delta_1}(E_0)\backslash\R$, by the $PH2$ property,   $(\alpha,L_{z,v}^w)$ is $(d-1)$, $d$, $(d+1)$-dominated.  Hence, there exist continuous invariant decompositions
$$
\C^{2d}=E_s(z,x)\oplus E_+(z,x)\oplus E_-(z,x)\oplus E_u(x,x),\ \ \forall x\in\T.
$$
By Theorem 6.1 in \cite{ajs} and Proposition \ref{trivial}, there are linearly independent vectors $u_z^+(x)\in E_+(z,x)$, $u_z^-(x)\in E_-(z,x)$, depending analytically on $x$ and  $z$. It follows that there are $b^\pm_1(z,x)$ and $b^\pm_0(z,x)$, depending analytically on $x$ and  $z$ such that
\begin{align}\label{ff1}
u_z^+(x)=b^+_1(z,x)u_z(x)+b^+_0(z,x)v_z(x),
\end{align}
\begin{align}\label{ff2}
u_z^-(x)=b^-_1(z,x)u_z(x)+b^-_0(z,x)v_z(x)
\end{align}
where  $u_z(x)$ and $v_z(x)$ is a basis for $E_c(z,x)=E_+(z,x)\oplus E_-(z,x)$ satisfying \eqref{gjy2n} and \eqref{gjy3n}.

As a consequence, we have that
\begin{Lemma}\label{nonvanishing}
There exists $0<\delta_2(E_0)<\delta_1(E_0)$ such that for any $z\in R_{\delta_2}(E_0)\cap \mathbb{H}$ and $x\in\T$, we have $\Im \overline{b^+_0(z,x)}b^+_1(z,x)>0$ and $\Im \overline{b^-_0(z,x)}b^-_1(z,x)<0$.
\end{Lemma}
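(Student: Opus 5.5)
The idea is to express $\Im\overline{b^\pm_0}b^\pm_1$ through the sesquilinear symplectic form evaluated on $u_z^\pm(x)$, and to fix its sign by the standard ``Wronskian/Green's identity'' computation used for the Weyl solutions in Lemma \ref{le1}.

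\emph{Step 1: the symplectic form of $u^+_z$ in the center basis.} By \eqref{ff1},
\[
(u_z^+)^*S\,u_z^+=\begin{pmatrix}\overline{b^+_1}&\overline{b^+_0}\end{pmatrix}
\begin{pmatrix}u_z^*\\ v_z^*\end{pmatrix}S\begin{pmatrix}u_z& v_z\end{pmatrix}
\begin{pmatrix}b^+_1\\ b^+_0\end{pmatrix}.
\]
By \eqref{gjy3n} the middle matrix is $J+O(|E-E_0||\delta|+\delta^2)$ for $z=E+i\delta$. Since $\begin{pmatrix}\bar b_1&\bar b_0\end{pmatrix}J\begin{pmatrix}b_1\\ b_0\end{pmatrix}=\bar b_1b_0-\bar b_0 b_1=-2i\,\Im\overline{b_0}b_1$, we get
\[
(u_z^+)^*Su_z^+=-2i\,\Im\bigl(\overline{b^+_0}b^+_1\bigr)+O\bigl((|E-E_0|\delta+\delta^2)(|b_0^+|^2+|b_1^+|^2)\bigr).
\]

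\emph{Step 2: sign via the imaginary-part identity.} Write $u_z^+(x)=\begin{pmatrix}\vec u(1)\\ \vec u(0)\end{pmatrix}$, the initial data of the $\ell^2(\Z_+)$ solution. Then $(u_z^+)^*Su_z^+=-\vec u(1)^*C^*\vec u(0)+\vec u(0)^*C\vec u(1)=2i\,\Im \vec u(0)^*C\vec u(1)$. Summing the eigen-equation against $\vec u(n)^*$ from $n=1$ to $\infty$, exactly as in the derivation of \eqref{+}, gives
\[
\Im \vec u(0)^*C\vec u(1)=-\delta\sum_{n\ge1}\|\vec u(n)\|^2.
\]
Hence $\Im\overline{b^+_0}b^+_1\approx \delta\sum_{n\ge 1}\|\vec u(n)\|^2$, up to the error term. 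The indexing shift between \eqref{r1} and the form $S$ must be tracked carefully, but it only affects which block is summed. For $u^-_z$, summing toward $-\infty$ gives the opposite sign, so $\Im\overline{b^-_0}b^-_1<0$.

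\emph{Step 3: absorbing the error (the main obstacle).} The main term must dominate $C(|E-E_0|\delta+\delta^2)\|b^+\|^2$. The plan is to normalize $u_z^+$ so that $|b_0^+|^2+|b_1^+|^2=1$. Then we need a lower bound $\sum_{n\ge1}\|\vec u(n)\|^2\ge c>0$, uniform for $z$ near $E_0$; after that, $\delta_2$ small makes $C(|E-E_0|+\delta)<c$. Since the error is linear in $\delta$ with the small coefficient $|E-E_0|+\delta$, this suffices. The uniform lower bound itself follows from continuity and compactness: $u_z^+$ is normalized and bounded away from zero, the first step $\vec u(1)$ depends continuously on $(z,x)$ and cannot vanish together with $\vec u(0)$ because the transfer matrix is invertible, and $\T\times\overline{R_{\delta_2}(E_0)}$ is compact. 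Rescaling $u_z^+$ by a positive function does not change the sign, so normalizing is harmless; analyticity is not needed for this sign claim.
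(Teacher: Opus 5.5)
Your proposal is correct and follows essentially the same route as the paper. You express $(u_z^\pm)^*Su_z^\pm$ in the normalized center frame via \eqref{gjy3n}, identify it with the imaginary-part identity \eqref{+} (the paper uses its conjugated form summed from $n=0$, matching the indexing $u_z^+=(\vec u(0),\vec u(-1))$ of \eqref{r1}), and absorb the $O\bigl((|E-E_0|\delta+\delta^2)(|b_0^+|^2+|b_1^+|^2)\bigr)$ error through a lower bound $\delta\sum_n\|\vec u(n)\|^2\ge c\,\delta(|b_0^+|^2+|b_1^+|^2)$, just as the paper does; the only caveat is that this bound should come from the uniformly bounded inverse of the transfer matrix together with the uniform transversality of $u_z,v_z$ on the compact closure, not from continuity of $u_z^+$ itself, which is only defined for $\Im z\neq 0$.
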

\begin{pf}
We only give the proof of the $+$ case. By  conjugating \eqref{+}, for any $z=E+i\delta \in R_{\delta_1}(E_0)\cap \mathbb{H}$, we have that
\begin{equation}\label{20242gjy}
\Im \left(\vec{u}_z^{+}(0,x)\right)^*C^*\vec{u}_z^+(-1,x)=\delta\sum\limits_{n=0}^{\infty}\|\vec{u}_z^+(n,x)\|^2,
\end{equation}
where 
\begin{equation*}
\begin{pmatrix}\vec{u}_z^+(k,x)\\ \vec{u}_z^+(k-1,x)\end{pmatrix}=(L_{z,v}^w)_{dk}(x)u_z^+(x).
\end{equation*}

On the other hand, by \eqref{ff1} and Proposition \ref{newwe'}, we have
\begin{align}\label{20242gjy1}
&-2i\Im \left(\vec{u}_z^{+}(0,x)\right)^*C^*\vec{u}_z^+(-1,x)= \begin{pmatrix}
\vec{u}_z^{+}(0,x)\\ \vec{u}_z^{+}(-1,x)
\end{pmatrix}^*S\begin{pmatrix}
\vec{u}_z^{+}(0,x)\\ \vec{u}_z^{+}(-1,x)
\end{pmatrix}\\ \nonumber
=&\begin{pmatrix}b_1^+(z,x)\\ b_0^+(z,x)\end{pmatrix}^*\begin{pmatrix}
u^*_{z}(x)\\ v^*_{z}(x)
\end{pmatrix}S\begin{pmatrix}
u_{z}(x)& v_{z}(x)\end{pmatrix}\begin{pmatrix}b_1^+(z,x)\\ b_0^+(z,x)\end{pmatrix}\\ \nonumber
=&-2i\Im \overline{b^+_0(z,x)}b^+_1(z,x)+O\left((|E-E_0||\delta|+\delta^2)(|b_0^+(z,x)|^2+|b_1^+(z,x)|^2)\right).
\end{align}

By \eqref{20242gjy}, we have
\begin{align}\label{202512gjy}
&\Im \left(\vec{u}_z^{+}(0,x)\right)^*C^*\vec{u}_z^+(-1,x)\geq c\delta ( \|u_z(x)\|^2|b^+_0(z,x)|^2+\|v_z(x)\|^2|b^+_1(z.x)|^2)
\end{align}
where $c$ only depend on $E_0$ \footnote{Note that $c$ depends on the uniform lower bound of the angle between $u_z(x)$ and $v_z(x)$.}.

By \eqref{20242gjy1} and \eqref{202512gjy}, there is $\delta_2(E_0)$ such that if $|z-E_0|<\delta_2$, then
$$
\left|O\left((|E-E_0||\delta|+\delta^2)(|b_0^+(z,x)|^2+|b_1^+(z,x)|^2)\right)\right|\leq  \frac{1}{100}\Im \left(\vec{u}_z^{+}(0,x)^*C^*\vec{u}_z^+(-1,x)\right)
$$
which together with \eqref{20242gjy} implies 
$$
\Im \overline{b^+_0(z,x)}b^+_1(z,x)>0.
$$
We thus get the desired result.
\end{pf}

\begin{Definition}[m functions]\label{m+def}
{\rm For any $z\in R_{\delta_2}(E_0)\backslash\R$, we define
$$
m_+(z,x)=\frac{b^+_1(z,x)}{b_0^+(z,x)},\ \ m_-(z,x)=\frac{b^-_1(z,x)}{b_0^-(z,x)}.
$$
}
\end{Definition}
By Lemma \ref{nonvanishing}, $m_\pm(z,x)$ is well-defined and depend analytically on $z$ and $x$ on $\left(R_{\delta_2}(E_0)\backslash\R\right)\times \T$.

The following $PH2$ analogue of Kotani theory is key to this work.
\begin{Theorem}[Kotani-theoretic estimates]\label{keyth}
For any fixed $E_0\in \Sigma_{v,\alpha}^w$, we have that
\begin{enumerate}
\item $m_\pm(E\pm i0,x):=\lim_{\delta\rightarrow 0^{\pm}}m_\pm(E+i\delta,x)$ exist for almost every $E\in (E_0-\delta_2,E_0+\delta_2)$ and almost every $x$.
\item For almost every $E\in (E_0-\delta_2,E_0+\delta_2)\cap \{E:L_d(E)=0\}$, we have
\begin{align*}\int_\T \frac{\left\|m_+(E+i0,x)u_E(x)+v_E(x\right\|^2}{\Im m_+(E+i0,x)}dx<\infty,
\end{align*}
\begin{align*}\int_\T \frac{\left\|m_-(E-i0,x)u_E(x)+v_E(x)\right\|^2}{\Im m_-(E-i0,x)}dx<\infty.
\end{align*}
\item For almost every $x$, we have
\begin{align*}
m_+(E+i0,x)=m_-(E-i0,x),
\end{align*}
for almost every $E\in (E_0-\delta_2,E_0+\delta_2)\cap \{E:L_d(E)=0\}$.
\end{enumerate}
\end{Theorem}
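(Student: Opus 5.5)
The plan follows the classical Kotani argument in three steps. First establish Herglotz-type boundary behaviour of $m_\pm$. Then use the averaged Lyapunov identities of Lemmas \ref{le1}--\ref{le2} to get integrability of the boundary values. Finally, a symmetry argument between $m_+$ and $\overline{m_-}$ forces them to coincide where the center exponent vanishes.

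For (1), Lemma \ref{nonvanishing} gives $\Im m_+(z,x)=\Im\overline{b_0^+}b_1^+/|b_0^+|^2>0$ on $R_{\delta_2}(E_0)\cap\mathbb H$. Likewise $m_-$ maps the lower half-box into $\mathbb H$; equivalently, $\overline{m_-(\bar z,x)}$ is a holomorphic map $\mathbb H\cap R_{\delta_2}\to\mathbb H$. For each fixed $x$ these are holomorphic maps from a box into the upper half plane. Composing with a conformal map of the box onto the disc yields bounded holomorphic functions (after a Cayley transform), so Fatou's theorem gives nontangential limits for a.e. $E$. Measurability in $(E,x)$ and Fubini then give existence for a.e. $E$ and a.e. $x$.

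For (2), express the quantities in Lemma \ref{le2} through $m_+$. Write $u_z^+=b_0^+(m_+u_z+v_z)$ and use \eqref{gjy3n}. Then $\Im(\vec u_z^+(0))^*C^*\vec u_z^+(-1)=|b_0^+|^2\bigl(\Im m_+ +O(|E-E_0|\delta+\delta^2)|m_+|^2+\dots\bigr)$, while $\|u_z^+(x)\|^2=|b_0^+|^2\|m_+u_z+v_z\|^2$. Substituting into the second inequality of Lemma \ref{le2} gives
\[
\int_\T\frac{\|m_+(z,x)u_z+v_z\|^2}{\Im m_+(z,x)+O(\delta)}\,dx\lesssim \frac{-2dL_{d+1}(z)}{\Im z}=\frac{2dL_d(z)}{\Im z},
\]
where the error terms are absorbed as in the proof of Lemma \ref{nonvanishing}. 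Next, $L_d$ is the center exponent, and $L_d(E+i\delta)$ is controlled by Theorem \ref{mg}: $\partial_{\Im z}L_{d+1}=-\frac1d\Im\int g$, and $L_d$ is subharmonic and nonnegative off the axis. By the standard Kotani/Deift--Simon argument, at a.e. $E$ with $L_d(E)=0$ and finite derivative of the "density of states" $\int\Im g$, we get $\limsup_{\delta\to0}L_d(E+i\delta)/\delta<\infty$. Fatou's lemma then yields the stated finite integrals; the $m_-$ case follows from the third and fourth inequalities.

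For (3), the finiteness in (2) at a.e. $E$ with $L_d(E)=0$ forces $\Im m_\pm(E\pm i0,x)>0$ for a.e. $x$. It also forces the relative error in \eqref{20242gjy1}, i.e. $\Im\overline{b_0}b_1$ versus the exact symplectic pairing, to vanish in the limit. Then run the classical Kotani equality step. Sum the two inequalities from Lemma \ref{le1} (with $\ln(1+x)$ concavity replaced by the identity) and compare with Theorem \ref{mg}. Equality in the convexity inequality $\ln(1+x)\ge x/(1+x/2)$ in the limit, combined with the relation $L_d=-L_{d+1}$ at real energies, gives $\int_\T\frac{|m_+(E+i0)-\overline{m_-(E-i0)}|^2}{\Im m_+\,\Im\overline{m_-}}\,dx=0$, the analogue of $|m_+-\bar m_-|^2$ in Kotani's proof. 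Here $\overline{m_-}$ should be read appropriately in the $\mathbb H$-normalization, so this gives $m_+(E+i0,x)=m_-(E-i0,x)$ for a.e. $x$.

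The main obstacle is step (3): producing the exact two-sided identity linking $m_+$ and $m_-$ through the center alone. The stable and unstable parts contribute positive exponents $L_1,\dots,L_{d-1}$, and these must cancel. I would handle this with $\omega^{d-1}=0$ in the form that $L^{d-1}$ is harmonic in $z$ near the axis, so only the center contributes to $\partial_{\Im z}$ via Theorem \ref{mg}.
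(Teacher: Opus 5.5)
Your parts (1) and (2) follow the paper's route: Herglotz boundary values for fixed $x$ plus Fubini, then Lemma~\ref{le2} and Fatou at energies where $-L_{d+1}(E+i\delta)/\delta$ has a finite limit. Two points need fixing. First, Lemma~\ref{le2} is normalized by $\|\vec u^+_z(0,x)\|^2$, the top $d$-block only. It therefore controls $\int_\T\|m_+\vec u_E(0,x)+\vec v_E(0,x)\|^2/\Im m_+\,dx$, and the bottom block needs the covariance step $\vec{u'}(0,x)=\vec{u'}(-1,T^dx)\tau'_+(x)$, $\Im m_+(x)|\tau'_+(T^{-d}x)|^2=\Im m_+(T^{-d}x)$ used in the paper. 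Second, off the axis the correct symmetry is $L_{d+1}(z)=-L_d(\bar z)$, which comes from $(L^w_{\bar z,v})^*SL^w_{z,v}=S$, not $-L_d(z)$. The two differ by $L^{d-1}(z)-L^{d-1}(\bar z)$, which is $O(\delta)$ but need not be $o(\delta)$.

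The genuine gap is in (3): nothing in your argument couples $m_+$ with $m_-$. The identities of Lemma~\ref{le1} involve $u^+_z$ and $u^-_{\bar z}$ separately. "Equality in $\ln(1+x)\ge x/(1+x/2)$ in the limit" is vacuous, because the argument tends to $0$ with $\delta$ and the two sides agree to second order. The coupling comes only through the Green's function, and the missing ingredient is Lemma~\ref{final1}. The stable and unstable solutions at $z$ are $S$-orthogonal to $u^\pm_{\bar z}$, and $(u^\pm_{\bar z})^*Su^\pm_z=0$. These relations make $\Phi_{\bar z}^*S\Phi_z$ block anti-diagonal, which inverts $\Phi_z$ explicitly. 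This identifies the quantity in Theorem~\ref{mg} as $\vec u^-_{\bar z}(0,x)^*\vec u^+_z(0,x)/\bigl(u^-_{\bar z}(x)^*Su^+_z(x)\bigr)$. With the normalization \eqref{gjy2n}, which pairs $\bar z$ against $z$, the denominator is exactly $\overline{m_-(\bar z,x)}-m_+(z,x)$. This is how the stable and unstable directions drop out.

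The paper then takes $f$ to be the two Kotani reciprocals of Lemma~\ref{le2} minus $4\Im$ of this quantity. Lemma~\ref{le2} for $m_+$ and for $m_-$, combined with $L_d(\bar z)=-L_{d+1}(z)$ and Lemma~\ref{final1}, gives $\int_\T f(E+i\delta,x)\,dx\le 4d\bigl(\partial_\delta L_{d+1}-L_{d+1}/\delta\bigr)(E+i\delta)\to0$ when $L_d(E)=0$. Pointwise, with $u^+=\vec u^+_{E+i0}(0,x)$, $u^-=\vec u^-_{E-i0}(0,x)$, $m_+=m_+(E+i0,x)$ and $m_-=m_-(E-i0,x)$, one has
\[
4\Bigl|\Im\frac{(u^-)^*u^+}{\overline{m_-}-m_+}\Bigr|\le\frac{4\|u^-\|\,\|u^+\|}{\Im m_++\Im m_-}\le\frac{2\|u^-\|\,\|u^+\|}{\sqrt{\Im m_+\,\Im m_-}}\le\frac{\|u^+\|^2}{\Im m_+}+\frac{\|u^-\|^2}{\Im m_-}.
\]
Fatou, using the integrable lower bound $f\ge-\sqrt\delta g$ of Lemma~\ref{gjyle}, then forces equality throughout for a.e.\ $x$, which is $m_+(E+i0,x)=m_-(E-i0,x)$.

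Your proposed substitute, $\omega^{d-1}=0$, cannot play this role. It is not a hypothesis of this theorem; it enters only later, in Corollary~\ref{c24}, to make the center cocycle projectively real. It concerns complexification of the phase, not the energy. Harmonicity of $L^{d-1}$ in $z$, which near the axis already follows from $(d-1)$-domination, neither makes $\partial_{\Im z}L^{d-1}$ vanish nor expresses $\partial_{\Im z}L_{d+1}$ through $m_\pm$. The cancellation you were worried about is exact once $m_+(z)$ is paired with $m_-(\bar z)$.
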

\begin{pf}
For (1) and (2), we only give the proof of the results for
$m_+(z,x)$ since the proof for  $m_-(z,x)$ is exactly the
same. By Lemma \ref{nonvanishing} and Definition \ref{m+def}, we have $m_+(\cdot,x):\in R_{\delta_2}(E_0)\cap \mathbb{H}\rightarrow \mathbb{H}$ for every $x\in\T$, thus $m_+(z,x)$ is a Herglotz function (up to a conformal map), thus $m_+(E+i0,x)$ exists for almost every $E\in (E_0-\delta_0,E_0+\delta_2)$ and almost every $x\in\T$.

For part (2), let $\vec{u'}_z^{+}(k,x)=m_+(z,x)\vec{u}_z(k,x)+\vec{v}_z(k,x)$. It follows that
\begin{align}\label{aaa1}
-2i\Im (\vec{u'}_z^{+}(0,x))^*C^*\vec{u’}_z^+(-1,x)=\left(m_+(z,x)u_z(x)+v_z(x)\right)^*S\left(m_+(z,x)u_z(x)+v_z(x)\right).
\end{align}
Hence by Proposition \ref{newwe'}, we have
\begin{equation}\label{lana1}
\lim\limits_{\Im z\rightarrow 0}\Im (\vec{u'}_z^{+}(0,x))^*C^*\vec{u’}_z^+(-1,x)=\Im m_+(E,x).
\end{equation}
Note that
$$
\lim_{\delta\rightarrow 0^+}\frac{-L_{d+1}(z)}{\Im z}=\lim_{\delta\rightarrow 0^+}-\frac{\partial L_{d+1}z)}{\partial \Im z}<\infty
$$
for almost every $E\in \{E:L_d(E)=0\}$. Thus by applying Lemma \ref{le2} to $\vec{u'}_z^{+}(x)$, \eqref{lana1} and Fatou's lemma, for almost every $E$, we have
\begin{align}\label{ff2}
&\int_\T \frac{\|m_+(E+i0,x)\vec{u}_E(0,x)+\vec{v}_E(0,x)\|^2}{\Im m_+(E+i0,x)}dx<\infty.
\end{align}
By invariance, we can write
\begin{equation}\label{eqqqq1}
\vec{u'}_E^+(0,x)=\vec{u'}_E^+(-1,T^{d}x)\tau'_+(x).
\end{equation}
By \eqref{aaa1} and \eqref{eqqqq1}, we have
\begin{equation}\label{aaa2}
\Im m_+(E+i0,x)|\tau'_+(T^{-d}x)|^2=\Im m_+(E+i0,T^{-d}x).
\end{equation}
By \eqref{eqqqq1} and \eqref{aaa2}, for almost every $E$, we have
\begin{align}\label{ff3}
&\int_\T \frac{\|m_+(E+i0,x)\vec{u}_E(-1,x)+\vec{v}_E(-1,x)\|^2}{\Im m_+(E+i0,x)}dx\\  \nonumber
=&\int_\T \frac{\|m_+(E+i0,x)\vec{u}_E(-1,x)+\vec{v}_E(-1,x)\|^2|\tau'_+(T^{-d}x)|^2}{\Im m_+(E+i0,T^{-d}x)}dx\\ \nonumber
=&\int_\T \frac{\|m_+(E+i0,T^dx)\vec{u}_E(-1,T^dx)+\vec{v}_E(-1,T^dx)\|^2|\tau'_+(x)|^2}{\Im m_+(E+i0,x)}dx\\  \nonumber
=&\int_\T \frac{\|m_+(E+i0,x)\vec{u}_E(0,x)+\vec{v}_E(0,x)\|^2}{\Im m_+(E+i0,x)}dx
\end{align}
By \eqref{ff2} and \eqref{ff3}, we have
$$\int_\T \frac{\left\|m_+(E+i0,x)u_E(x)+v_E(x)\right\|^2}{\Im m_+(E+i0,x)}dx<\infty,
$$
completing the proof of (2).

Finally, we prove (3). We need the following more precise characterization of $g(z,x)$ in Theorem \ref{mg},
\begin{Lemma}\label{final1}
For any $z\in R_{\delta_2}(E_0)\cap \mathbb{H}$,
\begin{align*}
\frac{\partial L_{d+1}(z)}{\partial\Im z}=-\frac{1}{d}\int_\T\Im \frac{\vec{u}_{\bar{z}}^-(0,x)^*\vec{u}^+_z(0,x)}{u_{\bar{z}}^-(x)^*Su^+_z(x)}dx.\end{align*}
\end{Lemma}
\begin{pf}
As before, for any $z\in R_{\delta_2}(E_0)\cap \mathbb{H}$, there exist analytic invariant decompositions
$$
\C^{2d}=E_s(z,x)\oplus E_+(z,x)\oplus E_-(z,x)\oplus E_u(z,x),\ \ \forall x\in\T,
$$
which by Proposition \ref{trivial} implies that there are $\{u_z^i(x)\}_{i=1}^{d-1}\in E_s(z,x)$, $u_z^\pm(x)\in E_\pm(z,x)$ and $\{v_z^i(x)\}_{i=1}^{d-1}\in E_u(z,x)$ depending analytically on $x$ and  $z$, such that
\begin{align}\label{r1'}
\widetilde{F}_z^+(x)=\begin{pmatrix}u_z^1(x)&\cdots& u_z^{d-1}(x)&u_z^+(x)\end{pmatrix}
\end{align}
\begin{align}\label{r2'}
\widetilde{F}_z^-(x)=\begin{pmatrix}u^-_z(x)&v_z^{d-1}(x)&\cdots&v_z^{1}(x)\end{pmatrix}
\end{align}
satisfy
$$
\sum\limits_{k=0}^\infty\|\widetilde{F}_z^+(k,x)\|^2<\infty,\ \ \sum\limits_{k=-\infty}^{0}\|\widetilde{F}_z^-(k,x\|^2<\infty,
$$
where
\begin{align*}
\widetilde{F}_z^+(k,x)=\begin{pmatrix}\vec{u}_z^1(k,x)&\cdots&\vec{u}_z^{d-1}(k,x)&\vec{u}^+_z(k,x)\\ \vec{u}_z^1(k-1,x)&\cdots&\vec{u}_z^{d-1}(k-1,x)&\vec{u}^+_z(k-1,x)\end{pmatrix}=(L_{z,v}^w)_{dk}(x)\widetilde{F}_z^+(x),
\end{align*}
\begin{align*}
\widetilde{F}_z^-(k,x)=\begin{pmatrix}\vec{u}_z^-(k,x)&\vec{v}_z^{d-1}(k,x)&\cdots&\vec{u}^1_z(k,x)\\ \vec{u}_z^-(k-1,x)&\vec{v}_z^{d-1}(k-1,x)&\cdots&\vec{v}^1_z(k-1,x)\end{pmatrix}=(L_{z,v}^w)_{dk}(x)\widetilde{F}_z^-(x).
\end{align*}
By Theorem \ref{mg}, we have
\begin{align*}
\frac{\partial L_{d+1}(z)}{\partial \Im z}=-\frac{1}{d}\int_\T \Im \langle \delta_d, (\widetilde{F}_{z}^+(0,x))^{-1}G(z,x)\widetilde{F}_{z}^+(0,x)\delta_d\rangle dx.
\end{align*}
Let
\begin{equation}\label{gjy5n}
\Phi_{z}(x)=\begin{pmatrix}
\widetilde{F}_z^+(1,x) & \widetilde{F}_z^-(1,x)\\
\widetilde{F}_z^+(0,x)& \widetilde{F}_z^-(0,x)
\end{pmatrix}.
\end{equation}
Using that $(L_{\bar{z},v}^w(x))^*S L_{z,v}^w(x)=S,$ one can check that for $i=1,2,\cdots,d-1$,
$$
\begin{pmatrix}
\vec{u}^+_{\bar{z}}(1,x) \\ \vec{u}^+_{\bar{z}}(0,x)
\end{pmatrix}^*S\begin{pmatrix}
\vec{v}^i_z(1,x) \\ \vec{v}^i_z(0,x)
\end{pmatrix}=
\begin{pmatrix}\vec{u}^+_{\bar{z}}(1,x) \\ \vec{u}^+_{\bar{z}}(0,x)
\end{pmatrix}^*
S\begin{pmatrix}
\vec{u}^i_z(1,x) \\ \vec{u}^i_z(0,x)
\end{pmatrix}=0,
$$
$$
\begin{pmatrix}
\vec{u}^-_{\bar{z}}(1,x) \\ \vec{u}^-_{\bar{z}}(0,x)
\end{pmatrix}^*S\begin{pmatrix}
\vec{v}^i_z(1,x) \\ \vec{v}^i_z(0,x)
\end{pmatrix}=
\begin{pmatrix}\vec{u}^-_{\bar{z}}(1,x) \\ \vec{u}^-_{\bar{z}}(0,x)
\end{pmatrix}^*
S\begin{pmatrix}
\vec{u}^i_z(1,x) \\ \vec{u}^i_z(0,x)
\end{pmatrix}=0,
$$
$$
\begin{pmatrix}
\vec{u}^+_{\bar{z}}(1,x) \\ \vec{u}^+_{\bar{z}}(0,x)
\end{pmatrix}^*S\begin{pmatrix}
\vec{u}^+_{z}(1,x) \\ \vec{u}^+_z(0,x)
\end{pmatrix}=\begin{pmatrix}
\vec{u}^-_{\bar{z}}(1,x) \\ \vec{u}^-_{\bar{z}}(0,x)
\end{pmatrix}^*S\begin{pmatrix}
\vec{u}^-_{z}(1,x) \\ \vec{u}^-_z(0,x)
\end{pmatrix}=0,
$$
$$
\begin{pmatrix}
\vec{u}^i_{\bar{z}}(1,x) \\ \vec{u}^i_{\bar{z}}(0,x)
\end{pmatrix}^*S\begin{pmatrix}
\vec{u}^j_{z}(1,x) \\ \vec{u}^j_z(0,x)
\end{pmatrix}=0,\ \ \forall j=1,2,\cdots,d-1.
$$
Let 
\begin{align}\label{lanair5}
\nonumber \widetilde{T}_z(x)&=\begin{pmatrix}u^+_{\bar{z}}(x)& u^-_{\bar{z}}(x)\end{pmatrix}^*S\begin{pmatrix}u^+_{z}(x)& u^-_z(x)\end{pmatrix}\\
&=\begin{pmatrix}
0& (u^+_{\bar{z}}(x))^*Su^-_{z}(x)\\
(u^-_{\bar{z}}(x))^*Su^+_{z}(x)&0
\end{pmatrix}.
\end{align}
Thus there are $C_z^\pm(x)$ such that
$$
\Phi_{\bar{z}}^*(x)S\Phi_z(x)=\begin{pmatrix}&&C_z^+(x)\\ &\widetilde{T}_z(x)&\\C_z^-(x)&&\end{pmatrix}
$$
which implies
$$
\Phi^{-1}_z(x)=\begin{pmatrix}&&(C_z^-(x))^{-1}\\ &\widetilde{T}^{-1}_z(x)&\\(C_z^+(x))^{-1}&&\end{pmatrix}\Phi_{\bar{z}}^*(x)S.
$$
On the other hand, by \eqref{gjy5n} and direct calculation, we have
\begin{align*}
\Phi^{-1}_z(x)=\begin{pmatrix}\left(\widetilde{F}_{z}^+(1,x)-\widetilde{F}_{z}^-(1,x)(\widetilde{F}_{z}^-(0,x))^{-1}\widetilde{F}_{z}^+(0,x)\right)^{-1}&*
\\ *&*
\end{pmatrix}.
\end{align*}
It follows by Lemme \ref{Green_Matrix} and \eqref{lanair6} that
\begin{align}\label{aaa}\small
&\left((\widetilde{F}_{z}^+(0,x))^{-1}G(z,x)\widetilde{F}_{z}^+(0,x)\right)(d,d)=\left(\Phi^{-1}_z(x)\begin{pmatrix}C^{-1}\widetilde{F}_{z}^+(0,x)&\\ & O_d\end{pmatrix}\right)(d,d)\\ \nonumber
=&\left(\begin{pmatrix}&&(C_z^-(x))^{-1}\\ &\widetilde{T}^{-1}_z(x)&\\(C_z^+(x))^{-1}&&\end{pmatrix}\Phi_{\bar{z}}^*(x)S\begin{pmatrix}C^{-1}\widetilde{F}_{z}^+(0,x)&\\ & O_d\end{pmatrix}\right)(d,d)
\end{align}
where $M(i,j)$ represents the $(i,j)$-th entry of the matrix M.
Hence by \eqref{r1'}, \eqref{r2'}, \eqref{lanair5} and  \eqref{aaa}, we have
\begin{align*}
\langle \delta_d,(\widetilde{F}_{z}^+(0,x))^{-1}G(z,x)\widetilde{F}_{z}^+(0,x)\delta_d\rangle
=\frac{\vec{u}_{\bar{z}}^-(0,\omega)^*\vec{u}^+_z(0,x)}{u_{\bar{z}}^-(x)^*Su^+_z(x)}.
\end{align*}
\end{pf}

We denote
\begin{align*}
f(z,x)= &\frac{1}{-\Im \frac{\left(\vec{u}^+_z(0,x)\right)^*C\vec{u}^+_z(1,x)}{\|\vec{u}^+_z(0,x)\|^2}+\frac{1}{2}\Im z}+ \frac{1}{\Im \frac{\left(\vec{u}^-_{\bar{z}}(0,x)\right)^*C^*\vec{u}^-_{\bar{z}}(-1,x)}{\|\vec{u}^-_{\bar{z}}(0,x)\|^2}+\frac{1}{2}\Im z}\\
&-4\Im\frac{\vec{u}_{\bar{z}}^-(0,x)^*\vec{u}^+_z(0,x)}{u_{\bar{z}}^-(x)^*Su^+_z(x)}.
\end{align*}

\begin{Lemma}\label{gjyle}
For every $E\in \Sigma_{v,\alpha}^w$ and arbitrary $\delta>0$ sufficiently small, there is a measurable function $g(E+i\delta,x)$ with $\int_\T g(E+i\delta,x)dx<M(E)$ for almost every $E\in \{E:L_d(E)=0\}$, such that
$$
f(E+i\delta,x)+\sqrt{\delta} g(E+i\delta,x)\geq 0.
$$
\end{Lemma}
\begin{pf}
For any fixed $E_0\in \Sigma_{v,\alpha}^w$, for any $z=E_0+i\delta$ and $\delta$ sufficiently small, there are $m_\pm(z,x)$, depending analytically on $x$ and  $z$ such that
\begin{align}\label{ff1'}
u_z^+(x)=m_+(z,x)u_z(x)+v_z(x),
\end{align}
\begin{align}\label{ff2'}
u_z^-(x)=m_-(z,x)\tilde{u}_z(x)+v_z(x).
\end{align}
satisfy $u_z^\pm(x)\in E_\pm(z,x)$ where $u_z(x), v_z(x)$ are from Proposition \ref{newwe'}.

In view of \eqref{+}, we have that
\begin{equation}\label{20241}
\Im \left(\vec{u}_z^{+}(0,x)\right)^*C\vec{u}_z^+(1,x)=-\Im z\sum\limits_{n=1}^{\infty}\|\vec{u}_z^+(n,x)\|^2,
\end{equation}
\begin{equation}\label{20242}
\Im \left(\vec{u}_z^{+}(0,x)\right)^*C^*\vec{u}_z^+(-1,x)=\Im z\sum\limits_{n=0}^{\infty}\|\vec{u}_z^+(n,x)\|^2,
\end{equation}
\begin{equation}\label{20243}
\Im \left(\vec{u}_{\bar{z}}^{-}(0,x)\right)^*C\vec{u}_{\bar{z}}^-(1,x)=-\Im z\sum\limits_{n=-\infty}^{0}\|\vec{u}_{\bar{z}}^-(n,x)\|^2,
\end{equation}
\begin{equation}\label{20244}
\Im \left(\vec{u}_{\bar{z}}^{-}(0,x)\right)^*C^*\vec{u}_{\bar{z}}^-(-1,x)=\Im z\sum\limits_{n=-\infty}^{-1}\|\vec{u}_{\bar{z}}^-(n,x)\|^2.
\end{equation}
By Proposition \ref{newwe'}, we have
\begin{align*}
&\Im \left(\vec{u}_z^{+}(0,x)\right)^*C^*\vec{u}_z^+(-1,x)=\frac{i}{2}\begin{pmatrix}
\vec{u}_z^{+}(0,x)\\ \vec{u}_z^{+}(-1,x)
\end{pmatrix}^*S\begin{pmatrix}
\vec{u}_z^{+}(0,x)\\ \vec{u}_z^{+}(-1,x)
\end{pmatrix}\\ 
=&\frac{i}{2}\begin{pmatrix}m^+(z,x)\\ 1\end{pmatrix}^*\begin{pmatrix}
u^*_{z}(x)\\ v^*_{z}(x)
\end{pmatrix}S\begin{pmatrix}
u_{z}(x)& v_{z}(x)\end{pmatrix}\begin{pmatrix}m^+(z,x)\\ 1\end{pmatrix}\\
=&\Im m_+(z,x)+O\left(\delta^2(|m_0^+(z,x)|^2+1)\right).
\end{align*}
Therefore,
\begin{equation}\label{20245}
|\Im \left(\vec{u}_z^{+}(0,x)\right)^*C^*\vec{u}_z^+(-1,x)-\Im m_+(z,x)|\leq C(|m_+(z,x)|^2+1)\delta^2.
\end{equation}
By \eqref{20242} and since $u_z(x)$ and $v_z(x)$ are uniformly transverse,
\begin{equation}\label{aaa3}
\Im \left(\vec{u}_z^{+}(0,x)\right)^*C^*\vec{u}_z^+(-1,x)>\delta(\|\vec{u}_z^+(0,x)\|^2+\|\vec{u}_z^+(1,x)\|^2)> c\delta (|m_+(z,x)|^2+1).
\end{equation}
By \eqref{20245} and \eqref{aaa3}, we have
\begin{equation}\label{20246}
|\Im \left(\vec{u}_z^{+}(0,x)\right)^*C^*\vec{u}_z^+(-1,x)-\Im m_+(z,x)|\leq C\delta  \Im \left(\vec{u}_z^{+}(0,x)\right)^*C^*\vec{u}_z^+(-1,x).
\end{equation}
Similarly, we have
\begin{equation}\label{20247}
|\Im \left(\vec{u}_{\bar{z}}^{-}(0,x)\right)^*C^*\vec{u}_{\bar{z}}^-(-1,x)-\Im m_-(\bar{z},x)|\leq  C\delta \Im \left(\vec{u}_{\bar{z}}^{-}(0,x)\right)^*C^*\vec{u}_{\bar{z}}^-(-1,x).
\end{equation}
By \eqref{20242}, \eqref{20244}, \eqref{20246} and \eqref{20247}, we have
\begin{align}\label{ffff1}
\Im m_+(z,x)-\frac{1}{2}\Im z\|\vec{u}_z^+(0,x)\|^2>\frac{1}{4}\Im \left(\vec{u}_z^{+}(0,x)\right)^*C^*\vec{u}_z^+(-1,x),\ \ \Im m_-(\bar{z},x)>0.
\end{align}

We define
\begin{equation}\label{lanair7}
\Omega_1(z)=\{x\in\T:\|\vec{u}^+_{z}(0,x)\|>\|\vec{u}^-_{\bar{z}}(0,x)\|\}
\end{equation}
For any $x\in\Omega_1$, by \eqref{ffff1} and Proposition \ref{newwe'}, we have
\begin{align}\label{gjyabc1}
&4\left|\frac{\vec{u}_{\bar{z}}^-(0,x)^*\vec{u}^+_z(0,x)}{u_{\bar{z}}^-(x)^*Su^+_z(x)}\right|\leq \frac{4\|\vec{u}^-_{\bar{z}}(0,x)\|\|\vec{u}^+_z(0,x)\|}{(\Im m_+(z,x)+\Im m_-(\bar{z},x))}\\ \nonumber
\leq &\frac{2\|\vec{u}^-_{\bar{z}}(0,x)\|\|\vec{u}^+_z(0,x)\|}{\sqrt{(\Im m_+(z,\omega)-\frac{1}{2}\Im z\|u_{\bar{z}}^-(0,x)\|^2)(\Im m_-(\bar{z},x)+\frac{1}{2}\Im z\|u_{\bar{z}}^-(0,x)\|^2)}}\\ \nonumber
\leq &\frac{2\|\vec{u}^-_{\bar{z}}(0,x)\|\|\vec{u}^+_z(0,x)\|}{\sqrt{(\Im m_+(z,\omega)-\frac{1}{2}\Im z\|u_{z}^+(0,x)\|^2)(\Im m_-(\bar{z},x)+\frac{1}{2}\Im z\|u_{\bar{z}}^-(0,x)\|^2)}}\\ \nonumber
\leq &\frac{\|\vec{u}^+_z(0,x)\|^2}{\Im m_+(z,x)-\frac{1}{2}\Im z\|u_{z}^+(0,x)\|^2}+\frac{\|\vec{u}^-_{\bar{z}}(0,x)\|^2}{\Im m_-(\bar{z},x)+\frac{1}{2}\Im z\|u_{\bar{z}}^-(0,x)\|^2}.
\end{align}
By \eqref{20246} and \eqref{20247}, we have
\begin{align}\label{gjyabc2}
& \frac{\|\vec{u}^+_z(0,x)\|^2}{\Im m_+(z,x)-\frac{1}{2}\Im z\|u_{z}^+(0,x)\|^2}(1-4\sqrt{\delta})\\ \nonumber
\leq & \frac{1}{\Im \frac{\left(\vec{u}^+_z(0,x)\right)^*C^*\vec{u}^+_z(-1,x)}{\|\vec{u}^+_z(0,x)\|^2}-\frac{1}{2}\Im z}
 \leq  \frac{\|\vec{u}^+_z(0,x)\|^2}{\Im m_+(z,x)-\frac{1}{2}\Im z\|u_{z}^+(0,x)\|^2}(1+4\sqrt{\delta}),
\end{align}
\begin{align}\label{gjyabc3}
&\frac{\|\vec{u}^-_{\bar{z}}(0,x)\|^2}{\Im m_-(\bar{z},x)+\frac{1}{2}\Im z\|u_{\bar{z}}^-(0,x)\|^2}(1-4\sqrt{\delta})\\ \nonumber
\leq & \frac{1}{\Im \frac{\left(\vec{u}^-_{\bar{z}}(0,x)\right)^*C^*\vec{u}^-_{\bar{z}}(-1,x)}{\|\vec{u}^-_{\bar{z}}(0,x)\|^2}+\frac{1}{2}\Im z}\leq  \frac{\|\vec{u}^-_{\bar{z}}(0,x)\|^2}{\Im m_-(\bar{z},x)+\frac{1}{2}\Im z\|u_{\bar{z}}^-(0,x)\|^2}(1+4\sqrt{\delta}).
\end{align}

On the other hand, by Proposition \ref{newwe'}, we can also  find two linearly independent vectors  $u'_z(x),v'_z(x)\in E_c(z,x)$ depending analytically on $x$ and $z$, satisfying 
\begin{equation}\label{gjy2n'}
\begin{pmatrix}
u'^*_{\bar{z}}(x)\\ v'^*_{\bar{z}}(x)
\end{pmatrix}S\begin{pmatrix}
u'_{z}(x)& v'_{z}(x)\end{pmatrix}=J,
\end{equation}
\begin{equation}\label{gjy3n'}
\begin{pmatrix}
((L_{z,v}^w)_d(x)u'_{z}(x))^*\\ ((L_{z,v}^w)_d(x)v'_{z}(x))^*
\end{pmatrix}S\begin{pmatrix}
(L_{z,v})_d^w(\omega)u'_{z}(x)& (L_{z,v})_d^w(x)v'_{z}(x)\end{pmatrix}=J+O(\delta^2),
\end{equation}
Similarly, there are $m'_\pm(z,x)$ depending analytically on $x$ and  $z$ such that
\begin{align}\label{ff1''}
u_z^+(x)=m'_+(z,x)u'_z(x)+v'_z(x),
\end{align}
\begin{align}\label{ff2''}
u_z^-(x)=m'_-(z,x)u'_z(x)+v'_z(x).
\end{align}
satisfy $u_z^\pm(x)\in E_\pm(z,x)$. 

By exactly the same argument as above, we have
\begin{equation}\label{20246'}
|\Im \left(\vec{u}_z^{+}(0,x)\right)^*C\vec{u}_z^+(1,x)+\Im m'_+(z,x)|\leq C\delta |\Im \left(\vec{u}_z^{+}(0,x)\right)^*C\vec{u}_z^+(1,x)|.
\end{equation}
\begin{equation}\label{20247'}
|\Im \left(\vec{u}_{\bar{z}}^{-}(0,x)\right)^*C\vec{u}_{\bar{z}}^-(1,x)+\Im m'_-(\bar{z},x)|\leq \Im C\delta |\Im \left(\vec{u}_{\bar{z}}^{-}(0,x)\right)^*C\vec{u}_{\bar{z}}^-(1,x)|,
\end{equation}
\begin{align}\label{ffff1'}
\Im m'_-(\bar{z},x)-\frac{1}{2}\Im z\|\vec{u}_{\bar{z}}^-(0,x)\|^2>-\frac{1}{4}\Im \left(\vec{u}_{\bar{z}}^{-}(0,x)\right)^*C\vec{u}_{\bar{z}}^-(1,x),\ \ \Im m'_+(\bar{z},x)>0.
\end{align}
For any $x\in\Omega^c_1$, we then have
\begin{align}\label{gjyabc4}
&4\left|\frac{\vec{u}_{\bar{z}}^-(0,x)^*\vec{u}^+_z(0,x)}{u_{\bar{z}}^-(x)^*Su^+_z(x)}\right|\leq \frac{4\|\vec{u}^-_{\bar{z}}(0,x)\|\|\vec{u}^+_z(0,x)\|}{(\Im m'_+(z,x)+\Im m'_-(\bar{z},x))}\\ \nonumber
\leq &\frac{\|\vec{u}^+_z(0,x)\|^2}{\Im m'_+(z,x)+\frac{1}{2}\Im z\|u_{z}^+(0,x)\|^2}+\frac{\|\vec{u}^-_{\bar{z}}(0,x)\|^2}{\Im m'_-(\bar{z},x)-\frac{1}{2}\Im z\|u_{\bar{z}}^-(0,x)\|^2}.
\end{align}
\begin{align}\label{gjyabc5}
&\frac{\|\vec{u}^+_z(0,x)\|^2}{\Im m'_+(z,x)+\frac{1}{2}\Im z\|u_{z}^+(0,x)\|^2}(1-4\sqrt{\delta})\leq \frac{1}{-\Im \frac{\left(\vec{u}^+_z(0,x)\right)^*C\vec{u}^+_z(1,x)}{\|\vec{u}^+_z(0,x)\|^2}+\frac{1}{2}\Im z}\\ \nonumber
 \leq &\frac{\|\vec{u}^+_z(0,x)\|^2}{\Im m'_+(z,x)+\frac{1}{2}\Im z\|u_{z}^+(0,x)\|^2}(1+4\sqrt{\delta}),
\end{align}
\begin{align}\label{gjyabc6}
&  \frac{\|\vec{u}^-_{\bar{z}}(0,x)\|^2}{\Im m'_-(\bar{z},x)-\frac{1}{2}\Im z\|u_{\bar{z}}^-(0,x)\|^2}(1-4\sqrt{\delta})\leq \frac{1}{\Im \frac{\left(\vec{u}^-_{\bar{z}}(0,x)\right)^*C\vec{u}^-_{\bar{z}}(-1,x)}{\|\vec{u}^-_{\bar{z}}(0,x)\|^2}+\frac{1}{2}\Im z}\\         \nonumber \leq & \frac{\|\vec{u}^-_{\bar{z}}(0,x)\|^2}{\Im m'_-(\bar{z},x)-\frac{1}{2}\Im z\|u_{\bar{z}}^-(0,x)\|^2}(1+4\sqrt{\delta}).
\end{align}

Finally, we let 
\begin{align}\label{gjyabc}
g(z,\omega)=16\left(\frac{1}{-\Im \frac{\left(\vec{u}^+_z(0,x)\right)^*C\vec{u}^+_z(1,x)}{\|\vec{u}^+_z(0,x)\|^2}+\frac{1}{2}\Im z}+ \frac{1}{\Im \frac{\left(\vec{u}^-_{\bar{z}}(0,x)\right)^*C^*\vec{u}^-_{\bar{z}}(-1,x)}{\|\vec{u}^-_{\bar{z}}(0,x)\|^2}+\frac{1}{2}\Im z}\right).
\end{align}
By \eqref{gjyabc1}-\eqref{gjyabc3} on $\Omega_1$ and \eqref{gjyabc4}-\eqref{gjyabc6} on $\Omega_2$, we have that 
\begin{equation}\label{lanair8}
f(z,x)\geq -\sqrt{\delta}g(z,x).
\end{equation} 
Moreover, by Lemma \ref{le2}, 
$$
\int_\T g(E_0+i\delta,x)dx\leq -64d \frac{L_{d+1}(E_0+i\delta)}{\delta}dx\leq M(E_0)
$$
for some constant $M$, for almost every $E_0\in \{E:L_d(E)=0\}$.
\end{pf}

By Lemma \ref{gjyle} and Fatou's lemma, for almost every $E\in \{E:L_d(E)=0\}$, we have
\begin{align*}
&\int_\T \liminf_{\delta\rightarrow 0}\left(f(E+i\delta,x)+\sqrt{\delta}g(E+\delta,x)\right)dx \leq \liminf_{\delta\rightarrow 0} \int_\T \left(f(E+i\delta,x)+\sqrt{\delta}g(E+\delta,x)\right)dx\\
\leq  &\liminf_{\delta\rightarrow 0} \int_\T f(E+i\delta,x)dx+\liminf_{\delta\rightarrow 0}\sqrt{\delta}\int_\T g(E+i\delta,x)dx.
\end{align*}
By Lemma \ref{le2} and Lemma \ref{final1}, 
\begin{align*}
\int_\T \liminf_{\delta\rightarrow 0}f(E+i\delta,x)dx \leq \liminf_{\delta\rightarrow 0} \int_\T f(E+i\delta,x)dx\leq 0
\end{align*}
By \eqref{ff1'}, \eqref{ff2'} and (1) of Theorem \ref{keyth} and the above inequality, for almost every $E\in (E_0-\delta_2,E_0+\delta_2)$ and $x\in\T$, we have $f(E+i0,x)$ exists. Thus by \eqref{lanair8},  $f(E+i0,x)\geq $  for almost every $E\in (E_0-\delta_2,E_0+\delta_2)$ and $x\in\T$.

We omit $x$ for simplicity, notice that by \eqref{lana1},
\begin{align}
f(E+i0)= \frac{\|\vec{u}_{E+i)}^+(0)\|^2}{\Im m_+(E+i0)}+\frac{\|\vec{u}_{E-i0}^-(0)\|^2}{\Im m_-(E-i0)}-4\Im\frac{\vec{u}_{E-i0}^-(0)^*\vec{u}^+_{E+i0}(0)}{\overline{m_-(E-i0)}-m_+(E+i0)}
\end{align}
On the other hand,
\begin{align}\label{gjyabc1'}
&\left|4\Im\frac{\vec{u}_{E-i0}^-(0)^*\vec{u}_{E+i0}(0)}{\overline{m_-(E-i0)}-m_+(E+i0)}\right|\leq 4\frac{\|\vec{u}^-_{E-i0}(0,x)\|\|\vec{u}^+_{E+i0}(0,x)\|}{\Im m_+(E+i0)+\Im m_-(E-i0)}\\ \nonumber
\leq &\frac{2\|\vec{u}^-_{E-i0}(0)\|\|\vec{u}^+_{E+i0}(0)\|}{\sqrt{\Im m_+(E+i0)\Im m_-(E-i0)}}\leq \frac{\|\vec{u}^+_{E+i0}(0)\|^2}{\Im m_+(E+i0)}+\frac{\|\vec{u}^-_{E-i0}(0)\|^2}{\Im m_-(E-i0,x)}.
\end{align}
Thus $f(E+i0)=0$ if and only all the inequalities above are equalities. Note that the second equality implies  $m_-(E-i0,x)=m_+(E+i0,x)$ for almost every $E\in (E_0-\delta_2, E_0+\delta_2)$ and almost every $x\in\T$. This finishes the proof of (3).

\end{pf}

\subsection{$L^2$-reducibility and proof of Theorems \ref{L2
    reducibility} and \ref{C0reducibility1}} 
Let $A_E\in C^\omega(\T,GL(2,\C))$ be such that
\begin{equation}\label{ff6}
L_E^f(x)\begin{pmatrix}u_E(x)& v_E(x)\end{pmatrix}=\begin{pmatrix}u_E(x+\alpha)&v_E(x+\alpha)\end{pmatrix}A_E(x).
\end{equation}
The following structural facts about the two-dimensional center are contained in higher generality in \cite{gj}. We
prove them also in Appendix~\ref{app:center-facts} for readers convenience.

\begin{Proposition}[Proposition 6.2 of \cite{gj}]\label{p1}
We have
$$
A_E(x)^*JA_E(x)=J
$$
where $J=\begin{pmatrix}0&1 \\-1&0
\end{pmatrix}$.
\end{Proposition}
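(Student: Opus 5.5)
The identity follows directly from the fact that the center frame is $S$-normalized and that the cocycle preserves $S$; no dynamical input beyond the invariance of $E_c$ is needed. Write $W_E(x)=\begin{pmatrix}u_E(x)& v_E(x)\end{pmatrix}$ for the $2d\times 2$ matrix whose columns are the center vectors used in \eqref{ff6}. These are the vectors produced in Proposition \ref{newwe'}.

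First, I would record the normalization on the real axis. Setting $\delta=0$ in \eqref{gjy2n} gives $z=\bar z=E\in(E_0-\delta_1,E_0+\delta_1)$, so
\[
W_E(x)^*SW_E(x)=J\qquad\text{for all }x\in\T .
\]
The same conclusion can be read off from \eqref{gjy3n} at $\delta=0$. In particular $W_E(x)$ has rank $2$ for every $x$. Since $E_c(E,x)$ is invariant under $L^f_E(x)$ (it is the center of the $PH2$ dominated splitting), the columns of $L_E^f(x)W_E(x)$ lie in $E_c(E,x+\alpha)$. This space is spanned by the columns of $W_E(x+\alpha)$, so $A_E(x)$ in \eqref{ff6} is well defined. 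It is analytic in $x$ and $E$, because it can be expressed as $A_E(x)=-J\,W_E(x+\alpha)^*S\,L_E^f(x)W_E(x)$, using $J^{-1}=-J$ and the normalization above.

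Second, I would use that for real $E$ the transfer matrix is complex symplectic with respect to $S$, namely $(L_E^f(x))^*SL_E^f(x)=S$ (Section \ref{3.2}). Taking the $S$-Gram matrix of both sides of \eqref{ff6} gives
\[
A_E(x)^*\,W_E(x+\alpha)^*SW_E(x+\alpha)\,A_E(x)=W_E(x)^*(L_E^f(x))^*SL_E^f(x)W_E(x)=W_E(x)^*SW_E(x).
\]
Substituting the normalization at $x$ and at $x+\alpha$ yields $A_E(x)^*JA_E(x)=J$.

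The only point requiring care is the first step: the normalization must hold exactly on the real axis, not just up to $O(\delta^2)$. This is exactly what \eqref{gjy2n} provides when $z=\bar z$, so the frame is genuinely $S$-normalized for real energies. Should one prefer the frame $\tilde u_E,\tilde v_E$ of Proposition \ref{newwe}, the argument is unchanged because of \eqref{ha1}. For a general basis of $E_c$ one would instead obtain $A^*GA=G$ with $G=W^*SW$ a nondegenerate Hermitian-skew form. One then conjugates by a square root of $G$, which is nondegenerate since $S$ restricted to the center is nondegenerate by the symplectic orthogonality of $E_s\oplus E_u$ and $E_c$.
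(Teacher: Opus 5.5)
Your argument is correct and is essentially the paper's proof: the paper derives Proposition \ref{p1} from Lemma \ref{lem:center-J-unitary}, which takes the $S$-Gram matrix of the invariance relation \eqref{ff6}, uses $(L_E^f)^*SL_E^f=S$, and substitutes the exact real-axis normalization $W_E^*SW_E=J$ of the center frame (from \eqref{gjy1n}, or \eqref{gjy2n} at $\delta=0$). Your closing aside on a general basis is not needed, and as phrased it is imprecise: turning $A^*GA=G$ into $\tilde A^*J\tilde A=J$ requires $iG$ to have signature $(1,1)$ (as established in the proof of Proposition \ref{newwe}), not merely a square root of $G$.
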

\begin{pf}
This is Lemma~\ref{lem:center-J-unitary}, applied to the normalized
center frame constructed above.
\end{pf}
\begin{Corollary}[Remark 7.1 of \cite{gj}]\label{c24}
Assume in addition that
\[
\omega^{d-1}(\alpha,L^w_{E,v})=0.
\]
Then there exist
\[
\varphi_E\in C^\omega(\mathbb T,\mathbb R),
\qquad
C_E\in C^\omega(\mathbb T,SL(2,\mathbb R)),
\]
depending analytically on $E$, such that
\[
A_E(x)=e^{2\pi i\varphi_E(x)}C_E(x).
\]
\end{Corollary}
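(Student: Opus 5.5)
The plan is to split $A_E$ into a scalar phase and a real $SL(2,\mathbb R)$ part, using Proposition~\ref{p1} for the pointwise algebra and the hypothesis $\omega^{d-1}=0$ to remove the topological obstruction to a global analytic phase.

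\textbf{Pointwise algebra.} By Proposition~\ref{p1}, $A_E(x)^*JA_E(x)=J$. For a $2\times 2$ matrix $A$ we have $A^TJA=(\det A)J$, so
\[
A^*JA=\overline{A}^{\,T}JA .
\]
Taking determinants gives $|\det A_E(x)|=1$, hence $\det A_E(x)=e^{2\pi i\theta(x)}$ for some continuous phase $\theta$. For any $\lambda$ with $\lambda^2=\det A$, set $C=\lambda^{-1}A$. Then $\det C=1$, so $C^TJC=J$. Also $C^*JC=|\lambda|^{-2}A^*JA=J$. Comparing the two identities gives $\overline C=C$, so $C$ is real, i.e. $C\in SL(2,\mathbb R)$. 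Thus locally $A_E=e^{\pi i\theta}C_E$ with $C_E$ real.

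\textbf{Global analytic phase.} The real issue is whether $\theta=\varphi_E\cdot 2$ can be chosen globally analytic on $\mathbb T$ and analytic in $E$. This requires that $x\mapsto\det A_E(x)$ have zero winding number, and in fact even winding, so that a square root exists on $\mathbb T$. Here I would use $\omega^{d-1}=0$.

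First, the symplectic structure pairs the exponents as $\pm L_i$. Taking the $d-1$ unstable and $d-1$ stable directions together with the center, the top exterior power of the full cocycle has determinant $\det L_{E,v}^w$, which is a nonzero constant since $\hat v_d$ is constant. So the winding of $\det L_{E,v}^w$ is $0$.

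Second, along the dominated splitting, the cocycle on $E_u$ is analytic and carries all of $L^{d-1}$. A section of $\Lambda^{d-1}E_u$ is analytic (and trivial by Proposition~\ref{trivial}). Its one-step multiplier $a(x)$ satisfies
\[
L^{d-1}(\varepsilon)=\int\ln|a(x+i\varepsilon)|\,dx
\]
near $0$. So $\omega^{d-1}=0$ gives $\operatorname{wind}(a)=0$, and similarly on $E_s$, whose exponents are the negatives.

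Third, since $\det L_{E,v}^w = a_s\cdot a_u\cdot \det A_E$ up to coboundaries, and coboundaries of the form $g(x+\alpha)/g(x)$ have zero winding, we get $\operatorname{wind}(\det A_E)=0$. Hence $\log\det A_E$ is analytic on $\mathbb T$, and we can set $\varphi_E=\tfrac{1}{4\pi i}\log\det A_E$. This is real-valued because $|\det A_E|=1$. It also depends analytically on $E$, since the winding is locally constant and the frame from Proposition~\ref{newwe'} is analytic in $(E,x)$.

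\textbf{Conclusion.} Defining $C_E=e^{-2\pi i\varphi_E}A_E$ then gives an analytic $C_E$ with values in $SL(2,\mathbb R)$, by the algebra step.

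The main obstacle is the winding computation. It requires checking that $\omega^{d-1}=0$ passes to the multiplier of $\Lambda^{d-1}E_u$. Concretely, one must relate the $\varepsilon$-derivative of the complexified block Lyapunov exponent to the winding of an analytic scalar function, using that the splitting persists for small $|\operatorname{Im} x|$ by openness of domination.
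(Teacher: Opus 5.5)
Your proposal is correct and follows essentially the paper's route. Your pointwise algebra is exactly Lemma~\ref{lem:projectively-real-factorization}, and your winding step is Lemma~\ref{lem:center-det-zero} in different bookkeeping: you write $\operatorname{wind}\det A_E=\operatorname{wind}\det L^w_{E,v}-\operatorname{wind}a_u-\operatorname{wind}a_s$ and kill the block windings using $\omega^{d-1}=0$ and the symplectic pairing, while the paper computes $\int_{\mathbb T}\log|\det M(x+i\varepsilon)|\,dx=L_d(\varepsilon)+L_{d+1}(\varepsilon)=L^{d-1}(-\varepsilon)-L^{d-1}(\varepsilon)=0$ directly, which is the same identity (one small correction: zero winding, which is what you prove, is the condition actually needed, since even winding alone would not make $\varphi_E$ periodic).
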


\begin{pf}
By Lemma~\ref{lem:center-det-zero}, the determinant of the
center cocycle has zero winding. The result follows from
Lemma~\ref{lem:projectively-real-factorization}.
\end{pf}

We first prove the
$L^2$-reducibility Theorem \ref{L2 reducibility} that we slightly reformulate as
\begin{Theorem}\label{L2 reducibility1}
For one-frequency $PH2$ cocycles $(\alpha,L_{E,v}^w)$, for almost every
$E\in \{E:L_d(E)=0,\,\omega^{d-1}(\alpha,L^w_{E,v})=0\}$, there exist $\phi_E\in C^\omega(\T,\R)$, $U_E,V_E\in
L^2(\T,\C^{2d})$ and $R_E(x)\in SO(2,\R) $ such that
$$
L_{E,v}^w(x)(U_E(x),V_E(x))=(U_E(x+\alpha), V_E(x+\alpha))e^{2\pi i \phi_E(x)}R_E(x),
$$
with
$$
H_E=(U_E,V_E)\in L^2(\T,Sp_{2d\times2}(\C)).
$$
\end{Theorem}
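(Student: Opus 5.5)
The proof follows the classical route from Kotani theory to $L^2$-reducibility for Schr\"odinger cocycles. The classical $m$-function is replaced by the center $m$-function $m_+(E+i0,x)$ of Definition~\ref{m+def}, and the $SL(2,\R)$ cocycle by the real part $C_E$ of the center cocycle from Corollary~\ref{c24}. Fix $E_0\in\Sigma^w_{v,\alpha}$ and work on $(E_0-\delta_2,E_0+\delta_2)$; a countable cover then gives the full statement. Throughout, $E$ is restricted to the full-measure subset of $\{L_d=0,\ \omega^{d-1}=0\}$ where parts (1)--(2) of Theorem~\ref{keyth} hold.

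\emph{Step 1: invariance of $m_+$.} For $z=E+i\delta$ with $\delta>0$, the vector $u^+_z=m_+(z,x)u_z+v_z$ spans the invariant line $E_+(z,x)$. Writing the center cocycle in the frame $(u_z,v_z)$ as $A_z(x)$, invariance gives the projective identity
$$A_z(x)\cdot m_+(z,x)=m_+(z,x+\alpha)$$
for the M\"obius action. As $\delta\to0^+$, $A_z\to A_E$ uniformly, since the frame is analytic in $(z,x)$. By Theorem~\ref{keyth}(1), $m_+(z,x)\to m(x):=m_+(E+i0,x)$ for a.e.\ $x$. Passing to the limit gives $A_E(x)\cdot m(x)=m(x+\alpha)$ a.e. By Corollary~\ref{c24}, $A_E=e^{2\pi i\varphi_E}C_E$ with $C_E\in SL(2,\R)$, and the scalar factor acts trivially on lines. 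Hence $C_E(x)\cdot m(x)=m(x+\alpha)$ a.e.

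\emph{Step 2: $m$ lies in $\mathbb H$.} By Lemma~\ref{nonvanishing}, $\Im m_+(z,x)>0$, so $\Im m\ge0$. Theorem~\ref{keyth}(2) gives finiteness of
$$\int_\T\frac{\|m u_E+v_E\|^2}{\Im m}\,dx .$$
Since $u_E$ and $v_E$ are uniformly transverse, $\|m u_E+v_E\|^2\ge c(|m|^2+1)$. Therefore $\Im m>0$ a.e., and moreover $(|m|^2+1)/\Im m\in L^1(\T)$.

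\emph{Step 3: conjugation to rotations.} Set
$$B(x)=\begin{pmatrix}(\Im m)^{1/2}&\Re m\,(\Im m)^{-1/2}\\ 0&(\Im m)^{-1/2}\end{pmatrix}\in SL(2,\R),$$
so that $B\cdot i=m$. Step 1 shows that $B(x+\alpha)^{-1}C_E(x)B(x)$ fixes $i$, hence equals some $R_E(x)\in SO(2,\R)$, and $R_E$ is measurable. Define $(U_E,V_E)=(u_E,v_E)B$. Then
$$L^w_{E,v}(x)(U_E,V_E)(x)=(U_E,V_E)(x+\alpha)\,e^{2\pi i\varphi_E(x)}R_E(x).$$
For the integrability claim, $\|B\|^2\lesssim(|m|^2+1)/\Im m$ and $u_E,v_E$ are bounded, so Step 2 gives $U_E,V_E\in L^2$. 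For the symplectic claim, at real $E$ Proposition~\ref{newwe'} (with $\delta=0$) gives $(u_E,v_E)^*S(u_E,v_E)=J$. Since $B$ is real with determinant one, $B^*JB=J$, and so $H_E=(U_E,V_E)\in Sp_{2d\times2}(\C)$ pointwise.

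\emph{Main obstacle.} I expect the delicate point to be Step 1. The convergence $m_+(z,x)\to m(x)$ must hold on a common full-measure set of $(E,x)$, and $m(x)$ could a priori equal $\infty$ or be real on a positive-measure set. The finiteness bound in Step 2 is exactly what excludes both possibilities. I would apply Fubini to the a.e.\ statements of Theorem~\ref{keyth}(1),(2) so that, for a.e.\ fixed $E$, the limit exists for a.e.\ $x$ simultaneously at $x$ and $x+\alpha$. Part (3) of Theorem~\ref{keyth} is not needed here; it enters only in the subsequent analytic extension used for Theorem~\ref{C0reducibility1}.
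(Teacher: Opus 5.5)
Your proposal is correct and follows the paper's proof: the paper likewise conjugates the center frame by an $SL(2,\R)$ matrix $D_E(x)$ that sends $i$ to $m_+(E+i0,x)$, obtains $U_E,V_E\in L^2$ from part (2) of Theorem~\ref{keyth}, and obtains $H_E^*SH_E=J$ from the symplectic normalization of the center frame. Your $B$ differs from $D_E$ only by a right rotation factor, and your Step 1 spells out, via Corollary~\ref{c24} and the limit $\delta\to0^+$, the invariance $C_E(x)\cdot m(x)=m(x+\alpha)$ that the paper asserts in one line.
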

\begin{pf}
  Let $u_E(x), v_E(x)$ be given by Proposition \ref{newwe} and $m_+$ be as defined in Definition \ref{m+def}.
For almost every $E\in \{E:L_d(E)=0\}$ and almost every $x\in\T$, $m_+(E,x)$ exists with $\Im m_+(E,x)>0$,  and we set
$$
D_E(x)=\begin{pmatrix}
0&\frac{|m_+(E,x)|}{(\Im m_+(E,x))^{1/2}}\\
-\frac{(\Im m_+(E,x))^{1/2}}{|m_+(E,x)|}&\frac{\Re m_+(E,x)}{|m_+(E,x)|(\Im m_+(E,x))^{1/2}}
\end{pmatrix}.
$$
Let
$$
(U_E(x),V_E(x))=(u_E(x),v_E(x))D_E(x),
$$
then for $R_E(x)\in SL(2,\R) $ defined by
$$
L_{E,v}^w(x)(U_E(x),V_E(x))=(U_E(x+\alpha), V_E(x+\alpha))e^{2\pi i\phi_E(x)}R_E(x),
$$
we have that $R_E(x)\in SO(2,\R) $, this is because $D_E(x+\alpha)^{-1}C_E(x)D_E(x)\cdot i=i$. Moreover, we have
\begin{align*}
\|U_E\|^2_{L^2}+\|V_E\|^2_{L^2}=&\left\|\frac{|m_+(E,\cdot)|}{(\Im m_+(E,\cdot))^{1/2}}\vec{u}_E(0,\cdot)+\frac{\Re m_+(E,\cdot)}{|m_+(E,\cdot)|(\Im m_+(E,\cdot))^{1/2}}\vec{v}_E(0,\cdot)\right\|^2_{L^2}\\
&+\left\|\frac{|m_+(E,\cdot)|}{(\Im m_+(E,\cdot))^{1/2}}\vec{u}_E(-1,\cdot)+\frac{\Re m_+(E,\omega)}{|m_+(E,\cdot)|(\Im m_+(E,\cdot))^{1/2}}\vec{v}_E(-1,\cdot)\right\|^2_{L^2}\\
&+\left\|\frac{(\Im m_+(E,\cdot))^{1/2}}{|m_+(E,\cdot)|}\vec{v}_E(0,\cdot)\right\|_{L^2}^2+\left\|\frac{(\Im m_+(E,\cdot))^{1/2}}{|m_+(E,\cdot)|}\vec{v}_E(-1,\cdot)\right\|_{L^2}^2\\
\leq &C\int_\T\frac{\left\|m_+(E+i0,x)\begin{pmatrix}
\vec{u}_E(0,x)\\
\vec{u}_E(-1,x)
\end{pmatrix}+\begin{pmatrix}
\vec{v}_E(0,x)\\
\vec{v}_E(-1,x)
\end{pmatrix}\right\|^2}{\Im m_+(E+i0,x)}dx<\infty,
\end{align*}
where the last inequality follows from part (2) of Theorem
\ref{keyth}. 

Finally, by Proposition \ref{newwe},
$$
H^*_E(x)SH_E(x)=\begin{pmatrix}0&1\\ -1&0\end{pmatrix}.
$$
\end{pf}

\noindent {\bf Proof of Theorem \ref{C0reducibility1}:}
For any fixed $E_0\in I$, by Lemma \ref{nonvanishing} and Definition \ref{m+def}, there exists $\delta_2(E_0)>0$ such that for any $z\in R_{\delta_2}(E_0)\cap \mathbb{H}$, we have $\Im m_+(z,x)>0$ for any $x\in\T$. Moreover, $m_+(z,x)$ depends analytically on both $z$ and $x$.

By  part (3) of Theorem \ref{keyth}, for almost every $x$, $m_+(z,x)$ can be extended analytically to $R_{\delta_2}(E_0)\backslash(\R\backslash (E_0-\delta_2,E_0+\delta_2))$.

On the other hand,  we see that $m_+(z,x)_{x}$ is a normal
family. Thus for any compact $\mathcal K \in R_{\delta_2}(E_0)\backslash(\R\backslash (E_0-\delta_2,E_0+\delta_2))$,
$m_+(\cdot,x)$ is uniformly Lipschitz in $z$.  Namely, there exists
a constant $c=c(\mathcal K)$ that depends only on $\mathcal K$ such that
$$
|m_+(z_1,x)-m_+(z_2,x)|\leq c|z_1-z_2|,\ \ \forall  z_1, z_2\in\mathcal K.
$$
We can, for any $x\in \T,$ pick a sequence $x_n\in \T$ converging to $x$. Then we get a holomorphic function $m_+(\cdot,x)=\lim\limits_{x_n\rightarrow x}m_+(\cdot,x_n)$ on $R_{\delta_2}(E_0)\backslash(\R\backslash (E_0-\delta_2,E_0+\delta_2))$.

We need a lemma by Avila-Jitomirskaya \cite{aj}.
\begin{Lemma}\label{aj}
Let $W\subset \C$ be a domain, and let $f:W\times \R/\Z\rightarrow\C$ be a continuous function. If $z\rightarrow f(z,x)$ is holomorphic for all $x\in \R/\Z$ and $w\rightarrow f(z,x)$ is analytic for some nonpolar set $z\in W$, then $f$ is analytic.
\end{Lemma}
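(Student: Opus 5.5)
The plan is to reduce the lemma to the classical Hartogs-type theorem for separately analytic functions with a nonpolar set of analyticity in one variable, in the form due to Siciak and Nguyen Thanh Van. We work locally. Fix $z_0\in W$ and $x_0\in\R/\Z$. After shrinking $W$ to a disc $D$ around $z_0$ (a nonpolar set stays nonpolar when intersected with a suitable disc, since a countable union of polar sets is polar), the hypotheses are as follows.

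\begin{itemize}
\item $f$ is continuous on $D\times\R/\Z$.
\item For every real $x$, the map $z\mapsto f(z,x)$ is holomorphic on $D$.
\item For $z$ in a nonpolar compact set $K\subset D$, the map $x\mapsto f(z,x)$ is real analytic.
\end{itemize}

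The first step is to upgrade analyticity in $x$ on $K$ to a uniform strip. For $z\in K$, real analyticity of $x\mapsto f(z,x)$ on the circle means it extends holomorphically to some strip $|\Im x|<\eta(z)$. Set
\[
K_n=\{z\in K:\ |f(z,\cdot)|_{1/n}\le n\}.
\]
Each $K_n$ is closed: bounds on the strip are equivalent to Fourier decay $|\hat f_k(z)|\le n e^{-2\pi|k|/n}$, and by continuity of $f$ the coefficients $\hat f_k(z)$ are continuous in $z$. Since $K=\bigcup_n K_n$ and a countable union of polar sets is polar, some $K_n$ is nonpolar. Replace $K$ by this $K_n$. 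On it we have uniform Fourier decay with a fixed rate $2\pi/n$ and a fixed constant.

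The second step is to run the Bernstein–Walsh/Siciak argument on the Fourier coefficients. Each $\hat f_k(z)=\int f(z,x)e^{-2\pi ikx}\,dx$ is holomorphic on $D$ and bounded there by $M=\sup|f|$ on a slightly smaller closed disc. It is also bounded by $n e^{-c|k|}$ on $K$. Let $\omega(z)$ be the relative harmonic measure of $K$ in $D$, that is, the regularized upper envelope of subharmonic $u\le 0$ with $u\le -1$ on $K$. Because $K$ is nonpolar, $\omega\not\equiv 0$ and $\omega<0$ near $K$. The two-constants theorem then gives
\[
\log|\hat f_k(z)|\le (1+\omega(z))\log M-\omega(z)\,(\log n-c|k|),
\]
so $|\hat f_k(z)|\le C e^{-c'|k|}$ on a neighbourhood $D'$ of some point of $K$, with $c'>0$ and $D'$ chosen where $\omega<-\epsilon$. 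Hence $F(z,x)=\sum_k\hat f_k(z)e^{2\pi ikx}$ converges on $D'\times\{|\Im x|<c'/2\pi\}$, is jointly holomorphic there, and agrees with $f$.

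The third step is to propagate this to all of $W$. The set of $z\in W$ near which $f$ extends jointly holomorphically to some strip is open and, by the previous step, nonempty. To show it is closed in $W$, we apply the same two-constants estimate with $K$ replaced by a small closed disc inside the good set; closed discs are nonpolar. This spreads exponential decay of $\hat f_k$ uniformly to any compact subset of a connected neighbourhood, so the good set is all of $W$ by connectedness.

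The main obstacle is the first step, namely extracting a nonpolar subset with a uniform analyticity width from merely pointwise analyticity. The Baire-type decomposition into the closed sets $K_n$ combined with countable subadditivity of polar sets is the key point, and the closedness of $K_n$ must be checked carefully via continuity of the Fourier coefficients. Alternatively, one may simply cite the theorem of Nguyen Thanh Van–Siciak on separately analytic functions with nonpolar cross-sections, as done in \cite{aj}.
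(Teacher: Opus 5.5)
Your argument is correct in substance. Since the paper gives no proof of this lemma (it simply quotes it from \cite{aj}), your proposal supplies a self-contained one, and it is worth comparing it with the usual short proof. In that proof one sets $u_k=\frac{1}{|k|}\log|\hat f_k|$, a family of subharmonic functions locally uniformly bounded above, and $u=\limsup_{|k|\to\infty}u_k\le 0$. By the Brelot--Cartan theorem the set $\{u<u^*\}$ is polar, so $u^*<0$ at some point of the given nonpolar set. Hence $u^*<0$ on all of $W$ by the maximum principle, and Hartogs' lemma gives $u_k\le-\epsilon$ uniformly on each compact subset of $W$ for large $|k|$. That route handles an arbitrary nonpolar set directly and yields uniform exponential decay of $\hat f_k$ on every compact set in one step, at the price of the negligible-set theorem and Hartogs' lemma. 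Your route needs only the two-constants inequality for the relative extremal function. It pays for this with the Baire-type extraction of a compact nonpolar set with a uniform decay rate, and with the open--closed propagation in Step 3.

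A few points need repair, none of them serious.
\begin{enumerate}
\item The disc should not be centred at an arbitrary $z_0$, since a disc about $z_0$ may miss the nonpolar set entirely. Choose a disc that carries a nonpolar part of that set, and let Step 3 do the rest.
\item The hypothesis provides a nonpolar set, not a nonpolar compact one, and a nonpolar set need not contain a nonpolar compact subset unless it is, e.g., Borel. The fix is to apply your Step 1 to the full set of $z$ with analytic fibre. That set equals $\bigcup_n A_n$, where $A_n=\{z\in W:\ |\hat f_k(z)|\le n e^{-|k|/n}\ \text{for all } k\}$, and each $A_n$ is relatively closed because every $\hat f_k$ is continuous. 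Cover $W$ by countably many closed discs $\overline{D_j}\subset W$. Then some $A_n\cap\overline{D_j}$ is a nonpolar compact set with uniform decay.
\item A strip bound is not equivalent to the Fourier bound, since only one implication holds. So define $K_n$ through the Fourier bound, as in (ii).
\item Use the upper regularization $\omega^*$ of the relative extremal function. As a nonpositive subharmonic function that is not identically $0$, it is strictly negative throughout $D$. So Step 2 in fact gives uniform decay on every compact subset of $D$. The two-constants inequality is only needed for $|k|$ large enough that $\log n-c|k|<\log M$.
\end{enumerate}
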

Note that $m_+(z,x)$ is continuous, $z\rightarrow m_+(z,x)$ is holomorphic on $R_{\delta_2}(E_0)\backslash(\R\backslash (E_0-\delta,E_0+\delta_2))$ for any $x\in\T$ and $x\rightarrow m_+(z,x)$ is analytic on $\T$ for any $z\in R_{\delta_2}(E_0)\backslash(\R\backslash (E_0-\delta_2,E_0+\delta_2))$. Thus by Lemma \ref{aj}, $m_+(E,x)$ is analytic on $(E_0-\delta_2, E_0+\delta_2)\times \T$.

Finally, let
$$
(U_E(\omega),V_E(x))=(u_E(x),v_E(x))D_E(x).
$$
Then we have  $(U_E,V_E)\in C^\omega(\T, Sp_{2d\times 2}(\C))$, and
there are $\phi_E\in C^\omega(\T,\R)$ and  $R_E\in C^\omega(\T,SO(2,\R))$, depending analytically in $E$ and $x$ on $I'\times \T$ for some $I'\subset I$, such that
$$
L_{E,v}^w(x)(U_E(x),V_E(x))=(U_E(x+\alpha), V_E(x+\alpha))e^{2\pi i\phi_E(x)}R_E(x).
$$
\qed

\section{The trigonometric potentials}\label{C}
In this section, we assume that $v$ and $w$ are trigonometric
polynomials. Let
\[
\gamma_1(E)\geq\gamma_2(E)\geq\cdots\geq\gamma_{2l}(E)
\]
denote the Lyapunov exponents of the complex symplectic cocycle
$(\alpha,L^v_{E,w})$. For real $E$ they satisfy
\[
\gamma_j(E)=-\gamma_{2l+1-j}(E),
\]
so that $\gamma_l(E)$ is the smallest nonnegative Lyapunov exponent.
We use the convention $\omega^0=0$.

\begin{Theorem}\label{th71}
Given $\alpha\in\R\backslash\Q$ and  an open interval $I\subset \Sigma_{\alpha,v}^w$ , it is impossible that for all $E\in I$,
\begin{enumerate}
\item both $(\alpha,L_{E,v}^w)$ and $(\alpha,L_{E,w}^v)$  are $PH2$ and
\item  $\gamma_l(E)=0$ and
\item
$
\omega^{l-1}(\alpha,L^v_{E,w})=0.
$
\end{enumerate}
\end{Theorem}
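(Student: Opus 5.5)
We argue by contradiction, combining the generalized Kotani theory of Section \ref{kotanis1} with the all-frequency Puig argument of Section \ref{puig2}. Suppose that (1)--(3) hold on the whole open interval $I\subset\Sigma^w_{\alpha,v}$. The roles of $v$ and $w$ are symmetric under Aubry duality: $L^w_{v,\alpha,x}$ and $L^{v(-\cdot)}_{w,\alpha,\theta}$ are dual, and their cocycles are $(\alpha,L^w_{E,v})$ and $(\alpha,L^v_{E,w})$. We use the two directions differently. Kotani theory will be applied to $(\alpha,L^v_{E,w})$, the cocycle of range $l$ whose smallest exponent $\gamma_l$ vanishes. The Puig obstruction will then be read off from $(\alpha,L^w_{E,v})$ being $PH2$.

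\emph{Step 1: Kotani reduction.} Apply Theorem \ref{C0reducibility1} to the $PH2$ cocycle $(\alpha,L^v_{E,w})$, with the roles of $v$ and $w$ exchanged; this is legitimate since $v$ is real analytic and $w$ is a trigonometric polynomial of degree $l$. Hypotheses (2) and (3) give $L_l=0$ and $\omega^{l-1}=0$ on $I$. We then obtain a subinterval $I'\subset I$ and, for each $E\in I'$, analytic data $H_E=(U_E,V_E)\in C^\omega(\T,Sp_{2l\times2}(\C))$, $\phi_E\in C^\omega(\T,\R)$ and $R_E\in C^\omega(\T,SO(2,\R))$, all depending analytically on $E$, such that
\[
L^v_{E,w}(x)H_E(x)=H_E(x+\alpha)e^{2\pi i\phi_E(x)}R_E(x).
\]
Write $R_E(x)=R_{\psi_E(x)}$. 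Here $R_E$ need not be homotopic to the identity, and $\int\psi_E$ need not vanish.

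\emph{Step 2: normalize the rotation.} To apply Theorem \ref{contra2} we need a rotation part of zero mean. If $R_E$ has degree $k$, conjugate by $R_{-kx/2}$, or by $R_{-kx}$ after passing to a double cover if needed. This changes the rotation number by a multiple of $\alpha/2$, as in \eqref{rotation number}. After this step we have $\psi_E$ with mean $\rho(E)$, a fibered rotation number of the center, which depends continuously on $E$.

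It remains to find $E\in I'$ at which $\rho(E)$ lies in $\Z\alpha/2+\Z/2$ or, more usefully, at which it is zero, so that $\int\psi_E=0$. This is where I expect the main obstacle. The natural approach is to relate $\rho(E)$ to a density-of-states-type quantity that is non-decreasing in $E$ and strictly increasing on the spectrum, via a Johnson--Moser-type identity for the center (Theorem \ref{mg}). Since $I'$ lies in the spectrum, $\rho$ is then nonconstant on $I'$. Hence it takes values in the dense set $\{k\alpha/2 \bmod 1\}$, and for such $E$ a degree-$k$ conjugation kills the mean. Any residual constant phase can be absorbed into $\phi_E$, because the scalar factor $e^{2\pi i\phi_E}$ is unconstrained; only the $SO(2)$ part needs zero mean. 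I would first try to prove the strict monotonicity of $\rho$ using positivity of $\Im m_+$ from Lemma \ref{nonvanishing}, together with the standard argument that a locally constant rotation number forces a gap.

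\emph{Step 3: conclude.} At such an energy $E$ we have $F=H_E R_{-kx/2}$ and real analytic $\phi,\psi$ with $\int\psi=0$ satisfying \eqref{spe} for $L^v_{E,w}$. By Theorem \ref{contra2}, the dual cocycle $(\alpha,L^w_{E,v})$ is then not $PH2$. This contradicts hypothesis (1). Step 2 is the only genuinely new ingredient; Steps 1 and 3 are direct applications of earlier results.
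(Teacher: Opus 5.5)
Your Steps 1 and 3 match the paper's proof: Corollary~\ref{Cwreducibility}, then Theorem~\ref{contra2} applied to $H_{E_0}R_{k_1x}$. Step 2, which you correctly single out as the crux, is not proved, and the route you sketch does not close with the available tools.

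``A locally constant rotation number forces a gap'' works for Schr\"odinger cocycles because $N=1-2\rho$. For the two-dimensional center of a range-$l$ cocycle, the analogous identity $1-2\int_\T\psi_E\,dx=N(E)$ is a separate theorem of \cite{gj}, which you neither state nor derive. Lemma~\ref{nonvanishing} only controls the sign of $\Im m_\pm$ for $\Im z>0$, so it says nothing about the $E$-derivative on the real axis.

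The missing idea is to complexify the conjugacy itself:
\begin{itemize}
\item Since $H_E,\phi_E,\psi_E$ are analytic in $E$, the identity $L^v_{E,w}H_E=H_E(\cdot+\alpha)e^{2\pi i\phi_E}R_{\psi_E}$ persists at $z=E+i\e$.
\item The two center exponents can then be read off: $\gamma_l(z)=-2\pi\Im\int_\T(\phi_z+\psi_z)\,dx$ and $\gamma_{l+1}(z)=-2\pi\Im\int_\T(\phi_z-\psi_z)\,dx$. Hence $4\pi\Im\int_\T\psi_z\,dx=\gamma_{l+1}(z)-\gamma_l(z)$.
\item The $\e$-derivative of a center exponent is given by the Johnson--Moser identity of Lemma~\ref{final1}. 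As $\e\to0^+$, the reflectionless relation $m_+(E+i0,x)=m_-(E-i0,x)$ of Theorem~\ref{keyth}(3) makes its integrand positive. This gives, for a.e.\ $E\in I'$, $\lim_{\e\to0^+}\partial_\e\Im\int_\T\psi_{E+i\e}\,dx\ge\frac{1}{2\pi l}\int_\T\frac{\|m_+(E+i0,x)u_E(x)+v_E(x)\|^2}{\Im m_+(E+i0,x)}\,dx>0$.
\item By Cauchy--Riemann this limit equals $\frac{d}{dE}\int_\T\psi_E\,dx$. So the mean rotation is nonconstant and, being continuous, hits the dense set $\Z\alpha+\Z$.
\end{itemize}
Without this computation, or the rotation--IDS theorem, nothing in your argument produces the energy $E_0$.

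Two smaller points. First, conjugating by $R_{jx}$ shifts $\int_\T\psi$ by $j\alpha$, but it cannot change the degree of the map $x\mapsto R_E(x)$, since $\deg\bigl(B(\cdot+\alpha)^{-1}R_EB\bigr)=\deg R_E$. So your ``degree normalization'' could not remove a nonzero degree. It is also unnecessary, because Corollary~\ref{Cwreducibility} already gives $R_E=R_{\psi_E}$ with $\psi_E\in C^\omega(\T,\R)$. Second, aim for $\int_\T\psi_{E_0}\,dx=k_1\alpha+k_2$ and use the periodic conjugacy $R_{k_1x}$, as the paper does. Half-integer shifts via $R_{kx/2}$ with $k$ odd make $F$ antiperiodic, so $F\notin C^\omega(\T,Sp_{2l\times 2}(\C))$ as Theorem~\ref{contra2} requires. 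You could repair this only by twisting with $e^{\pi ix}$ and absorbing $e^{-\pi i\alpha}$ into $\phi$.
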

\begin{pf}
  The proof is via contradiction and an improvement of Corollary
  \ref{C0reducibility}. 
Assume there is an open interval $I\subset \Sigma_{\alpha,v}^w$ such that  for any $E\in I$,
\begin{enumerate}
\item both $(\alpha,L_{E,v}^w)$ and $(\alpha,L_{E,w}^v)$  are $PH2$ and
\item $\gamma_l(E)=0,\,\omega^{l-1}(\alpha,L^v_{E,w})=0. $.
\end{enumerate}

\begin{Corollary}\label{Cwreducibility}
If  $\gamma_l(E)=0, \omega^{l-1}(\alpha,L^v_{E,w})=0,$ and $(\alpha,L_{E,w}^v)$ is $PH2$ for all $E$ in an interval $I\subset \R$, then there exist   $H_E\in C^\omega(\T,Sp_{2l\times 2}(\C))$, $\phi_E(x)\in C^\omega(\T,\R)$ and $\psi_E\in C^\omega(\T,\R)$, depending analytically on $E\in I'$ for $I'\subset I$ such that
\begin{equation}\label{final}
L_{E,w}^v(x)H_E(x)=H_E(x+\alpha)e^{2\pi i\phi_E(x)}R_{\psi_E(x)}.
\end{equation}
\end{Corollary}
\begin{pf}
Direct corollary of Theorem \ref{C0reducibility1}.
\end{pf}

Notice that for $\e$ sufficiently small, \eqref{final} also holds for $E+i\e$, thus one has 
$$
\gamma_l(E+i\e)=-2\pi \Im \int_\T (\phi_{E+i\e}(x)+\psi_{E+i\e}(x))dx, 
$$
$$
\gamma_{l+1}(E+i\e)=-2\pi \Im \int_\T (\phi_{E+i\e}(x)-\psi_{E+i\e}(x))dx.
$$
On the other hand, by Lemma \ref{final1} and (3) of Theorem \ref{keyth}, for almost every $E\in I'$,
$$
\lim\limits_{\e\rightarrow 0^+}\frac{\partial \Im \int_\T \psi_{E+i\e}(x)dx}{\partial \e}\geq \frac{1}{2\pi l}\int_\T \frac{\left\|m_+(E+i0,x)\begin{pmatrix}
\vec{u}_E(0,x)\\
\vec{u}_E(-1,x)
\end{pmatrix}+\begin{pmatrix}
\vec{v}_E(0,x)\\
\vec{v}_E(-1,x)
\end{pmatrix}\right\|^2}{\Im m_+(E+i0,x)}dx>0.
$$
Thus $\int_\T\psi_E(x)dx$ is not a constant \footnote{An alternative argument can be found in \cite{gj} where it is proved that $1-2\int_\T\psi_E(x)dx=N(E)$.}, so there is $E_0\in I'$ with $\int_\T
\psi_{E_0}(x)dx=k_1\alpha+k_2,$ for some $k_1,k_2\in\Z.$ We now define
$F_{E_0}(x)=H_{E_0}(x)R_{k_1x}$, and obtain
$$
L_{E_0,w}^v(x)F_{E_0}(x)=F_{E_0}(x+\alpha)e^{2\pi i\phi_{E_0}(x)}R_{\psi_{E_0}(x)-k_1\alpha}.
$$
Since $\int_\T \left(\psi_{E_0}(x)-k_1\alpha-k_2\right)dx=0$ and $(\alpha, L_{E_0,v}^w)$ is $PH2$, this contradicts  Theorem \ref{contra2}.
\end{pf}
\subsection{Proof of Cantor spectrum for Type I operators with trigonometric polynomial potentials}

Assume  $v$ is a trigonometric polynomial and there exists an interval $I\subset \overline{\Sigma^1_{v,\alpha}}$.
Recall that the associated Schr\"odinger cocycle is denoted
$(\alpha,S_{E}^v),$  the dual cocycle by $(\alpha,L_{E,v})$, and their
non-negative Lyapunov exponents are  $L(E)$ and $\{\gamma_i(E)\}_{i=1}^d,$ respectively. We distinguish two cases:
\begin{enumerate}
\item There exists $E_0\in I\subset  \overline{\Sigma^1_{v,\alpha}}$ such that $L(E_0)>0.$ Then by continuity of
  Lyapunov exponents there is $E'_0\in I\cap\Sigma_{v,\alpha}^1$ such that $L(E'_0)>0$.  Hence there is $E_0'\in I'\subset I$ such that $L(E)>0$ on $I'$ and, by openness of type I energies, $\bar{\omega}(E)=1$ on $I'$.  Then for any $E\in
  I'$, we have
  \begin{enumerate}
\item by Theorem \ref{dominate1}, both  $(\alpha,S_{E}^v)$ and $(\alpha,L_{E,v})$  are $PH2$
\item by Theorem 1.2 in \cite{gjyz},  $\gamma_d(E)=0$ .
\item by Theorem \ref{dominate1} $\omega^{d-1}(\alpha,L_{E,v})=0$ 

\end{enumerate}
This contradicts  Theorem  \ref{th71}.

\item If $L(E)=0$ on $I'$, we apply
Theorem~\ref{th71} with the two Aubry-dual operators
interchanged. The cocycle to which
Corollary~\ref{final} is applied is then the
Schr\"odinger cocycle $(\alpha,S_E^v)$. Its center Lyapunov exponent is
$L(E)=0$, while the additional acceleration condition is automatic
since in this case $l=1$ and $\omega^0=0$. The dual cocycle
$(\alpha,L_{E,v})$ is $PH_2$ by
 Theorem~\ref{dominate1}. 
This again contradicts Theorem  \ref{th71}.\qed
\end{enumerate}

\section{From trigonometric polynomials to real analytic }\label{limit}

For a general real analytic potential, the dual operator is infinite-range
and no finite-dimensional transfer cocycle is available. We
therefore approximate $v$ by its trigonometric polynomial truncations
$v_n$. Near a fixed Type I energy, the corresponding finite-range dual
cocycles are $PH_2$ for all sufficiently large $n$. The intrinsic
symplectic theory developed in \cite{gj} shows that their two-dimensional centers admit
a natural limiting dynamics. We give here a self-contained,
parameter-dependent  construction of this limit, analytic in both energy and phase, and
establish estimates on the stable and unstable bundles that are uniform
in $n$. The parameter dependence is needed for the long-range Kotani
argument in Section~10. The uniform hyperbolic estimates are used in
Section~11 to show that the approximate dual solutions have
exponentially small stable and unstable components, while their center
projections remain quantitatively nondegenerate.The specialized two-dimensional facts from \cite{gj} used in the
construction are proved in Appendix~\ref{app:center-facts}.
where we also indicate their
counterparts in \cite{gj}.

Assume $v\in C^\omega_{h_1}(\T,\R)$. Throughout this section and
Sections~\ref{slongkotani}--\ref{slongpuig}, we use approximants of
exact degree $l(n)$. Namely, let 
\begin{equation}
v_n(x)=\sum_{m=-l(n)}^{l(n)}\hat v_m e^{2\pi i mx},\ \ l(n)=\max\{|m|\leq n:\hat{v}_m\neq 0\}.
\end{equation}
Every transfer matrix and symplectic form
associated with $v_n$ uses these coefficients, including at the leading
coefficient. For every $0\leq h<h_1$, we have
\begin{align}\label{exact-degree-convergence}
|v_n-v|_h+|v_{n+1}-v_n|_h
&\leq C_s e^{-2\pi(h_1-h)n}.
\end{align}     
Let $E_0\in\R$ with $\bar\omega(E_0)=1$. By Lemma \ref{con} and \eqref{exact-degree-convergence}, there is
$\delta_0>0$ such that $E$ is a Type I energy for $H_{v_n,\alpha,x}$
whenever $n$ is sufficiently large and $|E-E_0|\leq\delta_0$.
By Theorem~\ref{dominate1}, there is a continuous decomposition of
$\C^{2l(n)}$ corresponding to $(\alpha,L_{E,v_n})$:   
\begin{equation}\label{sub10}
\C^{2l(n)}=E_s^n(\theta)\oplus E_c^n(\theta)\oplus E_u^n(\theta).
\end{equation}
By Theorem 6.1 in \cite{ajs}, $E_s^n(\theta)$, $E_c^n(\theta)$,  $E_u^n(\theta)$ depend analytically  on both $E$ and $\theta$.

In this case, we denote the non-negative Lyapunov exponents of dual symplectic cocycle $(\alpha,L_{E,v_n})$ by $\{\gamma^{n}_j(E)\}_{j=1}^{l(n)}$ (where $0\leq \gamma_{1}^n(E)\leq \gamma^{n}_2(E)\leq \cdots \leq \gamma_n^{l(n)}(E)$) and rewrite $S$ as $S_n$. Let $L^v_\e(E)=L(\alpha,S_E^v(\cdot+i\e))$ be the Lyapunov exponent of $(\alpha,S_E^v(\cdot+i\e))$ and $\e_1(E)\geq 0$ be the first turning point of $L^v_\e(E)$ \footnote{Since $H_{v,\alpha,\theta}$ is a type I operator, such turning point always exists.}.
The following multiplicative Jensen's formula is proved  in \cite{gjyz}.
\begin{Theorem}[Theorem 2 in \cite{gjyz}]\label{1general}
For $\alpha\in\R\backslash\Q$ and $v\in C^\omega_{h_1}(\T,\R)$, there exist non-negative $\{\gamma_i(E)\}_{i=1}^m$ such that for any $E\in\R$   
$$
\gamma_i(E)=\lim\limits_{n\rightarrow \infty}\gamma^{n}_i(E), \ \ 1\leq i\leq m.
$$
Moreover, \begin{align*}\label{gne1}
L_{\e}(E)= L_0(E) -\sum_{\{i:\gamma_i(E)< 2\pi|\e|\}}  \gamma_i(E)+2\pi(\#\{i:\gamma_i(E)<2\pi|\e|\})|\e|
\end{align*} for $|\e|<h_1$.
\end{Theorem}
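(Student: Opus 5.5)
Theorem~\ref{1general} is quoted from \cite{gjyz}. Since the paper takes it as external input, I would reconstruct its proof in two steps: first establish the formula for trigonometric polynomials, then pass to the limit $n\to\infty$.

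\emph{Step 1: trigonometric polynomials.} For $v_n$ of degree $l=l(n)$ I would prove
\[
L^{v_n}_{\e}(E)=L^{v_n}_0(E)-\sum_{\gamma^n_i(E)<2\pi|\e|}\gamma^n_i(E)+2\pi\#\{i:\gamma^n_i(E)<2\pi|\e|\}|\e|
\]
for all $\e$. The starting point is Aubry duality applied to the complexified cocycle. The phase shift $x\mapsto x+i\e$ in $S_E^{v_n}$ corresponds, on the dual side, to multiplying the Fourier index $k$ by $e^{-2\pi k\e}$. This is a diagonal conjugation of the dual operator $L_{v_n,\alpha,\theta}$. In terms of the dual cocycle, it shifts every Lyapunov exponent by $2\pi\e$, before the result is folded back by the symplectic $\pm$ pairing.

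I would make this precise with the identity, valid for any $\e$,
\[
L^{v_n}_\e(E)-L^{v_n}_0(E)=\frac{1}{l}\Big(\sum_i|\gamma_i^n\pm2\pi\e|\text{-type terms}\Big),
\]
which follows from Jensen's formula applied to $\det(E-H)$ on finite boxes, i.e.\ the Thouless formula on both sides of duality. Concretely, the Thouless formula gives $L^{v_n}_\e(E)=\int\ln|E-E'|\,dN(E')$ plus a correction that depends on $\e$ only through the leading coefficient $|\hat v_l|e^{2\pi l|\e|}$. On the dual side, $\sum_i\gamma_i^n(E)$ equals the corresponding log-potential minus $\ln|\hat v_l|$. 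The piecewise-affine formula then comes from the Jensen-type identity $\int_\T\ln|e^{2\pi(\gamma+ix)}-e^{2\pi|\e|}|\,dx=\max(2\pi\gamma,2\pi|\e|)$ (after normalization), applied to each dual exponent.

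\emph{Step 2: passage to the limit.} I expect this to be the main obstacle. First I would show that, for $|\e|<h_1$, the left-hand side converges: $L^{v_n}_\e(E)\to L^v_\e(E)$, by continuity of the Lyapunov exponent in the analytic topology \cite{bj,ajs} together with \eqref{exact-degree-convergence}.

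Next I need convergence of each $\gamma_i^n(E)$ as $n\to\infty$ for fixed $i$. For this I would read the formula of Step~1 as determining the $\gamma_i^n$ below level $2\pi h$. Indeed, its slopes in $\e$ count $\#\{i:\gamma_i^n<2\pi|\e|\}$, and its corners are located exactly at the $\gamma_i^n/2\pi$. Because convex functions converge together with their one-sided derivatives at points of differentiability of the limit, the corner positions of $L^{v_n}_\e$ converge to those of $L^v_\e$ on $|\e|<h_1$, and the multiplicities converge as well. This yields the limits $\gamma_i(E)$ for the exponents below $2\pi h_1$. Exponents lying above $2\pi h_1$ never contribute to the formula on the strip, so their limits can be taken along subsequences, or simply set to the threshold value; $m$ is the number of limiting exponents that fall below this threshold. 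The delicate points are that the limit is uniform in $n$ only for $|\e|\le h<h_1$, and that exponents $\gamma_i^n$ can escape to $+\infty$. Both issues are controlled by the quantization of slopes, which are integers and bounded on compact sub-strips by $\max_{|\e|\le h}$ of the acceleration.

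Substituting these limits into the formula of Step~1 gives the formula in the statement.
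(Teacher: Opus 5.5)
The paper does not prove this statement; it imports it from \cite{gjyz}. Your Step~2 is sound in outline. Once the identity holds for every truncation, the convex functions $\e\mapsto L^{v_n}_\e(E)$ converge locally uniformly on $|\e|<h_1$, and their slopes are integer multiples of $2\pi$; together these force the corners below any non-corner height $h<h_1$ to converge with multiplicity.

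The genuine gap is Step~1, which is the whole content of the multiplicative Jensen formula, and the mechanism you give for it does not work.

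\emph{Your Schr\"odinger-side claim is false.} Read literally, with $N$ the density of states of the self-adjoint operator, you assert that Thouless gives $L^{v_n}_\e(E)=\int\ln|E-E'|\,dN(E')$ plus a correction depending on $\e$ only through $|\hat v_l|e^{2\pi l|\e|}$. That would make $L^{v_n}_\e(E)-L^{v_n}_0(E)$ independent of $E$. But the formula to be proved has corners at the $E$-dependent heights $\gamma^n_i(E)/2\pi$. For the almost Mathieu operator, the first corner is at $\max(0,-\ln|\lambda|)/2\pi$ on the spectrum, but at height of order $\ln|E|$ when $|E|$ is large.

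Several further points go wrong on that side. The coefficient $\hat v_l$ belongs to the dual operator, not to $H$, whose hopping is $1$. For $\e\neq0$ the operator $H_{v_n(\cdot+i\e),\alpha,x}$ is not self-adjoint. Aubry duality is an equivalence of direct integrals, not of finite boxes, so no density of states can be transported across it.

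\emph{The dual side gives only the $\e=0$ identity.} Conjugation by $e^{2\pi n\e}$ is a similarity on every finite box. Hence the box determinants with coefficients $\hat v_ke^{-2\pi k\e}$ coincide with the unshifted ones. Thouless there yields $\ln|\hat v_l|-2\pi l\e$ plus the sum of the $l$ largest shifted exponents $\gamma^n_i+2\pi\e$. This is only the $\e$-independent identity $L^{v_n}_0(E)=\ln|\hat v_l|+\sum_i\gamma^n_i(E)$.

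\emph{What is actually needed} is, for $\e\ge0$,
\[
L^{v_n}_\e(E)=\ln|\hat v_l|-2\pi l\e+\sum_{i=1}^{l}\Big[(\gamma^n_i+2\pi\e)_++(2\pi\e-\gamma^n_i)_+\Big].
\]
That is, the complexified Schr\"odinger exponent must see the positive parts of all $2l$ shifted dual exponents $\pm\gamma^n_i+2\pi\e$, not the $l$ largest of them. The difference between the two is exactly $\sum_i(2\pi\e-\gamma^n_i)_+$, the term you want.

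Equivalently, Jensen's formula in $z=e^{2\pi ix}$ applied to $\det(E-H_{[0,N-1]}(x))$ expresses $L^{v_n}_\e(E)$ through the distribution of $\Im x$ over the complex phases at which $E$ is a Dirichlet eigenvalue. One must then show these are distributed like $\pm\gamma^n_i(E)/2\pi$. Applying the scalar identity $\int_\T\ln|e^{2\pi(\gamma+ix)}-e^{2\pi|\e|}|\,dx=\max(2\pi\gamma,2\pi|\e|)$ ``to each dual exponent'' presupposes exactly this factorization.

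Closing the gap needs a genuinely new input. One option is an index count:
\begin{itemize}
\item at heights where $(\alpha,S^{v_n}_E(\cdot+i\e))$ is uniformly hyperbolic, $H_{v_n(\cdot+i\e),\alpha,x}-E$ is invertible, and hence so is the dual direct integral;
\item one must then show that the acceleration at $\e$ equals $\#\{i:\gamma^n_i<2\pi\e\}$, the number of solutions decaying at $+\infty$ that the shifted dual loses relative to $l$;
\item the stretch where $L^{v_n}_\e=0$ must be handled separately.
\end{itemize}
This index structure is visible in Section~\ref{limit}, where in the slope-one regime the dual Green's functions are built from $l(n)-1$ solutions decaying at $+\infty$. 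Without such a step, Step~1 is unproved and Step~2 has nothing to pass to the limit.
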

\begin{Remark}
{\rm Note that although Theorem \ref{1general}  is only stated in \cite{gjyz} for real $E$, but the proof holds for all complex $z\in \C$. But due to the lack of evenness, one needs to state the formula for $\e\geq 0$ and $\e\leq 0$ individually. Here for simplicity, we only state half of the formula, i.e., $\e\geq 0$ case.  There exist non-negative $\{\gamma_i\bar{z})\}_{i=1}^m$ such that    
$$
\gamma_i(\bar{z})=\lim\limits_{n\rightarrow \infty}\gamma^{n}_i(\bar{z}), \ \ 1\leq i\leq m.
$$
Moreover, \begin{align*}
L_{\e}(z)= L_0(z) -\sum_{\{i:\gamma_i(\bar{z})< 2\pi\e\}}  \gamma_i(\bar{z})+2\pi(\#\{i:\gamma_i(\bar{z})<2\pi\e\})\e
\end{align*} for $0\leq \e<h_1$.}
\end{Remark}

The following theorem which captures uniform contraction/expansion of the stable/unstable subbundles and  gives the parametrized version of uniform convergence of the center, is crucial for our argument. We denote 
$$
GL_{2l(n)\times 2}(\C)=\{ F\in M_{2l(n)\times 2}(\C): {\rm Rank}(F)=2\},\ \ I_{\delta_0}(E_0)=(E_0-\delta_0,E_0+\delta_0).
$$
\begin{Theorem} \label{theorem-main1general}
Let $\alpha\in\R\backslash\Q$, $v\in C^\omega(\T,\R)$, $\bar{\omega}(E_0)=1$, there exists $\delta_0(E_0,v)>0$, such that the following hold:

{\rm (1)}  If $\omega(E)=1$, there exist $O_n\in C^\omega(I_{\delta_0}(E_0)\times\T,GL_{2l(n)\times 2}(\C))$ and $M_n\in C^\omega(I_{\delta_0}(E_0)\times \T,GL(2,\C))$, such that 
$$
L_{E,v_n}(\theta)O_n(E,\theta)=O_n(E,\theta+\alpha)M_n(E,\theta)
$$
with
\begin{equation}\label{cenest1}
|O_n^*S_nO_n-J|_{r,h},\ \ |M_n-e^{2\pi i\phi}{\bf C}|_{r,h}\leq Ce^{-cn}
\end{equation}
for some $r,h,c,C>0$, $\phi\in C^\omega(I_{\delta_0}(E_0)\times\T,\R)$ and ${\bf C}\in C^\omega(I_{\delta_0}(E_0)\times \T,SL(2,\R))$.  If we denote
$$
O_n=\begin{pmatrix}O_n(l(n)-1)\\ O_{n}(l(n)-2)\\ \vdots\\ O_n(-l(n))\end{pmatrix},
$$ 
then for $-l(n)\leq j\leq l(n)-1$, we have 
$$
|O_n(j)- O_{n+1}(j)|_{r,h}\leq Ce^{-cn}.
$$
whee $O_n(j)$ is the $j$-th row of $O_n$.

{\rm (2)}Moreover, for any frame $W_n(E,\theta,0)$ spanning the finite truncation center, set
$$
L_n(E,\theta)=W_n(E,\theta)\left(W_n(E,\theta)^*S_nW_n(E,\theta)\right)^{-1}W_n(E,\theta)^*=(l_{ij}(E,\theta))
$$
for any $\delta>0$, there is some $C(\delta)>0$ such that 
$$
|l_{ij}(E,\theta)|\leq Ce^{2\pi(\e_1(E)+\delta)|i-j|}.
$$
In particular, if $\omega(E)=1$, we have
\begin{equation}\label{irvp}
\|O_{n}(E,\theta,0)\|^2+\left\|\frac{\partial O_{n}(E,\theta)}{\partial E}\right\|^2\leq Ce^{\delta|l(n)|},
\end{equation}

{\rm(3)} There exist an orthonormal basis $\{u_n^i(E,\theta)\}_{i=1}^{l(n)-1}$ of $E^n_s(\theta)$, an orthonormal basis $\{v_n^i(E,\theta)\}_{i=1}^{l(n)-1}$ of $E^n_u(\theta)$ and a symplectic basis $\{w_n^i(E,\theta)\}_{i=1}^2$ of $E_c^n(\theta)$ with
\begin{align*}
&\begin{pmatrix}
u_n^1(E,\theta)&\cdots&u_n^{l(n)-1}(E,\theta)
\end{pmatrix}^*S_n\begin{pmatrix}
v_n^1(E,\theta)&\cdots&v_n^{l(n)-1}(E,\theta)
\end{pmatrix}\\
=&\begin{pmatrix}
\sigma_n^1(E,\theta)\\& \ddots\\ &&& \sigma_n^{l(n)-1}(E,\theta)
\end{pmatrix},\ \ \sigma_n^i(E,\theta)>0, \ \ 1\leq i\leq l(n)-1.
\end{align*}
For any $\delta>0$, there are $C(\delta)>0$ such that for any $m \geq 0$ and $\theta\in\T$, we have 
$$
\left|u_n^i(E,\theta,m)\right|\leq C\sigma_n^i(E,\theta)e^{|m-l(n)|\delta}e^{-2\pi \e_1(E)(m-l(n))},\ \  \forall 1\leq i\leq l(n)-1,
$$
$$
\left|v_n^i(E,\theta,-m)\right|\leq C\sigma_n^i(E,\theta)e^{|m+l(n)|\delta}e^{-2\pi \e_1(E)(m+l(n))},\ \  \forall 1\leq i\leq l(n)-1.
$$
$$
\left|w_n^i(E,\theta,\pm m)\right|\leq Ce^{(2\pi \e_1(E)+\delta)(m+l(n))},\ \  \forall 1\leq i\leq 2.
$$
where 
$$
\begin{pmatrix}u_{n}^i(E,\theta,m+l(n)-1)\\ u_{n}^i(E,\theta,m+l(n)-2)\\ \vdots\\
u_{n}^i(E,\theta,m-l(n))\end{pmatrix}=(L_{E,v_n})_{m}(\theta) u_n^i(E,\theta),\ \  \forall 1\leq i\leq l(n)-1,
$$
$$
\begin{pmatrix}v_{n}^i(E,\theta,-m+l(n)-1)\\ v_{n}^i(E,\theta,-m+l(n)-2)\\ \vdots\\
v_{n}^i(E,\theta,-m-l(n))\end{pmatrix}=(L_{E,v_n})_{-m}(\theta) v_n^i(E,\theta),\ \  \forall 1\leq i\leq l(n)-1.
$$
$$
\begin{pmatrix}w_{n}^i(E,\theta,\pm m+l(n)-1)\\ w_{n}^i(E,\theta,\pm m+l(n)-2)\\ \vdots\\
w_{n}^i(E,\theta,\pm m-l(n))\end{pmatrix}=(L_{E,v_n})_{\pm m}(\theta) w_n^i(E,\theta),\ \  \forall 1\leq i\leq 2.
$$
\end{Theorem}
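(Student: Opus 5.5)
The argument has three layers, in the order (3), (2), (1). Parts (3) and (2) are uniform-in-$n$ hyperbolicity statements for the truncated duals $(\alpha,L_{E,v_n})$. Part (1) then comes from a Cauchy-type argument on the centers.

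\emph{Part (3).} The idea is to read off decay rates from the multiplicative Jensen formula, Theorem~\ref{1general}, and its complex-$E$ extension in the subsequent Remark. Since $\bar\omega(E)=1$, the formula gives $\gamma_1(E)=2\pi\e_1(E)$ in the limit. It also gives $\gamma_2(E)>2\pi\e_1(E)$, because the second turning point of $L_\e(E)$ lies strictly beyond $\e_1$.

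\begin{itemize}
\item By Lemma~\ref{con} and \eqref{exact-degree-convergence}, for $n$ large and $|E-E_0|<\delta_0$ the truncated duals are $PH_2$ (Theorem~\ref{dominate1}).
\item Their exponents $\gamma_j^n(E)$ converge to $\gamma_j(E)$ locally uniformly in $E$. This uses upper semicontinuity of $\gamma_1^n+\dots+\gamma_j^n$ together with the identity $L_\e=\lim$ (Jensen) applied at a fixed $\e$ slightly above $\e_1$.
\item Choose the bases of $E_s^n$ and $E_u^n$ orthonormal. Diagonalize the pairing $S_n$ between them (singular value decomposition of the $(l(n)-1)\times(l(n)-1)$ block); the $\sigma_n^i$ are the resulting singular values.
\item Translate growth in the transfer-matrix index $m$ into decay of the vector entries, using $(L_{E,v_n})_m u=(u(m+l(n)-1),\dots,u(m-l(n)))$.
\end{itemize}

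The key point is uniformity of the constants $C(\delta)$ in $n$. Plain convergence of exponents does not give it. Instead I would use an Aubry-duality/complexification argument. Conjugating the dual cocycle by $\mathrm{diag}(e^{2\pi\e j})$ corresponds, on the Schrödinger side, to replacing $x$ by $x+i\e$. So for $\e$ slightly larger than $\e_1$, the weighted finite-range cocycle has a uniform gap controlled by the uniform hyperbolicity of the complexified Schrödinger cocycle $(\alpha,S_E^{v_n}(\cdot+i\e))$ on the affine branch of slope one. That cocycle is a $C^\omega_h$-small perturbation of $(\alpha,S_E^v(\cdot+i\e))$, uniformly in $n$. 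Cone-field estimates then give $n$-independent constants.

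\emph{Part (2).} The entries of the center projector $L_n=W_n(W_n^*S_nW_n)^{-1}W_n^*$ do not depend on the choice of frame. I would write $L_n=S_n^{-1}-(\text{stable and unstable parts})$ using the symplectic splitting. The bound $|l_{ij}|\le Ce^{2\pi(\e_1+\delta)|i-j|}$ then follows from the weighted estimates of (3): the weights $e^{\pm2\pi(\e_1+\delta)j}$ make the center pieces bounded. Estimate \eqref{irvp} follows because the center frame lies in the range of $L_n$ and satisfies a normalization $O_n^*S_nO_n\approx J$. Cauchy estimates in $E$ then control $\partial_E O_n$.

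\emph{Part (1), and the main obstacle.} I would construct $O_n$ recursively from the bundles $E_c^{n}$. Take $O_{n+1}$ to be the projection of $O_n$, extended by zeros, onto $E_c^{n+1}$ along $E_s^{n+1}\oplus E_u^{n+1}$, and re-normalize it symplectically via Lemma~\ref{dominated matrix}-type corrections. The difference $L_{E,v_{n+1}}-L_{E,v_n}$ involves the coefficients $\hat v_m$ with $|m|\sim n$, which are of size $e^{-2\pi h_1 n}$. The available room comes from the gap $h_1>\e_1$ (strictly): the center entries grow at most like $e^{(2\pi\e_1+\delta)|j|}$, so the projection error is $O(e^{-cn})$ with $c\approx2\pi(h_1-\e_1-\delta)$. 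Summing gives a Cauchy sequence, hence the claimed bounds on $|O_n(j)-O_{n+1}(j)|$. Restricting $E$ to a complex neighborhood preserves analyticity in $(E,\theta)$, since Theorem 6.1 of \cite{ajs} applies to each $n$. The induced $M_n$ is then $J$-unitary up to $e^{-cn}$. By Proposition~\ref{p1} and Corollary~\ref{c24} (since $\omega^{l(n)-1}=0$), each $M_n$ is close to $e^{2\pi i\phi_n}{\bf C}_n$, and these factors converge. The hypothesis $\omega(E)=1$ is used to get $\e_1(E)=0$, so that the center is bounded uniformly. I expect this uniformity in $n$, particularly the uniform cone estimates in (3), to be the hardest step.
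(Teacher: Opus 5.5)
Your outline has a genuine gap exactly where you expect the difficulty. Nothing in it produces constants uniform in $n$, and the cocycle-level mechanism you propose cannot do so.

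The weighted dual cocycle acts on $\C^{2l(n)}$, whose dimension grows with $n$. Its transfer matrix is normalized by $1/\hat v_{l(n)}$, where $|\hat v_{l(n)}|^{-1}\geq |v|_{h_1}^{-1}e^{2\pi h_1 l(n)}$. Uniform hyperbolicity of the scalar cocycle $(\alpha,S_E^{v_n}(\cdot+i\varepsilon))$ and convergence of the exponents $\gamma_j^n$ control none of the following in this growing space: cone apertures, angles between $E^n_s,E^n_c,E^n_u$, or the constants in $\|(L_{E,v_n})_m w\|\leq Ce^{-cm}$. So the step asserting that cone-field estimates give $n$-independent constants has no mechanism behind it.

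The same defect propagates to (2) and (1).
In (2), the matrix $S_n^{-1}$ in $L_n=S_n^{-1}-(\text{stable/unstable part})$ contains $C_n^{-1}$, whose diagonal entries are $1/\hat v_{l(n)}$. The identity therefore gives no bound without a cancellation you do not control.
In (1), the difference $L_{E,v_{n+1}}-L_{E,v_n}$ is not small: the two transfer cocycles have different dimensions and different, exponentially large normalizations. The zero-extended frame is not close to a window of an approximate solution, since center solutions are not small at the window's edges. Bounding the stable/unstable components of the error would also require a uniform-in-$n$ version of Lemma~\ref{lee2}, which is exactly the missing content of (3).
Finally, applying \cite{ajs} separately for each $n$ gives analyticity in $E$ only on an $n$-dependent neighbourhood, not a uniform radius $r$ in $|\cdot|_{r,h}$.

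The missing idea is to transfer uniformity through Aubry duality at the level of resolvents rather than cocycles. For $\varepsilon$ slightly above $\varepsilon_1$, the cocycles $(\alpha,S_E^v(\cdot\pm i\varepsilon))$ are uniformly hyperbolic near $E_0$. Hence $H_{v'(\cdot\pm i\varepsilon),\alpha,x}-E$ is invertible with uniform bounds, Lipschitz in $v'$, for all $v'$ near $v$, including the truncations $v_n$.

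Dualizing gives the weighted dual Green's functions $g_{v_n(\cdot\pm i\varepsilon)}(E,\theta,m)$. These satisfy the $n$-independent bound \eqref{new15} and converge exponentially in $n$ on the relevant range of indices. Differences of Green's functions at weights $\varepsilon',\varepsilon,-\varepsilon$ (the sequences $h^\ell_n$, $l^\ell_n$) solve the dual eigen-equation. The cofactor formula of Lemma~\ref{basic}, together with the symplectic block structure of Lemma~\ref{gjygjy111}, identifies them with the columns of $F_n(m)(G_n^*S_nF_n)^{-1}G_n^*$ and $R_n(m)(R_n^*S_nR_n)^{-1}R_n^*$.

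From this representation the three parts follow:
(2) is read off entrywise, since $l_{ij}=l^j_n(E,\theta,i)$.
(3) comes from an SVD of $H_n(E,\theta,0)$, which also produces the factor $\sigma_n^i$, together with the bound on $h^\ell_n(m)$ for $m\geq 0$, where the $\varepsilon'$-term drops out.
(1) comes from the convergence of $l^\ell_n$, the uniform signature and eigenvalue lower bound of Lemma~\ref{eigen0}, extraction of a fixed block of $2n_0$ columns, and a symplectic normalization. The factorization $e^{2\pi i\phi}{\bf C}$ is then carried out on the limit $M$ rather than on each approximately $J$-unitary $M_n$.
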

\begin{Remark}\label{bimsa13}
$L_n(E,\theta)$  does not depend on the choice of basis. Indeed,  suppose we have another basis $W_n'(E,\theta)$ of $E_c^n(\theta).$ Then there is some $G_n(E,\theta)\in GL(2,\C)$ such that $W_n'(E,\theta)=W_n(E,\theta)G_n(E,\theta)$. It follows that
\begin{align*}
&W'_n(E,\theta)\left(W'_n(E,\theta)^*S_nW'_n(E,\theta)\right)^{-1}W'_n(E,\theta)^*\\
=&W_n(E,\theta)G_n(E,\theta)\left(G_n(E,\theta)^*W_n(E,\theta)^*S_nW_n(E,\theta)G_n(E,\theta)\right)^{-1}G_n(E,\theta)^*W_n(E,\theta)^*\\
=&W_n(E,\theta)\left(W_n(E,\theta)^*S_nW_n(E,\theta)\right)^{-1}W_n(E,\theta)^*.
\end{align*}
\end{Remark}
\begin{pf}
Parts~(1) and~(2) are a parameter-dependent, two-dimensional
specialization of the intrinsic center construction developed in
\cite{gj}; full details are included here. Part~(3), which provides
stable/unstable estimates uniform in the truncation dimension, is completely new.
The auxiliary signature and factorization facts used below are proved
in Appendix~\ref{app:center-facts}.

For any fixed $\e>\e_1(E_0)$ and sufficiently close to $\e_1(E_0)$, we have 
$$
L^v_\e(E_0)=L_{\e_1(E_0)}(E_0)+2\pi(\e-\e_1(E_0)),
$$
thus $(\alpha,S_{E_0}^v(\cdot\pm i\e))$ is regular (see Section \ref{regge} for the definition). By the openness of regularity and the continuity of the Lyapunov exponent, there is $\delta_0>0$ such that if $|E-E_0|<\delta$, then 
$$
L^v_{\pm\e}(E)=L_{\e_1(E)}(E)+2\pi (|\e|-\e_1(E))>0,
$$
in particular,  $(\alpha,S_{E}^v(\cdot\pm i\e))$ is regular. By Theorem 6 in \cite{avila0}, $(\alpha,S_E^v(\cdot\pm i\e))$ is uniformly hyperbolic.  By Lemma 4.1 and Theorem 4.4 in \cite{gjyz}, $(H_{v(\cdot\pm i\e),\alpha,x}-E)^{-1}$ exists for all $x\in\T$ and $E\in I_{\delta_0}(E_0)$.

Assume $v\in C^\omega_{h_1}(\T,\R)$ \footnote{By our assumption, $h_1>\e_1(E)$, for any $E\in I_{\delta_0}(E_0)$.}. By the resolvent identity,  there is an analytic neighborhood $U_v$ of $v$ such that for any $v'\in U_v$ and $E\in I_{\delta_0}(E_0)$ we have
\begin{equation}\label{gjne1ten}
|(H_{v(\cdot\pm i\e),x,\alpha}-E)^{-1}-(H_{v'(\cdot\pm i\e),x,\alpha}-E)^{-1}|\leq C(v,\delta)|v-v'|_{h_1-\delta}.
\end{equation} 
for any $\delta$ sufficiently small and any  $x\in \T$.

For any fixed $\ell\in\Z$, let $\delta_\ell(x,m):=\delta_\ell(m)$ for any $x\in\T$ and $m\in\Z$. Then by \eqref{dual map},
$$
(U^{-1}\delta_\ell)(x,m)=e^{-2\pi i \ell x}\delta_0(x,m).
$$
\begin{equation}\label{keyeqten}
\left(U(H_{v'(\cdot+ i\e),\alpha}-E)^{-1}U^{-1}\delta_\ell\right)(\theta,m)=\left((L_{v'(\cdot+ i\e),\alpha}-E)^{-1}\delta_\ell\right)(\theta,m).  
\end{equation}
We denote
\begin{align}\label{zx1ten}
f_{v'(\cdot+ i\e)}(E,x,m)&=\left((H_{v'(\cdot+ i\e),\alpha}-E)^{-1}U^{-1}\delta_\ell\right)(x,m)\\ \nonumber
&=\langle \delta_m,e^{-2\pi i\ell x}(H_{v'(x+ i\e),\alpha,x}-E)^{-1}\delta_0\rangle,
\end{align}

\begin{align}\label{final3ten}
g_{v'(\cdot+i\e)}(E,\theta,m)&=\left(U(H_{v'(\cdot+ i\e),\alpha}-E)^{-1}U^{-1}\delta_\ell\right)(\theta,m)\\ \nonumber
&=\left((L_{v'(\cdot+ i\e),\alpha}-E)^{-1}\delta_\ell\right)(\theta,m)
\end{align}
By \eqref{dual map}, we have
$$
g_{v'(\cdot+i\e)}(E,\theta,m)=\sum_{p\in\Z}\int_{\T}e^{2\pi im\eta}e^{2\pi ip(m\alpha+\theta)}f_{v'(\cdot+ i\e)}(E,\eta,p)d\eta,
$$
By \eqref{gjne1ten} and \eqref{zx1ten},  for any $v'\in U_v$, $E\in  I_{\delta_0}(E_0)$ and any $\delta>0$ sufficiently small, we have
\begin{equation}\label{new15}
\sup_{\theta\in\T}|g_{v'(\cdot+ i\e)}(E,\theta,m)|\leq C(v,\delta)e^{|m-\ell|\delta}e^{2\pi  (m-\ell)(\e-\e_1(E))}.
\end{equation}
If $\omega(E)=1$, we choose a fixed $\e_0>0$ with $6\e_0<h_1$, we further have  (possibly shring the parameter neighborhoods dependin on $\e_0$)
\begin{align}\label{new15'}
|g_{v'(\cdot\pm i\e_0)}(m)-g_{v(\cdot\pm i\e_0)}(m)|_{r,h} \leq C(v,\e_0)e^{4\pi|m-\ell|\e_0}|v'-v|_{h_1-\e_0}
\end{align}
for some $r,h>0$.

Let $\e_n(E)$ be the last turning point of $L_\e^{v_n}(E)$ \footnote{Note that $v_n$ is a trigonometric polynomial, there are $l(n)$ turning points (counting multiplicities).} and $\e'>\sup_{E\in I_{\delta_0}(E_0)}\e_n(E)$. By the same argument as above, $(H_{v_n(\cdot+i\e'),\alpha,x}-E)^{-1}$ exists for all $x\in\T$ and $E\in I_{\delta_0}(E_0)$, thus we can define $g_{v_n(\cdot+i\e')}(E,\theta,m)$ as above.
Note that no uniform bound on it or on this large-deformation
resolvent is used below. Uniform estimates will use only the fixed
admissible deformation $\e$.

For any $\theta\in\T$ and $E\in I_{\delta_0}(E_0)$,  we define
\begin{equation}\label{10.7}
h^\ell_n(E,\theta,m)=-e^{-2\pi (m-\ell)\e'}g_{v_n(\cdot+i\e')}(E,\theta,m)+e^{-2\pi (m-\ell)\e}g_{v_n(\cdot+i\e)}(E,\theta,m).
\end{equation}
\begin{equation}\label{10.711}
l^\ell_n(E,\theta,m)=-e^{-2\pi (m-\ell)\e}g_{v_n(\cdot+i\e)}(E,\theta,m)+e^{2\pi (m-\ell)\e}g_{v_n(\cdot-i\e)}(E,\theta,m).
\end{equation}
The signs of these weights follow from
\[
(L_{v_n(\cdot+i\e),\alpha,\theta}-E)
\big(e^{2\pi(\cdot-\ell)\e}u\big)(m)
=e^{2\pi(m-\ell)\e}(L_{v_n,\alpha,\theta}-E)u(m).
\]  
By \eqref{final3ten},  $\{h^\ell_n(E,\theta,m)\}_{m\in\Z}$ and $\{l_n^\ell(E,\theta,m)\}_{m\in\Z}$ are solutions of $(L_{v_n,\alpha,\theta}-E)u=0$. Moreover, by \eqref{new15'} with $\e_1(E)=0$, we have $l^\ell_n(m)\in C^\omega( I_{\delta_0}(E_0)\times \T,\C)$,
\begin{equation}\label{new200}
|l^\ell_n(m)-l^\ell_{n+1}(m)|_{r,h}\leq Ce^{-\frac{\delta'}{2}n}
\end{equation}
for some $r,h,\delta'>0$ and $-n-1\leq m,\ell\leq n$. 


For any $-l(n)\leq \ell\leq l(n)-1$, we let
\begin{equation}\label{ge1}
\vec{h}^\ell_n(E,\theta,m)=\begin{pmatrix}h^\ell_n(E,\theta,m+l(n)-1)\\ h^\ell_n(E,\theta,m+l(n)-2)\\ \vdots \\h^\ell_n(E,\theta,m-l(n))\end{pmatrix},\ \ \vec{l}^\ell_n(E,\theta,m)=\begin{pmatrix}l^\ell_n(E,\theta,m+l(n)-1)\\ l^\ell_n(E,\theta,m+l(n)-2)\\ \vdots \\l^\ell_n(E,\theta,m-l(n))\end{pmatrix},
\end{equation}
\begin{equation}\label{ge2}
H_n(E,\theta,m)=\begin{pmatrix} \vec{h}_n^{l(n)-1}(E,\theta,m)&\cdots &\vec{h}_{n}^{-l(n)}(E,\theta,m)\end{pmatrix},
\end{equation}
\begin{equation}\label{ge211}
L_n(E,\theta,m)=\begin{pmatrix} \vec{l}_n^{l(n)-1}(E,\theta,m)&\cdots &\vec{l}_{n}^{-l(n)}(E,\theta,m)\end{pmatrix}.
\end{equation}

\begin{Lemma}\label{add1}
For $-l(n)\leq \ell\leq l(n)-1$, we have that 
\begin{enumerate}
\item $\vec{h}^\ell_n(E,\theta,0)\in E^n_s(\theta)$ and  ${\rm Rank} (H_n(E,\theta,0))=l(n)-1$;
\item $\vec{l}^\ell_n(E,\theta)\in E^n_c(\theta)$ and  ${\rm Rank} (L_n(E,\theta,0))=2$.
\end{enumerate}
\end{Lemma}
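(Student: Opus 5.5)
The plan is to read both families through the standard description of a reweighted resolvent of a deformed operator as a pair of one-sided solutions, and then place these solutions in the splitting \eqref{sub10} using their exponential rates.

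Fix $E\in I_{\delta_0}(E_0)$, $\theta\in\T$ and $\ell$, and set $u_\eta(m):=e^{-2\pi(m-\ell)\eta}g_{v_n(\cdot+i\eta)}(E,\theta,m)$ for an admissible deformation $\eta$. By the conjugation identity displayed after \eqref{10.711}, $u_\eta$ solves $(L_{v_n,\alpha,\theta}-E)u_\eta=\delta_\ell$. Each difference in \eqref{10.7} and \eqref{10.711} therefore solves the homogeneous equation: the inhomogeneity $\delta_\ell$ cancels because all weights equal $1$ at $m=\ell$.

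\textbf{Rates.} Since $(L_{v_n(\cdot+i\eta),\alpha}-E)^{-1}$ is bounded, $g_{v_n(\cdot+i\eta)}(m)$ decays exponentially as $|m|\to\infty$. Hence $u_\eta$ restricted to $m\geq\ell$ (resp. $m\le\ell$) is a solution that is $o(e^{-2\pi\eta m})$ at $+\infty$ (resp. $o(e^{-2\pi\eta m})$ at $-\infty$). I would use the Jensen formula of Theorem~\ref{1general} for $v_n$, together with its complex-energy version in the Remark after it. By that formula the turning points of $L^{v_n}_\e(E)$ are at $\gamma^n_j/2\pi$. So for $\eta$ strictly between consecutive turning points, this decay condition picks out exactly the Oseledets directions of $(\alpha,L_{E,v_n})$ with exponent $<-2\pi\eta$ at $+\infty$, resp. $>-2\pi\eta$ at $-\infty$. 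Since $(\alpha,L_{E,v_n})$ is dominated (Theorem~\ref{dominate1}), these spaces are the continuous sums of $E^n_s,E^n_c,E^n_u$.

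\textbf{Part (1).} Choose $\e\in(\e_1(E),\gamma_2^n(E)/2\pi)$, and recall $\gamma^n_1\approx 2\pi\e_1$. Then for $m\ge\ell$ both $u_{\e'}$ and $u_\e$ decay faster than $e^{-2\pi\e m}$, since $\e'>\e$. So $\vec h^\ell_n(E,\theta,0)$ has no component in the directions with exponent $\ge-2\pi\e$, that is, in $E_c^n\oplus E_u^n$. Hence $\vec h^\ell_n(E,\theta,0)\in E_s^n(\theta)$. Strictly, this is for the vector at sites $\ge\ell$; I transport it to $m=0$ by invariance of the splitting.

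\textbf{Part (2).} For $\vec l^\ell_n$ I use the deformations $\pm\e$. At $+\infty$ the term $u_\e$ decays faster than $e^{-2\pi\e m}$. The term $e^{2\pi(m-\ell)\e}g_{v_n(\cdot-i\e)}$ decays faster than $e^{2\pi\e m}$. The difference therefore grows at most like $e^{(2\pi\e-0)m}$, which excludes $E^n_u$. Symmetrically, at $-\infty$ the component in $E^n_s$ is excluded. Because $2\pi\e<\gamma_2^n$, the remaining space is exactly $E^n_c$, so $\vec l^\ell_n(E,\theta)\in E_c^n(\theta)$ and $\operatorname{Rank}L_n\le 2$.

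\textbf{Lower bounds on rank.} This is the main obstacle, and the step I would attempt first. The plan is a jump/Wronskian argument. Take the Green's matrix of the deformed operator with columns $\ell$ ranging over a full window of $2l(n)$ consecutive sites. Its one-sided solution matrices have the full-rank jump $C_n$ across the diagonal. This means the $2l(n)$ columns of $(U_{\e'}-U_\e)$ restricted to $m\ge\ell$, together with their left counterparts, span $\C^{2l(n)}$.

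To split off ranks, I would pair against $S_n$ using the symplectic orthogonality Lemma~\ref{symplectic}, in its nonreal-energy form $(L_{\bar z})^*SL_z=S$.
\begin{itemize}
\item For $H_n$, the pairing of columns with $E^n_c\oplus E^n_s$-solutions shows that the span is $S_n$-dual to the unstable $(l(n)-1)$-dimensional block, forcing rank $l(n)-1$.
\item For $L_n$, the pairing with the center shows that its image cannot be a proper subspace of $E_c^n$, because otherwise some center vector would be $S_n$-orthogonal to all $\delta_\ell$-jumps, contradicting nondegeneracy of $S_n$ on $E_c^n$.
\end{itemize}
If this duality bookkeeping becomes too heavy, a fallback for rank $2$ is perturbative. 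By \eqref{new200}, $L_n$ converges as $n\to\infty$, and for the limit one can test nondegeneracy directly: $l^\ell$ is the Aubry dual of the difference of $e^{\mp2\pi i\ell x}$-weighted Schr\"odinger resolvents at $x\pm i\e$. The corresponding $2\times2$ minor would then be computed from the $m$-functions of the uniformly hyperbolic cocycles $(\alpha,S^v_E(\cdot\pm i\e))$ and shown to be nonzero.
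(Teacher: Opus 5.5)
Your membership arguments are sound, and they take a different route from the paper. You classify the right and left tails of $u^\ell_\eta=e^{-2\pi(\cdot-\ell)\eta}g_{v_n(\cdot+i\eta)}$ by growth rates. For every $\theta$, rather than almost every, this needs the uniform growth bounds on the dominated bundles; the paper uses \eqref{sin1}--\eqref{sin3} in the same way. The paper instead writes the three deformed resolvents explicitly via Lemma~\ref{basic}, uses the block structure \eqref{sin4} and the cofactor identity of Lemma~\ref{gjygjy111}, and reads both membership and rank off the closed forms \eqref{repre} and \eqref{repre'}.

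The rank half of your proposal, however, has a genuine gap as written. You claim that the columns of $U_{\e'}-U_\e$ ``restricted to $m\ge\ell$, together with their left counterparts'' span $\C^{2l(n)}$. This cannot hold: those columns are, up to sign, the global solutions $h^\ell_n$, which you have just placed in $E^n_s$, and a global solution has no separate left counterpart. The $S_n$-duality bullets are left as unperformed bookkeeping. The fallback has three problems: it only concerns large $n$, it leaves the decisive $2\times2$ minor uncomputed, and it is not independent of this lemma in the paper's logic, since the rank of the limit in Lemma~\ref{add2aaa} comes from Lemma~\ref{eigen0}, which rests on \eqref{repre'}.

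The missing idea is that $h^\ell_n$ and $l^\ell_n$ are projections of a single jump that does not depend on the deformation.

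Let $\phi^+_\eta$ (resp.\ $\phi^-_\eta$) be the global solution agreeing with $u^\ell_\eta$ on $[\ell-l(n)+1,\infty)$ (resp.\ $(-\infty,\ell+l(n)-1]$). Every $u^\ell_\eta$ solves $(L_{v_n,\alpha,\theta}-E)u=\delta_\ell$, so the jump $\psi^\ell=\phi^+_\eta-\phi^-_\eta$ is independent of $\eta$. It vanishes on $[\ell-l(n)+1,\ell+l(n)-1]$ and equals $\hat v_{l(n)}^{-1}$ at $\ell+l(n)$. Conservation of $\vec w^{\,*}S_n\vec\psi$ then gives $\vec w^{\,*}S_n\vec\psi^\ell=\overline{w(\ell)}$ for every solution $w$. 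Equivalently, $\vec\psi^\ell(E,\theta,0)=S_n^{-1}e_\ell$, where $e_\ell$ is the coordinate vector of site $\ell\in[-l(n),l(n)-1]$, and these vectors form a basis of $\C^{2l(n)}$.

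Your own rate classification gives $\phi^+_{\e'}=0$, $\phi^+_\e\in E^n_s$, $\phi^-_\e\in E^n_c\oplus E^n_u$, $\phi^+_{-\e}\in E^n_s\oplus E^n_c$ and $\phi^-_{-\e}\in E^n_u$. Hence $h^\ell_n=\phi^+_\e=\Pi_s\psi^\ell$ and $l^\ell_n=\phi^+_{-\e}-\phi^+_\e=\Pi_c\psi^\ell$, where $\Pi_s,\Pi_c$ are the projections of \eqref{sub10}. Thus $H_n(E,\theta,0)=\Pi_sS_n^{-1}$ and $L_n(E,\theta,0)=\Pi_cS_n^{-1}$, of ranks $l(n)-1$ and $2$.

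The $S_n$-orthogonality of the splitting gives $\Pi_s=F_n(G_n^*S_nF_n)^{-1}G_n^*S_n$ and $\Pi_c=R_n\Omega_n^{-1}R_n^*S_n$, with all frames evaluated at $m=0$. So this is exactly \eqref{repre} and \eqref{repre'} at $m=0$. Completed this way, your route reproduces the paper's closed forms by Green's identity instead of cofactor expansions. Those closed forms are used again in Lemma~\ref{eigen0}, in part (3) and in Remark~\ref{bimsa13}.
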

\begin{pf}
By \eqref{sub10}, for $n$ sufficiently large and $E\in I_{\delta_0}(E_0)$, there exists an analytic  invariant decomposition of $\C^{2l(n)}$,
$$
\C^{2l(n)}=E^n_s(\theta)\oplus E^n_c(\theta)\oplus E^n_u(\theta).
$$
By Proposition \ref{trivial}, there are linearly independent vectors $\{f^n_j(E,\theta)\}_{j=1}^{l(n)-1}\in E_s^n(\theta)$, $\{v_j^n(E,\theta)\}_{j=1}^{2}\in E^n_c(\theta)$ and $\{g_j^n(E,\theta)\}_{j=1}^{l(n)-1}\in E_u^n(\theta)$ depending analytically on both $E$ and $\theta$. Moreover, by Theorem \ref{1general}, we have
\begin{equation}\label{zx2}
\lim\limits_{n\rightarrow\infty} \gamma_1^n(E)=2\pi \e_1(E),
\end{equation}
where $\gamma_1^n(E)$ is the minimal non-negative Lyapunov exponent of $(\alpha,L_{E,v_n})$.

Let $\{f_j^n(E,\theta,\ell)\}_{j=1}^{l(n)-1}$, $\{g_j^n(E,\theta,\ell)\}_{j=1}^{l(n)-1}$, $\{v_j^n(E,\theta,\ell)\}_{j=1}^{2}$ be  $2l(n)$ linearly independent solutions of $L_{v_n,\alpha,\theta}u=Eu$ with the above corresponding initial datum. Then together with \eqref{zx2}, for any $\delta>0$ sufficiently small, we have
\begin{equation}\label{sin1}
\limsup\limits_{m\rightarrow \infty}\frac{1}{2m}\ln\sum\limits_{\ell=-l(n)}^{l(n)-1} |f_j^n(E,\theta,m+\ell)|^2<-2\pi \e_1(E)-\delta,\ \  1 \leq j\leq l(n)-1,
\end{equation}
\begin{equation}\label{sin2}
\limsup\limits_{m\rightarrow \infty}\frac{1}{2m}\ln\sum\limits_{\ell=-l(n)}^{l(n)-1} |g_j^n(E,\theta,-m+\ell)|^2<-2\pi \e_1(E)-\delta,\ \  1 \leq j\leq l(n)-1,
\end{equation}
\begin{equation}\label{sin3}
\limsup\limits_{m\rightarrow \infty}\frac{1}{2|m|}\ln\sum\limits_{\ell=-l(n)}^{l(n)-1} |v_j^n(E,\theta,m+\ell)|^2\leq 2\pi\e_1(E)+\delta/10,\ \ 1\leq j\leq 2.
\end{equation}
Note that $\{e^{ \pm2\pi \ell\e}f_j^n(E,\theta,\ell)\}_{j=1}^{l(n)-1}$, $\{e^{ \pm2\pi \ell\e}g_j^n(E,\theta,\ell)\}_{j=1}^{l(n)-1}$, $\{e^{\pm2\pi \ell\e}v_j^n(E,\theta,\ell)\}_{j=1}^{2}$ are $2l(n)$ linearly  independent solutions of $L_{v_n(\cdot\pm i\e),\alpha,\theta}u=Eu$. For $\e_1(E)+\delta/8\pi<\e<\e_1(E)+\delta/4\pi$, we have
\begin{align}\label{so1}
e^{\pm2\pi \ell\e}f_j^n(E,\theta,\ell)\in\ell^2(\Z^+), \ \ 1\leq j\leq l(n)-1,
\end{align}
\begin{align}\label{so2}
e^{\pm2\pi \ell\e}g_j^n(E,\theta,\ell)\in\ell^2(\Z^-), \ \ 1\leq j\leq l(n)-1,
\end{align}
\begin{align}\label{so3}
e^{\pm2\pi \ell\e}v_j^n(E,\theta,\ell) \in\ell^2(\Z^\mp),\ \ 1\leq j\leq 2.
\end{align}
For any $\e'>\e_n(E)$, we have
\begin{align}\label{so3'}
e^{2\pi \ell\e'}f_j^n(E,\theta,\ell), \ \ e^{2\pi \ell\e'}g_j^n(E,\theta,\ell), \ \ e^{2\pi \ell\e'}v_j^n(E,\theta,\ell) \in\ell^2(\Z^-).
\end{align}

By \eqref{final3ten}, \eqref{so1}-\eqref{so3'} and Lemma \ref{basic}, we have
\begin{align}\label{g44}
&g_{v_n(\cdot+ i\e)}(E,\theta,m)=\langle \delta_m, (L_{v_n(\cdot+ i\e),\alpha,\theta}-E)^{-1}\delta_\ell\rangle\\ \nonumber
=&\begin{cases}
\frac{e^{2\pi (m-\ell)\e}\left(\sum\limits_{j=1}^{l(n)-1}f^n_j(E,\theta,m)\Phi^n_{1,j}(E,\theta,\ell)\right)}{\hat{v}_{l(n)}\det{\Phi^n(E,\theta,\ell)}} &\text{$m\geq \ell+1$}\\
-\frac{e^{2\pi (m-\ell)\e}\left(\sum\limits_{j=1}^{2}v_j^n(E,\theta,m)\Phi^n_{1,l(n)-1+j}(E,\theta,\ell)+\sum\limits_{j=1}^{l(n)-1}g^n_j(E,\theta,m)\Phi^n_{1,l(n)+1+j}(E,\theta,\ell)\right)}{\hat{v}_{l(n)}\det{\Phi^n(E,\theta,\ell)}} &\text{$m\leq \ell$}
\end{cases},
\end{align}%
\begin{align}\label{g44ge}
&g_{v_n(\cdot- i\e)}(E,\theta,m)=\langle \delta_m, (L_{v_n(\cdot- i\e),\alpha,\theta}-E)^{-1}\delta_\ell\rangle\\ \nonumber
=&\begin{cases}
\frac{e^{-2\pi (m-\ell)\e}\left(\sum\limits_{j=1}^{l(n)-1}f^n_j(E,\theta,m)\Phi^n_{1,j}(E,\theta,\ell)+\sum\limits_{j=1}^{2}v_j^n(E,\theta,m)\Phi^n_{1,l(n)-1+j}(E,\theta,\ell)\right)}{\hat{v}_{l(n)}\det{\Phi^n(E,\theta,\ell)}} &\text{$m\geq \ell+1$}\\
-\frac{e^{-2\pi (m-\ell)\e}\left(\sum\limits_{j=1}^{l(n)-1}g^n_j(E,\theta,m)\Phi^n_{1,l(n)+1+j}(E,\theta,\ell)\right)}{\hat{v}_{l(n)}\det{\Phi^n(E,\theta,\ell)}} &\text{$m\leq \ell$}
\end{cases},
\end{align}

{\begin{align}\label{g44'}
&g_{v_n(\cdot+ i\e')}(E,\theta,m)=\langle \delta_m, (L_{v_n(\cdot+ i\e'),\alpha,\theta}-E)^{-1}\delta_\ell\rangle\\ \nonumber
=&\begin{cases}
0 &\text{$m\geq \ell+1$}\\
\frac{\sum\limits_{j=1}^{l(n)-1}\left(f^n_j(E,\theta,m)\Phi^n_{1,j}(E,\theta,\ell)+g^n_j(E,\theta,m)\Phi^n_{1,l(n)+1+j}(E,\theta,\ell)\right)+\sum\limits_{j=1}^{2}v_j^n(E,\theta,m)\Phi^n_{1,l(n)-1+j}(E,\theta,\ell)}{-e^{-2\pi (m-\ell)\e'}\hat{v}_{l(n)}\det{\Phi^n(E,\theta,\ell)}} &\text{$m\leq \ell$}
\end{cases},
\end{align}}  
where
{\tiny $$
\Phi^n(E,\theta,\ell)=\begin{pmatrix}f^n_1(E,\theta,\ell+l(n))&\cdots&v_1^n(E,\theta,\ell+l(n))&v_{2}^n(E,\theta,\ell+l(n))&\cdots &g_{l(n)-1}^n(E,\theta,\ell+l(n))\\ f^n_1(E,\theta,\ell+l(n)-1)&\cdots&v_1^n(E,\theta,\ell+l(n)-1)&v_{2}^n(E,\theta,\ell+l(n)-1)&\cdots &g_{l(n)-1}^n(E,\theta,\ell+l(n)-1)\\ \vdots& &\vdots &\vdots &&\vdots\\
f^n_1(E,\theta,\ell-l(n)+1)&\cdots&v_1^n(E,\theta,\ell-l(n)+1)&v_{2}^n(E,\theta,\ell-l(n)+1)&\cdots &g_{l(n)-1}^n(E,\theta,\ell-l(n)+1)\end{pmatrix}.
$$}

Let 
$$
R_n(E,\theta,m)=\begin{pmatrix}
v_1^n(E,\theta,m+l(n)-1)&v_2^n(E,\theta,m+l(n)-1)\\
v_1^n(E,\theta,m+l(n)-2)&v_2^n(E,\theta,m+l(n)-2)\\
\vdots&\vdots\\
v_1^n(E,\theta,m-l(n))&v_2^n(E,\theta,m-l(n))
\end{pmatrix},
$$
$$
F_n(E,\theta,m)=\begin{pmatrix}
f_1^n(E,\theta,m+l(n)-1)&f_2^n(E,\theta,m+l(n)-1)&\cdots&f_{l(n)-1}^n(E,\theta,m+l(n)-1)\\
f_1^n(E,\theta,m+l(n)-2)&f_2^n(E,\theta,m+l(n)-2)&\cdots& f_{l(n)-1}^n(E,\theta,m+l(n)-2)\\
\vdots&\vdots&&\vdots\\
f_1^n(E,\theta,m-l(n))&f_2^n(E,\theta,m-l(n))&\cdots&f_{l(n)-1}^n(E,\theta,m-l(n))
\end{pmatrix},
$$
$$
G_n(E,\theta,m)=\begin{pmatrix}
g_1^n(E,\theta, m+l(n)-1)&g_2^n(E,\theta, m+l(n)-1)&\cdots&g_{l(n)-1}^n(E,\theta, m+l(n)-1)\\
g_1^n(E,\theta, m+l(n)-2)&g_2^n(E,\theta, m+l(n)-2)&\cdots& g_{l(n)-1}^n(E,\theta, m+l(n)-2)\\
\vdots&\vdots&&\vdots\\
g_1^n(E,\theta,m-l(n))&g_2^n(E,\theta, m-l(n))&\cdots&g_{l(n)-1}^n(E,\theta, m-l(n))
\end{pmatrix}.
$$
By symplectic invariance and  \eqref{sin1}-\eqref{sin3}, we have
\begin{align*}
&F_n(E,\theta,m)^*S_nF_n(E,\theta,m)=F_n(E,\theta,0)^*S_nR_n(E,\theta,0)\\=&G_n(E,\theta,m)^*S_nG_n(E,\theta,m)=G_n(E,\theta,0)^*S_nR_n(E,\theta,0)=0.
\end{align*}
It follows that
\begin{align}\label{sin4}
\nonumber &\Phi^n(E,\theta,\ell)^*S_n\Phi^n(E,\theta,\ell)=\Phi^n(E,\theta,0)^*S_n\Phi^n(E,\theta,0)\\
=&\begin{pmatrix} &&A_n(E,\theta)\\ & \Omega_n(E,\theta)&\\  -A_n(E,\theta)^*&& \end{pmatrix}
\end{align}
where 
$$
\Omega_n(E,\theta)=R_n(E,\theta,0)^*S_nR_n(E,\theta,0),\ \ A_n(E,\theta)=F_n(E,\theta,0)^*S_nG_n(E,\theta,0).
$$

Let 
$$
F_n(E,\theta,m)=\begin{pmatrix}
f_1^n(E,\theta,m+l(n)-1)&f_2^n(E,\theta,m+l(n)-1)&\cdots&f_{l(n)-1}^n(\theta,m+l(n)-1)\\
f_1^n(E,\theta,m+l(n)-2)&f_2^n(E,\theta,m+l(n)-2)&\cdots& f_{l(n)-1}^n(E,\theta,m+l(n)-2)\\
\vdots&\vdots&&\vdots\\
f_1^n(E,\theta,m-l(n))&f_2^n(E,\theta,m-l(n))&\cdots&f_{l(n)-1}^n(E,\theta,m-l(n)),
\end{pmatrix},
$$
$$
G_n(E,\theta,m)=\begin{pmatrix}
g_1^n(E,\theta, m+l(n)-1)&g_2^n(E,\theta, m+l(n)-1)&\cdots&g_{l(n)-1}^n(E,\theta, m+l(n)-1)\\
g_1^n(E,\theta, m+l(n)-2)&g_2^n(E,\theta, m+l(n)-2)&\cdots& g_{l(n)-1}^n(E,\theta, m+l(n)-2)\\
\vdots&\vdots&&\vdots\\
g_1^n(E,\theta,m-l(n))&g_2^n(E,\theta, m-l(n))&\cdots&g_{l(n)-1}^n(E,\theta, m-l(n))
\end{pmatrix}.
$$

By \eqref{sin4} and Lemma \ref{gjygjy111}, we have that
\begin{equation}\label{g42}
\begin{pmatrix}
\Phi_{1,1}^n(E,\theta,\ell)\\
\Phi_{1,2}^n(E,\theta,\ell)\\
\vdots\\
\Phi_{1,l(n)-1}^n(E,\theta,\ell)
\end{pmatrix}=-\hat{v}_{l(n)}\det{\Phi^n(E,\theta,\ell)}A^*_n(E,\theta)^{-1}\begin{pmatrix}
\overline{g_1^n(E,\theta, \ell)}\\
\overline{g_2^n(E,\theta, \ell)}\\
\vdots\\
\overline{g_{l(n)-1}^n(E,\theta,\ell)}
\end{pmatrix},
\end{equation}
\begin{equation}\label{g42'}
\begin{pmatrix}
\Phi_{1,l(n)}^n(E,\theta,\ell)\\
\Phi_{1,l(n)+1}^n(E,\theta,\ell)
\end{pmatrix}=\hat{v}_{l(n)}\det{\Phi^n(E,\theta,\ell)}\Omega_n(E,\theta)^{-1}\begin{pmatrix}
\overline{v_1^n(E,\theta, \ell)}\\
\overline{v_2^n(E,\theta, \ell)}
\end{pmatrix}.
\end{equation}

By \eqref{10.7}, \eqref{10.711}, \eqref{ge1}, \eqref{ge2}, \eqref{ge211} and \eqref{g44}-\eqref{g42'}, we have
\begin{equation}\label{repre}
H_n(E,\theta,m)=F_n(E,\theta,m)\left(G_n(E,\theta,0)^*S_nF_n(E,\theta,0)\right)^{-1}G_n(E,\theta,0)^*,
\end{equation}
\begin{equation}\label{repre'}
L_n(E,\theta,m)=R_n(E,\theta,m)\left(R_n(E,\theta,0)^*S_nR_n(E,\theta,0)\right)^{-1}R_n(E,\theta,0)^*.
\end{equation}
It follows that $\vec{h}^\ell_n(E,\theta,0)\in E_s^n(\theta)$ and ${\rm Rank}(H_n(E,\theta,0))=l(n)-1$, $\vec{l}^\ell_n(E,\theta)\in E_c^n(\theta)$ and ${\rm Rank}(L_n(E,\theta,0))=2$.   
\end{pf}
\noindent {\bf Proof of (1):} Assume now $\omega(E)=1$, We need the following lemma
\begin{Lemma}[\cite{gj}]\label{eigen0}
For $L_{n}(E,\theta,0)$  as above, for any $\theta\in\T$ and $E\in I_{\delta_0}(E_0)$, $iL_n(E,\theta,0)$ has $1$ positive eigenvalues  $\mu^{n}(E,\theta)$ and $1$ negative eigenvalues $\kappa^n(E,\theta)$ with 
$$
|\mu^{n}(E,\theta)|, |\kappa^n(E,\theta)|\geq c_0>0
$$
where one may take $c_0=\frac{1}{2(1+|v|_0)}$ for all sufficiently
large $n$.
   
\end{Lemma}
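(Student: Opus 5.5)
The plan is to treat $iL_n(E,\theta,0)$ as a rank-two Hermitian matrix. I will read off its inertia from the signature of the form $iS_n$ restricted to the center, and get the lower bound $c_0$ from a projection identity rather than from any estimate on the frame $R_n$. By \eqref{repre'},
$$
L_n(E,\theta,0)=R\,\Omega^{-1}R^*,\qquad R=R_n(E,\theta,0),\quad \Omega=R^*S_nR .
$$
Since $S_n^*=-S_n$, the matrix $\Omega$ is skew-Hermitian and $i\Omega$ is Hermitian. Hence
$$
iL_n(E,\theta,0)=-R\,(i\Omega)^{-1}R^* ,
$$
which is Hermitian of rank two; $\Omega$ is invertible because, by Lemma \ref{add1}, $\operatorname{Rank}L_n(E,\theta,0)=2$. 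By Sylvester's law of inertia, $iL_n$ has exactly the inertia of $-(i\Omega)^{-1}$, that is, the opposite of the inertia of the Hermitian form $x\mapsto x^*(iS_n)x$ restricted to $E_c^n(\theta)$.

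\emph{Step 1: signature.} The form $iS_n$ is nondegenerate on $\C^{2l(n)}$ with signature $(l(n),l(n))$, as $S_n=\begin{pmatrix}0&-C_n^*\\ C_n&0\end{pmatrix}$ with $C_n$ invertible. Symplectic invariance together with the growth rates \eqref{sin1}--\eqref{sin3} give that $E_s^n$ and $E_u^n$ are each $S_n$-isotropic. They also give that $E_c^n$ is $S_n$-orthogonal to $E_s^n\oplus E_u^n$; this is exactly what produces the block structure \eqref{sin4}. The form on $E_s^n\oplus E_u^n$ has the off-diagonal structure $\begin{pmatrix}0&A_n\\ -A_n^*&0\end{pmatrix}$ with $A_n$ invertible, so its signature is $(l(n)-1,l(n)-1)$. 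Subtracting, $iS_n|_{E_c^n}$ has signature $(1,1)$. Consequently $iL_n(E,\theta,0)$ has exactly one positive eigenvalue $\mu^n$ and one negative eigenvalue $\kappa^n$.

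\emph{Step 2: quantitative bound.} From $L_n=R\Omega^{-1}R^*$ one checks directly that
$$
L_nS_nL_n=L_n ,
$$
so $P=L_nS_n$ is the identity on $\mathrm{Ran}\,L_n$. Let $x$ be a unit eigenvector, $iL_nx=\mu x$ with $\mu\neq0$. Then $x\in\mathrm{Ran}\,L_n$, and
$$
(iL_n)(-iS_nx)=L_nS_nx=x .
$$
Decompose $-iS_nx=\sum_j c_jx_j+w$, where $x_1,x_2$ are orthonormal eigenvectors spanning $\mathrm{Ran}\,L_n$ and $w\in\ker L_n=(\mathrm{Ran}\,L_n)^\perp$. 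Applying $iL_n$ gives $\sum_j\mu_jc_jx_j=x$, so the coefficient of $x$ equals $1/\mu$. That coefficient is $x^*(-iS_n)x$, hence
$$
\frac1{|\mu|}=|x^*S_nx|\le\|S_n\|=\|C_n\|.
$$
(This also gives the sign: $\mu$ and $x^*(-iS_n)x$ have the same sign, consistent with Step 1.) Finally, $C_n$ is the upper-triangular Toeplitz matrix with entries $\hat v_{l(n)},\dots,\hat v_1$. I would bound $\|C_n\|$ by $2(1+|v|_0)$ for all large $n$, uniformly in $E$ and $\theta$, giving $|\mu^n|,|\kappa^n|\ge c_0$. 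If a sup-norm bound on $v$ does not directly control the Toeplitz symbol norm, I would replace $|v|_0$ by $\sum_k|\hat v_k|<\infty$, which is finite by analyticity; this changes only the explicit value of $c_0$, not the uniformity in $n$.

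The main obstacle is Step 1, namely justifying rigorously that the isotropy and $S_n$-orthogonality relations hold for the actual analytic bases. For the $S_n$-orthogonality of the center to $E_s^n\oplus E_u^n$, I would argue that $x^*S_ny$ is invariant along the orbit. If $x\in E_s^n$ and $y\in E_c^n$, the product of their norms along the forward orbit decays, because the rate gap in \eqref{sin1}, \eqref{sin3} exceeds the center growth. An invariant quantity that tends to zero must vanish.
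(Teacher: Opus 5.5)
Your argument is correct and is essentially the paper's. The $(1,1)$ signature of $i\Omega$ is read off from the block form \eqref{sin4} exactly as in the appendix; that is also where invertibility of $\Omega$ should come from, since the rank statement of Lemma \ref{add1} already presupposes \eqref{repre'}. Your identity $L_nS_nL_n=L_n$ gives the same formula $\mu=-1/(x^*iS_nx)$ for unit eigenvectors that the paper obtains by simultaneous congruence of $H^*H$ and $i\Omega$.

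The explicit constant needs no fallback to $\sum_k|\hat v_k|$. Because $\widehat{(v_n)}_m=0$ for $m>l(n)$, the matrix $C_n=\bigl(\widehat{(v_n)}_{l(n)+i-j}\bigr)_{i,j=1}^{l(n)}$ is a finite section of a shift composed with the Laurent (multiplication) matrix of $v_n$. Hence $\|S_n\|=\|C_n\|\le|v_n|_0\le|v|_0+C_se^{-2\pi h_1 n}$, which gives $c_0=\frac{1}{2(1+|v|_0)}$ (indeed $\frac{1}{1+|v|_0}$) for all large $n$.
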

\begin{pf}
Apply Lemma~\ref{lem:center-kernel-signature} to
\[
H=R_n(E,\theta,0),
\qquad
\Omega=R_n(E,\theta,0)^*S_nR_n(E,\theta,0).
\]
The verification of the signature hypothesis and the uniform lower
bound is given in Appendix~\ref{app:center-facts}.
\end{pf}
For any $m\leq l(n)$, we  denote 
$$
R^{m}_n(E,\theta)=\begin{pmatrix}
v_1^n(E,\theta,m-1)&v_2^n(E,\theta,m-1)\\
v_1^n(E,\theta,m-2)&v_2^n(E,\theta,m-2)\\
\vdots&\vdots\\
v_1^n(E,\theta,-m)&v_2^n(E,\theta,-m)
\end{pmatrix},$$
the matrix consisting of the middle $2m$ rows of $R_n(E,\theta).$ Let
\begin{equation}\label{irv6}
L_n^{m}(E,\theta):=R_n^{m}(E,\theta)(R_n(E,\theta,0)^*S_nR_n(E,\theta,0))^{-1}R_{n}^{m}(E,\theta,0)^*,
\end{equation}
\begin{equation}\label{defmn}
M_n^{m}(E,\theta):=R_{n}(E,\theta,0)(R_n(E,\theta,0)^*S_nR_n(E,\theta,0))^{-1}R_{n}^{m}(E,\theta)^*.
\end{equation} 
It is easily seen that $M^m_n(E,\theta)$ is a $2l(n)$ by $2m$ matrix consisting of $2m$
middle columns of $L_n(E,\theta,0),$ so 
$$
M^m_n(E,\theta)=\begin{pmatrix}
  \vec{l}_n^{m-1}(E,\theta,0)&\cdots &\vec{l}_{n}^{- m}(E,\theta,0)\end{pmatrix},
$$ 
and $L_{n}^m(E,,\theta)$ is a $2m$ by $2m$ matrix consisting $2m$ middle rows
of $M_n^m(E,\theta)$. In particular, $L_n^m(E,\theta)$ is a submatrix of
$M_n^m(E,\theta),$ and $M_n^m(E,\theta)$ is a submatrix of
$L_n(E,\theta,0)$. Let $M_n^m(E,\theta,j),j=-l(n),\ldots,l(n)-1,$ denote the $j$th
row of $M_n^m(E,\theta)$.
\begin{Lemma}\label{add2aaa}
Assume $\omega(E)=1$, there is  $n_0=l(n_0)$ sufficiently large, such that for $n>n_0$, we have that
\begin{itemize}
\item $|L_{n}^{n_0}-L_{n+1}^{n_0}|_{r,h},
  |M_{n}^{n_0}(j)-M_{n+1}^{n_0}(j)|_{r,h} \leq
  Ce^{-\frac{\delta}{4}n}, j=-l(n),\ldots,l(n)-1$;
{\item there are vectors $e_j\in C^\omega( I_{\delta_0}(E_0)\times \T,\C^{2n_0})$, $j=1,2$, such that
\begin{equation}\label{gn1'}
{\rm Rank}\left(M_{n}^{n_0}(E,\theta)(e_1(E,\theta),e_2(E,\theta)\right)=2,\ \ M_{n}^{n_0}(E,\theta)e_j(E,\theta)\in E_c^n(\theta).
\end{equation}}
\end{itemize}
\end{Lemma}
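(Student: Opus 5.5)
The plan is to read both claims off the column representation $L_n(E,\theta,0)=(\vec l_n^{\,l(n)-1},\dots,\vec l_n^{\,-l(n)})$ established in \eqref{repre'}, together with the estimate \eqref{new200}. Each entry of $L_n^{n_0}$ and of a row $M_n^{n_0}(j)$ is one of the numbers $l_n^\ell(E,\theta,m)$ with $-n_0\le \ell\le n_0-1$, and $m$ lies in $[-n_0,n_0-1]$ for $L_n^{n_0}$, or equals $j+$shift for $M_n^{n_0}(j)$. By \eqref{10.711}, $l_n^\ell(m)$ is a difference of weighted resolvent kernels of the deformed duals $L_{v_n(\cdot\pm i\e),\alpha,\theta}$. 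Via \eqref{final3ten} and Aubry duality these are Fourier-type transforms of the Schr\"odinger Green's functions for the potentials $v_n(\cdot\pm i\e)$. Since $\omega(E)=1$ forces $\e_1(E)=0$, we may fix the deformation $\e=\e_0$ small, where the deformed Schr\"odinger cocycles are uniformly hyperbolic uniformly in $E\in I_{\delta_0}(E_0)$. The resolvent-difference bound \eqref{new15'} with $v'=v_n,v_{n+1}$ and \eqref{exact-degree-convergence} then give
\[
|l_n^\ell(m)-l_{n+1}^\ell(m)|_{r,h}\le C e^{4\pi|m-\ell|\e_0}e^{-2\pi(h_1-\e_0)n}.
\]
For the $L^{n_0}$ block, $|m-\ell|\le 2n_0$ is fixed, so we get $Ce^{-cn}$ directly. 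For the rows $M_n^{n_0}(j)$ with $|j|\le l(n)$, the factor $e^{4\pi|m-\ell|\e_0}\le e^{8\pi\e_0 n}$ is absorbed because $6\e_0<h_1$. Reducing $\e_0$ if needed, this yields the first bullet with rate $e^{-\frac{\delta}{4}n}$.

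For the second bullet, the plan is to produce the vectors $e_j$ from the limit object. Every column of $M_n^{n_0}$ lies in $E_c^n(\theta)$ by Lemma~\ref{add1}(2), so any combination $M_n^{n_0}e_j$ lies in the center automatically. The only issue is rank two, uniformly in $n$. Note $M_n^{n_0}=R_n(0)\,\Omega_n^{-1}R_n^{n_0\,*}$, so its rank is $2$ exactly when $R_n^{n_0}$, the middle $2n_0$ rows of the center frame, has rank $2$. That happens once $n_0$ is large enough that the middle window captures a definite part of the center solutions. I would get this from Lemma~\ref{eigen0}: $iL_n(E,\theta,0)$ has eigenvalues of size $\ge c_0$ of both signs. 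Combined with the exponential decay of $l_n^\ell(m)$ in $|m-\ell|$ away from the diagonal ($\e_1=0$, so the decay rate is $\e_0$ after reweighting), truncating to the $2n_0\times2n_0$ block $L_n^{n_0}$ changes the Hermitian part only by $O(e^{-cn_0})$. By Weyl's inequality, $iL_n^{n_0}$ then still has one eigenvalue $\ge c_0/2$ and one $\le -c_0/2$. Hence $\mathrm{rank}\,L_n^{n_0}\ge2$, and therefore $\mathrm{rank}\,M_n^{n_0}=2$.

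Choose $n_0$ accordingly and pass to the limit $L_\infty^{n_0}=\lim_n L_n^{n_0}$, which exists and is analytic by the first bullet. Pick analytic $e_1,e_2\in C^\omega(I_{\delta_0}(E_0)\times\T,\C^{2n_0})$ with $\det\big((e_1,e_2)^*L_\infty^{n_0}(e_1,e_2)\big)$ bounded away from zero; for instance, take an analytic frame of the rank-two range of $L_\infty^{n_0\,*}$, which is trivializable by Proposition~\ref{trivial}. By the convergence, the same nondegeneracy holds for $L_n^{n_0}$ when $n>n_0$ is large. Since $L_n^{n_0}$ is a row-submatrix of $M_n^{n_0}$, this forces $\mathrm{rank}\,M_n^{n_0}(e_1,e_2)=2$.

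I expect the main obstacle to be the uniform off-diagonal decay of $l_n^\ell(m)$ that justifies the Weyl truncation in the middle block. This needs the uniform-in-$n$ resolvent bound \eqref{new15} at the fixed deformation $\e_0$, so it is essential never to invoke the large deformation $\e'$, whose bounds are not uniform in $n$.
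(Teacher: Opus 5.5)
Your first bullet is essentially the paper's route: the paper simply cites \eqref{new200}, which comes from \eqref{new15'} and \eqref{exact-degree-convergence}. One bookkeeping caveat, which the paper shares: with the norm $|v'-v|_{h_1-\e_0}$ as printed, \eqref{exact-degree-convergence} yields only $e^{-2\pi\e_0 n}$, and the weights in \eqref{10.711} contribute a further $e^{2\pi|m-\ell|\e_0}$. Your rate therefore requires measuring $v_{n+1}-v_n$ on a strip of width $O(\e_0)$.

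The genuine gap is the rank-two step in the second bullet. You assert that $l_n^\ell(m)$ decays exponentially in $|m-\ell|$, so that cutting $L_n(E,\theta,0)$ down to its middle $2n_0\times2n_0$ block is an $O(e^{-cn_0})$ perturbation. There is no such decay.

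By Lemma~\ref{add1}(2) and \eqref{repre'}, each column $\vec l_n^{\,\ell}$ is a \emph{center} solution. Comparing \eqref{g44} with \eqref{g44ge}, the reweighting in \eqref{10.711} cancels the stable and unstable parts of the two resolvent kernels. What remains is the center part, on both sides of $\ell$, with exponent $\gamma_1^n(E)\to2\pi\e_1(E)=0$. The only available bound is part~(2) of Theorem~\ref{theorem-main1general}, $|l_{ij}|\le Ce^{2\pi\delta|i-j|}$, which is a subexponential growth bound.

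Even genuine off-diagonal decay would not make the truncation small. It says nothing about the discarded entries with $i,j$ both outside the block and close to the diagonal. Cauchy interlacing only shows that $iL_n^{n_0}$ has at most one positive and at most one negative eigenvalue, with no lower bound on them. So your argument gives no uniform lower bound on the nonzero eigenvalues of $iL_n^{n_0}$, and hence no proof that $L_\infty^{n_0}$ has rank two.

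The missing idea is the normalization $n_0=l(n_0)$ in the statement, which you never use. At $n=n_0$ the middle block $L_{n_0}^{n_0}$ \emph{is} the full matrix $L_{n_0}(E,\theta,0)$, so Lemma~\ref{eigen0} applies to it directly. Telescoping the first-bullet estimates from $n_0$ onward gives
\[
\|L_n^{n_0}-L_{n_0}(E,\theta,0)\|\le Cn_0e^{-cn_0}<c_0/2
\]
for every $n\ge n_0$, and likewise for the limit $L_\infty^{n_0}$. Weyl's inequality then gives $\mathrm{rank}\,L_\infty^{n_0}\ge2$, and the upper bound comes from $\mathrm{rank}\,L_n\le2$. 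This is exactly the paper's argument.

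From that point on, your construction goes through as written:
\begin{itemize}
\item you take an analytic frame of the rank-two range, while the paper takes a complement of $\mathrm{Ker}\,L^{n_0}_\infty$; both are trivialized by Proposition~\ref{trivial};
\item the rank passes from the row-submatrix $L_n^{n_0}$ to $M_n^{n_0}$;
\item center membership follows from Lemma~\ref{add1}.
\end{itemize}
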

\begin{pf}
Note that by \eqref{new200} and the definition of $L_n^{m}$ and $M_n^m$, for $n$ sufficiently large, we have that
\begin{equation}\label{ext1}
|L^{m}_n-L^{m}_{n+1}|_{r,h}, \ \ |M_{n}^{m}(j)-M_{n+1}^{m}(j)|_{r,h} \leq
  Ce^{-\frac{\delta'}{4}n}, j=-l(n),\ldots,l(n)-1.
\end{equation}

Let $L^{n_0}(E,\theta)=\lim_{n\rightarrow \infty}L_n^{n_0}(E,\theta).$ We have
that $L^{n_0}(E,\theta)\in C^\omega( I_{\delta_0}(E_0)\times \T,M_{2n_0}(\C))$. Note
that by Lemma \ref{eigen0}, \eqref{ext1}, first choose $n_0=l(n_0)$ large enough, we have $|L^{n_0}(E,\theta)-L_{n_0}(E,\theta)|\leq Cn_0e^{-cn_0}<c_0/2$, thus by eigenvalue perturbation theory,
$$
{\rm Rank}(L^{n_0}(E,\theta))=2
$$ 
for $\theta\in \T$ and $E\in I_{\delta_0}(E_0)$.

Hence $L^{n_0}(E,\theta)$  is a constant rank holomorphic matrix on $I_{\delta_0}(E_0)\times \T$, It follows that ${\rm Ker} L^{n_0}$ is a holomorphic vector bundle. By Proposition \ref{trivial}, ${\rm Ker}L^{n_0}$ and $E/{\rm Ker}L^{n_0}$ where $E=(I_{\delta_0}(E_0)\times \T)\times \C^{2n_0}$ is the tangent bundle of $\C^{2n_0}$ are trivial. Thus there are globally defined linearly independent holomorphic functions $v_j(E,\theta)\in {\rm Ker}L^{n_0} (1\leq j\leq 2n_0-2)$ and $e_j(E,\theta) (1\leq j\leq 2)$ such that they form a basis of $C^{2n_0}$ for each $(E,\theta)\in I_{\delta_0}(E_0)\times \T$.

It follows that ${\rm Rank}(L^{n_0}(E,\theta)(e_1(E,\theta),e_2(E,\theta)))=2$ on $I_{\delta_0}(E_0)\times \T$. By \eqref{ext1}
$$
{\rm Rank}(L_n^{n_0}(E,\theta)(e_1(E,\theta),e_2(E,\theta)))=2
$$
for $n$ sufficiently large. Thus
 $$
 {\rm Rank}(M_n^{n_0}(E,\theta)(e_1(E,\theta),e_2(E,\theta)))\geq  {\rm Rank}(L_n^{n_0}(E,\theta)(e_1(E,\theta),e_2(E,\theta)))=2.
 $$
By Lemma \ref{add1} columns of $L_n(E,\theta)$ belong to
$E_c^n(\theta)$, by the definition of $M_n^{n_0}(E,\theta)$, so do all
its columns. Therefore we have that $M_{n}^{n_0}(E,\theta)e_j(E,\theta)\in E_c^n(\theta)$.

\end{pf}

For $\omega(E)=1$, we let
\begin{align}\label{so01}
\nonumber &O_n'(E,\theta)=R_{n}(E,\theta,0)(R_{n}(E,\theta,0)^*S_nR_{n}(E,\theta,0)^{-1}R_{n}^{n_0}(E,\theta)^*(e_1(E,\theta),e_2(E,\theta))\\
=&M_n^{n_0}(E,\theta)(e_1(E,\theta),e_2(E,\theta)).
\end{align}
By Lemma \ref{add2aaa}, the columns of $O_n'$ form a basis of $E_c^n(\theta)$. Moreover, by \eqref{ext1} and \eqref{so01},
\begin{equation}\label{final100}
|O'_n (j)-O'_{n+1}(j)|_{r,h} \leq Ce^{-\frac{\delta'}{20} n}, \ \  j=-l(n),\ldots,l(n)-1
\end{equation}
where $O'_{n+1}(j)$ is the $j$-th row of $O'_{n+1}$.

On the other hand, by  direct calculation we have
$$
O_n'(E,\theta)^*S_n O_n'(E,\theta)=-\begin{pmatrix}e_1(E,\theta)&e_2(E.\theta)\end{pmatrix}^*
L_n^{n_0}(E,\theta)\begin{pmatrix}e_1(E,\theta)&e_2(E,\theta)\end{pmatrix}$$
Note that by Lemma \ref{eigen0},  $iO_n'(E,\theta)^*S_n O_n'(E,\theta)$ has $1$ positive and $1$ negative eigenvalues for all $(E,\theta)\in I_{\delta_0}(E_0)\times \T$.

Let 
\begin{align}\label{lana16}
\nonumber \Omega'(E,\theta)&=-\begin{pmatrix}e_1(E,\theta)&e_2(E,\theta)\end{pmatrix}^*
L^{n_0}(E,\theta)\begin{pmatrix}e_1(E,\theta)&e_2(E,\theta)\end{pmatrix}\\
&=\lim_{n\rightarrow\infty}O_n'(E,\theta)^*S_n O_n'(E,\theta).
\end{align}
Again by Lemma \ref{eigen0} and eigenvalue perturbation theory,  $i\Omega'(E,\theta)$ has $1$ positive and $1$
negative eigenvalues for all $\theta\in\T$ and $E\in I_{\delta_0}(E_0)$. By the same argument as in Proposition \ref{newwe}, there is $P\in C^\omega(I_{\delta_0}(E_0)\times \T,GL(2,\C))$ such that
\begin{equation}\label{lana15}
P(E,\theta)^*\Omega'(E,\theta)P(E,\theta)=J.
\end{equation}

Let  
\begin{equation}\label{defon}
O_n(E,\theta)=O_n'(E,\theta)P(E,\theta).
\end{equation}
By \eqref{lana16} and \eqref{lana15},
\begin{equation}\label{lana18}
O_n(E,\theta)^*S_nO_n(E,\theta)\rightarrow J,
\end{equation}
Moreover,  by  \eqref{final100} and  \eqref{defon},  we have
\begin{equation}\label{lana17}
|O_n (j)-O_{n+1}(j)|_{r,h} \leq Ce^{-c n}, \ \  j=-l(n),\ldots,l(n)-1
\end{equation}
foe some $c,C>0$ where $O_n=\begin{pmatrix}O_n(l(n)-1)\\ O_n(l(n)-2)\\ \vdots\\ O_n(-l(n))\end{pmatrix}$.

By invariance of $E^n_c(\theta)$, there is $M_n(E,\theta)\in C^\omega(I_{\delta_0}(E_0)\times \T,GL(2,\C))$ such that
\begin{equation}\label{shift}
\begin{pmatrix}O_n(E,\theta,l(n))\\ O_n(E,\theta,l(n)-1)\\ \vdots\\ O_n(E,\theta,-l(n)+1)\end{pmatrix}=L_{E,v_n}(\theta)O_n(E,\theta)=O_n(E,\theta+\alpha)M_n(E,\theta).
\end{equation}
 By \eqref{lana17}, there exist  $M\in C^\omega(I_{\delta_0}(E_0)\times \T,GL(2,\C))$ such that
 \begin{equation}\label{lana19}
 |M_n- M|_{r,h}\leq Ce^{-cn}.
 \end{equation} 
By the same argument as in Proposition \ref{p1}, we have  for any $E\in I_{\delta_0}(E_0)$ and $\theta\in\T$,
$$
M_n(E,\theta)^*O_n(E,\theta+\alpha)^*S_nO_n(E,\theta+\alpha)M_n(E,\theta)=O_n(E,\theta)^*S_nO_n(E,\theta).
$$
By \eqref{lana18} and \eqref{lana19}, we have
$$
M(E,\theta)^*JM(E,\theta)=J.
$$
see also Lemma~\ref{lem:center-J-unitary}.

 For every finite-range
truncation, Theorem 4.1 gives
\[
\omega^{l(n)-1}(\alpha,L_{E,v_n})=0.
\]
By Lemma~\ref{lem:center-det-zero}, the determinant of the
corresponding center cocycle has zero winding. Since the center
cocycles converge analytically and remain invertible, zero winding
passes to the limiting center cocycle $M$. Lemma~
\ref{lem:projectively-real-factorization} therefore gives
\[
M(E,\theta)
=
e^{2\pi i\varphi(E,\theta)}{\bf C}(E,\theta),
\qquad
{\bf C}(E,\theta)\in SL(2,\mathbb R),
\]
with $\varphi$ and ${\bf C}$ analytic in $(E,\theta)$.

\noindent {\bf Proof of (2):}  
By the definition of $L_n(E,\theta,0)$, we have
$$
l_{ij}(E,\theta)=l_n^{j}(E,\theta,i).
$$
By \eqref{new15} for $\pm \e$, for any $\delta>0$, we have
$$
|l_{ij}(E,\theta)|=|l_n^j(E,\theta,i)|\leq C(v,\delta)e^{2\pi  |i-j|(\e_1(E)+\delta)}.
$$
Indeed, by the same argument as \eqref{new15}, for $z$ with $|z-E_0|<r$, we have
$$
|l_{ij}(z,\theta)|=|l_n^j(z,\theta,i)|\leq C(v,\delta)e^{2\pi  |i-j|(\max\{\e_1(z),\e_1(\bar{z})\}+\delta)}
$$
where $\e_1(z)$ is the first turning point of $L_\e^v(z)$.

By \eqref{defmn},  $M^{n_0}_n(E,\theta)$ is a $2l(n)$ by $2n_0$ matrix consisting of $2n_0$
middle columns of $L_n(E,\theta,0)$. If $\omega(E)=1$ (i.e., $\e_1(E)=0$), by  Cauchy estimate and the continuity of $\e_1(z)$, we have
\begin{equation*}
\|O_{n}(E,\theta)\|^2+\left\|\frac{\partial O_{n}(E,\theta)}{\partial E}\right\|^2\leq C(v,\delta)e^{4\pi\delta|l(n)|}.
\end{equation*}

\noindent {\bf Proof of (3):} Consider the singular value decomposition of $H_n(E,\theta,0)$, i.e., 
\begin{equation}\label{sin6}
H_n(E,\theta,0)=U_n(E,\theta,0)\Sigma_n(E,\theta,0)V_n(E,\theta,0)
\end{equation}
where $U_n(E,\theta,0) $ and $V_n(E,\theta,0)$ are unitary matrices and
$$
\Sigma_n(E,\theta,0)={\rm diag}\{\sigma_1(H_n(E,\theta,0)),\cdots, \sigma_{l(n)-1}(H_n(E,\theta,0)), 0\}.
$$
By \eqref{repre},  the first $l(n)-1$ columns of $U_n(E,\theta,0)$, denoted by $U_n^1(E,\theta,0)$ form an orthonormal basis of $E_s^n(\theta)$.

On the other hand, again by \eqref{repre},  the columns of $H_n(E,\theta,0)^*$ belong to $E_u^n(\theta)$, thus  the first $l(n)-1$ columns of $V_n(E,\theta,0)^*$, denoted by $V_n^1(E,\theta,0)^*$ form an orthonomal basis of $E_u^n(\theta)$.

Moreover, let
$$
\Sigma^+(E,\theta,0)={\rm diag}\{\sigma^{-1}_1(H_n(E,\theta,0)),\cdots, \sigma^{-1}_{l(n)-1}(H_n(E,\theta,0))\},
$$
by \eqref{repre} and \eqref{sin6}, we have
\begin{align*}
&V_n^1(E,\theta,0)^*S_nU_n^1(E,\theta,0)\\
=&\Sigma^+(E,\theta,0)U_n(E,\theta,0)^*H_n(E,\theta,0)S_n H_{n}(E,\theta,0)V_n(E,\theta,0)^*\Sigma^+(E,\theta,0)\\
=&\Sigma^+(E,\theta,0)U_n(E,\theta,0)^*H_{n}(E,\theta,0)V_n(E,\theta,0)^*\Sigma^+(E,\theta,0)=\Sigma^+(E,\theta,0).
\end{align*}
We let $u_n^i(E,\theta)$ be the $i$-th column of $U_n(E,\theta,0)$, $v_n^i(E,\theta)$ be the $i$-th column of $-V_n(E,\theta,0)^*$ and $\sigma_n^i(E,\theta)=\sigma^{-1}_i(H_n(E,\theta,0))$.

For any  $m\geq 0$, we have
$$
(L_{E,v_n})_m(\theta) U_n^1(E,\theta,0)=(L_{E,v_n})_m(\theta) H_n(E,\theta,0)V(E,\theta,0)^*\Sigma^+(E,\theta,0).
$$
Note that
\begin{align}\label{ge3}
\nonumber &(L_{E,v_n})_m(\theta) H_n(E,\theta,0)\\ \nonumber
=&(L_{E,v_n})_m(\theta)F_n(E,\theta,0)\left(G_n(E,\theta,0)^*S_nF_n(E,\theta,0)\right)^{-1}G_n(E,\theta,0)^*\\
=&F_n(E,\theta,m)\left(G_n(E,\theta,0)^*S_nF_n(E,\theta,0)\right)^{-1}G_n(E,\theta,0)^*=H_n(E,\theta,m).
\end{align}
For $m\geq 0$ and $-l(n)\leq \ell\leq l(n)-1$, by
Remark \ref{gz20}, we have  
\begin{align}\label{ggg1}
h^\ell_n(E,\theta,m)&=-e^{-2\pi (m-\ell)\e'}g_{v_n(\cdot+i\e')}(E,\theta,m)+e^{-2\pi (m-\ell)\e}g_{v_n(\cdot+i\e)}(E,\theta,m)\\
&=e^{-2\pi (m-\ell)\e}g_{v_n(\cdot+i\e)}(E,\theta,m).
\end{align}
Hence 
\begin{equation}\label{sin10}
\begin{pmatrix}u_{n}^i(E,\theta,m+l(n)-1)\\ u_{n}^i(E,\theta,m+l(n)-2)\\ \vdots\\
u_{n}^i(E,\theta,m-l(n))\end{pmatrix}=-\text{$i$-th  column of\ \ }H_n(E,\theta,m)V(E,\theta,0)^*\sigma_n^i(E,\theta).
\end{equation}

Choose $\e_1(E)<\e<\e_1(E)+\delta/2\pi$ within the first
slope-one segment. By \eqref{10.7} and \eqref{ggg1}, we have
\begin{align*}
|h_n^\ell(E,\theta,m)|&\leq Ce^{|m-\ell|\delta}e^{-2\pi\e_1(E)(m-\ell)}.
\end{align*}
Shrink the energy interval once so that its closure lies in the
Type I neighborhood.  For each fixed $\delta$, the constants are
uniform on this closure, using finitely many admissible heights.
By \eqref{ge1}, \eqref{ge2} and \eqref{sin10}, this gives, for $m\geq0$,   
\begin{equation}\label{new155}
|u_n^i(E,\theta,m)| \leq C(v,\delta)\sigma_n^i(E,\theta)e^{-2\pi(m-\ell)(\e_1(E)+\delta)}e^{|m-\ell|\delta}.
\end{equation}
the result for the unstable subbundle follows exactly the same way. 

For the remaining  part, note that $L_n(E,\theta,0)$ is skew-hermitian, thus  
\begin{equation}\label{aksaks}
L_n(E,\theta,0)=U'_n(E,\theta,0)^*\Sigma'_n(E,\theta,0)U'_n(E,\theta,0)
\end{equation}
where $U'_n(E,\theta,0) $ is unitary and
$$
\Sigma'_n(E,\theta,0)={\rm diag}\{-i\mu^n(E,\theta),-i\kappa^n(E,\theta), 0, \cdots, 0\}.
$$
where $\mu^n(E,\theta)>0$ and $\kappa^n(E,\theta)<0$ are the non-zero eiegenvalues of $iL_n(E,\theta,0)$ in Lemma \ref{eigen0}.
By \eqref{repre'},  the first two columns of $U'_n(E,\theta,0)^*$, denoted by $W_n^1(E,\theta,0)$ form an orthonormal basis of $E_c^n(\theta)$.

Let 
$$
(w_n^1(E,\theta), w_n^2(E,\theta)=W_n^1(E,\theta,0)\begin{pmatrix}{\sqrt{\mu^n(E,\theta)}}&0\\ 0& {\sqrt{-\kappa^n(E,\theta)}}\end{pmatrix}M,\ \ M=\frac{1}{\sqrt{2}}\begin{pmatrix}i&1\\ -i&1\end{pmatrix},
$$
then by \eqref{aksaks} and \eqref{repre'}, we have
$$
\begin{pmatrix}
w_n^1(E,\theta)&w_n^2(E,\theta)
\end{pmatrix}^*S_n \begin{pmatrix}
w_n^1(E,\theta)&w_n^2(E,\theta)
\end{pmatrix}=J.
$$
By Lemma \ref{eigen0} and the same argument as above, we have
$$
\left|w_n^i(E,\theta,\pm m)\right|\leq C(v)e^{|m-l(n)|\delta}(e^{-2\pi \e_1(E)(\pm m-l(n))}+e^{2\pi \e_1(E)(\pm m-l(n))}),\ \  \forall 1\leq i\leq 2.
$$
We thus finish the proof of (3).
\end{pf}


\section{Long-range Kotani theory: Proof of the (super)critical case of Theorem \ref{main12}}\label{slongkotani}
We let $L_\e^v(z)=L(\alpha,S_z^v(\cdot+i\e))$, $\omega(z)=\omega(\alpha,S_z^v)$, $\bar{\omega}(z)=\bar{\omega}(\alpha,S_z^v)$ and let $\e_1(z)$ be the first turning point of $L^v_\e(z)$, We denote
\begin{equation}\label{cos1}
\Sigma_{v,\alpha}^{1,+}=\{E\in \Sigma_{v,\alpha}:  \omega(E)=1\}.
\end{equation}

\subsection{Rotations reducibility for the center of dual long-range operator}
In this subsection, we establish Kotani theory for the dual center of a long-range operator. As a corollary, we prove $C^\omega$-rotations reducibility of the two dimensional limiting center of the dual long-range operator. 

\begin{Theorem}\label{Kotani-longrange}
Let $\alpha\in\R\backslash\Q$ and $v\in C^\omega(\T,\R)$. If there is an interval $I\subset \Sigma^{1,+}_{v,\alpha}$, then there are $B\in C^\omega(I'\times \T,SL(2,\R))$ and $\psi\in C^\omega(I'\times \T,\R)$  for some $I'\subset I$, such that   
\begin{equation}\label{cos2}
B(E,\theta+\alpha)^{-1}{\bf C}(E,\theta)B(E,\theta)=R_{\psi(E,\theta)}
\end{equation}
where ${\bf C}$ is the limiting center cocycle obtained in Theorem \ref{theorem-main1general}. 
\end{Theorem}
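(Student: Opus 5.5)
We will transplant the finite-range Kotani argument of Section~\ref{kotanis1} onto the limiting center cocycle $(\alpha,{\bf C}(E,\cdot))$ of Theorem~\ref{theorem-main1general}. On $I\subset\Sigma^{1,+}_{v,\alpha}$ we have $\omega(E)=1$, so $\e_1(E)=0$. Theorem~\ref{1general} then gives $\gamma_1(E)=0$, i.e.\ the limiting center exponent vanishes on $I$. For each truncation $v_n$ we have the center frames $O_n(E,\theta)$ and center cocycles $M_n$, analytic in $E$ on a complex neighbourhood; they converge exponentially to $e^{2\pi i\phi}{\bf C}$ and satisfy $O_n^*S_nO_n\to J$. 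The plan is to define, for $z=E+i\delta$ with $\delta>0$ small, the splitting $E_c^n(z,\theta)=E_+^n\oplus E_-^n$ of the truncated dual cocycles. This splitting is uniformly hyperbolic inside the center for $z\notin\R$. Expanding the $\ell^2$-at-$\pm\infty$ directions in the frame $O_n(z,\theta)$ gives $m$-functions $m^n_\pm(z,\theta)$ as in Definition~\ref{m+def}, and we then pass to the limit $n\to\infty$. Alternatively, and more cleanly, we can work directly with the $SL(2,\R)$-valued limit ${\bf C}(z,\theta)$. For $z\notin\R$ the cocycle ${\bf C}(z,\cdot)$ is the limit of the complexified centers, and we aim to show it is uniformly hyperbolic with invariant sections $m_\pm(z,\theta)$ in $\mathbb H$ and $-\mathbb H$. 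The Herglotz property comes from Lemma~\ref{nonvanishing}: it persists in the limit because the identities \eqref{20241}--\eqref{20244} involve only the center components. The estimate \eqref{irvp} bounds the growth of $O_n$ by $e^{\delta l(n)}$, and this subexponential control keeps the $\ell^2$ sums finite uniformly in $n$.

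The key steps, in order, are as follows. First, prove the long-range Johnson--Moser identity $\partial_{\Im z}\gamma_1(z)=-\Im\int g(z,\theta)d\theta$, where $g$ is expressed, as in Lemma~\ref{final1}, through the center solutions $u^\pm$ and the symplectic pairing $u_{\bar z}^{-*}Su^+_z$. We obtain it as the $n\to\infty$ limit of Theorem~\ref{mg} and Lemma~\ref{final1} for the truncations. To pass to the limit we use the convergence $\gamma_1^n\to\gamma_1$ from Theorem~\ref{1general}, applied at complex energies as in the Remark after it, together with the uniform convergence of centers from Theorem~\ref{theorem-main1general}(1). Second, rerun Lemmas~\ref{le1}, \ref{le2} and~\ref{gjyle} in the limit. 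Since $\gamma_1=0$ on $I$, $-\gamma_1(E+i\delta)/\delta$ stays bounded for a.e.\ $E$. Fatou's lemma then yields the analogue of Theorem~\ref{keyth}: the $L^2$ bound (2) and the reflectionless identity $m_+(E+i0,\theta)=m_-(E-i0,\theta)$ (3) for a.e.\ $E\in I$. Third, as in the proof of Theorem~\ref{C0reducibility1}, the reflectionless property lets $m_+(\cdot,\theta)$ continue holomorphically across a subinterval $I'$, with values in $\mathbb H$. Normal family arguments make $m_+$ continuous, and Lemma~\ref{aj} makes it jointly analytic in $(E,\theta)$. Finally, we conjugate by
\[
B(E,\theta)=\begin{pmatrix}(\Im m_+)^{-1/2}\,|m_+|&0\\ \Re m_+\,(\Im m_+)^{-1/2}|m_+|^{-1}&(\Im m_+)^{1/2}|m_+|^{-1}\end{pmatrix}^{\!T}
\]
or an equivalent real matrix sending $i$ to $m_+$. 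Invariance ${\bf C}\cdot m_+(\theta)=m_+(\theta+\alpha)$ forces $B(\theta+\alpha)^{-1}{\bf C}B(\theta)$ to fix $i$, i.e.\ to lie in $SO(2,\R)$. An analytic lift of the angle then gives $\psi$, and this proves \eqref{cos2}.

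The main obstacle is the first step, justifying the limit in the Johnson--Moser and Kotani estimates. The truncated identities involve the Green's matrix and the full $2l(n)$-dimensional frames $F_z^\pm$, and the stable/unstable parts have dimension growing with $n$. I would try to reduce everything to the center part first. In $g$ only the pairing of $u^+_z$ with $u^-_{\bar z}$ enters, and both lie in the center, so by Theorem~\ref{theorem-main1general}(3) their coordinates are controlled uniformly in $n$. The needed uniformity in $\delta$ would come from the analyticity of $O_n$ in $z$ on a fixed complex strip $|z-E_0|<r$, together with \eqref{cenest1}.
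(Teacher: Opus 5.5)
Your outline has the same architecture as the paper's proof: Kotani theory on the truncated centers, Johnson--Moser and Kotani estimates passed to the limit, Fatou and reflectionlessness, continuation via normal families and Lemma~\ref{aj}, and conjugation by a real matrix sending $i$ to $m_+$. The gap is in how you control the limit $n\to\infty$. You use only \emph{upper} bounds on center coordinates (\eqref{irvp}, Theorem~\ref{theorem-main1general}(3)), whereas every Kotani step needs \emph{uniform-in-$n$ nondegeneracy}, and the proposal does not supply it.

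First, $m$-functions defined in the raw frame $O_n(z,\theta)$ need not be Herglotz. Expanding in $\delta=\Im z$ gives $O_n(z)^*S_nO_n(z)=J+2i\delta A_n+O(\delta^2)+O(e^{-cn})$. Here $A_n=O_n^*S_n\partial_EO_n$ is Hermitian up to $O(e^{-cn})$ but in general not small. So the positivity $\delta\sum_m\|\vec u^+_n(m)\|^2$ does not fix the sign of $\Im m^n_+$. Lemma~\ref{nonvanishing} works only in the renormalized frame of Proposition~\ref{newwe'}, where the first-order error is removed and the remaining error is $O(|E-E_0||\delta|+\delta^2)$. That renormalization must be redone uniformly in $n$; this is the paper's Proposition~\ref{newwe'conv}. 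It is also where \eqref{irvp} genuinely enters: together with Lemma~\ref{pkule} it shows $A_n$ is Cauchy despite the growing dimension. In addition, the constant in the Herglotz lower bound depends on the angle between the two frame vectors. This angle must be bounded below in a fixed coordinate window, uniformly in $n$; the paper gets it from the rank-two limit $L^{n_0}$, see \eqref{pku15}. Noting that \eqref{20241}--\eqref{20244} ``involve only center components'' addresses neither point.

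Second, uniform hyperbolicity of the limiting center at $z\notin\R$ is stated only as an aim, yet it is indispensable. Without a splitting dominated uniformly in $n$, the lines $E^n_\pm(z,\theta)$ may collapse in the limit. Then neither the limiting $m_\pm$ nor a lower bound on the Johnson--Moser denominator $u^{-*}_{\bar z}Su^+_z$ is available. The paper gets this in three steps:
\begin{itemize}
\item For nonreal $z$ the limiting center has exponents of opposite signs, by Theorem~\ref{1general}, since the first turning points at $z$ and $\bar z$ are positive off the axis.
\item It has zero acceleration, by Theorem~5.1 of \cite{gj} applied to the truncations, so it is $1$-regular and hence dominated by Theorem~\ref{t2.1}.
\item Stability of dominated splittings then gives $B_n\to B$ and $m^n_\pm\to m_\pm$ exponentially.
\end{itemize}

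Three further steps do not transfer as you state them.
\begin{itemize}
\item Theorem~\ref{mg} and Lemmas~\ref{le1}--\ref{le2} are written for $l(n)$-blocks, with the factor $d=l(n)\to\infty$ and block norms $\|\vec u(0)\|^2$ of growing length. They must be replaced by one-step versions; the paper uses $T_n^*S_nT_n-S_n=2i\Im z\,\langle\delta_0,\cdot\rangle\delta_0$ and shift invariance (Lemmas~\ref{le2con} and~\ref{final1conbimsa}).
\item A.e.\ finiteness of $\gamma_1(E+i\delta)/\delta$ does not follow from $\gamma_1=0$ on $I$ alone. The paper derives it from the Jensen formula, the Thouless formula for $L_0$, and harmonicity of $L_{\e_0}$.
\item The limiting integrands see only the single coordinate $|u(\theta,0)|^2$. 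So positivity of $\Im m_+(E+i0,\cdot)$, and reflectionlessness for a.e.\ $\theta$, require an additional ergodicity argument.
\end{itemize}
Finally, your displayed $B$ does not send $i$ to $m_+$; the matrix $\begin{pmatrix}(\Im m_+)^{1/2}&\Re m_+(\Im m_+)^{-1/2}\\0&(\Im m_+)^{-1/2}\end{pmatrix}$ does.
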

We shall also use the nonconstancy of the mean center rotation, proved
in Proposition~\ref{prop:mean-rotation-nonconstant} of
Appendix~\ref{app:mean-rotation}.
\begin{pf}
Since $I\subset \Sigma_{v,\alpha}^{1,+}$, by \eqref{cos1}, for any $E\in I$, we have $\omega(E)=1$.  By Theorem \ref{1general}, $\e_1(E)=L(\alpha,{\bf C}(E,\cdot))=0$. 

Now we fix $E_0\in I$. By (1) of Theorem \ref{theorem-main1general}, there exist $\delta_0>0$ with $I_{\delta_0}(E_0)\subset I$ and $O_n\in C^\omega( I_{\delta_0}(E_0)\times\T,GL_{2l(n)\times 2}(\C))$,  $M_n\in C^\omega(I_{\delta_0}(E_0)\times \T,GL(2,\C))$, such that 
$$
L_{E,v_n}(\theta)O_n(E,\theta)=O_n(E,\theta+\alpha)M_n(E,\theta)
$$
with 
\begin{equation}\label{cenestlong}
|O^*_nS_nO_n-J|_{r,h},\ \ |M_n-e^{2\pi i\phi}{\bf C}|_{r,h}\leq Ce^{-cn}
\end{equation}
for some $r,h,c,C>0$, $\phi\in C^\omega(I_{\delta_0}(E_0)\times\T,\R)$ and ${\bf C}\in C^\omega( I_{\delta_0}(E_0)\times \T,SL(2,\R))$. If we denote
$$
O_n=\begin{pmatrix}O_n(l(n)-1)\\ \vdots\\ O_{n}(0)\\ \vdots\\ O_n(-l(n))\end{pmatrix},
$$ 
then for $-l(n)\leq j\leq l(n)-1$, we have 
\begin{equation}\label{CE222con}
|O_n(j)- O_{n+1}(j)|_{r,h}\leq Ce^{-cn}.
\end{equation}
whee $O_n(j)$ is the $j$-th row of $O_n$.

In  the following, we perform the convergence argument for several key steps in Kotani-theoretic estimates.

{\bf Step 1: Convergence argument for Proposition \ref{newwe'}:}
\begin{Proposition}\label{newwe'conv}
There exist $0<\delta_1(E_0)<\min\{\delta_0,\frac{r}{8}\}$ (uniformly in $n$) and two linearly independent vectors 
$$
U_n(E,\theta), V_n(E,\theta)\in E^n_c(\theta)
$$ 
for  $(E,\theta)\in I_{\delta_1}(E_0)\times \T$,  where $E_c^n(\theta)$ is defined in \eqref{sub10}, depending analytically on $E$ and  $\theta$, such that 
\begin{equation}\label{gjy2ncon}
\left|\begin{pmatrix}
U_n& V_n
\end{pmatrix}^*S\begin{pmatrix}
U_n& V_n\end{pmatrix}-J\right|_{r',h'}\leq Ce^{-cn},
\end{equation}
\begin{align}\label{gjy3ncon}
\nonumber \sup_{E+i\delta\in R_{\delta_1}(E_0)}\sup_{\theta\in\T}&\left\|\begin{pmatrix}
U_n(E+i\delta,\theta)& V_n(E+i\delta,\theta)
\end{pmatrix}^*S\begin{pmatrix}
U_n(E+i\delta,\theta)& V_n(E+i\delta,\theta)\end{pmatrix}-J\right\|\\
&\leq C(|E-E_0||\delta|+\delta^2)+Ce^{-cn},
\end{align}
for some $0<r'<r$, $0<h'<h$ where $C$ is uniform in $n$ and only depends on $E_0$. Moreover, if we denote
$$
\begin{pmatrix}U_n&V_n\end{pmatrix}=\begin{pmatrix}U_n(l(n)-1)&V_n(l9n)-1)\\ \vdots& \vdots\\ U_{n}(0)&V_{n}(0)\\ \vdots& \vdots\\ U_n(-l(n))&V_n(-l(n))\end{pmatrix},
$$ 
then for $-l(n)\leq j\leq l(n)-1$, we have 
\begin{equation}\label{irv5}
|U_n(j)-U_{n+1}(j)|_{r',h'}+|V_n(j)-V_{n+1}(j)|_{r',h'}\leq Ce^{-cn}.
\end{equation}
\end{Proposition}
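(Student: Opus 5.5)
The plan is to repeat the construction of Proposition \ref{newwe'}, but applied to the frames $O_n$ of Theorem \ref{theorem-main1general}(1) rather than to an exact symplectic frame. The one essential change is that every correction matrix will be built from \emph{limiting} objects, which do not depend on $n$. Row-wise convergence \eqref{irv5} and uniformity in $n$ then follow directly from \eqref{CE222con} and \eqref{cenestlong}.

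\textbf{Step 1.} Start from $O_n(E,\theta)=(\tilde u_n,\tilde v_n)$. Since $E\mapsto O_n(E,\theta)$ is analytic on the complex $r$-neighborhood of $I_{\delta_0}(E_0)$, I expand
\[
O_n(E+i\delta)=O_n(E)+i\delta\,\partial_E O_n(E)+O(\delta^2).
\]
By \eqref{irvp} and Cauchy estimates, $\partial_E O_n$ and the second derivatives are bounded. This bound a priori carries a factor $e^{\delta l(n)}$ for the full vector. Only the pairings $O_n^*S_nO_n$ and $O_n^*S_n\partial_EO_n$ enter, however, and these are $2\times 2$ analytic functions that converge as $n\to\infty$.

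To see the convergence, write $O_n=M_n^{n_0}(e_1,e_2)P$. Then $O_n^*S_nO_n=-P^*(e_1,e_2)^*L_n^{n_0}(e_1,e_2)P$, and $L_n^{n_0}$ converges by Lemma \ref{add2aaa}. The same representation applies to the mixed pairing $O_n(\bar z)^*S_nO_n(z)$, since $L_n^{n_0}$ extends analytically in $z$. Hence
\[
A^{(n)}_E:=O_n(E)^*S_n\,\partial_EO_n(E)
\]
converges uniformly, at rate $e^{-cn}$, to a limit $A_E$.

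\textbf{Step 2.} By \eqref{cenestlong} we have $O_n^*S_nO_n=J+O(e^{-cn})$ for real $E$. Differentiating in $E$ shows that $A_E$ is Hermitian up to an $O(e^{-cn})$ error, and the exact limit $A_E$ is Hermitian. Exactly as in \eqref{CE1}, I then set
\[
C_z=I+JA_{E_0}(z-E_0),
\]
using the limiting $A_{E_0}$, which is $n$-independent. I choose $D_z$ via Lemma \ref{dominated matrix}, applied to the limiting quadratic form $J+A_{E_0}JA_{E_0}(z-E_0)^2$. This choice is $n$-independent, so $\delta_1(E_0)$ depends only on $E_0$; I also take $\delta_1<r/8$ so that all complex energies stay inside the strip of analyticity.

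Define $(U_n,V_n)=O_nC_zD_z$. This pair lies in $E_c^n$ and is analytic in $(E,\theta)$. The identity chain \eqref{ha2}--\eqref{CE11} now holds with additive errors $O(e^{-cn})$, coming from $O_n^*S_nO_n-J$ and from $A^{(n)}-A$. This yields \eqref{gjy3ncon} and, restricting to real $E$, \eqref{gjy2ncon}; shrinking $r,h$ to $r',h'$ absorbs the Cauchy losses.

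\textbf{Step 3.} Since $C_zD_z$ does not depend on $n$, \eqref{irv5} follows from \eqref{CE222con} multiplied by $\sup|C_zD_z|$.

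\textbf{Main obstacle.} The key point is Step 1: showing that the $2\times2$ pairings built from $\partial_E O_n$ converge uniformly in $n$, even though $\partial_E O_n$ itself has no uniform bound. My plan is to handle this through the representation $O_n=M_n^{n_0}(e_1,e_2)P$, reducing everything to the convergence of $L_n^{n_0}(z)$ for complex $z$, which follows from \eqref{new200}. If that representation proves awkward for the mixed pairing, I would fall back on the weighted bounds of Theorem \ref{theorem-main1general}(2) to control the tails.
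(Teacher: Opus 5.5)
Your Step~1 does not establish what it claims. The representation $O_n=M_n^{n_0}(e_1,e_2)P$ does give, by analytic continuation from the real axis,
\[
O_n(\bar z,\theta)^*S_nO_n(z,\theta)=-P(\bar z,\theta)^*e(\bar z,\theta)^*L_n^{n_0}(z,\theta)e(z,\theta)P(z,\theta),\qquad e=(e_1,e_2),
\]
so this holomorphic mixed pairing does converge. At real $E$, however, its derivative is $\partial_E(O_n^*S_nO_n)=A_n-A_n^*$. It therefore only reproduces \eqref{bimsa1}, namely that $A_n$ is Hermitian up to $O(e^{-cn})$.

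The construction actually needs the Hermitian part $A_n+A_n^*$. This is the first-order coefficient of the non-holomorphic self-pairing $O_n(E+i\delta)^*S_nO_n(E+i\delta)$ in \eqref{CW3con}, and it is exactly the term $C_z$ is designed to cancel. That self-pairing is $O_n(\bar w)^*S_nO_n(z)$ taken off the diagonal, at $w=\bar z\neq z$. There the middle factor is the full-length self-pairing $R_n(z)^*S_nR_n(z)$ rather than $\Omega=R_n(\bar z)^*S_nR_n(z)$, so it does not collapse to $L_n^{n_0}$.

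The same problem affects the uniform-in-$n$ bound on the $\delta^2$ coefficient (the paper's $K_n$). That coefficient involves $(\partial_EO_n)^*S_n\partial_EO_n$ and the Taylor remainder, and your proposal does not address it. So ``hence $A^{(n)}_E$ converges'' does not follow. Without it, neither \eqref{gjy3ncon} nor the $n$-independence of $\delta_1$ is established.

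The missing ingredient is the paper's argument. The form $S_n$ pairs entries at distance $k$ across the cut with weight $\hat v_k=O(e^{-2\pi h_1 k})$. Combine this with the subexponential bound \eqref{irvp}, which comes from part~(2) of Theorem~\ref{theorem-main1general} with $\e_1(E_0)=0$. Add the row-wise exponential convergence of $O_n$ and $\partial_EO_n$, from \eqref{CE222con} plus Cauchy estimates. Lemma~\ref{pkule} then shows that every $2\times2$ pairing $X_n^*S_nY_n$ with $X_n,Y_n\in\{O_n,\partial_EO_n,I_n\}$ is exponentially Cauchy; this is \eqref{irv1}--\eqref{irv2}.

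Your fallback sentence points in this direction, but aims it at the mixed pairing, which is the one object that did not need it. Once $A_n\to A$ at rate $e^{-cn}$ is proved this way, your $n$-independent choice $C_z=I+JA_{E_0}(z-E_0)$ and $D_z$ is legitimate. It is in fact slightly cleaner than the paper's $n$-dependent $C_n,D_n$: \eqref{irv5} then follows directly from \eqref{CE222con}, whereas the paper additionally needs convergence of $C_n$ and $D_n$.
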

\begin{pf}
For any $|z-E_0|<r$ and $z=E+i\delta$, we have
\begin{equation}\label{cw1con}
O_n(z,\theta)=O_n(E,\theta)+\frac{\partial O_n(E,\theta)}{\partial E}i\delta+I_n(z,\theta)\delta^2,
\end{equation}
for some $I_n(z,\theta)$. We denote 
\begin{equation}\label{pku1}
A_n(E,\theta)=O_n(E,\theta)^*S_n\frac{\partial O_n(E,\theta)}{\partial E}.
\end{equation}
By \eqref{cenestlong}, \eqref{pku1} and Cauchy estimates, we have
\begin{equation}\label{bimsa1}
|A_n-A_n^*|_{\frac{r}{2},h}=|\partial_E(O_n^*S_nO_n-J)|_{\frac{r}{2},h}\leq \frac{4C}{r}e^{-cn}.
\end{equation}
Expanding $O_n$ as in \eqref{cw1con}, by \eqref{cenestlong} we have for all $\theta\in\T$, $z=E+i\delta\in R_{r}(E_0)$,
\begin{align}\label{CW3con}
\left\|O_n(z,\theta)^*S_n O_n(z,\theta)-J-(A_n(E,\theta)+A_n^*(E,\theta)\right)i\delta-K_n(z,\theta)\delta^2)\|\leq C(e^{-cn}+\delta^2)
\end{align}
where 
\begin{align}\label{defkn}
\nonumber &K_n(z,\theta)=O_n(E,\theta)^*S_n I_n(z,\theta)+I_n(z,\theta)^*S_nO_n(E,\theta)+\frac{\partial O_n(E,\theta)^*}{\partial E}S_n\frac{\partial O_n(E,\theta)}{\partial E}\\
-&i\delta\frac{\partial O_n(E,\theta)^*}{\partial E}S_nI_n(z,\theta)+i\delta I_n(z,\theta)^*S_n \frac{\partial O_n(E,\theta)}{\partial E}+I_n(z,\theta)^*S_nI_n(z,\theta)\delta^2.
\end{align}

Note that by \eqref{CE222con} and Cauchy estimates, for any $j=-l(n),\ldots,l(n)-1$,
\begin{equation}\label{final100mm'}
\left|O_n(j)-O_{n+1}(j)\right|_{\frac{r}{2},h}+\left|\frac{\partial O_n(j)}{\partial E}-\frac{\partial O_{n+1}(j)}{\partial E}\right|_{\frac{r}{2},h} \leq \frac{4C}{r}e^{-c n}.
\end{equation}

By  \eqref{irvp} in Theorem \ref{theorem-main1general} and the fact that $\e_1(E)=0$ for any $E\in I$, we have for any $\e>0$, there is $\delta_0'<\min\{\delta_0,\frac{r}{4}\}$ and $C(\e.r)$ such that
\begin{equation}\label{irv}
\|O_{n}(E,\theta)\|^2+\|\frac{\partial O_{n}(E,\theta)}{\partial E}\|^2\leq C(\e,r)e^{\e|n|},\ \ \forall (E,\theta)\in I_{\delta_0'}(E_0)\times \T.
\end{equation}

By \eqref{final100mm'}, \eqref{irv} and Lemma \ref{pkule} with $D_n=O_n$, $B_n=\frac{\partial O_n}{\partial E}$ and using $\|S_n\|\leq C$ uniformly in $n$, for any $E\in I_{\delta_0'}(E_0)$ and $\theta\in\T$, we have that
\begin{align}\label{irv1}
&\ \ \|A_n(E,\theta)-A_{n+1}(E,\theta)\|\leq C\left(\|O_{n+1}(E,\theta)\|^2+\|\frac{\partial O_{n+1}(E,\theta)}{\partial E}\|^2\right)\\ \nonumber
\cdot &\left(|v_{n+1}|+\sum_{j=-l(n)}^{l(n)-1}\|O_n(E,\theta,j)-O_{n+1}(E,\theta,j)\|+\|\frac{\partial O_{n}(E,\theta,j)}{\partial E}-\frac{\partial O_{n+1}(E,\theta,j)}{\partial E}\|\right)\\\nonumber
&\leq C(\e,r)e^{\e|n|}(e^{-cn}+\frac{C}{r}e^{-cn})\leq C(\e,r)e^{-\frac{c}{2}n}.
\end{align}
Notice that  for any $z=E+i\delta\in R_{\delta_0'}(E_0)$ and $\theta\in\T$, by  \eqref{cw1con} and \eqref{final100mm'}, we have
\begin{equation}\label{ines}
\|I_n(z,\theta,j)-I_{n+1}(z,\theta,j)\|\leq \frac{16C}{r}e^{-\frac{c}{2}n},\ \ \|I_{n}(z,\theta)\|\leq C(\e,r)e^{\e|n|}
\end{equation}
for $-l(n) \leq j\leq l(n)-1$ where $I_n(z,\theta,j)$ is the $j$-th row of $I_n(z,\theta)$.

Each term in $K_n$ is of the form $X_n^*S_nY_n$ where $X_n,Y_n\in \{O_n(E,\theta),\partial_E O_n(E,\theta),I_n(z,\theta)\}$. Applying Lemma \ref{pkule} to each such term, and using $\|S_n\|\leq C$, \eqref{final100mm'} \eqref{irv},  \eqref{ines}, we obtain for any $z=E+i\delta\in R_{\delta_0'}(E_0)$ and $\theta\in\T$, 
\begin{equation}\label{irv2}
\|K_n(z,\theta)-K_{n+1}(z,\theta)\|\leq  C(\e,r)e^{-\frac{c}{4}n}.
\end{equation}

Let
\begin{align}\label{CE1con}
C_n(z,\theta):=I+JA_n(E_0,\theta)(z-E_0),
\end{align} 
\begin{equation}\label{CE2con}
\tilde{O}_n(z,\theta)=O_n(z,\theta)C_n(z,\theta).
\end{equation}
Then
$$
C_n(\bar{z},\theta)^*JC_n(z,\theta)=J+(A_n(E_0,\theta)^*-A_n(E_0,\theta))(z-E_0)+A_n^*(E_0,\theta)JA_n(E_0,\theta)(z-E_0)^2.
$$
$$
C_n(z,\theta)^*JC_n(z,\theta)=J-(A_n(E_0,\theta)+A_n(E_0,\theta)^*)i\delta+A_n^*(E_0,\theta)JA_n(E_0,\theta)|z-E_0|^2+O(e^{-cn}).
$$
By \eqref{irv1}, $A_n(E_0,\theta)$ and so $C_n(z,\theta)$ are  uniformly bounded in $n$. Thus by  \eqref{bimsa1}, \eqref{irv2}, \eqref{CE1con} and \eqref{CE2con}, if $n$ sufficiently large, for any $z\in R_{\delta_0'}(E_0)$ and $\theta\in\T$, we have for some $C>0$ not depending on $n$,
\begin{equation}\label{ha2con}
\|\tilde{O}_n(\bar{z},\theta)^*S_n\tilde{O}_n(z,\theta)-J-A_n^*(E_0,\theta)JA_n(E_0,\theta)(z-E_0)^2\|\leq Ce^{-cn},
\end{equation}
\begin{align}\label{ha3con}
\nonumber \|\tilde{O}_n(z,\theta)^*S_n\tilde{O}_n(z,\theta)&-J-A_n^*(E_0,\theta)JA_n(E_0,\theta)(z-E_0)^2\| \\
&\leq Ce^{-cn}+C(|E-E_0||\delta|+\delta^2).
\end{align}

By Lemma \ref{dominated matrix} and \eqref{irv1}, if  $|z-E_0|<\delta_0''<\delta_0'$ is sufficiently small \footnote{The smallness only depends on the norm of $A^*_n(E_0,\theta)JA_n(E_0,\theta)$, thus is uniform in $n$.}, there exists $D_n(z,\theta)\in GL(2,\C)$, depending analytically on $z$ and $\theta$, such that
\begin{equation}\label{CW33con}
D_n(\bar{z},\theta)^*(J+A^*_n(E_0,\theta)JA_n(E_0,\theta)(z-E_0)^2)D_n(z,\theta)=J
\end{equation}  
and 
\begin{equation}\label{pku2}
\sup_{\theta\in\T}\left\|\frac{\partial D_n(z,\theta)}{\partial z}\right\|\leq C\|A_n(E_0,\theta)\|^2|z-E_0|\leq C_1(E_0)|z-E_0|,\ \ D_n(E_0,\theta)=I.
\end{equation}
It follows that
\begin{align}\label{CE11con}
\nonumber &\|D_n(z,\theta)^*(J+A_n(E_0,\theta)^*JA_n(E_0,\theta)(z-E_0)^2)D_n(z,\theta)-J\|\\
\nonumber \leq& C_2(E_0)\|D_n(z,\theta)-D_n(\bar{z},\theta)\|\\
\leq &C_3(E_0)|z-E_0||\delta|\leq C_3(E_0)(|E-E_0||\delta|+\delta^2).
\end{align}

Finally, let
\begin{equation}\label{CE22con}
(U_n(z,\theta),V_n(z,\theta))=\tilde{O}_n(z,\theta)D_n(z,\theta).
\end{equation}
We have that $U_n(E,\theta), V_n(E,\theta)$ are  linearly independent, depending analytically on $E$ and $\theta$ on $I_{\delta_1}(E_0)\times \T$ for any $\delta_1<\delta_0''$ (does not depend on $n$). They belong to $E^n_c(\theta)$ since $O_n(E,\theta)$ does.

By \eqref{ha2con}-\eqref{CE22con}, we have
\begin{equation*}
\left|\begin{pmatrix}
U_n& V_n
\end{pmatrix}^*S\begin{pmatrix}
U_n& V_n\end{pmatrix}-J\right|_{r',h'}\leq Ce^{-cn},
\end{equation*}
\begin{align*}
\nonumber \sup_{E+i\delta\in R_{\delta_1}(E_0)}\sup_{\theta\in\T}&\left\|\begin{pmatrix}
U_n(E+i\delta,\theta)& V_n(E+i\delta,\theta)
\end{pmatrix}^*S\begin{pmatrix}
U_n(E+i\delta,\theta)& V_n(E+i\delta,\theta)\end{pmatrix}-J\right\|\\
&\leq C(|E-E_0||\delta|+\delta^2)+Ce^{-cn}
\end{align*}
for some $0<r'<r$ and $0<h'<h$.

By \eqref{CE222con}, \eqref{CE2con}, \eqref{pku2} and \eqref{CE22con}, we have for $-l(n)\leq j\leq l(n)-1$, 
$$
|U_n(j)-U_{n+1}(j)|_{r',h'}+|V_n(j)-V_{n+1}(j)|_{r',h'}\leq Ce^{-cn}
$$
whee $U_n(j),V_n(j)$ is the $j$-th row of $U_n,V_n$.
\end{pf}

{\bf Step 2: Convergence argument for Lemma \ref{nonvanishing}:}  Note that by Proposition \ref{newwe'conv}, $U_n(E,\theta),V_n(E,\theta)\in E_c^n(\theta)$.  We also ahev ${\rm dim} E_c^n(\theta)=2$. Thus there exists  $M'_n\in C^\omega(I_{\delta_1}(E_0)\times \T,GL(2,\C))$, such that 
\begin{equation}\label{pku11}
L_{E,v_n}(\theta)(U_n(E,\theta), V_n(E,\theta))=(U_n(E,\theta+\alpha),V_n(E,\theta+\alpha))M'_n(E,\theta).
\end{equation}
By \eqref{irv5}, we have
$$
|M_n'-M_{n+1}'|_{r,h}\leq Ce^{-cn}.
$$
By the same argument as in (1) of Theorem \ref{theorem-main1general}, we have
\begin{equation}\label{pku10}
|M'_n-e^{2\pi i\phi}{\bf C}'|_{r,h}\leq Ce^{-cn}
\end{equation}
for some  ${\bf C}'\in C^\omega( I_{\delta_1}(E_0)\times \T,SL(2,\R))$.

On the other hand, for any $z\in R_{\delta_1}(E_0)\backslash\R$, we note that $L_\e(E)=L(E)+2\pi |\e|$ for $|\e|<\e_0$. Since $\delta_1$ is sufficiently small, by continuity and quantization of the Lyapunov exponent, $\omega(\alpha,S_z^v(\cdot\pm i\frac{\e_0}{2}))=\pm 1$. By the convexity of the Lyapunov exponent and regularity of $(\alpha,S_z^v)$, we have 
$$
L_{\e}(z)=L(z),\ \ |\e|\leq s(z)\leq \frac{\e_0}{2},\ \ s(z)>0.
$$
Thus for any compact subset $\mathcal{K}$ of $R_{\delta_2}(E_0)\backslash\R$ for some $0<\delta_2<\delta_1$, there is $\e'(K)>0$ such that
$$
L_{\e}(z)=L(z),\ \ |\e|\leq \e'.
$$
By Theorem \ref{1general}, for any $z\in \mathcal{K}$, we have
$$
L_1(\alpha, e^{2\pi i\phi(z,\cdot)}{\bf C}'(z,\cdot))>0>L_2(\alpha, e^{2\pi i\phi(z,\cdot)}{\bf C}'(z,\cdot)).
$$
Note also $(\alpha,S_z^v)$/$(\alpha,S_{\bar{z}}^v)$ is uniformly hyperbolic, thus $L(z)>0$/$L(\bar{z})>0$. Apply the nonreal agument of Theorem 5.1 in \cite{gj} \footnote{The original theorem is stated for real $E$, by the proof work for any $z\in\C$.} to the finite truncations on a   neighborhood over the compact set $\mathcal{K}$, in both phase directions, and pass to the limiting center using \eqref{pku10}, we have 
$$
 \omega(\alpha, e^{2\pi i\phi(z,\cdot)}{\bf C}'(z,\cdot))=0.
$$
Thus $(\alpha, e^{2\pi i\phi(z,\cdot)}{\bf C}'(z,\cdot))$ is 1-regular.

By \eqref{pku10}, Theorem 6.1 in \cite{ajs} and Proposition \ref{trivial}, there exist $B_n(z,\cdot)\in C^\omega(\T, GL(2,\C))$ and $\tau_n^\pm(z,\cdot)\in C^\omega(\T, \C\backslash\{0\})$, depending analytically on $z\in U$ for some open neighborhood $U$ of $\mathcal{K}$,  such that
\begin{equation}\label{pku12}
B_n(z,\theta+\alpha)^{-1}M'_n(z,\theta)B_n(z,\theta)=\begin{pmatrix}\tau_n^+(z,\theta)&0\\
0& \tau_n^-(z,\theta)\end{pmatrix}
\end{equation}
with
\begin{equation}\label{pku13}
|\tau_n^+(\theta)|<1,\ \ |\tau_n^-(\theta)|>1, \ \ \forall \theta\in\T,
\end{equation}
Moreover, by stability of the dominated splitting, for any  $z\in \mathcal{K}$ and any $\theta\in\T$,  we have
\begin{equation}\label{contib}
\|B_n(z,\theta)-B(z,\theta)\|, |\tau^\pm_n(z,\theta)-\tau^\pm(z,\theta)|
\leq C(\mathcal{K})e^{-cn}
\end{equation} 
for some  $B(z,\cdot)\in C^\omega(\T, GL(2,\C))$ and $\tau^\pm(z,\cdot)\in C^\omega(\T, \C\backslash\{0\})$, depending analytically on $z\in U$.

We donote 
$$
B_n(z,\theta)=\begin{pmatrix}
b^+_n(z,\theta,1)& b^-_n(z,\theta,1)\\ b^+_n(z,\theta,0)&b^-_n(z,\theta,0)
\end{pmatrix},\ \ B(z,\theta)=\begin{pmatrix}
b_1^+(z,\theta)& b_1^-(z,\theta)\\ b_0^+(z,\theta)&b_0^-(z,\theta)
\end{pmatrix}
$$
and define
\begin{align}\label{ff1con}
u_n^+(z,\theta)=b^+_n(z,\theta,1)U_n(z,\theta)+b^+_n(z,\theta,0)V_n(z,\theta),
\end{align}
\begin{align}\label{ff2con}
u_n^-(z,\theta)=b^-_n(z,\theta,1)U_n(z,\theta)+b^-_n(z,\theta,0)V_n(z,\theta).
\end{align}
Then by \eqref{pku11}, \eqref{pku12} and \eqref{pku13}, $u^\pm_n(z,\theta)\in E_\pm^n(z,\theta)$.

\begin{Lemma}\label{nonvanishingcon}
There exists $0<\delta_3(E_0)<\delta_2$ such that for any $z\in R_{\delta_3}(E_0)\cap \mathbb{H}$ and $\theta\in\T$, we have $\Im b_1^+(z,\theta)\overline{b_0^+(z,\theta)}>0$ and $\Im b_1^-(z,\theta)\overline{b_0^-(z,\theta)}<0$.
\end{Lemma}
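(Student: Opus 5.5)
The plan is to transfer the finite-range argument of Lemma \ref{nonvanishing} to the limit, working at the level of the truncations $u_n^\pm$ and then letting $n\to\infty$ via \eqref{contib}. Fix $z=E+i\delta\in R_{\delta_3}(E_0)\cap\mathbb{H}$ and treat only the $+$ case; the $-$ case is symmetric with $\bar z$ and sums over $n\le 0$. Since $u_n^+(z,\theta)\in E_+^n(z,\theta)$ generates a solution of $L_{v_n,\alpha,\theta}u=zu$ that is $\ell^2$ at $+\infty$, the imaginary-part identity \eqref{20242gjy} applies to every truncation:
\begin{equation*}
\Im\big(\vec u^+_n(0)\big)^*C_n^*\vec u^+_n(-1)=\delta\sum_{k\ge0}\|\vec u^+_n(k)\|^2\ \ge\ \delta\,\|u_n^+(z,\theta)\|^2 ,
\end{equation*}
where the right-hand norm is that of the initial block vector (up to a harmless constant, since the initial vector consists of $\vec u_n^+(0)$ and $\vec u_n^+(-1)$ and the sum only contains $k\ge0$; one uses $k=0$ together with one application of the bounded transfer matrix, or simply the $k=0,1$ terms). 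The left-hand side equals $\frac{i}{2}(u_n^+)^*S_nu_n^+$, and expanding via \eqref{ff1con} and \eqref{gjy3ncon} gives
\begin{equation*}
\tfrac{i}{2}(u_n^+)^*S_nu_n^+=\Im\big(b_n^+(1)\overline{b_n^+(0)}\big)\cdot(\pm1)+O\big((|E-E_0||\delta|+\delta^2+e^{-cn})(|b^+_n(1)|^2+|b_n^+(0)|^2)\big),
\end{equation*}
with the sign fixed by the convention for $J$ (matching the statement).

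The key lower bound is $\|u_n^+\|^2\ge c\,(|b_n^+(1)|^2+|b_n^+(0)|^2)$ with $c$ uniform in $n$. I would obtain it from \eqref{gjy2ncon}: since $(U_n,V_n)^*S_n(U_n,V_n)$ is $e^{-cn}$-close to $J$ and $\|S_n\|\le C$ uniformly, the smallest singular value of the frame $(U_n,V_n)$ is bounded below uniformly in $n$ (if $aU_n+bV_n$ were small, pairing it with $S_n$ against the frame would make $J(a,b)^T$ small). This replaces the ``uniform angle'' footnote of Lemma \ref{nonvanishing}, which was not available uniformly in the truncation dimension. Combining, $\Im(\cdot)\ge c\delta(|b_n^+(1)|^2+|b_n^+(0)|^2)-C(|E-E_0|\delta+\delta^2+e^{-cn})(\dots)$. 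Choosing $\delta_3$ so that $C(|E-E_0|+\delta)<c/2$, the only obstruction is the $e^{-cn}$ term, which is not proportional to $\delta$.

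That error term is the main obstacle: for fixed $z$ it is harmless because one can take $n\ge n(\delta)$ with $Ce^{-cn}<c\delta/4$, giving $\Im b_n^+(1)\overline{b_n^+(0)}\ge \frac{c\delta}{4}(|b_n^+(1)|^2+|b^+_n(0)|^2)$ for all large $n$. Letting $n\to\infty$ using \eqref{contib} (applied on a compact $\mathcal K$ containing $z$) yields $\Im b_1^+\overline{b_0^+}\ge\frac{c\delta}{4}(|b_1^+|^2+|b_0^+|^2)>0$, since $B(z,\theta)$ is invertible so $(b_1^+,b_0^+)\neq0$. Because $z$ was arbitrary in $R_{\delta_3}(E_0)\cap\mathbb{H}$ and the constants $c,C,\delta_3$ were uniform in $n$ and $\theta$, the lemma follows.
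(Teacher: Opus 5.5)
Your skeleton matches the paper's. You apply \eqref{20242gjycon} to each truncation, expand via \eqref{ff1con} and \eqref{gjy3ncon} to get \eqref{20242gjy1con} with an $n$-uniform error, take $n=n(\delta)$ large, and pass to the limit with \eqref{contib}. The gap is the step that carries all the weight, namely
\[
\sum_{k\ge0}\|\vec u_n^+(z,\theta,k)\|^2\ \ge\ c\,\bigl(|b_n^+(z,\theta,1)|^2+|b_n^+(z,\theta,0)|^2\bigr)\quad\text{with $c$ independent of $n$}.
\]
Both ingredients you use for it fail uniformly in the truncation.

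First, the forward sum does not contain the block $\vec u_n^+(-1)$ of the initial vector. Recovering that block from $\vec u_n^+(0),\vec u_n^+(1)$ (or bounding the initial vector by its first block iterate) requires $(C_n^*)^{-1}$. The diagonal of $C_n$ is $\hat v_{l(n)}$, which is exponentially small, so the transfer matrices $L_{z,v_n}$ are not uniformly bounded and the constant is not harmless.

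Second, pairing $aU_n+bV_n$ against the frame only gives $\|(a,b)\|(1-o(1))\le\|S_n\|\,\|(U_n,V_n)\|\,\|aU_n+bV_n\|$. That is, $\sigma_{\min}(U_n,V_n)\gtrsim\|(U_n,V_n)\|^{-1}$, and nothing bounds $\|(U_n,V_n)\|$ uniformly. The center vectors are subexponentially growing solutions spread over a window of length $2l(n)$, and the available bound is only $\|O_n\|^2\le C(\epsilon)e^{\epsilon n}$ (\eqref{irvp}, \eqref{irv}). A lower bound of size $c\,e^{-\epsilon n}\delta$ cannot beat the error $C|E-E_0|\delta$ as $n\to\infty$ at fixed $z$. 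You do need $n\to\infty$ at fixed $z$, both to absorb $e^{-cn}$ and to pass to the limit, so the argument breaks exactly here.

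The missing idea is to test positivity on a fixed finite window of nonnegative indices, where everything converges. The paper restricts to the $2n_0$ entries $U_n^{n_0},V_n^{n_0}$, i.e. indices $0,\dots,2n_0-1$, which lie in $\vec u_n^+(0)$ and hence in the forward sum. By \eqref{irv5} these converge to limits $U^{n_0},V^{n_0}$. The limits are linearly independent because the corresponding limiting kernel, which (up to a phase shift) is $-(U^{n_0},V^{n_0})J(U^{n_0},V^{n_0})^*$, has rank two by Lemma \ref{add2aaa}. This gives uniform norms and the uniform angle \eqref{pku15}, hence \eqref{202512gjycon} with $c$ independent of $n$, after which your concluding steps go through.

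The symplectic normalization \eqref{gjy2ncon} over the whole growing window does not by itself prevent the restrictions of $U_n,V_n$ to a fixed window from degenerating as $n\to\infty$. That is why the rank-two input from the limiting kernel is needed.
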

\begin{pf}
By \eqref{+}, for any $z=E+i\delta \in R_{\delta_2}(E_0)\cap \mathbb{H}$, we have that
\begin{equation}\label{20242gjycon}
\Im \left(\vec{u}_n^{+}(z,\theta,0)\right)^*C_n^*\vec{u}_n^+(z,\theta,-1)=\delta\sum\limits_{m=0}^{\infty}\|\vec{u}_n^+(z,\theta,m)\|^2,
\end{equation}
where 
\begin{equation*}
\begin{pmatrix}\vec{u}_n^+(z,\theta,k)\\ \vec{u}_n^+(z,\theta,k-1)\end{pmatrix}=(L_{z,v_n})_{dk}(\theta)u_n^+(z,\theta).
\end{equation*}

On the other hand, by \eqref{ff1con}, Proposition \ref{newwe'conv} and the same argument as in \eqref{20242gjy1}, we have
\begin{align}\label{20242gjy1con}
\nonumber &\left|\Im \left(\vec{u}_n^{+}(z,\theta,0)\right)^*C_n^*\vec{u}_n^+(z,\theta,-1)-\Im \overline{b^+_n(z,\theta,0)}b^+_n(z,\theta,1))\right|\\
\leq &C(|E-E_0||\delta|+\delta^2+e^{-cn})(|b_n^+(z,\theta,0)|^2+|b_n^+(z,\theta,1)|^2).
\end{align}

Let  $m\leq l(n)$ and set,
$$
\begin{pmatrix}U^{m}_n(E,\theta)& V^{m}_n(E,\theta)\end{pmatrix}=\begin{pmatrix}
U_n(E,\theta,2m-1)&V_n(E,\theta,2m-1)\\
U_n(E,\theta,2m-2)&V_n(E,\theta,2m-2)\\
\vdots&\vdots\\
U_n(E,\theta,0)&V_n(E,\theta,0)
\end{pmatrix}.
$$
By \eqref{irv6} and the fact that $L_{n}(E,\theta+\alpha)=TL_n(E,\theta)T^*$ where $(Tu)(n+1)=u(n)$, $u\in\ell^2(\Z)$, we have
\begin{align*}
&L_n^{m}(E,\theta+m\alpha)=\begin{pmatrix}U^{m}_n(E,\theta)& V^{m}_n(E,\theta)\end{pmatrix} \\
&\cdot (\begin{pmatrix}U_n(E,\theta)& V_n(E,\theta)\end{pmatrix}^*S_n\begin{pmatrix}U_n(E,\theta)& V_n(E,\theta)\end{pmatrix})^{-1}\begin{pmatrix}U^{m}_n(E,\theta)& V^{m}_n(E,\theta)\end{pmatrix}^*.
\end{align*}
By \eqref{gjy2ncon} and \eqref{irv5}, we have that
$$
L^m(E,\theta+m\alpha)=\lim\limits_{n\rightarrow\infty}L^{m}_n(E,\theta+m\alpha)=-\begin{pmatrix}
U^{m}(E,\theta)&V^{m}(E,\theta)\end{pmatrix}J\begin{pmatrix}U^{m}(E,\theta)& V^{m}(E,\theta)\end{pmatrix}^*
$$
where 
$$
\begin{pmatrix}
U^{m}(E,\theta)&V^{m}(E,\theta)\end{pmatrix}=\lim\limits_{n\rightarrow\infty} \begin{pmatrix}
U_n^{m}(E,\theta)&V_n^{m}(E,\theta)\end{pmatrix}.
$$
By Lemma \ref{add2aaa},  for any fixed $n_0$ sufficiently large, ${\rm Rank}(L^{n_0}(E,\theta))=2$. Hence,
$$
l(E,\theta)=\frac{(U^{n_0}(E,\theta))^*V^{n_0}(E,\theta)}{\|U^{n_0}(E,\theta)\| \|V^{n_0}(E,\theta))\|}<1.
$$ 
It follows from  \eqref{irv5} that 
\begin{equation}\label{pku15}
l_n(E,\theta)=\frac{(U_n^{n_0}(E,\theta))^*V_n^{n_0}(E,\theta)}{\|U_n^{n_0}(E,\theta)\| \|V_n^{n_0}(E,\theta))\|}<1-\epsilon
\end{equation}
for some $\epsilon>0$.

By \eqref{ff1con}, \eqref{20242gjycon} and \eqref{pku15}, for all $E\in I_{\delta_2}(E_0)$ and $\theta\in\T$, we have
\begin{align}\label{202512gjycon}
&\Im \left(\vec{u}_n^{+}(z,\theta,0)\right)^*C_n^*\vec{u}_n^+(z,\theta,-1)\\ \nonumber
\geq &c\delta ( |b_n^+(z,\theta,1)|^2\|U^{n_0}_n(z,\theta)\|^2+|b_n^+(z,\theta,0)|^2\|V_n^{n_0}(z,\theta)\|^2)
\end{align}
where $c$ only depend on $E_0$ and $|\delta|\leq \delta_2$.

By \eqref{20242gjy1con} and \eqref{202512gjycon}, there is $\delta_3(E_0)>0$ such that if $|z-E_0|<\delta_3$ and $n$ sufficiently large, then
\begin{align*}
&\left|(|E-E_0||\delta|+\delta^2+e^{-cn})(|b_n^+(z,\theta,0)|^2+|b_n^+(z,\theta,1)|^2)\right|\\
&\leq  \frac{1}{100}\Im \left(\vec{u}_n^{+}(z,\theta,0)^*C^*\vec{u}_n^+(z,\theta,-1)\right)
\end{align*}
which together with \eqref{20242gjycon} and \eqref{20242gjy1con} imply 
$$
\Im b_1^+(z,\theta)\overline{b_0^+(z,\theta)}>0.
$$
We thus get the desired result.
\end{pf}

{\bf Step 3: Convergence argument for Lemma \ref{le2} and Lemma \ref{final1}:}
\begin{Definition}\label{m+defcon}
{\rm For any $z\in R_{\delta_3}(E_0)\backslash\R$, we define
$$
m^n_+(z,\theta)=\frac{b^+_n(z,\theta,1)}{b_n^+(z,\theta,0)},\ \ m^n_-(z,\theta)=\frac{b^-_n(z,\theta,1)}{b_n^-(z,\theta,0)}.
$$
$$
m_+(z,\theta)=\frac{b^+_1(z,\theta)}{b_0^+(z,\theta)},\ \ m_-(z,\theta)=\frac{b^-_1(z,\theta)}{b_0^-(z,\theta)}.
$$
}
\end{Definition}
Note that on every compact $K\subset R_{\delta_3}(E_0)\backslash\R$, by Lemma \ref{nonvanishingcon} and compactness, $\inf_{(z,\theta)\in K\times\T}|b_0^\pm(z,\theta)|>0$, so by \eqref{contib}, a lower bound holds for $|b_n^\pm(z,\theta,0)|$ for sufficiently large $n$. Hence $m^n_\pm(z,\theta)$ and $m_\pm(z,\theta)$ are well-defined and depend analytically on $z$ and $\theta$ on $\left(R_{\delta_3}(E_0)\backslash\R\right)\times \T$. 

Abusing the notations a little bit, let 
\begin{align}\label{bimsa3}
u_n^+(z,\theta)=m_+^n(z,\theta)U_n(z,\theta)+V_n(z,\theta),
\end{align}
\begin{align}\label{bimsa5}
u_n^-(z,\theta)=m_-^n(z,\theta)U_n(z,\theta)+V_n(z,\theta),
\end{align}
$$
u_n^\pm(z,\theta,0)=m^n_\pm(z,\theta)U_n(z,\theta,0)+V_n(z,\theta,0),
$$
$$
u_z^\pm(\theta,0)=m_\pm(z,\theta)U(z,\theta,0)+V(z,\theta,0).
$$
Thus $u_n^\pm(z,\theta)\in E_\pm^n(z,\theta)$. By \eqref{contib}, Lemma \ref{nonvanishingcon} and Definition \ref{m+defcon},  for any compact subset $\mathcal{K}$ of $R_{\delta_3}(E_0)\backslash\R$ and any $z\in \mathcal{K}$, $\theta\in\T$,
\begin{equation}\label{irv101}
|m_\pm(z,\theta)-m_\pm^n(z,\theta)|=\left|\frac{b^\pm_n(z,\theta,1)}{b_n^\pm(z,\theta,0)}-\frac{b^\pm_1(z,\theta)}{b_0^\pm(z,\theta)}\right|\leq C(\mathcal{K})e^{-cn}.
\end{equation}
By \eqref{irv5}, \eqref{irv101}, and Lemma~\ref{pkule}, applied as in \eqref{irv1}, the scalar symplectic pairings are exponentially Cauchy.  Define their limits by  
\begin{equation}\label{bimsa2}
y_\pm(z,\theta)=-\frac{i}{2}\lim\limits_{n\rightarrow\infty}u_n^\pm(z,\theta)^*S_nu_n^\pm(z,\theta)
\end{equation}
for any $z\in R_{\delta_3}(E_0)\backslash\R$ and $\theta\in\T$. Since $S_n^*=-S_n$, $y_\pm(z,\theta)\in\R$.
\begin{Lemma}\label{le2con}
For any $z\in R_{\delta_3}(E_0)\cap \mathbb{H}$, we have that
\begin{align*}
&\int_\T \frac{1}{-\frac{y_+(z,\theta)}{|u_z^+(\theta,0)|^2}-\frac{1}{2}\Im z}d\theta\leq \frac{4\pi\e_{1}(z)}{\Im z},
\end{align*}
\begin{align*}
&\int_\T \frac{1}{-\frac{y_-(\bar{z},\theta)}{|u_{\bar z}^-(\theta,0)|^2}+\frac{1}{2}\Im z}d\theta\leq \frac{4\pi\e_1(z)}{\Im z}.
\end{align*}
\end{Lemma}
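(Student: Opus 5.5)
The plan is to prove Lemma~\ref{le2con} by running Lemmas~\ref{le1} and~\ref{le2} on each finite truncation $v_n$ and then letting $n\to\infty$. Fix $z\in R_{\delta_3}(E_0)\cap\mathbb{H}$. For large $n$, the vector $u_n^+(z,\theta)\in E^n_+(z,\theta)$ from \eqref{bimsa3} generates a solution of $L_{v_n,\alpha,\theta}u=zu$ that is $\ell^2$ at $+\infty$. This is exactly the situation of Lemma~\ref{le1}, with $(\alpha,L_{z,v_n})$ in place of $(\alpha,L^w_{z,v})$.

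\textbf{Step 1: the identity for each truncation.} Applying \eqref{+} to $u_n^+$ gives that $\Im(\vec u_n^+(0))^*C_n^*\vec u_n^+(-1)$ equals $\Im z$ times a tail sum of squared norms. This quantity is, up to the factor $-\frac i2$, the symplectic pairing $u_n^+(z,\theta)^*S_nu_n^+(z,\theta)$. The telescoping argument \eqref{equal}--\eqref{w8} then gives the analogue of \eqref{L1}: the integral over $\T$ of $\ln(1+\Im z\|\vec u_n^+(0)\|^2/(\cdots))$ equals $-2l(n)$ times the central exponent of $(\alpha,L_{z,v_n})$. The normalization by $d=l(n)$ appears only because Lemma~\ref{le1} uses the block length-$d$ cocycle.

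The next point is to replace $l(n)\gamma^n$ by the one-step exponent of $(\alpha,L_{z,v_n})$. I would redo the telescoping directly with the one-step cocycle, so that $\tau_+$ is the one-step multiplier $u_n^+(\theta+\alpha)\tau=L_{z,v_n}(\theta)u_n^+(\theta)$. Then $\int\ln|\tau_+|$ is the center exponent $-\gamma^n_1(\bar z)$ of the limiting center. By Theorem~\ref{1general} and the Remark after it, $\gamma_1^n(z)\to 2\pi\e_1(z)$. Finally, the elementary inequality \eqref{ff3new} converts the log identity into the bound of Lemma~\ref{le2} with right side $2\gamma^n_1/\Im z$.

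\textbf{Step 2: passing to the limit.} I would use \eqref{irv101} and \eqref{irv5} to get $u_n^\pm(z,\theta,0)\to u_z^\pm(\theta,0)$ uniformly on compact subsets of $R_{\delta_3}(E_0)\backslash\R$. By \eqref{bimsa2} and Lemma~\ref{pkule}, the symplectic pairings converge to $y_\pm$. The integrands are positive, and by Lemma~\ref{nonvanishingcon} and \eqref{202512gjycon} their denominators are bounded below uniformly in $n$. Therefore Fatou's lemma, or simply uniform convergence in $\theta$, lets the inequality pass to the limit, with right side $\lim 2\gamma_1^n/\Im z=4\pi\e_1(z)/\Im z$. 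The second inequality is obtained the same way from \eqref{L2}, using $u^-_{\bar z}$ and the pairing with $C_n$ as in Lemma~\ref{le2}.

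\textbf{Main obstacle.} The hard step is the bookkeeping in Step 1. The quantity $y_+$ is defined through $S_n$ on the full $2l(n)$-vector, whereas Lemma~\ref{le1} is phrased with the first block $\vec u(0)$ and the block matrix $C$. I must check that $-\frac i2 u^*S_nu=\Im(\vec u(0))^*C_n^*\vec u(-1)$ and that the one-step telescoping produces exactly $\|u_z^+(\theta,0)\|^2$, meaning the scalar entry at site $0$, with the coefficient $\frac12$. The sign conventions also need care: the expression $-y_+/|u|^2-\frac12\Im z$ must be positive, consistent with \eqref{20241} and \eqref{ffff1}. I would first try the one-step version of the identity $\Im\,\vec u^*(0)C\vec u(1)=-\Im z\sum_{n\ge1}\|\vec u(n)\|^2$, applied at scalar sites, and fix the constants from that.
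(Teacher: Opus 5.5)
Your proposal follows the paper's proof: it uses the one-step identity $T_n^*S_nT_n-S_n=2i\Im z\,\langle\delta_0,\cdot\rangle\delta_0$ for $T_n=L_{z,v_n}$, telescopes against the one-step center multiplier with $\int_\T\ln|\tilde\tau_n^+|\,d\theta=-\gamma_1^n(\bar z)$, converts the log identity via $\ln(1+x)\ge x/(1+x/2)$ into the bound $2\gamma_1^n(\bar z)/\Im z$, and lets $n\to\infty$. Two small slips need fixing. First, the correct normalization is $\frac{i}{2}u^*S_nu=\Im(\vec u(0))^*C_n^*\vec u(-1)=\Im z\sum_{k\ge0}|u(k)|^2$, not $-\frac{i}{2}u^*S_nu$; hence $y_+<0$, every denominator is at least $\frac12\Im z$, and the limit is a plain dominated-convergence step. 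Second, the convergence you need is $\gamma_1^n(\bar z)\to2\pi\e_1(z)$, not $\gamma_1^n(z)$.
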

\begin{pf}
We will only  prove the $y_+(z,\theta)$ case. 

Let  $T_n=L_{z,v_n}$. Then $T^*_nS_nT_n-S_n=2i\Im z\langle \delta_0,\cdot\rangle\delta_0$ Thus we have
\begin{equation}\label{longle}
(T_nu^+_n(z,\theta))^*S_nT_nu^+_n(z,\theta)-(u^+_n(z,\theta))^*S_nu^+_n(z,\theta)=2i\Im z |u_n^+(z,\theta,0)|^2.
\end{equation}

Since $u_n^+(z,\theta)\in E_+^n(z,\theta)$, there is $\tilde{\tau}^+_n(z,\theta)$ such that
\begin{equation}\label{longle1}
L_{z,v_n}u_n^+(z,\theta)=u_n^+(z,\theta+\alpha)\tilde{\tau}^+_n(z,\theta),
\end{equation}
and $\int_\T \ln|\tilde{\tau}_n^+(z,\theta)|d\theta=-\gamma_1^n(\bar{z})$.

By \eqref{longle} and \eqref{longle1}, we have
\begin{align*}
&(Tu^+_n(z,\theta))^*S_nTu^+_n(z,\theta)-
|\tau_n^+(z,\theta-\alpha)|^{-2}(Tu^+_n(z,\theta-\alpha))^*S_nTu^+_n(z,\theta-\alpha)\\
=&2i\Im z |u_n^+(z,\theta,0)|^2.
\end{align*}
It follows that
\begin{align}\label{L1long}
&\int_\T \ln\left(1-\frac{2i\Im z|u^+_n(z,\theta,0)|^2}{(Tu^+_n(z,\theta))^*S_n Tu^+_n(z,\theta)}\right)d\theta=2\gamma_{1}^n(\bar{z}),
\end{align}
By  the same argument as in Lemma \ref{le2} and \eqref{longle},  we have
\begin{align}\label{irv100}
&\int_\T \frac{1}{- \frac{(u^+_n(z,\theta))^*S_nu^+_n(z,\theta)}{2i|u^+_n(z,\theta,0)|^2}-\frac{1}{2}\Im z}d\theta\leq \frac{2\gamma_1^n(\bar{z})}{\Im z}.
\end{align}
Letting $n\rightarrow \infty$, by \eqref{irv5}, \eqref{bimsa2}, \eqref{irv101} and \eqref{irv100}, we get the result.
\end{pf}

\begin{Lemma}\label{final1conbimsa}
For any $z\in R_{\delta_2}(E_0)\cap \mathbb{H}$,
\begin{align*}
\frac{\partial 2\pi \e_1(z)}{\partial\Im z}=\int_\T\Im \frac{\overline{u_{\bar{z}}^-(\theta,0)}u_z^+(\theta,0)}{\overline{m_-(\bar{z},\theta)}-m_+(z,\theta)}d\theta.\end{align*}
\end{Lemma}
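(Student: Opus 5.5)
The plan is to prove the finite-truncation analogue of Lemma \ref{final1} for each $v_n$ and then pass to the limit $n\to\infty$, using the exponential convergence already established in Proposition \ref{newwe'conv}, \eqref{contib} and \eqref{irv101}.

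\emph{Finite truncation.} For each $n$, the dual cocycle $(\alpha,L_{z,v_n})$ is $PH2$ near $E_0$. The relevant center exponent is $\gamma_1^n(\bar z)$; this is the quantity entering \eqref{L1long}, where $\int_\T\ln|\tilde\tau_n^+|=-\gamma_1^n(\bar z)$. Theorem \ref{mg} together with the computation in Lemma \ref{final1} therefore gives
\[
\frac{\partial \gamma^n_1(\bar z)}{\partial \Im z}
=\frac{c_n}{l(n)}\int_\T \Im\frac{\vec u^{-}_{n,\bar z}(0,\theta)^*\vec u^+_{n,z}(0,\theta)}{u^-_{n}(\bar z,\theta)^*S_nu^+_n(z,\theta)}\,d\theta ,
\]
where $c_n$ is a normalizing constant. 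The factor $1/l(n)$ together with the block indexing $\vec u(k)$ is an artifact of grouping $l(n)$ sites into each block. I would redo the Johnson--Moser computation with single-site indexing, using $T_n=L_{z,v_n}$ and $T_n^*S_nT_n-S_n=2i\Im z\,\langle\delta_0,\cdot\rangle\delta_0$ exactly as in \eqref{longle}. This makes the Green's function $\langle\delta_0,(L_{v_n,\alpha,\theta}-z)^{-1}\delta_0\rangle$ appear directly, with no $1/l(n)$ factor.

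\emph{Simplifying the integrand.} Next I would express $u_n^\pm=m_\pm^n U_n+V_n$ as in \eqref{bimsa3}--\eqref{bimsa5}. Proposition \ref{newwe'conv} then gives
\[
u^-_n(\bar z)^*S_nu^+_n(z)=\bigl(\overline{m^n_-(\bar z)}\ \ 1\bigr)\bigl[J+O(e^{-cn})\bigr]\begin{pmatrix}m_+^n(z)\\1\end{pmatrix}=\overline{m^n_-(\bar z)}-m^n_+(z)+O(e^{-cn})
\]
(up to the sign convention for $J$). Here \eqref{gjy2ncon}, which has no $\delta$-error for the pairing between $\bar z$ and $z$, is what makes the error exponentially small rather than $O(\delta)$. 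The denominator is bounded away from zero on compact sets $K\subset R_{\delta_3}(E_0)\cap\mathbb H$ by Lemma \ref{nonvanishingcon}, since $\Im m_+>0>\Im m_-$ there.

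\emph{Passing to the limit.} By \eqref{irv5} and \eqref{irv101}, the numerators $u^\pm_n(\cdot,\theta,0)$ converge uniformly on $K\times\T$ to $u^\pm(\cdot,\theta,0)$, and the denominators converge to $\overline{m_-(\bar z,\theta)}-m_+(z,\theta)$. So the right-hand side converges uniformly in $z\in K$. On the left, Theorem \ref{1general} (with its remark for complex energies) gives $\gamma_1^n(\bar z)\to\gamma_1(\bar z)=2\pi\e_1(z)$. Each $\gamma_1^n(\bar z)$ is the negative of $\int\ln|\tilde\tau^+_n|$ with $\tilde\tau_n^+$ holomorphic in $z$, hence is pluriharmonic, i.e. the real part of a holomorphic function. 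Uniform convergence of harmonic functions on $K$ yields convergence of their derivatives, which gives the identity.

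\emph{Main obstacle.} The hard step is the first one: showing that the derivative of $\gamma_1^n$ is given by the diagonal Green's function at a single site, uniformly in the dimension. The $\delta_d$-component formula of Theorem \ref{mg} carries the dimension-dependent normalization $1/d$. I would instead use the single-site identity \eqref{longle} and the Wronskian-type argument of Lemma \ref{le1}: differentiate $\ln\tilde\tau_n^+$ in $z$, use $\partial_z T_n=-\langle\delta_0,\cdot\rangle\delta_0$-type rank-one structure, and telescope, as in Lemma \ref{wewe1}. This reduces everything to the ratio above, with constants independent of $n$.
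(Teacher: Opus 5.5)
Your plan is correct and matches the paper's proof: apply the finite-range identity of Lemma \ref{final1} to each truncation $v_n$, rewrite the denominator via \eqref{gjy2ncon}, \eqref{bimsa3}, \eqref{bimsa5} as $\bigl(\overline{m^n_-(\bar z,\theta)}-m^n_+(z,\theta)\bigr)(1+O(e^{-cn}))$, and let $n\to\infty$. The step you flag as the main obstacle is settled in the paper in one line by shift invariance: each of the $l(n)$ single-site terms in $\vec u^{-}_{\bar z}(0)^*\vec u^{+}_z(0)$, divided by the symplectic pairing, integrates over $\T$ to the same value, which cancels the $1/l(n)$, so no re-derivation of Johnson--Moser is needed; your argument for passing $\partial_{\Im z}$ through the limit (pointwise convergence from Theorem \ref{1general} plus locally uniform convergence of the right-hand sides) makes explicit a step the paper leaves implicit.
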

\begin{pf}
By \eqref{gjy2ncon}, \eqref{bimsa3} and \eqref{bimsa5}, we have
$$
u_n^-(\bar{z},\theta)S_nu_{n}^+(z,\theta)=(\overline{m^n_-(\bar{z},\theta)}-m^n_+(z,\theta))(1+O(e^{-cn}))
$$
It follows from  Lemma \ref{final1}, shift invariance of $L_{v_n,\alpha,\theta}$ and ergodicity that,
\begin{align*}
\frac{\partial \gamma_1^n(\bar{z})}{\partial\Im z}=\int_\T\Im \frac{\overline{u_n^-(\bar{z},\theta,0)}u_n^+(z,\theta,0)}{\overline{(m^n_-(\bar{z},\theta)}-m^n_+(z,\theta))(1+O(e^{-cn}))}.d\theta.\end{align*}
Letting $n\rightarrow \infty$, we get the result.
\end{pf}

{\bf Step 4: Convergence argument for Lemma \ref{gjyle}:}
We denote
\begin{align*}
f(z,\theta)= & \frac{1}{ -\frac{y_+(z,\theta)}{|u_z^+(\theta,0)|^2}-\frac{1}{2}\Im z}+ \frac{1}{-\frac{y_-(\bar{z},\theta)}{|u_{\bar z}^-(\theta,0)|^2}+\frac{1}{2}\Im z}\\
&-4\Im \frac{\overline{u_{\bar{z}}^-(\theta,0)}u_z^+(\theta,0)}{\overline{m_-(\bar{z},\theta)}-m_+(z,\theta)}.
\end{align*}
$$
g(z,\theta)=32\left(\frac{1}{-\frac{y_+(z,\theta)}{|u_{z}^+(\theta,0)|^2}-\frac{1}{2}\Im z}+ \frac{1}{-\frac{y_-(\bar{z},\theta)}{|u_{\bar{z}}^-(\theta,0)|^2}+\frac{1}{2}\Im z}\right)
$$
Note that a crucial fact is that both $f(z,\theta)$ and $g(z,\theta)$ do not depend on the choice of center basis.  
\begin{Lemma}\label{gjyleconaa}
For every $E\in I_{\delta_3}(E_0)$ and arbitrary $\delta>0$ sufficiently small (depending on $E$), we have
$$
f(E+i\delta,\theta)+\sqrt{\delta} g(E+i\delta,\theta)\geq 0.
$$
\end{Lemma}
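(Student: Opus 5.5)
The plan follows the proof of Lemma \ref{gjyle} in the finite-range setting, but runs it on the limiting objects $y_\pm$, $m_\pm$, $u^\pm_z(\theta,0)$. Everything is done at the level of the truncations $v_n$ first, and only then is $n\to\infty$ taken. By Remark \ref{bimsa13} and the basis-independence of $f$ and $g$ noted above, we are free to choose a convenient center basis pointwise in $\theta$. Fix $z=E+i\delta$ with $E\in I_{\delta_3}(E_0)$ and $\delta$ small.

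We split $\T$ into two sets. On $\Omega_1(z)=\{\theta:|u_z^+(\theta,0)|>|u_{\bar z}^-(\theta,0)|\}$ we use the basis $(U_n,V_n)$ of Proposition \ref{newwe'conv}. On the complement we use a second basis $(U'_n,V'_n)$, obtained from $(U_n,V_n)$ after one step of the cocycle, exactly as in \eqref{gjy2n'}--\eqref{gjy3n'}. For both bases, \eqref{gjy3ncon} gives
\[
-\tfrac{i}{2}\,u_n^{+*}S_nu_n^+=\Im m_+^n+O\bigl((|E-E_0|\delta+\delta^2+e^{-cn})(|m^n_+|^2+1)\bigr).
\]
On the other hand, the truncated analogue of \eqref{20242} (obtained via \eqref{longle}) gives a positivity bound $\gtrsim\delta(|m_+^n|^2+1)$. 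For this we need the uniform transversality of $U_n^{n_0},V_n^{n_0}$ from \eqref{pku15}.

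Dividing one bound by the other, we obtain on each set the relative comparisons
\[
\bigl|y_\pm-\Im m_\pm\bigr|\le C\delta\,|y_\pm| \quad\text{and}\quad \Im m_+-\tfrac12\delta|u^+_z(\theta,0)|^2>\tfrac14 y_+ ,
\]
first at finite $n$ and then in the limit, using \eqref{irv101}, \eqref{bimsa2} and \eqref{contib}. These are the analogues of \eqref{20246}, \eqref{20247} and \eqref{ffff1}. Here $\delta$ must be taken small, depending on $E$, so that the $|E-E_0|\delta$ error is absorbed. In the limit the $e^{-cn}$ terms disappear, which is why no integrable error term is needed here, in contrast to Lemma \ref{gjyle}.

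Next we bound the cross term, using the identity $u^{-*}_{\bar z}Su^+_z=\overline{m_-}-m_+$ (up to $O(e^{-cn})$), as in Lemma \ref{final1conbimsa}. On $\Omega_1$ we use Cauchy--Schwarz and AM--GM as in \eqref{gjyabc1}, and replace the smaller norm by the larger one in the square root. This gives
\[
4\left|\frac{\overline{u^-_{\bar z}(\theta,0)}\,u^+_z(\theta,0)}{\overline{m_-}-m_+}\right|\le \frac{|u^+_z|^2}{\Im m_+-\frac12\delta|u_z^+|^2}+\frac{|u^-_{\bar z}|^2}{\Im m_-+\frac12\delta|u^-_{\bar z}|^2}.
\]
By the relative comparisons, the right-hand side equals the two positive terms of $f$ up to factors $1\pm4\sqrt\delta$, exactly as in \eqref{gjyabc2}--\eqref{gjyabc3}. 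The complement of $\Omega_1$ is handled symmetrically with the primed $m'_\pm$. Summing up gives $f\ge-\sqrt\delta\cdot 8\,(\text{positive terms})$, which is at least $-\sqrt\delta\, g$ because of the factor $32$.

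The main obstacle is the uniformity in $n$ of the passage to the limit. Specifically, the comparison constants $C$ and the transversality constant $c$ must not degrade as $l(n)\to\infty$. This is ensured by Proposition \ref{newwe'conv}, \eqref{pku15} and the bound \eqref{irv} (subexponential in $n$, and absorbed by $e^{-cn}$). I would verify these carefully before taking limits pointwise in $\theta$.
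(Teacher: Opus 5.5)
Your architecture matches the paper's: split $\T$ into $\Omega_1(z)$ and its complement, use $(U_n,V_n)$ on $\Omega_1$ and a one-step-shifted almost-symplectic frame on the complement, derive the analogues of \eqref{20246}--\eqref{ffff1}, bound the cross term by Cauchy--Schwarz and AM--GM, then let $n\to\infty$ with $n$ large depending on $\delta$. The gap is your claim that taking $\delta$ small depending on $E$ absorbs the $|E-E_0|\delta$ error; this is false. The error in \eqref{gjy3ncon} is $C(|E-E_0|\delta+\delta^2+e^{-cn})(|m^n_+|^2+1)$, while your positivity bound is only $c\,\delta(|m^n_+|^2+1)$. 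Dividing gives a relative error $C(|E-E_0|+\delta)+Ce^{-cn}/\delta$, which becomes $C(|E-E_0|+\delta)$ after $n\to\infty$. The term $|E-E_0|\delta$ is of the same order in $\delta$ as the main term, so no choice of small $\delta$ removes it. At a fixed $E\neq E_0$ your comparison factors are therefore $1\pm C(|E-E_0|+\delta)$ rather than $1\pm4\sqrt\delta$. You then only get $f\ge -C(|E-E_0|+\sqrt\delta)\cdot(\text{positive terms})$, which is not $f+\sqrt\delta g\ge0$ as $\delta\to0$. As written, the argument works only when $|E-E_0|\lesssim\sqrt\delta$.

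The missing idea is re-centering, which is what the paper does. First prove the inequality only at $z=E_0+i\delta$, where \eqref{gjy3ncon} gives $C\delta^2+Ce^{-cn}$ and hence relative error $O(\delta)+O(e^{-cn}/\delta)$. Then, for an arbitrary $E\in I_{\delta_3}(E_0)$, redo the construction of Proposition \ref{newwe'conv} with base point $E$. This is legitimate because every point of the interval $I$ in Theorem \ref{Kotani-longrange} has $\omega=1$. Finally, transfer the result back using that $f$ and $g$ do not depend on the center frame. This is also why the statement lets $\delta$ depend on $E$: the constants come from the frame centered at $E$.

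The basis independence you need is the following. The quantities $y_\pm/|u^\pm(\theta,0)|^2$ and $\overline{u^-_{\bar z}(\theta,0)}\,u^+_z(\theta,0)/(\overline{m_-}-m_+)$ are unchanged under rescaling of $u^\pm$ and under any change of $J$-normalized limiting frame. The reason is that in the limit the denominator is exactly the pairing $u^{-*}_{\bar z}Su^+_z$. Remark \ref{bimsa13} concerns $L_n$ and is not the relevant fact.

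A minor point: your comment that no integrable error term is needed ``in contrast to Lemma \ref{gjyle}'' is off. In both lemmas the $\sqrt\delta g$ term comes from the $1\pm4\sqrt\delta$ factors, not from the truncation.
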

\begin{pf}
Note that  we only need to prove this result for $E_0$ since every $E\in I$ plays the same role (for every $E$, we can choose a center basis around $E$ like what we do for $E_0$). Let $z=E_0+i\delta\in I_{\delta_3}(E_0)$, we denote
$$
\Omega_1(z)=\{\theta\in\T:|u^+_{z}(\theta,0)|>|u^-_{\bar{z}}(\theta,0)|\}.
$$
For $\theta\in \Omega_1(z)$, by \eqref{20242gjy1con}, \eqref{202512gjycon} and Definition \ref{m+defcon},   and $n$ sufficiently large (depending on $\delta$), we have
\begin{align}\label{20242gjy1con's}
\nonumber &\left|\Im \left(\vec{u}_n^{+}(z,\theta,0)\right)^*C_n^*\vec{u}_n^+(z,\theta,-1)-\Im m_+^n(z,\theta)\right|\\
\leq &C\delta \Im \left(\vec{u}_n^{+}(z,\theta,0)\right)^*C_n^*\vec{u}_n^+(z,\theta,-1).
\end{align}
Note also that
$$
 \frac{(u^+_n(z,\theta))^*S_nu^+_n(z,\theta)}{-2i}=\Im \left(\vec{u}_n^{+}(z,\theta,0)\right)^*C_n^*\vec{u}_n^+(z,\theta,-1).
$$
By \eqref{20242gjy1con's} and the same argument as in Lemma \ref{gjyle}, we have 
\begin{align}\label{ivr1bimsa}
\nonumber \frac{1}{\Im \frac{m^n_+(z,\theta)}{|u_n^+(z,\theta,0)|^2}-\frac{1}{2}\Im z}(1-4\sqrt{\delta})&\leq \frac{1}{- \frac{(u^+_n(z,\theta))^*S_nu^+_n(z,\theta)}{2i|u^+_n(z,\theta,0)|^2}-\frac{1}{2}\Im z}\\
&\leq \frac{1}{\Im \frac{m^n_+(z,\theta)}{|u_n^+(z,\theta,0)|^2}-\frac{1}{2}\Im z}(1+4\sqrt{\delta}),
\end{align}
\begin{align}\label{ivr1bimsa1}
\nonumber \frac{1}{\Im \frac{m^n_-(z,\theta)}{|u_n^-(z,\theta,0)|^2}+\frac{1}{2}\Im z}(1-4\sqrt{\delta})&\leq \frac{1}{- \frac{(u^-_n(z,\theta))^*S_nu^-_n(z,\theta)}{2i|u^-_n(z,\theta,0)|^2}+\frac{1}{2}\Im z}\\
&\leq \frac{1}{\Im \frac{m^n_-(z,\theta)}{|u_n^-(z,\theta,0)|^2}+\frac{1}{2}\Im z}(1+4\sqrt{\delta}).
\end{align}
Letting $n\rightarrow\infty$ and by the same argument as in Lemma \ref{gjyle} agian, we have  $f(z,\theta)+\sqrt{\delta}g(z,\theta)\geq 0$. For $\theta\in \Omega_1^c$, we can find another basis $\tilde{U}_n(z,\theta), \tilde{V}_n(z,\theta)$, such that Proposition \ref{newwe'conv} holds for $\tilde{U}_n(z,\theta), \tilde{V}_n(z,\theta)$ with \eqref{gjy3ncon} replaced by  
\begin{align}\label{gjy3nconreps}
\nonumber &\left\|\begin{pmatrix}
T\tilde{U}_n(E_0+i\delta,\theta)& T\tilde{V}_n(E_0+i\delta,\theta)
\end{pmatrix}^*S\begin{pmatrix}
T\tilde{U}_n(E_0+i\delta,\theta)& T\tilde{V}_n(E_0+i\delta,\theta)\end{pmatrix}-J\right\|\\
&\leq C\delta^2+Ce^{-cn},\ \ \forall E_0+i\delta\in R_{\delta_1}(E_0), \forall \theta\in\T.
\end{align}
Let $\tilde{m}^n_\pm(z,\theta)$ be the m functions defined with $U_n(z,\theta), V_n(z,\theta)$ replaced by $\tilde{U}_n(z,\theta), \tilde{V}_n(z,\theta)$. By exactly the same argument as above, we have
\begin{align}\label{ivr2}
\nonumber \frac{1}{\Im \frac{\tilde{m}^n_+(z,\theta)}{|u_n^+(z,\theta,0)|^2}+\frac{1}{2}\Im z}(1-4\sqrt{\delta})&\leq  \frac{1}{- \frac{(Tu^+_n(z,\theta))^*S_nTu^+_n(z,\theta)}{2i|u^+_n(z,\theta,0)|^2}+\frac{1}{2}\Im z}\\
&\leq \frac{1}{\Im \frac{\tilde{m}^n_+(z,\theta)}{|u_n^+(z,\theta,0)|^2}+\frac{1}{2}\Im z}(1+4\sqrt{\delta}).
\end{align}
\begin{align}\label{ivr3}
\nonumber \frac{1}{\Im \frac{\tilde{m}^n_-(\bar{z},\theta)}{|u_n^-(\bar{z},\theta,0)|^2}-\frac{1}{2}\Im z}(1-4\sqrt{\delta})&\leq \frac{1}{-\frac{(Tu^-_n(\bar{z},\theta))^*S_nTu^-_n(\bar{z},\theta)}{2i|u^-_n(\bar{z},\theta,0)|^2}-\frac{1}{2}\Im z}\\
&\leq \frac{1}{\Im \frac{\tilde{m}^n_-(\bar{z},\theta)}{|u_n^-(\bar{z},\theta,0)|^2}-\frac{1}{2}\Im z}(1+4\sqrt{\delta}).
\end{align}
Hence letting $n\rightarrow\infty$ and by the same argument as in Lemma \ref{gjyle} again, we have $f(z,\theta)+\sqrt{\delta}g(z,\theta)\geq 0$.

Moreover, by Lemma \ref{le2con}, we have
\begin{equation}\label{finite1}
\int_\T |g(z,\theta)|\leq 1000\pi \frac{\e_1(z)}{\Im z}.
\end{equation}
Let $\e_0>0$ be such that $(\alpha,S_{E_0+i\delta}^v(\cdot+i\e_0))$ is uniformly hyperbolic and $\omega(\alpha,S_{E_0+i\delta}^v(\cdot+i\e_0))=1$ for any $\delta$ sufficiently small, then 
\begin{equation}\label{finite2}
L_{\e_0}(E_0+i\delta)-L_{\e_0}(E_0)-(L_0(E_0+i\delta)-L_0(E_0))=-2\pi\e_1(E_0+i\delta).
\end{equation}
It follows that
$$
\frac{2\pi \e_1(E_0+i\delta)}{\delta}=-\frac{L_{\e_0}(E_0+i\delta)-L_{\e_0}(E_0)}
{\delta}+\frac{L_0(E_0+i\delta)-L_0(E_0)}{\delta}.
$$
The Thouless formula gives
$$
\partial_\delta L_0(E_0+i\delta)=\int_{\R}\frac{\delta}{(E_0-E')^2+\delta^2}dN(E').
$$
Its Poisson boundary limit and its averaged limit agree  for almost every $E_0$, together with the harmonicity of $L_{\e_0}(z)$ in a neighborhood of $E_0$, the above limit exist for almost every $E_0$ as $\delta\rightarrow 0$.

Hence
\begin{equation}\label{asksk11}
\sqrt{\delta}\int_\T |g(z,\theta)|\leq 1000\pi \sqrt{\delta}\frac{\e_1(E+i\delta)}{\delta}\rightarrow 0
\end{equation}
for almost every $E\in I_{\delta_3}(E_0)$.
\end{pf}

{\bf Step 5: $C^\omega$-rotations reducibility:}
By the Herglotz property of $m_\pm(z,\theta)$ and the argument aoove, for almost every $E\in I_{\delta_3}(E_0)$ and almost every $\theta\in\T$, $m_+(E+i\delta,\theta)$ and $\e_1(E+i\delta)/\delta$ has finite boundary limits. Define
$$
A_E=\{\theta:\Im m_+(E+i0,\theta)=0\}.
$$
The real M\"obius covariance of the limiting center make $A_E$ invariant modulo null sets. Irrational rotation ergodicity therefore gives $|A_E|\in \{0,1\}$. If $|A_E|=1$, by Lemma \ref{le2con} and Fatou's lemma, we have $u_{E+i0}^+(\theta,0)=0$ for almost every $\theta\in\T$. Coinvariance and countably many shifts then force every coordinate of the center solution to vanish almost everywhere, which leads to a contradiction.  
Thus for almost every $E\in I_{\delta_3}(E_0)$ and almost every $\theta\in\T$, $\lim_{\delta\rightarrow 0}\Im m_+(E+ i\delta,\theta)>0$. 
Fix an energy for which these boundary conclusions and \eqref{asksk11} hold. By Lemma \ref{gjyleconaa},
\[
F_\delta
=
f(E+i\delta,\theta)+\sqrt{\delta}g(E+i\delta,\theta)
\ge0.
\]
At almost every phase, \(F_\delta\to f(E+i0,\theta)\). Therefore Fatou's lemma, \eqref{asksk11}, andLemma \ref{le2con} and Lemma \ref{final1conbimsa} give
\[
0
\le
\int_{\T}f(E+i0,\theta)\,d\theta
\le
\liminf_{\delta\downarrow0}\int_{\T}F_\delta\,d\theta
=
\liminf_{\delta\downarrow0}\int_{\T}f(E+i\delta,\theta)\,d\theta
\le
0.
\]
Hence \(f(E+i0,\theta)=0\) for almost every phase.

By \eqref{gjy3ncon}, \eqref{bimsa3}, \eqref{bimsa5} \eqref{irv101} and \eqref{bimsa2}, we have
$$
y_\pm(z,\theta)=-\Im m_\pm(z,\theta)+O(|E-E_0|\delta+\delta^2)(1+|m_\pm(z,\theta)|^2).
$$
Thus
\begin{equation}\label{lana1aa}
y_\pm(E\pm i0,\theta)=-\Im m_\pm(E\pm i0,\theta).
\end{equation}
We omit $\theta$ for simplicity, thus
\begin{align}
f(E+i0)= \frac{|u_{E+i0)}^+(0)|^2}{\Im m_+(E+i0)}+\frac{|u_{E-i0}^-(0)|^2}{\Im m_-(E-i0)}-4\Im\frac{\overline{u_{E-i0}^-(0)}u^+_{E+i0}(0)}{\overline{m_-(E-i0)}-m_+(E+i0)}
\end{align}
On the other hand,
\begin{align}\label{gjyabc1'}
&\left|4\Im\frac{\overline{u_{E-i0}^-(0)}u_{E+i0}(0)}{\overline{m_-(E-i0)}-m_+(E+i0)}\right|\leq 4\frac{|u^-_{E-i0}(0,x)||\vec{u}^+_{E+i0}(0,x)|}{\Im m_+(E+i0)+\Im m_-(E-i0)}\\ \nonumber
\leq &\frac{2|u^-_{E-i0}(0)||u^+_{E+i0}(0)|}{\sqrt{\Im m_+(E+i0)\Im m_-(E-i0)}}\leq \frac{|u^+_{E+i0}(0)|^2}{\Im m_+(E+i0)}+\frac{|u^-_{E-i0}(0)|^2}{\Im m_-(E-i0,x)}.
\end{align}
Thus $f(E+i0)=0$ if and only if all the inequalities above are equalities. Fix an energy $E\in I_{\delta_3}(E_0)$ for which the preceding conclusions hold. The set of phases where $u^+_{E+i0}(\theta,0)\ne0$ has positive measure; otherwise covariance would force the entire center solution to vanish. On this set, equality throughout (10.56) gives $m_-(E-i0,\theta)=m_+(E+i0,\theta)$. The set where these two projective lines coincide is invariant modulo null sets under $\theta\mapsto\theta+\alpha$. Ergodicity therefore gives this equality for almost every $\theta$.  

Then by the same argument as in the previous section, $m_+(E,\theta)$ is analytic on $I_{\delta_3}(E_0)\times \T$.

Finally, we  define 
$$
D(E,\theta)=\begin{pmatrix}
0&\frac{|m_+(E,\theta)|}{(\Im m_+(E,\theta))^{1/2}}\\
-\frac{(\Im m_+(E,\theta))^{1/2}}{|m_+(E,\theta)|}&\frac{\Re m_+(E,\theta)}{|m_+(E,\theta)|(\Im m_+(E,\theta))^{1/2}}
\end{pmatrix}.
$$
Then $D(E,\theta+\alpha)^{-1}{\bf C}'(E,\theta)D(E,\theta)\in SO(2,\R)$, hence there exists  $I'\subset I$  and $B\in C^\omega(I'\times \T,SL(2,\R))$ such that  $B(E,\theta+\alpha)^{-1}{\bf C}(E,\theta)B(E,\theta)\in SO(2,\R)$,
thus we complete the proof.
\end{pf}

We immediately have the following
\begin{Corollary}\label{long-range-kotani1}
Let $\alpha\in\R\backslash\Q$ and $v\in C^\omega(\T,\R)$. If there is an interval $I\subset \Sigma^{1,+}_{v,\alpha}$, there exist  $O^j_E\in C^\omega(\T,\C^{\Z})$ for $j=1,2$, $\psi_E,\phi_E\in C^\omega(\T,\R)$, depending analytically on $E\in I'\subset I,$ such that
$$
L_{v,\alpha,\theta}O_E^j(E)=EO_E^j(\theta),
$$
$$
T(O^1_E(\theta),O^2_E(\theta))=(O^1_E(\theta+\alpha),O^2_E(\theta+\alpha))e^{2\pi i\phi_E(\theta)}R_{\psi_E(\theta)}.
$$
\end{Corollary}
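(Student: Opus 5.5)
The plan is to obtain the corollary directly from Theorem \ref{Kotani-longrange} together with the limiting center frame built in Theorem \ref{theorem-main1general}.

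\emph{Step 1: limiting center frame.} Fix $E_0\in I$ and let $O_n(E,\theta)$ be the normalized center frames of part (1) of Theorem \ref{theorem-main1general}. By the row-wise Cauchy estimate $|O_n(j)-O_{n+1}(j)|_{r,h}\le Ce^{-cn}$, each fixed coordinate $j\in\Z$ converges analytically in $(E,\theta)$. We therefore obtain limits $O^j(E,\theta)\in C^\omega(I_{\delta_0}(E_0)\times\T,\C^{\Z})$ for the two columns.

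\emph{Step 2: the limits solve the eigenvalue equation.} The $k$-th entry of $(L_{v,\alpha,\theta}-E)O^j$ is a limit of the corresponding entries of $(L_{v_n,\alpha,\theta}-E)O_n^j$, which vanish because $O_n$ spans solutions of the truncated equation. Two error sources must be controlled:
\begin{itemize}
\item the tail $\sum_{|m|>n}\hat v_m O^j(k+m)$, which is small since $|\hat v_m|\le Ce^{-2\pi h_1|m|}$ while $|O_n(j)|\le C e^{\delta|j|}$ by \eqref{irvp} and part (2) (with $\e_1(E)=0$);
\item the difference $v_n-v$, controlled by \eqref{exact-degree-convergence}.
\end{itemize}
Hence $L_{v,\alpha,\theta}O^j(E,\theta)=EO^j(E,\theta)$.

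\emph{Step 3: covariance relation.} The truncated covariance $L_{E,v_n}(\theta)O_n=O_n(\theta+\alpha)M_n$, read coordinatewise, says that the shifted sequence $TO_n(\theta)$ equals $O_n(\theta+\alpha)M_n(E,\theta)$. Passing to the limit with $|M_n-e^{2\pi i\phi}\mathbf C|_{r,h}\le Ce^{-cn}$ gives
\[
T(O^1,O^2)(\theta)=(O^1,O^2)(\theta+\alpha)\,e^{2\pi i\phi(E,\theta)}\mathbf C(E,\theta).
\]
Now apply Theorem \ref{Kotani-longrange}, which yields $B\in C^\omega(I'\times\T,SL(2,\R))$ with $B(\theta+\alpha)^{-1}\mathbf C B(\theta)=R_{\psi(E,\theta)}$. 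Set $(O_E^1,O_E^2):=(O^1,O^2)B$. Because the scalar $e^{2\pi i\phi}$ commutes with $B$, this gives the stated relation with $\phi_E=\phi$ and $\psi_E=\psi$, both analytic in $E\in I'$. The eigenvalue equation is preserved, since $B$ acts on the right by $\theta$-dependent coefficients only.

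\emph{Main obstacle.} The delicate point is Step 2: making the limit $n\to\infty$ commute with the infinite-range operator. This requires the uniform subexponential bound on the coordinates $O_n(j)$ from part (2), which beats the exponential decay of $\hat v_m$. There is also a minor reconciliation to make, in case Theorem \ref{Kotani-longrange} was applied to the cocycle $\mathbf C'$ coming from the frame $(U_n,V_n)$ rather than to $\mathbf C$. These frames differ by the analytic $\tilde{O}_n=O_nC_nD_n$, whose limit is again a conjugacy. So one should either work directly with the limit of $(U_n,V_n)$, or conjugate back using this limiting matrix; the latter is real-symplectic up to a scalar by Corollary \ref{c24}-type factorization.
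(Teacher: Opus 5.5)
Your proposal is correct and matches the paper's own proof: take the limiting center eigenfunctions of $L_{v,\alpha,\theta}$, with covariance $T(\widetilde O^1,\widetilde O^2)(\theta)=(\widetilde O^1,\widetilde O^2)(\theta+\alpha)e^{2\pi i\phi}\mathbf C(\theta)$, from part (1) of Theorem \ref{theorem-main1general}, then right-multiply by the $B$ of Theorem \ref{Kotani-longrange}. The paper only asserts the limiting step, so your Steps 2--3 supply details it leaves implicit: the dominated-convergence control of the tail, and the reconciliation of $\mathbf C'$ with $\mathbf C$ through the limiting change of frame, which is real-symplectic up to a phase. For the tail, the uniform bound $|O_n(j)|\le Ce^{\delta|j|}$ comes from part (2) applied to the $2n_0$ middle columns of $L_n$ (valid for the full solution sequences, not just the window), not from \eqref{irvp}.
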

\begin{pf}
By (1) of Theorem \ref{theorem-main1general}, for $j=1,2$, there are $O_E^j\in C^\omega(\T,\C^\Z)$ such that
$$
L_{v,\alpha,\theta}\widetilde{O}_E^j(E)=E\widetilde{O}_E^j(\theta),
$$
$$
T(\widetilde{O}^1_E(\theta),\widetilde{O}^2_E(\theta))=(\widetilde{O}^1_E(\theta+\alpha),\widetilde{O}^2_E(\theta+\alpha))e^{2\pi i\phi_E(\theta)}{\bf C}(\theta).
$$
Let $(O^1_E(\theta),O^2_E(\theta))=(\widetilde{O}^1_E(\theta),\widetilde{O}^2_E(\theta))B(E,\theta)$ where $B(E,\theta)$ is from Theorem \ref{Kotani-longrange}, we obtain the result.
\end{pf}

\subsection{All-frequency Puig's argument: the limiting case}
We present the limiting case of the all-frequency Puig's argument. The main strategy is similar to that  in Section \ref{puig2}. We will address the dependence on the dimension.
\begin{Theorem}\label{contra2-general}
Let $\alpha\in\R\backslash\Q$ and $v\in C^\omega(\T,\R)$. There does not exist $B_E\in C^\omega(\T,SL(2,\R))$ and $\psi_E\in C^\omega(\T,\R)$ with $\int_\T \psi_E(\theta)d\theta=0$, such that
$$
B_E^{-1}(\theta+\alpha){\bf C}(E,\theta)B_E(\theta)=R_{\psi_E(\theta)}
$$
where ${\bf C}(E,\theta)$ is the limiting center obtained in Theorem \ref{theorem-main1general}. 
\end{Theorem}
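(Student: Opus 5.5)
We argue by contradiction, mimicking the proof of Theorem \ref{contra2} but working with the finite truncations $v_n$ and the limiting center from Theorem \ref{theorem-main1general}. We fix $E$ (a Type I energy with $\omega(E)=1$, so that part (1) of Theorem \ref{theorem-main1general} applies). We suppose $B_E,\psi_E$ exist with $\int_\T\psi_E=0$. Set $F_n(\theta)=O_n(E,\theta)B_E(\theta)$. Then
$$L_{E,v_n}(\theta)F_n(\theta)=F_n(\theta+\alpha)e^{2\pi i\phi(E,\theta)}R_{\psi_E(\theta)}+\mathcal E_n(\theta),$$
where $|\mathcal E_n|_h\le Ce^{-cn}$ by \eqref{cenest1}, and $F_n^*S_nF_n=J+O(e^{-cn})$. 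The rows of $F_n$ converge exponentially by part (1). Hence the limiting object $F=(F(j))_{j\in\Z}$ is a pair of analytic $\C^{\Z}$-valued functions satisfying an exact rotation-reducibility relation for the infinite-range operator $L_{v,\alpha,\theta}$.

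Next we run Theorem \ref{th1}: solve the cohomological equations for $\phi-\phi_0$ and $\psi_E$, exactly when $\beta(\alpha)<h$ and via Lemma \ref{small} along $q_{n_k}$ otherwise. This produces $H^k=e^{2\pi i\Phi^k}FR_{\Psi^k}$ with the same norm and error bounds as \eqref{ese1}--\eqref{ese2}. Taking Fourier coefficients of a fixed coordinate of $H^k$ gives two almost-localized approximate solutions $\hat h_1,\hat h_2$ of the dual problem $H_{v,\alpha,\phi_0}$-type equation, since duality now relates $L_{v,\alpha,\theta}$ back to the Schrödinger operator. Their errors are $e^{-cq_{n_k+1}}$ and they decay outside $I_k$. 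Lemma \ref{initial} carries over verbatim, using $F^*S F=J$ in the limit to get a nondegenerate Fourier coefficient $\hat h_1(n_0)$.

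The central step is to realize these approximate solutions inside the finite-range truncation $L_{E,v_n}$ with $n=n(k)$ chosen so that the truncation error $e^{-cn}$ is below $e^{-q_{n_k+1}h/40}$ but $l(n)$ is much smaller than the localization scale $e^{-hq_{n_k}/1000}q_{n_k+1}$. We then decompose the initial vectors $\vec u_j(n_0)$ along $E^n_s\oplus E^n_c\oplus E^n_u$. The analogue of Lemma \ref{key1} uses part (3) of Theorem \ref{theorem-main1general}, and this is the main obstacle. The hyperbolicity constants there degenerate with dimension, since the rates are only $2\pi\e_1(E)$ up to $e^{\delta l(n)}$ losses, and the stable/unstable bases carry weights $\sigma_n^i$. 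We therefore must show that the stable and unstable components of $\vec u_j(n_0)$ are exponentially small uniformly in $n$. The approach we would try first is to pair $\vec u_j(n_0)$ symplectically against the explicit unstable/stable bases $v_n^i,u_n^i$. The identity $V_n^{1*}S_nU_n^1=\Sigma^+$ expresses each component as $\sigma_n^i$ times a pairing. That pairing can be transported to a far site $m$ where $u_j$ is tiny, using invariance of $S_n$. The weighted decay estimates in (3) cancel the $\sigma_n^i$, leaving a loss of only $e^{\delta(m+l(n))}$, which is harmless since $m\sim e^{-hq_{n_k}/100}q_{n_k+1}\gg l(n)$.

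With small s/u components in hand, we repeat Proposition \ref{ff5}: the center projections $\vec u_j^c(n_0)$ have Gram determinant $\ge e^{-n_k}$. Otherwise $H^k$ would have nearly collinear columns, contradicting $H^{k*}S H^k=J$. By transport to a far site, as in \eqref{gj2}, their mutual symplectic pairings are $\le e^{-50n_k}$. Finally, we compare with the basis-independent projector $L_n(E,\theta)$ of part (2). For $\e_1(E)=0$, its entries are bounded by $Ce^{\delta|i-j|}$, so the quantity in \eqref{gjnewa} is at most $e^{C\delta l(n)}$. This contradicts the lower bound $e^{40n_k}$, since $l(n)\ll n_k$ by the choice of $n$.
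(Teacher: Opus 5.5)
\textbf{There is a genuine gap at what you call the central step.} The sequences $\hat h_1,\hat h_2$ that you obtain by Fourier-transforming a fixed coordinate of $H^k$ are approximate solutions of the \emph{Schr\"odinger} equation. This is $H_{v_n,\alpha,\phi_0}u=Eu$ if you keep the truncation, or $H_{v,\alpha,\phi_0}u=Eu$ in the limit. Its transfer matrices are the $2\times 2$ matrices $S_E^{v_n}$. The $\hat h_j$ are not solutions, even approximately, of the eigen-equation of the finite-range operator $L_{v_n,\alpha,\theta}$. So there is nothing to decompose along $E^n_s\oplus E^n_c\oplus E^n_u$: vectors built from $\hat h_j$ do not evolve under $L_{E,v_n}$. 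For the same reason, the following steps all act on the wrong cocycle:
\begin{itemize}
\item transporting $S_n$-pairings to a far site by $L_{E,v_n}$-invariance;
\item invoking part (3) of Theorem \ref{theorem-main1general};
\item the final comparison with the projector of part (2).
\end{itemize}
You have imported the structure of the subcritical argument (Theorem \ref{subkey}). There it is the Schr\"odinger cocycle that is rotations-reducible, and the approximate solutions live on the long-range side. Here the roles are reversed.

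Even on its own terms, your choice of scales cannot be met. Requiring $e^{-cn}<e^{-hq_{n_k+1}/40}$ forces $n\gtrsim q_{n_k+1}$. For a general $v$ (where $l(n)=n$) this gives $l(n)\gtrsim q_{n_k+1}$, far larger than the localization scale $e^{-hq_{n_k}/1000}q_{n_k+1}$ and than the $n_k$ against which you need $l(n)\ll n_k$.

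\textbf{The missing idea.} The dual of $L_{v_n,\alpha,\theta}$ is a second-order operator, so the simplicity input is just the Wronskian, and no $PH_2$ splitting is needed.

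The paper keeps $n\approx q_{n_k+1}h/(20c)$. The dimension then enters only through polynomial factors such as the $n^2$ in the error terms, because the matrices $S_E^{v_n}$ are bounded uniformly in $n$. This is also why the analogue of Lemma \ref{initial} is not quite verbatim; see Lemma \ref{initialge}. The argument then runs as follows.
\begin{itemize}
\item Form the $2$-vectors $\vec h_j(m)=(\hat h_j(m),\hat h_j(m-1))$, and the exact solutions $\vec u_j$ of the Schr\"odinger equation with the same data at $m_0$.
\item Show that $|\det(\vec u_1(m_0),\vec u_2(m_0))|$ cannot be too small (Proposition \ref{ff5ge}). Otherwise $\overrightarrow{v}-c_1\overrightarrow{u}$ would be small in $C^0$, contradicting $|\overrightarrow{u}^{*}S_n(\overrightarrow{v}-c_1\overrightarrow{u})|\geq 1/2$, which comes from $F_n^*S_nF_n\approx J$.
\item Since $S_E^{v_n}\in SL(2,\R)$, this determinant is constant in $m$. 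But at $m_1\approx e^{-q_{n_k}^{1/4}}q_{n_k+1}$ both vectors are tiny, which is the contradiction.
\end{itemize}
When $\beta(\alpha)<h$, one instead passes to the limit. This gives two eigenfunctions of the Schr\"odinger operator $H_{v,\alpha,\phi_0}$ at the same energy $E$, contradicting Wronskian simplicity.

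Your setup, up to producing $\hat h_1,\hat h_2$ and a nondegenerate coefficient, is fine. Everything after that should be replaced by this Wronskian argument.
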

\begin{pf}
We prove by contradiction. We omit $E$ for simplicity. Assume there exist $B\in C^\omega(\T,SL(2,\R))$ and $\psi\in C^\omega(\T,\R)$ with $\int_\T \psi(\theta)d\theta=0$, such that
\begin{equation}\label{bimsa6}
B^{-1}(\theta+\alpha){\bf C}(\theta)B(\theta)=R_{\psi(\theta)}.
\end{equation}
First by Theorem \ref{theorem-main1general},  there exist $F_n\in C_h^\omega(\T,GL_{2l(n)\times 2}(\C))$,  $M_n\in C^\omega(\T,GL(2,\C))$ such that
\begin{equation}\label{bimsa7}
L_{E,v_n}(\theta)F_n(\theta)=F_n(\theta+\alpha)M_n(\theta),
\end{equation}
with
\begin{equation}\label{conve2}
|F_n(\theta)^*S_nF_n(\theta)-J|_h,\ \ |M_n-e^{2\pi i\phi}{\bf C}|_h\leq Ce^{-c|n|},
\end{equation}
for some $h,C,c>0$, $\phi\in {\bf C}^\omega(\T,\R)$ and ${\bf C}\in C^\omega(\T,SL(2,\R))$.  Moreover, if we denote
$$
F_n(\theta)=\begin{pmatrix}f_{1n}(\theta,l(n)-1)&f_{2n}(\theta,l(n)-1)\\ f_{1n}(\theta,l(n)-2)&f_{2n}(\theta,l(n)-2)\\\ \vdots\\ f_{1n}(\theta,-l(n))&f_{2n}(\theta,-l(n))\end{pmatrix},
$$ 
then there are $f, g\in C^\omega(\T,\C)$ such that
\begin{equation}\label{ge6}
|f_{1n}(0)-f|_h,\ \ |f_{2n}(0)-g|_h\leq Ce^{-c n}.
\end{equation}
We let $\widetilde{F}_n(\theta)=F_n(\theta)B(\theta)\in C^\omega(\T,GL_{2l(n)\times2}(\C))$ and $\tilde{c}_n\in C^\omega(\T,gl(2,\C))$ be such that 
\begin{equation}\label{cos3}
e^{c_n(\theta)}=I+e^{-2\pi i\phi(\theta)}R_{-\psi(\theta)}B^{-1}(\theta+\alpha)(M_n(\theta)-e^{2\pi i\phi(\theta)}{\bf C}(\theta))B(\theta),
\end{equation}
then we have by \eqref{bimsa6}, \eqref{bimsa7} and \eqref{cos3},
\begin{equation}\label{spe}
L_{E,v_n}(\theta)\widetilde{F}_n(\theta)=\widetilde{F}_n(\theta+\alpha)e^{2\pi i\phi(\theta)}R_{\psi(\theta)}e^{c_n(\theta)}.
\end{equation}
Moreover, by \eqref{conve2} and \eqref{cos3}, we have
\begin{equation}\label{ge15}
|c_n|_h\leq C(e^{2\pi(|\psi|_h+|\phi|_h)}|B|^2_h|M_n-e^{2\pi i\phi}{\bf C}|_h)\leq Ce^{-cn}.
\end{equation}

If we denote
$$
\widetilde{F}_n(\theta)=\begin{pmatrix}\tilde{f}_{1n}(\theta,l(n)-1)&\tilde{f}_{2n}(\theta,l(n)-1)\\ \tilde{f}_{1n}(\theta,l(n)-2)&\tilde{f}_{2n}(\theta,l(n)-2)\\\ \vdots\\ \tilde{f}_{1n}(\theta,-l(n))&\tilde{f}_{2n}(\theta,-l(n))\end{pmatrix},
$$ 
then by \eqref{ge6} and the definition of $\widetilde{F}_n$, there are $\tilde{f}, \tilde{g}\in C^\omega(\T,\C)$ such that
\begin{equation}\label{gjy1}
|\tilde{f}_{1n}(0)-\tilde{f}(\cdot)|_h,\ \ |\tilde{f}_{2n}(0)-\tilde{g}(\cdot)|_h\leq Ce^{-c n}.
\end{equation}

We divide the proof into two cases depending on the arithmetic of the frequency.

Case I: $\beta(\alpha)<h$. Define
$$
\Psi(\theta)=\sum\limits_{k\in\Z\backslash\{0\}}\frac{1}{e^{2\pi ik\alpha}-1}\hat{\psi}(k)e^{2\pi i k\theta},
$$
$$
\Phi(\theta)=\sum\limits_{k\in\Z\backslash\{0\}}\frac{1}{e^{2\pi ik\alpha}-1}\hat{\phi}(k)e^{2\pi i k\theta}.
$$
We have that $\Phi,\Psi\in C^\omega(\T,\R)$ and
$$
\Psi(\theta+\alpha)-\Psi(\theta)=\psi(x),\ \ \Phi(\theta+\alpha)-\Phi(\theta)=\phi(\theta)-\hat{\phi}(0).
$$
For
\begin{equation}\label{cos5}
H_n(\theta)=e^{2\pi i\Phi(\theta)}\widetilde{F}_n(\theta)R_{\Psi(\theta)},
\end{equation}
since $\widetilde{F}_n\in C^\omega(\T,GL_{2l(n)\times 2}(\C))$, we have that  $H_n\in C^\omega(\T,GL_{2l(n)\times 2}(\C))$, and by \eqref{spe},
\begin{equation}\label{ge8}
L_{E,v_n}(\theta)H_n(\theta)=e^{2\pi i\phi_0}H_n(\theta+\alpha)e^{c'_n(\theta)},
\end{equation}
where $\phi_0=\int_\T \phi(\theta)d\theta$ and $e^{c'_n(\theta)}=R_{-\Psi(\theta)}e^{c_n(\theta)}R_{\Psi(\theta)}$.

Let
$$
H_n(\theta)=\begin{pmatrix}
h_{1,1}(\theta)&h_{1,2}(\theta)\\
h_{2,1}(\theta)&h_{2,2}(\theta)\\
\vdots&\vdots\\
h_{2l(n),1}(\theta)&h_{2l(n),2}(\theta)
\end{pmatrix}.
$$
Involving the form of $L_{E,v_n}(\theta)$ and \eqref{ge8}, one has for $j=1,2$,
\begin{equation}\label{e22ge}
-\frac{1}{\hat{v}_{l(n)}}\left(\sum\limits_{m=1}^{2l(n)} \hat{v}_{l(n)-m}h_{m,j}(x)+(2\cos2\pi(\theta)-E)h_{l(n),j}(\theta)\right)-e^{2\pi i\phi_0}h_{1,j}(\theta+\alpha)=e^{2\pi i\phi_0}g_{1,j}(\theta),
\end{equation} 
\begin{equation}\label{e33ge}
h_{m,j}(\theta)=e^{2\pi i\phi_0}(h_{m+1,j}(\theta+\alpha)+g_{m+1,j}(\theta)), \ \ \forall 1\leq m\leq 2l(n)-1,
\end{equation}
where
\begin{equation}\label{ge10}
\begin{pmatrix}
g_{m,1}(\theta)&
g_{m,2}(\theta)
\end{pmatrix}=\begin{pmatrix}
h_{m,1}(\theta+\alpha)&
h_{m,2}(\theta+\alpha)
\end{pmatrix}\left(e^{c'_n(\theta)}-id\right)
\end{equation} 
It follows from \eqref{e22ge} and \eqref{e33ge} that for $j=1,2$ we have

\begin{equation}\label{add1ge}
\sum\limits_{k=-l(n)}^{l(n)} \hat{v}_{k}e^{2\pi ik\phi_0}h_{l(n),j}(\theta+k\alpha)+(2\cos2\pi(\theta)-E)h_{l(n),j}(\theta)=e_{n,j}(\theta).
\end{equation} 
By \eqref{ge10}, \eqref{cos5} and \eqref{gjy1}, we have that
\begin{align*}
|e_{n,j}|_{h-\beta}&\leq C(v)n^2|h_{l(n),j}|_{h-\beta}|c'_n|_h\leq C(v)n^2(|\tilde{f}_E|_{h}+|\tilde{g}_E|_{h})(|\Phi|_{h-\beta}+|\Psi|_{h-\beta})|c'_n|_h\\
&\leq C(v)n^2e^{-cn}\leq e^{-\frac{c}{2}n}.
\end{align*}

Notice that by \eqref{gjy1} and \eqref{cos5}, there are $h_j\in C^\omega(\T,\C)$ such that $h_j(\theta)=\lim\limits_{n\rightarrow\infty}h_{l(n),j}(\theta)$ and letting $n\rightarrow\infty$ in \eqref{add1ge}, we have
\begin{equation}\label{add2}
\sum\limits_{k\in\Z} \hat{v}_{k}e^{2\pi ik\phi_0}h_{j}(\theta+k\alpha)+(2\cos2\pi(\theta)-E)h_{j}(\theta)=0,
\end{equation}
Let $h_{j}(\theta)=\sum_k\hat{h}_j(k)e^{2\pi i k\theta}$ be the Fourier
expansion. By \eqref{add2}, $\left\{\hat{h}_1(n)\right\}_{n\in\Z}$ and $\left\{\hat{h}_2(n)\right\}_{n\in\Z}$ are two eigenfunctions of $H_{v,\alpha,\phi_0}$ corresponding to the same eigenvalue $E$, which contradicts the simplicity of point spectrum. 

Case II: $\beta(\alpha)\geq h$.  Let $p_n/q_n$ be the approximants of the continued fraction expansion of $\alpha$. By the definition  of $\beta(\alpha)$, for any $0<\e<\frac{\beta}{100}$, there is a subsequence $q_{n_k}$ of $q_n$ such that
\begin{equation}\label{bimsa8}
q_{n_{k}+1}>e^{(\beta-\e)q_{n_k}}. \footnote{If $\beta=\infty$, just choose a subsequence such that $q_{n_{k}+1}>e^{100h q_{n_k}}$. }
\end{equation}
Again by Lemma \ref{small}, there are  $\Phi^k,\Psi^k\in C_{h/2}^\omega(\T,\R)$ such that
\begin{equation}\label{sd1ge}
|\Psi^k|_{\frac{h}{2}}+|\Phi^k|_{\frac{h}{2}}\leq 8(q_{n_k}+e^{-\frac{h}{2}q_{n_k}}q_{n_{k}+1})(|\psi|_h+|\phi|_h),
\end{equation}
\begin{equation}\label{sd2ge}
|\psi-(\Psi^k(\cdot+\alpha)-\Psi^k(\cdot))|_{\frac{h}{2}}\leq e^{-\frac{1}{20}q_{n_k+1}h}|\psi|_h,
\end{equation}
\begin{equation}\label{sd2'ge}
|\phi-\phi_0-(\Phi^k(\cdot+\alpha)-\Phi^k(\cdot))|_{\frac{h}{2}}\leq e^{-\frac{1}{20}q_{n_k+1}h}|\phi|_h.
\end{equation}
Let $n=[\frac{1}{20c}q_{n_k+1}h]$ and define   
\begin{equation}\label{cos8}
H_n^k(\theta)=e^{2\pi i\Phi^k(\theta)}\widetilde{F}_n(\theta)R_{\Psi^k(\theta)}\in GL_{2l(n)\times 2}(\C).
\end{equation}
By \eqref{spe}, we have
\begin{equation}\label{allge}
L_{E,v_n}(\theta)H_n^k(\theta)=e^{2\pi i\phi_0}H_n^k(\theta+\alpha)e^{2\pi i\e_{\phi}^k(\theta)}R_{\e_{\psi}^k(\theta)}e^{c^k_n(\theta)},
\end{equation}
where $\e_{\phi}^k(\theta)=\phi(\theta)-\phi_0-(\Phi^k(\theta+\alpha)-\Phi^k(\theta))$, $\e_{\psi}^k(\theta)=\psi(\theta)-(\Psi^k(\theta+\alpha)-\Psi^k(\theta))$ and 
\begin{equation}\label{bimsa9}
e^{c^k_n(\theta)}=R_{-\Psi^k(\theta)}e^{c_n(\theta)}R_{\Psi^k(\theta)}.
\end{equation}
Let
$$
H^k_n(\theta)=:\begin{pmatrix}
h^k_{1,1}(\theta)&h^k_{1,2}(\theta)\\
h^k_{2,1}(\theta)&h^k_{2,2}(\theta)\\
\vdots&\vdots\\
h^k_{2l(n),1}(\theta)&h^k_{2l(n),2}(\theta)
\end{pmatrix}.
$$Let $r_n(\theta):=\begin{pmatrix}
\tilde{f}_{1,n}(\theta,0)& \tilde{f}_{2,n}(\theta,0)\end{pmatrix}$ be the $l(n)$-th row of $\widetilde{F}_n(\theta)$. Then by \eqref{cos8},
$$
\begin{pmatrix}
h_{l(n),1}^k(\theta)& h_{l(n),2}^k(\theta)\end{pmatrix}=e^{2\pi i\Phi^k(\theta)}r_n(\theta)R_{\Psi_k(\theta)}.
$$
By \eqref{gjy1}, 
$$
|r_n|_{h/2}\leq |\tilde{f}|_h+|\tilde{g}|_h+Ce^{-cn}.
$$
Tehn by \eqref{sd1ge}-\eqref{sd2'ge}, we have
\begin{align}\label{ge16}
\nonumber |h_{l(n),j}^k|_{\frac{h}{2}}&\leq C(|\tilde{f}|_h+|\tilde{g}|_h+e^{-cn})e^{C (|\Phi^k|_{h/2}+|\Psi^k|_{h/2})}\\
&\leq C(|\tilde{f}|_h+|\tilde{g}|_h)e^{C(q_{n_k}+e^{-\frac{h}{2}q_{n_k}}q_{n_{k}+1})(|\phi|_h+|\psi|_h)},\ \ j=1,2,
\end{align}
$$
|\e_{\phi}^k|_\frac{h}{2},\ \ |\e_{\psi}^k|_{\frac{h}{2}}\leq e^{-\frac{1}{20}q_{n_k+1}h}(|\phi|_h+|\psi|_h),
$$

Again involving the form of $L_{E,v_n}(\theta)$ and \eqref{ge8}, one has for $j=1,2$,
\begin{equation}\label{e22ge'}
-\frac{1}{\hat{v}_{l(n)}}\left(\sum\limits_{m=1}^{2l(n)} \hat{v}_{l(n)-m}h^k_{m,j}(x)+(2\cos2\pi(\theta)-E)h^k_{l(n),j}(\theta)\right)-e^{2\pi i\phi_0}h^k_{1,j}(\theta+\alpha)=e^{2\pi i\phi_0}g^k_{1,j}(\theta),
\end{equation}  
\begin{equation}\label{e33ge'}
h^k_{m,j}(\theta)=e^{2\pi i\phi_0}(h^k_{m+1,j}(\theta+\alpha)+g^k_{m+1,j}(\theta)), \ \ \forall 1\leq m\leq 2l(n)-1,
\end{equation}
where%
\begin{equation}\label{ge10'}
\begin{pmatrix}
g^k_{m,1}(\theta)&
g^k_{m,2}(\theta)
\end{pmatrix}=\begin{pmatrix}
h^k_{m,1}(\theta+\alpha)&
h^k_{m,2}(\theta+\alpha)
\end{pmatrix}\left(e^{2\pi i\e_{\phi}^k(\theta)}R_{\e_{\psi}^k(\theta)}e^{c^k_n(\theta)}-id\right)
\end{equation}
It follows from \eqref{e22ge'} and \eqref{e33ge'} that for $j=1,2$ we have

\begin{equation}\label{add1ge'}
\sum\limits_{m=-l(n)}^{l(n)} \hat{v}_{m}e^{2\pi im\phi_0}h^k_{l(n),j}(\theta+m\alpha)+(2\cos2\pi(\theta)-E)h^k_{l(n),j}(\theta):=e^k_{n,j}(\theta).
\end{equation} 
By \eqref{ge16} and
\eqref{ge10'}, the same estimate gives
\begin{align}\label{cos7}
&|e^k_{n,j}|_{h/2}\leq C(v)n^2|h^k_{l(n),j}|_{h/2}(|\e_{\phi}^k|_\frac{h}{2}|+|\e_{\psi}^k|_{\frac{h}{2}}+|c^k_n|_{\frac{h}{2}}+e^{-cn})\\ \nonumber
\leq &C(v)n^2(|\tilde{f}|_{h}+|\tilde{g}|_{h})e^{C(|\Phi^k|_{\frac{h}{2}}+|\Psi^k|_{\frac{h}{2}})}(|\e_{\phi}^k|_\frac{h}{2}|+|\e_{\psi}^k|_{\frac{h}{2}}+|c^k_n|_{\frac{h}{2}}+e^{-cn})\leq e^{-\frac{1}{40}q_{n_k+1}h}.
\end{align}

Let $h^k_{l(n),j}(\theta)=\sum_m\hat{h}_j(m)e^{2\pi i m\theta}$ be the Fourier
expansion. By Aubry duality, $\left\{\hat{h}_1(m)\right\}_{m\in\Z}$ and $\left\{\hat{h}_2(m)\right\}_{m\in\Z}$ are two approximate solutions of $H_{v_n,\alpha,\phi_0}u=Eu$, i.e. they satisfy
\begin{equation}\label{new4ge}
\left((H_{v_n,\alpha,\phi_0}-E)\hat{h}_1\right)(m)=\hat{e}_1(m),
\end{equation}
\begin{equation}\label{new4-1ge}
\left((H_{v_n,\alpha,\phi_0}-E)\hat{h}_2\right)(m)=\hat{e}_2(m),
\end{equation}   
where $\{\hat{e}_j(m)\}_{n\in\Z}$ are the Fourier coefficients of $e^k_{n,j}(\theta)$.

We denote
$$
I_{k}=\left[-e^{-\frac{h}{8}q_{n_k}}q_{n_k+1},e^{-\frac{h}{8}q_{n_k}}q_{n_{k}+1}\right].
$$
Then by \eqref{bimsa8} and \eqref{ge16}, for $m\in\Z$ and $j=1,2$, we have
\begin{align}\label{eeee1ge}
\nonumber |\hat{h}_{j}(m)|&\leq (|\tilde{f}|_h+|\tilde{g}|_h)e^{8(q_{n_k}+e^{-\frac{h}{2}q_{n_k}}q_{n_{k}+1})(|\phi|_h+|\psi|_h)} e^{-\frac{h}{4}|m|}\\
&\leq Ce^{-\frac{h}{4}|m-C(q_{n_k}+e^{-\frac{h}{2}q_{n_k}}q_{n_{k}+1})|}.
\end{align}
\begin{Lemma}\label{initialge}
There exists $m_0\in I_k$, such that
\begin{equation}\label{z1-estimate-21ge}
|\hat{h}_{1}(m_0)|\geq e^{\frac{h}{16}q_{n_k}}q^{-8}_{n_k+1}.
\end{equation}
\end{Lemma}
\begin{pf}
Denote
$$
\overrightarrow{u}(\theta)=\begin{pmatrix}
h^k_{1,1}(\theta)\\
h^k_{2,1}(\theta)\\
\vdots\\
h^k_{2l(n),1}(\theta)
\end{pmatrix},\ \ \overrightarrow{v}(\theta)=\begin{pmatrix}
h^k_{1,2}(\theta)\\
h^k_{2,2}(\theta)\\
\vdots\\
h^k_{2l(n),2}(\theta)
\end{pmatrix}
$$
By \eqref{conve2}, \eqref{cos8} and the definition of $\widetilde{F}_n(\theta)$, we have that
\begin{align}\label{cos9}
\left\|\begin{pmatrix}\overrightarrow{u}(\theta)&\overrightarrow{v}(\theta)\end{pmatrix}^*S_{n}\begin{pmatrix} \overrightarrow{u}(\theta)&\overrightarrow{v}(\theta)\end{pmatrix}-J\right\|\leq C|B|_0^2e^{-cn}\leq e^{-\frac{1}{40}q_{n_k+1}h}.
\end{align}
Thus
$$
\|\overrightarrow{u}\|_{L^2}\|S_n\overrightarrow{v}\|_{L^2}\geq \frac{1}{2}
$$
which implies that
$$\|\overrightarrow{u}\|_{L^2}\geq \frac{1}{\|S_n\overrightarrow{v}\|_{L^2}}>\frac{1}{C(v)n\|H_n^k\|_{C^0}}.$$
By \eqref{ge15}, \eqref{bimsa9} and \eqref{ge10'}, one has the long-range version of \eqref{b111},
\begin{align}\label{b111ge}
2n\|\hat{h}_{1}\|_{\ell^2}\geq \|\overrightarrow{u}\|_{L^2}-Cn^2\sup_{n,j}\|g^k_{n,j}\|_{L^2}\geq (2Cn\|H_n^k\|_{C^0})^{-1} \geq cn^{-2}\geq cq_{n_{k}+1}^{-2}.
\end{align}
By \eqref{eeee1ge}, we have that
\begin{align*}
\sum\limits_{m\notin I_k}|\hat{h}_{1}(m)|^2
\leq 2e^{-\frac{h}{4}e^{-\frac{h}{4}q_{n_k}}q_{n_k+1}}.
\end{align*}
By \eqref{b111ge} and the fact that $|I_k|\leq 2e^{-\frac{h}{8}q_{n_k}}q_{n_k+1}$, it follows that there exists $m_0\in I_k$, such that
\begin{align*}
2e^{-\frac{h}{8}q_{n_k}}q_{n_k+1}|\hat{h}_{1}(m_0)|^2 \geq \sum\limits_{m\in I_k}|\hat{h}_{1}(m)|^2 &=\|\hat{h}_{1}\|_{\ell^2}^2-\sum\limits_{n\notin I_k}|\hat{h}_{1}(n)|^2\\
 &\geq  cq_{n_k+1}^{-6}-2e^{-\frac{h}{4}e^{-\frac{h}{4}q_{n_k}}q_{n_k+1}} .
\end{align*}
Hence  there exists $K_0>0$, such that
$$
|\hat{h}_{1}(m_0)|\geq  e^{\frac{h}{16}q_{n_k}}q^{-8}_{n_k+1},
$$
provided $k>K_0$.
\end{pf}

For $j=1,2$, we define
$$
\vec{h}_j(m)=\begin{pmatrix}
\hat{h}_{j}(m)\\
\hat{h}_{j}(m-1)
\end{pmatrix},\ \ \vec{p}_j(m)=\begin{pmatrix}\hat{e}_j(m)\\0\end{pmatrix}.
$$   
Notice that by \eqref{new4ge} and \eqref{new4-1ge}, we have
\begin{align}\label{vechge}
\vec{h}_j(m)=(S_E^{v_n})_{m-m_0}(\phi_0+m_0\alpha)\vec{h}_j(m_0)+\sum\limits_{k=m_0+1}^m (S_E^{v_n})_{m-k}(\phi_0+k\alpha)\vec{p}_j(k-1).
\end{align}
Let $\vec{u}_j(m)$ be formal solutions of $H_{v_n,\alpha,\phi_0}u=Eu$ with initial data $\vec{h}_j(m_0)$,
\begin{align}\label{vecuge'}
\vec{u}_j(m)=(S_E^{v_n})_{m-m_0}(\phi_0+m_0\alpha)\vec{h}_j(m_0),
\end{align}   
By \eqref{cos7}, we have
\begin{equation}\label{cos10}
|\vec{p}_j(m)|\leq e^{-\frac{1}{40}q_{n_k+1}h}e^{-\frac{h}{4}|m|}.
\end{equation}
By \eqref{vechge} and \eqref{cos10}, for $|m|\leq e^{-q^{\frac{1}{4}}_{n_k}}q_{n_k+1}$, we have
\begin{equation}\label{cos12}
|\vec{u}_j(m)-\vec{h}_j(m)|\leq e^{-q^{\frac{1}{4}}_{n_k}}q_{n_k+1}C^{e^{-q^{\frac{1}{4}}_{n_k}}q_{n_k+1}}e^{-\frac{1}{40}q_{n_k+1}h}\leq e^{-\frac{1}{80}q_{n_k+1}h}.
\end{equation}
\begin{Proposition}\label{ff5ge}
There exist $K_1(E,v,\alpha)$ such that if $k>K_1$, then
\begin{equation}\label{fwnge}
\left|\det\begin{pmatrix}
\vec{u}_{1}(m_0)&\vec{u}_2(m_0)
\end{pmatrix}\right| \geq e^{-e^{-q^{\frac{1}{2}}_{n_k}}q_{n_k+1}}.
\end{equation} 
\end{Proposition}
\begin{pf}
Note that
\begin{equation}\label{gjnew2ge}
|\det{\begin{pmatrix}
\vec{u}_1(m_0)&\vec{u}_2(m_0)
\end{pmatrix}}|=\|\vec{u}_1(m_0)\|\cdot \left\|\vec{u}_2(m_0)-\frac{\langle \vec{u}_2(m_0),\vec{u}_1(m_0)\rangle}{\|\vec{u}_1(m_0)\|^2}\vec{u}_1(m_0)\right\|
\end{equation}

We now prove \eqref{fwnge} by contradiction. If
\begin{equation}\label{ge17}
\left|\det\begin{pmatrix}
\vec{u}_{1}(m_0)&\vec{u}_2(m_0)
\end{pmatrix}\right| < e^{-e^{-q^{\frac{1}{2}}_{n_k}}q_{n_k+1}}.
\end{equation}
Using Lemma \ref{initialge}. one has
\begin{equation}\label{ff4ge}
 |\vec{u}_{1}(m_0)|\geq e^{\frac{h}{16}q_{n_k}}q^{-8}_{n_k+1}.
\end{equation}

By \eqref{ge17} and \eqref{ff4ge}, we have that
\begin{equation}\label{errornewge}
\left\|\vec{u}_2(m_0)-\frac{\langle \vec{u}_2(m_0),\vec{u}_1(m_0)\rangle}{\|\vec{u}_1(m_0)\|^2}\vec{u}_1(m_0)\right\|\leq e^{-\frac{1}{10}e^{-q^{\frac{1}{2}}_{n_k}}q_{n_k+1}}
\end{equation}
which means the orthogonal projection of $\vec{u}_2(m_0)$ to the vector  $\vec{u}_1(m_0)$ is large.  

In the following, we consider
\begin{align*}
b(\theta)=\overrightarrow{u}^*(\theta)S_n\left(\overrightarrow{v}(\theta)- c_1\overrightarrow{u}(\theta)\right)\end{align*}
where $c_1=\frac{\langle \vec{u}_2(m_0), \vec{u}_1(m_0)\rangle}{\|\vec{u}_1(m_0)\|^2}$.  By \eqref{ge16} and \eqref{z1-estimate-21ge}, we have
\begin{equation}\label{cos11}
|c_1|\leq C(|\tilde{f}|_h+|\tilde{g}|_h)^2e^{16(q_{n_k}+e^{-\frac{h}{2}q_{n_k}}q_{n_{k}+1})(|\phi|_h+|\psi|_h)}e^{-\frac{h}{16}q_{n_k}}q^{8}_{n_k+1}\leq e^{o(q_{n_k+1})},
\end{equation}

We aim to estimate $b(\theta)$. By \eqref{cos9}, we have $|b(\theta)|\geq \frac{1}{2}$.   Then, we will show as a consequence of
\eqref{errornewge}, that
$
\overrightarrow{v}(\theta)-c_1\overrightarrow{u}(\theta)
$
is small. To this end, we only need to estimate its
Fourier coefficients
$$
\hat{h}_2(m)-c_1\hat{h}_1(m)=\int_\T (h_{n,2}^k(\theta)-c_1h_{n,1}^k(\theta))e^{-2\pi im\theta}d\theta.
$$  
We distinguish two cases:\\

\textbf{Case I:} If $|m|\geq e^{-\frac{h}{100}q_{n_k}}q_{n_k+1}$, then
by \eqref{eeee1ge} and \eqref{cos11}, we have
\begin{align}\label{new1010ge}
\sum_{|m|>e^{-\frac{h}{100}q_{n_k}}q_{n_k+1}}\left|\hat{h}_2(m)-c_1 \hat{h}_1(m)\right|\leq e^{-\frac{h}{4}e^{-\frac{h}{2}q_{n_k}}q_{n_k+1}}
\end{align}
provided $k$ is sufficiently large.\\

\textbf{Case II:} If $|m|\leq e^{-\frac{h}{100}q_{n_k}}q_{n_k+1}$, we set
$$
\tilde{p}_m=\begin{pmatrix}\hat{e}_2(m)-c_1\hat{e}_1(m)\\0\end{pmatrix},
$$  
$$
\tilde{y}_m=\begin{pmatrix}\hat{h}_2(m)-c_1\hat{h}_1(m)\\ \hat{h}_2(m-1)-c_1\hat{h}_1(m-1)\end{pmatrix}.
$$
Then as a result of \eqref{new4ge} and \eqref{new4-1ge}, we have
$$\tilde{y}_{m+1}=S_E^{v_n}(\phi_0+m\alpha)\tilde{y}_{m}+\tilde{p}_m,$$   
which implies that%
$$\tilde{y}_m=(S_E^{v_n})_{m-m_0}(\phi_0+m_0\alpha)\tilde{y}_{m_0}+\sum\limits_{j=m_0+1}^m (S_E^{v_n})_{m-j}(\phi_0+j\alpha)\tilde{p}_{j-1}.
$$

To give an estimate of $\tilde{y}_m$,  first note that by \eqref{cos7}, \eqref{errornewge} and \eqref{cos11}, we have
\begin{eqnarray*}
 |\tilde{p}_m| & \leq&  e^{-\frac{1}{80}q_{n_k+1}h},\\
  |\tilde{y}_{m_0}|&\leq &\left\|\vec{u}_2(m_0)-c_1\vec{u}_1(m_0)\right\|\leq e^{-\frac{1}{10}e^{-q^{\frac{1}{2}}_{n_k}}q_{n_k+1}}.
\end{eqnarray*}
On the other hand, we have
\begin{equation}\label{ubge}
\|(S_E^{v_n})_m\|_{C^0}\leq C^{|m|}, \quad \forall m\in \Z.
\end{equation}  
As a result, if $k$ is sufficiently large depending on $v$, then one can estimate
\begin{eqnarray}\label{new102ge}
 \nonumber |\tilde{y}_m|&\leq& C^{e^{-\frac{h}{100}q_{n_k}}q_{n_k+1}} e^{-\frac{1}{10}e^{-q^{\frac{1}{2}}_{n_k}}q_{n_k+1}}+2e^{-\frac{h}{100}q_{n_k}}q_{n_k+1}C^{e^{-\frac{h}{100}q_{n_k}}q_{n_k+1}} e^{-\frac{1}{40}q_{n_k+1}h}\\
&\leq&e^{-\frac{1}{100}e^{-q^{\frac{1}{2}}_{n_k}}q_{n_k+1}}.
\end{eqnarray}

Consequently,  by \eqref{new1010ge} and \eqref{new102ge} summed up over $|m|\leq e^{-\frac{h}{100}q_{n_k}}q_{n_k+1}$, there is $K_1>0$ such that if $|k|\geq K_1$,
$$
\left\|h_{n,2}-c_1 h_{n,1}\right\|_{C^0}\leq e^{-\frac{h}{8}e^{-\frac{h}{2}q_{n_k}}q_{n_k+1}}.
$$
As a consequence, by \eqref{e33ge} and \eqref{cos11},
\begin{align*}
|b(\theta)|\leq 2n\left(\left\|h_{n,2}-c_1 h_{n,1}\right\|_{C^0}+ 4n^2e^{-\frac{1}{40}q_{n_k+1}h}\right)\leq  e^{-\frac{h}{16}e^{-\frac{h}{2}q_{n_k}}q_{n_k+1}}.
\end{align*}
This contradicts 
$
|b(\theta)|\geq \frac{1}{2}.
$
\end{pf}

Finally by \eqref{eeee1ge}, taking $m_1=[e^{-q^{\frac{1}{4}}_{n_k}}q_{n_k+1}]$, by \eqref{cos12}, we have
$$
|\vec{u}_j(m_1)|\leq |\vec{h}_j(m_1)|+e^{-\frac{1}{80}q_{n_k+1}h}\leq e^{-\frac{h}{8}e^{-q^{\frac{1}{4}}_{n_k}}q_{n_k+1}}.
$$
It follows
\begin{equation}\label{gj2ge}
\det\begin{pmatrix}
\vec{u}_{1}(m_0)&\vec{u}_2(m_0)
\end{pmatrix}=\det\begin{pmatrix}
\vec{u}_{1}(m_1)&\vec{u}_2(m_1)
\end{pmatrix}\leq Ce^{-\frac{h}{8}qe^{-^{\frac{1}{4}}_{n_k}}q_{n_k+1}}.
\end{equation}
which contradicts  \eqref{fwnge} if $k$ is sufficiently large.

\end{pf}

\subsection{Proof of the (super)critical case of Theorem \ref{main12}}
We prove by contradiction. Assume there is $I\subset\overline{\Sigma_{v,\alpha}^{1,+}}$. We distinguish two cases.

Case I: there is $E_0\in I$ such that $L(E_0)>0$, then by continuity of
  Lyapunov exponents and openness of type I energies, there is $I'\subset I$ such that $L(E'_0)>0$ and $\omega(E_0')=1$ for any $E_0'\in I'$.  

Case II: $L(E)=0$ and $\omega(E)=1$ for all $E\in I$, then again by openness of type I energies, there is $I'\subset I$ such that $\omega(E)=1$ for all $E\in I'$.  

By Theorem \ref{Kotani-longrange},  there are $B\in C^\omega(I''\times \T,SL(2,\R))$ and $\psi\in C^\omega(I''\times \T,\R)$  for some $I''\subset I'$, such that
\begin{equation*}
B(E,\theta+\alpha)^{-1}{\bf C}(E,\theta)B(E,\theta)=R_{\psi(E,\theta)}
\end{equation*}
where ${\bf C}$ is the limiting center obtained in Theorem \ref{theorem-main1general}. 

 By Proposition~\ref{prop:mean-rotation-nonconstant},
the function
\[
E\longmapsto
\int_{\mathbb T}\psi_E(\theta)\,d\theta
\]
is nonconstant. Hence there exists $E_0\in I'''$ such that 
 $\int_\T
\psi_{E_0}(x)dx=k_1\alpha+k_2,$ for some $k_1,k_2\in\Z.$ We now define
$\widetilde{B}_{E_0}(x)=B_{E_0}(x)R_{k_1x}$, and obtain
$$
\widetilde{B}_{E_0}^{-1}(\theta+\alpha){\bf C}(E_0,\theta)\widetilde{B}_{E_0}(\theta)=R_{\psi_{E_0}(x)-k_1\alpha}.
$$
Then $\int_\T \left(\psi_{E_0}(x)-k_1\alpha-k_2\right)dx=0$ contradicting Theorem \ref{contra2-general}. \qed

\section{Long-range Puig's agument: proof of the subcritical case of Theorem \ref{main11}}\label{slongpuig}
In this subsection, we give the  proof of the remaining part of Theorem \ref{main12}. We denote
$$
\Sigma^{1,0}_{v,\alpha}=\{E\in
\Sigma_{v,\alpha}:L_0^v(E)=\omega(E)=0, \bar{\omega}(E)=1\}.
$$
\subsection{The classical Kotani theory}
\begin{Theorem}[\cite{aj}]\label{Kotani-st}
Let $\alpha\in\R\backslash\Q$ and $v\in C^\omega(\T,\R)$, if there is an interval $I\subset\Sigma_{v,\alpha}^{1,0}$, then there are $B_E\in C^\omega(\T,SL(2,\R))$ and $\psi_E\in C^\omega(\T,\R)$ depending analytically on $E\in I$, such that
$$
B_E^{-1}(x+\alpha)S^v_E(x)B_E(x)=R_{\psi_E(x)}.
$$
\end{Theorem}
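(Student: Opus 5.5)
This is the classical Kotani-theory route, as used in \cite{aj}. The input is Theorem \ref{kotani}, applied to the Schr\"odinger cocycle $(\alpha,S_E^v)$ itself with $w=v$.

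\emph{Holomorphic extension of the $m$-function.} On $I\subset\Sigma^{1,0}_{v,\alpha}$ we have $L(E)=0$. Passing to an open subinterval if necessary, Theorem \ref{kotani} applies. It gives that $E\mapsto m(E,x)$ extends holomorphically across $I$ with values in $\mathbb H$, and that $m$ is jointly analytic on $(\C\backslash(\R\backslash I))\times\T$. In particular, for real $E\in I$ the function $x\mapsto m(E,x)$ is real analytic with $\Im m(E,x)>0$, and it depends analytically on $E$. The invariance \eqref{minvariantsection}, $S_E^v(x)\cdot m(E,x)=m(E,x+\alpha)$, extends by analytic continuation to real $E\in I$.

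\emph{Conjugation.} Set
$$
B_E(x)=\begin{pmatrix}(\Im m)^{1/2}&\Re m\,(\Im m)^{-1/2}\\0&(\Im m)^{-1/2}\end{pmatrix},\qquad m=m(E,x).
$$
Then $B_E(x)\in SL(2,\R)$, $B_E(x)\cdot i=m(E,x)$, and $B_E$ is analytic in $(E,x)$.

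The conjugated matrix $B_E(x+\alpha)^{-1}S_E^v(x)B_E(x)\in SL(2,\R)$ fixes $i$ in the M\"obius action. It therefore lies in $SO(2,\R)$, so it equals $R_{\psi_E(x)}$ for a continuous angle function $\psi_E$.

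\emph{Main obstacle: an analytic lift of the angle.} This step is mild. The entries of the rotation are analytic in $(E,x)$, and $\T$ is one-dimensional, so $\psi_E$ can be chosen analytic on $I\times\T$. The only issue is whether the angle winds around the circle. If the rotation map has nonzero degree $k$, replace $B_E$ by $B_E R_{kx/2}$; this absorbs the winding while keeping the conjugacy in $SL(2,\R)$ (or $PSL(2,\R)$, cf. \eqref{rotation number}). Since the degree is constant in $E$, a single correction works uniformly on $I$, and the resulting $\psi_E$ is analytic in both variables, as claimed.
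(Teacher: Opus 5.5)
Your route is the one the paper relies on. The statement is quoted from \cite{aj}, whose argument is Kotani's holomorphic extension of $m$ across the zero-Lyapunov interval (Theorem \ref{kotani}), followed by conjugation with a real matrix sending $i$ to $m(E,x)$. The paper uses the equivalent matrix $D_E$ in Section 7, and your upper-triangular $B_E$ works equally well. Everything up to obtaining $B_E(x+\alpha)^{-1}S_E^v(x)B_E(x)\in SO(2,\R)$ is correct.

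Your last paragraph, however, contains a step that fails. Replacing $B_E$ by $B_ER_{kx/2}$ does not remove winding: since $R_{-k(x+\alpha)/2}R_{\psi_E(x)}R_{kx/2}=R_{\psi_E(x)-k\alpha/2}$, this conjugation only shifts $\psi_E$ by the constant $-k\alpha/2$. It changes the mean of $\psi_E$, which is what the degree formula for the rotation number records and how the paper later uses $R_{k_1x}$, namely to normalize $\int_\T\psi_E$. It does not change the degree of $x\mapsto R_{\psi_E(x)}$. In addition, $R_{kx/2}$ is not $1$-periodic when $k$ is odd.

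Fortunately no correction is needed. Degree is additive in $\pi_1(SL(2,\R))\cong\Z$, and $B_E$ is null-homotopic: it takes values in the contractible group of real upper-triangular matrices with positive diagonal. Also, $S_E^v$ is null-homotopic via $t\mapsto\begin{pmatrix}E-tv&-1\\1&0\end{pmatrix}$. Hence $x\mapsto B_E(x+\alpha)^{-1}S_E^v(x)B_E(x)$ has degree $0$. So the real-analytic lift $\psi_E$ on $I\times\R$ automatically satisfies $\psi_E(x+1)=\psi_E(x)$ and lies in $C^\omega(\T,\R)$. With this justification in place of your $R_{kx/2}$ remedy, the proof is complete.
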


\subsection{All-frequency Puig's argument for the dual long-range case}
\begin{Theorem}\label{subkey}
 Let $\alpha\in\R\backslash\Q$, $v\in C^\omega(\T,\R)$ and $E\in \Sigma_{v,\alpha}^{1,0}$. There does not exist $B_E\in C^\omega(\T,SL(2,\R))$ and $\psi_E\in C^\omega(\T,\R)$ with $\int_\T \psi_E(x)dx=0$, such that
$$
B_E^{-1}(x+\alpha)S^v_E(x)B_E(x)=R_{\psi_E(x)}.
$$
\end{Theorem}
\begin{pf}
We argue by contradiction. Let $h$ be the analytic radius of $B_E$ and $p_l/q_l$ be the approximants in the continued fraction expansion of $\alpha$. Let $q_{s_k}$ be a subsequence of $q_l$ such that
\begin{equation}\label{tan1}
q_{s_k+1}>e^{hq_{s_k}}.
\end{equation}
If such a sequence is finite, let $q_{s_{k_0}}$ be the last one satisfying \eqref{tan1}. If no such $k_0$ exists, set $k_0=0$ and $q_{s_0}=q_{s_0+1}=1$. Then we set  ${s_k}={s_{k+1}}=\infty$ for any $k>k_0$. 

We define  $N_k=[\frac{q_{s_k+1}}{6}]$ and let $\phi_k(x)$ be an approximate  solution of the cohomological equation,
$$
\phi_k(x)=\sum\limits_{j=-N_k}^{-1}\frac{\hat{\psi}_E(j)}{e^{2\pi i j\alpha}-1}e^{2\pi ijx}+\sum\limits_{j=1}^{N_k}\frac{\hat{\psi}_E(j)}{e^{2\pi i j\alpha}-1}e^{2\pi ijx}. \footnote{If $k_0=0$, just define $\phi_0(x)=\sum_{j\in\Z, j\neq 0}\frac{\hat{\psi}_E(j)}{e^{2\pi i j\alpha}-1}e^{2\pi ijx}$, then $|\phi_0|_\frac{h}{2}\leq C(\alpha,h)|\psi_E|_h$ and we use this upper bound for \eqref{tan2} and \eqref{tan8}.  Note also for every $k>k_0$ in the finite-resonance case, $\phi_k$ is the full Fourier solution.}
$$

Recall that $R_n=\{lq_n: l\in\Z\}.$ In view of \eqref{tan1} and Proposition \ref{naaa1}, we distinguish two cases:

Case 1: $0< |j|< \frac{q_{s_k+1}}{6}$, $j\notin R_{s_k}$, then $|e^{2\pi i j\alpha}-1|\geq \frac{1}{4q_{s_k}}$,

Case 2: $q_{s_k}\leq |j|< \frac{q_{s_k+1}}{6}$, $j\in R_{s_k}$, then $|e^{2\pi i j\alpha}-1|\geq \frac{1}{2q_{s_k+1}}$.

It follows that
\begin{equation}\label{tan2}
|\phi_k|_{\frac{h}{2}}\leq \begin{cases}
8(q_{s_k}+e^{-\frac{h}{2}q_{s_k}}q_{{s_k}+1})|\psi_E|_h & \text{$s_k<\infty$}\\
16(q_{s_{k_0}}+e^{-\frac{h}{2}q_{s_{k_0}}}q_{{s_{k_0}}+1})|\psi_E|_h &\text{$s_k=\infty$}
\end{cases}.
\end{equation}
Moreover, we have
$$
\psi_E(x)-(\phi_k(x+\alpha)-\phi_k(x))=\sum\limits_{|j|>[q_{s_k+1}/6]}\hat{\psi_E}(j)e^{2\pi ijx},
$$
which implies that
\begin{equation}\label{tan3}
|\psi_E(\cdot)-(\phi_k(\cdot+\alpha)-\phi_k(\cdot))|_{\frac{h}{2}}\leq e^{-\frac{1}{20}q_{s_k+1}h}|\psi_E|_h.
\end{equation}

Let $B^k(x)=B_E(x)R_{\phi_k(x)}$. Then
\begin{equation}\label{alage}
B^k(x+\alpha)^{-1}S_E^{v}(x)B^k(x)=R_{\e_k(x)},
\end{equation}
where   $\e_k\in C_{h/2}^\omega(\T,\R)$  and
\begin{equation}\label{sinnew1}
\e_k(x)=\psi_E(x)-(\phi_k(x+\alpha)-\phi_k(x)).
\end{equation}
By \eqref{tan2}-\eqref{sinnew1}, we have
$$
|B^k|_0\leq C(v,\alpha),
$$
\begin{equation}\label{tan8}
|B^k|_{\frac{h}{2}}\leq 
\begin{cases}
C(v,\alpha)e^{16\pi(q_{s_k}+e^{-\frac{h}{2}q_{s_k}}q_{{s_k}+1})|\psi_E|_h} & \text{$s_k<\infty$}\\
C(v,\alpha)e^{32\pi(q_{s_{k_0}}+e^{-\frac{h}{2}q_{s_{k_0}}}q_{{s_{k_0}}+1})|\psi_E|_h} &\text{$s_k=\infty$ }
\end{cases},
\end{equation}
\begin{equation}\label{tan9}
|\e_{\psi_E}^k|_{\frac{h}{2}}\leq e^{-\frac{1}{20}q_{s_k+1}h}|\psi_E|_h.
\end{equation}
We denote
$$
B^k(x)=\begin{pmatrix}
b^k_{11}(x)&b^k_{12}(x)\\
b^k_{21}(x)&b^k_{22}(x)
\end{pmatrix}.
$$
By  \eqref{alage}, for $j=1,2$ we have
\begin{equation}\label{tan4}
(E-v(x))b^k_{1j}(x)-b_{2j}^k(x)=b_{1j}^k(x+\alpha)+e^k_{1j}(x),
\end{equation}
\begin{equation}\label{tan5}
b^k_{1j}(x)=b^k_{2j}(x+\alpha)+e_{2j}^k(x).
\end{equation}
where 
\begin{equation}\label{tan6}
\begin{pmatrix}
e^k_{11}(x)&e^k_{12}(x)\\
e^k_{21}(x)&e^k_{22}(x)
\end{pmatrix}=\begin{pmatrix}
b^k_{11}(x+\alpha)&b^k_{12}(x+\alpha)\\
b^k_{21}(x+\alpha)&b^k_{22}(x+\alpha)
\end{pmatrix}\left(R_{\e_k(x)}-id\right)
\end{equation}
Set $e_j^k(x)=e_{2j}^k(x-\alpha)-e_{1j}^k(x)$. Eliminating
$b_{2j}^k$ from \eqref{tan4}--\eqref{tan5} gives
\begin{equation}\label{subge}
b^k_{1j}(x+\alpha)+b^k_{1j}(x-\alpha)+(v(x)-E)b^k_{1j}(x)=e_j^k(x).
\end{equation}   
 Moreover, by  \eqref{tan8}, \eqref{tan9} and \eqref{tan6}, we have
\begin{equation}\label{tan7}
|e^k_j|_{\frac{h}{2}}\leq C|B^k|_{\frac{h}{2}}e^{-\frac{1}{20}q_{s_k+1}h}|\psi_E|_h\leq e^{-\frac{1}{40}q_{s_k+1}h},
\end{equation}   

Let  $b^k_{1j}(x)=\sum_m\hat{b}^k_j(m)e^{-2\pi i mx}$ be the Fourier
expansion. By \eqref{subge}, $\left\{\hat{b}^k_1(m)\right\}_{m\in\Z}$ and $\left\{\hat{b}^k_2(m)\right\}_{m\in\Z}$ are two approximate solutions of $L^{2\cos}_{v,\alpha,0}u=Eu$, i.e. they satisfy  
\begin{equation}\label{new4age}
\left((L^{2\cos}_{v,\alpha,0}-E)\hat{b}^k_1\right)(m)=\hat{e}^k_1(m),
\end{equation}
\begin{equation}\label{new4-1age}
\left((L^{2\cos}_{v,\alpha,0}-E)\hat{b}^k_2\right)(m)=\hat{e}^k_2(m),
\end{equation}
where $\{\hat{e}^k_j(m)\}_{m\in\Z}$ are defined by $e_j^k(x)=\sum_m\hat{e}_j^k(m)e^{-2\pi imx}$.

Recall that $v\in C_{h_1}^\omega(\T,\R)$. If $s_k$ is an infinite sequence,  we let 
$$
n_k=[\frac{h}{100\pi h_1}q_{s_k+1}] ,
$$
$$
r_k=[32(q_{s_k}+e^{-\frac{h}{2}q_{s_k}}q_{{s_k}+1})(|\psi|_h+1)], \ \ R_k=[\frac{1}{80}hq_{s_k+1}]
$$
and
$$
I_{k}=\left[-\frac{[r_k]}{h},\frac{[r_k]}{h}\right].
$$
If $q_{s_k}$ is a finite subsequence, let $q_{s_{k_0}}$ be the last denominator in the continue fractional expansion of $\alpha$ satisfying \eqref{tan1}. If $k>k_0$, we set $r_0=100$ and pick $R_k=\infty$ and 
\begin{equation}\label{sequence}
r_k=r_{k_0}k^2,\ \ n_k=r_{k_0}k^{100}.
\end{equation}
In this way, $r_k,R_k,n_k\rightarrow\infty$. We always have $n_k\gg r_k$ fot large $k$.

Then by \eqref{tan8}, for $m\notin I_k$ and $j=1,2$, we have
\begin{align}\label{eeee1age}
\sum_{m\notin I_k}|\hat{b}^k_{j}(m)|\leq C(v,\alpha)e^{-r_k/2}.
\end{align}
\begin{Lemma}\label{gejito}
For $j=1,2$, there exists $m_j\in I_k$, such that
\begin{equation}\label{z1-estimate-21age}
|\hat{b}^k_{j}(m_j)|\geq r_k^{-1}.
\end{equation}
\end{Lemma}
\begin{pf}
Note that $B^k\in SL(2,\R)$, thus
\begin{equation}\label{sinnew2}
b_{11}^k(x)b_{22}^k(x)-b_{12}^k(x)b_{21}^k(x)=1
\end{equation}
Thus by \eqref{tan8}, \eqref{tan9}, \eqref{tan5} and \eqref{sinnew2},
\begin{equation}\label{bimsa10}
2|\hat{b}_{j}^k|_{\ell^2}=2|b_{1j}^k|_{L^2}\geq |b_{1j}^k|_{L^2}+|b_{2j}^k|_{L^2}-|g_{2j}^k|_{L^2}\geq |B^k|_0^{-1}-\|B^k\|_0e^{-\frac{1}{20}q_{k+1}h}|\psi_E|_0\geq c.
\end{equation}
By \eqref{eeee1age} and \eqref{bimsa10},  there exist $m_j\in I_k$ such that
$$
|\hat{b}_j^k(m_j)|\geq \sqrt{r_k^{-1}(c-Ce^{-r_k/2})}\geq r_k^{-1}.
$$ 
\end{pf}

By \eqref{new4age} and \eqref{new4-1age}, for $j=1,2$, we have
\begin{equation}\label{cot3}
\left((L^{2\cos}_{v_{n_k},\alpha,0}-E)\hat{b}^k_j\right)(m)=\hat{e}^k_j(m)-\sum_{l=n_k+1}^\infty \hat{v}_l \hat{b}_j^k(m+l)-\sum_{l=-\infty}^{-n_k-1} \hat{v}_l \hat{b}_j^k(m+l)=:\hat{f}_j^k(m).
\end{equation}
By \eqref{tan8}, \eqref{tan7} and \eqref{cot3}, we have
\begin{align}\label{cot4}
\nonumber |\hat{f}_j^k(m)|&\leq |\hat{e}_j^k(m)|+\sum_{l=n_k+1}^\infty e^{-2\pi|l|h_1}|B^k|_0+\sum_{l=-\infty}^{-n_k-1}e^{-2\pi |l|h_1}|B^k|_0\\
&\leq e^{-\frac{1}{40}q_{s_k+1}h}+Ce^{-2\pi(n_k+1)h_1}\leq Ce^{-2\pi n_k (h_1-\delta)}.
\end{align}
for any fixed $\delta>0$ and large $k$.

For $j=1,2$ and $i=1,2,\cdots,l(n_k)-1$, we define
\begin{equation}\label{cot1}
\begin{pmatrix}\vec{b}_j(m)\\\vec{b}_j(m-1)
\end{pmatrix}=\begin{pmatrix}
\hat{b}^k_{j}(m_1+(m+1)\cdot l(n_k)-1)\\
\hat{b}^k_{j}(m_1+(m+1)\cdot l(n_k)-2)\\
\vdots\\
\hat{b}^k_{j}(m_1+(m-1)\cdot l(n_k))
\end{pmatrix},\ \ 
\begin{pmatrix}\vec{f}_j(m)\\\vec{f}_j(m-1)
\end{pmatrix}=\begin{pmatrix}
\hat{f}^k_{j}(m_1+(m+1)\cdot l(n_k)-1)\\
\hat{f}^k_{j}(m_1+(m+1)\cdot l(n_k)-2)\\
\vdots\\
\hat{f}^k_{j}(m_1+(m-1)\cdot l(n_k))
\end{pmatrix},
\end{equation}
\begin{equation}\label{cot2}
\begin{pmatrix}\vec{u}_i(0)\\ \vec{u}_i(-1)
\end{pmatrix}=\begin{pmatrix}
u_{n_k}^{i}(m_1\alpha,l(n_k)-1)\\
u_{n_k}^{i}(m_1\alpha,l(n_k)-2)\\
\vdots\\
u_{n_k}^{i}(m_1\alpha,-l(n_k))
\end{pmatrix},\ \ \begin{pmatrix}\vec{v}_i(0)\\ \vec{v}_i(-1)
\end{pmatrix}=\begin{pmatrix}
v_{n_k}^{i}(m_1\alpha,l(n_k)-1)\\
v_{n_k}^{i}(m_1\alpha,l(n_k)-2)\\
\vdots\\
v_{n_k}^{i}(m_1\alpha,-l(n_k))
\end{pmatrix}
\end{equation}
where $u_{n_k}^i(m_1\alpha,m)$ and $v_{n_k}^i(m_1\alpha,m)$ are solutions of $L^{2\cos}_{v_{n_k},\alpha,m_1\alpha}u=Eu$  from (3) of Theorem \ref{theorem-main1general} (here we omit $E$ from all notations for simplicity). Let
$$
S=\begin{pmatrix}
0&-C^*\\
C&0
\end{pmatrix},\ \  C=\begin{pmatrix}
\hat{v}_{l(n_k)}&\cdots&\hat{v}_1\\
&\ddots&\vdots\\
&&\hat{v}_{l(n_k)}
\end{pmatrix}.
$$
Let  $\vec{u}_i(m)$ and $\vec{v}_i(m)$ be solutions of $L^{2\cos}_{v_{n_k},\alpha,m_1\alpha}u=Eu$ with initial data $\begin{pmatrix}\vec{u}_i(0)\\ \vec{u}_i(-1)
\end{pmatrix}$, $\begin{pmatrix}\vec{v}_i(0)\\ \vec{v}_i(-1)
\end{pmatrix}$ repectively.   
\begin{Lemma}\label{sub8}
There are $C,\delta>0$ such that for  $j=1,2$,  we have that
\begin{align}\label{cot10}
\left|\begin{pmatrix}
\vec{b}_j(0)\\ \vec{b}_j(-1)
\end{pmatrix}^*S \begin{pmatrix}
\vec{u}_i(0)\\ \vec{u}_i(-1)
\end{pmatrix}\right| \leq C\sigma_{n_k}^i(m_1\alpha)e^{-n_k \delta}, \ \ i=1,\cdots,l(n_k)-1,
\end{align}
\begin{align}\label{cot11}
\left|\begin{pmatrix}
\vec{b}_j(0)\\ \vec{b}_j(-1)
\end{pmatrix}^*S \begin{pmatrix}
\vec{v}_i(0)\\ \vec{v}_i(-1)
\end{pmatrix}\right| \leq C\sigma_{n_k}^i(m_1\alpha)e^{-n_k \delta}, \ \ i=1,\cdots,l(n_k)-1,
\end{align}
\begin{align}\label{cot12}
\left|\begin{pmatrix}
\vec{b}_j(0)\\ \vec{b}_j(-1)
\end{pmatrix}^*S \begin{pmatrix}
\vec{b}_i(0)\\ \vec{b}_i(-1)
\end{pmatrix}\right| \leq  e^{-n_k h_1/2},\ \ i=1,2,
\end{align}
where $\sigma_n^i(m_1\alpha)$ are from Theorem \ref{theorem-main1general} and $h_1$ is the analytic radius of $v$.
\end{Lemma}
\begin{pf}
By \eqref{cot3}, \eqref{cot1} and \eqref{cot2}, for $j=1,2$ and $i=1,2,\cdots,l(n_k)-1$, we have
\begin{equation}\label{sub1}
C^*\vec{b}_j(m-1)+C\vec{b}_j(m+1)+(B(m_1\alpha+ml(n_k)\alpha)-EI_{l(n_k)})\vec{b}_j(m)=\vec{f}_j(m)
\end{equation}
\begin{equation}\label{sub2}
C^*\vec{u}_i(m-1)+C\vec{u}_i(m+1)+(B(m_1\alpha+ml(n_k)\alpha)-EI_{l(n_k)})\vec{u}_i(m)=0.
\end{equation}
By \eqref{cot4}, we have
\begin{equation}\label{sub3}
|\vec{f}_j(m)| \leq Cn_ke^{-2\pi n_k (h_1-\delta)}.
\end{equation}  
Multiplying both sides  of \eqref{sub1} by $\vec{b}_i(m)$, we have
\begin{align}\label{cot5}
\nonumber &\vec{b}_i(m)^*C^*\vec{b}_j(m-1)+\vec{b}_i(m)^*C\vec{b}_j(m+1)\\
+&\vec{b}_i(m)^*(B(m_1\alpha+ml(n_k)\alpha)-EI_{l(n_k)})\vec{b}_j(m)=\vec{b}_i(m)^*\vec{f}_j(m).
\end{align}
Multiplying both sides  of \eqref{sub1} by $\vec{u}_i(m)$, we have
\begin{align}\label{cot8}
\nonumber &\vec{u}_i(m)^*C^*\vec{b}_j(m-1)+\vec{u}_i(m)^*C\vec{b}_j(m+1)\\
+&\vec{u}_i(m)^*(B(m_1\alpha+ml(n_k)\alpha)-EI_{l(n_k)})\vec{b}_j(m)=\vec{u}_i(m)^*\vec{f}_j(m).
\end{align}
Multiplying both sides  of \eqref{sub2} by $\vec{b}_j(m)$, we have
\begin{align}\label{cot9}
\nonumber &\vec{b}_j(m)^*C^*\vec{u}_i(m-1)+\vec{b}_j(m)^*C\vec{u}_i(m+1)\\
+&\vec{b}_j(m)^*(B(m_1\alpha+ml(n_k)\alpha)-EI_{l(n_k)})\vec{u}_i(m)=0.
\end{align}
Let 
$$
W(m)=\begin{pmatrix}
\vec{b}_j(m)\\ \vec{b}_j(m-1)
\end{pmatrix}^*S \begin{pmatrix}
\vec{u}_i(m)\\ \vec{u}_i(m-1)
\end{pmatrix}.
$$
By \eqref{cot8} and \eqref{cot9}, we have
\begin{equation}\label{sub4}
\left|W(m+1)-W(m)\right|=\|\vec{u}_i(m)^*\vec{f}_j(m)\|\leq \|\vec{u}_i(m)\|\|\vec{f}_j(m)\|
\end{equation} 
Note that $h_1>\e_1(E)>0$ \footnote{Since $E$ is subcritical.}. Choose $\delta$ sufficiently small such that $3\delta<h_1-\e_1(E)$. For $k$ sufficiently large depending on $\delta$, by (3) of Theorem \ref{theorem-main1general}, \eqref{sub3} and \eqref{sub4}, we have that for any $m\geq 1$,
\begin{align*}
&|W(m)-W(0)|\leq \left(\sum_{l=0}^{m-1} \|\vec{u}_i(l)\|\right)Cn_ke^{-2\pi n_k (h_1-\delta)}\\
\leq &Cl(n_k)\sigma_{n_k}^i(m_1\alpha)e^{2\pi l(n_k)(\e_1(E)+\delta)}Ce^{-2\pi n_k (h_1-\delta)}\leq C\sigma_{n_k}^i(m_1\alpha)e^{-n_k \delta}.
\end{align*}
Note that $W(m)\rightarrow 0$ as $m\rightarrow\infty$ by (3) of Theorem \ref{theorem-main1general}  and \eqref{eeee1age}. Hence 
$$
|W(0)|\leq C\sigma_{n_k}^i(m_1\alpha)e^{-n_k \delta}.
$$
This completes  the proof of \eqref{cot10}.  \eqref{cot11} follows in  the same way.

For \eqref{cot12}, we let
$$
W'(m)=\begin{pmatrix}
\vec{b}_j(m)\\ \vec{b}_j(m-1)
\end{pmatrix}^*S \begin{pmatrix}
\vec{b}_i(m)\\ \vec{b}_i(m-1)
\end{pmatrix}.
$$%
Subtracting the two forced equations gives the exact identity
\[
W'(m+1)-W'(m)
=\vec{b}_j(m)^*\vec{f}_i(m)-\vec{f}_j(m)^*\vec{b}_i(m).
\]
Consequently, by \eqref{tan8}, \eqref{sub3} and the Fourier decay,
\begin{align*}
|W'(m)-W'(0)|
&\leq Cn_ke^{-2\pi n_k(h_1-\delta)}
\sum_{l=0}^{m-1}\bigl(\|\vec{b}_i(l)\|+\|\vec{b}_j(l)\|\bigr)\\
&\leq Cn_ke^{-2\pi n_k(h_1-\delta)+Cr_k}
\leq e^{-n_kh_1/2}.
\end{align*}
Here the full $\ell^1$ norms of the Fourier coefficients are at most
$Ce^{Cr_k}$, with $C$ independent of $k$, and $r_k=o(n_k)$.
Since $W'(m)\rightarrow0$ as $m\rightarrow\infty$, we have 
\begin{align*}
|W'(0)|\leq e^{-n_k h_1/2}.
\end{align*}
\end{pf}

\begin{Lemma}\label{key1}
For $j=1,2$, let
$$
\begin{pmatrix}\vec{b}^s_j(0)\\ \vec{b}^s_j(-1)\end{pmatrix}\in E_s^{n_k}(m_1\alpha),\ \ \begin{pmatrix}\vec{b}^c_j(0)\\ \vec{b}^c_j(-1)\end{pmatrix}\in E_c^{n_k}(m_1\alpha),\ \ \begin{pmatrix}\vec{b}^u_j(0)\\ \vec{b}^u_j(-1)\end{pmatrix}\in E_u^{n_k}(m_1\alpha)
$$
be such that
\begin{align*}
\begin{pmatrix}\vec{b}_j(0)\\ \vec{b}_j(-1)\end{pmatrix}=\begin{pmatrix}\vec{b}^s_j(0)\\ \vec{b}^s_j(-1)\end{pmatrix}+\begin{pmatrix}\vec{b}^c_j(0)\\ \vec{b}^c_j(-1)\end{pmatrix}+\begin{pmatrix}\vec{b}^u_j(0)\\ \vec{b}^u_j(-1)\end{pmatrix}.
\end{align*}
Then  there is $C>0$, such that
\begin{equation}\label{erraa}
\left\|\begin{pmatrix}\vec{b}^s_j(0)\\ \vec{b}^s_j(-1)\end{pmatrix}\right\|,\ \ \left\|\begin{pmatrix}\vec{b}^u_j(0)\\ \vec{b}^u_j(-1)\end{pmatrix}\right\|\leq Ce^{-\delta n_k/2}.
\end{equation}
\end{Lemma}
\begin{pf}
By  (3) of Theorem \ref{theorem-main1general}, 
$$
 \begin{pmatrix}
\vec{u}_i(0)\\ \vec{u}_i(-1)
\end{pmatrix},  \ \ i=1,2,\cdots, l(n_k)-1
$$ 
form an orthonormal basis of $E_s^{n_k}(m_1\alpha)$,
$$
 \begin{pmatrix}
\vec{v}_i(0)\\ \vec{v}_i(-1)
\end{pmatrix},  \ \ i=1,2,\cdots, l(n_k)-1
$$ 
form an orthonormal basis of $E_u^{n_k}(m_1\alpha)$ and
\begin{align}\label{cot20}
\nonumber & \begin{pmatrix}
\vec{u}_1(0)&\cdots&\vec{u}_{l(n_k)-1}(0)\\ \vec{u}_1(-1)&\cdots&\vec{u}_{l(n_k)-1}(-1)
\end{pmatrix}^*S\begin{pmatrix}\vec{v}_1(0)&\cdots&\vec{v}_{l(n_k)-1}(0)\\ \vec{v}_1(-1)&\cdots&\vec{v}_{l(n_k)-1}(-1)
\end{pmatrix}\\
=&\begin{pmatrix}\sigma_{n_k}^1(m_1\alpha)\\ &\ddots\\&& \sigma_{n_k}^{l(n_k)-1}(m_1\alpha)\end{pmatrix}.
\end{align}

Assume
$$
\begin{pmatrix}\vec{b}^s_j(0)\\ \vec{b}^s_j(-1)\end{pmatrix}=c_{1,j} \begin{pmatrix}
\vec{u}_1(0)\\ \vec{u}_1(-1)
\end{pmatrix}+\cdots+c_{l(n_k)-1,j} \begin{pmatrix}
\vec{u}_{l(n_k)-1}(0)\\ \vec{u}_{l(n_k)-1}(-1)
\end{pmatrix}.
$$
By Lemma \ref{sub8} and \eqref{cot20}, we have
$$
|c_{i,j}\sigma_{n_{k}}^i(m_1\alpha)|=\left|\begin{pmatrix}
\vec{b}_j(0)\\ \vec{b}_j(-1)
\end{pmatrix}^*S \begin{pmatrix}
\vec{v}_i(0)\\ \vec{v}_i(-1)
\end{pmatrix}\right| \leq C\sigma_{n_k}^i(m_1\alpha)e^{-n_k \delta},
$$
Hence 
$$
\left\|\begin{pmatrix}\vec{b}^s_j(0)\\ \vec{b}^s_j(-1)\end{pmatrix}\right\|^2=\sum_{j=1}^{l(n_k)-1}|c_{i,j}|^2\leq Cn_ke^{-2n_k \delta}.
$$
Similarly, we have the result for $\left\|\begin{pmatrix}\vec{b}^u_j(0)\\ \vec{b}^u_j(-1)\end{pmatrix}\right\|$.
\end{pf}
By Lemma \ref{gejito} and the fact that $\|B^k\|_{C^0}\leq \|B\|_{C^0}\leq C$, for $j=1,2$, one has
\begin{equation}\label{ff4aa}
\left\|\begin{pmatrix}\vec{b}_j(0)\\ \vec{b}_j(-1)\end{pmatrix}\right\| \leq C,\ \  \left\|\begin{pmatrix}\vec{b}_1(0)\\ \vec{b}_1(-1)\end{pmatrix}\right\|\geq r_k^{-1}.
\end{equation}
By Lemma \ref{key1}, we have
\begin{equation}\label{ff4aage}
 \left\|\begin{pmatrix}\vec{b}^c_j(0)\\ \vec{b}^c_j(-1)\end{pmatrix}\right\| \leq 2C,\ \ \left\|\begin{pmatrix}\vec{b}^c_1(0)\\ \vec{b}^c_1(-1)\end{pmatrix}\right\|\geq \frac{1}{2}r_k^{-1}.
\end{equation}
Let $C=10^8(\frac{200\pi}{h}+2\pi h)(2+4\pi\e_1(E))+1000$.
\begin{Proposition}\label{sub9}
There exist $K_1(E,v,\alpha)$, such that if $k>K_1$, then
\begin{equation}\label{fwnaa}
\det{\begin{pmatrix}\vec{b}_1^c(0)&\vec{b}_2^c(0)\\ \vec{b}_1^c(-1)&\vec{b}_2^c(-1)\end{pmatrix}^*\begin{pmatrix}\vec{b}_1^c(0)&\vec{b}_2^c(0)\\ \vec{b}_1^c(-1)&\vec{b}_2^c(-1)\end{pmatrix}} \geq e^{-Cr_k}.
\end{equation}
\end{Proposition}
\begin{pf}
Note that ${\rm dim}E_c^{n_k}(m_\alpha)=2$. Thus there is a unit vector $\begin{pmatrix}\vec{v}_c(0)\\ \vec{v}_c(-1)\end{pmatrix}\in E^{n_k}_c(m_1\alpha)$ which is orthogonal to $\begin{pmatrix}\vec{b}^c_1(0)\\ \vec{b}^c_1(-1)\end{pmatrix}$, and we write
\begin{align}\label{gjnew2aa}
\nonumber \begin{pmatrix}\vec{b}^c_2(0)\\ \vec{b}^c_2(-1)\end{pmatrix}=&\frac{\left\langle\begin{pmatrix}\vec{b}^c_2(0)\\ \vec{b}^c_2(-1)\end{pmatrix},\begin{pmatrix}\vec{b}^c_1(0)\\ \vec{b}^c_1(-1)\end{pmatrix}\right\rangle}{\left\|\begin{pmatrix}\vec{b}^c_1(0)\\ \vec{b}^c_1(-1)\end{pmatrix}\right\|^2}\begin{pmatrix}\vec{b}^c_1(0)\\ \vec{b}^c_1(-1)\end{pmatrix}\\
&+\left\langle \begin{pmatrix}\vec{b}^c_2(0)\\ \vec{b}^c_2(-1)\end{pmatrix},\begin{pmatrix}\vec{v}_c(0)\\ \vec{v}_c(-1)\end{pmatrix}\right\rangle\begin{pmatrix}\vec{v}_c(0)\\ \vec{v}_c(-1)\end{pmatrix}.
\end{align}
By direct calculation, we have
\begin{align}\label{gjnew1aa}
\nonumber &\det{\begin{pmatrix}\vec{b}_1^c(0)&\vec{b}_2^c(0)\\ \vec{b}_1^c(-1)&\vec{b}_2^c(-1)\end{pmatrix}^*\begin{pmatrix}\vec{b}_1^c(0)&\vec{b}_2^c(0)\\ \vec{b}_1^c(-1)&\vec{b}_2^c(-1)\end{pmatrix}}\\
=&\left|\left\langle \begin{pmatrix}\vec{b}^c_2(0)\\ \vec{b}^c_2(-1)\end{pmatrix},\begin{pmatrix}\vec{v}_c(0)\\ \vec{v}_c(-1)\end{pmatrix}\right\rangle\right|^2 \left\|\begin{pmatrix}\vec{b}^c_1(0)\\ \vec{b}^c_1(-1)\end{pmatrix}\right\|^2.
\end{align}

We now prove \eqref{fwnaa} by contradiction. If
$$
\det{\begin{pmatrix}\vec{b}_1^c(0)&\vec{b}_2^c(0)\\ \vec{b}_1^c(-1)&\vec{b}_2^c(-1)\end{pmatrix}^*\begin{pmatrix}\vec{b}_1^c(0)&\vec{b}_2^c(0)\\ \vec{b}_1^c(-1)&\vec{b}_2^c(-1)\end{pmatrix}}<e^{-Cr_k},
$$
then by \eqref{gjnew1aa} and  \eqref{ff4aage}, we further have for $k$ sufficiently large,
\begin{equation}\label{erroraa}
\left|\left\langle \begin{pmatrix}\vec{b}^c_2(0)\\ \vec{b}^c_2(-1)\end{pmatrix},\begin{pmatrix}\vec{v}_c(0)\\ \vec{v}_c(-1)\end{pmatrix}\right\rangle\right|\leq  2r_ke^{-Cr_k/2}\leq  e^{-Cr_k/3}.
\end{equation}
By \eqref{gjnew2aa} and \eqref{erroraa}, we have that
\begin{equation}\label{errornewaage}
\left\|\begin{pmatrix}\vec{b}^c_2(0)\\ \vec{b}^c_2(-1)\end{pmatrix}-\frac{\left\langle \begin{pmatrix}\vec{b}^c_2(0)\\ \vec{b}^c_2(-1)\end{pmatrix},\begin{pmatrix}\vec{b}^c_1(0)\\ \vec{b}^c_1(-1)\end{pmatrix}\right\rangle}{\left\|\begin{pmatrix}\vec{b}^c_1(0)\\ \vec{b}^c_1(-1)\end{pmatrix}\right\|^2}\begin{pmatrix}\vec{b}^c_1(0)\\ \vec{b}^c_1(-1)\end{pmatrix}\right\|\leq e^{-Cr_k/4}.
\end{equation}
By \eqref{erraa}, we have
\begin{equation}\label{gj1aa}
\left\|\begin{pmatrix}\vec{b}_j(0)\\ \vec{b}_j(-1)\end{pmatrix}-\begin{pmatrix}\vec{b}^c_j(0)\\ \vec{b}^c_j(-1)\end{pmatrix}\right\|\leq Ce^{-\delta n_k/2}.
\end{equation} 
By  \eqref{ff4aa}, \eqref{ff4aage} and \eqref{gj1aa}, we have that
\begin{equation}\label{errornewaa}
\left\|\begin{pmatrix}\vec{b}_2(0)\\ \vec{b}_2(-1)\end{pmatrix}-\frac{\left\langle \begin{pmatrix}\vec{b}_2(0)\\ \vec{b}_2(-1)\end{pmatrix},\begin{pmatrix}\vec{b}_1(0)\\ \vec{b}_1(-1)\end{pmatrix}\right\rangle}{\left\|\begin{pmatrix}\vec{b}_1(0)\\ \vec{b}_1(-1)\end{pmatrix}\right\|^2}\begin{pmatrix}\vec{b}_1(0)\\ \vec{b}_1(-1)\end{pmatrix}\right\|\leq Ce^{-Cr_k/5}.
\end{equation}
which means the orthogonal projection of $\begin{pmatrix}\vec{b}_2(0)\\ \vec{b}_2(-1)\end{pmatrix}$ to the vector  $\begin{pmatrix}\vec{b}_1(0)\\ \vec{b}_1(-1)\end{pmatrix}$ is large.  

Now we consider
\begin{align*}
b(x):=\det{\begin{pmatrix}b^k_{11}(x)&b^k_{12}(x)\\ b^k_{21}(x)&b^k_{22}(x)\end{pmatrix}}=\det{\begin{pmatrix}b^k_{11}(x)&b^k_{12}(x)-c_1b^k_{11}(x)\\ b^k_{21}(x)&b^k_{22}(x)-c_1b^k_{21}(x)\end{pmatrix}}
\end{align*}
where 
$$
c_1=\frac{\left\langle \begin{pmatrix}\vec{b}_2(0)\\ \vec{b}_2(-1)\end{pmatrix},\begin{pmatrix}\vec{b}_1(0)\\ \vec{b}_1(-1)\end{pmatrix}\right\rangle}{\left\|\begin{pmatrix}\vec{b}_1(0)\\ \vec{b}_1(-1)\end{pmatrix}\right\|^2}.
$$
By \eqref{ff4aa}, we have
\begin{equation}\label{sinnew3}
|c_1|\leq Cr_k^2.
\end{equation} We aim to estimate $b(x)$ to arrive at a contradiction. Since $\|B^k\|_0\leq C$, it is enough to  estimate its
Fourier coefficients
$$
\hat{b}^k_{2}(m)-c_1\hat{b}^k_{1}(m)=\int_\T (b^k_{12}(x)-c_1b^k_{11}(x))e^{2\pi imx}dx.
$$
We distinguish two cases:\\

\textbf{Case I:} If $|m|\geq \frac{100r_k}{h}$, then
by \eqref{tan8} and \eqref{sinnew3}, we have
\begin{align}\label{new1010aa}
\left|\hat{b}^k_2(m)-c_1 \hat{b}^k_{1}(m) \right|\leq Ce^{-|m|h/2}e^{r_k}\leq e^{-40r_k}.
\end{align}
provided $k$ is sufficiently large.\\

\textbf{Case II:} If $|m|\leq \frac{100r_k}{h}$, if $l(n_k)>4\frac{100r_k}{h}$, then by \eqref{errornewaa}, we have
\begin{equation}\label{cot174}
\left|\hat{b}^k_2(m)-c_1 \hat{b}^k_1(m)\right|\leq Ce^{-Cr_k/10}.
\end{equation}
Otherwise, use
$$
p=[\frac{m-m_1}{l(n_k)}],\ \ m-m_1\in [(p-1)l(n_k),(p+1)l(n_k)-1],
$$
we have $|p|\leq 2([\frac{100r_k}{hl(n_k)}]+1)$. We denote 
\begin{equation}\label{cot2'}
\begin{pmatrix}\vec{u}'_i(0)\\ \vec{u}'_i(-1)
\end{pmatrix}=\begin{pmatrix}
u_{n_k}^{i}((m_1+pl(n_k))\alpha,l(n_k)-1)\\
u_{n_k}^{i}((m_1+pl(n_k))\alpha,l(n_k)-2)\\
\vdots\\
u_{n_k}^{i}((m_1+pl(n_k))\alpha,-l(n_k))
\end{pmatrix},\ \ \begin{pmatrix}\vec{v}'_i(0)\\ \vec{v}'_i(-1)
\end{pmatrix}=\begin{pmatrix}
v_{n_k}^{i}((m_1+pl(n_k))\alpha,l(n_k)-1)\\
v_{n_k}^{i}((m_1+pl(n_k))\alpha,l(n_k)-2)\\
\vdots\\
v_{n_k}^{i}((m_1+pl(n_k))\alpha,-l(n_k))
\end{pmatrix}
\end{equation}
\begin{equation}\label{cot2''}
\begin{pmatrix}\vec{w}'_i(0)\\ \vec{w}'_i(-1)
\end{pmatrix}=\begin{pmatrix}
w_{n_k}^{i}((m_1+pl(n_k))\alpha,l(n_k)-1)\\
w_{n_k}^{i}((m_1+pl(n_k))\alpha,l(n_k)-2)\\
\vdots\\
w_{n_k}^{i}((m_1+pl(n_k))\alpha,-l(n_k))
\end{pmatrix},
\end{equation}
where $u_{n_k}^i((m_1+pl(n_k))\alpha,m)$, $v_{n_k}^i((m_1+pl(n_k))\alpha,m)$ and $w_{n_k}^i((m_1+pl(n_k))\alpha,m)$ are solutions of $L^{2\cos}_{v_{n_k},\alpha,(m_1+pl(n_k))\alpha}u=Eu$  from (3) of Theorem \ref{theorem-main1general} (here we also omit $E$ from all notations for simplicity).

By exactly the same argument as in Lemma \ref{sub8}, weh have
\begin{align}\label{cot10'}
\left|\begin{pmatrix}
\vec{b}_j(p)\\ \vec{b}_j(p-1)
\end{pmatrix}^*S \begin{pmatrix}
\vec{u}'_i(0)\\ \vec{u}'_i(-1)
\end{pmatrix}\right| \leq C\sigma_{n_k}^i((m_1+pl(n_k))\alpha)e^{-n_k \delta}, \ \ i=1,\cdots,l(n_k)-1,
\end{align}
\begin{align}\label{cot11'}
\left|\begin{pmatrix}
\vec{b}_j(p)\\ \vec{b}_j(p-1)
\end{pmatrix}^*S \begin{pmatrix}
\vec{v}'_i(0)\\ \vec{v}'_i(-1)
\end{pmatrix}\right| \leq C\sigma_{n_k}^i((m_1+pl(n_k))\alpha)e^{-n_k \delta}, \ \ i=1,\cdots,l(n_k)-1.
\end{align}

Moreover, if we let $\vec{w}'_i(m)$  be solutions of $L^{2\cos}_{v_{n_k},\alpha,(m_1+pl(n_k))\alpha}u=Eu$ with initial data $\begin{pmatrix}\vec{w}'_i(0)\\ \vec{w}'_i(-1)
\end{pmatrix}$, $i=1,2$, repectively. We further have
\begin{align}\label{cot8'}
\nonumber &\vec{w}'_i(m)^*C^*\vec{b}_j(m+p-1)+\vec{w}'_i(m)^*C\vec{b}_j(m+p+1)\\
+&\vec{w}'_i(m)^*(B(m_1\alpha+(m+p)l(n_k)\alpha)-EI_{l(n_k)})\vec{b}_j(m+p)=\vec{w}'_i(m)^*\vec{f}_j(m+p).
\end{align}
\begin{align}\label{cot9'}
\nonumber &\vec{b}_j(m+p)^*C^*\vec{w}'_i(m-1)+\vec{b}_j(m+p)^*C\vec{w}'_i(m+1)\\
+&\vec{b}_j(m+p)^*(B(m_1\alpha+(m+p)l(n_k)\alpha)-EI_{l(n_k)})\vec{w}'_i(m)=0.
\end{align}
Let 
$$
W''(m)=\begin{pmatrix}
\vec{b}_2(m+p)-c_1\vec{b}_1(m+p)\\ \vec{b}_2(m+p-1)-c_1\vec{b}_1(m+p-1)
\end{pmatrix}^*S \begin{pmatrix}
\vec{w}'_i(m)\\ \vec{w}'_i(m-1)
\end{pmatrix}.
$$
By \eqref{cot8'} and \eqref{cot9'}, we have
\begin{equation}\label{sub4'}
\left|W''(m+1)-W''(m)\right|\leq \|\vec{w}'_i(m)\|\|\vec{f}_2(m+p)-c_1\vec{f}_1(m+p)\|
\end{equation} 
By (3) of Theorem \ref{theorem-main1general} and \eqref{sub4'}, we have 
\begin{align}\label{cot100'}
&|W''(0)-W''(-p)|\leq \left(\sum_{l=\min\{-p,0\}}^{\max\{-p,0\}-1} \|\vec{w}'_i(l)\|\right)Cn_ke^{-2\pi n_k (h_1-\delta)}\\
\leq &C(\frac{100r_k}{h}+1)e^{8\pi([\frac{100r_k}{hl(n_k)}]+2)l(n_k)(\e_1(E)+\delta)}e^{-2\pi n_k (h_1-\delta)}\leq Ce^{-n_k \delta}.
\end{align}
By \eqref{errornewaa} and (3) of Theorem \ref{theorem-main1general}, we have
\begin{equation}\label{cot172}
|W''(-p)|\leq Ce^{-Cr_k/5}C(\frac{100r_k}{h}+1)e^{8\pi([\frac{100r_k}{hl(n_k)}]+2)l(n_k)(\e_1(E)+\delta)}\leq Ce^{-Cr_k/8}.
\end{equation}
By \eqref{cot100'} and \eqref{cot172}, we have
\begin{equation}\label{cot173}
|W''(0)|\leq |W''(-p)|+ Ce^{-n_k \delta}\leq   2Ce^{-Cr_k/8}.
\end{equation}
Again by (3) of Theorem \ref{theorem-main1general}, we have
\begin{align*}
&\begin{pmatrix}
\vec{b}_2(p)-c_1\vec{b}_1(p)\\ \vec{b}_2(p-1)-c_1\vec{b}_1(p-1)
\end{pmatrix}=c'_{1} \begin{pmatrix}
\vec{u}'_1(0)\\ \vec{u}'_1(-1)
\end{pmatrix}+\cdots+c'_{l(n_k)-1} \begin{pmatrix}
\vec{u}'_{l(n_k)-1}(0)\\ \vec{u}'_{l(n_k)-1}(-1)
\end{pmatrix}\\
&+c'_{l(n_k)} \begin{pmatrix}
\vec{w}'_1(0)\\ \vec{w}'_1(-1)
\end{pmatrix}+c'_{l(n_k)+1} \begin{pmatrix}
\vec{w}'_2(0)\\ \vec{w}'_2(-1)
\end{pmatrix}+c'_{l(n_k)+2} \begin{pmatrix}
\vec{v}'_1(0)\\ \vec{v}'_1(-1)
\end{pmatrix}+\cdots+c'_{2l(n_k)} \begin{pmatrix}
\vec{v}'_{l(n_k)-1}(0)\\ \vec{v}'_{l(n_k)-1}(-1)
\end{pmatrix}
\end{align*}
By \eqref{cot10'}, \eqref{cot11'}, \eqref{cot173} and the same argument as in Lemma \ref{key1}, we have
\begin{equation}\label{askd1}
|c'_{j}|\leq 2Ce^{-Cr_k/8},\ \ 1\leq j\leq 2l(n_k).
\end{equation}
By \eqref{askd1} and (3) of Theorem \ref{theorem-main1general}, we have
\begin{align*}
&\left\|\begin{pmatrix}
\vec{b}_2(p)-c_1\vec{b}_1(p)\\ \vec{b}_2(p-1)-c_1\vec{b}_1(p-1)
\end{pmatrix}\right\|\\
&\leq \sum_{j=1}^{l(n_k)-1}|c_j'|+|c_{l(n_k)}'|\left|\begin{pmatrix}w_1'(0)\\ w_1'(-1)\end{pmatrix}\right|+|c_{l(n_k)+1}'|\left|\begin{pmatrix}w_2'(0)\\ w_2'(-1)\end{pmatrix}\right|+\sum_{j=l(n_k)+2}^{2l(n_k)}|c_j'|\\
&\leq C(\frac{100r_k}{h}+1)e^{4\pi([\frac{100r_k}{hl(n_k)}]+2)l(n_k)(\e_1(E)+\delta)}e^{-Cr_k/8}\leq Ce^{-Cr_k/10}.
\end{align*}
Hence
\begin{equation}\label{sinnew4}
\left|\hat{b}^k_2(m)-c_1 \hat{b}^k_1(m)\right|\leq Ce^{-Cr_k/10}.
\end{equation}

By \eqref{new1010aa}, \eqref{cot174} and \eqref{sinnew4}, we have
\begin{equation}\label{sinnew5}
\left\|b_{12}-c_1 b_{11}\right\|_0\leq 100\frac{r_k}{h}Ce^{-r_k/2}+Ce^{-Cr_k/20}\leq e^{-r_k/4}.
\end{equation}
Finally, by \eqref{tan5} and \eqref{tan6}, we have
\begin{align*}
|b(x)|\leq C\left(\left\|b_{12}-c_1 b_{11}\right\|_{C^0}+\left\|b_{22}-c_1 b_{21}\right\|_{C^0}\right)\leq  Ce^{-r_k/4}.
\end{align*}
This contradicts 
\begin{align*}
|b(x)|=1.
\end{align*}
\end{pf}

Combined with quantitative almost reducibility, Lemma \ref{sub8} and Proposition \ref{sub9} are the key technical inputs that yield the subcritical dry Ten Martini statement announced in
Remark 1.3.

Finally, by Lemma \ref{sub8}, for $i,j=1,2$, we have
\begin{align}\label{gj2aa}
\left|\begin{pmatrix}
\vec{b}_j(0)\\ \vec{b}_j(-1)
\end{pmatrix}^*S \begin{pmatrix}
\vec{b}_i(0)\\ \vec{b}_i(-1)
\end{pmatrix}\right| \leq  e^{-n_k h_1/2}.
\end{align}
By Lemma \ref{key1} and \eqref{ff4aa}, we have
\begin{equation}\label{gj3geaa}
\left|\begin{pmatrix}
\vec{b}^c_j(0)\\ \vec{b}^c_j(-1)
\end{pmatrix}^*S \begin{pmatrix}
\vec{b}^c_i(0)\\ \vec{b}^c_i(-1)
\end{pmatrix}\right| \leq    Ce^{-n_k h_1/2}+Ce^{-\delta n_k}\leq e^{-\delta n_k/2}.
\end{equation}

Assume $Q_k=[16\pi C(1+h^{-1})r_k]$ and $R_k=\min\{l(n_k),Q_k\}$. Let
\begin{equation}\label{sinnew8}
\begin{pmatrix}
\vec{b}^c_1(0)&\vec{b}^c_2(0)\\ \vec{b}^c_1(-1)&\vec{b}^c_2(-1)
\end{pmatrix}_{2R_k}=\begin{pmatrix}
0_{l(n_k)-R_k}\\ & I_{2R_k}\\ && 0_{l(n_k)-R_k}
\end{pmatrix}\begin{pmatrix}\vec{b}^c_1(0)&\vec{b}^c_2(0)\\ \vec{b}^c_1(-1)&\vec{b}^c_2(-1)
\end{pmatrix},
\end{equation}
\begin{equation}\label{sinnew6}
A=\begin{pmatrix}\vec{b}_1^c(0)&\vec{b}_2^c(0)\\ \vec{b}_1^c(-1)&\vec{b}_2^c(-1)\end{pmatrix}^*\begin{pmatrix}\vec{b}_1^c(0)&\vec{b}_2^c(0)\\ \vec{b}_1^c(-1)&\vec{b}_2^c(-1)\end{pmatrix}.
\end{equation}
\begin{equation}\label{sinnew7}
A_{2R_k}=\left(\begin{pmatrix}\vec{b}_1^c(0)&\vec{b}_2^c(0)\\ \vec{b}_1^c(-1)&\vec{b}_2^c(-1)\end{pmatrix}_{2R_k}\right)^*\begin{pmatrix}\vec{b}_1^c(0)&\vec{b}_2^c(0)\\ \vec{b}_1^c(-1)&\vec{b}_2^c(-1)\end{pmatrix}_{2R_k}.
\end{equation}
By  \eqref{tan8} and \eqref{gj1aa}, we have
\begin{equation}\label{sinnew9}
\left\|\begin{pmatrix}\vec{b}^c_1(0)&\vec{b}^c_2(0)\\ \vec{b}^c_1(-1)&\vec{b}^c_2(-1)
\end{pmatrix}-\begin{pmatrix}\vec{b}^c_1(0)&\vec{b}^c_2(0)\\ \vec{b}^c_1(-1)&\vec{b}^c_2(-1)
\end{pmatrix}_{2R_k}\right\|  \leq 2e^{-7Cr_k}.
\end{equation}
By \eqref{ff4aa}, \eqref{ff4aage}, \eqref{sinnew6}, \eqref{sinnew7} and \eqref{sinnew9}, we have
\begin{equation}\label{bimsa11}
\|A-A_{2R_k}\|\leq C e^{-7Cr_k}
\end{equation}
Note also that 
\begin{equation}\label{bimsa12}
\|A\|\leq 4\|B^k\|_0^2\leq C,
\end{equation}
$$
A_{2R_k}=A+A_{2R_k}-A=A(I+A^{-1}(A_{2R_k}-A)).
$$
Hence by \eqref{bimsa11} and \eqref{bimsa12}, $|\det{(I+A^{-1}(A_{2R_k}-A))}-1|\leq Ce^{Cr_k}e^{-7Cr_k}$, hence
$$
|\det A_{2R_k}-\det A|\leq e^{-5Cr_k}.
$$
which by  Proposition \ref{sub9} implies that 
\begin{equation}\label{sincos1}
|\det A_{2R_k}|\geq \frac{1}{2}e^{-Cr_k}.
\end{equation}

Recall that $(\alpha,L_{E,v_{n_k}})$ is $PH2$ and there exists a continuous invariant decomposition
$$
\C^{2l(n_k)}=E_s^{n_k}(\theta)\oplus E^{n_k}_c(\theta)\oplus E^{n_k}_u(\theta),\ \ {\rm dim} E^{n_k}_c(\theta)=2.
$$
By Proposition \ref{sub9}, $\begin{pmatrix}
\vec{b}^c_1(0)\\ \vec{b}^c_1(-1)
\end{pmatrix}, \begin{pmatrix}
\vec{b}^c_2(0)\\ \vec{b}^c_2(-1)
\end{pmatrix}$ form a basis of $E_c^{n_k}(m_1\alpha)$. Together with  \eqref{sincos1} and \eqref{gj3geaa}, we have
\begin{align}\label{sincos2}
\Omega=\nonumber &\det{A_{2R_k}\left(\begin{pmatrix}\vec{b}_1^c(0)&\vec{b}_2^c(0)\\ \vec{b}_1^c(-1)&\vec{b}_2^c(-1)\end{pmatrix}^*S\begin{pmatrix}\vec{b}_1^c(0)&\vec{b}_2^c(0)\\ \vec{b}_1^c(-1)&\vec{b}_2^c(-1)\end{pmatrix}\right)^{-1}}\\
=&\frac{\det{A_{2R_k}}}{\det \begin{pmatrix}\vec{b}_1^c(0)&\vec{b}_2^c(0)\\ \vec{b}_1^c(-1)&\vec{b}_2^c(-1)\end{pmatrix}^*S \begin{pmatrix}\vec{b}_1^c(0)&\vec{b}_2^c(0)\\ \vec{b}_1^c(-1)&\vec{b}_2^c(-1)\end{pmatrix}}\geq e^{\delta n_k/4}.
\end{align}

Notice  that by Remark \ref{bimsa13}, $L_{n_k}(\theta)$ defined in Theorem \ref{theorem-main1general} does not depend on the choice of basis. Since $\begin{pmatrix}
\vec{b}^c_1(0)\\ \vec{b}^c_1(-1)
\end{pmatrix}, \begin{pmatrix}
\vec{b}^c_2(0)\\ \vec{b}^c_2(-1)
\end{pmatrix}$ form a basis of $E_c^{n_k}(m_1\alpha)$,  we have
\begin{align*}
L_{n_k}(m_1\alpha)=&\begin{pmatrix}\vec{b}_1^c(0)&\vec{b}_2^c(0)\\ \vec{b}_1^c(-1)&\vec{b}_2^c(-1)\end{pmatrix}\left(\begin{pmatrix}\vec{b}_1^c(0)&\vec{b}_2^c(0)\\ \vec{b}_1^c(-1)&\vec{b}_2^c(-1)\end{pmatrix}^*S\begin{pmatrix}\vec{b}_1^c(0)&\vec{b}_2^c(0)\\ \vec{b}_1^c(-1)&\vec{b}_2^c(-1)\end{pmatrix}\right)^{-1}\\
&\cdot\begin{pmatrix}\vec{b}_1^c(0)&\vec{b}_2^c(0)\\ \vec{b}_1^c(-1)&\vec{b}_2^c(-1)\end{pmatrix}^*
\end{align*}

For any $n\times 2$ matrix $M$ and $2\times n$ matrix $N$, we have 
\begin{equation}\label{bimsa15}
\det{NM}=\tr\Lambda^2 MN.
\end{equation}

By \eqref{sinnew7}, \eqref{sincos2} and applying \eqref{bimsa15} with $n=2l(n_k)$, $M=\begin{pmatrix}\vec{b}^c_1(0)&\vec{b}^c_2(0)\\ \vec{b}^c_1(-1)&\vec{b}^c_2(-1)
\end{pmatrix}_{2R_k}$ and $N=\left(\begin{pmatrix}\vec{b}_1^c(0)&\vec{b}_2^c(0)\\ \vec{b}_1^c(-1)&\vec{b}_2^c(-1)\end{pmatrix}^*S\begin{pmatrix}\vec{b}_1^c(0)&\vec{b}_2^c(0)\\ \vec{b}_1^c(-1)&\vec{b}_2^c(-1)\end{pmatrix}\right)^{-1}\begin{pmatrix}\vec{b}^c_1(0)&\vec{b}^c_2(0)\\ \vec{b}^c_1(-1)&\vec{b}^c_2(-1)
\end{pmatrix}_{2R_k}^*$,   we have
$$
\Omega={\tr}\Lambda^2\left(\begin{pmatrix}
0_{l(n_k)-R_k}\\ & I_{2R_k}\\ && 0_{l(n_k)-R_k}
\end{pmatrix}L_{n_k}(m_1\alpha) \begin{pmatrix}
0_{l(n_k)-R_k}\\ & I_{2R_k}\\ && 0_{l(n_k)-R_k}
\end{pmatrix}\right).
$$
By (2) of Theorem \ref{theorem-main1general}, 
\begin{align}\label{gjnewage}
|\Omega|\leq CR_k^2e^{128C\pi (\e_1(E)+\delta)R_k}
\end{align}
which contradicts  \eqref{sincos2} if $k$ is sufficiently large. 
\end{pf}
\subsection{Proof of the subcritical case of Theorem \ref{main12}}
We prove by contradiction. Assume there is a nonempty  $I\subset\overline{\Sigma_{v,\alpha}^{1,0}}$. We distinguish two case:

Case I: If there is a subcritical energy $E\in I$, then there is $E'\in I\cap \Sigma_{v,\alpha}^{1,0}$ with $E'$ subcritical, By openness of subcriticality, there is an interval $I'\subset I$, such that any $E_0'\subset I'$ is subcritical. Thus by Theorem \ref{Kotani-st}, then there are $B_E\in C^\omega(\T,SL(2,\R))$ and $\psi_E\in C^\omega(\T,\R)$ depending analytically on $E\in I$, such that
$$
B_E^{-1}(x+\alpha)S^v_E(x)B_E(x)=R_{\psi_E(x)}.
$$
Note that by the integrated density of states consideration, $\int_\T\psi_E(x)dx$ is not a constant in $E$, so there is $E_0\in I$ with $\int_\T
\psi_{E_0}(x)dx=k_1\alpha+k_2,$ for some $k_1,k_2\in\Z.$ We now define
$\widetilde{B}_{E_0}(x)=B_{E_0}(x)R_{k_1x}$, and obtain
$$
\widetilde{B}_{E_0}^{-1}(\theta+\alpha)S^v_{E_0}(\theta)\widetilde{B}_{E_0}(\theta)=R_{\psi_{E_0}(x)-k_1\alpha}.
$$
Note that $\int_\T \left(\psi_{E_0}(x)-k_1\alpha-k_2\right)dx=0$ contradicting Theorem \ref{subkey}. \qed

 Case II: All $E\in I$ is critical, this follows from case II in the proof of the (super)critical part of Theorem \ref{main12}.
\section{Appendix}
\subsection{Specialized facts on the two-dimensional center}
\label{app:center-facts}
Let $O(z,x)$ be a symplectic holomorphic frame of an invariant two-dimensional
bundle such that
\[
 O(\bar{z},\bar{x})^* S O(z,x)=J,
 \qquad
 \mathcal L(z,x)O(z,x)=O(z,x+\alpha)M(z,x).
\]
\begin{Lemma}[Symplecticity of the center dynamics]
\label{lem:center-J-unitary}
For real $E\in\R$ and $x\in\mathbb T$,
\[
        M(E,x)^*JM(E,x)=J.
\]
\end{Lemma}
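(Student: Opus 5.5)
The plan is to combine the defining normalization of the frame with the symplecticity of the transfer operator, evaluated on the real axis. Two facts are needed. First, for real $E$, $\mathcal L(E,x)$ preserves $S$, that is $\mathcal L(E,x)^*S\mathcal L(E,x)=S$. This is the identity $(L^f_E)^*SL^f_E=S$ from Section~\ref{3.2}, and in the finite-range setting of Section~\ref{kotanis1} the analogous identity $(L_{\bar z,v}^w)^*SL_{z,v}^w=S$ used in the proof of Lemma~\ref{final1}. Second, for real $E$ and real $x$, the normalization reads $O(E,x)^*SO(E,x)=J$, since $\bar E=E$ and $\bar x=x$. The same normalization holds at the shifted point $x+\alpha$, because $x+\alpha$ is again real.

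The computation is then one line. From the invariance relation, $\mathcal L(E,x)O(E,x)=O(E,x+\alpha)M(E,x)$. Taking the $S$-pairing of this vector frame with itself gives
\[
O(E,x)^*\mathcal L(E,x)^*S\mathcal L(E,x)O(E,x)=M(E,x)^*O(E,x+\alpha)^*SO(E,x+\alpha)M(E,x).
\]
On the left we use $\mathcal L^*S\mathcal L=S$ and then the normalization at $x$, so the left side equals $J$. On the right we use the normalization at $x+\alpha$, so the right side equals $M(E,x)^*JM(E,x)$. Equating the two sides gives $M^*JM=J$.

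The only point that needs care is the first fact: symplecticity of $\mathcal L$ holds only for real energy, since for complex $z$ one only has the mixed identity $\mathcal L(\bar z)^*S\mathcal L(z)=S$. This is why the statement is restricted to $E\in\R$. Likewise, the normalization must be specialized at real $x$ so that the conjugate point $\bar x$ coincides with $x$ and the shift stays real. In the limiting (infinite-range) application of Section~\ref{limit}, one applies this identity to each truncation, where only the approximate normalization $O_n^*S_nO_n=J+O(e^{-cn})$ is available, and passes to the limit $n\to\infty$; that passage is handled where the lemma is invoked, not here.
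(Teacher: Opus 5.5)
Your proof is correct and is the same computation the paper uses. You pair the invariance relation with itself under $S$, then use that $\mathcal L$ preserves $S$ together with the frame normalization at $x$ and at $x+\alpha$. The paper runs the identical chain in the mixed form $M(\bar z,\bar x)^*JM(z,x)=J$, using $\mathcal L(\bar z,\bar x)^*S\mathcal L(z,x)=S$, and then reads off the real case, which is exactly your specialization to $z=E\in\R$ and real $x$.
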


\begin{pf}
Using invariance of the frame and preservation of $S$, we obtain
\[
\begin{aligned}
J
 &=O(\bar{z},\bar{x})^*T\mathcal L(\bar{z},\bar{x})^*
       S\mathcal L(z,x)O(z,x)\\
 &=M(\bar{z},\bar{x})^* O(\bar{z},\bar{x}+\alpha)^*
       S O(z,x+\alpha)M(z,x)\\
 &=M(\bar{z},\bar{x})^*JM(z,x).
\end{aligned}
\]
\end{pf}
\begin{Lemma}
\label{lem:projectively-real-factorization}

Let \(M\in C_h^\omega(\mathbb T,GL(2,\mathbb C))\)
satisfy \(M(x)^*JM(x)=J,\) and \(\deg\det M=0.\)
Then, after possibly decreasing $h$, there exist
\[
\varphi\in C^\omega(\mathbb T,\mathbb R),
\qquad
C\in C^\omega(\mathbb T,SL(2,\mathbb R)),
\]
such that
\[
M(x)=e^{2\pi i\varphi(x)}C(x).
\]
If $M$ depends analytically on an additional parameter in an interval,
then $\varphi$ and $C$ may be chosen jointly analytic after shrinking
the interval.

\end{Lemma}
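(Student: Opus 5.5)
The plan is to extract a square root of $\det M$ and then show that the normalized matrix is real. Write $J$ for the standard symplectic matrix. For any $2\times 2$ matrix $A$ we have $A^{T}JA=(\det A)J$. The hypothesis $M^*JM=J$ can be rewritten as $\overline{M}^{T}JM=J$. Taking determinants gives $\overline{\det M}\,\det M=1$, so $|\det M|=1$ on $\mathbb T$.

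\textbf{Step 1: the phase.} Since $\deg\det M=0$ and $|\det M|=1$, the map $x\mapsto\det M(x)$ into the unit circle has an analytic real logarithm. So we may write $\det M(x)=e^{4\pi i\varphi(x)}$ with $\varphi\in C^\omega(\mathbb T,\mathbb R)$. Here we use zero winding to choose a real-analytic branch of $\frac{1}{4\pi i}\log\det M$ on $\mathbb T$. This $\varphi$ extends holomorphically to a thinner strip, since $\det M$ is nonvanishing near $\mathbb T$. This is the only place where the strip may have to be decreased.

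\textbf{Step 2: reality of the normalized matrix.} Set $C=e^{-2\pi i\varphi}M$, so $\det C=1$. For real $x$, $\varphi(x)$ is real, so $|e^{-2\pi i\varphi}|=1$ and $C^*JC=M^*JM=J$. Since $\det C=1$, we also have $C^{T}JC=J$. Comparing the two identities gives $\overline{C}^{T}JC=C^{T}JC$. Because $C$ and $J$ are invertible, this forces $\overline{C}^{T}=C^{T}$, that is, $\overline C=C$. Hence $C(x)\in SL(2,\mathbb R)$ for real $x$. Since $C$ is analytic, it belongs to $C^\omega(\mathbb T,SL(2,\mathbb R))$.

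\textbf{Step 3: the parametric case.} If $M=M(E,x)$ depends analytically on $E$ in an interval, then $\det M(E,\cdot)$ is continuous in $E$. Its winding number is therefore locally constant, and it is zero by hypothesis. We build the logarithm jointly: fix a base point $x_0$, choose a branch of $\log\det M(E,x_0)$ analytic in $E$ on a small subinterval, and integrate $\partial_x\log\det M$ along $\mathbb T$. Zero winding makes the result periodic in $x$. This produces $\varphi(E,x)$ jointly analytic, and real for real $(E,x)$, after shrinking the interval. Step 2 applies pointwise, so $C(E,x)$ is jointly analytic as well.

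\textbf{Main obstacle.} The only delicate point is the global, jointly analytic choice of branch in Steps 1 and 3. The zero-degree assumption is exactly what makes this possible; once the branch is chosen, Step 2 is pure algebra.
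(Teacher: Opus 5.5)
Your proposal is correct and follows essentially the same route as the paper: $|\det M|=1$ from taking determinants in $M^*JM=J$, a periodic logarithm from zero winding giving $\varphi=\frac{1}{4\pi i}\log\det M$, and reality of $C=e^{-2\pi i\varphi}M$ by comparing $C^*JC=J$ with $C^{T}JC=(\det C)J=J$. Your parametric step (a branch at a base point, then integrating $\partial_x\log\det M$ along $\mathbb T$) simply spells out what the paper states in one line.
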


\begin{pf}
Set $d(z)=\det M(z)$.  Since $d$ is nonvanishing and periodic, its winding
number
\[
 \nu(d)=\frac{1}{2\pi i}\int_0^1\frac{d'(x)}{d(x)}\,dx
\]
is an integer.  If
$F(\eta)=\int_{\mathbb T}\log|d(x+i\eta)|\,dx$, then
\[
        F'(\eta)=-2\pi\nu(d).
\]
Thus  $\nu(d)=0$.  Hence $d$ admits a
periodic holomorphic logarithm $g$ on $|\Im z|<h_0$.  Taking determinants in $M(x)^*JM(x)=J$ gives $|d(x)|=1$ on the real
axis.  Hence the real part of any holomorphic logarithm of $d$ vanishes
there, and $g(x)$ is purely imaginary for real $x$.

Let
\[
        s=e^{g/2},\qquad
        \varphi=\frac{g}{4\pi i},\qquad
        C=s^{-1}M.
\]
Then $s=e^{2\pi i\varphi}$, $\varphi$ is real on the real axis, and
$\det C=1$.  Moreover, $C^*JC=J$ on the real axis.  Since every
$2\times2$ matrix of determinant one satisfies $C^TJC=J$, comparison gives
\[
        \bar C^{\,T}JC=C^TJC,
\]
and hence $C=\bar C$.  Thus $C(x)\in SL(2,\mathbb R)$ for real $x$.
The parameter-dependent statement follows by choosing the logarithm at one
base point and shrinking the energy interval if necessary.
\end{pf}

\begin{Lemma}
\label{lem:center-det-zero}
Let $(\alpha,A)$ be an analytic complex symplectic $PH_2$ transfer
cocycle of dimension $2d$, and let $M$ be the cocycle induced on its
two-dimensional center in an analytic frame. Assume that $\det A$ is
constant of modulus one. If
\[
\omega^{d-1}(\alpha,A)=0,
\]
then, for all sufficiently small $|\varepsilon|$,
\[
\int_{\mathbb T}
\log|\det M(x+i\varepsilon)|\,dx=0.
\]
In particular,
\[
\deg\det M=0.
\]
\end{Lemma}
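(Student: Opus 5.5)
The plan is to relate the center determinant to the exterior powers of $A$ through the invariant splitting $E_s\oplus E_c\oplus E_u$, and to read off the integral of $\log|\det M|$ from the Lyapunov exponents and their accelerations.

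\textbf{Step 1: an analytic frame.} By the $PH_2$ assumption and Theorem~\ref{t2.1}, the splitting extends holomorphically to a thin strip $|\Im x|<\varepsilon_0$. By Proposition~\ref{trivial}, each bundle admits an analytic frame there. In the frame $(F_s,F_c,F_u)$ the cocycle is block diagonal with blocks $A_s$, $M$, $A_u$. Hence
\[
\det A=\det A_s\cdot\det M\cdot\det A_u
\]
up to a coboundary $\det P(x+\alpha)/\det P(x)$, where $P$ is the frame matrix. Integrating $\log|\cdot|$ over $\mathbb T$ kills the coboundary, and since $|\det A|$ is constant equal to one we get
\[
0=I_s(\varepsilon)+I_c(\varepsilon)+I_u(\varepsilon),
\]
where $I_\ast(\varepsilon)=\int_{\mathbb T}\log|\det A_\ast(x+i\varepsilon)|\,dx$.

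\textbf{Step 2: identifying $I_u$ and $I_s$.} Because the splitting remains dominated for small $|\varepsilon|$, $I_u(\varepsilon)$ equals the sum of the top $d-1$ Lyapunov exponents of $(\alpha,A(\cdot+i\varepsilon))$, that is $L^{d-1}(\alpha,A_\varepsilon)$. Similarly, $I_s(\varepsilon)$ is the sum of the bottom $d-1$ exponents, which is
\[
L^{2d}(\alpha,A_\varepsilon)-L^{d+1}(\alpha,A_\varepsilon).
\]
Here $L^{2d}(\alpha,A_\varepsilon)=\int\log|\det A(x+i\varepsilon)|\,dx$, which is constant in $\varepsilon$ by harmonicity, since $\det A$ is a nonvanishing analytic function of modulus one on the real line.

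\textbf{Step 3: the accelerations.} By regularity (domination gives affineness via Theorem~\ref{t2.1}), $L^{d-1}(\alpha,A_\varepsilon)$ is affine in $\varepsilon$ with slope $2\pi\omega^{d-1}=0$, so it is constant.

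This step is the main obstacle: showing that $\omega^{d+1}=0$ as well. For real energies, symplecticity gives the pairing $L_{2d+1-j}=-L_j$ at $\varepsilon=0$, but the complexified cocycle is no longer symplectic. I would handle it with duality. One has $\Lambda^{d+1}A\cong(\Lambda^{d-1}A^{-T})\otimes\det A$, and $A^{-T}$ is conjugate (via $S$) to $\bar A(\bar x)$. This should give
\[
L^{d+1}(A_\varepsilon)=L^{d-1}(A_{-\varepsilon})+L^{2d}.
\]
Evenness and constancy of $L^{d-1}$ then give constancy of $L^{d+1}$.

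\textbf{Step 4: conclusion.} At $\varepsilon=0$ the symplectic pairing gives $I_u(0)+I_s(0)=0$. By Steps 2 and 3, $I_u+I_s$ is constant near $0$, so $I_u(\varepsilon)+I_s(\varepsilon)=0$ for all small $|\varepsilon|$, and Step 1 then gives $I_c(\varepsilon)=0$. Finally, $I_c(\varepsilon)$ being independent of $\varepsilon$ means, by the argument principle (the $F'(\eta)=-2\pi\nu$ computation in Lemma~\ref{lem:projectively-real-factorization}), that $\deg\det M=0$.
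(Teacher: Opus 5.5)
Your argument is correct and is essentially the paper's proof. The exterior-power duality $\Lambda^{d+1}A\cong\Lambda^{d-1}A^{-T}\otimes\det A$, with $A^{-T}$ conjugate to $\overline{A(\bar x)}$, is exactly the complex-symplectic symmetry $L_j(\varepsilon)=-L_{2d+1-j}(-\varepsilon)$ that the paper uses, and your block-determinant bookkeeping $I_c=-(I_s+I_u)$ is the paper's identity $L_d(\varepsilon)+L_{d+1}(\varepsilon)=L^{d-1}(-\varepsilon)-L^{d-1}(\varepsilon)=0$. (Only two-sided constancy of $L^{d-1}$, coming from $(d-1)$-regularity, is needed; no evenness is required. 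Constancy of $L^{2d}$ follows from $\det A$ being constant, not from harmonicity alone.)
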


\begin{pf}
Let
\[
L_1(\varepsilon)\geq\cdots\geq L_{2d}(\varepsilon)
\]
be the Lyapunov exponents of
$(\alpha,A(\cdot+i\varepsilon))$, and write
\[
L^{d-1}(\varepsilon)
=
\sum_{j=1}^{d-1}L_j(\varepsilon).
\]
Since $(\alpha,A)$ is $(d-1)$-dominated,
$L^{d-1}(\varepsilon)$ is affine near $\varepsilon=0$. The assumption
$\omega^{d-1}(\alpha,A)=0$ therefore implies
\[
L^{d-1}(\varepsilon)=L^{d-1}(0)
\]
for all sufficiently small $|\varepsilon|$.

The complex symplectic symmetry gives
\[
L_j(\varepsilon)
=
-L_{2d+1-j}(-\varepsilon),
\]
while the assumption on $\det A$ gives
\[
\sum_{j=1}^{2d}L_j(\varepsilon)=0.
\]
The two Lyapunov exponents of the center cocycle
$M(\cdot+i\varepsilon)$ are $L_d(\varepsilon)$ and
$L_{d+1}(\varepsilon)$. Hence
\[
\begin{aligned}
\int_{\mathbb T}
\log|\det M(x+i\varepsilon)|\,dx
&=
L_d(\varepsilon)+L_{d+1}(\varepsilon)\\
&=
L^{d-1}(-\varepsilon)-L^{d-1}(\varepsilon)
=0.
\end{aligned}
\]
Finally, for any nonvanishing periodic analytic function $f$,
\[
\frac{d}{d\varepsilon}
\int_{\mathbb T}\log|f(x+i\varepsilon)|\,dx
=
-2\pi\deg f.
\]
Applying this to $f=\det M$ gives $\deg\det M=0$.
\end{pf}

\subsection{Signature and nondegeneracy of the center kernel}

Let $S^*=-S$, let $H$ be a $2n\times2$ matrix of rank two, and assume that
\[
        \Omega=H^*SH
\]
is nondegenerate and that $i\Omega$ has signature $(1,1)$.  Set
\[
        L=H\Omega^{-1}H^*.
\]
\begin{Lemma}
\label{lem:center-kernel-signature}
The Hermitian matrix $iL$ has one positive and one negative eigenvalue and
all its remaining eigenvalues are zero.  If $\lambda$ is either nonzero
eigenvalue of $iL$, then
\[
        |\lambda|\ge \|S\|^{-1}.
\]%
\end{Lemma}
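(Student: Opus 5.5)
Since $L=H\Omega^{-1}H^*$ and $H$ has rank two, $\operatorname{Ran} L\subset\operatorname{Ran} H$, so at most two eigenvalues of $iL$ are nonzero. The first step is to check that $iL$ is Hermitian. From $S^*=-S$ we get $\Omega^*=H^*S^*H=-\Omega$, hence $(\Omega^{-1})^*=-\Omega^{-1}$ and
\[
(iL)^*=-iH(\Omega^{-1})^*H^*=iL.
\]
Next I reduce to a $2\times2$ problem. The nonzero spectrum of $iL=H\,(i\Omega^{-1})H^*$ coincides with the nonzero spectrum of $K:=(i\Omega^{-1})H^*H$, by the standard identity that $AB$ and $BA$ share their nonzero eigenvalues. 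Let $P=H^*H>0$, which is positive definite because $\operatorname{rank}H=2$. Then $K$ is similar to
\[
P^{1/2}(i\Omega^{-1})P^{1/2},
\]
which is Hermitian and congruent to $i\Omega^{-1}$. By Sylvester's law of inertia it therefore has the same signature as $i\Omega^{-1}$. Finally, $i\Omega^{-1}=-(i\Omega)^{-1}$, and inversion preserves signature while the sign flips it; since $i\Omega$ has signature $(1,1)$, so does $i\Omega^{-1}$. Hence $iL$ has exactly one positive eigenvalue, one negative eigenvalue, and all others zero.

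For the lower bound, the cleanest route is to observe that $L$ is a partial inverse of $S$ on $\operatorname{Ran}H$:
\[
LSH=H\Omega^{-1}H^*SH=H\Omega^{-1}\Omega=H .
\]
Now let $u$ be an eigenvector of $iL$ with nonzero eigenvalue $\lambda$. Then $u\in\operatorname{Ran}H$, say $u=Hc$, and so $LSu=u$. Writing $iLu=\lambda u$ gives $Lu=-i\lambda u$. Using that $L$ is skew-Hermitian and $iL$ is Hermitian, I pair against the eigenvector: since $\operatorname{Ran}L\subset\operatorname{Ran}H$ and $u$ spans an eigenline, decompose $Su=w$ and compute
\[
\|u\|^2=\langle u, LSu\rangle=\langle L^*u,Su\rangle=\langle -Lu,Su\rangle=\langle i\lambda u,Su\rangle,
\]
whose modulus is $|\lambda|\,|\langle u,Su\rangle|\le|\lambda|\,\|S\|\,\|u\|^2$. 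This yields $|\lambda|\ge\|S\|^{-1}$.

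The main point to double-check is the adjoint bookkeeping: the inner-product convention (linearity in the second slot, as in $u^*Su$) and the sign in $L^*=-L$. A sign error there only affects a unimodular factor, so the modulus bound remains valid regardless. I do not expect any step here to be a genuine obstacle; the hardest part is simply keeping these conventions straight.
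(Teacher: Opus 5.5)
Your argument is correct and reaches the conclusion by a somewhat different route from the paper's.

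The paper diagonalizes the pair $(P,i\Omega)$ simultaneously by congruence: $U^*PU=I$ and $U^*(i\Omega)U=\mathrm{diag}(g,h)$ with $g>0>h$. This gives in one stroke the explicit decomposition $iL=-W\,\mathrm{diag}(g^{-1},h^{-1})W^*$, where $W=HU$ has orthonormal columns. The signature is read off from the signs of $g,h$, and the bound from $|g|,|h|=|w^*Sw|\le\|S\|$ for the unit columns $w$ of $W$.

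You split the two claims instead.
\begin{itemize}
\item Your inertia count uses the $AB/BA$ principle together with Sylvester's law applied to $P^{1/2}(i\Omega^{-1})P^{1/2}$. With $V=P^{1/2}U$ unitary, this matrix equals $-V\,\mathrm{diag}(g^{-1},h^{-1})V^*$, so it is the paper's $2\times 2$ matrix in disguise.
\item Your bound uses the basis-free identity $LSH=H$, i.e.\ $LS$ restricts to the identity on $\operatorname{Ran}H$. This gives $\|u\|^2=|\lambda|\,|u^*Su|$ directly for any eigenvector $u$ with $\lambda\neq0$, using that $\lambda$ is real.
\end{itemize}

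Both routes rest on the same fact: $|\lambda|^{-1}=|w^*Sw|$ for a unit eigenvector $w$. The paper's version also exhibits the eigenvectors explicitly as the columns of $W=HU$. Yours avoids constructing the eigenbasis and is slightly shorter.

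The phrase ``decompose $Su=w$'' is vestigial and can be dropped.
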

\begin{pf}
Put $P=H^*H>0$.  By simultaneous congruence of the positive form $P$
and the Hermitian form $i\Omega$, there is $U\in GL(2,\mathbb C)$ such
that
\[
 U^*PU=I,
 \qquad
 U^*(i\Omega)U=\begin{pmatrix}g&0\\0&h\end{pmatrix},
\]
where $g>0>h$.  The columns of $W=HU$ are orthonormal, and a direct
calculation gives
\[
 iL=-W\begin{pmatrix}g^{-1}&0\\0&h^{-1}\end{pmatrix}W^*.
\]
Thus $iL$ has one positive and one negative nonzero eigenvalue.  If $w$
is either column of $W$, then
\[
        |g|\ \text{or}\ |h|=|iw^*Sw|\le \|S\|,
\]
which proves the bound.  
\end{pf}
In the application of Section~9, take $H=R_n$ and
$\Omega=R_n^*S_nR_n$.  The full solution matrix $\Phi_n$ is invertible and
$ i\Phi_n^*S_n\Phi_n$ has the same inertia as $iS_n$, namely $(l(n),l(n))$.
The stable and unstable blocks contribute $(l(n)-1,l(n)-1)$, because they are
paired nondegenerately by the symplectic form.  Hence $i\Omega$ has
signature $(1,1)$, and Lemma~\ref{lem:center-kernel-signature} gives
exactly the signature and uniform lower bound asserted in Lemma~9.2.

\subsubsection{The cofactor identity}

We give the linear-algebra calculation used in Section~9.  To avoid a
collision between the ambient matrix and its blocks, denote the ambient
matrix by $\mathcal A=(a_{ij})$.  Let
\[
 T_1=\begin{pmatrix}
 t_l&\cdots&t_1\\
   &\ddots&\vdots\\
   &&t_l
 \end{pmatrix},
 \qquad
 T=\begin{pmatrix}0&-T_1^*\\T_1&0\end{pmatrix},
\]
and suppose, with respect to the block decomposition
$l_1+(2l-2l_1)+l_1$, that
\begin{equation}
 \mathcal A^*T\mathcal A=
 \begin{pmatrix}
  0&0&Q_1\\
  0&Q_2&0\\
  -Q_1^*&0&0
 \end{pmatrix}.                                      \label{eq:cofactor-block}
\end{equation}
Let $\alpha_1,\alpha_2,\alpha_3$ be the three corresponding blocks of the
$(l+1)$-st row of $\mathcal A$, and let $\mathcal A_{ij}$ denote the
$(i,j)$-cofactor of $\mathcal A$.

\begin{Lemma}[\cite{gj}]
\label{gjygjy111}
Under the assumptions above,
\[
 \begin{pmatrix}
  \mathcal A_{1,1}\\ \mathcal A_{1,2}\\ \vdots\\
  \mathcal A_{1,2l}
 \end{pmatrix}
 =t_l\det\mathcal A
 \begin{pmatrix}
  -(Q_1^*)^{-1}\alpha_3^*\\
  Q_2^{-1}\alpha_2^*\\
  Q_1^{-1}\alpha_1^*
 \end{pmatrix}.
\]
\end{Lemma}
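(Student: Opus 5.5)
We want the cofactor column of $\mathcal A$, i.e.\ the first row of $\operatorname{adj}\mathcal A$, expressed via the $(l+1)$-st row. The plan is to compute $\mathcal A^{-1}$ from the block identity \eqref{eq:cofactor-block} and then read off the needed entries using $\operatorname{adj}\mathcal A=\det\mathcal A\cdot\mathcal A^{-1}$.

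Write $\mathcal G$ for the block matrix on the right of \eqref{eq:cofactor-block}. Since $\mathcal A$ is invertible in the application, $\mathcal G$ is invertible, so $Q_1,Q_2$ are invertible. Solving \eqref{eq:cofactor-block} gives
\[
\mathcal A^{-1}=\mathcal G^{-1}\mathcal A^*T,\qquad
\mathcal G^{-1}=\begin{pmatrix}0&0&-(Q_1^*)^{-1}\\0&Q_2^{-1}&0\\Q_1^{-1}&0&0\end{pmatrix},
\]
where the block inverse is checked directly.

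The cofactor $\mathcal A_{1,j}$ is the $(j,1)$ entry of $\operatorname{adj}\mathcal A$. Hence the column $(\mathcal A_{1,1},\dots,\mathcal A_{1,2l})^T$ is $\det\mathcal A$ times the first column of $\mathcal A^{-1}$, namely $\det\mathcal A\cdot\mathcal G^{-1}\mathcal A^*Te_1$. So everything reduces to computing $Te_1$.

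The first column of $T$ is $\binom{0}{T_1e_1}$. Because $T_1$ is upper triangular, its first column is $t_le_1$, so $Te_1=t_le_{l+1}$. Consequently
\[
\mathcal A^*Te_1=t_l\,\mathcal A^*e_{l+1}=t_l\,(\text{row }l+1\text{ of }\mathcal A)^*=t_l\begin{pmatrix}\alpha_1^*\\ \alpha_2^*\\ \alpha_3^*\end{pmatrix}.
\]
Applying $\mathcal G^{-1}$ then gives exactly $t_l\det\mathcal A\,\bigl(-(Q_1^*)^{-1}\alpha_3^*,\;Q_2^{-1}\alpha_2^*,\;Q_1^{-1}\alpha_1^*\bigr)^T$.

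The only delicate point is the bookkeeping, and the main check is on the conventions. The cofactor column corresponds to the first column of the adjugate, since cofactors of row $1$ transpose into column $1$. The $(l+1)$-st coordinate is picked out by $T_1e_1=t_le_1$, with the sign coming from the lower-left block $T_1$ rather than from $-T_1^*$. If the paper's cofactor convention were instead conjugated or transposed, one would adjust by taking adjoints, but with the standard convention the identity follows directly as above.
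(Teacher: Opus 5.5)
Your computation is correct and complete. The three ingredients give exactly the stated vector: $\mathcal A^{-1}=\mathcal G^{-1}\mathcal A^*T$, the identity $\operatorname{adj}\mathcal A=\det\mathcal A\,\mathcal A^{-1}$ (with cofactors of row $1$ forming the first column of the adjugate), and $Te_1=t_le_{l+1}$, which follows because $T_1$ is upper triangular. Your block inverse of $\mathcal G$ is also right.

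The paper gives no proof of its own here; it simply cites \cite{gj}. Your argument is the same symplectic Gram-matrix inversion the paper uses for $\Phi_z^{-1}$ in the proof of Lemma~\ref{final1}.

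One small point to tidy: take invertibility of $\mathcal G$ from the implicit hypothesis that $Q_1,Q_2$ are invertible, which then forces $\det\mathcal A\neq 0$ and $t_l\neq 0$. Deducing it from invertibility of $\mathcal A$ alone would also require $t_l\neq 0$.
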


\subsubsection{Nonconstancy of the mean center rotation}
\label{app:mean-rotation}

We prove the only consequence of the rotation theory needed in the
(super)critical argument.  Let $I$ be an interval as in
the long-range Kotani theorem.  Write the limiting center dynamics as
\[
        M(z,\theta)=e^{2\pi i\varphi(z,\theta)}C(z,\theta),
\]
and suppose that, after shrinking $I$ if necessary,
\begin{equation}
 B(z,\theta+\alpha)^{-1}C(z,\theta)B(z,\theta)
   =R_{\psi(z,\theta)}.                              \label{eq:center-rotation}
\end{equation}
Here all quantities are holomorphic in $z$ near $I$, and $B,C,\psi$ are
real on the real axis.  Set
\[
        \widehat\psi(z)=\int_{\mathbb T}\psi(z,\theta)\,d\theta.
\]

\begin{Proposition}
\label{prop:mean-rotation-nonconstant}
The function $E\mapsto\widehat\psi(E)$ is not constant on any nonempty
subinterval of $I$.  More precisely, after fixing the orientation of the
rotation coordinate, for almost every $E\in I'$ for some $I'\subset I$ \footnote{$I'$ Is the interval such that $m(E,\theta)$ is analytic.},
\begin{equation}
 \left|\widehat\psi'(E)\right|
 \geq \frac{1}{8\pi}\int_{\mathbb T}
   \frac{|u_E(\theta,0)|^2}{\Im m(E,\theta)}\,d\theta>0,             \label{eq:mean-rotation-derivative}
\end{equation}
where
\[
        u_E(\theta,0)=m(E,\theta)U(E,\theta,0)+V(E,\theta,0).
\]
\end{Proposition}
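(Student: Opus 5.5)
Our plan is to compute $\widehat\psi'(E)$ through the complexified center cocycle and identify its boundary behavior with the Kotani-type quantity already controlled in Section~\ref{slongkotani}. For $z$ slightly off the real axis, the conjugation \eqref{eq:center-rotation} persists holomorphically, and $R_{\psi(z,\theta)}$ has eigenvalues $e^{\pm 2\pi i\psi(z,\theta)}$. The eigenvectors $(1,\mp i)^T$ are constant. Pulling back by $B(z,\cdot)$ and $O_n$ therefore yields invariant sections of the limiting center, and these can be identified with $u^\pm_z$ (the sections lying in $E_\pm$) up to scalar normalization. Consequently the center Lyapunov exponents of $M(z,\cdot)$ are
\[
\mp 2\pi\Im\int_{\mathbb T}\psi(z,\theta)\,d\theta-2\pi\Im\int_{\mathbb T}\varphi(z,\theta)\,d\theta ,
\]
exactly as in the finite-range computation following Corollary~\ref{Cwreducibility}. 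By Theorem~\ref{1general}, the exponent attached to $u^+_z$ equals $-\gamma_1(\bar z)=-2\pi\e_1(\bar z)$ in the limit. Subtracting the $\pm$ relations, or equivalently using the symmetry $\bar z\leftrightarrow z$ together with the vanishing of the determinant winding from Lemma~\ref{lem:center-det-zero}, isolates $\Im\widehat\psi(E+i\delta)$ in terms of $\e_1(E\pm i\delta)$.

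Next we apply the Cauchy--Riemann equations. Since $\widehat\psi$ is holomorphic and real on $I$, we have $\widehat\psi'(E)=\lim_{\delta\downarrow0}\Im\widehat\psi(E+i\delta)/\delta$. We therefore need a lower bound on $\e_1(E+i\delta)/\delta$, or on its $\partial_{\Im z}$-derivative. That derivative is given by Lemma~\ref{final1conbimsa} as
\[
\int_{\mathbb T}\Im\frac{\overline{u^-_{\bar z}}\,u^+_z}{\overline{m_-}-m_+}\,d\theta .
\]
At the boundary, step~5 of the proof of Theorem~\ref{Kotani-longrange} gives $m_-(E-i0)=m_+(E+i0)=:m$ with $\Im m>0$, and the equality case of the Cauchy--Schwarz chain \eqref{gjyabc1'} holds. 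In particular $u^-_{E-i0}(0)$ and $u^+_{E+i0}(0)$ coincide up to conjugation, so the integrand becomes $|u_E(\theta,0)|^2/(2\Im m(E,\theta))$. Tracking the factors of $2\pi$ and the $1/4$ lost in the AM--GM step should produce the constant $1/(8\pi)$; the absolute value absorbs the choice of orientation.

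For positivity of the right-hand side we argue as in step~5: if $u_E(\theta,0)=0$ on a set of positive measure, ergodicity together with covariance forces all coordinates of the center solution to vanish, contradicting the normalization $O_n^*S_nO_n\to J$. The main obstacle is justifying the exchange of the limit $\delta\downarrow0$ with the integral, since Fatou only gives one inequality. We would apply Fatou's lemma to the nonnegative quantity $f+\sqrt\delta g$ from Lemma~\ref{gjyleconaa}, which yields the bound with $\ge$. This direction is exactly what is required, because we only need a lower bound at almost every $E$ where the boundary values exist. Nonconstancy on subintervals then follows, since a holomorphic function with a nonzero derivative almost everywhere cannot be constant on any interval.
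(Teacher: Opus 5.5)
Your skeleton agrees with the paper's. The constant eigenlines of $R_{\psi}$ give $4\pi\Im\widehat\psi(z)=\gamma_1(z)+\gamma_1(\bar z)$. Subtracting the two relations already eliminates $\varphi$; Lemma~\ref{lem:center-det-zero} concerns phase rather than energy complexification and is not needed, and for nonreal $z$ the sum of the center exponents need not vanish. Cauchy--Riemann turns $\widehat\psi'(E)$ into an $\eta$-derivative, and the reflectionless relation from Step~5 of Theorem~\ref{Kotani-longrange} turns the boundary integrand into $|u_E(\theta,0)|^2/(2\Im m(E,\theta))$.

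The gap is in the step you single out as the main obstacle. Write $f=A_++A_--4K$, where $A_\pm$ are the two Kotani terms and $K=\Im\frac{\overline{u^-_{\bar z}(\theta,0)}\,u^+_z(\theta,0)}{\overline{m_-(\bar z,\theta)}-m_+(z,\theta)}$ is the integrand of Lemma~\ref{final1conbimsa}. Fatou applied to $f+\sqrt{\delta}g\ge0$ gives $\int_{\T} f(E+i0,\theta)\,d\theta\le\liminf_{\delta\downarrow0}\int_{\T} f(E+i\delta,\theta)\,d\theta$. Because $K$ enters $f$ with coefficient $-4$, this bounds $\int_{\T}K\,d\theta=\partial_{\Im z}(2\pi\e_1)$ from \emph{above} by the $A_\pm$-integrals. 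It is exactly the inequality used in Step~5 to prove $f(E+i0,\cdot)=0$, and it gives no lower bound on $\widehat\psi'(E)$, so it is not the required direction. Nor can you apply Fatou to $K$ itself, since $K$ has no a priori sign for $\delta>0$.

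The paper does not use Fatou here at all. At the end of Step~5 (via Lemma~\ref{aj}) the $m$-function on $I'$ extends analytically across the real axis. Moreover $\Im m(E,\theta)\ge c_E>0$ uniformly in $\theta$, by continuity, minimality and coinvariance. Hence for small $\eta$ the denominator $\overline{m_-(E-i\eta,\theta)}-m_+(E+i\eta,\theta)$ stays uniformly away from $0$ and the numerator converges uniformly. The limit therefore passes inside the integral and gives the exact boundary derivative $\frac12\int_{\T}|u_E(\theta,0)|^2/\Im m(E,\theta)\,d\theta$.

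If you want a Fatou argument, apply it to the right nonnegative quantity. The integrand $A_+$ of Lemma~\ref{le2con} is nonnegative and tends pointwise to $|u_E(\theta,0)|^2/\Im m(E,\theta)$, using $y_+\to-\Im m$. Its integral is at most $4\pi\e_1(z)/\Im z$, which is $2\gamma_1/\Im z$ for the corresponding exponent. Fatou then yields $\liminf_{\delta\downarrow0}\gamma_1/\delta\ge\frac12\int_{\T}|u_E|^2/\Im m$. Discarding the other nonnegative exponent in $4\pi\Im\widehat\psi(z)=\gamma_1(z)+\gamma_1(\bar z)$ gives exactly the constant $1/(8\pi)$. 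This route needs only the almost-everywhere boundary values already obtained in Step~5, not uniform convergence.
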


\begin{pf}
Let $Q$ be a constant matrix satisfying
\[
 Q^{-1}R_tQ=\operatorname{diag}(e^{2\pi it},e^{-2\pi it}).
\]
By \eqref{eq:center-rotation}, the two invariant line multipliers of the
complexified center dynamics are
\[
 e^{2\pi i(\varphi+\psi)}
 \quad\text{and}\quad
 e^{2\pi i(\varphi-\psi)}.
\]
By the construction of the limiting center, their Lyapunov exponents are
$-\gamma_1(\bar z)$ and $\gamma_1(z)$, respectively.  Interchanging the two
lines if necessary therefore gives
\begin{equation}
 4\pi\,\Im\widehat\psi(z)
   =\gamma_1(z)+\gamma_1(\bar z).                    \label{eq:psi-gamma}
\end{equation}
The choice of sign is immaterial below.

Let $z=E+i\eta$, $\eta>0$.  In the notation of Section~10, the corrected
long-range derivative identity is
\[
 \frac{d}{d\eta}\gamma_1(E-i\eta)
 =\int_{\mathbb T}\Im
   \frac{\overline{u^-_{E-i\eta}(\theta,0)}
         u^+_{E+i\eta}(\theta,0)}
        {\overline{m_-(E-i\eta,\theta)}-m_+(E+i\eta,\theta)}\,d\theta.
\]
Together with the reflectionless boundary relation proved in Step~5 of
Theorem~10.1, this gives for almost every $E\in I'$ where $I'$ is the subinterval of $I$ such that $m(E,\theta)$ is analytic, $\Im m(E+i0,\theta)\geq c_E>0$ for any  $\theta\in\T$ \footnote{By Step 5 in Theorem 10.1, $\Im m(E+i0,\theta)>0$ for almost every $\theta$, by continuity, minimality of irrational rotation and coinvariance, we have $c_E=\min_{\theta\in\T}\Im m(E,\theta)>0$.}, the holomorphic boundary extension then gives uniform convergence of the numerator and keep $\overline{m_-(E-i\eta,\theta)}-m_+(E+i\eta,\theta)$ away from zero, uniformly on $\theta$, for small $\eta$. Hence 
\begin{equation}
 \left.\frac{d}{d\eta}\gamma_1(E-i\eta)\right|_{\eta=0^+}
 =\frac12\int_{\mathbb T}
   \frac{|u_E(\theta,0)|^2}{\Im m(E,\theta)}\,d\theta.              \label{eq:gamma-boundary-derivative}
\end{equation}
Indeed, at the boundary the denominator in that identity is
$\overline{m(E,\theta)}-m(E,\theta)=-2i\Im m(E,\theta)$.

Finally, $\widehat\psi$ is holomorphic and real on $I$, so the
Cauchy--Riemann equations give
\[
 \widehat\psi'(E)
 =\left.\partial_\eta\Im\widehat\psi(E+i\eta)\right|_{\eta=0^+}\geq \frac{1}{8\pi}\int_{\mathbb T}
   \frac{|u_E(\theta,0)|^2}{\Im m(E,\theta)}\,d\theta>0.
\]
Combining this identity with \eqref{eq:psi-gamma} and
\eqref{eq:gamma-boundary-derivative} proves
\eqref{eq:mean-rotation-derivative}, up to the choice of orientation.
The integral is finite by the Kotani estimate.  It is strictly positive:
if $u_E(\theta,0)$ vanished identically in $\theta$, covariance would force
the corresponding center solution to vanish identically.
\end{pf}

\begin{Remark}
The proof uses only the limiting center construction, the long-range Kotani
derivative identity, and the reflectionless boundary relation established in
the main text.  It does not use the general rotation--IDS correspondence proved in
\cite[Theorem~8.1]{gj}.
\end{Remark}
\subsection{Other techinical lemmas}
Assume $(\Omega,T)$ is minimal and uniquely ergodic, $f:\Omega\rightarrow \R$ is continuous
and $L_E^f$ is $PH2$. Partial hyperbolicity means that  there exist a continuous invariant decomposition
$$
\C^{2d}=E^s(\omega)\oplus E^c(\omega)\oplus E^u(\omega).
$$
Moreover, there are $C(E),\delta(E)>\delta'(E)>0$, such that for any $\omega\in\Omega$ and $n\geq 1$, we have
\begin{align}\label{eq10a}
\left\|(L_{E}^{f})_{-n}(\omega)v\right\|>C^{-1}e^{\delta n},\ \  \forall v\in E^s(\omega)\backslash\{0\},\ \ \|v\|=1,
\end{align}
\begin{align}\label{eq11a}
\left\|(L_{E}^{f})_{n}(\omega)u\right\|>C^{-1}e^{\delta n},\ \  \forall u\in E^u(\omega)\backslash\{0\},\ \ \|u\|=1.
\end{align}
\begin{align}\label{eq13a}
\left\|(L_{E}^{f})_{\pm n}(\omega)w\right\|<Ce^{\delta' n},\ \  \forall w\in E^c(\omega)\backslash\{0\},\ \ \|w\|=1.
\end{align}
\begin{align}\label{eq12a}
{\rm dim} E^c(\omega)=2.
\end{align}

For any $\begin{pmatrix}
u(d-1)\\
u(d-2)\\
\vdots\\
u(-d)
\end{pmatrix}\in \C^{2d}$, there exist
$$
\begin{pmatrix}
u^s(d-1)\\
u^s(d-2)\\
\vdots\\
u^s(-d)
\end{pmatrix}\in E^s(\omega), \ \ \begin{pmatrix}
u^c(d-1)\\
u^c(d-2)\\
\vdots\\
u^c(-d)
\end{pmatrix}\in E^c(\omega), \ \
\begin{pmatrix}
u^u(d-1)\\
u^u(d-2)\\
\vdots\\
u^u(-d)
\end{pmatrix}\in E^u(\omega)
$$
such that
$$
\begin{pmatrix}
u(d-1)\\
u(d-2)\\
\vdots\\
u(-d)
\end{pmatrix}=\begin{pmatrix}
u^s(d-1)\\
u^s(d-2)\\
\vdots\\
u^s(-d)
\end{pmatrix}+\begin{pmatrix}
u^c(d-1)\\
u^c(d-2)\\
\vdots\\
u^c(-d)
\end{pmatrix}+\begin{pmatrix}
u^u(d-1)\\
u^u(d-2)\\
\vdots\\
u^u(-d)
\end{pmatrix}.
$$

\begin{Lemma}\label{lee2}
 For $n$ sufficiently large, if $\left\|\begin{pmatrix}
u(\pm n+d-1)\\
u(\pm n+d-2)\\
\vdots\\
u(\pm n-d)
\end{pmatrix}\right\|\leq Ce^{\delta'|n|}$ ,  then  we have
$$
\left\|\begin{pmatrix}
u^s(d-1)\\
u^s(d-2)\\
\vdots\\
u^s(-d)
\end{pmatrix}\right\|,\ \ \left\|\begin{pmatrix}
u^u(d-1)\\
u^u(d-2)\\
\vdots\\
u^u(-d)
\end{pmatrix}\right\|\leq Ce^{-(\delta-\delta') |n|}.
$$
\end{Lemma}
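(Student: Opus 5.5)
We decompose the initial vector according to the continuous splitting $E^s\oplus E^c\oplus E^u$ and propagate each component separately. The key input is that the splitting is uniformly transverse: since $\Omega$ is compact and the bundles are continuous, the projections $P^s(\omega),P^c(\omega),P^u(\omega)$ onto the three summands along the other two are uniformly bounded in $\omega$. By invariance, $P^\ast(T^n\omega)(L_E^f)_n(\omega)=(L_E^f)_n(\omega)P^\ast(\omega)$ for $\ast=s,c,u$ and all $n\in\Z$.

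We start with the unstable component and forward time. Write $U(n)$ for the vector $(u(n+d-1),\dots,u(n-d))^T$, so $U(n)=(L_E^f)_n(\omega)U(0)$. Applying $P^u(T^n\omega)$ gives
$$(L_E^f)_n(\omega)U^u(0)=P^u(T^n\omega)U(n),$$
whose norm is at most $\sup\|P^u\|\,Ce^{\delta'n}$ by hypothesis. On the other hand, \eqref{eq11a} applied to the unit vector $U^u(0)/\|U^u(0)\|$ bounds the same norm from below by $C^{-1}e^{\delta n}\|U^u(0)\|$. Comparing the two bounds gives $\|U^u(0)\|\le C'e^{-(\delta-\delta')n}$.

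The stable component is handled symmetrically in backward time. We use $U(-n)=(L_E^f)_{-n}(\omega)U(0)$ and \eqref{eq10a}, and the same comparison yields $\|U^s(0)\|\le C'e^{-(\delta-\delta')n}$. The constant is adjusted to absorb $\sup\|P^\ast\|$; this is where "for $n$ sufficiently large", or equivalently enlarging $C$, enters.

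The only point needing care is the uniform boundedness of the projections. It follows from continuity of the dominated splitting together with compactness, since the angles between the bundles are bounded away from zero. Note that no estimate on the center component is required here; \eqref{eq13a} is not used.
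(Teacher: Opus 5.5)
Your proposal is correct and follows the paper's proof essentially step for step. Like the paper, you bound the invariant projections onto $E^s$ and $E^u$ uniformly using continuity of the splitting and compactness of $\Omega$, commute them with the iterates, and compare the expansion lower bounds \eqref{eq11a} (forward time, unstable part) and \eqref{eq10a} (backward time, stable part) with the assumed $Ce^{\delta'|n|}$ bound, absorbing $\sup\|P^\ast\|$ into the constant.
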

\begin{pf}
Let $\Pi_s(\omega)$ and $\Pi_u(\omega)$ be the projections (associated with the invariant direct sum) onto
$E^s(\omega)$ and $E^u(\omega)$. By continuity of the splitting and compactness
of $\Omega$,
$$
\sup_{\omega\in\Omega}
\max\{\|\Pi_s(\omega)\|,\|\Pi_u(\omega)\|\}\le C.
$$
By invariance and (12.6),
$$
\begin{aligned}
\left\|
\begin{pmatrix}
u^s(d-1)\\
u^s(d-2)\\
\vdots\\
u^s(-d)
\end{pmatrix}
\right\|
&\le Ce^{-\delta n}
\left\|
\Pi_s(T^{-n}\omega)
\begin{pmatrix}
u(-n+d-1)\\
u(-n+d-2)\\
\vdots\\
u(-n-d)
\end{pmatrix}
\right\|\\
&\le Ce^{-(\delta-\delta')n}.
\end{aligned}
$$
Similarly, by invariance and (12.7),
$$
\begin{aligned}
\left\|
\begin{pmatrix}
u^u(d-1)\\
u^u(d-2)\\
\vdots\\
u^u(-d)
\end{pmatrix}
\right\|
&\le Ce^{-\delta n}
\left\|
\Pi_u(T^n\omega)
\begin{pmatrix}
u(n+d-1)\\
u(n+d-2)\\
\vdots\\
u(n-d)
\end{pmatrix}
\right\|\\
&\le Ce^{-(\delta-\delta')n}.
\end{aligned}
$$
\end{pf}

\begin{Proposition}\label{newwe}
There exist two linearly independent $u_E(x), v_E(x)\in E_c(x)$  depending analytically on $E$ and  $x$ on $\C_{\delta}\cap \R\times \T$, such that 
\begin{equation}\label{gjy1n}
\begin{pmatrix}
u^*_{E}(x)\\ v^*_{E}(x)
\end{pmatrix}S\begin{pmatrix}
u_{E}(x)& v_{E}(x)\end{pmatrix}=\begin{pmatrix}
0& 1\\
-1&0\end{pmatrix}=:J.
\end{equation}
\end{Proposition}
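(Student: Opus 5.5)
We first produce an analytic frame of the center, then correct it to a symplectic one. By Corollary~\ref{partial} (equivalently Theorem~\ref{dominate1}) and Theorem 6.1 of \cite{ajs}, the center bundle $E_c(E,x)$ is two-dimensional, continuous, invariant and depends analytically on $(E,x)$ for $E$ in a complex neighborhood $\C_\delta$ of the spectrum. By Proposition~\ref{trivial}, every holomorphic vector bundle over $R_\delta(E)\times\Omega_h$ is trivial. We therefore pick a global analytic frame $\tilde u_E(x),\tilde v_E(x)$ of $E_c$ and set
\[
\Omega_E(x)=\begin{pmatrix}\tilde u_E^*\\ \tilde v_E^*\end{pmatrix}S\begin{pmatrix}\tilde u_E&\tilde v_E\end{pmatrix}
\]
for real $E$. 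For $E$ complex, the conjugated-variable version is holomorphic, with $\tilde u_{\bar E}^*$ in place of $\tilde u_E^*$. Since $S^*=-S$, the matrix $i\Omega_E(x)$ is Hermitian.

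Next we show that $i\Omega_E(x)$ is nondegenerate with signature $(1,1)$; this is the main obstacle. The argument mirrors the appendix discussion after Lemma~\ref{lem:center-kernel-signature}. For real $E$ the cocycle preserves $S$. If $v\in E_s$ and $w\in E_s$, then $v^*Sw=(A_nv)^*S(A_nw)\to0$, so $E_s$ is $S$-isotropic, and likewise $E_u$. A similar argument using the growth-rate gaps $\delta>\delta'$ gives $E_c\perp_S(E_s\oplus E_u)$: pairing a center vector with a stable vector, iterate forward, where the product is at most $e^{\delta'n}e^{-\delta n}\to0$; for unstable vectors iterate backward. Hence the form $i S$ restricted to $\C^{2d}$ splits as $(E_s\oplus E_u)\oplus E_c$. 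On $E_s\oplus E_u$, $S$ pairs $E_s$ with $E_u$ nondegenerately (nondegeneracy follows from nondegeneracy of $S$ and the orthogonality just shown), so that block has signature $(d-1,d-1)$. Since $iS$ has signature $(d,d)$, being congruent to $\begin{pmatrix}0&-iC^*\\ iC&0\end{pmatrix}$, the center block $i\Omega_E$ has signature $(1,1)$ and is in particular nondegenerate.

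Finally we normalize. Because $i\Omega_E(x)$ is Hermitian with signature $(1,1)$, pointwise there is $P\in GL(2,\C)$ with $P^*\Omega_EP=J$, since $iJ$ also has signature $(1,1)$. We need $P$ analytic in $(E,x)$ and periodic in $x$. To get it, we diagonalize by an analytic Gram–Schmidt-type procedure. If $\tilde u^*S\tilde u\neq0$ we can use it directly; otherwise we pass to a combination of $\tilde u,\tilde v$ that makes it nonzero. This produces an analytic isotropic vector $u$ with $u^*Su=0$: the set of such directions is a circle's worth, and we choose one analytically by solving a quadratic whose discriminant has a sign fixed by the signature. We then set $v=(u^*Sv')^{-1}v'$ for a suitable complementary $v'$, corrected by a multiple of $u$ to make $v^*Sv=0$. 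This yields $J$ exactly for real $E$.

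If a global analytic choice of the isotropic direction fails because of monodromy in $x$, we instead use that $GL(2,\C)$-valued analytic maps on the Stein domain $R_\delta\times\Omega_h$ with the required pointwise solvability admit analytic solutions, again via Proposition~\ref{trivial} applied to the bundle of frames. We extend the resulting normalization holomorphically in $E$ via the identity $u_{\bar z}^*Su_z$.
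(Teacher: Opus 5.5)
Your first two steps coincide with the paper's proof. You take a global analytic frame $\tilde u_E,\tilde v_E$ of $E_c$ from Theorem~6.1 of \cite{ajs} and Proposition~\ref{trivial}, and you use the inertia count to show that $i\Omega_E(x)$ has signature $(1,1)$. You make explicit the $S$-isotropy of $E_s,E_u$ and the $S$-orthogonality of $E_c$ to $E_s\oplus E_u$, which the paper compresses into ``symplectic orthogonality''.

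\textbf{The gap is in the normalization step.} This is the only place where analyticity in $(E,x)$ and periodicity in $x$ must actually be earned. Write $i\Omega_E(x)=\begin{pmatrix}a&b\\ \bar b&c\end{pmatrix}$ with $ac-|b|^2<0$.
\begin{itemize}
\item The isotropic vectors $\xi\tilde u+\tilde v$ form a circle of directions, not the root set of a quadratic.
\item If you restrict to real $\xi$, you get $a\xi^2+2(\Re b)\xi+c$. Its discriminant $(\Re b)^2-ac$ can be negative despite signature $(1,1)$, e.g.\ $a=c=1$, $b=2i$.
\item One-parameter families that do have positive discriminant, such as $\xi=t\,b/|b|$, are singular where $b$ vanishes.
\end{itemize}
So no global analytic choice of the null direction is produced.

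The fallback does not supply one either. The condition $P^*\Omega_EP=J$ involves complex conjugation, so the set of normalizing frames is not a holomorphic vector bundle over $R_\delta(E)\times\Omega_h$. Proposition~\ref{trivial} therefore says nothing about it, and ``pointwise solvability implies analytic solvability'' is an Oka-type principle you have not justified.

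\textbf{The missing idea is to diagonalize rather than search for null vectors directly.}
\begin{itemize}
\item Since $i\Omega_E(x)$ is Hermitian with one positive and one negative eigenvalue, both eigenvalues are simple. Hence they are analytic in $(E,x)$ and nonvanishing.
\item The eigenprojections are analytic, so after complexification the two eigenlines are analytic line bundles over a domain homotopic to a circle. Proposition~\ref{trivial} trivializes them.
\item This gives an analytic unitary $U_E(x)$ with $\Omega_E=U_E^*\,\mathrm{diag}(-i\mu^1_E,\,i\mu^2_E)\,U_E$, where $\mu^j_E>0$. Then
\[
(u_E,v_E)=\tfrac{1}{\sqrt2}\,(\tilde u_E,\tilde v_E)\,U_E^*\,\mathrm{diag}\bigl((\mu^1_E)^{-1/2},(\mu^2_E)^{-1/2}\bigr)\begin{pmatrix}-i&1\\ i&1\end{pmatrix}
\]
satisfies \eqref{gjy1n}. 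This is the paper's proof.
\end{itemize}

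Once you have these analytic unit eigenvectors, your own route also closes. Let $e_\pm$ be the eigenvectors of $i\Omega_E$ with eigenvalues $\lambda_+>0>\lambda_-$, read through the frame $(\tilde u_E,\tilde v_E)$.
\begin{itemize}
\item $u=e_+/\sqrt{\lambda_+}+e_-/\sqrt{|\lambda_-|}$ is an analytic nonvanishing null vector.
\item $v'=e_+/\sqrt{\lambda_+}-e_-/\sqrt{|\lambda_-|}$ is an analytic complement with $u^*Sv'\neq0$.
\item After your rescaling, write $v^*Sv=it$ and replace $v$ by $v+\tfrac{i}{2}t\,u$. This yields exactly $J$.
\end{itemize}
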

\begin{pf}
This is a two-dimensional, parameter-dependent variant of
\cite[Lemma~6.1]{gj}; we include the proof for completeness. Note that it follows from \eqref{eqnew1} and Proposition \ref{trivial}, there are global holomorphic frames $\{f_E^j(x)\}_{j=1}^{d-1}\in E_s(x)$,  $\{\tilde{u}^j_E(x)\}_{j=1}^{2}\in E_c(x)$  and $\{g_E^j(x)\}_{j=1}^{d-1}\in E_u(x)$ respectively.  

Let
 $$
 \tilde{O}_E(\theta)=\begin{pmatrix}\tilde{u}_E^1(\theta)&\tilde{u}_E^{2}(\theta)\end{pmatrix},
 $$
 $$
 P_E(x)=\begin{pmatrix}f_E^1(x)&\cdots&f_E^{d-1}(x)&\tilde{u}_E^1(x)&\tilde{u}_E^{2}(x)&g_E^1(x)&\cdots&g^{d-1}_E(x)\end{pmatrix},
 $$
 \begin{equation}\label{eiga}
\tilde{\Omega}_E(x)=\tilde{O}_E(x)^*S\tilde{O}_E(x), \ \ \Lambda_E(x)=P_E(x)^*SP_E(x)=\begin{pmatrix}&&A_E(x)\\
&\tilde{\Omega}_E(x)&\\ -A_E(x)^*&&\end{pmatrix} \footnote{$\Lambda_E(\theta)$ has this form because of the symplectic orthogonality.}
\end{equation}
for some $A_E(x)$. 
Since $iS$ has $d$ positive and $d$ negative eigenvalues, by \eqref{eiga},  we have $i\tilde{\Omega}_E(x)$ has $1$-positive and $1$-negative eigenvalues for all $(E,x)\in(\C_\delta\cap \R)\times \T$. Let $-i\mu_E^1(x)$, $i\mu_E^2(x)$ be the eigenvalues of $\tilde{\Omega}_E(x)$ ($\mu^1_E(x), \mu_E^2(x)>0$).  They depend analytically on $E$ and  $x$, and thus are bounded away from $0$. This further implies that ${\rm dim}\Ker{(\tilde{\Omega}_E(x)-(-1)^{j}i\mu_E^j(x))}=1$ for $j=1,2$ and all $(E,x)\in(\C_\delta\cap \R)\times \T$. Hence, by Proposition \ref{trivial}, there exists a $2\times 2$ matrix $U_E(x)$,  depending analytically on $E$ and  $x$, such that
$$
\tilde{\Omega}_E(x)=U_E(x)^*\begin{pmatrix} -i\mu_E^1(x)&0\\0& i\mu_E^2(x)\end{pmatrix}U_E(x),\ \ U_E(x)^*U_E(x)=I.
$$

Finally, we define
$$
\begin{pmatrix} u_E(x)& v_E(x)\end{pmatrix}=\frac{1}{\sqrt{2}}\widetilde{O}_E(x)U_E(x)^*\begin{pmatrix} \sqrt{\frac{1}{\mu_E^1(x}}&0\\0&\sqrt{\frac{1}{\mu_E^2(x)}}\end{pmatrix}\begin{pmatrix}-i&1\\ i&1\end{pmatrix},
$$
then we have that $u_E(x)$ and $v_E(x)$ depend analytically on $E$ and $x$. Moreover, we have 
$$
\begin{pmatrix} u_E^*(x)\\ v_E^*(x)\end{pmatrix}S\begin{pmatrix}u_E(x)& v_E(x)\end{pmatrix}=\begin{pmatrix}
0&1\\ -1&0\end{pmatrix}.
$$
\end{pf}
\begin{Remark}
The proof of the above proposition is simpler than that of Lemma 6.1 in \cite{gj}  because we do not need a fixed strip of analyticity without any loss as $d\rightarrow\infty$. Here the aim is to prove  the symplectic basis is analytic in both $E$ and $x$.
\end{Remark}
\begin{Lemma}\label{dominated matrix}
Assume $A_z(x)=J+P(x)z^2$ where $P\in C^\omega(\T,M(2,\C))$ with $P^*(x)=-P(x)$, there is $\delta(P)>0$ such that if $|z|<\delta$, there exist $B_z(x)$ depending analytically on both $z$ and $x$ such thst
$$
B^*_{\bar{z}}(x)A_z(x)B_z(x)=J.
$$
Moreover, $\sup_{x\in\T}\left|\frac{\partial B_z(x)}{\partial z}\right|\leq C|P(x)|_0|z|$ where $C$ is an absolute constant.
\end{Lemma}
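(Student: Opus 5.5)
We will look for $B_z$ in the form $B_z(x)=\exp\bigl(z^2X_z(x)\bigr)$ or, more simply, as $B_z=I+z^2Y_z(x)$ solved by the implicit function theorem in a Banach space of matrix functions analytic in $(z,x)$. The equation to solve is
\[
F(z,Y):=(I+\bar{\bar z}^{\,2}Y_{\bar z}^*)\,(J+Pz^2)\,(I+z^2Y_z)-J=0,
\]
where we must be careful about the holomorphic meaning of $B^*_{\bar z}$. The key observation is that $z\mapsto B_{\bar z}(x)^*$ is again holomorphic in $z$ if $B$ is, since $\overline{B_{\bar z}}$ is holomorphic in $z$ for real $x$. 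Writing $\widetilde B_z:=B_{\bar z}(x)^*$, the equation becomes $\widetilde B_zA_zB_z=J$, which is a holomorphic equation in $z$.

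Dividing by $z^2$, the leading order equation is $\widetilde Y_0J+JY_0=-P$ with $\widetilde Y_0=Y_0^*$ at $z=0$. Since $P^*=-P$, we can solve it explicitly. We take $Y=\tfrac12 J^{-1}P=-\tfrac12JP$, and then $Y^*J+JY=\tfrac12(P^*J^*J+JJ^{-1}P)$. Using $J^*=-J$ and $J^*J=I$, this equals $\tfrac12(P^*+P)$. That is zero, not $-P$, so we must instead use the other structure. The correct ansatz is to seek $Y$ with $JY$ skew-Hermitian-compatible, namely $Y^*J+JY=-P$. Because $JY=:K$ gives $Y^*J=-(JY)^*=-K^*$, the condition reads $K-K^*=-P$, which is solved by $K=-\tfrac12P$ using $P^*=-P$. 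Hence $Y_0=\tfrac12JP$ (since $J^{-1}=-J$) solves the linearized equation.

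Next we will run a Newton/implicit-function argument. Write $B_z=I+z^2Y_z$ and expand $F(z,Y)/z^2=(\widetilde Y J+JY+P)+z^2(\widetilde Y J Y+\widetilde YP+PY)+z^4\widetilde YPY$. The linear operator $Y\mapsto \widetilde YJ+JY$ is surjective onto the space $\{Q:\widetilde Q=-Q\}$ in the sense above, with right inverse $Q\mapsto -\tfrac12JQ$. Moreover the full expression $F/z^2$ automatically lies in that space, since $\widetilde{F}=-F$ follows from $\widetilde A_z=-A_z$ (because $J^*=-J$ and $P^*=-P$). Therefore we can iterate $Y^{(k+1)}=\tfrac12JP+\tfrac12 J\,\bigl(z^2(\dots)+z^4(\dots)\bigr)(Y^{(k)})$. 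This is a contraction in the sup norm over $\{|z|<\delta\}\times\{|\Im x|<h'\}$ once $\delta^2|P|\ll1$, so $\delta$ depends only on $|P|_0$ (and the strip). Each iterate is holomorphic in $(z,x)$, so the uniform limit is too.

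The estimate then follows. From the iteration, $|Y_z|\le C|P|$ uniformly, so $\partial_zB_z=2zY_z+z^2\partial_zY_z$. A Cauchy estimate on the slightly smaller disc $|z|<\delta/2$ bounds $z^2\partial_zY_z$ by $C|z|^2|P|/\delta$. This gives $\sup_x|\partial_zB_z|\le C|P|_0|z|$ after shrinking $\delta$.

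The main obstacle is bookkeeping the conjugate-holomorphic structure: we must check that the fixed-point map preserves the class of holomorphic $Y$ and that the skew condition $\widetilde Q=-Q$ is exactly the range condition. We will verify this first.
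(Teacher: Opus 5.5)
Your argument is correct in substance, but it takes a different route from the paper's. The paper writes down a closed form. Since $J^2=-I$, one has $A_z=J(I-JPz^2)$, and the paper takes $B_z=(I-JPz^2)^{-1/2}$ via the binomial series. Because $(JP)^*J=P^*=J(JP)$ on $\T$ and the series has real coefficients, $B_{\bar z}(x)^*J=JB_z(x)$. Hence $B_{\bar z}^*A_zB_z=JB_z(I-JPz^2)B_z=J$. Then $\partial_zB_z=zJP(I-JPz^2)^{-3/2}$ gives the derivative bound at once, with $\delta$ explicit ($\delta^2|P|_h<1/2$) and no Cauchy estimates.

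Your scheme instead solves $\widetilde B_zA_zB_z=J$ perturbatively, and the structural facts it relies on are right:
\begin{itemize}
\item $\widetilde A_z=-A_z$ makes $F$, and hence $N(Y)$, skew for every $Y$;
\item $Q\mapsto-\frac12JQ$ is a right inverse of $Y\mapsto\widetilde YJ+JY$ on skew $Q$, so a fixed point of your iteration really solves $F=0$;
\item your leading term $Y_0=\frac12JP$ is exactly the linear term of the paper's binomial series.
\end{itemize}
Your route buys only that you need not spot the factorization $A_z=J(I-JPz^2)$ and the intertwining relation $B_{\bar z}^*J=JB_z$. The price is the extra bookkeeping.

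Two points in that bookkeeping need fixing.
\begin{itemize}
\item \emph{Reflect $x$ as well as $z$.} To iterate in a space of functions holomorphic on $\{|z|<\delta\}\times\{|\Im x|<h'\}$, define $\widetilde Y_z(x):=Y_{\bar z}(\bar x)^*$. With $Y_{\bar z}(x)^*$, as you wrote it, the iterates are antiholomorphic in $x$ off $\T$. With the reflected definition, $\widetilde P=-P$ holds on the whole strip by analytic continuation of $P^*=-P$ from $\T$. The sup norm is preserved because the domain is conjugation-symmetric.
\item \emph{Get $|P|_0$, not the strip norm.} As written, your bounds are in terms of the sup of $P$ over the complex strip. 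The equation contains no shift in $x$, so you can estimate the fixed-point identity $Y=\frac12JP+\frac12z^2JN(Y)$ at each fixed real $x$. This gives $\sup_{|z|<\delta}\|Y_z(x)\|\le C\|P(x)\|$. Your Cauchy estimate on $|z|<\delta/2$ then yields the stated bound with an absolute constant $C$.
\end{itemize}

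Finally, your first trial computation has a sign slip: $(J^{-1})^*=J$, not $J^*$. So $Y=-\frac12JP$ gives $Y^*J+JY=P$, not $0$. This is consistent with your final choice $Y_0=\frac12JP$.
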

\begin{pf}
Choose $h>0$ such that $P$ extends holomorphically to
$|\Im x|<h$, and choose $\delta>0$ sufficiently small that
$\delta^2|P|_h<1/2$. Define
$$
B_z(x)=\bigl(I-JP(x)z^2\bigr)^{-1/2},
$$
using the convergent binomial series at the identity.
This defines an invertible matrix depending holomorphically
on $z$ and $x$.

For real $x$, the identities $P(x)^*=-P(x)$,
$J^*=-J$, and $J^2=-I$ imply
$$
(JP(x))^*J=J(JP(x)).
$$
Since the binomial series has real coefficients, it follows that
$$
B_{\bar z}(x)^*J=JB_z(x).
$$
Moreover,
$$
A_z(x)=J\bigl(I-JP(x)z^2\bigr).
$$
Consequently,
$$
\begin{aligned}
B_{\bar z}(x)^*A_z(x)B_z(x)
&=JB_z(x)\bigl(I-JP(x)z^2\bigr)B_z(x)\\
&=J.
\end{aligned}
$$
Finally, differentiating the power series gives
$$
\frac{\partial B_z(x)}{\partial z}
=zJP(x)\bigl(I-JP(x)z^2\bigr)^{-3/2}.
$$
Hence
$$
\sup_{x\in\mathbb T}
\left\|\frac{\partial B_z(x)}{\partial z}\right\|
\le C|P|_0|z|,
$$
where $C$ is an absolute constant.
\end{pf}

\begin{Lemma}[\cite{gjyz}]\label{basic}
Consider the following $2d$ order difference operator,
$$
(Lu)(n)=\sum\limits_{k=-d}^da_ku(n+k)+V(n)u(n).
$$
If the eigenequation $Lu=Eu$ has $2d$ linearly independent solutions $\{\phi_i\}_{i=1}^{2d}$ satisfying
$$
\phi_i\in \ell^2(\Z^+),\ \ (i=1,\cdots,m),
$$
$$
\phi_i\in\ell^2(\Z^-),\ \  (i=m+1,\cdots,2d),
$$
and $L-EI$ is invertible. Then,
$$
\langle\delta_p,(L-EI)^{-1}\delta_q\rangle=\begin{cases}
\frac{\sum\limits_{i=1}^m\phi_i(p)\Phi_{1,i}(q)}{a_d\det{\Phi(q)}} &\text{$p\geq q+1$}\\
-\frac{\sum\limits_{i=m+1}^{2d}\phi_i(p)\Phi_{1,i}(q)}{a_d\det{\Phi(q)}} &\text{$p\leq q$}
\end{cases},
$$
where
$$
\Phi(q)=\begin{pmatrix}\phi_1(q+d)&\phi_2(q+d)&\cdots&\phi_{2d}(q+d)\\ \phi_1(q+d-1)&\phi_2(q+d-1)&\cdots&\phi_{2d}(q+d-1)\\ \vdots&\vdots& &\vdots\\
\phi_1(q-d+1)&\phi_2(q-d+1)&\cdots&\phi_{2d}(q-d+1)\end{pmatrix}
$$
and $\Phi_{i,j}(q)$ is the $(i,j)$-th cofactor of $\Phi(q)$.
\end{Lemma}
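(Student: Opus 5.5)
The statement is Lemma~\ref{basic}, the Green's function formula for a $2d$-order difference operator. I will verify it by the classical route: construct the candidate kernel, check that it solves $(L-E)G=\delta_q$, and use invertibility of $L-E$ for uniqueness.

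First, fix $q$. I seek $G(\cdot,q)=(L-E)^{-1}\delta_q$. Since $L-E$ is invertible, $G(\cdot,q)\in\ell^2(\mathbb Z)$. Away from the single site $q$ it solves $Lu=Eu$. Precisely, $(L-E)G(p,q)=0$ for $p\neq q$, and each such equation involves the sites $p-d,\dots,p+d$. Hence on $p\ge q-d+1$ the function coincides with a solution of $Lu=Eu$; call it $G^+$. Similarly, on $p\le q+d-1$ it coincides with a solution $G^-$. Both one-sided pieces are genuine solutions.

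Second, use the $\ell^2$ constraints. $G^+$ is $\ell^2$ at $+\infty$, and $G^-$ is $\ell^2$ at $-\infty$. Expand both in the basis $\{\phi_i\}$. Linear independence together with the assumed splitting implies that the $\ell^2(\mathbb Z^+)$ solutions are exactly $\mathrm{span}\{\phi_1,\dots,\phi_m\}$. This holds because $L-E$ is invertible, so no nonzero solution is $\ell^2$ on both sides, and the two spans are complementary. I therefore write $G^+=\sum_{i\le m}c_i\phi_i$ and $G^-=-\sum_{i>m}c_i\phi_i$. Gluing is the requirement that $G^+(p)=G^-(p)$ on the overlap window. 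That window is $q-d+1\le p\le q+d-1$, which gives $2d-1$ equations. The remaining jump equation is the equation $(L-E)G=\delta_q$ at site $q$, which reads $a_d\,(G^+(q+d)-G^-(q+d))=1$. The other terms cancel by the matching on the window.

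Third, solve these linear equations. Together they say $\sum_{i=1}^{2d}c_i\phi_i(r)=0$ for $r=q-d+1,\dots,q+d-1$, and $\sum_i c_i\phi_i(q+d)=1/a_d$. In matrix form this is $\Phi(q)c=e_1/a_d$, where the first row of $\Phi(q)$ is the row at $q+d$. By Cramer's rule, $c_i=\Phi_{1,i}(q)/(a_d\det\Phi(q))$. The matrix $\Phi(q)$ is invertible because it is the $2d$-vector of initial data of the independent solutions at consecutive sites; here I use $a_d\neq0$, equivalently $a_{-d}\neq0$, so that $2d$ consecutive values determine a solution.

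Finally, substitute back. For $p\ge q+1$ the value is $G^+(p)$, and for $p\le q$ it is $G^-(p)$. The two pieces agree on the window, so either choice is valid there, and this yields the stated formula. The only delicate point is the sign and index bookkeeping: the jump must sit at the top row $q+d$ with coefficient $a_d$, and the $\ell^2$ characterization of the spans relies on invertibility. The remaining steps are routine linear algebra.
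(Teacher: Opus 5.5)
Your proof is correct and is essentially the standard Green's-function argument that the paper simply defers to \cite{gjyz}. You split $(L-E)^{-1}\delta_q$ into two one-sided solutions, identify them with $\mathrm{span}\{\phi_1,\dots,\phi_m\}$ and $\mathrm{span}\{\phi_{m+1},\dots,\phi_{2d}\}$ via injectivity of $L-E$ and a dimension count, and solve the matching/jump system $\Phi(q)c=e_1/a_d$ by Cramer's rule; the matching equations are exactly the vanishing-cofactor identity recorded in Remark~\ref{gz20}.

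One small point: the lemma imposes no relation between $a_d$ and $a_{-d}$, so instead of ``equivalently'' you should take both $a_d\neq0$ and $a_{-d}\neq0$ as the nondegeneracy assumption implicit in ``$2d$ order''. Both are genuinely used, for the two-sided extensions of $G^\pm$ and for the invertibility of $\Phi(q)$.
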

\begin{pf}
This lemma is essentially proved in \cite{gjyz}, however the conditions added there is  not enough.  The proof remains unchanged. 
\end{pf}
\begin{Remark}\label{gz20}
Note that $\sum_{i=1}^{2d}\phi_i(p)\Phi_{1,i}(q)=0$ for any $q-d+1\leq p\leq q+d-1$, thus we have for any $q-d+1\leq p\leq q+d-1$, we have
$$
-\sum\limits_{i=m+1}^{2d}\phi_i(p)\Phi_{1,i}(q)=\sum\limits_{i=1}^{m}\phi_i(p)\Phi_{1,i}(q).
$$
\end{Remark}

Let \begin{equation*}
C_n=\begin{pmatrix}
a_{l(n)}&\cdots&a_1\\
0&\ddots&\vdots\\
0&0&a_{l(n)}
\end{pmatrix},\ \ S_n=\begin{pmatrix}0&-C_n^*\\
C_n&0\end{pmatrix},
\end{equation*}
and let $D_n, B_n$ be sequences of $2l(n)\times m$ matrices. We denote
$$
D_n=\begin{pmatrix}
D_n(l(n))\\ \vdots \\ D_n(-l(n)+1)
\end{pmatrix}, \tilde{D}_n=\begin{pmatrix}
D_{n+1}(l(n))\\ \vdots \\ D_{n+1}(-l(n)+1)
\end{pmatrix}, 
$$
$$
B_n=\begin{pmatrix}
B_n(l(n))\\ \vdots\\ B_n(-l(n)+1)
\end{pmatrix}, \tilde{B}_n=\begin{pmatrix}
B_{n+1}(l(n))\\ \vdots\\ B_{n+1}(-l(n)+1)
\end{pmatrix}.
$$
\begin{Lemma}\label{pkule}
We have that 
\begin{align*}
&\|D_{n+1}^*S_{n+1} B_{n+1}-D_n^*S_n B_n\|\\
\leq &|a_{n+1}|\|D_{n+1}\|\|B_{n+1}\|+\left(\|\tilde{D}_n\|+\|D_n\|+\|\tilde{B}_n\|+\|B_n\|\right)
\|S_n\|\left(\|\tilde{D}_n-D_n\|+\|\tilde{B}_n-B_n\|\right)
\end{align*}
\end{Lemma}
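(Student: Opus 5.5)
The plan is a telescoping decomposition of the difference into a ``new boundary'' term and ``old bulk'' terms. First, exploit the band structure of $S_{n+1}$. Write $C_{n+1}$ as an upper triangular Toeplitz matrix of size $l(n+1)$ with entries $a_{l(n+1)},\dots,a_1$. The rows and columns of $D_{n+1},B_{n+1}$ indexed by $-l(n)+1,\dots,l(n)$ are exactly those of $\tilde D_n,\tilde B_n$. Consequently, the form $D_{n+1}^*S_{n+1}B_{n+1}$ splits into two parts. The first is a part involving only these middle indices paired through the coefficients $a_1,\dots,a_{l(n)}$, which is precisely $\tilde D_n^*S_n\tilde B_n$. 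The second is a remainder in which every term carries a coefficient $a_k$ with $k>l(n)$; by exact degree only $a_{n+1}$ can be new when $l(n+1)=n+1$, otherwise $l(n+1)=l(n)$ and the remainder is zero.

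I would verify this splitting by writing the sesquilinear form entrywise. In the notation $\vec u=(u(l-1),\dots,u(-l))$ one has
\[
u^*Sw=\sum_{k}\sum_{j}a_k\bigl(\overline{u(j)}w(j+k)-\overline{u(j+k)}w(j)\bigr)
\]
with $j$ ranging over a window of length $k$ across the ``cut''. From this one reads off that the terms with $k\le l(n)$ only involve indices in the middle $2l(n)$ block. This gives
\[
\|D_{n+1}^*S_{n+1}B_{n+1}-\tilde D_n^*S_n\tilde B_n\|\le |a_{n+1}|\,\|D_{n+1}\|\,\|B_{n+1}\|,
\]
up to an absolute constant absorbed into the stated form; by Cauchy--Schwarz the $k=n+1$ sum is bounded by the product of full norms.

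Second, compare $\tilde D_n^*S_n\tilde B_n$ with $D_n^*S_nB_n$ by the standard bilinear telescoping
\[
\tilde D_n^*S_n\tilde B_n-D_n^*S_nB_n=(\tilde D_n-D_n)^*S_n\tilde B_n+D_n^*S_n(\tilde B_n-B_n).
\]
Bounding each term by norms gives
\[
\|S_n\|\bigl(\|\tilde B_n\|\,\|\tilde D_n-D_n\|+\|D_n\|\,\|\tilde B_n-B_n\|\bigr),
\]
which is dominated by the stated right-hand side after symmetrizing the prefactors. Combining the two steps with the triangle inequality yields the claim.

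The only delicate step is the first: checking the index bookkeeping, namely that truncating $S_{n+1}$ to the middle block reproduces $S_n$ exactly, with the same anti-diagonal placement of $C_n$ inside $C_{n+1}$. The index shift between the row labels $l(n)-1,\dots,-l(n)$ and $l(n),\dots,-l(n)+1$ used in the statement must be matched consistently to the definition of $\tilde D_n$.
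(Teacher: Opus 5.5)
Your proposal is correct and is essentially the paper's argument: the paper also isolates the new-coefficient term $|a_{n+1}|\|D_{n+1}\|\|B_{n+1}\|$ and then telescopes $\tilde D_n^*S_n\tilde B_n-D_n^*S_nB_n$ bilinearly. The only difference there is cosmetic: the paper pairs $(D_n-\tilde D_n)$ with $B_n$ and $\tilde D_n$ with $(B_n-\tilde B_n)$.

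One tightening: no constant needs to be ``absorbed'' in your first step, and the stated bound has no slot for one anyway. When $l(n+1)=n+1$, the difference between $S_{n+1}$ and the zero-padded copy of $S_n$ is exactly $\begin{pmatrix}0&-\overline{a_{n+1}}I\\ a_{n+1}I&0\end{pmatrix}$ (this is the diagonal of $C_{n+1}$), whose norm is $|a_{n+1}|$, so the constant is $1$. Relatedly, the second term in your entrywise formula should carry $\overline{a_k}$ rather than $a_k$.
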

\begin{pf}
By a direct calculation,
\begin{align*}
&\|D_{n+1}^*S_{n+1} B_{n+1}-D_n^*S_n B_n\|\\
\leq &|a_{n+1}|\|D_{n+1}\|\|B_{n+1}\|+\|D_n-\tilde{D}_n\|\|S_n\|\|B_n\|+\|\tilde{D}_n\|\|S_n\|\|B_n-\tilde{B}_n\|
\end{align*}
The result follows direcly.  
\end{pf}
\section{AI-use disclosure} The main results and proofs of this work were obtained without the use of generative AI: the original even trigonometric version appeared in 2023, and the present general-analytic version was largely completed in 2025. ChatGPT 5.6 Sol, and later Astra were used during the final 2026 revision to stress-test arguments, identify missing edge cases or points requiring clarification, and improve exposition. All resulting changes were independently checked by the authors, who take full responsibility for the contents of the paper.
\section*{Acknowledgements}
L. Ge was partially supported by NSFC grant (12371185) and the Fundamental Research Funds for the Central Universities (the start-up fund), Peking University. J. You   was partially supported by NSFC grant (12531006,12526201) and
Nankai Zhide Foundation. SJ's work was supported by NSF DMS-2052899,
DMS-2155211, and Simons 681675. She is also grateful to School of
Mathematics at Georgia Institute of Technology where a part of this
work was done.


\begin{thebibliography}{99}

\bibitem{AA} F. Argentieri and A. Avila. In preparation.

\bibitem{oaag} S. Aubry and G. Andr\'e. Analyticity breaking and Anderson localisation in incommensurate lattices. {\it Ann. Israel Phys. Soc.} {\bf 3} (1980), 133–164.

\bibitem{a1} A. Avila. The absolutely continuous spectrum of the almost Mathieu operator. preprint, \url{https://arxiv.org/abs/0810.2965}.

\bibitem{avilajams} A. Avila. Density of positive Lyapunov exponents for SL(2,R)-cocycles. {\it J. Amer. Math. Soc.} {\bf 24(4)} (2011), 999-1014.

\bibitem{avila0} A. Avila. Global theory of one-frequency Schr\"{o}dinger operators. {\it Acta Math.} {\bf 215} (2015), 1-54.

\bibitem{arc1} A. Avila. Almost reducibility and absolute continuity. preprint. http://w3.impa.br/~avila/ (2704,2711).

\bibitem{arc2} A. Avila. KAM, Lyapunov exponents, and the Spectral Dichotomy for typical one-frequency Schrodinger operators. arXiv:2307.11071.

\bibitem{afk} A. Avila, B. Fayad and R. Krikorian. A KAM scheme for $SL(2,\R)$ cocycles with Liouvillean frequencies. {\it Geom. Funct. Anal.} {\bf 21} (2011), 1001-1019.

\bibitem{solving} A. Avila and S. Jitomirskaya. Solving the ten martini problem. {\it Lecture Notes in Phys.} {\bf 690} (2006), 5–16.

\bibitem{aj} A. Avila and S. Jitomirskaya. The Ten Martini Problem. {\it Ann. of Math.} {\bf 170} (2009), 303-342.

\bibitem{aj1} A. Avila and S. Jitomirskaya. Almost localization and almost reducibility. {\it J. Eur. Math. Soc} {\bf 12} (2010), 93-131.

\bibitem{ajs} A. Avila, S. Jitomirskaya and C. Sadel. Complex one-frequency cocycles. {\it J. Eur. Math. Soc.} {\bf 16} (2014), 1915-1935.

\bibitem{ak} A. Avila and R. Krikorian. Reducibility or non-uniform hyperbolicity for quasiperiodic Schr\"odinger cocycles. {\it Ann. of Math.} {\bf 164} (2006), 911-940.

\bibitem{ak1} A. Avila and R. Krikorian. Monotonic cocycles. {\it Invent. Math.} {\bf 202(1)} (2015), 271-331.

\bibitem{alsz} A. Avila, Y. Last, M. Shamis and Q. Zhou. On the abominable properties of the almost Mathieu operator with well-approximated frequencies. {\it Duke Math. J.} {\bf 173(4)} (2024), 603--672.

\bibitem{ayz} A. Avila, J. You and Q. Zhou. Sharp phase transitions for the almost Mathieu operator. {\it Duke Math. J.} {\bf 166(14)} (2017), 2697-2718.

\bibitem{dryAYZ} A. Avila, J. You and Q. Zhou. Dry Ten martini problem for noncritical almost Mathieu operator. arXiv:2306.16254.

\bibitem{aos} J.E. Avron, D. Osadchy and R. Seiler. A topological look at the quantum Hall effect. \textit{Physics today.} (2003), 38-42.

\bibitem{as} J. Avron and B. Simon. Almost periodic operators, II. The density of states. {\it Duke Math J.} {\bf 50} (1983), 369–391.

\bibitem{bs} J. Bellissard and B. Simon. Cantor spectrum for the almost Mathieu equation. {\it J. Funct. Anal.} {\bf 48} (1982), 408-419.

\bibitem{bdv} C. Bonatti, L. Diaz and M. Viana. Dynamics Beyond Uniform Hyperbolicity. A Global Geometric and Probabilistic Perspective (Encyclopaedia of Mathematical Sciences, 102. Mathematical Physics, III). Springer, Berlin, 2005.

\bibitem{bbook} J. Bourgain. Green's function estimates for lattice Schr\"odinger operators and applications. {\it Annals of Mathematics Studies. Princeton University Press, Princeton, NJ.} {\bf 158} (2005).

\bibitem{bourgainjanal} J. Bourgain. Positivity and continuity of the Lyapunov exponent for shifts on ${\mathbb T}^d$ with arbitrary frequency vector and real analytic potential. {\it J. Anal. Math.} {\bf 96} (2005), 313–355.

\bibitem{bj} J. Bourgain and S. Jitomirskaya. Continuity of the Lyapunov exponent for quasiperiodic operators with analytic potential. {\it J. Stat. Phys.} {\bf 108} (2002), 1203-1218.

\bibitem{cey} M. Choi, G. Elliott and N. Yui. Gauss polynomials and the rotation algebra. {\it Invent. Math.} {\bf 99} (1990), 225-246.

\bibitem{dcj} C. Concini and R. Johnson. The algebraic-geometric AKNS potentials. {\it Ergodic Theory Dynam. Systems} {\bf 7(1)} (1987), 1-24.

\bibitem{cs} W. Craig and B. Simon. Log H\"older continuity of the integrated density of states for stochastic Jacobi matrices. {\it Comm. Math. Phys.} {\bf 90(2)} (1983), 207-218.

\bibitem{damanik1} D. Damanik. Lyapunov exponents and spectral analysis of ergodic Schr\"odinger operators: a survey of Kotani theory and its applications. {\it Proc. Sympos. Pure Math.} {\bf 76} Part 2 (2007), 539–563.

\bibitem{DF1} D. Damanik and J. Fillman. One-Dimensional Ergodic Schr\"odinger Operators. Graduate Studies in Mathematics, Vol.~221. American Mathematical Society, Providence, RI, 2022.

\bibitem{DF2} D. Damanik and J. Fillman. One-Dimensional Ergodic Schr\"odinger Operators: II. Specific Classes. Graduate Studies in Mathematics, Vol.~222. American Mathematical Society, Providence, RI, 2022.

\bibitem{ds} E. Dinaburg and Ya. Sinai. The one dimensional Schr\"odinger equation with a quasi-periodic potential. {\it Funct. Anal. Appl.} {\bf 9} (1975), 279-289.

\bibitem{Eli92} L. H. Eliasson. Floquet solutions for the 1-dimensional quasi-periodic Schr\"{o}dinger equation. {\it Commun. Math. Phys.} {\bf 146} (1992), 447-482.

\bibitem{Eli97} L. Eliasson. Discrete one-dimensional quasi-periodic Schr\"{o}dinger operators with pure point spectrum. {\it Acta Math.} {\bf 179} (1997), 153-196.

\bibitem{stein} F. Forstneric. Stein Manifolds and holomorphic mappings. {\it Ergeb. Math. Grenzgeb. (3)}, {\bf 56} Springer, Heidelberg, 2011, xii+489 pp.

\bibitem{gps} S. Ganeshan, J. H. Pixley and S. Das Sarma. Nearest neighbor tight binding models with an exact mobility edge in one dimension. {\it Physical Review Letters} {\bf 114(14)} (2015), 146601.

\bibitem{ge} L. Ge. On the almost reducibility conjecture. {\it Geom. Funct. Anal.} {\bf 34} (2024), 23-59.

\bibitem{gx} L. Ge and D. Xu. In preparation.

\bibitem{gj} L. Ge and S. Jitomirskaya. Intrinsic symplectic structure and sharp arithmetic universality. arXiv:2407.08866.

\bibitem{gjy2} L. Ge, S. Jitomirskaya and J. You. In preparation.

\bibitem{gjy}
L. Ge, S. Jitomirskaya and J. You. Kotani theory, Puig's argument, and stability of the Ten Martini Problem.
arXiv:2308.09321.

\bibitem{gjyz} L. Ge, S. Jitomirskaya, J. You and Q. Zhou. Multiplicative Jensen's formula and quantitative global theory of one-frequency Schr\"odinger operators. Forum of Mathematics, Pi {\bf 14} (2026), Paper No. e1.

\bibitem{gjz} L. Ge, S. Jitomirskaya and X. Zhao. Stability of the Non-Critical Spectral Properties I: Arithmetic Absolute Continuity of the Integrated Density of States. {\it Comm. Math. Phys.} {\bf 402(1)} (2023), 213–229.

\bibitem{gwx} L. Ge, Y. Wang and J. Xu. The dry ten martini problem for $C^2$ cos-type quasiperiodic Schr\"odinger operators. arXiv:2503.06918.

\bibitem{gs1} M. Goldstein and W. Schlag. On resonances and the formation of gaps in the spectrum of quasi-periodic Schr\"odinger equations. Ann. of Math. {\bf 173} (2011), 337-475.

\bibitem{gs2} M. Goldstein and W. Schlag. Fine properties of the integrated density of states and a quantitative separation property of the Dirichlet eigenvalues. {\it Geom. Funct. Anal.} {\bf 18} (2008), 755-869.

\bibitem{gjls} A.Y. Gordon, S. Jitomirskaya, Y. Last and B. Simon. Duality and singular continuous spectrum in the almost Mathieu equation. {\it Acta Math.} {\bf 178} (1997), 169-183.

\bibitem{gk} A. Gorodetski and V. Kleptsyn. Generalized Aubry-André formula and continuity of the intersection spectrum of the Almost Mathieu operator. arXiv:2604.23852.

\bibitem{fsw} J. Fr\"{o}hlich, T. Spencer and P. Wittwer. Localization for a class of one dimensional quasi-periodic Schr\"{o}dinger operators. {\it Commun. Math. Phys.} {\bf 132} (1990), 5-25.

\bibitem{H} M. Herman. Une m\'ethode pour minorer les exposants de Lyapounov et quelques exemples montrant le caract\`ere local d'un th\'eor\`eme d'Arnol'd et de Moser sur le tore de dimension $2$. {\it Comment. Math. Helv.} {\bf 58(3)} (1983), 453-502.

\bibitem{h} B.I. Halperin. Quantized Hall conductance, current-carrying edge states, and the existence of extended states in a two-dimensional disordered potential. Phys. Rev. B {\bf 25} (1982), 2185.

\bibitem{haro} A. Haro and J. Puig. A Thouless formula and Aubry duality for long-range Schr\"odinger skew-products. {\it Nonlinearity} {\bf 26(5)} (2013), 1163-1187.

\bibitem{hm} R. Han and C. Marx. Large coupling asymptotics for the Lyapunov exponent of quasiperiodic Schr\"odinger operators with analytic potentials. {\it Ann. Henri Poincar\'e} {\bf 19} (2018), 249-265.

\bibitem{harper} P.G. Harper. Single band motion of conduction electrons in a uniform magnetic field. {\it Pro. Phys. Soc.} {\bf 68(10)} (1955), 874-878.

\bibitem{hs} B. Helffer and J. Sj\"ostrand. Semiclassical analysis for Harper's equation. III. Cantor structure of the spectrum. {\it Mem. Soc. Math. France} {\bf 39} (1989), 1-124.
\bibitem{hs1}
R. Han and W. Schlag. Nonperturbative localization on the strip and Avila's almost reducibility conjecture. Forum Math. Pi 14 (2026), Paper No. e24.
\bibitem{hk} H. Hiramoto and M. Kohmoto. Scaling analysis of quasiperiodic systems: Generalized harper model. {\it Physical Review B} {\bf 40(12)} (1989), 8225.

\bibitem{hof} D.R. Hofstadter. Energy levels and wave functions of Bloch electrons in rational and irrational magnetic fields. {\it Physical Review B} {\bf 14} (1976), 2239-2249.

\bibitem{hy} X. Hou and J. You. Almost reducibility and non-perturbative reducibility of quasiperiodic linear systems. {\it Invent. Math.} {\bf 190} (2012), 209-260.

\bibitem{hhsy} Jiawei He, Xuanji Hou, Yuan Shan, and Jiangong You. Explicit construction of quasi-periodic analytic Schr\"odinger operators with Cantor spectrum. Math. Ann., 391(1):179--225, 2025.

\bibitem{hsy} Xuanji Hou, Yuan Shan, and Jiangong You. Construction of quasiperiodic Schr\"odinger operators with Cantor spectrum. Ann. Henri Poincar\'e, 20(11):3563--3601, 2019.

\bibitem{hz} Xuanji Hou and Li Zhang. Explicit construction of {Gevrey} {Quasi}-{Periodic} {Discrete} {Schr\"odinger} operators with cantor spectrum. Discrete and Continuous Dynamical Systems-Series B, 33:364--397, 2026.

\bibitem{j} S. Jitomirskaya. Metal-Insulator transition for the almost Mathieu operator. {\it Ann. of Math.} {\bf 150} (1999), 1159-1175.

\bibitem{lxz}
X. Li, D. Xu and Q. Zhou. Monotonicity, global symplectification and the stability of Dry Ten Martini Problem. arXiv:2601.02222.

\bibitem{jitcongr} S. Jitomirskaya. One-dimensional quasiperiodic operators: global theory, duality, and sharp analysis of small denominators. {\it Proceedings of ICM} 2022.

\bibitem{jks} S. Jitomirskaya, D. A. Koslover and M. S. Schulteis. Localization for a Family of One-dimensional Quasiperiodic Operators of Magnetic Origin. {\it Ann. Henri Poincar\'e} {\bf 6(1)} (2005), 103--124. DOI: \texttt{10.1007/s00023-005-0200-5}.

\bibitem{jliu1} S. Jitomirskaya and W. Liu. Universal hierarchical structure of quasiperiodic eigenfunctions. {\it Ann. of Math.} {\bf 187(3)} (2018), 721-776.

\bibitem{jliu2} S. Jitomirskaya and W. Liu. Universal reflective-hierarchical structure of quasiperiodic eigenfunctions and sharp spectral transition in phase. To appear in J. Eur. Math. Soc.

\bibitem{johonson and moser} R. Johnson and J. Moser. The rotation number for almost periodic potentials. {\it Commun. Math. Phys.} {\bf 84} (1982), 403-438.

\bibitem{kac} M. Kac. Public commun, at 1981 AMS Annual Meeting.

\bibitem{kot} S. Kotani. Generalized Floquet theory for stationary Schr\"odinger operators in one dimension. {\it Chaos Solitons Fractals} {\bf 8(11)} (1997), 1817-1854.

\bibitem{ks} S. Kotani and B. Simon. Stochastic Schr\"odinger Operators and Jacobi Matrices on the strip. {\it Comm. Math. Phys} {\bf 119} (1988), 403-429.

\bibitem{last} Y. Last. Zero measure spectrum for the almost Mathieu operator. {\it Comm. Math. Phys.} {\bf 164} (1994), 421-432.

\bibitem{LYZZ} M. Leguil, J. You, Z. Zhao and Q. Zhou. Asymptotics of spectral gaps of quasi-periodic Schr\"odinger operators. \url{https://arxiv.org/pdf/1712.04700.pdf}.

\bibitem{liu1} W. Liu. Distributions of resonances of supercritical quasi-periodic operators. Int. Math. Res. Not. IMRN {\bf 2024} (2024), no. 1, 197--233.

\bibitem{liuresonant} W. Liu. Small denominators and large numerators of quasiperiodic Schr\"odinger operators. Peking Math. J. {\bf 8} (2025), no. 3, 503--532.

\bibitem{Liuresonant1} W. Liu. A doubled {Gordon} threshold for palindromic quasiperiodic {Schr\"odinger} operators. arXiv preprint arXiv:2607.24188, 2026.

\bibitem{ly} W. Liu and X. Yuan. Spectral gaps of almost Mathieu operators in the exponential regime. {\it J. Fractal Geom.} {\bf 2} (2015), no. 1, 1–51.

\bibitem{ntw} Q. Niu, D.J. Thouless and Y.S. Wu. Quantized Hall conductance as a topological invariant. {\it Phys. Rev. B} {\bf 31} (1985), 3372.

\bibitem{nichen} X. Ni, K. Chen, M. Weiner, D. J. Apigo, C. Prodan, A. Alù, E. Prodan, A. B. Khanikaev. Observation of Hofstadter butterfly and topological edge states in reconfigurable quasi-periodic acoustic crystals. {\it Communications Physics} {\bf 2} 55 (2019).

\bibitem{oliviera} F. V. Oliveira and S. L. Carvalho. Kotani theory for ergodic block Jacobi operators. Oper. Matrices {\bf 16} (2022), no. 3, 827--857.

\bibitem{Peierls} R. Peierls. Zur theorie des diamagnetismus von leitungselektronen. {\it Zeitschrift f\"ur Physik} {\bf 80(11-12)} (1933), 763-791.

\bibitem{Puig} J. Puig. Cantor spectrum for the almost Mathieu operator. {\it Comm. Math. Phys.} {\bf 244} (2004), 297-309.

\bibitem{puig} J. Puig. A nonperturbative Eliasson's reducibility theorem. {\it Nonlinearity} {\bf 19(2)} (2006), 355-376.

\bibitem{R} A. Rauh. Degeneracy of Landau levels in crystals. \textit{Phys. Status Solidi B} {\bf 65} (1974), 131-135.

\bibitem{barry} B. Simon. Almost periodic Schrödinger operators: A review. {\it Adv. Appl. Math.} {\bf 3} (1982), 463-490.

\bibitem{sim83} B. Simon. Kotani theory for one-dimensional stochastic Jacobi matrices. {\it Comm. Math. Phys.} {\bf 89(2)} (1983), 227–234.

\bibitem{sim15} B. Simon. Fifteen problems in mathematical physics. {\it Perspectives in Mathematics: Anniversary of Oberwolfach} (1984), 423–454.

\bibitem{simXXI} B. Simon. Schr\"odinger operators in the twenty-first century. {\it Mathematical Physics} Imp. Coll. Press, London (2000), 283–288.

\bibitem{sin} Ya. G. Sinai. Anderson localization for one-dimensional difference Schr\"{o}dinger operator with quasi-periodic potential. {\it J. Stat. Phys.} {\bf 46} (1987), 861-909.

\bibitem{se} C. Soukoulis and E. Economou. Localization in one-dimensional lattices in the presence of incommensurate potentials. {\it Physical Review Letters} {\bf 48(15)} (1982), 1043.

\bibitem{ss} E. Sorets and T. Spencer. Positive Lyapunov exponents for Schr\"odinger operators with quasi-periodic potentials. {\it Comm. Math. Phys.} {\bf 142} (1991), 543–566.

\bibitem{wz} Y. Wang and Z. Zhang. Uniform positivity and continuity of Lyapunov exponents for a class of $C^2$ quasiperiodic Schr\"odinger cocycles. {\it J. Funct. Anal.} {\bf 268} (2015), 2525-2585.

\bibitem{wwxyz} Y. Wang, X. Xia, J. You, Z. Zheng and Q. Zhou. Exact Mobility Edges for 1D Quasiperiodic Models. {\it Comm. Math. Phys.} {\bf 401} (2023), no. 3, 2521–2567.

\bibitem{xu} D. Xu. Density of positive Lyapunov exponents for symplectic cocycles. {\it J. Eur. Math. Soc. (JEMS)} {\bf 21(10)} (2019), 3143-3190.


\bibitem{sy} S. Yamashita. Some remarks on analytic continuations. {\it Tohoku Math. J.} (2) {\bf 21} (1969), 328-335.

\bibitem{youcongr} J. You. Quantitative Almost Reducibility and Its Applications. {\it Proceedings of ICM} 2018.

\bibitem{yicmp} J. You. Some problems in quasiperiodic Schr{\"o}dinger operators. {\it Journal of Mathematical Physics} {\bf 67(6)} (2026), 062704.

\end{thebibliography}
\end{document}